\documentclass[11pt]{article}

\usepackage[T1]{fontenc}
\usepackage[utf8]{inputenc}
\usepackage[margin=1in]{geometry}
\usepackage{amsmath,amssymb,amsthm}
\usepackage{booktabs,tabularx}
\usepackage{microtype}
\usepackage{graphicx}
\usepackage{tikz}
\usepackage{float}
\usepackage{algorithm,algpseudocode}
\usepackage{needspace}
\usetikzlibrary{arrows.meta,positioning}
\usepackage[colorlinks=true,linkcolor=blue,citecolor=blue,urlcolor=blue]{hyperref}

\numberwithin{equation}{section}

\newtheorem{theorem}{Theorem}[section]
\newtheorem{proposition}[theorem]{Proposition}
\newtheorem{lemma}[theorem]{Lemma}
\newtheorem{corollary}[theorem]{Corollary}
\theoremstyle{definition}
\newtheorem{definition}[theorem]{Definition}
\newtheorem{example}[theorem]{Example}
\newtheorem{problem}{Problem}
\theoremstyle{remark}
\newtheorem{remark}[theorem]{Remark}
\theoremstyle{definition}

\newcommand{\RPAComparisonStatement}{%
For the family~\eqref{eq:rpa-review-family}, assume the input and coherent
access specified in Subsection~\ref{subsec:rpa-review}.
Both constructions use the full stable projector and the same
upper-row pseudoinverse recovery. For \(0<\varepsilon_T\le1\),
they return a block-encoding
with \(\|\widetilde T-T\|\le\varepsilon_T\).
For fixed \(V>0\) and \(0<\varepsilon_c\le(1-u)/(4V)\), the corresponding
estimates of \(e_c\) have additive error at most \(\varepsilon_c\),
with success probability at least \(2/3\).
The output normalizations and query bounds to \(U_{\mathcal H}\)
and its adjoint are those in Table~\ref{tab:rpa-review-comparison}.
The notation \(\widetilde O\) suppresses logarithms in \(u^{-1}\) and the inverse
target precisions; it does not include input preparation, coherent
coefficient loading, or gate-synthesis costs.}

\title{Near-Optimal Quantum Algorithm and Complexity Analysis\\
for Riccati Problems}
\author{Jingyao Wang$^{1,8}$, Yanqiao Wang$^{2,3,4}$, Bowen Li$^{5}$, Jin-Peng Liu$^{2,6,7,\dagger}$, Zhengfeng Ji$^{1,*}$
\vspace{2mm}\\
\footnotesize $^{1}$Department of Computer Science and Technology, Tsinghua University\\
\footnotesize $^{2}$ Yau Mathematical Sciences Center, Tsinghua University\\
\footnotesize $^{3}$ Qiuzhen College, Tsinghua University\\
\footnotesize $^{4}$ Institute for AI Industry Research (AIR), Tsinghua University\\
\footnotesize $^{5}$ Academy of Mathematics and Systems Science, Chinese Academy of Sciences\\
\footnotesize $^{6}$ Institute for Applied Mathematics, Tsinghua University\\
\footnotesize $^{7}$ Beijing Institute of Mathematical Sciences and Applications\\
\footnotesize $^{8}$ TraverseQuantum Co., Ltd.
}
\makeatletter
\renewcommand{\@thanks}{%
  \footnotetext[1]{Corresponding author: jizhengfeng@tsinghua.edu.cn}%
  \footnotetext[2]{Corresponding author: liujinpeng@tsinghua.edu.cn}%
}
\makeatother
\date{}

\begin{document}
\maketitle

\begin{abstract}
We develop quantum algorithms that construct block-encodings of solution
matrices for continuous- and discrete-time algebraic Riccati equations (CAREs and DAREs),
differential Riccati equations (DREs), and finite Riccati recursions. The Quantum Weighted Riesz Method, unifying four kinds of Riccati problems, constructs projectors onto the solution graphs using weighted Riesz branching
operators and recovers the solution matrices.
Under the stated access and normalization assumptions, our algorithms for CAREs and DAREs have near-optimal query complexity $\Theta\!\left(
 \mathcal R\,\alpha\right)$ up to logarithmic factors, with the generalized singularity factor
\(\mathcal R\) that reflects spectral separation, and the output normalization $\alpha$. For DREs and finite-horizon
problems, the query complexity has an additional initialization conditioning factor $\kappa_{\rm init}$.
We establish product query lower bounds for specified input-oracle
families, demonstrating necessary joint dependence on spectral
separation and initialization or output scale.
We also prove that selected-value promise problems are BQP-complete
on explicit circuit-generated families of single-control discrete-time
algebraic and zero-terminal differential Riccati equations.
Applications include classical feedback evaluation for linear--quadratic
control, a query-complexity comparison for stable random-phase
approximation Riccati equations in quantum chemistry, and heated boundary control with numerical validation.
\end{abstract}

\newpage

\tableofcontents

\newpage

\section{Introduction}
\label{sec:introduction}

\subsection{Background}
\label{subsec:riccati-operator-output}

Riccati equations are an important class of nonlinear matrix problems arising in optimal control and quantum chemistry. In control, four formulations arise depending on the time horizon and whether time is continuous or discrete. The continuous- and discrete-time algebraic Riccati equations (CARE and DARE) determine stationary value functions and feedback laws in linear--quadratic regulation (LQR); differential Riccati equations (DRE) and finite Riccati recursions (RR) govern finite-time control with prescribed boundary data~\cite{AndersonMoore2007OptimalControl}. In quantum chemistry, Riccati equations also determine random-phase-approximation(RPA) amplitudes and have further applications~\cite{RodenasRuizZhaoLee2026NonlinearMatrixEquations}.

A formulation underlies these nonlinear equations: the solution is closely connected to the solutions of a set of related linear matrix problems~\cite{Laub1979Schur,VanDooren1981Generalized}. 
The formulation admits a linear representation of the graph of its
solution, the subspace of pairs consisting of a state and its image under
the solution matrix.
For a CARE, the stabilizing solution is represented by an invariant graph
of a Hamiltonian matrix; for a DARE, it is represented by a deflating graph
of a symplectic pencil \cite{Laub1979Schur,VanDooren1981Generalized}.
A DRE propagates the initial or terminal graph through a Hamiltonian flow,
while a finite recursion propagates it through a linear-fractional lift
\cite{BittantiLaubWillems1991RiccatiEquation,LancasterRodman1995Riccati}.
These representations connect algebraic solutions to selected spectral
subspaces and finite-horizon solutions to the propagation of prescribed
graph data.

We study how block-encoding access to the input matrices can be used to
construct a block-encoding of the solution.
Here, a block-encoding is a unitary whose designated block approximates the
solution matrix divided by a known normalization
\cite{GilyenSuLowWiebe2019QSVT}.
This operator output allows the solution to be used in subsequent quantum
computation; a classical scalar feedback value is obtained by combining it
with state preparation and matrix-element estimation.
The algorithmic question is how spectral branch construction and graph
recovery contribute to the cost of producing this output.
We state the matrix access, spectral separation, invertibility, and
normalization assumptions for each construction, and quantify nonnormal
behavior through generalized singularity factors.

\subsection{From linear lifts to Riccati operators}
\label{subsec:lifts-to-operators}

The Quantum Weighted Riesz Method has three stages, shown in
Figure~\ref{fig:riccati-construction}: Embed, QET, and Recover.
Embed represents the Riccati problem by a linear matrix or a regular pencil
and its graph data. The quantum eigenvalue transformation (QET) stage
constructs the required weighted Riesz operators and uses them to form
the graph projector. Recover acts on its block rows to return a
block-encoding of the Riccati solution.

\begin{figure}[H]
\centering
\begin{tikzpicture}[
  stage/.style={draw=black!60,rounded corners=3pt,fill=blue!4,
    align=center,text width=12cm,minimum height=1.05cm,
    inner xsep=0.4cm,inner ysep=5pt,font=\small},
  branch/.style={draw=black!50,rounded corners=2pt,fill=white,
    align=center,text width=5.25cm,minimum height=1.7cm,
    inner sep=5pt,font=\small},
  flow/.style={-{Latex[length=2.2mm]},semithick,draw=black!70}
]
\node[stage] (embed) at (0,0)
  {\textbf{Embed}\\Linear lift and problem data};
\node[stage,minimum height=3.9cm] (qet) at (0,-3.1) {};
\node[align=center,font=\small] (qet-title) at (0,-1.8)
  {\textbf{QET}\\Weighted Riesz branch operators};
\node[branch] (algebraic) at (-3.05,-3.85)
  {\textbf{CARE / DARE}\\Weighted Riesz as\\graph projector};
\node[branch] (dynamic) at (3.05,-3.85)
  {\textbf{DRE / finite recursion}\\Combine initialization to form\\
   decaying graph projector};
\node[stage] (recover) at (0,-6.15)
  {\textbf{Recover}\\Lower block row $\times$ upper-block-row pseudoinverse};
\draw[flow] (embed.south) -- (qet.north);
\draw[semithick,draw=black!70] (qet-title.south) -- (0,-2.35);
\draw[flow] (0,-2.35) -| (dynamic.north);
\draw[flow] (0,-2.35) -| (algebraic.north);
\draw[flow] (qet.south) -- (recover.north);
\end{tikzpicture}
\caption{The Quantum Weighted Riesz Method.
The QET stage constructs weighted Riesz branch operators.
DREs and finite recursions combine these operators with initial data
to form a decaying graph projector (DGP); CARE and DARE directly
construct the limiting Riesz projector. Both routes use the same
block-row recovery to return a block-encoding of the Riccati solution,
with explicit output normalization and operator-norm error bounds
under the stated assumptions.}
\label{fig:riccati-construction}
\end{figure}

The graph of a Riccati solution matrix \(P\) is the \(n\)-dimensional
subspace \(\operatorname{graph}P=\{[v;Pv]:v\in\mathbb C^n\}\),
which represents the action of \(P\) on all state vectors \(v\).
In the real scalar case, this graph is a line whose slope is the solution \(p\).
A projector \(\Pi\) onto the solution graph preserves this relation
between the state and its image under \(P\).
With the coordinate embeddings \(E_1=[\mathsf I;0]\) and
\(E_2=[0;\mathsf I]\), its upper and lower block rows satisfy
\(E_2^*\Pi=P E_1^*\Pi\).
The upper block row has full row rank, so its Moore--Penrose
pseudoinverse~\cite{GilyenSuLowWiebe2019QSVT} is a right inverse, yielding
\begin{equation}
P=(E_2^*\Pi)(E_1^*\Pi)^+.
\label{eq:common-graph-recovery}
\end{equation}
For a prescribed state \(v\), the pseudoinverse supplies an input whose
projection has upper coordinate \(v\); applying the lower block row to
that input then gives \(Pv\).
This identity holds whether or not the projector is orthogonal.
For DREs and finite recursions, it applies to \(P(t)\) and \(P_k\)
using their decaying graph projectors.
See Subsection~\ref{subsec:dre-recovery} for the recovery construction
and Appendix~\ref{app:dre-recovery} for the general proof and
pseudoinverse norm bound.

Initialization also has a geometric meaning. For DREs and finite
recursions, \(R_0=[\mathsf I;P_0]\) spans the initial graph, and
\(\Pi\) denotes the spectral projector used to select its
initial branch component. If \(\|\Pi\|\) and \(\|P_0\|\)
are bounded by fixed constants, the absolute initialization nondegeneracy
factor satisfies
\[
\sigma_{\min}(\Pi R_0)
=\Theta\!\left(\sin\theta_{\min}
  (\operatorname{graph}P_0,\ker\Pi)\right).
\]
Here \(\theta_{\min}\) is the smallest principal angle~\cite[Sec.~2]{SimonciniSzyld2010Oblique}: the factor becomes
small as the initial graph approaches the discarded subspace
\(\ker\Pi\). Appendix~\ref{app:factor-geometry} gives the
general bounds and distinguishes this absolute factor from the relative
conditioning of the projected initial columns.

For algebraic equations, stability selects the graph to construct.
The top panels of Figure~\ref{fig:weighted-riesz-branch-schematics}
show the relevant regions: the left half-plane for the CARE Hamiltonian
and the finite spectrum inside the unit circle for the DARE pencil.
A Riesz operator with unit weight is a projector onto the selected branch~\cite[Chap.~I, Sec.~5.3]{Kato1995Perturbation}.
This complete projector supplies the block rows in
\eqref{eq:common-graph-recovery}.

\begin{figure}[H]
\centering
\begin{tikzpicture}[
  branchpoint/.style={circle, fill=blue!65, inner sep=1.5pt},
  oppositepoint/.style={circle, fill=orange!75, inner sep=1.5pt},
  unselectedpoint/.style={circle, fill=gray!55, inner sep=1.5pt},
  spectralaxis/.style={-{Latex[length=1.5mm]}, gray!75, thin}
]
\begin{scope}[xshift=3.5cm,yshift=1.65cm]
  \draw[gray!35, rounded corners] (-3.3,-1.45) rectangle (3.3,1.45);
  \fill[blue!8] (-2.6,-0.8) rectangle (0,0.8);
  \draw[spectralaxis] (-2.5,0) -- (2.15,0)
    node[right,font=\scriptsize] {$\operatorname{Re}z$};
  \draw[spectralaxis] (0,-0.92) -- (0,0.92)
    node[above,font=\scriptsize] {$\operatorname{Im}z$};
  \node[font=\small\bfseries,anchor=west] at (-3.05,1.15) {CARE};
  \node[font=\scriptsize,blue!70!black] at (-1.45,0.58)
    {stable branch $\Pi_-$};
  \node[branchpoint] at (-1.85,0.18) {};
  \node[branchpoint] at (-1.2,-0.32) {};
  \node[branchpoint] at (-0.65,0.33) {};
  \node[unselectedpoint] at (1.85,0.18) {};
  \node[unselectedpoint] at (1.2,-0.32) {};
  \node[unselectedpoint] at (0.65,0.33) {};
\end{scope}

\begin{scope}[xshift=10.9cm,yshift=1.65cm]
  \draw[gray!35, rounded corners] (-3.3,-1.45) rectangle (3.3,1.45);
  \fill[blue!8] (0,0) circle (0.8);
  \draw[spectralaxis] (-2.5,0) -- (2.15,0)
    node[right,font=\scriptsize] {$\operatorname{Re}z$};
  \draw[spectralaxis] (0,-0.92) -- (0,0.92)
    node[above,font=\scriptsize] {$\operatorname{Im}z$};
  \draw[blue!55] (0,0) circle (0.8);
  \node[font=\scriptsize,fill=white,inner sep=1pt] at (0.86,-0.87) {$|z|=1$};
  \node[font=\small\bfseries,anchor=west] at (-3.05,1.15) {DARE};
  \node[font=\scriptsize,blue!70!black,align=center] at (-1.85,0.62)
    {finite branch\\$|z|<1$};
  \node[branchpoint] at (-0.35,0.18) {};
  \node[branchpoint] at (0.28,-0.30) {};
  \node[branchpoint] at (0.18,0.42) {};
  \node[unselectedpoint] at (-1.35,-0.25) {};
  \node[unselectedpoint] at (1.55,0.30) {};
\end{scope}

\begin{scope}[xshift=3.5cm,yshift=-1.65cm]
  \draw[gray!35, rounded corners] (-3.3,-1.45) rectangle (3.3,1.45);
  \fill[blue!8] (-2.6,-0.8) rectangle (0,0.8);
  \fill[orange!8] (0,-0.8) rectangle (2.6,0.8);
  \draw[spectralaxis] (-2.5,0) -- (2.15,0)
    node[right,font=\scriptsize] {$\operatorname{Re}z$};
  \draw[spectralaxis] (0,-0.92) -- (0,0.92)
    node[above,font=\scriptsize] {$\operatorname{Im}z$};
  \node[font=\small\bfseries,anchor=west] at (-3.05,1.15) {DRE};
  \node[font=\scriptsize,blue!70!black] at (-1.45,0.61)
    {$e^{(\alpha_Ht)z}$ on $\Pi_-$};
  \node[font=\scriptsize,orange!80!black] at (1.45,0.61)
    {$e^{-(\alpha_Ht)z}$ on $\Pi_+$};
  \node[branchpoint] at (-1.85,0.18) {};
  \node[branchpoint] at (-1.2,-0.32) {};
  \node[branchpoint] at (-0.65,0.33) {};
  \node[oppositepoint] at (1.85,0.18) {};
  \node[oppositepoint] at (1.2,-0.32) {};
  \node[oppositepoint] at (0.65,0.33) {};
\end{scope}

\begin{scope}[xshift=10.9cm,yshift=-1.65cm]
  \draw[gray!35, rounded corners] (-3.3,-1.45) rectangle (3.3,1.45);
  \fill[orange!8] (-2.6,-1.0) rectangle (2.6,1.0);
  \fill[blue!8] (0,0) circle (0.8);
  \draw[spectralaxis] (-2.5,0) -- (2.15,0)
    node[right,font=\scriptsize] {$\operatorname{Re}z$};
  \draw[spectralaxis] (0,-0.92) -- (0,0.92)
    node[above right,font=\scriptsize] {$\operatorname{Im}z$};
  \draw[blue!55] (0,0) circle (0.8);
  \node[font=\scriptsize,fill=white,inner sep=1pt] at (0.86,-0.87) {$|z|=1$};
  
  \node[font=\small\bfseries,anchor=west] at (-3.05,1.15)
    {Finite recursion};
  \node[font=\scriptsize,blue!70!black] at (-1.65,0.62)
    {inside: $z^k$};
  \node[font=\scriptsize,orange!80!black] at (1.55,0.62)
    {outside: $z^{-k}$};
  \node[branchpoint] at (-0.35,0.18) {};
  \node[branchpoint] at (0.28,-0.30) {};
  \node[branchpoint] at (0.18,0.42) {};
  \node[oppositepoint] at (-1.35,-0.25) {};
  \node[oppositepoint] at (1.55,0.30) {};
\end{scope}
\end{tikzpicture}
\caption{Spectral branches and weights in the QET stage.
The top panels select the blue stable branches for CARE and DARE;
gray points are unselected. The bottom panels include both blue and orange
branches, with forward weights on the left or inside and inverse weights
on the right or outside. Points are schematic spectral locations, and
each circle marks \(|z|=1\); the DRE weights use
\(\widehat{\mathcal H}_{\rm DRE}=\mathcal H_{\rm DRE}/\alpha_H\).}
\label{fig:weighted-riesz-branch-schematics}
\end{figure}

For a DRE or finite recursion, the initial graph generally has components
in both spectral branches. Its forward evolution can grow on the right
half-plane or outside the unit circle. Under the required branch-rank
conditions, changing the graph coordinates lets us represent the same
forward-evolved graph using inverse evolution on the growing branch and
forward evolution on the other. In the DRE panel, this gives the weights
\(e^{\alpha_Htz}\) on the left and \(e^{-\alpha_Htz}\) on the right,
where \(z\) belongs to the spectrum of the normalized Hamiltonian
\(\widehat{\mathcal H}_{\rm DRE}=\mathcal H_{\rm DRE}/\alpha_H\).
The recursion panel uses \(z^k\) inside the unit circle and \(z^{-k}\)
outside. These scalar weights decay on their respective branches;
the initial graph provides the relation between the two branches needed
to recover the full evolving graph.

For dynamic problems, the QET stage combines the weighted branch
operators with the initial data to form a decaying graph projector
onto the evolving graph. Its lower block row is the solution matrix times
its upper block row, whose full row rank permits recovery by a Moore--Penrose
pseudoinverse. The Recover stage therefore produces the matrix specified by
\eqref{eq:common-graph-recovery} directly from its graph projector.
The query analysis accounts separately for nonnormal amplification of the
branch operators and for the conditioning and normalization of recovery.

Contour--resolvent linear combinations of unitaries (LCU) implement
the branch operators by approximating contour integrals with weighted
sums of shifted inverses~\cite{HaleHighamTrefethen2008Contour,TakahiraOhashiSogabeUsuda2022Contour}. The contour selects the spectral region, while
the scalar weight specifies the action within that region.
This construction applies to ordinary matrices and regular pencils under
their respective contour and inverse assumptions.
For half-plane projectors of ordinary Hamiltonian matrices, we also develop
polynomial selectors implemented through the Weyl
linear-combination-of-Hermitian-matrices (LCHM) construction
\cite{WangLiangChenLiu2026LCHM}.
The weighted exponential and power operators are supplied by the contour
construction. Section~\ref{sec:quantum-tools} gives the common operators
and implementation assumptions used in the Riccati algorithms.

\subsection{Main results}
\label{subsec:main-results}

\paragraph{Quantum algorithms and feedback evaluation.}
We construct block-encodings of all four Riccati solution operators.
Under the stated coherent access, spectral separation, and
normalization assumptions, the matrix-oracle query bounds are
\begin{equation}
\begin{aligned}
Q_{\rm CARE}
 &\;=\widetilde O\!\left(
 \mathcal R_{\rm CARE}\,\alpha_X\right),\\
Q_{\rm DARE}
 &\;=\widetilde O\!\left(
 \mathcal R_{\rm DARE}\,\alpha_X\right),\\
Q_{\rm DRE}
 &\;=\widetilde O\!\left(
 \mathcal R_{\rm DRE}\,\kappa(\Pi_+R_0)\,\alpha_{P(t)}\right),\\
Q_{\rm RR}
 &\;=\widetilde O\!\left(
 \mathcal R_{\rm RR}\,\kappa(\Pi_>R_0)\,\alpha_{P_k}\right).
\end{aligned}
\label{eq:informal-query-upper}
\end{equation}
Each bound counts matrix queries for one call to a solution encoding with
operator-norm error \(\varepsilon\); \(\widetilde O\) hides logarithmic
dependence on accuracy and the displayed parameters.
These compact bounds assume bounded normalization-reduction overheads;
the dynamic bounds also require calibrated initial-column scales.
The actual output normalizations include the supplied solution-norm
bounds and the scales of the constructed projectors. General bounds
with all declared scales and re-encoding costs appear in the appendices.
The generalized singularity factor \(\mathcal R\) measures how close the
shifted matrices along a separating contour are to singularity, relative
to their declared scale. A large value means that some shifted matrix is
nearly singular. This can reflect narrow spectral separation, which eigenvalue distances alone do not
capture. For an ordinary Hamiltonian lift \(H\),
Proposition~\ref{prop:factor-condition-number-app} in
Appendix~\ref{app:complexity-factors} gives
\(\mathcal R=O(\kappa(H))\) for the specified rectangles
when the supplied normalization is comparable to \(\|H\|\) and the
minimum singular value over imaginary-axis shifts is comparable to
\(\sigma_{\min}(H)\). Here \(\kappa(H)=\|H\|\|H^{-1}\|\).
Thus \(\mathcal R\) extends the familiar conditioning of matrix inversion
to the separation of spectral branches.

The projector normalization records the geometry of the selected branch
and the scale of its quantum construction. The projectors
\(\Pi_-,\Pi_<\) select the stable CARE and finite unit-disk DARE
subspaces; \(\Pi_+,\Pi_>\) select the right-half-plane and
exterior-unit-disk branches used in the dynamic problems.
For CARE and DARE, the complete upper block row has full row rank,
and its pseudoinverse is controlled by the solution scale. This
dependence enters \(\alpha_X\), together with the actual projector
normalization. In particular, the complete DARE pencil projector includes
the normalization of its right factor \(L\).
For DRE and recursion, \(R_0=[\mathsf I;P_0]\) represents the initial
graph, and \(\kappa(\Pi R_0)\) measures how unevenly the projection
represents its independent directions. When \(\Pi R_0\) has full
column rank, this relative measure is
\(\kappa(\Pi R_0)=\|\Pi R_0\|/\sigma_{\min}(\Pi R_0)\), while
\(\sigma_{\min}(\Pi R_0)\) is the absolute initialization
nondegeneracy factor. The general implementation cost uses the actual
encoding normalization \(\alpha_{\Pi R_0}\) through the ratio
\(\alpha_{\Pi R_0}/\sigma_{\min}(\Pi R_0)\), which becomes a
condition number under the stated calibration. Appendix~\ref{app:factor-geometry}
develops these distinctions.

We also evaluate classical LQR feedback from the Riccati solution.
For scalar control, \(u(t)=-c^*P(t)x(t)\), where \(c=b/r\) for input
column \(b\) and control weight \(r>0\).
With vector-preparation costs accounted for separately, amplitude estimation
gives the informal query bound
\cite{BrassardHoyerMoscaTapp2002AmplitudeEstimation}
\begin{equation}
Q_u=\widetilde O\!\left(
 \frac{
 \mathcal R_{\rm DRE}\,\kappa(\Pi_+R_0)\,
 \alpha_{P(t)}^{\,2}\|c\|\|x(t)\|
 }{\varepsilon_u}
 \log\frac1\delta
\right).
\label{eq:informal-scalar-query}
\end{equation}
Here \(\alpha_{P(t)}\) is the output normalization, and
\(\varepsilon_u\) and \(\delta\) are the target error and failure probability.
One factor \(\alpha_{P(t)}\) enters the solution-circuit cost and the
other enters classical scalar readout.
We use one-dimensional control only to illustrate the procedure.
Higher-dimensional controls follow the same construction and componentwise
readout, with the total readout cost depending on the requested components.

\paragraph{Query lower bounds and computational hardness.}
For each Riccati problem, we construct a two-input family with independent
parameters \(0<\mu_1,\mu_2\le1/8\) and generalized singularity factor
\(\Theta(\mu_1^{-1})\). The algebraic families use a common actual
output normalization \(\alpha_X=\Theta(\mu_2^{-1})\); the dynamic
families have relative initialization condition number
\(\Theta(\mu_2^{-1})\). A query-hybrid argument for explicitly
specified smooth Julia input block-encodings gives
\begin{equation}
\begin{aligned}
q_{\rm CARE}
 &\;=\Omega\!\left(\mathcal R_{\rm CARE}\alpha_X\right),&
q_{\rm DARE}
 &\;=\Omega\!\left(\mathcal R_{\rm DARE}\alpha_X\right),\\
q_{\rm DRE}
 &\;=\Omega\!\left(\mathcal R_{\rm DRE}\kappa(\Pi_+R_0)\right),&
q_{\rm RR}
 &\;=\Omega\!\left(\mathcal R_{\rm RR}\kappa(\Pi_>R_0)\right).
\end{aligned}
\label{eq:informal-query-lower}
\end{equation}
Thus each family requires \(\Omega((\mu_1\mu_2)^{-1})\) queries for
coherent solution block-encoding with its declared public normalization
and common signal space. In the algebraic examples, \(\|X\|\) grows
without bound as \(\mu_2\) decreases, and the allowed error is at most
a fixed fraction of the smaller solution norm. Our direct-projector algorithm
attains an actual normalization of order \(\|X\|\) on these
families: explicit small contours keep the raw projector normalization
bounded, as proved in Proposition~\ref{prop:algebraic-product-normalization-app}.
This construction uses no normalization reduction.
The dynamic families have bounded solution norms, constant output
normalization and absolute error, and evaluation time or step
\(\Theta(\mu_1^{-1}\log(1/\mu_2))\).
For CARE and DARE, these lower bounds match
\eqref{eq:informal-query-upper} up to logarithmic factors, establishing
near-optimal worst-case query complexity at the actual output normalizations
achieved by our algorithms on these input-oracle families.
For DRE and finite recursion, the lower bounds show that the generalized
singularity factor and initialization condition number must enter
multiplicatively in the worst case.

We further prove that constructing DARE and DRE solution block-encodings
with constant normalization and accuracy is BQP-hard, even with a single
control input.
The associated selected-value decision problems are BQP-complete on
explicit circuit-generated families.

We illustrate the benefits of our algorithms through two applications: electronic correlation energy estimation and optimal feedback control.

\paragraph{Quantum chemistry.}
RPA correlation energies follow from a Riccati amplitude equation~\cite{RodenasRuizZhaoLee2026NonlinearMatrixEquations}. In our normalized family, $u>0$ denotes the smallest eigenvalue of the electronic stability matrix. It tends to zero during molecular bond stretching~\cite{TahirRen2019RPAChannels}, making the dependence on $u^{-1}$ central to the cost of RPA calculations. Our method reduces the number of queries needed for energy readout on this family from $\widetilde O(u^{-3}/\varepsilon_c)$ to $\widetilde O(u^{-2}/\varepsilon_c)$. This improves the dependence on $u^{-1}$ over the matched baseline in previous RPA work~\cite{RodenasRuizZhaoLee2026NonlinearMatrixEquations}.

\paragraph{Heated boundary control (HBC).}
For a path of $n$ thermal nodes with a heater at one endpoint, at fixed parameters and over a bounded time range, the spectral factor, initialization inverse, and output normalization remain $O(1)$ in $n$, allowing the optimal feedback value $u$ to be estimated to error $\varepsilon_u$ and failure probability $\delta$ using $O(\varepsilon_u^{-1}\log(1/\delta)\log^3(e+1/\varepsilon_u))$ matrix and state-oracle calls, given coherent access to the current state with known bounded normalization. We also conduct numerical experiments to demonstrate the efficiency of our algorithm.

\subsection{Related work}
\label{subsec:related-work}

Classical CARE and DARE algorithms use invariant-subspace, generalized-
eigenvalue, and matrix-sign methods, while Hamiltonian flows and
linear-fractional lifts describe differential and recursive Riccati problems
\cite{Laub1979Schur,Roberts1980Sign,VanDooren1981Generalized,
Byers1987MatrixSign,KenneyLaub1995MatrixSign,Mehrmann1991AutonomousLQ,
AbouKandilFreilingIonescuJank2003Riccati}.
For state dimension \(n\), dense Schur or QZ factorization costs
\(O(n^3)\), as does one dense matrix-sign iteration, and a \(K\)-step
dense recursion costs roughly \(O(Kn^3)\).
Large-scale and distributed-control methods instead exploit problem
structure or finite-section modeling
\cite{BennerLiPenzl2008LargeScaleRiccati,Simoncini2016LinearMatrixEquations,
CurtainZwart1995InfiniteDimensional,LasieckaTriggiani2000PDEControl}.

The quantum part of the paper builds on block-encoding, quantum linear-system
methods, QSP/QSVT, and qubitization
\cite{HarrowHassidimLloyd2009HHL,ChildsKothariSomma2017QLS,
LowChuang2017QSP,GilyenSuLowWiebe2019QSVT,LowChuang2019Qubitization,
MartynRossiTanChuang2021GrandUnification}.
Related quantum-control and matrix-equation work uses linear ODEs, Riesz
projectors, contour resolvents, and singular-value transformations
\cite{Krovi2024QuantumOptimalControl,RodenasRuizZhaoLee2026NonlinearMatrixEquations,
WangLiu2026SignEmbedding}.
We use these primitives to access the linear objects in the Riccati lifts.

The analysis also draws on matrix-function perturbation theory, resolvent
bounds for nonnormal matrices, contour quadrature, Faber approximation, Weyl
LCHM, and generalized quantum signal processing
\cite{Kato1995Perturbation,Higham2008Functions,
HaleHighamTrefethen2008Contour,TrefethenEmbree2005Spectra,
BeckermannReichel2009Faber,LowSu2024QEP,MotlaghWiebe2023GQSP,
JiangAn2026ContourQET,GutierrezLaneveSanz2026ArbitraryQET,
WangLiangChenLiu2026LCHM}.
Related matrix-geometric-mean and oscillator results provide comparison
points, but their lower bounds are not imported into the Riccati statements
here \cite{LiuWangWildeZhang2025GeometricMeans,
BabbushBerryKothariSommaWiebe2023Oscillators}.

The present paper combines these tools with the four Riccati linear
structures and graph recovery, and states the separation, resolvent, graph,
and readout conditions needed for each construction.
Finite-section PDE examples and physical control readouts involve additional
discretization, state preparation, trajectory generation, measurement, and
classical reconstruction costs, which are treated separately.

\section{Preliminaries and quantum constructions}
\label{sec:quantum-tools}

We first specify the Riccati problems and the quantum output requested for
each of them. We then define the weighted Riesz operators used to access
their linear representations and give two constructions of these operators.
This order separates the control problem, the operator to be computed, and
the assumptions needed to implement it.

\subsection{From LQR to Riccati problems}
\label{subsec:lqr-riccati-problems}

Throughout the paper, \(\|\cdot\|\) denotes the Euclidean vector norm or
its induced matrix norm, \(^*\) denotes the adjoint, and \(\mathsf I\)
is the identity of the required dimension.
Consider a state \(x\in\mathbb C^n\), a control \(u\in\mathbb C^m\),
and time-independent matrices \(A\in\mathbb C^{n\times n}\) and
\(B\in\mathbb C^{n\times m}\).
The state, control, and terminal cost matrices satisfy
\(Q\succeq0\), \(R_c\succ0\), and \(P_T\succeq0\), respectively.
Continuous-time finite-horizon LQR minimizes
\begin{equation}
\begin{aligned}
J_T(u)&=\int_0^T\!\left(x(t)^*Qx(t)+u(t)^*R_cu(t)\right)\,\mathrm dt
       +x(T)^*P_Tx(T),\\
\dot x(t)&=Ax(t)+Bu(t),\qquad x(0)=x_0.
\end{aligned}
\label{eq:lqr-continuous-cost}
\end{equation}
Its discrete-time counterpart, with \(T\) control steps, minimizes
\begin{equation}
\begin{aligned}
J_T(u)&=\sum_{j=0}^{T-1}\left(x_j^*Qx_j+u_j^*R_cu_j\right)
          +x_T^*P_Tx_T,\\
x_{j+1}&=Ax_j+Bu_j,\qquad x_0\ \text{given}.
\end{aligned}
\label{eq:lqr-discrete-cost}
\end{equation}
The infinite-horizon problems use the corresponding infinite integral or
sum without a terminal term.
Under the usual existence conditions, the optimal value is a quadratic form
in the current state. Its matrix determines both the optimal cost and the
feedback law, leading to the four Riccati problems below
\cite{AndersonMoore2007OptimalControl,BittantiLaubWillems1991RiccatiEquation,
LancasterRodman1995Riccati}.
We write \(G=BR_c^{-1}B^*\) when referring to the control data.

For continuous-time infinite-horizon LQR, stabilizability of \((A,B)\)
and detectability of \((Q^{1/2},A)\) give
\(V_\infty(x)=x^*Xx\) and
\(u^*(t)=-R_c^{-1}B^*Xx(t)\) \cite[Secs.~3.1--3.2 and App.~B]{AndersonMoore2007OptimalControl}.
The matrix \(X\) is the stabilizing CARE solution \cite{Laub1979Schur,LancasterRodman1995Riccati}.

\begin{definition}[Continuous-time algebraic Riccati problem]
\label{def:problem-care}
Given \(A\in\mathbb C^{n\times n}\), \(B\in\mathbb C^{n\times m}\),
\(Q\succeq0\), and \(R_c\succ0\), set \(G=BR_c^{-1}B^*\).
Assume \((A,B)\) is stabilizable and \((Q^{1/2},A)\) is detectable
in continuous time. Find the unique positive semidefinite stabilizing
solution \(X\) of
\begin{equation}
A^*X+XA-XGX+Q=0,\qquad A-GX\ \text{is Hurwitz}.
\label{eq:problem-care}
\end{equation}
A matrix is Hurwitz if all its eigenvalues have strictly negative real part.
The stated LQR conditions guarantee existence and uniqueness of \(X\)
\cite{AndersonMoore2007OptimalControl,LancasterRodman1995Riccati}.
\end{definition}

The algebraic problems below use these standard LQR assumptions;
\(G\) and \(Q\) may be singular.
For discrete-time infinite-horizon LQR, the analogous stabilizability and
detectability conditions give \(V_\infty(x)=x^*Xx\) with feedback
\(u_j^*=-(R_c+B^*XB)^{-1}B^*XA\,x_j\) \cite[Sec.~3.3]{AndersonMoore2007OptimalControl}.

\begin{definition}[Discrete-time algebraic Riccati problem]
\label{def:problem-dare}
Given \(A\in\mathbb C^{n\times n}\), \(B\in\mathbb C^{n\times m}\),
\(Q\succeq0\), and \(R_c\succ0\), assume \((A,B)\) is stabilizable
and \((Q^{1/2},A)\) is detectable in discrete time. Find the unique
positive semidefinite stabilizing solution of
\begin{equation}
X=Q+A^*XA-A^*XB(R_c+B^*XB)^{-1}B^*XA.
\label{eq:problem-dare}
\end{equation}
With \(G=BR_c^{-1}B^*\), the equivalent form is \cite[Eq.~(24)]{Poloni2020RiccatiIterations}
\begin{equation}
X=Q+A^*X(\mathsf I+GX)^{-1}A,\qquad
F=(\mathsf I+GX)^{-1}A,\quad \rho(F)<1.
\label{eq:problem-dare-compact}
\end{equation}
The stated LQR conditions guarantee existence and uniqueness of the
stabilizing solution \cite{AndersonMoore2007OptimalControl,LancasterRodman1995Riccati}.
The displayed inverses exist because \(X\succeq0\) and \(R_c\succ0\).
Here \(\rho(F)\) denotes the spectral radius of the closed-loop matrix
\(F\); the condition \(\rho(F)<1\) means that \(F\) is Schur stable \cite[Sec.~3.3]{AndersonMoore2007OptimalControl}.
\end{definition}

On a finite continuous-time horizon, the value matrix depends on time:
\(V(t,x)=x^*P_{\rm LQR}(t)x\), with feedback
\(u^*(t)=-R_c^{-1}B^*P_{\rm LQR}(t)x(t)\).
Dynamic programming gives the terminal-value equation \cite[Secs.~2.2--2.3]{AndersonMoore2007OptimalControl}
\begin{equation}
-\dot P_{\rm LQR}
=Q+A^*P_{\rm LQR}+P_{\rm LQR}A-P_{\rm LQR}GP_{\rm LQR},
\qquad P_{\rm LQR}(T)=P_T.
\label{eq:lqr-terminal-dre}
\end{equation}
For the matrix algorithms, we fix an initial-value convention so that the
requested solution can be represented by a forward linear flow.

\begin{definition}[Differential Riccati problem]
\label{def:problem-dre}
Given time-independent \(A\), Hermitian \(G,Q,P_0\), and a requested time
\(t\in[0,T]\), find \(P(t)\) solving
\begin{equation}
\dot P(t)=-Q-A^*P(t)-P(t)A+P(t)GP(t),\qquad P(0)=P_0.
\label{eq:problem-dre}
\end{equation}
The solution is promised to exist throughout \([0,T]\).
\end{definition}

To express \eqref{eq:lqr-terminal-dre} in this convention, set
\(s=T-t\) and \(P(s)=P_{\rm LQR}(T-s)\).
The inputs to Definition~\ref{def:problem-dre} are then
\((-A,-G,-Q,P_0=P_T)\), where \(A,G,Q\) on the right are the original
control matrices. The physical control time \(t\) therefore corresponds to
the algorithmic request \(s=T-t\).
This sign change is one reason for allowing general Hermitian DRE data.

For the discrete finite-horizon problem, it is convenient to index the value
matrix by the number of remaining steps: \(V_j(x)=x^*P_jx\), with
\(P_0=P_T\). One Bellman update adds a control step \cite[Sec.~2.4]{AndersonMoore2007OptimalControl}.

\begin{definition}[Finite Riccati recursion]
\label{def:problem-recursion}
Given \(A\), positive semidefinite \(G,Q,P_0\), and an integer \(k\ge0\),
find \(P_k\) generated by
\begin{equation}
P_{j+1}=Q+A^*P_j(\mathsf I+GP_j)^{-1}A,
\qquad j=0,\ldots,k-1.
\label{eq:problem-recursion}
\end{equation}
The positive semidefinite assumptions ensure that each inverse exists and
that every iterate is positive semidefinite.
\end{definition}

Taking \(G=BR_c^{-1}B^*\) recovers the \(k\)-step LQR value.
For \(k\ge1\), the first optimal control uses the next-stage value matrix:
\begin{equation}
u_0^*=-(R_c+B^*P_{k-1}B)^{-1}B^*P_{k-1}A\,x_0.
\label{eq:recursion-first-control}
\end{equation}
Appendix~\ref{app:lqr-conventions} verifies the finite-horizon identities,
including the time reversal and positive semidefiniteness of the recursion.
These definitions fix the desired solution, branch, and initial or terminal
data. Spectral separation, graph invertibility, and matrix-access conditions
are additional algorithmic assumptions. In particular, invertibility of
\(A\) is needed by the recursion lift used later, but is not part of the
recursion problem itself.

\subsection{Inputs, outputs, and inverse primitives}
\label{subsec:block-encodings-inverse}

The preceding definitions specify matrices as mathematical outputs.
Our quantum algorithms represent these matrices as blocks of unitary
operators, with the normalization and approximation error stated explicitly.

\begin{definition}[Block-encoding and requested output]
\label{def:block-encoding}
A unitary \(U_T\) is an \((\alpha_T,a_T,\varepsilon_T)\) block-encoding
of \(T\) if
\begin{equation}
\left\|T-\alpha_T
(\langle0^{a_T}|\otimes\mathsf I)U_T
(|0^{a_T}\rangle\otimes\mathsf I)\right\|\le\varepsilon_T.
\label{eq:block-encoding}
\end{equation}
Here \(\alpha_T>0\) is the normalization and \(a_T\) is the number
of ancilla qubits.
The matrix obtained by multiplying the indicated block by \(\alpha_T\)
is the decoded matrix \(\widetilde T\).
For any problem in Definitions~\ref{def:problem-care}--\ref{def:problem-recursion},
the requested output is such a unitary for \(T=X\), \(P(t)\), or \(P_k\),
with a declared \(\alpha_T\) and
\(\|\widetilde T-T\|\le\varepsilon\).
Rectangular matrices use separate input and output signal spaces, with
dimension padding when needed.
\end{definition}

An exact input has \(\varepsilon_T=0\); an approximate input encodes a
nearby matrix. Whenever an exact-input construction is applied to approximate
data, we account for this perturbation in the final decoded error.
One matrix-oracle query is one call to a supplied block-encoding or its
adjoint. We count controlled calls at the same unit cost when controlled
access is supplied. Access may be given to the original coefficient matrices
or to an assembled lift; constructing the latter from the former has its own
query cost.

The common nonlinear operation on encoded matrices is inversion.
We use the following interface to a QSVT inverse construction
\cite{GilyenSuLowWiebe2019QSVT,ChildsKothariSomma2017QLS}.

\begin{proposition}[Inverse and pseudoinverse primitive]
\label{prop:local-inverse}
Let \(U_T\) be an exact block-encoding of \(T\) with normalization
\(\alpha_T\), and suppose the nonzero singular values of \(T\) lie in
\([\gamma,\alpha_T]\), where \(0<\gamma\le\alpha_T\) is supplied to the inverse
construction. Under the controlled-access and singular-value-transformation
assumptions of that primitive, an encoding of the Moore--Penrose
pseudoinverse \(T^+\) can be constructed
with normalization \(O(1/\gamma)\), decoded error
\(\varepsilon_{\rm inv}>0\) satisfying \(\alpha_T\varepsilon_{\rm inv}\le1\),
and query cost
\begin{equation}
O\!\left(
\frac{\alpha_T}{\gamma}
\log\!\left(e+\frac{1}{\gamma\varepsilon_{\rm inv}}\right)
\right).
\label{eq:scaled-inverse-query}
\end{equation}
For square invertible \(T\), the target is \(T^{-1}\).
For a rectangular matrix, \(T^+\) vanishes on the zero singular subspace.
\end{proposition}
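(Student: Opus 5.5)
This is the standard QSVT inversion argument applied to the singular-value decomposition of the normalized block \(T/\alpha_T\).

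\textbf{Setup.} Write \(T=W\Sigma V^*\), so that \(T/\alpha_T\) has nonzero singular values in \([\delta,1]\) with \(\delta=\gamma/\alpha_T\). Since
\[
T^+=V\Sigma^{+}W^*,
\]
it suffices to implement, by the adjoint-side singular-value transformation, an odd polynomial \(p\) with the following two properties:

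1. \(|p(x)|\le1\) on \([-1,1]\);
2. \(\bigl|p(x)-\tfrac{\delta}{2x}\bigr|\le\varepsilon'\) for \(\delta\le|x|\le1\).

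Applying QSVT to \(U_T\) with \(p\) gives a unitary whose block is \(p^{(SV)}\bigl((T/\alpha_T)^*\bigr)\). On the range, this block agrees with \(\tfrac{\gamma}{2}T^+\) up to \(\varepsilon'\).

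\textbf{Kernel.} Oddness of \(p\) forces it to act as zero on the zero singular subspace, so the block vanishes exactly on \(\ker T\). The rectangular clause of the statement therefore comes for free, and for square invertible \(T\) the encoded matrix is \(T^{-1}\). Rescaling the block by \(2/\gamma\) gives normalization \(O(1/\gamma)\) and decoded error \(2\varepsilon'/\gamma\). Choosing \(\varepsilon'=\gamma\varepsilon_{\rm inv}/2\) meets the target \(\varepsilon_{\rm inv}\).

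\textbf{Polynomial.} I would take the construction of Childs--Kothari--Somma, as packaged in Gily\'en--Su--Low--Wiebe (Corollary 69 / Lemma 40). Start from \((1-(1-x^2)^b)/x\), truncate it in the Chebyshev basis, and multiply by a rectangle/sign window to enforce boundedness near \(0\). This yields an odd polynomial of degree
\[
O\!\left(\tfrac1\delta\log\tfrac1{\delta\varepsilon'}\right)=O\!\left(\tfrac{\alpha_T}{\gamma}\log\tfrac{\alpha_T}{\gamma\cdot\gamma\varepsilon_{\rm inv}}\right).
\]
The QSVT query count equals the degree, using \(U_T\) and \(U_T^*\) alternately with controlled access as assumed.

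\textbf{Log factor (main obstacle).} The remaining work is bookkeeping to reach the stated form \(\log(e+1/(\gamma\varepsilon_{\rm inv}))\). The degree bound naturally produces \(\log(\alpha_T/(\gamma^2\varepsilon_{\rm inv}))\), and the hypothesis \(\alpha_T\varepsilon_{\rm inv}\le1\) is what lets this be rewritten. It gives \(\alpha_T\le1/\varepsilon_{\rm inv}\), and hence
\[
\frac{\alpha_T}{\gamma}\le\frac{1}{\gamma\varepsilon_{\rm inv}}.
\]
Therefore
\[
\frac{\alpha_T}{\gamma^2\varepsilon_{\rm inv}}=\frac{\alpha_T}{\gamma}\cdot\frac{1}{\gamma\varepsilon_{\rm inv}}\le\Bigl(\frac{1}{\gamma\varepsilon_{\rm inv}}\Bigr)^2,
\]
so the logarithm is \(O(\log(e+1/(\gamma\varepsilon_{\rm inv})))\). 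The \(e+\) inside the logarithm handles the regime where \(1/(\gamma\varepsilon_{\rm inv})\) is small, so the query bound \eqref{eq:scaled-inverse-query} follows.

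I would also check two details. First, the phase-factor computation and any Hermitian-dilation step needed for non-square \(T\) should add no queries beyond a constant factor. Second, since the input is exact, no perturbation term enters.
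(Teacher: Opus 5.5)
Your proposal is correct and follows essentially the paper's route: normalize to \(A=T/\alpha_T\), apply the Gily\'en--Su--Low--Wiebe odd-polynomial pseudoinverse in the adjoint orientation (so oddness gives exact vanishing on the zero singular subspace), rescale by a factor \(O(1/\gamma)\), and use \(\alpha_T\varepsilon_{\rm inv}\le1\) to absorb the extra \(\log(\alpha_T/\gamma)\) term. The paper does this bookkeeping at the theorem level rather than the polynomial level: it sets \(\delta=\gamma/(2\alpha_T)\) and \(\zeta=\gamma\varepsilon_{\rm inv}/4\), uses the hypothesis to obtain the precondition \(\zeta\le\delta/2\), and then reads off the cost \(O(\delta^{-1}\log(1/\zeta))\); its factor-two slack in \(\delta\) also sidesteps the degenerate case \(\gamma=\alpha_T\) (where your \(1/\delta=1\)), which you can handle the same way at no cost.
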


Appendix~\ref{app:inverse-perturbation} proves the scaling in this proposition.
The singular-value threshold fixes both the inverse normalization and the
cost of resolving small nonzero singular values.
When \(\gamma\) estimates the smallest nonzero singular value within a
constant factor, these bounds can be written in terms of \(\|T^+\|\).
For approximate inputs, Lemma~\ref{lem:inverse-input-perturbation} in
Appendix~\ref{app:inverse-perturbation} quantifies how small input errors
preserve invertibility and affect the inverse.
Pseudoinverse perturbations additionally require control of rank and the
nonzero singular-value gap.

The constructions below also use coherent preparation of LCU coefficients
and coherent access to node-dependent inverse circuits or polynomial phases.
Their query bounds assume these operations can be performed without
measuring the node register. Coefficient computation, state preparation,
arithmetic, and rotation synthesis are recorded separately from matrix-oracle
queries. This convention makes the operator normalization and oracle cost
comparable across the two constructions.

\subsection{Weighted Riesz operators for the four Riccati problems}
\label{subsec:weighted-riesz-targets}

The common geometric object is the graph of the solution. For a matrix
\(P\in\mathbb C^{n\times n}\),
\[
\operatorname{graph}P=\{[u;Pu]:u\in\mathbb C^n\}.
\]
If the columns of \([U;V]\), with \(U,V\in\mathbb C^{n\times n}\),
form a basis of this graph, then \(V=PU\) and \(U\) is invertible.
Thus \(P=VU^{-1}\) is its matrix slope. In the scalar case, the graph
is simply the line \(V=pU\) with slope \(p\).
A projector \(\Pi\) onto this graph sends each vector to the graph
along \(\ker\Pi\): the projected component lies on the graph, and the
removed component lies in the kernel. This direction need not be
perpendicular to the graph. The complete projector encodes the same
matrix slope through \(E_2^*\Pi=P E_1^*\Pi\). Its upper block row has
full row rank, so the pseudoinverse recovery
in~\eqref{eq:common-graph-recovery} gives \(P\).

The block-encoding model supplies access to a linear matrix or pencil.
For algebraic equations, a Riesz projector selects the target graph from
the lift. Dynamic problems combine two spectral branches with the initial
data to construct the evolving graph projector. Weighted Riesz operators
select these branches and apply a scalar weight within each one.
The weight is constant for algebraic problems and carries time or step
dependence for dynamic problems.

Fix a bounded spectral domain \(D_\chi\) with positively oriented, piecewise smooth boundary
\(\Gamma_\chi\). The contour lies in the resolvent set, winds once around
the selected spectral component \(K_\chi\), and has winding number zero
around the other components. Let \(g\) be holomorphic on a neighborhood of the closure of \(D_\chi\).

\begin{definition}[Weighted Riesz operator]
\label{def:weighted-riesz-interface-main}
For an ordinary matrix \(M\), define~\cite[Chap.~I, Sec.~5.6]{Kato1995Perturbation}
\begin{equation}
\mathfrak R_\chi[g;M,\mathsf I]
=\frac{1}{2\pi\mathrm i}\oint_{\Gamma_\chi}
g(z)(z\mathsf I-M)^{-1}\,\mathrm dz.
\label{eq:ordinary-weighted-riesz}
\end{equation}
For a regular pencil \(M-zL\), apply the same domain and holomorphy
conditions to its finite generalized spectrum and set
\begin{equation}
\mathfrak R_\chi[g;M,L]
=\frac{1}{2\pi\mathrm i}\oint_{\Gamma_\chi}
g(z)(zL-M)^{-1}L\,\mathrm dz.
\label{eq:pencil-weighted-riesz}
\end{equation}
Regularity means that \(\det(M-zL)\) is not identically zero~\cite[Sec.~1]{VanDooren1981Generalized}; the contour
must satisfy \(\det(zL-M)\ne0\) at every point.
\end{definition}

For \(g=1\), these operators are the Riesz projector and the right
deflating projector onto the selected branch, respectively.
They need not be orthogonal. If \(g\) extends holomorphically to the full
spectrum of an ordinary matrix, the weighted operator equals
\(g(M)\Pi_\chi\).
The pencil integral selects finite modes; infinite modes do not contribute
to this finite-domain integral. A DARE application must separately ensure
that the selected finite deflating subspace represents its stabilizing graph.
When \(L\) is invertible,
\begin{equation}
\mathfrak R_\chi[g;M,L]
=\mathfrak R_\chi[g;L^{-1}M,\mathsf I],
\label{eq:pencil-ordinary-reduction}
\end{equation}
whereas the pencil form remains meaningful for singular \(L\).
Proposition~\ref{prop:riesz-calculus-app} in Appendix~\ref{app:riesz-calculus}
proves these identities using the finite and infinite canonical blocks.

Multiplying by a right input \(R\) gives the common target operator
\begin{equation}
\mathcal G_\chi=\mathfrak R_\chi[g;M,L]R.
\label{eq:common-riesz-input}
\end{equation}
Table~\ref{tab:weighted-riesz-routes} summarizes its use in the four
problems. Algebraic equations use \(R=\mathsf I\) to construct the
complete projector. We use \(E_1=[\mathsf I;0]\) and
\(E_2=[0;\mathsf I]\) as coordinate embeddings, and
\(R_0=[\mathsf I;P_0]\) as an initial graph.
The concrete lifts and the identities connecting these blocks to the
solution are introduced with the corresponding recovery constructions.

\begin{table}[!ht]
\centering
\small
\renewcommand{\arraystretch}{1.2}
\begin{tabularx}{\textwidth}{@{}l>{\raggedright\arraybackslash}X>{\raggedright\arraybackslash}Xl@{}}
\toprule
Problem & Lift and selected branch & Weights & Output\\
\midrule
CARE & Hamiltonian; left half-plane & \(1\) & \(X\)\\
DARE & Regular pencil; finite unit disk & \(1\) & \(X\)\\
DRE & Normalized Hamiltonian; left/right half-planes &
\(1,\ e^{\alpha_Htz},\ e^{-\alpha_Htz}\) & \(P(t)\)\\
Finite recursion & Linear-fractional lift; inside/outside unit circle &
\(1,\ z^k,\ z^{-k}\) & \(P_k\)\\
\bottomrule
\end{tabularx}
\caption{Weighted Riesz operators used before graph recovery.
The DRE weights act on \(\widehat{\mathcal H}_{\rm DRE}
=\mathcal H_{\rm DRE}/\alpha_H\); recursion powers refer to the unscaled
lift, with \(z^{-k}\) used on the outside branch.}
\label{tab:weighted-riesz-routes}
\end{table}

For CARE and DARE, stability selects a fixed target graph, and the
unit-weight operator is the complete stable projector used in
\eqref{eq:common-graph-recovery}. Dynamic solutions also depend on the
initial graph. Selecting one branch alone cannot preserve this dependence:
whenever \(\Pi R_0\) has full column rank, its columns span the same
fixed subspace \(\operatorname{ran}\Pi\).

The pseudoinverse preserves the connection to the initial data through the
column coordinates. Here \(\Pi=\Pi_+\) for DRE and \(\Pi=\Pi_>\)
for recursion. If \(\Pi R_0\) has full column rank, then for any
coefficient vector \(c\), \((\Pi R_0)^+(\Pi R_0c)=c\): it recovers the coefficients of the
initial vector from its selected component. Multiplying by \(R_0\)
recovers the full initial vector \(R_0c\), including its component in
the other branch. The weighted operators then use this relation to
assemble a projector onto the evolved graph. For DREs, forward weights
act on the left-half-plane branch and inverse weights on the right;
finite recursions use the analogous inside/outside branch construction.
The next two subsections construct the required operators; the initial-data
dependence and graph recovery are handled by the equation-specific analysis.

\subsection{Contour--resolvent construction}
\label{subsec:contour-resolvent-lcu}

We first construct \(\mathcal G_\chi=\mathfrak R_\chi[g;M,\mathsf I]R\)
by discretizing the Riesz integral for an ordinary matrix.
A quadrature rule replaces the integral by a finite sum of resolvents:
for \(m\) nodes \(z_j\in\Gamma_\chi\) and geometric weights \(\nu_j\), set
\begin{equation}
\omega_j=\frac{\nu_jg(z_j)}{2\pi\mathrm i},\qquad
S_\Gamma=\sum_{j=1}^{m}\omega_j(z_j\mathsf I-M)^{-1}R.
\label{eq:weighted-contour-coefficients}
\end{equation}
The coefficients \(\omega_j\) incorporate the contour orientation and the
scalar weight \(g\). Constructing the operator therefore reduces to
encoding the shifted inverses and combining them with these coefficients
by LCU
\cite{HaleHighamTrefethen2008Contour,TakahiraOhashiSogabeUsuda2022Contour}.

Each resolvent is obtained by inverting an encoded shifted matrix.
Suppose \(M\) has an exact block-encoding with normalization \(\alpha_M\).
At node \(z_j\), an LCU of \(M\) and the identity encodes
\((z_j\mathsf I-M)/s_j\), where \(s_j=|z_j|+\alpha_M\).
If the shifted matrix is invertible and a constant-factor upper bound on its
inverse norm is supplied, Proposition~\ref{prop:local-inverse} gives a
declared inverse normalization
\begin{equation}
\beta_j=\Theta(\|(z_j\mathsf I-M)^{-1}\|).
\label{eq:ordinary-node-normalization}
\end{equation}
For decoded inverse error \(\varepsilon_j\) with
\(s_j\varepsilon_j\le1\), its query cost is
\begin{equation}
O\!\left(
s_j\|(z_j\mathsf I-M)^{-1}\|
\log\!\left(e+\frac{\|(z_j\mathsf I-M)^{-1}\|}{\varepsilon_j}\right)
\right).
\label{eq:single-resolvent-query}
\end{equation}
The cost at a node is thus controlled by its generalized singularity factor,
with the shifted-input scale included.
For a nonnormal matrix, distance from the node to the spectrum alone does
not determine this norm~\cite{TrefethenEmbree2005Spectra}. We use the uniform contour bound
\begin{equation}
\mathcal R_\Gamma=\sup_{z\in\Gamma_\chi}
(|z|+\alpha_M)\|(z\mathsf I-M)^{-1}\|.
\label{eq:contour-resolvent-bound}
\end{equation}
We call this quantity the generalized singularity factor; the same term
also applies to the unnormalized inverse norm.
At each node, its reciprocal is the smallest singular value of the
normalized shifted matrix, and hence its distance to singularity.
Appendix~\ref{app:factor-spectral} explains the role of the contour and
when the factor can be controlled by the usual matrix condition number.

LCU combines the nodes by preparing their superposition, applying the
corresponding inverse, and unpreparing the index register.
The following proposition gives the resulting encoding of \(S_\Gamma\).
It assumes coherent coefficient and phase preparation, with node-dependent
inverse schedules padded to a common length to preserve the superposition.

\begin{proposition}[Ordinary contour construction]
\label{thm:local-contour-sum}
Let \(M,R\) have exact block-encodings with normalizations
\(\alpha_M,\alpha_R\), and let every node shift be invertible.
Assume the coherent access just described, including supplied uniform
constant-factor inverse-norm estimates. With \(\beta_j\) defined
in~\eqref{eq:ordinary-node-normalization}, set
\begin{equation}
\alpha_\Gamma=\alpha_R\sum_{j=1}^{m}|\omega_j|\beta_j.
\label{eq:ordinary-lcu-normalization}
\end{equation}
For \(0<\varepsilon_{\rm impl}<\alpha_\Gamma\), the construction returns
a block-encoding of \(S_\Gamma\) with normalization \(\alpha_\Gamma\),
decoded error at most \(\varepsilon_{\rm impl}\), and query cost
\begin{equation}
O\!\left(
\mathcal R_\Gamma
\log\!\left(e+\frac{\mathcal R_\Gamma\alpha_\Gamma}
{\varepsilon_{\rm impl}}\right)
\right)
\label{eq:ordinary-lcu-query}
\end{equation}
to the matrix oracle and its adjoint, together with one call to \(U_R\).
\end{proposition}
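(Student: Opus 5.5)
We build the encoding of \(S_\Gamma\) in three layers: the shifted matrices, the node-wise resolvents, and finally the LCU over nodes followed by one multiplication by \(R\).

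\emph{Node encodings.} At each node \(z_j\), a two-term LCU of \(\mathsf I\) and \(U_M\) with coefficients \(|z_j|/s_j\) and \(\alpha_M/s_j\), and the phase of \(z_j\) absorbed into the prepare unitary, gives an exact block-encoding of \((z_j\mathsf I-M)/s_j\) with normalization one. It uses one query to \(U_M\). Its smallest singular value is \(1/(s_j\|(z_j\mathsf I-M)^{-1}\|)\ge 1/\mathcal R_\Gamma\), using \eqref{eq:contour-resolvent-bound}. We apply Proposition~\ref{prop:local-inverse} with \(\gamma_j\) equal to a supplied constant-factor lower bound on this singular value. This yields an encoding of \(s_j(z_j\mathsf I-M)^{-1}\) with normalization \(O(1/\gamma_j)\), and hence of \((z_j\mathsf I-M)^{-1}\) with normalization \(\beta_j=\Theta(\|(z_j\mathsf I-M)^{-1}\|)\) as in \eqref{eq:ordinary-node-normalization}. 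The query cost is \(O(s_j\|(z_j\mathsf I-M)^{-1}\|\log(\cdot))\le O(\mathcal R_\Gamma\log(\cdot))\). Because the degree depends on \(j\), we pad every schedule to the common maximal degree \(d=O(\mathcal R_\Gamma\log(\cdot))\). A polynomial of lower degree can be realized in a degree-\(d\) QSVT sequence, for example by using the uniform threshold \(1/\mathcal R_\Gamma\) for all nodes while keeping the normalization node-dependent through the coefficient rescaling below. A controlled select over \(j\), reading node-dependent phases coherently, then costs \(d\) queries in total rather than \(m d\). This is the point of the coherent-access assumption.

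\emph{LCU over nodes.} We prepare \(\sum_j\sqrt{|\omega_j|\beta_j/\sum_k|\omega_k|\beta_k}\,|j\rangle\), with the phases of \(\omega_j\) placed in the select or unprepare step. Applying the select of the node inverses and then unpreparing gives an encoding of \(\sum_j\omega_j\widetilde{(z_j\mathsf I-M)^{-1}}\) with normalization \(\sum_j|\omega_j|\beta_j\). Multiplying by one call to \(U_R\) gives normalization \(\alpha_\Gamma\) as in \eqref{eq:ordinary-lcu-normalization}, and the encoded matrix is \(S_\Gamma\) up to node errors.

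\emph{Error.} Let \(\varepsilon_j\) be the decoded error of the \(j\)th resolvent. The decoded error of the sum is at most \(\alpha_R\sum_j|\omega_j|\varepsilon_j\). We choose \(\varepsilon_j=\varepsilon_{\rm impl}\beta_j/\alpha_\Gamma\), which gives total error at most \(\varepsilon_{\rm impl}\). Then \(\|(z_j\mathsf I-M)^{-1}\|/\varepsilon_j=\Theta(\alpha_\Gamma/\varepsilon_{\rm impl})\), and the inverse-primitive log factor is bounded by \(\log(e+\mathcal R_\Gamma\alpha_\Gamma/\varepsilon_{\rm impl})\). This uses \(s_j\ge\alpha_M\) and the assumption \(\varepsilon_{\rm impl}<\alpha_\Gamma\), which ensures \(\varepsilon_j\lesssim\beta_j\). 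We also need to check the condition \(s_j\varepsilon_j\le1\) of the inverse primitive in its rescaled form; it holds after adjusting constants, since the primitive is applied to the normalized matrix with target precision \(\varepsilon_j s_j/\)(scale).

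\emph{Main obstacle.} The delicate step is the uniform padding of node-dependent inverse polynomials into a single coherent QSVT sequence while the normalizations stay node-specific \(\beta_j\) rather than the worst case \(\max_j\beta_j\). The plan is to use the common threshold \(1/\mathcal R_\Gamma\) for the polynomial degree only, and to rescale each node's polynomial by \(\gamma_j\) so that its sup norm stays at most one. The bound \(|P_j|\le1\) on \([-1,1]\) must then be verified for the rescaled odd approximant of \(1/x\), which is bounded by \(O(1)/(\gamma_j x)\) only for \(x\ge\gamma_j\). This works because \(\gamma_j\) is the true singular-value lower bound at that node.
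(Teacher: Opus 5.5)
Your construction follows the paper's route. It uses a two-term LCU for \((z_j\mathsf I-M)/s_j\), the QSVT inverse at a node-specific threshold so that node \(j\) carries normalization \(\beta_j\), index weights proportional to \(|\omega_j|\beta_j\), one call to \(U_R\), node errors proportional to \(\beta_j\), and query depth set by the worst node under coherent access.

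\emph{The padding step.} The step you single out as the main obstacle is resolved by an argument that fails. A QSVT phase sequence exists only for a polynomial bounded by one on all of \([-1,1]\). The singular values of the matrix actually encoded play no role in that constraint, so knowing that \(\gamma_j\) is the true lower bound at node \(j\) does not help. If \(p\) approximates \(1/x\) for \(|x|\ge\delta=1/\mathcal R_\Gamma\), then \(\gamma_jp\) has size about \(\gamma_j/\delta\) near \(x=\delta\). This exceeds one at every node with \(\gamma_j>\delta\).

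Suppose instead you implement the bounded polynomial \(\delta p/2\) at every node and move the rescaling into the LCU coefficients. Then node \(j\) is encoded at normalization \(2\mathcal R_\Gamma/s_j\) rather than \(\Theta(\|(z_j\mathsf I-M)^{-1}\|)\). LCU weights cannot reduce a summand's normalization, so the total becomes \(2\alpha_R\mathcal R_\Gamma\sum_j|\omega_j|/s_j\). That is exactly the uniform-threshold scale the paper contrasts with \(\alpha_\Gamma\) in Table~\ref{tab:rpa-review-comparison}.

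No common threshold is needed. Keep the node-specific polynomials from your first paragraph. Each is already bounded by one and has odd degree \(O(\gamma_j^{-1}\log(\cdot))\le d\). An odd polynomial of degree at most an odd \(d\), bounded by one on \([-1,1]\), is realized by a length-\(d\) phase sequence, so padding is automatic. What remains is coherent access to node-dependent phases. That is a hypothesis of the proposition; the paper invokes it directly and notes that padding alone does not supply it.

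\emph{The error budget.} There are two smaller points.
\begin{itemize}
\item The requirement \(s_j\varepsilon_j\le1\) of Proposition~\ref{prop:local-inverse} does not follow ``after adjusting constants.'' With \(\varepsilon_j=\varepsilon_{\rm impl}\beta_j/\alpha_\Gamma\) you only get \(s_j\varepsilon_j<s_j\beta_j\), and \(s_j\beta_j\) can be of order \(\mathcal R_\Gamma\). Cap the allocation as the paper does: \(\varepsilon_j=\min\{1/s_j,\ \varepsilon_{\rm impl}\beta_j/(2\alpha_\Gamma)\}\). Then \(\beta_j/\varepsilon_j=\max\{s_j\beta_j,\ 2\alpha_\Gamma/\varepsilon_{\rm impl}\}\). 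This cap, not the ratio \(\|(z_j\mathsf I-M)^{-1}\|/\varepsilon_j\) you computed, is what puts \(\mathcal R_\Gamma\) inside the logarithm, using \(\mathcal R_\Gamma\ge1\) and \(\alpha_\Gamma/\varepsilon_{\rm impl}>1\).
\item Spending all of \(\varepsilon_{\rm impl}\) on node errors leaves nothing for coefficient preparation and coherent composition, which \(\varepsilon_{\rm impl}\) must also cover. The factor \(2\) in the capped allocation reserves half for them.
\end{itemize}
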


The normalization records the weighted sum of the node scales
\(\beta_j\alpha_R\), whereas the query bound is determined by the deepest
inverse circuit. This distinction follows from processing the nodes
coherently. Preparing the node and coefficient data has an additional cost,
separate from the matrix-oracle queries in
\eqref{eq:ordinary-lcu-query}.

The same construction applies when the spectral problem is specified by a
pencil \(M-zL\), as in DARE. In the defining Riesz integral,
\((z\mathsf I-M)^{-1}\) is replaced by \((zL-M)^{-1}L\).
Consequently, the right input becomes \(LR\), and the finite sum and
generalized singularity factor become
\begin{equation}
\begin{aligned}
S_{\Gamma,L}&=\sum_{j=1}^{m}\omega_j(z_jL-M)^{-1}LR,\\
\mathcal R_{\Gamma,L}
&=\sup_{z\in\Gamma_\chi}
(|z|\alpha_L+\alpha_M)\|(zL-M)^{-1}\|.
\end{aligned}
\label{eq:pencil-resolvent-bound}
\end{equation}

Encoding a shifted pencil uses both input matrices \(M\) and \(L\).
Combining these encodings gives the following counterpart of
Proposition~\ref{thm:local-contour-sum}.

\begin{proposition}[Affine-pencil contour construction]
\label{thm:affine-pencil-lcu-main}
Let \(M,L,R\) have exact block-encodings with normalizations
\(\alpha_M,\alpha_L,\alpha_R\), and suppose every \(z_jL-M\) is invertible.
Assume controlled access to \(M,L\) and their adjoints, coherent coefficient
and inverse schedules, and supplied constant-factor inverse-norm estimates.
For declared node normalizations
\(\beta_j=\Theta(\|(z_jL-M)^{-1}\|)\), set
\begin{equation}
\alpha_{\Gamma,L}=\alpha_L\alpha_R\sum_{j=1}^{m}|\omega_j|\beta_j.
\label{eq:pencil-lcu-normalization}
\end{equation}
For \(0<\varepsilon_{\rm impl}<\alpha_{\Gamma,L}\), an encoding of
\(S_{\Gamma,L}\) is obtained with normalization \(\alpha_{\Gamma,L}\),
decoded error at most \(\varepsilon_{\rm impl}\), and query cost
\begin{equation}
O\!\left(
\mathcal R_{\Gamma,L}
\log\!\left(e+\frac{\mathcal R_{\Gamma,L}\alpha_{\Gamma,L}}
{\varepsilon_{\rm impl}}\right)
\right),
\label{eq:pencil-lcu-query}
\end{equation}
plus one call to \(U_R\).
\end{proposition}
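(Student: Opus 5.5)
The plan is to follow the proof of Proposition~\ref{thm:local-contour-sum}, replacing the shifted matrix \(z_j\mathsf I-M\) by the shifted pencil \(z_jL-M\) and the right input \(R\) by \(LR\). First, for each node, an LCU of the encodings of \(L\) and \(M\) (with a phase carrying \(z_j\) prepared coherently from the node register) gives an encoding of \((z_jL-M)/s_j\) with \(s_j=|z_j|\alpha_L+\alpha_M\). It uses one controlled query to each of \(U_L\) and \(U_M\). The coefficient state has amplitudes proportional to \(\sqrt{|z_j|\alpha_L}\) and \(\sqrt{\alpha_M}\), so the normalization is exactly \(s_j\).

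Next, I apply Proposition~\ref{prop:local-inverse} with threshold \(\gamma_j=1/(s_j\beta_j)\), which by the supplied inverse-norm estimate is a constant-factor lower bound on \(\sigma_{\min}((z_jL-M)/s_j)\). This yields an encoding of \((z_jL-M)^{-1}\) with normalization \(\beta_j\). Its query cost is \(O(s_j\beta_j\log(e+\beta_j/\varepsilon_j))=O(\mathcal R_{\Gamma,L}\log(\cdot))\) uniformly in \(j\), by the definition of \(\mathcal R_{\Gamma,L}\) in \eqref{eq:pencil-resolvent-bound}. I pad every node's QSVT schedule to the common maximal degree so that the controlled inverse acts coherently on the node superposition. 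Then I multiply by the encodings of \(L\) and \(R\), with normalization \(\alpha_L\alpha_R\). These are single calls outside the node-controlled part, since \(L R\) does not depend on \(j\). Finally, I prepare the state \(\sum_j\sqrt{|\omega_j|\beta_j}\,|j\rangle\), absorb the phases of \(\omega_j\) into the controlled circuit, and unprepare. The standard LCU lemma then gives an encoding of \(S_{\Gamma,L}\) with normalization \(\alpha_L\alpha_R\sum_j|\omega_j|\beta_j=\alpha_{\Gamma,L}\), as in \eqref{eq:pencil-lcu-normalization}.

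For the error, the decoded error of the LCU is at most \(\alpha_L\alpha_R\sum_j|\omega_j|\varepsilon_j\). Choosing \(\varepsilon_j=\varepsilon_{\rm impl}\beta_j/\alpha_{\Gamma,L}\) makes this at most \(\varepsilon_{\rm impl}\), because \(\sum_j|\omega_j|\beta_j\alpha_L\alpha_R=\alpha_{\Gamma,L}\). Substituting gives \(\beta_j/\varepsilon_j=\alpha_{\Gamma,L}/\varepsilon_{\rm impl}\). This yields the logarithm in \eqref{eq:pencil-lcu-query}, up to a factor \(\mathcal R_{\Gamma,L}\) that is harmless inside the log. The condition \(\varepsilon_{\rm impl}<\alpha_{\Gamma,L}\) gives \(\varepsilon_j<\beta_j\), so \(s_j\varepsilon_j\lesssim s_j\beta_j\). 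The precondition \(\alpha_T\varepsilon_{\rm inv}\le1\) of Proposition~\ref{prop:local-inverse} is stated relative to the normalized matrix, whose inverse error is \(s_j\varepsilon_j\) rescaled by \(s_j\). I will check that the rescaling is consistent and shrink the constant if necessary. The total query count is the padded inverse depth \(O(\mathcal R_{\Gamma,L}\log(\cdot))\) times \(O(1)\) queries per block-encoding of the shifted pencil, plus constantly many calls to \(U_L\) and one to \(U_R\).

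The main obstacle is the coherent node-dependent inversion. The thresholds \(\gamma_j\) and polynomial phases differ across nodes, so I must argue that the QSVT inverse with per-node phases, padded to the maximal degree, implements the controlled sum \(\sum_j|j\rangle\langle j|\otimes(\text{inverse}_j)\) with per-node error \(\varepsilon_j\). The query depth must be the maximum over nodes rather than the sum. The paper's coherent-access assumption (node-dependent inverse schedules padded to a common length) supplies exactly this. I would also spell out that the shifted encoding carries the complex phase of \(z_j\) without affecting its normalization.
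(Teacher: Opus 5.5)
Your construction is the paper's. Its proof reruns the ordinary-contour argument with \(A_j=z_jL-M\) and \(s_j=|z_j|\alpha_L+\alpha_M\). It uses a two-term shift LCU of \(U_L,U_M\) and the \(j\)-independent right factor \(LR\) at normalization \(\alpha_L\alpha_R\), costing one further \(U_L\) call and one \(U_R\) call, with \(L\) never inverted. The query depth is the maximum over nodes of \(s_j\beta_j=O(\mathcal R_{\Gamma,L})\).

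\textbf{The gap: node error allocation.} You left this step unresolved, and the remedy you suggest does not work. With \(\varepsilon_j=\varepsilon_{\rm impl}\beta_j/\alpha_{\Gamma,L}\) you get \(s_j\varepsilon_j=s_j\beta_j\,\varepsilon_{\rm impl}/\alpha_{\Gamma,L}\). The precondition of Proposition~\ref{prop:local-inverse} is \(s_j\varepsilon_j\le1\) in either normalization. It reads \(\alpha_T\varepsilon_{\rm inv}\le1\) with \((\alpha_T,\varepsilon_{\rm inv})=(s_j,\varepsilon_j)\) if you invert \(A_j\), or \((1,s_j\varepsilon_j)\) if you invert \(A_j/s_j\), so no rescaling removes it. Note that \(s_j\beta_j\ge s_j\|A_j^{-1}\|\ge\|A_j\|\|A_j^{-1}\|\ge1\) can be of order \(\mathcal R_{\Gamma,L}\). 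Since \(\varepsilon_{\rm impl}/\alpha_{\Gamma,L}\) may be any number below one, \(s_j\varepsilon_j\) can exceed one by an unbounded factor. Shrinking a constant does not repair this. The bound \(\varepsilon_j<\beta_j\) you invoke only gives \(s_j\varepsilon_j<s_j\beta_j\), which does not help.

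\textbf{The repair.} The paper caps the node error:
\(\varepsilon_j=\min\{1/s_j,\ \varepsilon_{\rm impl}\beta_j/(2\alpha_{\Gamma,L})\}\).
\begin{itemize}
\item The weighted node error \(\alpha_L\alpha_R\sum_j|\omega_j|\varepsilon_j\) is then at most \(\varepsilon_{\rm impl}/2\). This leaves the other half for coefficient preparation and coherent composition, for which your allocation leaves no room.
\item With the cap, \(\beta_j/\varepsilon_j=\max\{s_j\beta_j,\,2\alpha_{\Gamma,L}/\varepsilon_{\rm impl}\}\), and both entries are at least one.
\item This maximum is the real source of the factor \(\mathcal R_{\Gamma,L}\) inside the logarithm of \eqref{eq:pencil-lcu-query}. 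Within this primitive it is forced by the cap, so the sharper \(\log(e+\alpha_{\Gamma,L}/\varepsilon_{\rm impl})\) your allocation seems to give is not established.
\end{itemize}

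\textbf{A smaller bookkeeping point.} The proof of Proposition~\ref{prop:local-inverse} outputs normalization \(4/\gamma\). Your threshold \(1/(s_j\beta_j)\) on \(A_j/s_j\) therefore yields \(A_j^{-1}\) at normalization \(4\beta_j\), and the LCU at \(4\alpha_{\Gamma,L}\). To land exactly on \(\alpha_{\Gamma,L}\), use threshold \(4/\beta_j\) on \(A_j\) (equivalently \(4/(s_j\beta_j)\) on \(A_j/s_j\)). This needs the convention \(4\|A_j^{-1}\|\le\beta_j\le C\|A_j^{-1}\|\), as the paper uses.
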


Appendix~\ref{app:contour-proofs} proves both construction propositions
and gives the nodewise error allocation.
This construction uses the pencil matrices directly and does not require
\(L^{-1}\). Its interpretation as a selected deflating block uses the
regularity and finite-branch conditions in
Definition~\ref{def:weighted-riesz-interface-main}.

The two propositions encode the finite quadrature sums.
To obtain an approximation to the Riesz operator itself, we also account
for the quadrature error \(\varepsilon_{\rm quad}\), which bounds the
difference between \(\mathcal G_\chi\) and its exact finite sum.
Lemma~\ref{lem:analytic-quadrature-app} gives an explicit bound when
the parametrized integrand has a bounded analytic continuation; other
quadrature rules require their own error bound.
If \(\varepsilon_{\rm LCU}\) accounts for coefficient preparation and
coherent composition, the ordinary construction satisfies
\begin{equation}
\varepsilon_{\rm branch}
\le\varepsilon_{\rm quad}
 +\alpha_R\sum_{j=1}^{m}|\omega_j|\varepsilon_j
 +\varepsilon_{\rm LCU}.
\label{eq:contour-branch-error}
\end{equation}
For the pencil construction, replace \(\alpha_R\) in the sum by
\(\alpha_L\alpha_R\).
Allocating the last two terms a combined budget
\(\varepsilon_{\rm impl}\) gives
\(\varepsilon_{\rm branch}\le\varepsilon_{\rm quad}
+\varepsilon_{\rm impl}\).
For approximate inputs, Proposition~\ref{prop:contour-input-errors-app}
adds the perturbation of the shifted inverses and right factors to this
budget, under explicit nodewise smallness conditions.
Algorithm~\ref{alg:weighted-riesz-be} uses the preceding spectral,
access, and quadrature assumptions. The declared output scale is
\(\alpha_T\); any normalization change has its implementation cost
included, and its propagated error contributes to the total target
\(\varepsilon_T\).
In the algorithms, BE denotes block-encoding, each \(U_T\) carries
its scale, and \(U_{\mathsf I}\) encodes the identity at scale one.
Displayed matrices are ideal targets; pseudoinverses act on decoded inputs.

\begin{algorithm}[H]
\caption{\(\operatorname{RieszBE}(U_M,U_L,\Gamma,g,U_R;\varepsilon_T)\):
weighted Riesz encoding for a pencil}
\label{alg:weighted-riesz-be}
\small
\begin{algorithmic}[1]
\Require Exact BEs \(U_M,U_L,U_R\) with scales
\(\alpha_M,\alpha_L,\alpha_R\); \(\Gamma,g,\varepsilon_T\).
\Ensure A BE \((U_T,\alpha_T,a_T)\) with
\(\|\widetilde T-T\|\le\varepsilon_T\), where
\[
T=\frac{1}{2\pi\mathrm i}\oint_\Gamma
g(z)(zL-M)^{-1}LR\,dz.
\]
\State Choose \(m\) quadrature nodes \((z_j,\nu_j)\) and set
\[
\omega_j=\frac{\nu_jg(z_j)}{2\pi\mathrm i},
\qquad s_j=|z_j|\alpha_L+\alpha_M.
\]
\State Coherently encode the shifted inverses:
\[
\frac{z_jL-M}{s_j}
\quad\longmapsto\quad
\frac{(z_jL-M)^{-1}}{\beta_j}.
\]
\State Set the quadrature target and its LCU scale:
\[
S_{\Gamma,L}=\sum_{j=1}^{m}\omega_j(z_jL-M)^{-1}LR,
\qquad
\alpha_{\Gamma,L}=\alpha_L\alpha_R\sum_{j=1}^{m}|\omega_j|\beta_j.
\]
\Statex If \(\alpha_{\Gamma,L}=0\), return a zero BE at scale \(\alpha_T\).
\State Combine the inverse BEs by LCU and compose \(U_L\), then
\(U_R\), on the right.
\State Implement the declared output scale \(\alpha_T\).
\State \Return \((U_T,\alpha_T,a_T)\).
\end{algorithmic}
\end{algorithm}

This completes the contour construction of the weighted Riesz block.
Its error subsequently enters graph recovery, where the Riccati operator
is obtained. For the unit-weight half-plane projector, the next subsection
gives a second construction based on polynomial approximation.

\subsection{Weyl--LCHM construction of half-plane projectors}
\label{subsec:weyl-lchm-riesz}

We now construct the unit-weight Riesz operator for an ordinary matrix
whose spectrum is separated by the imaginary axis. The target remains
\(\mathcal G_\chi=\Pi_\chi R\), where
\(\Pi_\chi=\mathfrak R_\chi[1;\widehat H,\mathsf I]\) selects the left
or right half-plane, \(\chi\in\{-,+\}\).
For CARE, \(R=\mathsf I\) gives the complete stable projector \(\Pi_-\).
A polynomial that approximates one on the selected branch and zero on the
other approximates the Riesz projector when applied to the matrix.
Weyl--LCHM provides a quantum implementation of this polynomial, giving
another way to construct the same branch block.
Let the normalized matrix satisfy
\begin{equation}
\|\widehat H\|\le1,\qquad
\operatorname{spec}(\widehat H)\subset
\{z:\operatorname{Re}z\le-\Delta_H\}
\cup\{z:\operatorname{Re}z\ge\Delta_H\},\qquad \Delta_H>0.
\label{eq:lchm-half-plane-gap}
\end{equation}
All geometric parameters below refer to the normalized variable.
If \(\widehat H=H/\alpha_H\), the spectral separation and matrix-input
error are divided by \(\alpha_H\) as well.

The scalar identity underlying the polynomial is
\(\operatorname{sign}(z)=z(z^2)^{-1/2}\), where the inverse square root
uses its principal branch. The sign is \(1\) in the right half-plane and
\(-1\) in the left half-plane, so
\((1\pm\operatorname{sign}(z))/2\) selects either branch~\cite[Chap.~5]{Higham2008Functions}.
To approximate the inverse square root uniformly around the
spectrum, choose \(0<r_d<\Delta_H\) and \(r_c>1\), and define
\begin{equation}
E_\pm=\{z\in\mathbb C:|z|\le r_c,\ \pm\operatorname{Re}z\ge r_d\},
\qquad
\Omega_{r_d,r_c}=\{z^2:z\in E_+\}.
\label{eq:lchm-squared-caps}
\end{equation}
The contours \(\Gamma_\pm=\partial E_\pm\) have positive orientation.
Each spectral branch lies strictly inside its cap, so both contours lie
in the resolvent set. Squaring maps either cap onto the same domain.
The squared domain avoids \((-\infty,0]\), so the same analytic branch of
the inverse square root applies throughout it.

Approximating this scalar function produces a polynomial selector.
For a nonnormal matrix, however, scalar approximation error can be
amplified when the polynomial is evaluated at the matrix. The next lemma
quantifies that amplification through the Riesz integral and gives the
error in the required right-multiplied block directly.

\begin{lemma}[Half-plane polynomial selector]
\label{lem:lchm-branch-selector}
Under \eqref{eq:lchm-half-plane-gap}--\eqref{eq:lchm-squared-caps},
let \(q_{r-1}\) be a polynomial of degree at most \(r-1\), \(r\ge1\), and set
\begin{equation}
\varepsilon_{-1/2,r}
=\sup_{w\in\Omega_{r_d,r_c}}|q_{r-1}(w)-w^{-1/2}|.
\label{eq:lchm-inverse-square-root-error}
\end{equation}
\begin{equation}
S_{\pm,r}(z)=\frac12\left(1\pm zq_{r-1}(z^2)\right),
\quad d_S=2r-1.
\label{eq:lchm-polynomial-selectors}
\end{equation}
Define the contour amplification factor
\begin{equation}
\mathcal A_H=\frac1{2\pi}\sum_{\xi\in\{-,+\}}
\int_{\Gamma_\xi}\|(z\mathsf I-\widehat H)^{-1}\|\,|\mathrm dz|.
\label{eq:lchm-nonnormal-amplification}
\end{equation}
Then \(\deg S_{\chi,r}\le d_S\) and
\begin{equation}
\|S_{\chi,r}(\widehat H)-\Pi_\chi\|
\le\frac{\mathcal A_Hr_c}{2}\varepsilon_{-1/2,r}.
\label{eq:lchm-selector-operator-bound}
\end{equation}
For any compatible right factor \(R\), the direct bound is
\begin{equation}
\begin{aligned}
\|(S_{\chi,r}(\widehat H)-\Pi_\chi)R\|
&\le\frac{\varepsilon_{-1/2,r}}{4\pi}
\sum_{\xi\in\{-,+\}}\int_{\Gamma_\xi}
|z|\,\|(z\mathsf I-\widehat H)^{-1}R\|\,|\mathrm dz|\\
&\le\frac{\|R\|\mathcal A_Hr_c}{2}\varepsilon_{-1/2,r}.
\end{aligned}
\label{eq:lchm-direct-branch-error}
\end{equation}
\end{lemma}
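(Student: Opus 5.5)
The plan is to write both the polynomial selector and the projector as Riesz integrals over \(\Gamma_-\cup\Gamma_+\), and then compare the two integrands pointwise on the contours. The degree claim is immediate: \(zq_{r-1}(z^2)\) has degree at most \(1+2(r-1)=2r-1\), so \(\deg S_{\chi,r}\le d_S\).

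By \eqref{eq:lchm-half-plane-gap}, every eigenvalue of \(\widehat H\) lies in \(\{|z|\le1\}\) and satisfies \(|\operatorname{Re}z|\ge\Delta_H>r_d\). Since \(r_c>1\), each left-branch eigenvalue lies in the interior of \(E_-\) and each right-branch eigenvalue in the interior of \(E_+\). The contour \(\Gamma_-\cup\Gamma_+\) therefore encloses the full spectrum, with each cap enclosing exactly its own branch.

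\textbf{Integral representations.}
\begin{itemize}
\item A polynomial is entire, so the holomorphic functional calculus gives
\[
S_{\chi,r}(\widehat H)=\frac{1}{2\pi\mathrm i}\sum_{\xi}\oint_{\Gamma_\xi}S_{\chi,r}(z)(z\mathsf I-\widehat H)^{-1}\,\mathrm dz .
\]
\item The function \(f_\chi(z)=\tfrac12(1\pm\operatorname{sign} z)\) is holomorphic on a neighborhood of \(E_-\cup E_+\), since it is locally constant there. Hence
\[
\Pi_\chi=\frac{1}{2\pi\mathrm i}\sum_{\xi}\oint_{\Gamma_\xi}f_\chi(z)(z\mathsf I-\widehat H)^{-1}\,\mathrm dz .
\]
On \(\Gamma_+\) this integral gives \(\Pi_+\) when \(f_\chi=1\) there and zero otherwise, and symmetrically on \(\Gamma_-\). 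Equivalently, \(f_\chi\) equals \(1\) on the selected cap and \(0\) on the other, and the unit-weight Riesz integral over the selected contour is \(\Pi_\chi\) by Definition~\ref{def:weighted-riesz-interface-main}.
\end{itemize}

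Subtracting the two representations gives
\[
S_{\chi,r}(z)-f_\chi(z)=\pm\tfrac12\,z\bigl(q_{r-1}(z^2)-(z^2)^{-1/2}\bigr),
\]
using \(\operatorname{sign}z=z(z^2)^{-1/2}\) with the principal branch. This identity must hold on both caps, which is the one point requiring care. On \(E_+\) it is clear. On \(E_-\), write \(z=-w\) with \(w\in E_+\). Then \(z^2=w^2\in\Omega_{r_d,r_c}\), and \((z^2)^{-1/2}=w^{-1}\) because the principal root of \(w^2\) is \(w\) when \(\operatorname{Re}w>0\). Therefore \(z(z^2)^{-1/2}=-1=\operatorname{sign}z\).

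A related subtlety is that the contours \(\Gamma_\pm\) themselves lie in \(E_\pm\), so \(z^2\in\Omega_{r_d,r_c}\) holds on the contours as well. The pointwise bound from \eqref{eq:lchm-inverse-square-root-error} then gives
\[
|S_{\chi,r}(z)-f_\chi(z)|\le\tfrac12|z|\,\varepsilon_{-1/2,r}\qquad\text{for all }z\in\Gamma_\pm .
\]

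\textbf{Estimates.}
\begin{itemize}
\item Multiply the difference of integrals on the right by \(R\) and take norms inside the integral. The factors \(\tfrac1{2\pi}\cdot\tfrac12=\tfrac1{4\pi}\) produce the first line of \eqref{eq:lchm-direct-branch-error}.
\item For the second line, use \(|z|\le r_c\) on the caps and \(\|(z\mathsf I-\widehat H)^{-1}R\|\le\|(z\mathsf I-\widehat H)^{-1}\|\|R\|\). By the definition \eqref{eq:lchm-nonnormal-amplification} of \(\mathcal A_H\), the bound becomes \(\tfrac{1}{4\pi}r_c\|R\|\cdot2\pi\mathcal A_H\varepsilon_{-1/2,r}=\tfrac12\|R\|\mathcal A_Hr_c\varepsilon_{-1/2,r}\).
\item Taking \(R=\mathsf I\) yields \eqref{eq:lchm-selector-operator-bound}.
\end{itemize}
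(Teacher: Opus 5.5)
Your proposal is correct and follows the paper's own argument. Both write \(S_{\chi,r}(\widehat H)\) and \(\Pi_\chi\) as Dunford--Riesz integrals over \(\Gamma_-\cup\Gamma_+\), bound the scalar difference by \(|z|\varepsilon_{-1/2,r}/2\) on each cap using \(\operatorname{sign}z=z(z^2)^{-1/2}\), and then take norms. You also make explicit two points the paper leaves implicit: the spectrum lies in the interiors of the caps, and the principal-branch identity holds on \(E_-\).
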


\begin{proof}
On the cap \(E_\xi\), the square-root identity gives
\[
|S_{\chi,r}(z)-\mathbf1_{\{\xi=\chi\}}|
\le\frac{|z|}{2}\varepsilon_{-1/2,r}.
\]
The Dunford--Riesz formulas for the polynomial and the projector~\cite[Chap.~I, Eqs.~(5.25) and~(5.47)]{Kato1995Perturbation} imply
\[
(S_{\chi,r}(\widehat H)-\Pi_\chi)R
=\frac1{2\pi\mathrm i}\sum_{\xi\in\{-,+\}}
\oint_{\Gamma_\xi}
\bigl(S_{\chi,r}(z)-\mathbf1_{\{\xi=\chi\}}\bigr)
(z\mathsf I-\widehat H)^{-1}R\,\mathrm dz.
\]
Taking norms proves the first inequality in
\eqref{eq:lchm-direct-branch-error}. The bounds
\(|z|\le r_c\) and
\(\|(z\mathsf I-\widehat H)^{-1}R\|
\le\|(z\mathsf I-\widehat H)^{-1}\|\|R\|\)
give the second. Setting \(R=\mathsf I\) proves
\eqref{eq:lchm-selector-operator-bound}.
\end{proof}

Thus the scalar approximation determines the projector accuracy once the
generalized singularity factor is known. The lemma allows arbitrary Jordan
structure~\cite[Sec.~1.2.1]{Higham2008Functions}; the resolvent accounts for nonnormality in the error analysis.
In particular, a sufficient condition for selector error
\(\varepsilon_{\rm sel}>0\) is
\begin{equation}
\frac{\alpha_R\mathcal A_Hr_c}{2}\varepsilon_{-1/2,r}
\le\varepsilon_{\rm sel},
\qquad \|R\|\le\alpha_R.
\label{eq:lchm-selector-budget}
\end{equation}
When bounds on the resolvent applied to \(R\) are available,
\eqref{eq:lchm-direct-branch-error} can reduce the required degree.

It remains to turn this polynomial approximation into a block-encoding.
The matrix \(\widehat H\) need not be Hermitian, so the construction uses
polynomials of a family of Hermitian matrices whose coherent average
recovers the desired polynomial of \(\widehat H\).
Specifically, the Weyl linear-combination-of-Hermitian-matrices construction
\cite{WangLiangChenLiu2026LCHM} uses the Hermitian contractions
\begin{equation}
X_\theta=\frac12\left(e^{-\mathrm i\theta}\widehat H
                         +e^{\mathrm i\theta}\widehat H^*\right).
\label{eq:lchm-hermitian-overview}
\end{equation}
A finite, coherent average over angles cancels the unwanted terms
containing \(\widehat H^*\) and yields \(S_{\chi,r}(\widehat H)\).
The angular identity is exact, so this averaging introduces no quadrature
error. Its proof and circuit realization are given in
Appendix~\ref{app:weyl-realization}.

For the quantum statements below, assume exact block-encodings
\(U_{\widehat H}\) and \(U_R\) with normalizations \(1\) and \(\alpha_R\),
controlled input and adjoint access, the required reflections,
and coherent polynomial implementation as in
Subsection~\ref{subsec:block-encodings-inverse}.
For each selector, choose a declared normalization
\begin{equation}
\alpha_{S_{\chi,r}}\ge
\max_{|z|=1}|2S_{\chi,r}(z)-\tfrac12|.
\label{eq:lchm-selector-normalization}
\end{equation}

The following proposition combines this implementation with the selector
error bound to obtain the Riesz block. Its query cost is linear in the
polynomial degree.

\begin{proposition}[Weyl--LCHM branch-block construction]
\label{cor:lchm-branch-block}
Under the input and access assumptions above, let
\(S_{\chi,r}\) be the selector in
\eqref{eq:lchm-polynomial-selectors} and satisfy
\(\|(S_{\chi,r}(\widehat H)-\Pi_\chi)R\|\le\varepsilon/2\),
where \(\varepsilon>0\).
Weyl--LCHM returns a block-encoding of \(\Pi_\chi R\) with normalization
\(\alpha_{S_{\chi,r}}\alpha_R\), decoded error at most \(\varepsilon\),
using
\begin{equation}
O(d_S)\quad\text{queries to the block-encodings of }\widehat H\text{ and }R.
\label{eq:lchm-total-query}
\end{equation}
\end{proposition}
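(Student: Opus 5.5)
The construction has two parts: an exact block-encoding of the polynomial \(S_{\chi,r}(\widehat H)\), composed with \(U_R\), followed by an error split using the hypothesis \(\|(S_{\chi,r}(\widehat H)-\Pi_\chi)R\|\le\varepsilon/2\).

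\textbf{Step 1: the polynomial as a linear combination of Hermitian polynomials.} Write \(S_{\chi,r}(z)=\sum_{k=0}^{d_S}c_kz^k\) with \(d_S=2r-1\). Expanding \(X_\theta^k\) from \eqref{eq:lchm-hermitian-overview} gives words in \(\widehat H\) and \(\widehat H^*\). Each word carries a phase \(e^{-\mathrm i\theta(a-b)}\), where \(a\) and \(b\) count the factors of \(\widehat H\) and \(\widehat H^*\). The pure term \(\widehat H^k\) carries phase \(e^{-\mathrm i k\theta}\). Multiplying by \(e^{\mathrm i k\theta}\) and averaging over \(\theta_\ell=2\pi\ell/N\) with \(N>2d_S\) removes every mixed word, since \(|a-b|<k\) and no aliasing occurs at that \(N\). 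This yields \(\widehat H^k=2^k\,\frac1N\sum_\ell e^{\mathrm i k\theta_\ell}X_{\theta_\ell}^k\). Hence
\[
S_{\chi,r}(\widehat H)=\frac1N\sum_{\ell}p_{\ell}(X_{\theta_\ell}),\qquad
p_\ell(x)=\sum_k c_k\,2^k e^{\mathrm i k\theta_\ell}x^k .
\]
This is an exact identity, consistent with the claim that the averaging introduces no quadrature error; I would defer its verification to Appendix~\ref{app:weyl-realization}.

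\textbf{Step 2: circuit realization.} For each \(\ell\), \(X_{\theta_\ell}\) is block-encoded with normalization \(1\). Take a controlled LCU of \(e^{-\mathrm i\theta}U_{\widehat H}\) and \(e^{\mathrm i\theta}U_{\widehat H}^\dagger\) with equal weights, where the angle is held in a coherent register. Since \(X_\theta\) is a Hermitian contraction, QSVT or generalized QSP with angle-dependent phases implements \(p_\ell(X_{\theta_\ell})/\alpha\) using \(O(d_S)\) queries. The same phase sequence length serves all \(\ell\), so the superposition over \(\ell\) is preserved. Preparing the uniform state over \(\ell\), applying the controlled polynomial circuit, and unpreparing gives an encoding of the average. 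Composing with \(U_R\) on the right adds one query and multiplies the normalization by \(\alpha_R\).

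\textbf{Main obstacle: normalization.} The scalar polynomials \(p_\ell\) must be implementable with a common bound on \([-1,1]\), namely \(\max_{x\in[-1,1]}|p_\ell(x)|\le\alpha_{S_{\chi,r}}\) up to the constant factors of the QSVT realness and parity splitting. I would reduce this to the declared quantity \eqref{eq:lchm-selector-normalization} through the Weyl-type relation between the values of \(p_\ell\) on \([-1,1]\) and the values of \(S_{\chi,r}\) on the unit circle. The specific shift \(2S-\tfrac12\) in the definition suggests the appendix realizes \(S\) through a shifted, rescaled combination, so that the bound on \(|z|=1\) controls the Hermitian polynomials. I expect this step to require following the realization in Appendix~\ref{app:weyl-realization} and \cite{WangLiangChenLiu2026LCHM} closely. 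I would take the normalization \(\alpha_{S_{\chi,r}}\alpha_R\) as given by that construction, after absorbing constant factors into the declaration.

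\textbf{Step 3: error.} The exact polynomial implementation leaves only phase-synthesis and coefficient-preparation error. I allocate to it at most \(\varepsilon/2\) in decoded norm, at a cost of only logarithmic overhead, which is absent from the query count. The triangle inequality with the selector hypothesis then gives total decoded error at most \(\varepsilon\) for \(\Pi_\chi R\). The query count remains \(O(d_S)\) to \(U_{\widehat H}\) and its adjoint, plus one call to \(U_R\).
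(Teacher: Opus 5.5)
There is a genuine gap, at the step you flagged as the main obstacle, and it cannot be closed with your choice of Hermitian polynomials. Your Step~1 identity is correct. However, the representatives $p_\ell(x)=\sum_k c_k2^ke^{\mathrm ik\theta_\ell}x^k$ contain pieces that cancel only after averaging over $\ell$. Since $(2\cos\phi)^k=\sum_{j=0}^{k}\binom{k}{j}\cos((k-2j)\phi)$, each $p_\ell(\cos\phi)$ carries all lower harmonics with binomial weights. So no Weyl-type relation ties $\max_{[-1,1]}|p_\ell|$ to the values of $S_{\chi,r}$ on $|z|=1$.

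For example, take $S(z)=\tfrac12(1+z^{d_S})$, which has selector form with $q_{r-1}(w)=w^{r-1}$. It admits $\alpha_S=3/2$ in \eqref{eq:lchm-selector-normalization}, whereas $|p_\ell(1)|\ge 2^{d_S-1}-\tfrac12$. QSVT or GQSP can implement $p_\ell(X_{\theta_\ell})/\alpha$ only when $|p_\ell|\le\alpha$ on $[-1,1]$. Your circuit therefore gives a normalization that can exceed $\alpha_{S_{\chi,r}}\alpha_R$ by a factor exponential in $d_S$. That is not a constant you can absorb into the declaration, so the normalization claim, which is the substantive content of the proposition, is not proved.

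The missing idea is to expand in Chebyshev polynomials and use the Weyl lift. Write $\Phi(z)=2S_{\chi,r}(z)-\tfrac12=\sum_{m=0}^{d_S}c_mz^m$, where $c_0=\tfrac12$ because $S_{\chi,r}(0)=\tfrac12$. Set $q_\theta(x)=c_0+\sum_{m\ge1}c_me^{\mathrm im\theta}T_m(x)$.

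Since $T_m$ has leading coefficient $2^{m-1}$ rather than $2^m$, your phase-cancellation argument gives $\frac1N\sum_je^{\mathrm im\theta_j}T_m(X_{\theta_j})=\tfrac12\widehat H^m$ for $1\le m\le d_S$. This halving is why the lift doubles the nonconstant coefficients and keeps the constant once, so $\frac1N\sum_jq_{\theta_j}(X_{\theta_j})=S_{\chi,r}(\widehat H)$. The normalization then follows directly:
\[
q_\theta(\cos\phi)=\tfrac12\bigl[\Phi(e^{\mathrm i(\theta+\phi)})+\Phi(e^{\mathrm i(\theta-\phi)})\bigr],
\]
so $|q_\theta|\le\max_{|z|=1}|\Phi|\le\alpha_{S_{\chi,r}}$ on $[-1,1]$. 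This is what the shift $2S-\tfrac12$ in \eqref{eq:lchm-selector-normalization} is for.

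The Chebyshev form also removes the need for angle-dependent phase schedules. $T_m(X_{\theta_j})$ is the signal block of the $m$th power of a qubitized walk $W_j$. Hence one GQSP sequence for $\sum_mc_m\mathsf Z^m/\alpha_{S_{\chi,r}}$, with $\mathsf Z=\sum_j|j\rangle\langle j|\otimes e^{\mathrm i\theta_j}W_j$, handles all angles with angle-independent coefficients using $O(d_S)$ controlled calls. With this in place, your Step~3 error split goes through as written: half the budget to the selector hypothesis, half to circuit implementation at normalized accuracy $\varepsilon/(2\alpha_{S_{\chi,r}}\alpha_R)$, and one call to $U_R$.
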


Lemma~\ref{lem:lchm-branch-selector} supplies the mathematical
approximation, and Proposition~\ref{cor:lchm-branch-block} supplies its
quantum implementation. Assigning half the target error to each yields the
stated accuracy for \(\Pi_\chi R\).
Approximate-input and gate-synthesis errors are treated in
Appendix~\ref{app:weyl-errors}.

To express the query cost in input parameters, we choose a polynomial
that attains the required selector accuracy on a known spectral region.
The following theorem treats spectra whose oscillation frequencies are
bounded relative to their growth or decay rates. This sector condition
allows complex eigenvalues and arbitrarily small positive half-plane gaps.
For a diagonalizable lift, an eigenvector condition bound then controls
how scalar approximation errors propagate to the Riesz operator.

\begin{theorem}[Query bound for a sector-bounded spectrum]
\label{thm:lchm-explicit-query}
Assume the exact-input access of Proposition~\ref{cor:lchm-branch-block}
and \(\widehat H=V\Lambda V^{-1}\), where \(\Lambda\) is diagonal.
Let \(\Delta_H\) be a supplied lower bound in
\eqref{eq:lchm-half-plane-gap}, and let
\(\kappa_V\ge\|V\|\|V^{-1}\|\) be a supplied upper bound.
Suppose a supplied \(0\le\eta<1\) satisfies
\begin{equation}
|\operatorname{Im}\lambda|\le\eta|\operatorname{Re}\lambda|
\qquad(\lambda\in\operatorname{spec}(\widehat H)).
\label{eq:lchm-sector-condition}
\end{equation}
For any \(\varepsilon>0\), Weyl--LCHM constructs a block-encoding of
\(\Pi_\chi R\) with decoded error at most \(\varepsilon\)
and normalization \(\alpha_{S_{\chi,r}}\alpha_R\) as in
\eqref{eq:lchm-selector-normalization}, using
\begin{equation}
\widetilde O\!\left(\frac{1}{\Delta_H^2(1-\eta^2)^2}\right)
\label{eq:lchm-sector-query}
\end{equation}
queries to the block-encodings of \(\widehat H\) and \(R\).
Here \(\widetilde O\) suppresses logarithmic factors in
\(e+\alpha_R\kappa_V/\varepsilon\) and \((1-\eta^2)^{-1}\).
\end{theorem}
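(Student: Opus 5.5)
The plan is to bypass the contour amplification factor \(\mathcal A_H\) of Lemma~\ref{lem:lchm-branch-selector} and use the diagonalization directly. Since \(S_{\chi,r}\) is a polynomial and \(\widehat H=V\Lambda V^{-1}\), we have \(S_{\chi,r}(\widehat H)-\Pi_\chi=V\bigl(S_{\chi,r}(\Lambda)-\mathbf 1_\chi(\Lambda)\bigr)V^{-1}\). Here \(\mathbf 1_\chi\) is the indicator of the selected half-plane, so \(\Pi_\chi=V\mathbf 1_\chi(\Lambda)V^{-1}\). Hence
\[
\|(S_{\chi,r}(\widehat H)-\Pi_\chi)R\|\le \alpha_R\kappa_V\max_{\lambda\in\operatorname{spec}(\widehat H)}\tfrac{|\lambda|}{2}\,\bigl|q_{r-1}(\lambda^2)-(\lambda^2)^{-1/2}\bigr|,
\]
using \(\operatorname{sign}\lambda=\lambda(\lambda^2)^{-1/2}\) with the principal branch. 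Since \(|\lambda|\le1\), it suffices to make \(q_{r-1}\) approximate \(w^{-1/2}\) to accuracy \(\varepsilon/(\alpha_R\kappa_V)\) on the squared spectrum only. Proposition~\ref{cor:lchm-branch-block} with \(d_S=2r-1\) then gives the block-encoding with the stated normalization and \(O(r)\) queries. The whole task therefore reduces to bounding the degree \(r\).

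Next I would locate the squared spectrum. Write \(\lambda=a+\mathrm ib\) with \(|a|\ge\Delta_H\), \(|b|\le\eta|a|\), and \(|\lambda|\le1\). Then \(w=\lambda^2\) has \(\operatorname{Re}w=a^2-b^2\ge(1-\eta^2)\Delta_H^2=:\delta\) and \(|w|\le1\). Moreover \(|\arg w|\le\theta_0:=2\arctan\eta<\pi/2\), with \(\cos\theta_0=(1-\eta^2)/(1+\eta^2)\). So \(w\) lies in a truncated sector \(K\) that is bounded away from the branch cut \((-\infty,0]\). I would take \(q_{r-1}\) to be the truncated Taylor expansion of \(w^{-1/2}=c^{-1/2}(1-(c-w)/c)^{-1/2}\) about a real centre \(c>0\). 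I would choose \(c\) so that \(K\) lies in the disk \(|c-w|\le\rho c\) with \(\rho<1\). The binomial coefficients of \((1-x)^{-1/2}\) are bounded by \(1\), so the truncation error is at most \(c^{-1/2}\rho^{r}/(1-\rho)\). Requiring this to be at most \(\varepsilon/(\alpha_R\kappa_V)\) gives \(r=O\bigl((1-\rho)^{-1}\log(e+\alpha_R\kappa_V/((1-\rho)\varepsilon))\bigr)\), which falls inside \(\widetilde O\) once \(1-\rho\) is known.

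The main obstacle is the geometric estimate of \(1-\rho\): it must be \(\Omega(\Delta_H^2(1-\eta^2)^2)\). For \(w=se^{\mathrm i\varphi}\in K\) we have \(|c-w|^2=c^2-2cs\cos\varphi+s^2\le c^2-2cs\cos\theta_0+s^2\). This is convex in \(s\), so its maximum over \(s\in[\,\cdot\,,1]\) is attained at an endpoint. At the small end \(s\approx\delta/\cos\theta_0\) the gap is \(c-|c-w|\approx s\cos\theta_0\ge\delta\). A careful computation must use \(\operatorname{Re}w\ge\delta\) directly rather than bounding by \(s\cos\theta_0\) twice. At the far end \(s=1\), I need \(2c\cos\theta_0>1\) with margin. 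Taking \(c\asymp1/\cos\theta_0\) makes that endpoint harmless, and gives \(1-\rho\gtrsim\delta/c\asymp\Delta_H^2(1-\eta^2)\cos\theta_0\asymp\Delta_H^2(1-\eta^2)^2\). I expect to have to balance the choice of \(c\) carefully so that no extra power of \((1-\eta^2)\) is lost; I would first try \(c=1/\cos\theta_0\) and verify both endpoints. If a factor is lost, the fallback is a Faber/conformal-map approximation on \(K\), which gives the same \(\widetilde O(\text{gap}^{-1})\) degree. Combining this with \(d_S=2r-1\) and Proposition~\ref{cor:lchm-branch-block} yields the query bound~\eqref{eq:lchm-sector-query}.
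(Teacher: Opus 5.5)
Your approach is the same as the paper's: diagonalize, reduce to a scalar estimate on the squared spectrum, and use the Taylor series of \(w^{-1/2}\) about a real centre. Your first choice \(c=1/\cos\theta_0=(1+\eta^2)/(1-\eta^2)\) gives exactly the paper's polynomial \(q_{r-1}(w)=\sqrt{\gamma}\sum_{k<r}b_k(1-\gamma w)^k\), with \(\gamma=(1-\eta^2)/(1+\eta^2)\) and \(b_k=4^{-k}\binom{2k}{k}\).

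The endpoint balancing you expect is not needed. Since \(\operatorname{Re}(\lambda^2)\ge\gamma|\lambda|^2\) and \(|\lambda|\le1\),
\[
t_\lambda^2:=|1-\gamma\lambda^2|^2\le1-\gamma^2|\lambda|^2(2-|\lambda|^2)\le1-\gamma^2|\lambda|^2 .
\]
Hence \(1-t_\lambda\ge\gamma^2|\lambda|^2/2\) over the whole range of \(|\lambda|\). No extra power of \(1-\eta^2\) is lost, and the Faber fallback is unnecessary.

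The one step that, as written, falls short of the stated bound is your tail estimate. You bound \(\sum_{k\ge r}b_k\rho^k\le\rho^r/(1-\rho)\) with a single uniform \(\rho\). That forces \(r\asymp(1-\rho)^{-1}\log\bigl(\alpha_R\kappa_V/((1-\rho)\varepsilon)\bigr)\). Because \(1-\rho\asymp\Delta_H^2(1-\eta^2)^2\), the logarithm acquires an additive \(\log(1/\Delta_H)\). The query count is then \(\Delta_H^{-2}(1-\eta^2)^{-2}\log(1/\Delta_H)\) even at fixed \(\varepsilon/(\alpha_R\kappa_V)\). The theorem's \(\widetilde O\) explicitly suppresses only logarithms of \(e+\alpha_R\kappa_V/\varepsilon\) and \((1-\eta^2)^{-1}\), so this factor is not covered.

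The paper avoids the loss by working eigenvalue by eigenvalue with the sharper tail \(\sum_{k\ge r}b_kt^k\le t^r(1-t)^{-1/2}\), which follows from the monotonicity of \(b_k\). The prefactor then cancels the \(|\lambda|\) in the selector error:
\[
\tfrac{|\lambda|}{2}\sqrt{\gamma}\,(1-t_\lambda)^{-1/2}\le(2\gamma)^{-1/2}.
\]
The gap now enters only through \(t_\lambda^r\le e^{-r\gamma^2\Delta_H^2/2}\). This gives
\[
r=\max\{1,\lceil 8(1-\eta^2)^{-2}\Delta_H^{-2}\log_+(2\alpha_R\kappa_V/(\varepsilon\sqrt{1-\eta^2}))\rceil\},
\]
with no \(\Delta_H\) inside the logarithm.

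You also never exhibit an admissible normalization. The statement only needs some valid value, but you should note that one exists; the paper records \(\alpha_{S_{\chi,r}}=2^r\) on \(|z|=1\).
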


The parameter \(\eta\) bounds the ratio of oscillation frequency to
exponential growth or decay rate, while \(\Delta_H\) measures separation
from the imaginary axis. For fixed \(\eta<1\), the displayed query bound
scales as \(\Delta_H^{-2}\), up to logarithmic factors.
The eigenvector bound \(\kappa_V\) enters through those logarithms;
no eigenvector oracle is required. For a normal matrix,
\(\kappa_V=1\) is admissible.
Appendix~\ref{app:weyl-explicit-query} proves the theorem and gives a
sufficient output normalization, whose effect on subsequent graph recovery
must still be included.

The sector condition gives the explicit query bound above.
Appendix~\ref{app:weyl-degree} gives a Faber construction whose degree
is controlled by approximation geometry and generalized singularity factors;
this construction also permits arbitrary Jordan structure.
The weighted and pencil operators in Table~\ref{tab:weighted-riesz-routes}
continue to use the contour construction of
Subsection~\ref{subsec:contour-resolvent-lcu}.

\section{Riccati equations}
\label{sec:dre}

\subsection{Quantum DRE problem and total query complexity}
\label{subsec:dre-complexity}

We now construct the time-dependent solution in
Definition~\ref{def:problem-dre}. In its initial-value convention,
\begin{equation}
\dot P=-Q-A^*P-PA+PGP,\qquad P(0)=P_0.
\label{eq:dre-section-equation}
\end{equation}
The associated Hamiltonian \cite[Sec.~5.1]{Poloni2020RiccatiIterations} and initial graph column are
\begin{equation}
\mathcal H_{\rm DRE}
=\begin{bmatrix}A&-G\\-Q&-A^*\end{bmatrix},
\qquad
R_0=\begin{bmatrix}\mathsf I\\P_0\end{bmatrix}.
\label{eq:dre-lift-seed}
\end{equation}
The data \(G,Q,P_0\) remain Hermitian, and the solution is promised to
exist throughout the requested interval.
For the DRE spectral construction in this section and
Appendix~\ref{app:dre}, assume that
\(\mathcal H_{\rm DRE}\) has no imaginary-axis eigenvalues.
The LQR conditions of Definition~\ref{def:problem-care} guarantee this
property by \eqref{eq:care-triangularization-app}; it is unchanged by
the Hamiltonian sign reversal for finite-horizon LQR.

\begin{problem}[Quantum DRE]
\label{prob:quantum-dre}
For the data and existence interval in Definition~\ref{def:problem-dre},
let \(t\in[0,T]\) be the requested time.
Given exact block-encodings of \(\mathcal H_{\rm DRE}\) and \(R_0\)
in \eqref{eq:dre-lift-seed}, their adjoints and controlled versions,
with normalizations \(\alpha_H\) and \(\alpha_{R_0}\), and a target
\(0<\varepsilon\le1\), construct a block-encoding with decoded matrix
\(\widetilde P(t)\) satisfying
\(\|\widetilde P(t)-P(t)\|\le\varepsilon\).
\end{problem}

Write \(\widehat{\mathcal H}_{\rm DRE}=\mathcal H_{\rm DRE}/\alpha_H\)
and let \(\Pi_+\), \(\Pi_-\) denote its right- and left-half-plane
Riesz projectors. The query bound below uses the access, precision, and
normalization conditions in Appendix~\ref{app:dre-complexity}.

\begin{theorem}[DRE construction by direct projector-block recovery]
\label{thm:dre-construction-main}
For Problem~\ref{prob:quantum-dre}, suppose that
\(\Pi_+R_0\) has full column rank, and supply a bound
\(\sup_{0\le s\le T}\|P(s)\|\le M\). Choose positively oriented contours
\(\Gamma_\pm\) enclosing exactly the corresponding spectral branches
of \(\widehat{\mathcal H}_{\rm DRE}\), and define
\begin{equation}
\mathcal R_{\rm DRE}
=\max_{\sigma\in\{-,+\}}\sup_{z\in\Gamma_\sigma}
(1+|z|)\|(z\mathsf I-\widehat{\mathcal H}_{\rm DRE})^{-1}\|.
\label{eq:dre-generalized-factor}
\end{equation}
Under the implementation and normalization assumptions of
Appendix~\ref{app:dre-complexity}, the algorithm returns an
\((\alpha_{P(t)},a_{P(t)},\varepsilon)\) block-encoding of \(P(t)\), using
\begin{equation}
Q_{\rm DRE}
=\widetilde O\!\left(
\mathcal R_{\rm DRE}\,
\kappa(\Pi_+R_0)\,
\alpha_{P(t)}
\right)
\label{eq:dre-query-main}
\end{equation}
queries to the supplied matrix oracles.
Here \(\alpha_{P(t)}\) is the output normalization specified in that
appendix, and \(\widetilde O\) suppresses logarithmic dependence on
precision and the supplied scales.
\end{theorem}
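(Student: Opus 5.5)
The plan is to build the decaying graph projector from weighted Riesz operators, recover \(P(t)\) through \eqref{eq:common-graph-recovery}, and then add up queries and errors.

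\emph{Graph identity.} Write \(\tau=\alpha_Ht\). The graph of \(P(t)\) is the range of \(e^{\mathcal H_{\rm DRE}t}R_0\). On the stable branch this equals \(e^{\tau\widehat{\mathcal H}}\Pi_-R_0\), which decays. On the unstable branch I change column coordinates. Since \(\Pi_+R_0\) has full column rank \(n\), and \(\operatorname{ran}\Pi_+\) has dimension \(n\) by Hamiltonian symmetry, the map \(e^{-\tau\widehat{\mathcal H}}\Pi_+\) is invertible on \(\operatorname{ran}\Pi_+\). Right-multiplying the evolved basis by \((\Pi_+R_0)^+e^{-\tau\widehat{\mathcal H}}\Pi_+\), restricted appropriately, gives an alternative basis
\[
Y(t)=\Pi_+ + \mathfrak R_-[e^{\tau z};\widehat{\mathcal H},\mathsf I]\,R_0(\Pi_+R_0)^+\,\mathfrak R_+[e^{-\tau z};\widehat{\mathcal H},\mathsf I].
\]
This is a \(2n\times2n\) operator. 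It acts as the identity on \(\operatorname{ran}\Pi_+\) plus a bounded stable correction, and it vanishes on \(\operatorname{ran}\Pi_-\). I will check that \(Y(t)\) is idempotent, which follows because \(\Pi_-\) annihilates \(\operatorname{ran}\Pi_+\). I will also check that its range equals \(\operatorname{graph}P(t)\). Then Appendix~\ref{app:dre-recovery} gives \(E_2^*Y=P(t)E_1^*Y\), and \(E_1^*Y\) has full row rank since \(P(t)\) exists. Both weights are bounded by \(1\) in modulus on their respective branches. The decaying graph projector therefore has bounded norm, and the only nontrivial scale is \(\|(\Pi_+R_0)^+\|=1/\sigma_{\min}(\Pi_+R_0)\).

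\emph{Quantum construction.} Each weighted Riesz factor is encoded by Proposition~\ref{thm:local-contour-sum} on \(\Gamma_\pm\) with \(\alpha_M=1\), at query cost \(\widetilde O(\mathcal R_{\rm DRE})\). Quadrature error comes from Lemma~\ref{lem:analytic-quadrature-app}; because \(e^{\pm\tau z}\) is entire, the number of nodes grows only like \(\tau\) plus logarithms, and the node count does not affect the query count. I encode \(\Pi_+R_0\) the same way with \(R=R_0\). Its pseudoinverse comes from Proposition~\ref{prop:local-inverse} with threshold \(\gamma\approx\sigma_{\min}(\Pi_+R_0)\). Because the normalization of \(\Pi_+R_0\) is calibrated to \(\|\Pi_+R_0\|\), this step costs \(\widetilde O(\mathcal R_{\rm DRE}\,\kappa(\Pi_+R_0))\). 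I then multiply the encodings and add \(\Pi_+\) by LCU. Finally I apply Proposition~\ref{prop:local-inverse} again to \(E_1^*Y\), whose smallest singular value is bounded below in terms of \(M\) and \(\|Y\|\). This final inversion multiplies the cost by the ratio of the normalization of \(Y\) to that singular value, and I will show this ratio is \(O(\alpha_{P(t)})\) under the calibration in Appendix~\ref{app:dre-complexity}. The inner cost \(\mathcal R_{\rm DRE}\kappa(\Pi_+R_0)\) is then multiplied by this \(\alpha_{P(t)}\)-factor, giving \eqref{eq:dre-query-main}.

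\emph{Main obstacle.} The hardest step is error propagation through the two pseudoinverses. The outer pseudoinverse requires the perturbed \(E_1^*\widetilde Y\) to keep full row rank and a singular-value gap, per Lemma~\ref{lem:inverse-input-perturbation}. This forces the inner errors to be allocated at precision \(\varepsilon\,\sigma_{\min}(E_1^*Y)^2\) divided by \(\kappa(\Pi_+R_0)\) and the scale factors. I would propagate these errors with standard pseudoinverse perturbation bounds, using the fact that the rank is fixed at \(n\) on both sides. The required precisions enter only through logarithms, so they remain hidden in \(\widetilde O\). Verifying that the identity for \(Y(t)\) is exact, rather than merely spanning the graph up to a coordinate change, is the other point I would check carefully before the cost accounting.
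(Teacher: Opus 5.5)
Your proposal is correct and follows the paper's route. The shared elements are the same decaying graph projector $\mathcal E(t)$ (your $Y(t)$), the same four weighted Riesz blocks, and an initial pseudoinverse costing $\alpha_{\Pi_+R_0}/\gamma_+=O(\kappa(\Pi_+R_0))$ calls. A final upper-row pseudoinverse costing $O(\alpha_{P(t)})$ calls completes the construction, and the costs multiply with the logarithms absorbed into $\widetilde O$.

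Two points to tighten. First, $\sigma_{\min}(E_1^*\mathcal E(t))\ge(1+M^2)^{-1/2}$ follows from the right inverse $[\mathsf I;P(t)]$ alone, with no $\|Y\|$ dependence. That is exactly what makes the final inversion cost $O(\alpha_{P(t)})$ when $\alpha_{P(t)}=8\alpha_{\mathcal E}\sqrt{1+M^2}$. The paper's identity $(E_2^*\widetilde{\mathcal E})(E_1^*\widetilde{\mathcal E})^+-P(t)=[-P(t),\ \mathsf I]\,\Delta\mathcal E\,(E_1^*\widetilde{\mathcal E})^+$ also gives the outer error $2(1+M^2)\delta_E$ more directly than a generic pseudoinverse perturbation bound. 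Second, $\|\mathcal E(t)\|$ is not $O(1)$, since the Riesz blocks carry $O(\mathcal R_{\rm DRE})$ scales, but this is absorbed into $\alpha_{P(t)}$ and does not affect the stated bound.
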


\(\mathcal R_{\rm DRE}\) measures the relative proximity to singularity
of the shifted matrices used to construct the weighted spectral branches.
The full-rank condition means that projecting the initial graph onto the
right-half-plane branch loses no direction:
\(\operatorname{ran}R_0\cap\operatorname{ran}\Pi_-=\{0\}\).
Its smallest singular value is the initialization nondegeneracy factor,
and \(\kappa(\Pi_+R_0)=\|\Pi_+R_0\|/\sigma_{\min}(\Pi_+R_0)\) measures
the relative imbalance among the projected directions.
Rank loss makes these branch coordinates degenerate, even when the DRE
solution exists; Example~\ref{ex:dre-initialization-scope-app} illustrates
this distinction.
The output normalization \(\alpha_{P(t)}\) records the cost of direct
recovery from the upper projector block. It includes the supplied
solution bound \(M\), as specified in Appendix~\ref{app:dre-complexity}.
For the rectangular rule of
Proposition~\ref{prop:dre-weighted-quadrature-app}, with
\(M,\alpha_{R_0}=O(1)\) and a calibrated initial singular-value estimate,
Corollary~\ref{cor:dre-construction-normalization-app} gives
\begin{equation}
\alpha_{P(t)}=O\!\left(
\frac{\mathcal R_{\rm DRE}^{\,2}}{\sigma_{\min}(\Pi_+R_0)}\right).
\label{eq:dre-output-normalization-bound}
\end{equation}
The general query bound includes the actual output scale when \(M\)
varies with the problem.
Appendix~\ref{app:factor-geometry} explains the initialization geometry,
Appendix~\ref{app:factor-spectral} controls the spectral factor, and
Proposition~\ref{prop:factor-dre-initialization-app} gives coefficient
conditions for the initialization.
We next describe the graph construction underlying this bound.

\subsection{Hamiltonian flow and weighted spectral branches}
\label{subsec:dre-weighted-graph}

The linear Hamiltonian flow carries the solution graph \cite[Chap.~4]{BittantiLaubWillems1991RiccatiEquation}:
\begin{equation}
\mathcal W(t)=e^{t\mathcal H_{\rm DRE}}R_0,
\qquad
\operatorname{ran}\mathcal W(t)
=\operatorname{ran}\begin{bmatrix}\mathsf I\\P(t)\end{bmatrix}.
\label{eq:dre-flow-graph}
\end{equation}
Thus the initial column identifies the subspace to propagate.
Lemma~\ref{lem:dre-graph-flow-app} proves this relation on the promised
existence interval. This is the Embed step for the DRE; Hamiltonian
linearization also underlies earlier quantum approaches to
finite-horizon Riccati problems \cite{Krovi2024QuantumOptimalControl}.

For QET, Table~\ref{tab:dre-weighted-blocks} specifies four uses of the
weighted Riesz construction in Section~\ref{subsec:contour-resolvent-lcu}.
The normalized spectral variable requires the time parameter
\(\alpha_Ht\). Each exponential decays in its selected half-plane.
The pseudoinverse of \(\Pi_+R_0\) then combines the two branches while
preserving the initial data.

\begin{table}[htbp]
\centering
\renewcommand{\arraystretch}{1.2}
\begin{tabular}{llll}
\toprule
Target & Branch & Weight \(g(z)\) & Right input\\
\midrule
\(\Pi_+\) & Right & \(1\) & \(\mathsf I\)\\
\(\Pi_+R_0\) & Right & \(1\) & \(R_0\)\\
\(e^{t\mathcal H_{\rm DRE}}\Pi_-R_0\) & Left
& \(e^{\alpha_Htz}\) & \(R_0\)\\
\(e^{-t\mathcal H_{\rm DRE}}\Pi_+\) & Right
& \(e^{-\alpha_Htz}\) & \(\mathsf I\)\\
\bottomrule
\end{tabular}
\caption{Weighted Riesz blocks for the DRE, with matrix
\(\widehat{\mathcal H}_{\rm DRE}\).}
\label{tab:dre-weighted-blocks}
\end{table}

\begin{theorem}[Decaying graph projector]
\label{thm:dre-seeded-projector-main}
For Problem~\ref{prob:quantum-dre}, suppose that \(\Pi_+R_0\)
has full column rank. At the requested time \(t\), set
\begin{equation}
\mathcal E(t)=\Pi_+
+e^{t\mathcal H_{\rm DRE}}\Pi_-R_0
(\Pi_+R_0)^+e^{-t\mathcal H_{\rm DRE}}\Pi_+.
\label{eq:dre-seeded-projector}
\end{equation}
Then
\begin{equation}
\begin{aligned}
\mathcal E(t)^2&=\mathcal E(t),\\
\operatorname{ran}\mathcal E(t)
&=\operatorname{ran}\begin{bmatrix}\mathsf I\\P(t)\end{bmatrix},
&\ker\mathcal E(t)&=\operatorname{ran}\Pi_-.
\end{aligned}
\label{eq:dre-seeded-range}
\end{equation}
\end{theorem}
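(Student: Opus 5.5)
The plan is to abbreviate \(\mathcal H=\mathcal H_{\rm DRE}\), \(E=e^{t\mathcal H}\), \(K=\Pi_+R_0\) and \(K^+=(\Pi_+R_0)^+\). We use three facts about these objects. First, the Riesz projectors \(\Pi_\pm\) commute with \(\mathcal H\) and hence with \(E\) and \(E^{-1}=e^{-t\mathcal H}\); they also satisfy \(\Pi_+\Pi_-=\Pi_-\Pi_+=0\) and \(\Pi_++\Pi_-=\mathsf I\). Second, since \(K\) has full column rank, \(K^+K=\mathsf I_n\) and \(KK^+\) is the orthogonal projector onto \(\operatorname{ran}K\subseteq\operatorname{ran}\Pi_+\). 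Third, by \eqref{eq:dre-flow-graph}, \(\operatorname{ran}ER_0=\operatorname{ran}[\mathsf I;P(t)]\), which has dimension \(n\). With this notation \(\mathcal E=\Pi_++N\), where \(N=E\Pi_-R_0K^+E^{-1}\Pi_+\).

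For idempotence we first record that \(\Pi_+N=E\Pi_+\Pi_-R_0K^+E^{-1}\Pi_+=0\) and \(N\Pi_+=N\). Two further consequences follow: \(N\Pi_-=0\), and \(N^2=N\Pi_+N=N(\Pi_+N)=0\). Expanding,
\[
\mathcal E^2=\Pi_++\Pi_+N+N\Pi_++N^2=\Pi_++N=\mathcal E.
\]

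For the kernel, \(\mathcal E\Pi_-=\Pi_+\Pi_-+N\Pi_-=0\), so \(\operatorname{ran}\Pi_-\subseteq\ker\mathcal E\). Conversely, suppose \(\mathcal Ex=0\). Applying \(\Pi_+\) gives \(\Pi_+x+\Pi_+Nx=\Pi_+x=0\), so \(x\in\operatorname{ran}\Pi_-\). Hence \(\ker\mathcal E=\operatorname{ran}\Pi_-\).

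For the range, let \(y\in\operatorname{ran}\Pi_+\). Since \(E^{-1}\) commutes with \(\Pi_+\), the vector \(w=E^{-1}\Pi_+y\) lies in \(\operatorname{ran}\Pi_+\). The key claim is that
\[
\mathcal Ey=E\bigl(\Pi_+w+\Pi_-R_0K^+w\bigr)
\]
lies in \(\operatorname{ran}ER_0\). To see this, decompose \(w=KK^+w+(\mathsf I-KK^+)w\). The first piece gives
\[
E\bigl(\Pi_+R_0K^+w+\Pi_-R_0K^+w\bigr)=ER_0(K^+w)\in\operatorname{ran}ER_0,
\]
using \(\Pi_+KK^+w=KK^+w=\Pi_+R_0K^+w\). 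The second piece \((\mathsf I-KK^+)w\) is an issue unless it vanishes, that is, unless \(\operatorname{ran}K=\operatorname{ran}\Pi_+\).

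This is the main obstacle: we need \(\operatorname{rank}\Pi_+=n\). It follows from the Hamiltonian structure. Because \(\mathcal H\) has no imaginary-axis eigenvalues, its spectrum is symmetric under \(\lambda\mapsto-\bar\lambda\) with matching multiplicities, so each half-plane carries exactly \(n\) eigenvalues. I would invoke this symmetry as stated via \eqref{eq:care-triangularization-app}, or else verify it directly from \(J\mathcal HJ^{-1}=-\mathcal H^*\). Then \(\dim\operatorname{ran}\Pi_+=n\), and full column rank of the \(2n\times n\) matrix \(K\) gives \(\operatorname{ran}K=\operatorname{ran}\Pi_+\). Consequently \(KK^+\Pi_+=\Pi_+\), the second piece vanishes, and \(\operatorname{ran}\mathcal E\subseteq\operatorname{ran}ER_0\).

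Finally, \(\operatorname{rank}\mathcal E=2n-\dim\ker\mathcal E=2n-n=n\), which equals \(\dim\operatorname{ran}ER_0\). Combined with the inclusion, this gives \(\operatorname{ran}\mathcal E(t)=\operatorname{ran}[\mathsf I;P(t)]\).
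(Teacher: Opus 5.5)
Your proof is correct and follows essentially the paper's route: the paper likewise gets \(\operatorname{rank}\Pi_\pm=n\) from \(\mathcal H_{\rm DRE}^*J+J\mathcal H_{\rm DRE}=0\), uses \(\operatorname{ran}(\Pi_+R_0)=\operatorname{ran}\Pi_+\) to obtain the factorization \(\mathcal E(t)=e^{t\mathcal H_{\rm DRE}}R_0\,L(t)\) with \(L(t)=(\Pi_+R_0)^+e^{-t\mathcal H_{\rm DRE}}\Pi_+\) and \(L(t)e^{t\mathcal H_{\rm DRE}}R_0=\mathsf I\), which gives idempotence and equality of ranges in one step, and reads off the kernel from \(\mathcal E\Pi_-=0\) and \(\Pi_+\mathcal E=\Pi_+\). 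You instead get idempotence from \(N^2=0\), which does not need the rank fact, and close the range by a dimension count; for that rank fact, use the \(J\)-symmetry option rather than \eqref{eq:care-triangularization-app}, since the triangularization presupposes a stabilizing CARE solution, which general Hermitian DRE data need not have.
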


This construction is the key step in our quantum algorithm for the DRE
and one of the core contributions of this work. It preserves the range
of the Hamiltonian graph flow \(e^{t\mathcal H_{\rm DRE}}R_0\), while
expressing all time dependence through
\(e^{t\mathcal H_{\rm DRE}}\Pi_-\) and
\(e^{-t\mathcal H_{\rm DRE}}\Pi_+\).
Their scalar weights decay on the selected half-planes, so the construction
avoids directly encoding the exponentially amplified components of the
forward flow. Consequently,
\(\mathcal E(t)\) represents the finite-time solution determined by
\(P_0\), with the fixed kernel \(\operatorname{ran}\Pi_-\).

The projector property then connects the weighted Riesz blocks to the
requested Riccati operator. Since \(\mathcal E(t)\) fixes every vector
in the solution graph, its upper block row has full row rank, and its
lower block row is \(P(t)\) times the upper one. These identities enable
the direct recovery in the next subsection and control the recovery
pseudoinverse through the solution scale.
Appendix~\ref{app:dre-graph} proves the theorem, and
Appendix~\ref{app:dre-quadrature} gives the finite-contour construction
of all four weighted blocks.

\subsection{Direct recovery from the upper projector block}
\label{subsec:dre-recovery}

Let \(E_1=[\mathsf I;0]\) and \(E_2=[0;\mathsf I]\) select the state
and lower coordinates. The two block rows \(E_j^*\mathcal E(t)\)
inherit the encoding of the decaying graph projector. Its upper block has full
row rank and provides the coordinates needed to recover the solution.

\begin{lemma}[Direct projector-block recovery]
\label{lem:dre-projector-block-recovery}
Under the assumptions of Theorem~\ref{thm:dre-seeded-projector-main},
\begin{equation}
\begin{aligned}
(E_1^*\mathcal E(t))\begin{bmatrix}\mathsf I\\P(t)\end{bmatrix}
&=\mathsf I,\\
E_2^*\mathcal E(t)&=P(t)E_1^*\mathcal E(t),\\
P(t)&=(E_2^*\mathcal E(t))(E_1^*\mathcal E(t))^+.
\end{aligned}
\label{eq:dre-projector-direct-recovery}
\end{equation}
\end{lemma}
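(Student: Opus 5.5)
The plan is to derive all three identities from Theorem~\ref{thm:dre-seeded-projector-main}, using only that \(\mathcal E(t)\) is idempotent with range equal to \(\operatorname{ran}[\mathsf I;P(t)]\). Write \(\mathcal E=\mathcal E(t)\) and \(Y=[\mathsf I;P(t)]\).

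\textbf{First identity.} The columns of \(Y\) lie in \(\operatorname{ran}\mathcal E\). An idempotent acts as the identity on its range, so \(\mathcal E Y=Y\). Applying \(E_1^*\) to both sides gives \((E_1^*\mathcal E)Y=E_1^*Y=\mathsf I\).

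\textbf{Second identity.} Every column of \(\mathcal E\) lies in \(\operatorname{ran}\mathcal E=\operatorname{ran}Y\). Hence \(\mathcal E=YC\) for some coefficient matrix \(C\). Reading off the two block rows gives \(E_1^*\mathcal E=C\) and \(E_2^*\mathcal E=P(t)C\). Therefore \(E_2^*\mathcal E=P(t)E_1^*\mathcal E\).

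\textbf{Third identity.} The first identity exhibits \(Y\) as a right inverse of \(E_1^*\mathcal E\). So the \(n\times 2n\) matrix \(E_1^*\mathcal E\) has full row rank \(n\). For any full-row-rank matrix \(K\), the Moore--Penrose pseudoinverse is a right inverse: \(K^+=K^*(KK^*)^{-1}\), which gives \(KK^+=\mathsf I\). Right-multiplying the second identity by \((E_1^*\mathcal E)^+\) then yields
\[
(E_2^*\mathcal E)(E_1^*\mathcal E)^+=P(t)(E_1^*\mathcal E)(E_1^*\mathcal E)^+=P(t).
\]

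None of these steps is a real obstacle; the lemma is purely linear-algebraic once Theorem~\ref{thm:dre-seeded-projector-main} is available. The one point that needs care is the full-row-rank claim, since the first identity is exactly what supplies it. It is also worth checking that the argument never uses orthogonality of \(\mathcal E\) or the particular kernel \(\operatorname{ran}\Pi_-\). Only idempotence and the range are used, so the same argument also gives~\eqref{eq:common-graph-recovery} for the algebraic projectors.

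Finally, although the lemma does not require it, we would record the bound \(\|(E_1^*\mathcal E)^+\|\le\|Y\|\le\sqrt{1+\|P(t)\|^2}\le\sqrt{1+M^2}\). It holds because \(K^+\) is the minimum-norm right inverse of \(K\), so \(\|K^+\|\le\|Y\|\) for any right inverse \(Y\). This is the estimate later used to control the recovery normalization through \(M\).
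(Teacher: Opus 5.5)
Your proposal is correct and follows the paper's argument essentially step for step. The paper likewise takes idempotence and the graph range from Theorem~\ref{thm:dre-seeded-projector-main}, then, in Lemma~\ref{lem:dre-projector-block-recovery-app}, derives \(\mathcal E=VE_1^*\mathcal E\) and \(\mathcal EV=V\), reads off full row rank from the upper block, and right-multiplies the lower-block identity by the pseudoinverse. Your norm bound via the minimum-norm right inverse is the same as the paper's factorization \((E_1^*\mathcal E)^+=(E_1^*\mathcal E)^+(E_1^*\mathcal E)V\), in which the first two factors form an orthogonal projector.
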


Appendix~\ref{app:dre-recovery} proves these identities and bounds
the upper-block pseudoinverse in terms of the solution norm.
Both block selections have normalization \(\alpha_{\mathcal E}\).
The upper-block pseudoinverse is encoded at normalization
\(\alpha_{P(t)}/\alpha_{\mathcal E}\), chosen from the supplied
solution bound as specified in \eqref{eq:dre-supplied-output-scale-app}.
Their product therefore returns the solution at the declared output scale.

Supply \(0<\gamma_+\le\sigma_{\min}(\Pi_+R_0)\), as in
\eqref{eq:factor-dre-initialization-app} when
Proposition~\ref{prop:factor-dre-initialization-app} applies, and fix actual
encoding scales before choosing \(\eta\) and pseudoinverse parameters
as in Appendix~\ref{app:dre-complexity}; Appendix~\ref{app:dre-quadrature}
supplies the contour data.
The input \(U_{\mathcal H_{\rm DRE}}\) also serves as
\(U_{\widehat{\mathcal H}_{\rm DRE}}\) at scale one.

\begin{algorithm}[H]
\caption{DRE: construct a block-encoding of \(P(t)\)}
\label{alg:dre-direct-projector-construction}
\small
\begin{algorithmic}[1]
\Require Exact BEs of \(\mathcal H_{\rm DRE}\) and
\(R_0=[\mathsf I_n;P_0]\), with their scales;
\(t\in[0,T]\), \(0<\varepsilon\le1\).
\Ensure A BE of \(P(t)\) with decoded error \(\le\varepsilon\)
and scale \(\alpha_{P(t)}\) defined in Appendix~\ref{app:dre-complexity}.

\State Encode \(\Pi_+R_0\) using Algorithm~\ref{alg:weighted-riesz-be}:
\[
U_{\Pi_+R_0}\gets\operatorname{RieszBE}
(U_{\widehat{\mathcal H}_{\rm DRE}},U_{\mathsf I},\Gamma_+,1,U_{R_0};\eta).
\]
\State Apply the pseudoinverse primitive to obtain \(U_{(\Pi_+R_0)^+}\).
\State Construct the other three weighted Riesz BEs:
\[
\begin{aligned}
U_{\Pi_+}&\gets\operatorname{RieszBE}
(U_{\widehat{\mathcal H}_{\rm DRE}},U_{\mathsf I},\Gamma_+,1,U_{\mathsf I};\eta),\\
U_{e^{t\mathcal H_{\rm DRE}}\Pi_-R_0}
&\gets\operatorname{RieszBE}
(U_{\widehat{\mathcal H}_{\rm DRE}},U_{\mathsf I},\Gamma_-,e^{\alpha_Htz},U_{R_0};\eta),\\
U_{e^{-t\mathcal H_{\rm DRE}}\Pi_+}
&\gets\operatorname{RieszBE}
(U_{\widehat{\mathcal H}_{\rm DRE}},U_{\mathsf I},\Gamma_+,e^{-\alpha_Htz},U_{\mathsf I};\eta).
\end{aligned}
\]
\State Assemble a BE of the DGP:
\[
\mathcal E(t)=\Pi_+
+e^{t\mathcal H_{\rm DRE}}\Pi_-R_0
(\Pi_+R_0)^+e^{-t\mathcal H_{\rm DRE}}\Pi_+.
\]
\State Use both complete rows of the same \(U_{\mathcal E}\)
to encode
\[
P(t)=(E_2^*\mathcal E(t))(E_1^*\mathcal E(t))^+.
\]
\Statex \Return \((U_{P(t)},\alpha_{P(t)},a_{P(t)})\).
\end{algorithmic}
\end{algorithm}

The Hamiltonian lift represents the nonlinear evolution by the motion
of a linear graph subspace. The weighted Riesz blocks separate its two
spectral branches and encode time dependence through decaying weights.
The first pseudoinverse recovers the initial graph coordinates from
their projection onto the right-half-plane branch. Combining the two
branches therefore selects exactly the evolving graph determined by
\(P_0\), with the left-half-plane subspace as its fixed complement.

The second pseudoinverse chooses an input whose upper projector block
equals a prescribed state \(x\). Applying the lower block to that input
then gives \(P(t)x\), because every projected vector lies in the solution
graph. Thus the recovery uses the defining relation between the two
graph coordinates. The resulting output normalization enters both the
construction bound and any subsequent feedback or measurement cost.
The same argument applies to the discrete evolution considered next.

\subsection{Finite Riccati recursions}
\label{sec:rr}
\label{subsec:rr-complexity}

The same construction applies to the finite recursion in
Definition~\ref{def:problem-recursion},
\begin{equation}
P_{j+1}=Q+A^*P_j(\mathsf I+GP_j)^{-1}A,
\qquad j=0,\ldots,k-1.
\label{eq:rr-section-equation}
\end{equation}
For \(G,Q,P_0\succeq0\), every iterate is well defined and positive
semidefinite. Assume that \(A\) is invertible and the forward lift
\begin{equation}
\mathcal S_F=\begin{bmatrix}
A^{-1}&A^{-1}G\\
QA^{-1}&A^*+QA^{-1}G
\end{bmatrix},\qquad
R_0=\begin{bmatrix}\mathsf I_n\\P_0\end{bmatrix}
\label{eq:rr-lift-seed}
\end{equation}
has no unit-circle eigenvalues. In this subsection and
Appendix~\ref{app:rr}, \(\Pi_<,\Pi_>\) are its interior and exterior
Riesz projectors. This local convention concerns the unscaled lift,
rather than the DARE pencil projector of
Section~\ref{subsec:dare-construction}.

\begin{problem}[Quantum RR]
\label{prob:quantum-rr}
For the data in Definition~\ref{def:problem-recursion} and a requested
integer \(k\ge0\), suppose that exact block-encodings of
\(\mathcal S_F\) and \(R_0\) in \eqref{eq:rr-lift-seed}, their
adjoints and controlled versions, are supplied with normalizations
\(\alpha_{\mathcal S_F}\) and \(\alpha_{R_0}\).
For \(0<\varepsilon\le1\), construct a block-encoding with decoded
matrix \(\widetilde P_k\) satisfying
\(\|\widetilde P_k-P_k\|\le\varepsilon\).
\end{problem}

\begin{theorem}[RR construction by direct projector-block recovery]
\label{thm:rr-construction-main}
For Problem~\ref{prob:quantum-rr}, suppose that \(\Pi_>R_0\) has full
column rank, and supply a bound \(\|P_j\|\le M\) for
\(0\le j\le k\). Choose an interior contour \(\Gamma_<\) in
\(|z|<1\) and an exterior contour system \(\Gamma_>\) in
\(|z|>1\), each enclosing exactly its spectral branch, with
\(\Gamma_>\) having winding number zero about the closed unit disk.
Define
\begin{equation}
\mathcal R_{\rm RR}
=\max_{\chi\in\{<,>\}}\sup_{z\in\Gamma_\chi}
(|z|+\alpha_{\mathcal S_F})
\|(z\mathsf I-\mathcal S_F)^{-1}\|.
\label{eq:rr-generalized-factor}
\end{equation}
Under the implementation and normalization assumptions of
Appendix~\ref{app:rr-complexity}, the algorithm returns an
\((\alpha_{P_k},a_{P_k},\varepsilon)\) block-encoding of \(P_k\), using
\begin{equation}
Q_{\rm RR}=\widetilde O\!\left(
\mathcal R_{\rm RR}\,\kappa(\Pi_>R_0)\,\alpha_{P_k}\right)
\label{eq:rr-query-main}
\end{equation}
queries to the supplied matrix oracles. Here \(\alpha_{P_k}\) is
the direct-product output normalization specified in that appendix;
\(\widetilde O\) suppresses logarithmic dependence on precision
and the supplied scales.
\end{theorem}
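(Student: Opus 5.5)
We follow the DRE template of Theorems~\ref{thm:dre-seeded-projector-main} and~\ref{thm:dre-construction-main}, with the continuous flow replaced by powers of the forward lift. The first step is the Embed identity. For \(G,Q,P_0\succeq0\) and invertible \(A\), a direct block computation shows that
\[
\mathcal S_F\begin{bmatrix}\mathsf I\\P_j\end{bmatrix}
=\begin{bmatrix}\mathsf I\\P_{j+1}\end{bmatrix}A^{-1}(\mathsf I+GP_j).
\]
Indeed, the upper block is \(A^{-1}(\mathsf I+GP_j)\). The lower block is \(QA^{-1}(\mathsf I+GP_j)+A^*P_j\), which equals \(P_{j+1}A^{-1}(\mathsf I+GP_j)\) by \eqref{eq:rr-section-equation}. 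The right factor is invertible, so by induction \(\operatorname{ran}\mathcal S_F^kR_0=\operatorname{ran}[\mathsf I;P_k]\).

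Next we build the discrete decaying graph projector
\[
\mathcal E_k=\Pi_>+\mathcal S_F^{k}\Pi_<R_0(\Pi_>R_0)^+\mathcal S_F^{-k}\Pi_> .
\]
The projector identities follow exactly as in Theorem~\ref{thm:dre-seeded-projector-main}. We use \(\Pi_>\mathcal S_F^{k}\Pi_<=0\), together with \((\Pi_>R_0)^+\Pi_>R_0=\mathsf I\), which holds because \(\Pi_>R_0\) has full column rank. These give \(\mathcal E_k^2=\mathcal E_k\) and \(\ker\mathcal E_k=\operatorname{ran}\Pi_<\). For the range, we have \(\mathcal E_k\mathcal S_F^kR_0=\mathcal S_F^kR_0\). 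The range of \(\mathcal E_k\) has dimension \(\operatorname{rank}\Pi_>\), which equals \(n\) by the full-rank assumption and the Hamiltonian-type eigenvalue pairing (symplecticity) of \(\mathcal S_F\). Hence \(\operatorname{ran}\mathcal E_k=\operatorname{ran}[\mathsf I;P_k]\). Lemma~\ref{lem:dre-projector-block-recovery}, whose proof is purely algebraic, then gives \(P_k=(E_2^*\mathcal E_k)(E_1^*\mathcal E_k)^+\). Its appendix argument also bounds \(\|(E_1^*\mathcal E_k)^+\|\) by \(\sqrt{1+M^2}\).

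For the algorithm, each of the four blocks is encoded with Proposition~\ref{thm:local-contour-sum}, using the weights from Table~\ref{tab:weighted-riesz-routes}:
\begin{itemize}
\item weight \(1\) on \(\Gamma_>\) with right inputs \(\mathsf I\) and \(R_0\), giving \(\Pi_>\) and \(\Pi_>R_0\);
\item weight \(z^k\) on \(\Gamma_<\), giving \(\mathcal S_F^k\Pi_<R_0\);
\item weight \(z^{-k}\) on \(\Gamma_>\), giving \(\mathcal S_F^{-k}\Pi_>\).
\end{itemize}
The weight \(z^{-k}\) is holomorphic inside \(\Gamma_>\) because that contour system has winding number zero about the closed unit disk, so it avoids \(0\). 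Each such block costs \(O(\mathcal R_{\rm RR}\log(\cdot))\) queries, independent of \(k\) apart from logarithms.

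The step I expect to be the main obstacle is keeping the weights bounded so that \(k\) enters only logarithmically. We need \(|z|^{k}\le1\) on \(\Gamma_<\) and \(|z|^{-k}\le1\) on \(\Gamma_>\). This holds because the contours stay strictly inside and outside the unit circle. The quadrature, however, must resolve oscillation of order \(k\). I would first try trapezoidal rules on circles with analytic-continuation error from Lemma~\ref{lem:analytic-quadrature-app}. With \(m=O(k+\log(1/\varepsilon))\) nodes, this affects only coefficient preparation and the logarithms, not the query depth. The node count and normalization \(\alpha_\Gamma\) would then be bounded by a polynomial in \(\mathcal R_{\rm RR}\).

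Finally we apply Proposition~\ref{prop:local-inverse} twice. The pseudoinverse of \(\Pi_>R_0\) costs \(O(\alpha_{\Pi_>R_0}/\sigma_{\min}(\Pi_>R_0))\), which is \(\kappa(\Pi_>R_0)\) under calibration, times the depth \(\mathcal R_{\rm RR}\) of the block it inverts. The upper-block pseudoinverse contributes a factor \(\alpha_{P_k}/\alpha_{\mathcal E}\), multiplied by the cost of \(U_{\mathcal E}\). Multiplying these nested depths gives \(\widetilde O(\mathcal R_{\rm RR}\kappa(\Pi_>R_0)\alpha_{P_k})\). We then propagate the errors through products and pseudoinverses with Lemma~\ref{lem:inverse-input-perturbation}, allocating \(\eta\) polynomially small; this affects only logarithms.
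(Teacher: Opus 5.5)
Your plan follows the paper's route: graph propagation under $\mathcal S_F$, the decaying graph projector $\mathcal E_k$ assembled from four contour blocks, the initial-column pseudoinverse, and upper-row recovery, with the nested costs multiplied. Your check of $\mathcal E_k$ (idempotence and kernel from $\Pi_>\mathcal S_F^k\Pi_<=0$, range from $\mathcal E_k\mathcal S_F^kR_0=\mathcal S_F^kR_0$ plus the $n/n$ split) is a valid variant of the paper's factorization $\mathcal E_k=\mathcal S_F^kR_0(\Pi_>R_0)^+\mathcal S_F^{-k}\Pi_>$. Two steps of your resource analysis are wrong as written.

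First, the upper-row pseudoinverse does not contribute the factor $\alpha_{P_k}/\alpha_{\mathcal E_k}$. That ratio, $8\sqrt{1+M^2}$, is the \emph{normalization} of the pseudoinverse encoding. Proposition~\ref{prop:local-inverse} charges input scale divided by threshold. Here that is $\alpha_{\mathcal E_k}/(4\alpha_{\mathcal E_k}/\alpha_{P_k})=\alpha_{P_k}/4$, so recovery makes $O(\alpha_{P_k})$ calls to $U_{\mathcal E_k}$, up to a logarithm. Taken literally, your factors multiply to $\widetilde O(\mathcal R_{\rm RR}\kappa(\Pi_>R_0)\sqrt{1+M^2})$, which undercounts by $\alpha_{\mathcal E_k}$. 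The DGP normalization enters $Q_{\rm RR}$ exactly through this missing factor, and only with it does the product equal the stated bound.

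Second, Lemma~\ref{lem:inverse-input-perturbation} covers square invertible matrices only, whereas $\Pi_>R_0$ is $2n\times n$ and $E_1^*\mathcal E_k$ is $n\times 2n$. For the first pseudoinverse you need two facts. An approximation within $\eta\le\sigma_{\min}(\Pi_>R_0)/2$ keeps full column rank. You also need a rectangular bound such as $\|(\widetilde{\Pi_>R_0})^+-(\Pi_>R_0)^+\|\le6\eta/\sigma_{\min}(\Pi_>R_0)^2$, as used via Proposition~\ref{prop:rr-general-resources-app}.

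For recovery, the decoded $\widetilde{\mathcal E}_k$ is not idempotent, so the graph-recovery lemma does not apply to it. Proposition~\ref{prop:rr-direct-projector-stability} shows that projector error $\delta\le\varepsilon/(8(1+M^2))$ keeps the perturbed upper row of full row rank with $\|(E_1^*\widetilde{\mathcal E}_k)^+\|\le2\sqrt{1+M^2}$. It then uses
\[
(E_2^*\widetilde{\mathcal E}_k)(E_1^*\widetilde{\mathcal E}_k)^+-P_k=\begin{bmatrix}-P_k&\mathsf I\end{bmatrix}(\widetilde{\mathcal E}_k-\mathcal E_k)(E_1^*\widetilde{\mathcal E}_k)^+
\]
to bound the error by $2(1+M^2)\delta$. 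With these, the working accuracy is polynomial in $\varepsilon$, $(1+M^2)^{-1}$, $\sigma_{\min}(\Pi_>R_0)$ and the block scales, so your conclusion that precision enters only through logarithms is correct.

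Finally, $k$ need not enter the node count. On strips $|\operatorname{Im}t|\le a$ with $a=\Theta(\eta_{\mathbb T})$, the interior strip stays inside the unit disk and the annular strips stay outside it. Hence $|z^{\pm k}|\le1$ on the whole strip. Lemma~\ref{lem:analytic-quadrature-app} then gives $O(\eta_{\mathbb T}^{-1}\log(e+\alpha_R\mathcal R_{\rm RR}/\varepsilon_T))$ nodes uniformly in $k$, with block normalizations $O(\alpha_R\mathcal R_{\rm RR})$. This point is harmless for the query count under coherent access.
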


The three factors account for the shifted resolvents, the
initial-column pseudoinverse, and the final projector-block recovery.
Full column rank means that projection onto the exterior branch
loses no initial graph direction:
\(\operatorname{ran}R_0\cap\operatorname{ran}\Pi_<=\{0\}\).
The supplied solution bound enters through \(\alpha_{P_k}\);
the query formula does not require \(M=O(1)\).
For the circular rule of Appendix~\ref{app:rr-quadrature}, if
\(M=O(1)\) and \(\alpha_{R_0}=O(1)\),
Corollary~\ref{cor:rr-construction-normalization-app} gives
\begin{equation}
\alpha_{P_k}=O\!\left(
\frac{\mathcal R_{\rm RR}^{\,2}}{\sigma_{\min}(\Pi_>R_0)}\right).
\label{eq:rr-output-normalization-bound}
\end{equation}
These uniform bounds require corresponding supplied information;
finiteness of the iterates alone does not provide them.

The lift propagates the solution graph,
\begin{equation}
\operatorname{ran}(\mathcal S_F^kR_0)
=\operatorname{ran}\begin{bmatrix}\mathsf I_n\\P_k\end{bmatrix}.
\label{eq:rr-propagated-graph}
\end{equation}
This linear-fractional representation
\cite{Poloni2020RiccatiIterations,Mehrmann1991AutonomousLQ}
is verified in Lemma~\ref{lem:rr-graph-propagation-app}.
The weighted Riesz construction produces
\(\Pi_>\), \(\Pi_>R_0\), \(\mathcal S_F^k\Pi_<R_0\), and
\(\mathcal S_F^{-k}\Pi_>\), using the weights \(1,1,z^k,z^{-k}\),
respectively. The powers have modulus at most one on their selected
contours. As in the DRE, these blocks give the decaying graph projector
\begin{equation}
\mathcal E_k=\Pi_>
+\mathcal S_F^k\Pi_<R_0(\Pi_>R_0)^+
\mathcal S_F^{-k}\Pi_>.
\label{eq:rr-seeded-projector}
\end{equation}
Theorem~\ref{thm:rr-seeded-projector-main} proves that it projects
onto the graph of \(P_k\), with kernel \(\operatorname{ran}\Pi_<\).
Using the same coordinate selectors \(E_1,E_2\) as above, recover
\begin{equation}
P_k=(E_2^*\mathcal E_k)(E_1^*\mathcal E_k)^+.
\label{eq:rr-projector-direct-recovery}
\end{equation}
The upper block has full row rank, and its pseudoinverse is
controlled by the supplied solution bound; see
Corollary~\ref{cor:rr-projector-block-recovery-app}.

The full procedure is given in Algorithm~\ref{alg:rr-direct-projector-construction} in Appendix~\ref{app:rr-complexity}.

The lift replaces the nonlinear updates by graph propagation.
The two weighted branches and the initial-column pseudoinverse
identify the graph selected by \(P_0\); the last pseudoinverse
selects its vector with a prescribed state coordinate.
The discrete setting changes the spectral boundary and replaces
exponential weights by powers, while the projector-block recovery
is the same as for the DRE. Appendix~\ref{app:rr} gives the
propagation proof, complete quadrature, stability, and query bounds.

\subsection{Algebraic Riccati equations}
\label{sec:algebraic-riccati}

The algebraic problems use the limiting graph in the continuous and
discrete constructions above. Its projector is a unit-weight Riesz
operator, so it can be constructed directly and passed to the same
projector-block recovery. The resulting query bounds involve the
generalized singularity factor and the actual output normalization.
The limiting relations below specify the time direction and the
connection between the discrete lift and the pencil.

\subsubsection{CARE: the limiting graph projector}
\label{subsec:care-construction}

For the CARE in Definition~\ref{def:problem-care}, use the Hamiltonian
\begin{equation}
\mathcal H_{\rm CARE}=\begin{bmatrix}A&-G\\-Q&-A^*\end{bmatrix},
\qquad
\mathcal H_{\rm CARE}\begin{bmatrix}\mathsf I\\X\end{bmatrix}
=\begin{bmatrix}\mathsf I\\X\end{bmatrix}(A-GX).
\label{eq:care-graph-invariance}
\end{equation}
Its left-half-plane Riesz projector is denoted by \(\Pi_-\).
The stabilizing LQR solution makes its range the graph of \(X\)
\cite{Laub1979Schur,LancasterRodman1995Riccati}.

\begin{problem}[Quantum CARE]
\label{prob:quantum-care}
For the LQR data in Definition~\ref{def:problem-care}, let
\(X\succeq0\) be the stabilizing solution, with \(\|X\|<\infty\).
Given exact block-encodings of \(\mathcal H_{\rm CARE}\) in
\eqref{eq:care-graph-invariance}, its adjoint and controlled versions,
with normalization \(\alpha_H\), and a target
\(0<\varepsilon\le1\), construct a block-encoding with decoded
matrix \(\widetilde X\) satisfying \(\|\widetilde X-X\|\le\varepsilon\).
\end{problem}

\begin{theorem}[CARE construction by direct projector-block recovery]
\label{thm:care-construction-main}
For Problem~\ref{prob:quantum-care}, supply an upper bound on \(\|X\|\)
and a positively oriented contour \(\Gamma_-\) enclosing exactly the
stable spectrum of \(\mathcal H_{\rm CARE}\). Define
\begin{equation}
\mathcal R_{\rm CARE}
=\sup_{z\in\Gamma_-}(|z|+\alpha_H)
\|(z\mathsf I-\mathcal H_{\rm CARE})^{-1}\|.
\label{eq:care-construction-parameters}
\end{equation}
Under the implementation and normalization assumptions of
Appendix~\ref{app:algebraic-riccati}, the algorithm returns an
\((\alpha_X,a_X,\varepsilon)\) block-encoding of \(X\), using
\begin{equation}
Q_{\rm CARE}=\widetilde O\!\left(\mathcal R_{\rm CARE}\alpha_X\right)
\label{eq:care-query-bound}
\end{equation}
queries to the supplied Hamiltonian encoding and its adjoint and
controlled versions. Here \(\alpha_X\) is the actual output
normalization in \eqref{eq:algebraic-projector-output-app};
\(\widetilde O\) suppresses logarithmic dependence on precision and
the supplied scales.
\end{theorem}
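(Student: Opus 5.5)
The proof follows the three-stage pipeline Embed--QET--Recover, specialized to the unit-weight case. Throughout, \(R=\mathsf I\) and \(g=1\).

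\textbf{Step 1 (Embed).} From \eqref{eq:care-graph-invariance} the graph \(\operatorname{ran}[\mathsf I;X]\) is \(\mathcal H_{\rm CARE}\)-invariant. The restriction to it is \(A-GX\), which is Hurwitz. The standard LQR triangularization then shows that the complementary invariant subspace carries the spectrum of \(-(A-GX)^*\), which lies in the open right half-plane. Hence \(\mathcal H_{\rm CARE}\) has no imaginary-axis eigenvalues, and the Riesz projector \(\Pi_-\) satisfies \(\operatorname{ran}\Pi_-=\operatorname{graph}X\). Because \(\Pi_-\) is a projector onto this graph, we get \(E_2^*\Pi_-=XE_1^*\Pi_-\) and \(E_1^*\Pi_-[\mathsf I;X]=\mathsf I\). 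The upper block row therefore has full row rank, and \eqref{eq:common-graph-recovery} gives \(X=(E_2^*\Pi_-)(E_1^*\Pi_-)^+\).

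For the recovery conditioning, I will bound \(\|(E_1^*\Pi_-)^+\|\). The right inverse \([\mathsf I;X]\) gives the minimal-norm bound \(\|(E_1^*\Pi_-)^+\|\le\|[\mathsf I;X]\|\le 1+\|X\|\). This is where the supplied bound on \(\|X\|\) enters.

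\textbf{Step 2 (QET).} Encode \(\Pi_-\) with Algorithm~\ref{alg:weighted-riesz-be}, taking \(L=\mathsf I\), \(R=\mathsf I\), \(g=1\), and the contour \(\Gamma_-\). The quadrature error is controlled by Lemma~\ref{lem:analytic-quadrature-app}; its bounded analytic continuation comes from the resolvent bound along a slightly thickened contour. This costs only logarithmic factors in the number of nodes \(m\). Proposition~\ref{thm:local-contour-sum} then gives an encoding of \(\Pi_-\) with normalization \(\alpha_\Pi=\sum_j|\omega_j|\beta_j\) and query cost \(O(\mathcal R_{\rm CARE}\log(\cdot))\). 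The log absorbs the precision and \(\alpha_\Pi\), since the node depth depends only on \(\mathcal R_\Gamma=\mathcal R_{\rm CARE}\).

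\textbf{Step 3 (Recover).} Select the two block rows of the same unitary, each at normalization \(\alpha_\Pi\). Apply Proposition~\ref{prop:local-inverse} to the upper row with threshold \(\gamma=\Theta(1/(1+\|X\|))\), so that \(\alpha_T/\gamma=\alpha_\Pi(1+\|X\|)\). This produces an encoding of \((E_1^*\Pi_-)^+\) at scale \(O(1+\|X\|)\), and multiplying by the lower row yields \(X\) at scale \(\alpha_X=\Theta(\alpha_\Pi(1+\|X\|))\), consistent with \eqref{eq:algebraic-projector-output-app}. The inverse costs \(O(\alpha_\Pi(1+\|X\|))=O(\alpha_X)\) calls to the projector circuit, each costing \(\widetilde O(\mathcal R_{\rm CARE})\). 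The total is therefore \(\widetilde O(\mathcal R_{\rm CARE}\alpha_X)\).

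\textbf{Main obstacle: error propagation through the pseudoinverse.} The decoded upper row \(\widetilde{E_1^*\Pi_-}\) is only approximate, so its pseudoinverse must stay stable. I would use the fact that the upper row has full row rank and smallest singular value at least \(1/(1+\|X\|)\). A perturbation of size \(\eta\le 1/(2(1+\|X\|))\) then preserves rank and keeps the singular values in the threshold window. Lemma~\ref{lem:inverse-input-perturbation} together with the product rule then bounds the final error by \(O\bigl(\eta(1+\|X\|)^2\alpha_\Pi+\varepsilon_{\rm inv}\alpha_\Pi\bigr)\). Choosing \(\eta\) and \(\varepsilon_{\rm inv}\) polynomially small in \(\varepsilon/(\alpha_X(1+\|X\|))\) makes this at most \(\varepsilon\), at only logarithmic extra cost. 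The delicate point is justifying that \(\gamma\) may be supplied from the \(\|X\|\) bound alone, with no estimate of \(\|\Pi_-\|\). The right-inverse argument handles this because it never uses \(\|\Pi_-\|\).
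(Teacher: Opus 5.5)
Your proposal is correct and follows the paper's proof. It identifies $\operatorname{ran}\Pi_-$ with $\operatorname{graph}X$ via the triangularization and encodes the full projector by Proposition~\ref{thm:local-contour-sum} with $g=1$, $R=\mathsf I$. It then recovers $X$ through the upper-row pseudoinverse at threshold $\Theta(1/\sqrt{1+\|X\|^2})$, so the $\widetilde O(\mathcal R_{\rm CARE})$ projector cost is multiplied by $O(\alpha_X)$ projector calls.

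The differences are minor and affect only logarithmic factors. The paper's stability step (Proposition~\ref{prop:dre-direct-projector-stability-app}) uses the exact residual identity $(E_2^*\widetilde\Pi)(E_1^*\widetilde\Pi)^+-X=[-X,\ \mathsf I]\,\Delta\Pi\,(E_1^*\widetilde\Pi)^+$. This avoids any rectangular pseudoinverse-perturbation bound; note that Lemma~\ref{lem:inverse-input-perturbation}, which you cite, covers only square matrices. It also removes $\alpha_\Pi$ from the required projector accuracy. Finally, for a general supplied contour the paper treats the quadrature error as part of the implementation assumptions. It does not derive it from Lemma~\ref{lem:analytic-quadrature-app}, which needs an analytic periodic parametrization and so does not apply to contours with corners.
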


The factor \(\mathcal R_{\rm CARE}\) controls the shifted inverses
used to encode \(\Pi_-\). The scale \(\alpha_X\) includes both
the actual projector normalization and the supplied solution bound.
For the rectangle rule of
Proposition~\ref{prop:care-rectangle-quadrature-app}, the projector
normalization is \(O(\mathcal R_{\rm CARE})\), uniformly in the
quadrature order. The output scale accounts for the cost of recovery;
any reduction of the projector normalization still has its declared
implementation cost.

The limiting projector arises from the LQR time-to-go evolution,
whose lift is \(-\mathcal H_{\rm CARE}\). With
\(R_0=[\mathsf I;P_0]\) and full-column-rank \(\Pi_-R_0\), its
decaying graph projector satisfies
\begin{equation}
\Pi_-+e^{-t\mathcal H_{\rm CARE}}\Pi_+R_0
(\Pi_-R_0)^+e^{t\mathcal H_{\rm CARE}}\Pi_-
\ \longrightarrow\ \Pi_-,\qquad t\to\infty.
\label{eq:care-limiting-projector}
\end{equation}
Thus the stable CARE graph uses the sign-reversed flow; the positive-time
flow in \eqref{eq:dre-seeded-projector} instead tends to \(\Pi_+\).
Lemma~\ref{lem:algebraic-projector-limits-app} proves the limit.
The algorithm constructs its value directly, using
\(\Pi_-=\mathfrak R_-[1;\mathcal H_{\rm CARE},\mathsf I]\),
without evolving an initial graph to a finite cutoff time.

\begin{lemma}[CARE graph projector and recovery]
\label{lem:care-graph-recovery}
Let \(X\) be the stabilizing solution in
Definition~\ref{def:problem-care}.
For \(E_1=[\mathsf I;0]\) and \(E_2=[0;\mathsf I]\), the stable
Riesz projector of \(\mathcal H_{\rm CARE}\) satisfies
\begin{equation}
\Pi_-^2=\Pi_-,
\qquad
\operatorname{ran}\Pi_-=\operatorname{ran}\begin{bmatrix}\mathsf I\\X\end{bmatrix}.
\label{eq:care-graph-projector}
\end{equation}
Its upper block row has full row rank, and
\begin{equation}
X=(E_2^*\Pi_-)(E_1^*\Pi_-)^+.
\label{eq:care-graph-recovery}
\end{equation}
\end{lemma}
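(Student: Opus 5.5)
We first show that the stable Riesz projector has the graph of \(X\) as its range, and then that graph recovery follows from general projector algebra. Idempotence \(\Pi_-^2=\Pi_-\) is the standard property of the unit-weight Riesz operator, \(\Pi_-=\mathfrak R_-[1;\mathcal H_{\rm CARE},\mathsf I]\), proved by the resolvent identity \cite[Chap.~I, Sec.~5.3]{Kato1995Perturbation}. Its range is the spectral subspace of \(\mathcal H_{\rm CARE}\) for the eigenvalues enclosed by \(\Gamma_-\). The contour \(\Gamma_-\) encloses exactly the stable spectrum, and the LQR assumptions exclude imaginary-axis eigenvalues. Hence \(\operatorname{ran}\Pi_-\) is the full left-half-plane invariant subspace.

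To identify this range with the graph, use the invariance relation \eqref{eq:care-graph-invariance}. It shows that \(\operatorname{ran}[\mathsf I;X]\) is an \(n\)-dimensional invariant subspace on which \(\mathcal H_{\rm CARE}\) acts as \(A-GX\). Since \(A-GX\) is Hurwitz, all eigenvalues of this restriction are stable, so the graph lies inside \(\operatorname{ran}\Pi_-\). For equality we need \(\dim\operatorname{ran}\Pi_-=n\). The Hamiltonian structure gives this: \(J\mathcal H_{\rm CARE}\) is Hermitian, so the spectrum is symmetric under \(\lambda\mapsto-\bar\lambda\). With no imaginary-axis eigenvalues, exactly \(n\) eigenvalues, counted with algebraic multiplicity, lie in each open half-plane. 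This dimension count is where the stabilizing and LQR hypotheses actually enter, and it is the main step to check carefully. Alternatively, one can cite the block triangularization \eqref{eq:care-triangularization-app} with respect to \([\mathsf I;X]\) and a complementary basis. Its diagonal blocks are \(A-GX\) and \(-(A-GX)^*\), which gives both the dimension count and the absence of imaginary eigenvalues at once.

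For recovery, every vector in \(\operatorname{ran}\Pi_-\) has the form \([u;Xu]\), so \(E_2^*\Pi_-y=XE_1^*\Pi_-y\) for all \(y\), that is, \(E_2^*\Pi_-=XE_1^*\Pi_-\). Since \(\Pi_-\) fixes its range, \(\Pi_-[\mathsf I;X]=[\mathsf I;X]\). Taking the upper block row gives \((E_1^*\Pi_-)[\mathsf I;X]=\mathsf I\). Thus \(E_1^*\Pi_-\) has a right inverse, so it has full row rank \(n\). Its Moore--Penrose pseudoinverse is therefore a right inverse, \((E_1^*\Pi_-)(E_1^*\Pi_-)^+=\mathsf I\). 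Multiplying \(E_2^*\Pi_-=XE_1^*\Pi_-\) on the right by \((E_1^*\Pi_-)^+\) yields \eqref{eq:care-graph-recovery}. Nothing in this step uses orthogonality of \(\Pi_-\), consistent with the discussion around \eqref{eq:common-graph-recovery}.
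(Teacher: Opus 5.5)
Your proof is correct and follows the paper's argument. Both use the invariance of $\operatorname{ran}[\mathsf I;X]$ under $\mathcal H_{\rm CARE}$ with Hurwitz restriction $A-GX$, a dimension count to get $\operatorname{ran}\Pi_-=\operatorname{ran}[\mathsf I;X]$, and Riesz idempotence, followed by the block-row recovery of Lemma~\ref{lem:dre-projector-block-recovery-app}, which you reprove inline.

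The paper gets the dimension count and the absence of imaginary-axis spectrum from the triangularization \eqref{eq:care-triangularization-app}, which you list as your alternative. Your $\lambda\mapsto-\bar\lambda$ pairing route also works, and it does not even need that absence as an input. The $n$-dimensional stable graph already gives at least $n$ eigenvalues in the open left half-plane, so the pairing gives at least $n$ in the right half-plane, which uses up all $2n$.
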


Appendix~\ref{app:care-construction} proves the lemma and the query
bound. QET encodes the full projector,
and Recover selects its two block rows and multiplies the lower row by
the pseudoinverse of the upper row. The graph-projector bound in
Lemma~\ref{lem:dre-projector-block-recovery-app} controls this inverse
through the output scale. This removes the initialization pseudoinverse
from the algebraic algorithm.

Matrix-sign selection of the stable subspace is classical
\cite{Roberts1980Sign,Byers1987MatrixSign}; quantum CARE constructions
also use sign projectors or contour resolvents
\cite{WangLiu2026SignEmbedding,RodenasRuizZhaoLee2026NonlinearMatrixEquations}.
Here the full-projector recovery connects the algebraic case to the
dynamic construction. Weyl--LCHM can supply the same projector under
the assumptions of Corollary~\ref{cor:care-weyl-recovery-app}, which
includes its actual polynomial normalization.

\subsubsection{DARE: the finite stable graph projector}
\label{subsec:dare-construction}

For Definition~\ref{def:problem-dare}, use the symplectic pencil
\cite{VanDooren1981Generalized,LancasterRodman1995Riccati}
\begin{equation}
M=\begin{bmatrix}A&0\\-Q&\mathsf I\end{bmatrix},
\qquad L=\begin{bmatrix}\mathsf I&G\\0&A^*\end{bmatrix}.
\label{eq:dare-pencil}
\end{equation}
Its finite unit-disk projector is denoted here by \(\Pi_<\).
The pencil construction permits singular \(A\) and \(L\).

\begin{problem}[Quantum DARE]
\label{prob:quantum-dare}
For the LQR data in Definition~\ref{def:problem-dare}, let
\(X\succeq0\) be the stabilizing solution, with \(\|X\|<\infty\).
Given exact block-encodings of \(M,L\) in \eqref{eq:dare-pencil},
their adjoints and controlled versions, with normalizations
\(\alpha_M,\alpha_L\), and a target \(0<\varepsilon\le1\),
construct a block-encoding with decoded matrix \(\widetilde X\)
satisfying \(\|\widetilde X-X\|\le\varepsilon\).
\end{problem}

\begin{theorem}[DARE construction by direct projector-block recovery]
\label{thm:dare-construction-main}
For Problem~\ref{prob:quantum-dare}, supply an upper bound on \(\|X\|\)
and a positively oriented contour \(\Gamma_<\) enclosing exactly the
finite unit-disk spectrum of \(M-zL\). Define
\begin{equation}
\mathcal R_{\rm DARE}
=\sup_{z\in\Gamma_<}(|z|\alpha_L+\alpha_M)\|(zL-M)^{-1}\|.
\label{eq:dare-construction-parameters}
\end{equation}
Under the implementation and normalization assumptions of
Appendix~\ref{app:algebraic-riccati}, the algorithm returns an
\((\alpha_X,a_X,\varepsilon)\) block-encoding of \(X\), using
\begin{equation}
Q_{\rm DARE}=\widetilde O\!\left(\mathcal R_{\rm DARE}\alpha_X\right)
\label{eq:dare-query-bound}
\end{equation}
queries to the supplied pencil encodings and their adjoint and
controlled versions. Here \(\alpha_X\) is the actual output
normalization in \eqref{eq:algebraic-projector-output-app};
\(\widetilde O\) suppresses logarithmic dependence on precision and
the supplied scales.
\end{theorem}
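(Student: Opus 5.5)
The proof splits into an exact algebraic part and a quantum-implementation part, mirroring the CARE argument but with the affine-pencil construction in place of the ordinary one.

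\textbf{Step 1: the projector identities.} For the pencil \(M-zL\) in \eqref{eq:dare-pencil}, I will check that the graph of \(X\) is a right deflating subspace:
\[
M\begin{bmatrix}\mathsf I\\X\end{bmatrix}=L\begin{bmatrix}\mathsf I\\X\end{bmatrix}F,\qquad F=(\mathsf I+GX)^{-1}A.
\]
The first block row reads \(A=(\mathsf I+GX)F\). The second reads \(-Q+X=A^*XF\), which is exactly \eqref{eq:problem-dare-compact}. Since \(\rho(F)<1\), the \(n\) eigenvalues of \(F\) are finite and lie in the open unit disk.

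I also need the complementary count: the remaining generalized eigenvalues lie outside the closed disk or at infinity. I will get this from the symplectic reciprocal pairing \(\lambda\leftrightarrow1/\bar\lambda\), with zero eigenvalues of \(F\) paired with infinite ones. Hence the finite unit-disk spectrum has multiplicity exactly \(n\).

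Proposition~\ref{prop:riesz-calculus-app} then shows that \(\Pi_<=\mathfrak R_<[1;M,L]\) is the right deflating projector with range \(\operatorname{ran}[\mathsf I;X]\). The argument of Lemma~\ref{lem:care-graph-recovery} / Lemma~\ref{lem:dre-projector-block-recovery-app} then applies verbatim. Because \(\Pi_<\) fixes \([\mathsf I;X]\), the upper row \(E_1^*\Pi_<\) has full row rank and \(E_2^*\Pi_<=XE_1^*\Pi_<\). This yields
\[
X=(E_2^*\Pi_<)(E_1^*\Pi_<)^+,\qquad \|(E_1^*\Pi_<)^+\|\lesssim\sqrt{1+\|X\|^2}.
\]

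\textbf{Step 2: the quantum construction.} I will invoke Algorithm~\ref{alg:weighted-riesz-be} with \(g=1\) and \(R=\mathsf I\). By Proposition~\ref{thm:affine-pencil-lcu-main}, this gives an encoding of the quadrature sum at cost \(\widetilde O(\mathcal R_{\rm DARE})\), with normalization \(\alpha_{\Pi}=\alpha_L\sum_j|\omega_j|\beta_j\).

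For the quadrature error I will use Lemma~\ref{lem:analytic-quadrature-app}, with an analytic annulus neighborhood of \(\Gamma_<\) on which the resolvent stays bounded by a constant multiple of its contour value. This needs only logarithmically many nodes in the inverse precision.

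Next I select both block rows of the same encoding. I apply Proposition~\ref{prop:local-inverse} to \(E_1^*\widetilde\Pi_</\alpha_{\Pi}\), using the supplied threshold \(\gamma\simeq1/(\alpha_\Pi\sqrt{1+M_X^2})\). This costs \(O(\alpha_\Pi\sqrt{1+M_X^2})\) calls to the projector circuit, each costing \(\widetilde O(\mathcal R_{\rm DARE})\). The product with the lower row has output scale
\[
\alpha_X\simeq\alpha_\Pi^2\sqrt{1+M_X^2},
\]
matching \eqref{eq:algebraic-projector-output-app}. The total is therefore \(\widetilde O(\mathcal R_{\rm DARE}\,\alpha_X)\), once I verify that the pseudoinverse cost factor \(\alpha_\Pi\sqrt{1+M_X^2}\) is bounded by \(\alpha_X\). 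This holds whenever \(\alpha_\Pi\ge1\), which it is, since \(\|\Pi_<\|\ge1\).

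\textbf{Step 3: error propagation, the expected main obstacle.} The pseudoinverse is applied to an approximate upper row, so I must keep its rank \(n\) and its smallest nonzero singular value bounded away from zero.

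I will allocate the projector error \(\eta\le\sigma_{\min}(E_1^*\Pi_<)/2\). By Weyl's inequality, this gives \(\sigma_{\min}\ge\gamma\) after perturbation. I will then use the full-row-rank perturbation bound
\[
\|A^+-B^+\|\le2\|A^+\|\|B^+\|\|A-B\|,
\]
valid because the rank is preserved under full row rank, together with Lemma~\ref{lem:inverse-input-perturbation}.

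Setting \(\eta=\Theta(\varepsilon/\alpha_X)\) makes the decoded error at most \(\varepsilon\). Since \(\eta\) appears only inside logarithms in \eqref{eq:pencil-lcu-query} and \eqref{eq:scaled-inverse-query}, the bound remains \(\widetilde O(\mathcal R_{\rm DARE}\alpha_X)\).

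The other delicate point is the count of finite interior eigenvalues for singular \(A\) and \(L\), where regularity of the pencil matters. I will handle it through the Weierstrass canonical form cited with Proposition~\ref{prop:riesz-calculus-app}.
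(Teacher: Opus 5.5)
Your overall plan matches the paper's. You identify \(\Pi_<=\mathfrak R_<[1;M,L]\) as the graph projector, encode it by Proposition~\ref{thm:affine-pencil-lcu-main} with \(g=1\) and \(R=\mathsf I\), and recover \(X\) from both block rows of one encoding. However, the scale count in Step~2 is wrong.

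Inverting the normalized block \(E_1^*\widetilde\Pi_</\alpha_\Pi\) with threshold \(\gamma\asymp1/(\alpha_\Pi\sqrt{1+M_X^2})\) encodes \(\alpha_\Pi(E_1^*\widetilde\Pi_<)^+\) at scale \(O(\alpha_\Pi\sqrt{1+M_X^2})\). Composing with the lower row at scale \(\alpha_\Pi\) then encodes \(\alpha_\Pi\widetilde X\) at scale \(O(\alpha_\Pi^2\sqrt{1+M_X^2})\), i.e.\ \(\widetilde X\) at scale \(O(\alpha_\Pi\sqrt{1+M_X^2})\). So your \(\alpha_X\simeq\alpha_\Pi^2\sqrt{1+M_X^2}\) counts \(\alpha_\Pi\) twice. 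Declared as the scale of \(X\), it would decode \(\alpha_\Pi X\). Padding up to it would give a weaker result than the default equality \(\alpha_X=8\alpha_\Pi\sqrt{1+\|X\|^2}\) in \eqref{eq:algebraic-projector-output-app} whenever \(\alpha_\Pi\) is large; for the unit-circle rule, \(\alpha_{\Pi_<}\) may be of order \(\mathcal R_{\rm DARE}\).

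Up to constants, your circuit is the paper's. The paper applies Proposition~\ref{prop:local-inverse} to \(E_1^*\widetilde\Pi_<\) at input scale \(\alpha_\Pi\), with threshold \(4\alpha_\Pi/\alpha_X=1/(2\sqrt{1+M_X^2})\), justified by \(\|(E_1^*\widetilde\Pi_<)^+\|\le2\sqrt{1+M_X^2}\). The inverse then has scale \(\alpha_X/\alpha_\Pi\), and the product has scale exactly \(\alpha_X\). The \(O(\alpha_X\log(e+\alpha_X/\varepsilon))\) projector calls follow directly, so no comparison through \(\alpha_\Pi\ge1\) is needed.

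The precision choice in Step~3 also fails, because slope recovery amplifies projector errors by \(1+\|X\|^2\). Take \(n=1\) and the orthogonal projector onto \(\operatorname{span}[1;x]\). Adding \(\eta\) to its \((1,2)\) entry moves the recovered slope to about \(x-x^2\eta\), whereas \(\alpha_X\asymp\alpha_\Pi x\) with \(\alpha_\Pi=\Theta(1)\). Thus \(\eta=\Theta(\varepsilon/\alpha_X)\) can leave error \(\Theta(x\varepsilon)\); your own triangle-plus-Wedin estimate, \(O(\alpha_\Pi(1+M_X^2)\eta)\), already shows this.

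There are two fixes. You can take \(\eta\lesssim\varepsilon/(\alpha_\Pi(1+M_X^2))\) with your estimate. Alternatively, use the paper's identity \((E_2^*\widetilde\Pi_<)(E_1^*\widetilde\Pi_<)^+-X=\begin{bmatrix}-X&\mathsf I\end{bmatrix}\Delta\Pi\,(E_1^*\widetilde\Pi_<)^+\). It uses \(E_2^*\Pi_<=XE_1^*\Pi_<\) to remove \(\alpha_\Pi\) and gives \(\delta=\varepsilon/[8(1+\|X\|^2)]\). You also need a separate budget for the pseudoinverse implementation error, e.g.\ \(\varepsilon_{\rm inv}=\varepsilon/[4(1+\alpha_\Pi)]\), since it is multiplied by \(\|E_2^*\widetilde\Pi_<\|\le\alpha_\Pi\). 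All of these choices enter only through logarithms, so \eqref{eq:dare-query-bound} survives.

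For Step~1, the symplectic pairing is a valid alternative to the paper's explicit triangularization \(T_\ell(M-zL)T_r=\begin{bmatrix}F-z\mathsf I&\ast\\0&\mathsf I-zF^*\end{bmatrix}\); the pencil does satisfy \(MJM^*=LJL^*\) with \(J=\begin{bmatrix}0&\mathsf I\\-\mathsf I&0\end{bmatrix}\). But your route needs the pairing theorem with multiplicities and \(0\leftrightarrow\infty\). It also needs regularity, which comes from the hypothesis that \(\Gamma_<\) lies in the resolvent set, not from the Weierstrass form, which presupposes it. The triangularization gives regularity, the unit-circle gap and the count \(n\) at once. Integrating \((zL-M)^{-1}L[\mathsf I;X]=[\mathsf I;X](z\mathsf I-F)^{-1}\) over \(\Gamma_<\) then gives \(\Pi_<[\mathsf I;X]=[\mathsf I;X]\), the inclusion your sketch leaves implicit.

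Finally, a general \(\Gamma_<\) need not have an analytic annular neighborhood. The paper instead assumes a quadrature rule with error \(\delta/2\), made explicit for the unit circle. This suffices because the node count does not enter the query bound.
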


The generalized singularity factor includes the shifted-input cost
\(|z|\alpha_L+\alpha_M\). Constructing the full projector also
requires the right factor \(L\):
\begin{equation}
\Pi_<=\mathfrak R_<[1;M,L]
=\frac{1}{2\pi\mathrm i}\oint_{\Gamma_<}(zL-M)^{-1}L\,\mathrm dz.
\label{eq:dare-full-riesz-projector}
\end{equation}
Its actual normalization, including this multiplication, enters
\(\alpha_X\). Appendix~\ref{app:dare-construction} gives the
finite unit-circle rule, its normalization, and the full query bound.
No inverse of \(L\) is needed.

The compact DARE identifies the selected graph through
\begin{equation}
M\begin{bmatrix}\mathsf I\\X\end{bmatrix}
=L\begin{bmatrix}\mathsf I\\X\end{bmatrix}F,
\qquad F=(\mathsf I+GX)^{-1}A,\qquad \rho(F)<1.
\label{eq:dare-deflating-graph}
\end{equation}
The following graph identity remains valid when the complementary
subspace contains infinite modes.

\begin{lemma}[DARE graph projector and recovery]
\label{lem:dare-graph-recovery}
Let \(X\) be the stabilizing solution in
Definition~\ref{def:problem-dare}.
For \(E_1=[\mathsf I;0]\) and \(E_2=[0;\mathsf I]\), the finite
unit-disk projector of \(M-zL\) satisfies
\begin{equation}
\Pi_<^2=\Pi_<,
\qquad
\operatorname{ran}\Pi_<=\operatorname{ran}\begin{bmatrix}\mathsf I\\X\end{bmatrix}.
\label{eq:dare-graph-projector}
\end{equation}
Its upper block row has full row rank, and
\begin{equation}
X=(E_2^*\Pi_<)(E_1^*\Pi_<)^+.
\label{eq:dare-graph-recovery}
\end{equation}
\end{lemma}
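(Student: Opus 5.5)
We reduce to a Weierstrass (Kronecker) canonical form of the regular pencil \(M-zL\) and verify the three claims in turn: idempotence, range, and recovery.

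\textbf{Canonical form and idempotence.} Regularity gives invertible \(W,Z\) with
\[
W(M-zL)Z=\operatorname{diag}(J-z\mathsf I,\ \mathsf I-zN),
\]
where \(N\) is nilpotent. We split \(J=\operatorname{diag}(J_<,J_>)\) into finite eigenvalues inside the unit disk and the remaining finite eigenvalues. Proposition~\ref{prop:riesz-calculus-app} then computes the contour integral in~\eqref{eq:dare-full-riesz-projector}. The result is \(\Pi_<=Z\operatorname{diag}(\mathsf I,0,0)Z^{-1}\): the inner resolvent block integrates to the identity, while the outer finite blocks and the infinite block \((zN-\mathsf I)^{-1}N\) are holomorphic inside \(\Gamma_<\) and integrate to zero. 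Hence \(\Pi_<^2=\Pi_<\), and \(\operatorname{ran}\Pi_<\) is the right deflating subspace attached to the finite eigenvalues in \(|z|<1\).

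\textbf{Identifying the range.} The columns of \([\mathsf I;X]\) are independent, so the graph has dimension \(n\). By \eqref{eq:dare-deflating-graph} it is a deflating subspace, \(M[\mathsf I;X]=L[\mathsf I;X]F\), and all eigenvalues of \(F\) lie in the open unit disk. To place the graph inside \(\operatorname{ran}\Pi_<\), write \(Y=[\mathsf I;X]\) and substitute \(zL-M\) to get
\[
(zL-M)Y=LY(z\mathsf I-F).
\]
For \(z\) on \(\Gamma_<\) this gives \((zL-M)^{-1}LY=Y(z\mathsf I-F)^{-1}\), which requires \(z\notin\operatorname{spec}F\); that holds because the spectrum of \(F\) lies among the finite pencil eigenvalues inside the disk, which \(\Gamma_<\) encloses. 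Integrating yields \(\Pi_<Y=Y\cdot\frac{1}{2\pi\mathrm i}\oint(z\mathsf I-F)^{-1}dz=Y\), so \(\operatorname{ran}Y\subseteq\operatorname{ran}\Pi_<\).

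For equality we need \(\operatorname{rank}\Pi_<=n\), that is, exactly \(n\) finite eigenvalues in the open disk counted with multiplicity. This comes from the symplectic structure: finite eigenvalues pair as \(\lambda\leftrightarrow1/\bar\lambda\), with zero paired to infinity. The stabilizing solution also excludes unit-circle eigenvalues, via the standard dichotomy \cite{LancasterRodman1995Riccati}. Hence the \(2n\) eigenvalues, including infinite ones, split evenly between \(\{|z|<1\}\) and \(\{|z|>1\}\cup\{\infty\}\), and each eigenvalue \(0\) inside is balanced by an infinite one outside. I expect this counting to be the main obstacle when \(A\) or \(L\) is singular.

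I would handle it by citing the symplectic-pencil theory. As an alternative, the direct argument shows that an \(n\)-dimensional complement is deflating with spectrum outside the disk: take a complementary deflating subspace built from the pencil with \(X\) and the closed-loop factorization, and use the triangularization
\[
\begin{bmatrix}\mathsf I&0\\-X&\mathsf I\end{bmatrix}
\]
applied to \(M,L\). This gives block upper-triangular pencils with diagonal blocks \(A_F-z(\mathsf I+GX)\) and a transposed block whose finite eigenvalues are reciprocal conjugates of those of \(F\), or infinite. Hence that block contributes no eigenvalues in the open disk, and \(\operatorname{rank}\Pi_<=n\).

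\textbf{Recovery.} Every column of \(\Pi_<\) lies in \(\operatorname{ran}Y\), so \(\Pi_<=YK\) with \(K=E_1^*\Pi_<\). Then \(E_2^*\Pi_<=XK\). Since \(\operatorname{rank}\Pi_<=n\) and \(Y\) is injective, \(K\) is an \(n\times2n\) matrix of rank \(n\), hence of full row rank, and \(KK^+=\mathsf I\). Therefore \((E_2^*\Pi_<)(E_1^*\Pi_<)^+=XKK^+=X\), which is~\eqref{eq:dare-graph-recovery}. This is the same argument as Lemma~\ref{lem:care-graph-recovery} and Appendix~\ref{app:dre-recovery}.
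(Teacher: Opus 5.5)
Your argument is correct and is essentially the paper's proof: Riesz calculus gives idempotence, integrating $(zL-M)^{-1}LY=Y(z\mathsf I-F)^{-1}$ gives $\Pi_<Y=Y$, a rank-$n$ count gives equality of ranges, and recovery then follows from the graph range. The paper does the count with your fallback triangularization, $T_\ell(M-zL)T_r$ with $T_r=\begin{bmatrix}\mathsf I&0\\X&\mathsf I\end{bmatrix}$ and a left shear built from the DARE, which gives diagonal blocks $F-z\mathsf I$ and $\mathsf I-zF^*$. That factorization is preferable to citing symplectic pairing, because it also proves the regularity and the absence of unit-circle spectrum that your first route presupposes or cites.
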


The limiting projector arises from the RR evolution. When \(A\) is
invertible, its unscaled lift is \(\mathcal S_F=M^{-1}L\).
With \(R_0=[\mathsf I;P_0]\) and full-column-rank \(\Pi_>R_0\),
the decaying graph projector satisfies
\begin{equation}
\Pi_>+\mathcal S_F^k(\mathsf I-\Pi_>)R_0
(\Pi_>R_0)^+\mathcal S_F^{-k}\Pi_>
\longrightarrow\Pi_>,\qquad k\to\infty.
\label{eq:dare-limiting-projector}
\end{equation}
Here \(\Pi_>\) is the exterior projector of \(\mathcal S_F\), and
\(\Pi_>=\Pi_<=\mathfrak R_<[1;M,L]\) for the DARE pencil, as proved
in Lemma~\ref{lem:algebraic-projector-limits-app}.
Although \(A\) need not be invertible and \(\mathcal S_F\) may then
be unavailable, the pencil Riesz operator defines the same stable
graph projector. The algorithm constructs it directly from \(M,L\)
and applies the upper-row pseudoinverse, without evolving an initial
graph to a finite iteration count. Appendix~\ref{app:dare-construction}
proves the graph identity and construction bound.

\section{Control outputs and Riccati applications}
\label{sec:control-applications}

The preceding sections construct Riccati operators. We first follow one
control evaluation from the original coefficient encodings to a classical
feedback value. A stable RPA family then quantifies how local inverse
scales reduce query costs near a stability boundary. Finally, a thermal
network with one control input and an arbitrary number of state variables
makes explicit which physical and input-access conditions allow the gate
complexity to grow only polylogarithmically with the state dimension.

\subsection{Finite-horizon continuous LQR with classical feedback}
\label{subsec:lqr-classical-output}

Consider the finite-horizon continuous LQR problem
\begin{equation}
\begin{aligned}
\dot x(t)&=Ax(t)+bu(t),\\
J(u)&=\int_0^T
\bigl(x(t)^*Qx(t)+r|u(t)|^2\bigr)\,\mathrm dt
+x(T)^*P_Tx(T),
\end{aligned}
\label{eq:lqr-continuous-demonstration}
\end{equation}
with prescribed initial state, where \(b\in\mathbb C^{n\times1}\),
\(Q,P_T\succeq0\), and \(r>0\). The objective is to minimize \(J(u)\).
The single control input illustrates the procedure; the state dimension
\(n\) is arbitrary. At a requested physical time \(t\), write
\(s=T-t\), \(x=x(t)\), and
\(P(s)=P_{\rm LQR}(T-s)\), so the optimal feedback is
\(u(t)=-r^{-1}b^*P(s)x\)~\cite[Section~2.3]{AndersonMoore2007OptimalControl}.
The inputs are block-encodings of \(A,b,Q,P_T,x\), with their actual
normalizations, adjoints and controlled versions, together with the
classical values \(r,T,t\). The state-column encoding includes its
physical scale \(\alpha_x\); normalized state preparation alone does
not supply that scale.

\begin{proposition}[Finite-horizon LQR feedback and total query complexity]
\label{prop:lqr-be-classical-main}
For \eqref{eq:lqr-continuous-demonstration}, suppose the input encodings
are exact and the reverse-time DRE satisfies the structural, weighted-block
access and general inverse assumptions of Appendix~\ref{app:dre-complexity}.
Use direct sums and products with the raw normalizations, without
normalization reduction. Supply a bound on the solution over the
requested interval and use the output normalization
\(\alpha_{P(s)}\) specified in that appendix.
For \(0<\delta\le1/3\) and
\(0<\varepsilon_u<\alpha_b\alpha_{P(s)}\alpha_x/r\), the algorithm
returns a classical estimate satisfying
\begin{equation}
|\widehat u(t)+r^{-1}b^*P(s)x|\le\varepsilon_u
\label{eq:lqr-continuous-output}
\end{equation}
with probability at least \(1-\delta\), using a total of
\begin{equation}
Q_u=O\!\left(
\frac{\alpha_b\alpha_x\alpha_{P(s)}}{r\varepsilon_u}\,
\log\frac1\delta
\right)Q_{\rm DRE}
\label{eq:lqr-continuous-total}
\end{equation}
queries to the original coefficient and state block-encodings, their
adjoints and controlled versions. Here \(Q_{\rm DRE}\) is the
full query cost of one DRE solution circuit at the accuracy specified
in \eqref{eq:lqr-continuous-outer-accuracy-app}, including its actual
normalizations and inverse thresholds in
Proposition~\ref{prop:dre-general-resources-app}.
The compact bound \eqref{eq:dre-query-main} applies under its additional
normalization assumptions. For a tolerance at least the displayed output bound, or a
known zero factor, returning zero suffices.
\end{proposition}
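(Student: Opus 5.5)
The plan is to reduce the feedback value to a single matrix element of a product of block-encodings and then estimate that element by amplitude estimation, with the DRE solution circuit as a subroutine. Concretely, the reverse-time DRE has inputs \((-A,-G,-Q,P_T)\) with \(G=r^{-1}bb^*\). We first assemble exact encodings of the lift \(\mathcal H_{\rm DRE}\) and of \(R_0=[\mathsf I;P_T]\) from the supplied encodings of \(A,b,Q,P_T\). This uses direct sums and products at the raw normalizations, so each use of \(U_{\mathcal H_{\rm DRE}}\) or \(U_{R_0}\) costs \(O(1)\) original queries. Invoking Theorem~\ref{thm:dre-construction-main} (in the general form of Proposition~\ref{prop:dre-general-resources-app}) at time \(s=T-t\) and accuracy \(\varepsilon_{P}\) then gives an \((\alpha_{P(s)},a,\varepsilon_P)\) encoding of \(P(s)\) with cost \(Q_{\rm DRE}\) in original-oracle queries.

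Next we form the scalar. Let \(U_b\) encode the column \(b\) at scale \(\alpha_b\) and \(U_x\) encode \(x\) at scale \(\alpha_x\). The product \(U_b^*U_{P(s)}U_x\) is a block-encoding of the \(1\times1\) matrix \(b^*\widetilde P(s)x\), with normalization \(\alpha_b\alpha_{P(s)}\alpha_x\). Hence \(-r^{-1}b^*\widetilde P(s)x=-\frac{\alpha_b\alpha_{P(s)}\alpha_x}{r}\,a\), where \(a=\langle0|W|0\rangle\) is the amplitude of a known unitary \(W\). The error split is \(\varepsilon_u/2\) for the operator error and \(\varepsilon_u/2\) for statistical estimation. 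For the operator error we choose
\[
\varepsilon_P=\frac{r\varepsilon_u}{2\|b\|\|x\|},
\]
which we upper bound using \(\alpha_b\alpha_x\); this fixes the accuracy in \eqref{eq:lqr-continuous-outer-accuracy-app}. For the statistical part we estimate \(a\) (real and imaginary parts via a Hadamard-test phase, i.e.\ amplitude estimation on \(\frac12(\mathsf I\pm W)\) type reflections) to additive error \(r\varepsilon_u/(2\alpha_b\alpha_{P(s)}\alpha_x)\). Amplitude estimation~\cite{BrassardHoyerMoscaTapp2002AmplitudeEstimation} does this with \(O(\alpha_b\alpha_{P(s)}\alpha_x/(r\varepsilon_u))\) calls to \(W,W^*\) and success probability \(8/\pi^2\). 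Median amplification over \(O(\log(1/\delta))\) repetitions boosts the success probability to \(1-\delta\). Each call to \(W\) costs \(Q_{\rm DRE}+O(1)\), which gives \eqref{eq:lqr-continuous-total}. The condition \(\varepsilon_u<\alpha_b\alpha_{P(s)}\alpha_x/r\) guarantees that the target amplitude precision is below one, so amplitude estimation is nontrivial. Otherwise zero is within tolerance, because \(|r^{-1}b^*P(s)x|\le\alpha_b\alpha_{P(s)}\alpha_x/r\); this holds once the supplied \(\alpha_{P(s)}\) dominates \(\|P(s)\|\), which the output normalization in Appendix~\ref{app:dre-complexity} is set up to ensure.

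The main obstacle is the bookkeeping in the first step. We must verify that the assembled \(\mathcal H_{\rm DRE}\) (with the sign reversal and \(G=r^{-1}bb^*\)) and \(R_0\) satisfy the structural, spectral, and full-column-rank hypotheses of Appendix~\ref{app:dre-complexity} with normalizations compatible with the DRE circuit. We must also check that the \(\log(1/\varepsilon_P)\) dependence hidden in \(Q_{\rm DRE}\) is consistent with the chosen \(\varepsilon_P\). We would handle this by stating \(Q_{\rm DRE}\) at the explicit accuracy \(\varepsilon_P\) rather than absorbing it, as the proposition already does.
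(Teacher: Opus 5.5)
Your proposal is correct and follows the paper's proof essentially step for step. You assemble $G=bb^*/r$ and the reverse-time lift by raw sums and products, invoke Proposition~\ref{prop:dre-general-resources-app} at a solution accuracy with $\alpha_b\alpha_x\varepsilon=O(r\varepsilon_u)$, read $b^*\widetilde P(s)x/(\alpha_b\alpha_{P(s)}\alpha_x)$ by a Hadamard test with amplitude estimation and median amplification, and multiply the $O(\alpha_b\alpha_{P(s)}\alpha_x\log(1/\delta)/(r\varepsilon_u))$ signal-circuit uses by $O(Q_{\rm DRE})$. Two small points: cap the DRE accuracy at one, since the DRE construction is stated only for $0<\varepsilon\le1$ (the paper takes $\varepsilon=r\varepsilon_u/(4\alpha_b\alpha_x+r\varepsilon_u)$); and the structural, spectral and full-column-rank hypotheses you list as the main obstacle are assumed in the statement, so they need no verification.
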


The factor outside \(Q_{\rm DRE}\) is the cost of estimating the
encoded scalar at its physical scale. In particular,
\(\varepsilon_u^{-1}\log(1/\delta)\) comes from classical amplitude
estimation. The construction and normalization costs of the Riccati
encoding are already included in \(Q_{\rm DRE}\).
Appendix~\ref{app:lqr-classical-output} proves the result by composing
this readout with the existing DRE construction.

\begin{algorithm}[htbp]
\caption{Finite-horizon continuous LQR feedback from block-encodings}
\label{alg:lqr-be-classical}
\begin{algorithmic}[1]
\Require Encodings of \(A,b,Q,P_T,x(t)\), their normalizations and controlled
adjoints; \(r,T,t,\varepsilon_u,\delta\) and the supplied DRE bounds.
\State Set \(s=T-t\) and form \(G=bb^*/r\) with scale \(\alpha_b^2/r\).
\State Assemble \(\mathcal H_{\rm DRE}=[-A,G;Q,A^*]\) and
\(R_0=[\mathsf I;P_T]\).
\State Choose the solution precision from
\eqref{eq:lqr-continuous-outer-accuracy-app} and apply
Algorithm~\ref{alg:dre-direct-projector-construction} at time \(s\), keeping
the raw normalizations.
\State Compose \(U_b^*U_{P(s)}U_x\) and estimate its signal amplitude
to the prescribed accuracy and failure probability.
\State Multiply the estimate by
\(-\alpha_b\alpha_{P(s)}\alpha_x/r\) and return \(\widehat u(t)\).
\end{algorithmic}
\end{algorithm}

Products use separate signal ancillas, as in the block-encoding
multiplication rule. The signal amplitude in the last two steps is
\begin{equation}
\frac{b^*\widetilde P(s)x}{\alpha_b\alpha_{P(s)}\alpha_x}.
\label{eq:lqr-feedback-amplitude}
\end{equation}
A controlled Hadamard test followed by amplitude estimation reads its
real and imaginary parts
\cite{BrassardHoyerMoscaTapp2002AmplitudeEstimation}.
The sign or phase of the control vector is preserved in \(U_b\): supplying
\(G\) alone would not distinguish \(b\) from \(-b\).
No preparation of \(P(s)x/\|P(s)x\|\) is required, so a small
\(\|P(s)x\|\) does not introduce an inverse success-amplitude factor.

For \(B\in\mathbb C^{n\times m}\), the same Riccati construction and
matrix-element estimation evaluate components of \(-R_c^{-1}B^*P(s)x\).
Proposition~\ref{prop:local-inverse} supplies the control-space inverse.
Its conditioning, normalization and implementation cost must be included,
and requesting all \(m\) classical components requires the corresponding
readouts with a shared total failure probability.

\subsection{Random phase approximation in quantum chemistry}
\label{subsec:rpa-review}

The real random-phase approximation (RPA) amplitude equation provides
an algebraic Riccati application in quantum chemistry
\cite{RodenasRuizZhaoLee2026NonlinearMatrixEquations}.
For real symmetric \(A,B\), consider
\begin{equation}
B+AT+TA+TBT=0,\qquad
K=\begin{bmatrix}A&B\\B&A\end{bmatrix}\succ0,\qquad
\mathcal H=\begin{bmatrix}-A&-B\\B&A\end{bmatrix}.
\label{eq:rpa-review-problem}
\end{equation}
The target is the symmetric solution \(T\) for which \(-A-BT\) is
Hurwitz. To quantify the cost as stability deteriorates, we use
\begin{equation}
A=\mathsf I_2,\qquad B=\operatorname{diag}(1-u,0),\qquad
0<u\le\tfrac12,\qquad \lambda_{\min}(K)=u.
\label{eq:rpa-review-family}
\end{equation}
All matrices in this subsection depend on \(u\).
The input is an exact block-encoding \(U_{\mathcal H}\) with
normalization \(\alpha_H=\|\mathcal H\|=2-u\), together with its
controlled and adjoint versions. We seek either a solution encoding
with decoded operator-norm error \(\varepsilon_T\), or a classical
estimate of the correlation-energy quantity
\begin{equation}
e_c=\frac{\operatorname{Tr}(BT)}{4V}
   =\frac{1-u}{4V}\,e_1^*Te_1,
\label{eq:rpa-review-energy}
\end{equation}
where \(V>0\) is fixed and the energy convention is specified by this
formula.

Table~\ref{tab:rpa-review-comparison} compares uniform and local
inverse scales for the same stable-projector algorithm. Both
constructions use the finite rectangle and positive panel quadrature
in Appendix~\ref{app:rpa-review} and the same recovery.
Let \(m\) be the number of quadrature nodes.
The uniform node inverse threshold is taken from the proof of
\cite[Theorem~C.4]{RodenasRuizZhaoLee2026NonlinearMatrixEquations};
the contour and recovery are the common choices specified here.
Thus the comparison isolates the effect of inverse normalization.
As in Subsection~\ref{subsec:contour-resolvent-lcu}, node coefficients
and inverse circuits are coherently accessible. The local
implementation uses supplied bounds
\(\beta_j=\Theta(\|(z_j\mathsf I-\mathcal H)^{-1}\|)\), uniformly
in the nodes and in \(u\); their preparation cost is separate.

\begin{table}[!t]
\centering
\small
\renewcommand{\arraystretch}{1.18}
\begin{tabularx}{\textwidth}{@{}>{\raggedright\arraybackslash}Xcc@{}}
\toprule
Quantity & Previous work & Our work\\
\midrule
Generalized singularity factor & \(\Theta(u^{-1})\) & \(\Theta(u^{-1})\)\\
Projector normalization & \(\Theta(u^{-1})\) & \(\Theta(u^{-1/2})\)\\
Upper-row pseudoinverse norm & \(1\) & \(1\)\\
Solution normalization \(\alpha_T\) & \(\Theta(u^{-1})\) & \(\Theta(u^{-1/2})\)\\
Queries for a solution encoding & \(\widetilde O(u^{-2})\) & \(\widetilde O(u^{-3/2})\)\\
Queries for energy readout & \(\widetilde O(u^{-3}/\varepsilon_c)\)
 & \(\widetilde O(u^{-2}/\varepsilon_c)\)\\
\bottomrule
\end{tabularx}
\caption{Query bounds as \(u\downarrow0\) for the two implementations on
the same input family. Solution encodings have the same decoded error;
energy estimates have the same absolute error. The bounds include the different output
normalizations. Both constructions recover from the
full stable projector using the same upper-row pseudoinverse.}
\label{tab:rpa-review-comparison}
\end{table}

\begin{proposition}[RPA solution encoding and energy readout]
\label{prop:rpa-review-comparison}
\RPAComparisonStatement
\end{proposition}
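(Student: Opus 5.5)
The plan is to reduce everything to explicit scalar computations. The family~\eqref{eq:rpa-review-family} decouples into two $2\times2$ blocks. Index $1$ gives $A=1$, $B=1-u$. Index $2$ gives $A=1$, $B=0$.

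\textbf{Step 1: spectral data.} On the index-$1$ block, $\mathcal H_1=\begin{bmatrix}-1&-(1-u)\\1-u&1\end{bmatrix}$ has eigenvalues $\pm\sqrt{1-(1-u)^2}=\pm\sqrt{u(2-u)}=\pm\Theta(u^{1/2})$. The index-$2$ block is $\operatorname{diag}(-1,1)$. The stabilizing solution is $T=\operatorname{diag}(t_1,0)$ with $t_1=(-1+\sqrt{u(2-u)})/(1-u)$, so $|t_1|\le1$. Hence $e_c=\frac{1-u}{4V}t_1$.

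\textbf{Step 2: factors on the common contour.} Let $\Gamma_-$ be the rectangle of Appendix~\ref{app:rpa-review}, crossing the real axis near $-\lambda/2$ or at the origin side. Since $\mathcal H_1$ is nonnormal, I would compute the resolvent $(z-\mathcal H_1)^{-1}=\operatorname{adj}/(z^2-\lambda^2)$ explicitly. The adjugate has norm $O(1)$ and $|z^2-\lambda^2|\gtrsim u$ near the real axis crossing. This gives $\mathcal R=\Theta(u^{-1})$, matching the first table row.

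For the projector normalization $\alpha_\Gamma=\sum_j|\omega_j|\beta_j$, the two constructions differ. With the uniform threshold $\beta_j=\Theta(u^{-1})$ from \cite[Theorem~C.4]{RodenasRuizZhaoLee2026NonlinearMatrixEquations}, and $\sum|\omega_j|=O(1)$ for the positive panel rule, we get $\Theta(u^{-1})$. With local $\beta_j=\Theta(\|(z_j-\mathcal H)^{-1}\|)$, the sum is a Riemann sum for $\frac1{2\pi}\int_{\Gamma}\|(z-\mathcal H)^{-1}\|\,|dz|$. Near the axis the integrand behaves like $1/(|y|^2+u)$ along the vertical side at distance $\Theta(\sqrt u)$, and the integral is $\Theta(u^{-1/2})$. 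I expect this panel/Riemann-sum comparison to be the main obstacle. It must be uniform in $m$ and $u$, requiring panel widths adapted to the $\sqrt u$ scale. I would prove two-sided bounds by bounding the integrand on each panel up to constants.

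\textbf{Step 3: recovery and the solution encoding.} The orthogonal-projector-style bound of Lemma~\ref{lem:dre-projector-block-recovery-app} gives $\|(E_1^*\Pi_-)^+\|\le$ const. Since $\|T\|\le1$, it is in fact $1$ here, so $\alpha_T=\Theta(\text{projector normalization})$. Apply Theorem~\ref{thm:care-construction-main} with $Q=\widetilde O(\mathcal R\alpha_T)$: this gives $\widetilde O(u^{-2})$ for the uniform threshold versus $\widetilde O(u^{-3/2})$ for the local one. The error $\varepsilon_T$ follows from splitting the budget among quadrature error (Lemma~\ref{lem:analytic-quadrature-app}, at logarithmic cost in $m$), the implementation error of Proposition~\ref{thm:local-contour-sum}, and the inverse error of Proposition~\ref{prop:local-inverse}.

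\textbf{Step 4: energy readout.} Estimate $e_1^*\widetilde Te_1/\alpha_T$ by a Hadamard test with amplitude estimation. The target additive error on $t_1$ is $4V\varepsilon_c/(1-u)$, so this uses $O(\alpha_T(1-u)/(V\varepsilon_c))$ repetitions at success probability $2/3$. Set $\varepsilon_T=2V\varepsilon_c/(1-u)\le1/2$, which is the stated constraint on $\varepsilon_c$. Multiplying gives $\widetilde O(\mathcal R\alpha_T^2/\varepsilon_c)$: that is $u^{-3}$ versus $u^{-2}$, completing the table.
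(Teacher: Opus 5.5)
Your outline matches the paper's argument step for step, and the exponents all come out right. The steps in common are:
\begin{itemize}
\item the explicit solution $T=\operatorname{diag}(t_1,0)$;
\item the adjugate form of the coupled $2\times2$ resolvent, with $|z^2-\lambda^2|=\Theta(u+y^2)$ on the near-axis edge;
\item $\mathcal R_{\rm CARE}=\Theta(u^{-1})$ for both implementations;
\item the local LCU scale compared with $\frac1{2\pi}\int_\Gamma\|(z\mathsf I-\mathcal H)^{-1}\|\,|\mathrm dz|=\Theta(u^{-1/2})$, against $\Theta(u^{-1})$ for the uniform threshold;
\item recovery at $\alpha_T=\Theta(\alpha_{\Pi_-})$, and readout with $\varepsilon_T=2V\varepsilon_c/(1-u)$.
\end{itemize}
However, three steps invoke results outside their hypotheses, and a fourth infers more than the cited lemma gives.

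(i) Lemma~\ref{lem:analytic-quadrature-app} is the trapezoidal bound for an analytic periodic parametrization. The prescribed contour is a rectangle with corners and a Gauss--Legendre panel rule, so that lemma does not apply. The quadrature error must instead come from the panel argument of Proposition~\ref{prop:care-rectangle-quadrature-app}, with $\alpha_H=2-u$ and $d=u/2$. The right edge then sits at $x=-u/2$. This is set by the imaginary-axis singular-value gap $\Delta_{\rm ax}=u$ (from $J(\mathrm iy\mathsf I-\mathcal H)=K+\mathrm iyJ$, $J=\operatorname{diag}(\mathsf I,-\mathsf I)$), not by the eigenvalue $-\lambda$; your orders survive either way. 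Because those panels have length at most $u/2$, the Neumann estimate keeps inverse-norm ratios within a panel below $3$. Positive weights exact for constants then give exactly the two-sided Riemann-sum comparison you flagged as the main obstacle, uniformly in $m$ and $u$.

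(ii) Theorem~\ref{thm:care-construction-main} is stated for Problem~\ref{prob:quantum-care}, i.e.\ LQR data with $Q\succeq0$ and $X\succeq0$. Rewriting the RPA equation as a CARE gives $A_{\rm c}=-A$, $G=B$, $Q=-B\preceq0$, and a target $T\preceq0$. So neither that theorem nor Lemma~\ref{lem:care-graph-recovery} can be quoted as stated. Verify the graph directly instead: the Riccati equation gives $\mathcal H[\mathsf I;T]=[\mathsf I;T](-A-BT)$ and $\mathcal H[T;\mathsf I]=[T;\mathsf I](A+BT)$. Hence $\Pi_-=[\mathsf I;T](\mathsf I-T^2)^{-1}[\mathsf I,-T]$, and you can then apply the general recovery result, Proposition~\ref{prop:dre-direct-projector-stability-app}, with $M=1$.

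(iii) The uniform-threshold construction also violates the constant-factor node estimates assumed by Proposition~\ref{thm:local-contour-sum} and hence by that theorem. Its $\beta_j=\Theta(u^{-1})$ far exceeds the $O(1)$ resolvent norms away from the axis. Its per-call cost should instead be read off directly from $\max_j s_j\beta_j=\Theta(u^{-1})$.

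Finally, $\|T\|\le1$ only yields $\|(E_1^*\Pi_-)^+\|\le\sqrt2$ through Lemma~\ref{lem:dre-projector-block-recovery-app}. The exact value $1$ in the table comes from the explicit upper row $(\mathsf I-T^2)^{-1}[\mathsf I,-T]$, whose singular values are $\sqrt{1+t_1^2}/(1-t_1^2)\ge1$ and $1$. None of these repairs changes the asymptotics, but each is needed for the statement as posed.
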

\Needspace{6\baselineskip}
\begin{proof}[Proof outline]
The explicit stable graph and the finite quadrature in
Appendix~\ref{app:rpa-review} give the first three rows of
Table~\ref{tab:rpa-review-comparison}. The common projector recovery
then gives the solution bounds. Matrix-element estimation introduces
one further factor of the actual \(\alpha_T\). The appendix verifies
these scales and applies the general stability bound.
\end{proof}

Both constructions encode the full stable projector and recover
\[
T=(E_2^*\Pi_-)(E_1^*\Pi_-)^+.
\]
The query bounds improve by factors \(u^{-1/2}\) for solution encoding
and \(u^{-1}\) for energy readout. The source of this improvement is the
difference between a peak inverse norm and its weighted contribution
along the contour:
\begin{equation}
\begin{aligned}
\mathcal R_{\rm CARE}
&=\sup_{z\in\Gamma}(|z|+\alpha_H)
  \|(z\mathsf I-\mathcal H)^{-1}\|=\Theta(u^{-1}),\\
\alpha_{\Pi_-}=\alpha_{\rm CARE}
&=\sum_{j=1}^{m}|\omega_j|\beta_j=\Theta(u^{-1/2}).
\end{aligned}
\label{eq:rpa-review-local-scales}
\end{equation}
The largest inverse scale still determines the depth of the coherent
inverse stage. Only a short part of the contour has this scale, so
weighting each node by its own bound reduces the projector
normalization, the cost of recovery, and the scale paid in readout.
The upper-row pseudoinverse norm equals one for both implementations.
For this nonnormal lift, the singular-value gap on the imaginary axis
is \(u\), whereas the smallest eigenvalue magnitude is
\(\sqrt{2u-u^2}\); these gaps cannot be interchanged in the cost bound.

Nodewise rebalancing for CARE is already present in
\cite[Definitions~5.8 and~9.10, Theorems~5.9 and~9.12]
{WangLiu2026SignEmbedding}.
The comparison quantifies local versus uniform normalization for the
specified constructions. It does not establish superiority over that
rebalanced algorithm or over optimized contour choices.
The explicitly solvable two-mode family illustrates dependence on the
stability parameter, rather than a quantum advantage for molecular
instances. The thermal-network application that follows addresses a
different regime: increasing the state dimension while keeping the
physical scales bounded.

\subsection{Heated boundary control and finite-horizon feedback}
\label{subsec:thermal-model}

We consider heated boundary control (HBC), in which boundary heat input
regulates the temperature distribution. Such problems have important
engineering applications in thermal processing, including the control
of cooling profiles in glass manufacturing~\cite{PinnauSchulzeGlassCooling}.
The thermal network below provides a finite-dimensional example with
a single heater at one endpoint.

Consider a path of \(n\ge2\) thermal nodes, with one local heater,
nearest-neighbor conductance and uniform heat loss to the environment.
The continuous model is
\begin{equation}
\begin{aligned}
\dot x(t)&=A_{\rm c}x(t)+b_{\rm c}u(t),\\
A_{\rm c}&=-\nu\mathsf I-\kappa\mathcal L,
&b_{\rm c}&=e_{1},\\
q&>0,
&r&>0,\qquad P_{T}=0,
\end{aligned}
\label{eq:thermal-continuous-model}
\end{equation}
where \(\nu,\kappa>0\) and \(e_{1}\) is the first coordinate
column. The unweighted path Laplacian is defined by
\[
\mathcal L_{ij}=
\begin{cases}
1, & i=j\in\{1,n\},\\
2, & 1<i=j<n,\\
-1, & |i-j|=1,\\
0, & \text{otherwise},
\end{cases}
\qquad 1\le i,j\le n.
\]
The finite-horizon objective is
\[
\int_0^{T}
\bigl(q\|x(t)\|^2+r|u(t)|^2\bigr)\,\mathrm dt.
\]
The physical coefficients are efficiently computable and remain fixed
as nodes are added.

The control matrix is \(G_{\rm c}=b_{\rm c}(b_{\rm c})^*/r\).
Splitting the path edges into odd and even matchings implements the
continuous coefficient encodings with \(\operatorname{polylog}n\)
gates per call. Appendix~\ref{app:thermal-inputs} gives these circuits
and their actual scales, including coefficient synthesis.
Assume a supplied current-state column
encoding with \(\alpha_x=O(1)\) and
\(\operatorname{polylog}n\) gates per call, with polylogarithmic
precision overhead. A localized excitation or an efficient
preparation with known bounded physical norm meets this condition.
The state preparation and its physical scale are part of the input;
an arbitrary classical temperature array is not supplied for free.

For the continuous network, set \(s=T-t\). The value matrix at
physical time \(t\) solves the reverse-time DRE
\begin{equation}
\begin{aligned}
P'&=(A_{\rm c})^*P+P A_{\rm c}
-PG_{\rm c}P+q\mathsf I,&P(0)&=0,\\
\mathcal H_{{\rm DRE}}
&=\begin{bmatrix} -A_{\rm c}&G_{\rm c}\\
q\mathsf I&(A_{\rm c})^*\end{bmatrix},
&R_{0}&=E_{1}.
\end{aligned}
\label{eq:thermal-dre-lift}
\end{equation}
Here \(E_{1}=[\mathsf I;0]\); let \(\Pi_{+}\) be the
right-half-plane projector of this Hamiltonian.

\begin{corollary}[Finite-horizon feedback for the thermal network]
\label{cor:thermal-dre-main}
For \eqref{eq:thermal-continuous-model}, fix
\(\nu,\kappa,q,r>0\) and a bounded, finitely represented
time range \(0\le s\le T\), independently of \(n\).
Use the input implementation and state access of
Subsection~\ref{subsec:thermal-model}. For \(0<\varepsilon\le1\),
the direct projector-block construction encodes \(P(s)\) with decoded
error at most \(\varepsilon\), output normalization
\(\alpha_P=O(1)\) as specified in \eqref{eq:thermal-dre-output-scale}, and
\(O(\log^3(e+1/\varepsilon))\) basic continuous-matrix queries.
For \(0<\delta\le1/3\) and a nontrivial control tolerance
\(0<\varepsilon_{u}\le1\), it returns
\begin{equation}
\widehat u(t)\approx
-r^{-1}(b_{\rm c})^*P(T-t)x(t)
\label{eq:thermal-dre-feedback}
\end{equation}
with absolute error at most \(\varepsilon_{u}\) and failure
probability at most \(\delta\), using
\begin{equation}
O\!\left(\frac{\log(1/\delta)}{\varepsilon_{u}}
\log^3(e+1/\varepsilon_{u})\right)
\label{eq:thermal-dre-total}
\end{equation}
basic matrix and state-oracle calls. The total gate cost is
\(\varepsilon_{u}^{-1}\log(1/\delta)
\operatorname{polylog}(n)\operatorname{polylog}(e+1/\varepsilon_{u})\).
The constants depend on the fixed physical parameters, time range and
supplied state normalization, but not on \(n\).
\end{corollary}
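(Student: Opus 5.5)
The plan is to reduce the corollary to Theorem~\ref{thm:dre-construction-main} and Proposition~\ref{prop:lqr-be-classical-main}. The work is to show that every spectral, initialization and normalization parameter entering those results is \(O(1)\) uniformly in \(n\). Most of this reduces to a scalar computation.

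\textbf{Step 1: decouple the Hamiltonian.} Since \(A_{\rm c}=-\nu\mathsf I-\kappa\mathcal L\) is real symmetric, the Hamiltonian in \eqref{eq:thermal-dre-lift} has blocks \(-A_{\rm c}\succeq\nu\mathsf I\), \(G_{\rm c}=e_1e_1^*/r\) and \(q\mathsf I\). I would not diagonalize the coupled Hamiltonian directly, because \(G_{\rm c}\) is rank one. Instead I would bound the resolvent through the associated CARE. Its stabilizing solution \(X\) satisfies \(\nu X\preceq\) a bounded quantity, and in fact \(X\preceq (q/2\nu)\mathsf I\), obtained by comparison with the uncontrolled Lyapunov equation. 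The same Lyapunov comparison gives \(0\preceq P(s)\preceq (q/2\nu)\mathsf I\) for all \(s\), so the supplied bound satisfies \(M=O(1)\).

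\textbf{Step 2: bound the resolvent.} Using the block triangularization of the Hamiltonian by \([\mathsf I,0;X,\mathsf I]\), as in \eqref{eq:care-triangularization-app}, the Hamiltonian becomes similar, with \(O(1)\) condition number, to a block upper-triangular matrix. Its diagonal blocks are \(-(A_{\rm c}-G_{\rm c}X)\) and \(A_{\rm c}-G_{\rm c}X\) up to sign, and the coupling is \(G_{\rm c}\). The closed-loop matrix \(A_{\rm c}-G_{\rm c}X\) satisfies the Lyapunov inequality with \(X\). It equals \(-\nu\mathsf I-\kappa\mathcal L-e_1e_1^*X/r\), whose Hermitian part is not obviously negative. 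I would instead use that \(\|e^{s(A_{\rm c}-G_{\rm c}X)}\|\) decays with \(O(1)\) constants, which follows from \(X\) being \(O(1)\) and a lower bound \(X\succeq c\mathsf I\) obtained from \(q>0\).

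This yields a resolvent bound of order \(O(1)\) on the contours \(\Gamma_\pm=\{\operatorname{Re}z=\pm\nu'/(2\alpha_H)\}\cup\) arcs. Here \(\alpha_H=O(1)\) because \(\|\mathcal L\|\le4\). Hence \(\mathcal R_{\rm DRE}=O(1)\).

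\textbf{Step 3: initialization factor.} With \(R_0=E_1\), \(\Pi_+E_1\) has full column rank with \(\sigma_{\min}=\Omega(1)\). I would use the explicit stable and antistable graphs \(\operatorname{ran}[\mathsf I;X]\) and \(\operatorname{ran}[Y;\mathsf I]\), with \(Y=O(1)\). The angle between \(\operatorname{ran}E_1\) and \(\operatorname{ran}\Pi_-=\operatorname{graph}X\) is then bounded below via \(\|X\|=O(1)\), using Appendix~\ref{app:factor-geometry}. Then \eqref{eq:dre-output-normalization-bound} gives \(\alpha_P=O(1)\).

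\textbf{Step 4: count queries and apply readout.} The encodings of Appendix~\ref{app:thermal-inputs} supply the inputs with polylog gates. The weights \(e^{\pm\alpha_H sz}\) with bounded \(s\) require \(O(\log(1/\varepsilon))\) quadrature nodes. Each inverse costs \(O(\mathcal R\log)\). The two pseudoinverses cost \(O(\log)\) each, and composing these gives \(O(\log^3(e+1/\varepsilon))\). Proposition~\ref{prop:lqr-be-classical-main} with \(\alpha_b=1\), \(\alpha_x=O(1)\) and accuracy \(\varepsilon=\Theta(\varepsilon_u)\) then gives \eqref{eq:thermal-dre-total}.

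The main obstacle is Step 2. Showing that the resolvent of the nonnormal Hamiltonian is \(O(1)\) on fixed contours, uniformly in \(n\), requires the uniform exponential stability of the closed loop with \(n\)-independent constants.
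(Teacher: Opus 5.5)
Your overall reduction is the paper's: show that \(M\), \(\mathcal R_{\rm DRE}\), \(\sigma_{\min}(\Pi_+R_0)^{-1}\) and \(\alpha_P\) are bounded independently of \(n\), then apply the general DRE construction and the readout of Proposition~\ref{prop:lqr-be-classical-main}. Steps~1 and~4 agree with the paper, but Step~3 is wrong as written.

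Here \(\mathcal H_{\rm DRE}=-\mathcal H_{\rm CARE}\), so the stabilizing graph is the \emph{right} branch. The CARE gives \(\mathcal H_{\rm DRE}[\mathsf I;X]=[\mathsf I;X](G_{\rm c}X-A_{\rm c})\), and \(G_{\rm c}X-A_{\rm c}\) has right-half-plane spectrum. Hence \(\operatorname{graph}X=\operatorname{ran}\Pi_+\). The kernel is \(\ker\Pi_+=\operatorname{ran}\Pi_-=\operatorname{ran}[-Y;\mathsf I]\), where \(Y\succeq0\) solves the dual CARE \(A_{\rm c}Y+YA_{\rm c}-qY^2+G_{\rm c}=0\).

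Apart from the mislabeling, an upper bound on \(\|X\|\) cannot bound the angle between \(\operatorname{ran}E_1=\operatorname{graph}0\) and \(\operatorname{graph}X\) from below. That distance is \(\|X(\mathsf I+X^2)^{-1/2}u\|\), so it needs a lower bound on \(\lambda_{\min}(X)\).

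The repair uses the \(Y\) you already mention. Zero-control comparison gives \(\|Y\|\le\|G_{\rm c}\|/(2\nu)=1/(2r\nu)\). Lemma~\ref{lem:factor-graph-distance-app} with \(P_0=0\) then gives \(\sigma_{\min}(\Pi_+E_1)\ge\min_{\|u\|=1}\|(\mathsf I+Y^2)^{-1/2}u\|\ge(1+\|Y\|^2)^{-1/2}\). The paper instead uses Lemma~\ref{lem:factor-positive-graphs-app} to write \(\Pi_+R_0=[\mathsf I;X](\mathsf I+YX)^{-1}\), and obtains \(\sigma_{\min}\ge(1+\|X\|\|Y\|)^{-1}\).

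Your Step~2 is correct but takes a harder route than the paper. Triangularizing with \([\mathsf I,0;X,\mathsf I]\) leaves the nonnormal closed loop \(F=A_{\rm c}-G_{\rm c}X\). You then need \(F^*X+XF=-q\mathsf I-XG_{\rm c}X\) together with the two-sided bound \(c\mathsf I\preceq X\preceq(q/2\nu)\mathsf I\). The lower bound follows by evaluating the CARE at a bottom eigenvector of \(X\) and using \(\|A_{\rm c}\|\le\nu+4\kappa\). This yields \(\|e^{tF}\|\le\sqrt{\|X\|/\lambda_{\min}(X)}\,e^{-qt/(2\|X\|)}\), and hence an \(n\)-independent resolvent bound on the imaginary axis.

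The paper shears with the dual solution instead. Because \(Q=q\mathsf I\) is scalar and \(A_{\rm c}\) is symmetric, \(A_{\rm c}-qY\) is Hermitian with \(A_{\rm c}-qY\preceq-\nu\mathsf I\). Conjugating by \([\mathsf I,\pm Y;0,\mathsf I]\) therefore gives Hermitian diagonal blocks and the explicit bound \((1+\|G_{\rm c}\|/(2\nu))^2(1/\nu+q/\nu^2)\). This needs no semigroup estimate and no lower bound on \(X\). The same \(Y\) then serves Step~3, and the bound supplies the computable \(\underline\Delta_{\rm ax}\) the algorithm needs to place its contours.

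One further point: Theorem~\ref{thm:dre-construction-main} and \eqref{eq:dre-output-normalization-bound} assume nodewise constant-factor inverse-norm estimates. The paper avoids this by using the single threshold \(\underline\Delta_{\rm ax}/(4\alpha_H)\) at every node and bounding the raw block scales directly. On your route you should at least note that this uniform estimate is within a parameter-dependent factor of each node's inverse norm. That norm lies between \(1/(1+|z|)\) and \(O(1)\) on the rectangles.
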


The factor \(\varepsilon_{u}^{-1}\) outside the logarithms
comes from classical amplitude estimation. The solution encoding
uses the three precision logarithms of
Proposition~\ref{prop:dre-general-resources-app}; reading the control
value adds the factor in Proposition~\ref{prop:lqr-be-classical-main}.

For fixed \(\nu,\kappa,q,r>0\) and a bounded time range, the three
complexity parameters satisfy \(\mathcal R_{\rm DRE}=O(1)\),
\(\sigma_{\min}(\Pi_+R_0)^{-1}=O(1)\), and \(\alpha_P=O(1)\).
Their bounds are independent of the matrix dimension \(n\) but depend
on the original problem parameters \(\nu,\kappa,q,r\);
Appendix~\ref{app:thermal-dre} gives this dependence explicitly.

This result concerns a single classical feedback evaluation as the
network grows with fixed local physics and bounded state normalization.
Increasing the resolution of a heat equation on a fixed interval would
instead change the diffusion scale and hence the same complexity
parameters. State loading and time-weight arithmetic remain part of
the gate count.

We evaluate the numerical construction and its resources with the unscaled
path Laplacian, \(\kappa=1/4\), \(q=2\), and \(T=1\).
The baseline uses \(\nu=5\), \(r=0.5\), and target spectral error
\(\varepsilon=10^{-8}\). The right contour is the physical rectangle
\(\operatorname{Re}z\in[1.7,9.624555320336759]\),
\(|\operatorname{Im}z|\le3.3\); the left contour is its image under
\(z\mapsto-z\). The baseline uses 104 Gauss--Legendre nodes per contour,
with 28 on each horizontal edge and 24 on each vertical edge.

The algorithm and reference trajectories in
Figure~\ref{fig:hbc-control} overlap at the displayed scale.
Both costs are approximately \(0.19173020837115\);
the maximum state and control differences are
\(1.00\times10^{-11}\) and \(3.81\times10^{-10}\), respectively.
The trajectory uses repeated classical feedback evaluations.

\begin{figure}[H]
\centering
\includegraphics[width=\textwidth]{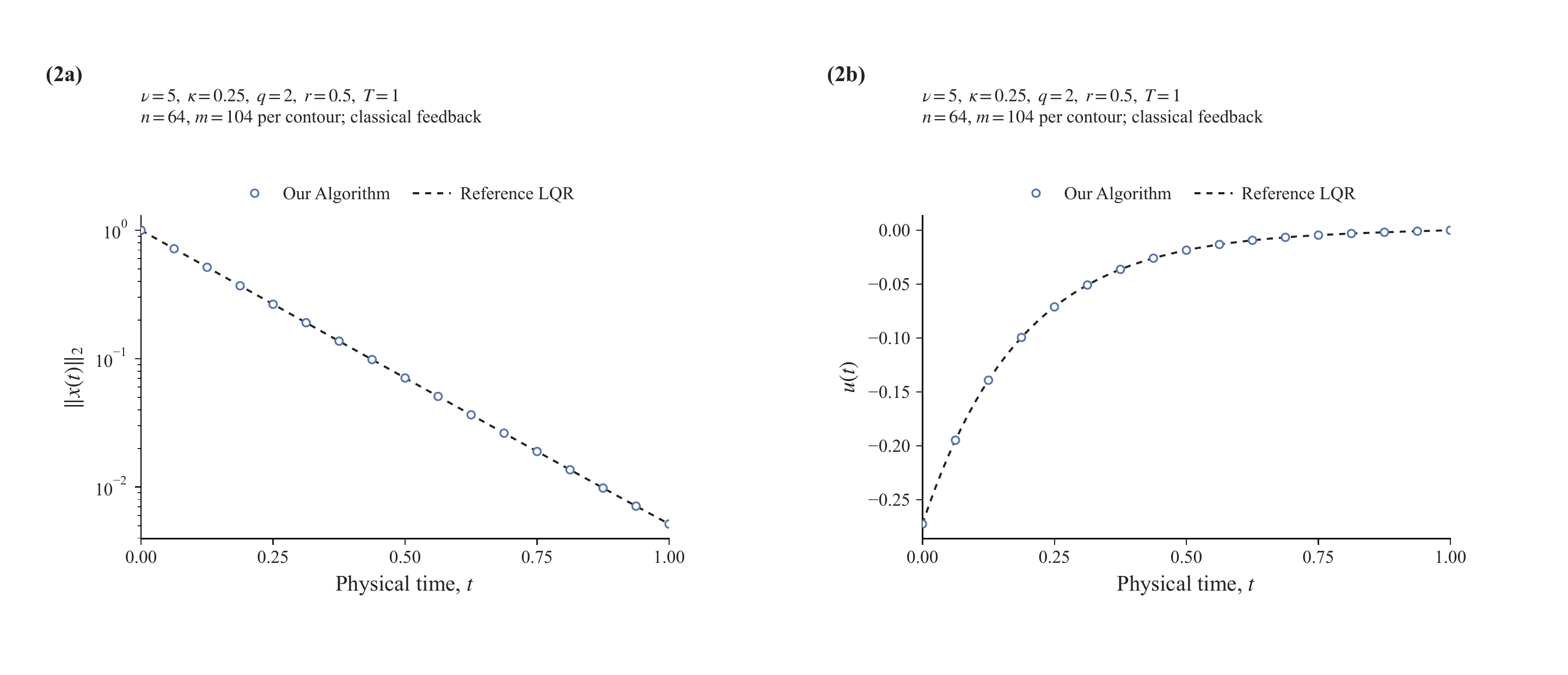}
\small\caption{%
Classical closed-loop response with \(n=64\) and 104 nodes per contour
under the common settings of Appendix~\ref{app:hbc-numerics}.
(a) State norm. (b) Scalar control.
Open circles without connecting lines show Our Algorithm; dashed
curves show Reference LQR. One circle is displayed every 25 samples.
The algorithm uses cubic-spline interpolation of 161 Riccati evaluations;
the reference uses the independent closed-form solution.
Each trajectory stores all 401 sampled values.
Amplitude estimation is not simulated.%
}
\label{fig:hbc-control}
\end{figure}

Figure~\ref{fig:hbc-dimension-cost} compares dimension dependence for
the same fixed physical parameters and target accuracy.
The executed RK4 arithmetic count grows from \(4.95\times10^7\) to
\(1.276\times10^{10}\) over \(n=64,128,256,512,1024\), while the
complete quantum construction uses \(82\,516\,028\) logical
Hamiltonian-oracle queries at every dimension.
The maximum measured spectral errors are \(5.18\times10^{-11}\)
for RK4 and \(1.88\times10^{-10}\) for the quantum success-block
simulation, both below \(10^{-8}\).
The distinct units and output representations are specified in the caption;
the counting and numerical-validation details are given in
Appendix~\ref{app:hbc-numerics}.

\begin{figure}[H]
\centering
\includegraphics[width=0.95\textwidth]{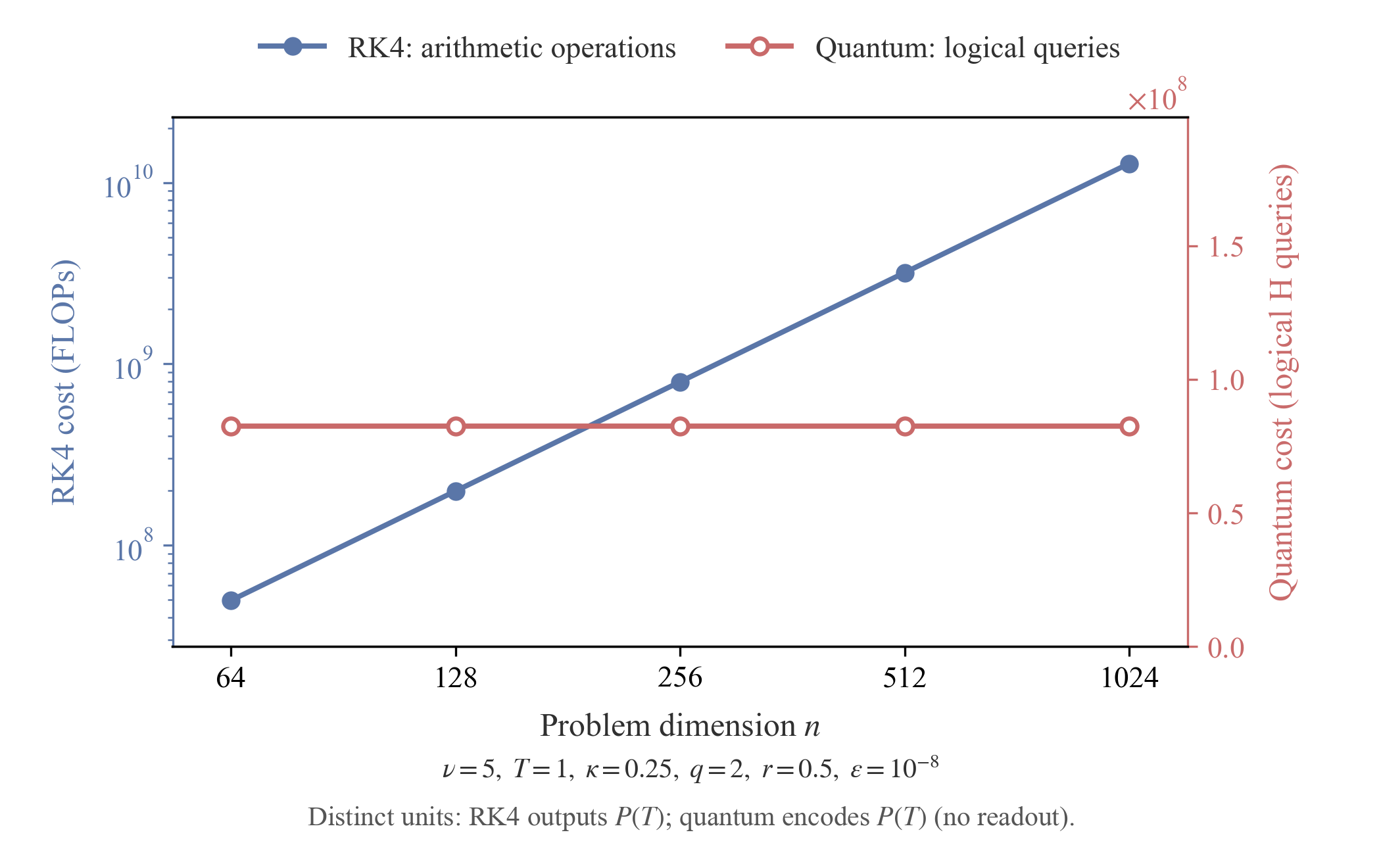}
\small\caption{%
Dimension dependence at \(\nu=5\), \(T=1\), \(\kappa=1/4\), \(q=2\),
\(r=0.5\), and \(\varepsilon=10^{-8}\), with the unscaled path Laplacian.
The blue curve uses the logarithmic left axis and counts RK4 additions,
subtractions, multiplications, and divisions, including all five refinement
runs and error-indicator calculations. The red curve uses the linear right
axis and counts logical Hamiltonian-oracle queries for one complete
\(P(1)\) block-encoding, with 104 nodes per contour and frozen QSP
degrees \((757,29,57)\). It includes the transient term and both recovery
inverses; no normalization amplification is used.
RK4 outputs an explicit matrix, whereas the quantum output is a block-encoding
without readout. The axes have different units, so neither a curve crossing
nor a count ratio represents a runtime speedup.%
}
\label{fig:hbc-dimension-cost}
\end{figure}

\clearpage
\section{Query lower bounds and BQP completeness}
\label{sec:lower-bounds}

We now establish two limitations on computing Riccati solution operators.
For each of the four problems, one pair of oracle inputs gives a lower
bound for the product of the generalized singularity factor and the
prescribed algebraic output normalization or dynamic initialization
condition number. We then embed
quantum circuits into single-control DARE and DRE instances, proving
hardness for solution encodings with constant normalization and accuracy.
These two conclusions use different input families and access models.

\subsection{Product query lower bounds}
\label{subsec:product-lower-bounds}

The lower-bound idea is to hide a visible change in the solution behind
a very small change in the available input oracles.
Fix public parameters \(0<\mu_1,\mu_2\le1/8\) and a hidden bit
\(c\in\{0,1\}\). Each family below has
\(\mathcal R=\Theta(\mu_1^{-1})\). The algebraic families have
solution norms and prescribed output normalizations of order
\(\mu_2^{-1}\); the dynamic families have initialization condition
numbers of this order. These estimates hold for both inputs with
uniform constants. Thus \(\mu_1\) controls spectral separation,
whereas \(\mu_2\) controls the algebraic solution scale or a small
initial component along the selected dynamic branch.
Access is through the complete Julia input unitaries specified in
Appendix~\ref{app:product-lower-bounds}, including their adjoints and
controlled versions. All other gates, numerical data, auxiliary
registers, and signal spaces are common to the two inputs. In
particular, the declared output normalization and the requested time or
step are public and independent of \(c\). We count the worst-case
queries of one coherent circuit that constructs a correct solution
encoding for either input.

The input normalizations and contours are fixed as in
Appendix~\ref{app:product-lower-bounds}. For CARE and DARE, let
\(X_c\) be the stabilizing solution and use the common actual output
normalization \(\alpha_X=\Theta(\|X_c\|)=\Theta(1/\mu_2)\)
constructed in Proposition~\ref{prop:algebraic-product-normalization-app}.
Require \(\varepsilon\le\|X_1-X_0\|/8\) in the accuracy range
of the corresponding quantum problem. For these families, the solution
difference equals the smaller solution norm. For DRE and RR, use
\(\alpha_{P(t)}=\alpha_{P_k}=3\) and \(\varepsilon\le1/16\),
respectively at the public time and step
\[
t=\frac{\log(1/\mu_2)}{2\mu_1},\qquad
k=\left\lceil\frac{\log(1/\mu_2)}{2\log(1+\mu_1)}\right\rceil.
\]
These prescribed normalizations are part of the lower-bound tasks.
After division by the common output normalization, the two solutions
remain a constant distance apart. Solving the problem must therefore
reveal which input was supplied, whereas each query changes the two
computations by only \(O(\mu_1\mu_2)\). Accumulating a constant
separation then requires \(\Omega((\mu_1\mu_2)^{-1})\) queries.

\begin{theorem}[DARE product query lower bound]
\label{thm:dare-product-lower-main}
Under the above access and accuracy conventions, there exists a family
of scalar instances of Problem~\ref{prob:quantum-dare} requiring
\[
q_{\rm DARE}=\Omega\!\left(\mathcal R_{\rm DARE}\alpha_X\right)
\]
quantum queries in the worst case. The two factors can be made
independently arbitrarily large.
\end{theorem}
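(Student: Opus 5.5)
The plan is a two-input hybrid (adversary) argument on scalar instances, following the outline given just before the theorem. I would construct scalar data \((a_c,g_c,q_c)\), \(c\in\{0,1\}\), so that four properties hold uniformly in \(0<\mu_1,\mu_2\le1/8\):
\begin{itemize}
\item[(i)] the closed-loop value \(F_c=a_c/(1+g_cX_c)\) lies at distance \(\Theta(\mu_1)\) inside the unit circle, and its reciprocal partner at distance \(\Theta(\mu_1)\) outside;
\item[(ii)] \(\|X_c\|=\Theta(\mu_2^{-1})\), with \(X_1-X_0\) of size comparable to \(\min_c|X_c|\);
\item[(iii)] the two pencils \((M_c,L_c)\) of \eqref{eq:dare-pencil} differ in operator norm by only \(O(\mu_1\mu_2)\);
\item[(iv)] \(\alpha_{M},\alpha_L\) stay \(O(1)\) and are common to both inputs.
\end{itemize}
A natural ansatz is \(a_c\) close to \(1\), with \(g_c\) and \(q_c\) tuned so that \(g_cX_c=\Theta(\mu_1)\). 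The hidden bit would shift a single entry by \(\Theta(\mu_1\mu_2)\). Since \(X\) solves the scalar quadratic \(gX^2+(1-a^2-qg)X-q=0\), whose derivative in the data is of order \(X/\mu_1\) near a slowly decaying closed loop, such a perturbation moves \(X\) by \(\Theta(\mu_2^{-1})\).

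Step 1 is to verify the spectral and normalization parameters. The pencil \(M_c-zL_c\) is \(2\times2\) with finite eigenvalues \(F_c\) and \(1/\overline{F_c}\). On a fixed circle \(\Gamma_<\) of radius \(1-\Theta(\mu_1)\) separating them, I would compute \((zL_c-M_c)^{-1}\) explicitly by Cramer's rule. This should give \(\mathcal R_{\rm DARE}=\Theta(\mu_1^{-1})\) for both \(c\), with uniform constants. I would then invoke Proposition~\ref{prop:algebraic-product-normalization-app} for a common public normalization \(\alpha_X=\Theta(\|X_c\|)=\Theta(\mu_2^{-1})\). Lemma~\ref{lem:dare-graph-recovery} confirms that the stabilizing solution is the one being encoded.

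Step 2 is the hybrid argument. The Julia unitaries of the encodings are smooth in the matrix entries, so \(\|U_{M_1}-U_{M_0}\|\) and \(\|U_{L_1}-U_{L_0}\|\) are \(O(\mu_1\mu_2)\), and the same holds for the adjoints and controlled versions. A \(q\)-query circuit therefore produces final unitaries that differ by at most \(O(q\mu_1\mu_2)\) in operator norm. On the other hand, both outputs are correct encodings with error \(\varepsilon\le\|X_1-X_0\|/8\) at the same normalization \(\alpha_X\) and on the same signal space. Their designated blocks must therefore differ by at least \((\|X_1-X_0\|-2\varepsilon)/\alpha_X=\Omega(1)\), and the block difference is bounded by the unitary difference. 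This forces \(q=\Omega((\mu_1\mu_2)^{-1})=\Omega(\mathcal R_{\rm DARE}\alpha_X)\). Since \(\mu_1\) and \(\mu_2\) are free parameters, the two factors can be made independently large.

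The main obstacle is Step 1, specifically keeping the three scales decoupled. The naive choice (\(g=0\), \(a=1-\mu_1\)) forces \(q\sim\mu_1/\mu_2\) to reach \(X\sim\mu_2^{-1}\). That inflates \(\alpha_M\) and hence \(\mathcal R_{\rm DARE}\), and it also makes the required input gap large. I would first try a rescaling in which the bit perturbs \(a\) or \(g\) rather than \(q\), with \(q=O(1)\). I would then check by direct \(2\times2\) resolvent computation that \(\sup_{\Gamma_<}\|(zL-M)^{-1}\|=\Theta(\mu_1^{-1})\) and is not \(\Theta(\mu_1^{-1}\mu_2^{-1})\). If a cross term still couples \(\mu_2\) into \(\mathcal R_{\rm DARE}\), I would use a diagonal similarity that rescales the lower coordinate by \(\mu_2\). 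This balances the pencil entries without changing \(\mathcal R\), at the cost of recomputing the pencil normalizations in \eqref{eq:dare-pencil}.
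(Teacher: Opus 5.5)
Your Step~2 is the paper's argument: continuity of the Julia unitaries, Lemma~\ref{lem:query-hybrid-app}, and the output separation $(\|X_1-X_0\|-2\varepsilon)/\alpha_X=\Omega(1)$. The gap is Step~1, which is the actual content of the theorem. You never exhibit a family satisfying (i)--(iv), and the guidance you give would not produce one.

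(a) Your sensitivity heuristic does not yield the claimed shift. A $\Theta(\mu_1\mu_2)$ change in an entry with $\partial X=O(X/\mu_1)$ moves $X$ by only $O(X\mu_2)=O(1)$. That is exactly what happens if the bit perturbs $a$, and perturbing $q$ is weaker still. Only $g$ works: your ansatz $gX=\Theta(\mu_1)$ forces $g=\Theta(\mu_1\mu_2)$, so the shift is an $O(1)$ \emph{relative} change of $g$, and $X=(a/F-1)/g$ scales like $1/g$.

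(b) The trial $q=O(1)$ couples the scales. The finite eigenvalues obey $(F-1)^2/F=(a-1)^2/a+gq/a$, so (i) together with $a=O(1)$ requires $gq=O(\mu_1^2)$. With $q$ of order one this forces $g=O(\mu_1^2)$, which contradicts $g=\Theta(\mu_1\mu_2)$ whenever $\mu_1\ll\mu_2$. For instance, $a=1+\Theta(\mu_1)$ puts the pole at distance $\Theta(\sqrt{\mu_1\mu_2})$ and gives $X=\Theta((\mu_1\mu_2)^{-1/2})$.

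(c) The fallback defeats itself. Conjugation by $\operatorname{diag}(1,s)$ preserves the form \eqref{eq:dare-pencil} only by mapping $(a,g,q)\mapsto(a,g/s,sq)$ and $X\mapsto sX$. It therefore rescales the solution by the same factor, and $s=\mu_2$ destroys (ii). Moreover, neither resolvent norms nor $\alpha_M,\alpha_L$ are invariant under this non-unitary equivalence, so $\mathcal R_{\rm DARE}$ does change.

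(d) Proposition~\ref{prop:algebraic-product-normalization-app} is proved for the paper's specific family. For a family of your own you would have to establish $\|\Pi_<\|=O(1)$ yourself, and that is precisely the property that keeps $\mu_2$ out of $\mathcal R_{\rm DARE}$.

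The paper resolves Step~1 by designing the instance from its spectral data. It prescribes $S_D^{(c)}=L_c^{-1}M_c=r\Pi^{(c)}+r^{-1}(\mathsf I-\Pi^{(c)})$ with $r=1-\mu_1$. Here $\Pi^{(c)}$ projects onto $\operatorname{graph}\bigl((1+c)/\mu_2\bigr)$ along the fixed line $\operatorname{span}[1;-1]$, so the dual solution is $Y=1$ for both inputs. Reading $A_c,G_c,Q_c$ off the entries of $S_D^{(c)}$ is consistent because $\det S_D^{(c)}=1$. This gives \eqref{eq:dare-product-data-app}: an open-loop unstable system with $A_c=1+\Theta(\mu_1)$, $Q_c=\Theta(\mu_1)$, and $G_c=\Theta(\mu_1\mu_2)$.

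Because the nearly vertical graph meets the fixed kernel at a bounded angle, $\|\Pi^{(c)}\|\le\sqrt2$ however large $X_c$ is. The two-pole formula $(zL_c-M_c)^{-1}=\bigl(\Pi^{(c)}/(z-r)+(\mathsf I-\Pi^{(c)})/(z-r^{-1})\bigr)L_c^{-1}$ then gives $\mathcal R_{\rm DARE}=\Theta(\mu_1^{-1})$ with no $\mu_2$ dependence. Moreover, $S_D^{(1)}-S_D^{(0)}=(r-r^{-1})(\Pi^{(1)}-\Pi^{(0)})$ is the $O(\mu_1)$ eigenvalue splitting times the $O(\mu_2)$ angle between two nearly vertical lines of slopes $1/\mu_2$ and $2/\mu_2$. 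This is where the product $\mu_1\mu_2$ comes from; all three coefficients change, each by $O(\mu_1\mu_2)$. The same bounded projector is what lets small circles give the $\Theta(1)$ raw normalization in Proposition~\ref{prop:algebraic-product-normalization-app}. With this family in hand, your Step~2 completes the proof as in the paper.
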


\begin{proof}[Informal proof]
We construct two scalar DARE inputs whose finite pencil eigenvalues are
\(r=1-\mu_1\) and \(r^{-1}\). Their stable graph projectors and
solutions are
\begin{equation}
\Pi_<=\frac1{1+c+\mu_2}
\begin{bmatrix}\mu_2&\mu_2\\1+c&1+c\end{bmatrix},\qquad
X_c=\frac{1+c}{\mu_2}.
\label{eq:dare-product-idea-main}
\end{equation}
Appendix~\ref{app:product-lower-bounds} gives positive scalar
coefficients realizing these data and proves
\(\mathcal R_{\rm DARE}=\Theta(\mu_1^{-1})\).
The parameter \(\mu_1\) fixes spectral separation, while
\(\mu_2\) fixes the solution and output scales.

The two complete pencil input unitaries differ by only
\(O(\mu_1\mu_2)\). Since
\(\|X_1-X_0\|=1/\mu_2\) and \(\alpha_X=\Theta(1/\mu_2)\),
the normalized solution blocks remain a constant distance apart under
the stated error bound. Lemma~\ref{lem:query-hybrid-app} therefore requires
\(\Omega((\mu_1\mu_2)^{-1})\) queries to produce their constant
separation, which is the claimed product. The complete coefficient,
contour, and oracle calculations appear in
Appendix~\ref{app:product-lower-bounds}.
\end{proof}

\begin{theorem}[CARE product query lower bound]
\label{thm:care-product-lower-main}
Under the above access and accuracy conventions, there exists a family
of two-state instances of Problem~\ref{prob:quantum-care} requiring
\[
q_{\rm CARE}=\Omega\!\left(\mathcal R_{\rm CARE}\alpha_X\right)
\]
quantum queries in the worst case. The two factors can be made
independently arbitrarily large.
\end{theorem}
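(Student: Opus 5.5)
The plan follows the template of Theorem~\ref{thm:dare-product-lower-main}. I would build two CARE inputs \(\mathcal H^{(c)}\), \(c\in\{0,1\}\), with three properties. First, their complete Julia input unitaries differ by \(O(\mu_1\mu_2)\) in operator norm. Second, both have \(\mathcal R_{\rm CARE}=\Theta(\mu_1^{-1})\) on a common public contour. Third, the stabilizing solutions satisfy \(\|X_c\|=\Theta(\mu_2^{-1})\) and \(\|X_1-X_0\|=\min_c\|X_c\|\). With the common normalization \(\alpha_X=\Theta(1/\mu_2)\) from Proposition~\ref{prop:algebraic-product-normalization-app} and \(\varepsilon\le\|X_1-X_0\|/8\), the two normalized output blocks are a constant distance apart. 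Lemma~\ref{lem:query-hybrid-app} then gives \(\Omega((\mu_1\mu_2)^{-1})=\Omega(\mathcal R_{\rm CARE}\alpha_X)\).

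\emph{Why two states.} I would first check that a scalar CARE cannot realize the product. For \(\mathcal H=[a,-g;-q,-a]\) we have \(X=(a+\lambda)/g\) with \(\lambda=\sqrt{a^2+gq}\). Forcing \(\lambda=\Theta(\mu_1)\), \(X=\Theta(1/\mu_2)\) and a bounded input scale makes the hidden-bit change in \(g\) or \(q\) too large relative to \(\mu_1\mu_2\). The fix is a second state. I would take \(A=\operatorname{diag}(-\mu_1,-1)\)-type data, with \(G\) of rank one coupling the states weakly and \(Q\) chosen so that \(\operatorname{ran}\Pi_-\) is the graph of an explicit \(2\times2\) matrix \(X_c\). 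The concrete construction would reverse-engineer the Hamiltonian from its target projector. Prescribe \(\Pi_-^{(c)}\) with upper row depending on \(\mu_2\) and lower row on \(1+c\), analogous to \eqref{eq:dare-product-idea-main}. Prescribe eigenvalues \(\pm\mu_1\) on a slow mode and \(\pm1\) on a fast mode, and set \(\mathcal H^{(c)}=\mu_1(\Pi_+-\Pi_-)\) on the slow invariant pair plus a fixed fast part. The bit \(c\) should enter only through a term of size \(\mu_1\times(\text{graph tilt})=O(\mu_1\mu_2)\). The remaining work is to verify that the result has Hamiltonian form \([A,-G;-Q,-A^*]\) with \(Q\succeq0\), \(G\succeq0\), \((A,B)\) stabilizable and \((Q^{1/2},A)\) detectable. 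Then \(X_c\succeq0\) is the stabilizing solution and Lemma~\ref{lem:care-graph-recovery} applies.

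\emph{Key estimates, in order.}
\begin{enumerate}
\item Compute \(X_c\) explicitly from the invariant graph \eqref{eq:care-graph-invariance}, and check \(\|X_c\|=\Theta(1/\mu_2)\) and \(\|X_1-X_0\|=\|X_0\|\).
\item Bound \(\mathcal R_{\rm CARE}\). Use a fixed rectangle \(\Gamma_-\) crossing the real axis at \(-\mu_1/2\) and the slow-mode Schur/Jordan form. The resolvent on \(\Gamma_-\) is then \(\Theta(\mu_1^{-1})\) uniformly in \(\mu_2\). The nonnormal coupling must be kept of size \(O(1)\), not \(O(1/\mu_2)\), so that \(\mu_2\) does not enter \(\mathcal R\).
\item Fix \(\alpha_H=O(1)\) common to both inputs. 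The Julia unitaries \(\bigl[\begin{smallmatrix}H&\sqrt{I-HH^*}\\ \sqrt{I-H^*H}&-H^*\end{smallmatrix}\bigr]\) then differ by \(O(\|\mathcal H^{(1)}-\mathcal H^{(0)}\|)\), using smoothness of the square roots away from the unit singular value, as in Appendix~\ref{app:product-lower-bounds}.
\item Apply the hybrid lemma.
\end{enumerate}

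\emph{Main obstacle.} Steps 1 and 2 must hold simultaneously with an \(O(\mu_1\mu_2)\) input difference. The large solution norm tilts the graph nearly onto the lower coordinate. This tends either to make the Hamiltonian entries of order \(1/\mu_2\), which inflates \(\alpha_H\), or to create nonnormal resolvent growth of order \(1/\mu_2\), which inflates \(\mathcal R\). I would first try to place the \(1/\mu_2\) scale entirely in the small control weight: \(G=\mu_2\,e_1e_1^*\)-like, with \(Q\) chosen so that \(gq\) on the slow mode equals \(\mu_1^2\). The hidden bit would then multiply a coupling entry of \(Q\) proportional to \(\mu_1\mu_2\). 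I would check in the explicit \(4\times4\) computation that the resolvent along \(\Gamma_-\) stays \(O(\mu_1^{-1})\). If this fails, I would fall back on a similarity by \(\operatorname{diag}(\mathsf I,\mu_2^{-1/2}\mathsf I)\)-type symplectic scaling applied to the scalar construction, and redistribute its effect across the second state.
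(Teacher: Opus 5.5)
Your outer argument (Julia continuity at a common constant \(\alpha_H\), the common output scale, and Lemma~\ref{lem:query-hybrid-app}) matches the paper's. But for an existence statement the explicit family is the proof, and you never produce one. The Hamiltonian-form and positivity checks are deferred, and the obstacle you name is left open with a ``first try'' and a fallback. Both concrete hints would fail.
\begin{itemize}
\item An open-loop stable slow mode (your \(A=\operatorname{diag}(-\mu_1,-1)\)-type data, \(a<0\)) gives \(X=(a+\lambda)/g=q/(\lambda+|a|)\le q/\lambda\). So \(X=\Theta(\mu_2^{-1})\) with \(\lambda=\Theta(\mu_1)\) forces \(q=\Omega(\mu_1/\mu_2)\) and an unbounded input scale.
\item The choice \(g=\mu_2\), \(gq=\mu_1^2\) makes \(q=\mu_1^2/\mu_2\) unbounded. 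Once \(\lambda=\Theta(\mu_1)\), it also yields only \(X=(a+\lambda)/\mu_2=O(\mu_1/\mu_2)\).
\end{itemize}
Your claim that a scalar CARE cannot realize the product is also false. The paper's slow block is the scalar CARE \(a_c=\mu_1(1-2(1+c)\mu_2)\), \(g_c=4\mu_1(1+c)\mu_2(1-(1+c)\mu_2)\), \(q=\mu_1\). It has \(a_c^2+g_cq=\mu_1^2\) identically, \(X_c=1/(2(1+c)\mu_2)\), entry changes \(O(\mu_1\mu_2)\), and block norm at most \(2\mu_1\). What a second state actually buys is a tight input scale. For a scalar Hamiltonian with eigenvalues \(\pm\mu_1\), \(\mathcal H^2=\mu_1^2\mathsf I\) gives \(\|(z\mathsf I-\mathcal H)^{-1}\|\ge(\|\mathcal H\|-\mu_1)/\mu_1^2\) wherever the contour crosses \((-\mu_1,\mu_1)\). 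Hence \(\mathcal R=O(\mu_1^{-1})\) at constant \(\alpha_H\) forces \(\|\mathcal H\|=O(\mu_1)\), far below \(\alpha_H\).

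The missing idea, which also dissolves your obstacle, is that same identity applied to the slow block. You have \(\mathcal H_{\rm slow}=\mu_1(\mathsf I-2\Pi_-)\) and \((z\mathsf I-\mathcal H_{\rm slow})^{-1}=\Pi_-/(z+\mu_1)+(\mathsf I-\Pi_-)/(z-\mu_1)\). So \(\|\mathcal H_{\rm slow}\|=O(\mu_1)\) and \(\mathcal R=\Theta(\mu_1^{-1})\) both reduce to \(\|\Pi_-\|=O(1)\). That holds whenever the kernel graph \(\operatorname{ran}[-Y;1]\), with \(Y=\Theta(1)\), stays at a bounded angle from the nearly vertical graph of \(X_c\). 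No \(1/\mu_2\) entries or nonnormal growth appear.

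Executing your own reverse-engineering with the DARE-type projector (\(Y=1\), \(X_c=(1+c)/\mu_2\)) gives \(a_c=\mu_1\frac{1+c-\mu_2}{1+c+\mu_2}\), \(g_c=\frac{2\mu_1\mu_2}{1+c+\mu_2}\), \(q_c=\frac{2\mu_1(1+c)}{1+c+\mu_2}\), which already works. The paper's data correspond to \(Y_c=2-2(1+c)\mu_2\), for which \(q=\mu_1\) does not depend on \(c\). Either way the slow mode is open-loop unstable at rate close to \(\mu_1\), with expensive control \(g=\Theta(\mu_1\mu_2)\).

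The paper then does the following:
\begin{itemize}
\item It adds the fast block \(\begin{bmatrix}0&-1\\-1&0\end{bmatrix}\) (\(A=0\), \(G=Q=1\)), so that \(\|\mathcal H^{(c)}\|=1\) and \(\alpha_H=2\) is tight.
\item It rotates the state coordinates by \(O=\frac1{\sqrt2}\begin{bmatrix}1&1\\1&-1\end{bmatrix}\) to couple the two states.
\item It uses the rectangle \(\partial\{-3/2\le x\le-\mu_1/2,\ |y|\le1/2\}\).
\item It obtains \(\|\mathcal H^{(1)}-\mathcal H^{(0)}\|\le\sqrt{24}\,\mu_1\mu_2\) and \(\|X_1-X_0\|=1/(4\mu_2)=\|X_1\|\).
\end{itemize}
Proposition~\ref{prop:algebraic-product-normalization-app} is proved only for that specific family. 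For your own data you must rerun its small-circle estimates; they carry over because they use only the spectrum \(\{\pm\mu_1,\pm1\}\) and bounded projector norms. With explicit data of this kind, your steps 1--4 go through.
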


\begin{theorem}[DRE product query lower bound]
\label{thm:dre-product-lower-main}
Under the above access and accuracy conventions, there exists a family
of two-state instances of Problem~\ref{prob:quantum-dre} requiring
\[
q_{\rm DRE}=\Omega\!\left(\mathcal R_{\rm DRE}\kappa(\Pi_+R_0)\right)
\]
quantum queries in the worst case. The two factors can be made
independently arbitrarily large.
\end{theorem}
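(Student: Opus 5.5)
We will prove Theorem~\ref{thm:dre-product-lower-main} in the same way as the DARE bound. We exhibit two inputs indexed by a hidden bit \(c\), whose complete Hamiltonian and initial-column oracles differ by \(O(\mu_1\mu_2)\). At the public time \(t=\log(1/\mu_2)/(2\mu_1)\), their solutions differ by a constant after division by the public normalization \(\alpha_{P(t)}=3\). Lemma~\ref{lem:query-hybrid-app} then gives \(\Omega((\mu_1\mu_2)^{-1})\) queries.

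\textbf{Instance family.} We take \(n=1\), so the Hamiltonian \(\mathcal H_{\rm DRE}\) is \(2\times2\). We choose scalars \(a,g,q\) so that its eigenvalues are \(\pm\lambda\) with \(\lambda=\Theta(\mu_1)\) after normalization by \(\alpha_H=\Theta(1)\), and so that the eigenvectors stay uniformly well conditioned. For instance, \(a=-\lambda\) with small \(g,q\) realized through one hidden coupling entry, or the diagonal choice \(\mathcal H=\operatorname{diag}(\lambda,-\lambda)\) rotated by a fixed \(O(1)\)-conditioned similarity. We take contours \(\Gamma_\pm\) to be circles of radius \(\Theta(\mu_1)\) around \(\pm\lambda\), or rectangles of width \(\Theta(\mu_1)\). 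Then \((1+|z|)\|(z-\widehat{\mathcal H})^{-1}\|=\Theta(\mu_1^{-1})\), which gives \(\mathcal R_{\rm DRE}=\Theta(\mu_1^{-1})\) for both inputs. We place the initial slope \(P_0\) so that the initial graph \([1;P_0]\) has component \(\Theta(\mu_2)\) along the right eigenvector. This makes \(\sigma_{\min}(\Pi_+R_0)=\Theta(\mu_2)\) and \(\|\Pi_+R_0\|=\Theta(1)\), hence \(\kappa(\Pi_+R_0)=\Theta(\mu_2^{-1})\).

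\textbf{Hidden bit and solution separation.} The bit \(c\) enters in one of two ways: as the sign of the tiny right-branch component (\(\pm\mu_2\)), or as a perturbation of the stable/unstable slopes of size \(O(\mu_1\mu_2)\). At the chosen time the forward flow amplifies the right component by \(e^{\lambda t}=\Theta(\mu_2^{-1/2})\). By \eqref{eq:dre-flow-graph} and the explicit \(2\times2\) flow, the slope \(P(t)\) therefore moves a constant distance from the stable-slope value depending on \(c\). We then verify three facts: \(|P^{(1)}(t)-P^{(0)}(t)|\ge 1/2\); \(\sup_s|P(s)|\le 1\), so that \(\alpha_{P(t)}=3\) is valid; and the solution exists on \([0,t]\). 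Since both the error \(\varepsilon\le1/16\) and \(\alpha_{P(t)}=3\) are public, any correct algorithm must distinguish the two normalized blocks by a constant.

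\textbf{Oracle closeness, the main obstacle.} The hard step is making the oracle difference \(O(\mu_1\mu_2)\) while the initialization difference is only \(O(\mu_2)\). We would choose the bit-dependent data so that \(R_0\) is identical for both inputs and \(c\) enters only through \(\mathcal H\). For example, we would rotate the unstable eigendirection by an angle of size \(\Theta(\mu_2)\) while keeping the eigenvalues at \(\pm\lambda\). The change in \(\mathcal H\) then has size \(\lambda\cdot\mu_2=\Theta(\mu_1\mu_2)\), because the rotation is multiplied by the eigenvalue gap. This moves the initial graph's right component from \(\Theta(\mu_2)\) for \(c=0\) to \(\Theta(2\mu_2)\), or to the opposite sign, for \(c=1\), which is exactly what the separation step needs. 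Finally, the smooth Julia unitaries \(\begin{bmatrix}H&\sqrt{1-H^2}\\\sqrt{1-H^2}&-H\end{bmatrix}\) of Appendix~\ref{app:product-lower-bounds} must inherit the \(O(\mu_1\mu_2)\) difference. This requires \(\|H\|\) to stay bounded away from \(1\) (with Hermitian dilation if \(\mathcal H\) is non-Hermitian), so that the square root is Lipschitz. We expect the Lipschitz check to be routine; the calibration of the rotation against \(t\) is the delicate computation.
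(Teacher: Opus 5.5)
There is a genuine gap: your family is scalar (\(n=1\)), and for a scalar family \(\kappa(\Pi_+R_0)\) cannot be large. With one state, \(R_0=[1;P_0]\) is a single column, so \(\Pi_+R_0\) is a nonzero \(2\times1\) matrix whose only singular value is its norm. Hence \(\kappa(\Pi_+R_0)=1\) identically, and your two claims \(\sigma_{\min}(\Pi_+R_0)=\Theta(\mu_2)\) and \(\|\Pi_+R_0\|=\Theta(1)\) contradict each other.

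Your mechanism does work in the scalar case. For example, \(A=-\mu_1\), \(G=0\), \(Q_c=2\mu_1y_c\) with \(y_c=P_0-(1+c)\mu_2\) gives \(P^{(c)}(t)=y_c+(1+c)\mu_2e^{2\mu_1t}\), oracle difference \(2\mu_1\mu_2\), and separation \(1-\mu_2\). But there \(\|\Pi_+R_0\|=\sigma_{\min}(\Pi_+R_0)=(1+c)\mu_2\). The resulting \(\Omega((\mu_1\mu_2)^{-1})\) bound therefore measures \(\mathcal R_{\rm DRE}\) times the \emph{absolute} factor \(\sigma_{\min}(\Pi_+R_0)^{-1}\). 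The theorem concerns the \emph{relative} factor, which is stuck at one, so the ``independently arbitrarily large'' clause fails. This is why the statement asks for two-state instances.

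The missing ingredient is a second, \(c\)-independent mode on which the initial graph has a \(\Theta(1)\) component in the selected branch. The paper takes \(A=\operatorname{diag}(-\mu_1,1)\), \(G=0\), \(P_0=\operatorname{diag}(1/2,0)\), and \(Q_c\) with \((1,1)\) entry \(\mu_1(1-2(1+c)\mu_2)\) and coupling \(\mu_1/4\). The similarity by \([\mathsf I,0;Y_c,\mathsf I]\), with \(AY_c+Y_cA=-Q_c\), shows that \(\Pi_+^{(c)}R_0\) has orthogonal columns of norms \((1+c)\mu_2\) and \(\sqrt5/2\). Hence \(\kappa=\sqrt5/(2(1+c)\mu_2)\). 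The fast eigenvalues \(\pm1\) stay a constant distance from the rectangular contours, so \(\mathcal R_{\rm DRE}=\Theta(\mu_1^{-1})\) is unaffected. (Circles around \(\pm\lambda\) alone would no longer enclose the full branches.) The explicit linear solution then gives \(\|P^{(c)}(s)\|<3\) and \(\|P^{(1)}(t)-P^{(0)}(t)\|=1-\mu_2\ge7/8\).

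A secondary correction concerns your oracle-closeness step. The direction to rotate by \(\Theta(\mu_2)\) is the stable one, \(\ker\Pi_+=\operatorname{ran}\Pi_-\), not the unstable one. Since \(R_0\) lies within angle \(\Theta(\mu_2)\) of \(\ker\Pi_+\), its \(\Pi_+\)-component is proportional to the sine of that angle. Tilting \(\operatorname{ran}\Pi_+\) by \(\Theta(\mu_2)\) changes this component only by a relative \(O(\mu_2)\), and moves \(P(t)\) only by \(O(\mu_2)\). Rotating the stable direction instead doubles the component and changes \(\mathcal H\) by \(O(\mu_1\mu_2)\), as you intend.

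Finally, the paper's Julia unitary \([T,(\mathsf I-TT^*)^{1/2};(\mathsf I-T^*T)^{1/2},-T^*]\) already handles the non-Hermitian signal matrix \(\mathcal H/3\), whose norm is below one. No Hermitian dilation is needed.
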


\begin{theorem}[RR product query lower bound]
\label{thm:rr-product-lower-main}
Under the above access and accuracy conventions, there exists a family
of two-state instances of Problem~\ref{prob:quantum-rr} requiring
\[
q_{\rm RR}=\Omega\!\left(\mathcal R_{\rm RR}\kappa(\Pi_>R_0)\right)
\]
quantum queries in the worst case. The two factors can be made
independently arbitrarily large.
\end{theorem}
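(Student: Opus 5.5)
The plan is to mirror the DARE construction in the informal proof of Theorem~\ref{thm:dare-product-lower-main}, using a scalar recursion with state dimension \(n=1\), so that \(\mathcal S_F\) in \eqref{eq:rr-lift-seed} is \(2\times2\). The hidden bit \(c\) will enter through a coefficient perturbation of size \(\Theta(\mu_1\mu_2)\). That perturbation is small enough for the hybrid argument, yet after \(k\) steps it moves \(P_k\) by a constant. The outline is:
\begin{enumerate}
\item choose positive scalar data so the lift has eigenvalues \(1+\mu_1\) and \((1+\mu_1)^{-1}\);
\item place \(R_0\) close to the interior eigenvector;
\item compute \(P_k\) in closed form for both inputs;
\item bound \(\mathcal R_{\rm RR}\), \(\kappa(\Pi_>R_0)\) and the oracle distance;
\item apply Lemma~\ref{lem:query-hybrid-app}.
\end{enumerate}

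\textbf{Scalar data.} Take \(A=a>0\) and \(G=g,\ Q=q_c\ge0\), with \(q_c=q_0+c\,\theta\mu_1\mu_2\) for a fixed constant \(\theta\). The Möbius map \(p\mapsto q_c+a^2p/(1+gp)\) then has two fixed points: an attracting stabilizing one \(x_c\) and a repelling one \(y_c\). Their graphs \([1;x_c]\) and \([1;y_c]\) are the eigenvectors of \(\mathcal S_F\) for \((1+\mu_1)^{\pm1}\). By Lemma~\ref{lem:rr-graph-propagation-app} and \eqref{eq:rr-propagated-graph}, the exterior eigenvector is the graph that \(\mathcal S_F^kR_0\) approaches, namely \(x_c\). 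I would tune \(a,g,q_0\) as functions of \(\mu_1\) so that:
\begin{itemize}
\item the eigenvalue ratio is \((1+\mu_1)^2\);
\item \(x_c,y_c\) are \(O(1)\);
\item \(x_c-y_c=\Theta(\mu_1)\), so the fixed points nearly coalesce.
\end{itemize}
Then \(P_0\) is fixed independently of \(c\), at distance \(\Theta(\mu_2\,\mu_1)\) from \(y_c\) on the side of \(x_c\).

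\textbf{Closed form and the separation.} Writing \(R_0=\beta_c[1;x_c]+\gamma_c[1;y_c]\), the exterior coefficient is \(\beta_c=(P_0-y_c)/(x_c-y_c)\). A first-order shift \(\partial y/\partial q=\Theta(\mu_1^{-1})\) times \(\delta q=\theta\mu_1\mu_2\) moves \(y_c\) by \(\Theta(\mu_2\mu_1)\). With \(\theta\) chosen suitably, \(\beta_1/\beta_0\) is a constant different from \(1\), for example \(2\), while both coefficients remain \(\Theta(\mu_2)\). Consequently \(\sigma_{\min}(\Pi_>R_0)=\Theta(\mu_2)\), \(\|\Pi_>R_0\|=\Theta(1)\), and \(\kappa(\Pi_>R_0)=\Theta(\mu_2^{-1})\).

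Applying \(\mathcal S_F^k\) multiplies \(\beta\) by \(\lambda^k\) and \(\gamma\) by \(\lambda^{-k}\). With \(k\) as in the preamble, \(\lambda^{2k}\approx\mu_2^{-1}\), so the ratio \(\beta\lambda^{k}/(\gamma\lambda^{-k})\) is \(\Theta(1)\) and differs between \(c=0\) and \(c=1\) by a constant factor. Then
\[
P_k=\frac{\beta_c\lambda^k x_c+\gamma_c\lambda^{-k}y_c}{\beta_c\lambda^k+\gamma_c\lambda^{-k}}
\]
is a weighted average of \(x_c\) and \(y_c\) with constant-separated weights. This gives \(|P_k^{(1)}-P_k^{(0)}|=\Theta(x-y)\), which is only \(\Theta(\mu_1)\) and not a constant. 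To fix this, I would conjugate by a diagonal rescaling of the second coordinate, replacing \(P\) by \(P/\mu_1\) through rescaled \(G,Q,P_0\). This makes \(x-y=\Theta(1)\) while keeping all iterates \(O(1)\), so \(\alpha_{P_k}=3\) and \(\varepsilon\le1/16\) force distinguishability.

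\textbf{Spectral factor and oracle distance.} I would take \(\Gamma_<\) and \(\Gamma_>\) to be the circles \(|z|=1\mp\mu_1/2\) (with \(\Gamma_>\) oriented so that it has winding number zero about the closed unit disk). The \(2\times2\) lift has eigenvector condition number \(O(\mu_1^{-1})\) from the nearly coalescing fixed points, which would spoil the bound. So I must check \(\mathcal R_{\rm RR}=\Theta(\mu_1^{-1})\) directly by computing the \(2\times2\) resolvent explicitly. The \(\Theta(\mu_1^{-1})\) lower bound on \(\mathcal R_{\rm RR}\) comes from the eigenvalue distance. For the upper bound, \(\|(z-\mathcal S_F)^{-1}\|\le\|z-\mathcal S_F\|/|\det(z-\mathcal S_F)|\), with \(|\det|=\Theta(\mu_1^2)\) on the circles; to reach the target \(O(\mu_1^{-1})\) I need \(\|z-\mathcal S_F\|=O(\mu_1)\) as well, which holds only after the rescaling keeps \(\mathcal S_F\) within \(O(\mu_1)\) of the identity. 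Verifying this compatibility with the rescaling of step two is where I expect the main obstacle, and I would first try choosing \(\mathcal S_F=\mathsf I+\mu_1N\) with a fixed hyperbolic \(N\).

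Finally, the Julia unitaries of Appendix~\ref{app:product-lower-bounds} at fixed public normalizations are Lipschitz in the encoded matrix. The two inputs differ only through \(q_c\) by \(O(\mu_1\mu_2)\), so the unitaries differ by \(O(\mu_1\mu_2)\). Lemma~\ref{lem:query-hybrid-app} then gives \(\Omega((\mu_1\mu_2)^{-1})=\Omega(\mathcal R_{\rm RR}\kappa(\Pi_>R_0))\), with \(\mu_1\) and \(\mu_2\) independent.
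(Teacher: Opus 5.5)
\textbf{Main gap: a scalar instance forces \(\kappa(\Pi_>R_0)=1\).} With \(n=1\), \(\Pi_>R_0\in\mathbb C^{2\times1}\) is the single column \(\beta_c[1;x_c]\). A nonzero single column has exactly one singular value, equal to its norm. Hence \(\sigma_{\min}(\Pi_>R_0)=\|\Pi_>R_0\|=\Theta(\mu_2)\) and \(\kappa(\Pi_>R_0)=1\) identically, so your claim \(\|\Pi_>R_0\|=\Theta(1)\) is false. Even with every other step in place, the hybrid argument would give \(\Omega((\mu_1\mu_2)^{-1})\) queries while \(\mathcal R_{\rm RR}\kappa(\Pi_>R_0)=\Theta(\mu_1^{-1})\). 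The factor \(\mu_2^{-1}\) is then carried by the absolute quantity \(\sigma_{\min}(\Pi_>R_0)^{-1}\), which the paper deliberately distinguishes from the relative condition number. So you have not shown that \(\kappa(\Pi_>R_0)\) can be made independently large, and the family is not two-state as the statement requires. The missing idea is a second, decoupled state whose exterior branch contributes a unit-norm column to \(\Pi_>R_0\), orthogonal to the small one. The paper takes \(A=\operatorname{diag}(1+\mu_1,1/2)\) with zero second diagonal entries in \(G,Q_c,P_0\). The lift then acquires the eigenvalues \(2\) and \(1/2\), far from the near-unit contours, and \(\Pi_>R_0\) acquires the column \(E_1e_2\). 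This gives \(\kappa(\Pi_>^{(c)}R_0)=1/((1+c)\mu_2)\), while \(\mathcal R_{\rm RR}\), the oracle distance and the output separation are unaffected up to constants.

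\textbf{Secondary problem: the perturbation scaling is inconsistent.} For \(\lambda=1+\mu_1\) and any \(g>0\), one has \(g(x_c-y_c)=a(\lambda-\lambda^{-1})=\Theta(\mu_1)\). Implicit differentiation of the fixed-point equation then gives \(\partial y/\partial q=-(1+gy)/(g(x-y))=\Theta(\mu_1^{-1})\) in both of your scalings. So \(\delta q=\theta\mu_1\mu_2\) moves \(y_c\) by \(\Theta(\mu_2)\), not \(\Theta(\mu_1\mu_2)\). In your unrescaled coordinates this is a factor \(\mu_1^{-1}\) too large relative to \(P_0-y_c=\Theta(\mu_1\mu_2)\). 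Worse, if you fix \(\delta q\) there and then conjugate by \(\operatorname{diag}(1,\mu_1^{-1})\) to make the output separation constant, the lower-left entry of \(\mathcal S_F\) changes by \(\Theta(\mu_2)\). The hybrid argument then yields only \(\Omega(\mu_2^{-1})\).

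You must specify the instance directly in final coordinates, with \(g=\Theta(\mu_1)\), \(P_0-y_c=\Theta(\mu_2)\) and \(\delta q=\Theta(\mu_1\mu_2)\). Since \(P_0\succeq0\) and \(y_c=-q_c/(gx_c)<0\), this forces \(q_c/g=O(\mu_2)\). Consequently \(N\) in \(\mathcal S_F=\mathsf I+\mu_1N\) cannot be fixed independently of \(\mu_2\). Your \(\Gamma_>\) must also be an annulus with an outer circle, not a single circle.

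The paper avoids all of this by taking \(G=0\). The recursion becomes linear, \(P_{j+1}=Q_c+A^*P_jA\), with the attracting fixed point at infinity and the repelling one at \(-(1+c)\mu_2\), which is exactly your mechanism. The lift is block triangular with an explicit exterior projector, and \(P_j^{(c)}=(1+c)\mu_2((1+\mu_1)^{2j}-1)e_1e_1^*\) in closed form. No eigenvector coalescence or rescaling arises.
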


The algebraic solution norms can grow arbitrarily large as
\(\mu_2\to0\). For these same families,
Proposition~\ref{prop:algebraic-product-normalization-app} constructs
the full projectors at constant raw normalization and applies projector-block
recovery at an actual public output normalization
\(\alpha_X=\Theta(\|X_c\|)=\Theta(1/\mu_2)\), with no
projector re-encoding cost. The lower bounds above use this actual
normalization, so the product captures both spectral separation and the algebraic output
scale, even for unbounded solutions. An arbitrary enlargement of the
declared normalization would reduce the normalized output separation
and is outside these fixed-scale lower-bound statements.
The dynamic examples have bounded solutions and constant absolute
error at the displayed time or step; their second factor instead
measures the difficulty of initialization. Optimal dependence on
additional normalization and re-encoding costs for general inputs
remains open.

\begin{remark}[Scope of optimality]
\label{rem:product-optimality-scope}
The matching bounds establish worst-case optimal parameter dependence,
up to logarithmic factors, for the specified oracle families and
output conventions. They do not imply that our algorithms are efficient
or optimal for every Riccati instance. For example, some problems
with a poorly conditioned Hamiltonian lift \(H\) may still incur large
query costs. These instances could effect the generalized singularity factor
\(\mathcal R\) and the stated initialization and normalization factors
(see Proposition~\ref{prop:factor-condition-number-app}).
\end{remark}

\subsection{BQP hardness of DARE and DRE solution encodings}
\label{subsec:bqp-hardness}

For the computational hardness results, the input is a finite uniform
circuit description of a specified Riccati family, including efficient
matrix-access circuits. BQP denotes bounded-error quantum polynomial
time \cite{BernsteinVazirani1997QuantumComplexity,Watrous2018Theory}.
We prove completeness for selected-value decision problems on the
images of two explicit circuit reductions. A solution encoding with
constant normalization and sufficiently small constant decoded error
would decide the same promises by a constant number of matrix-element
tests, as formalized in Lemma~\ref{lem:selected-value-hardness-app}.

All solution encodings below include controlled access.

\begin{theorem}[Single-control DARE hardness]
\label{thm:dare-bqp-main}
There exists a uniformly circuit-generated family of single-control
instances of Problem~\ref{prob:quantum-dare} for which constructing a
solution block-encoding with \(\alpha_X=2\) and
\(\varepsilon\le1/100\) is \(\mathsf{BQP}\)-hard.
On the construction family in Appendix~\ref{app:dare-bqp},
distinguishing \(x_0^*Xx_0\ge2/9\) from \(x_0^*Xx_0\le1/18\),
for a specified basis vector \(x_0\), is \(\mathsf{BQP}\)-complete.
\end{theorem}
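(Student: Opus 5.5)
We prove two things: a BQP-hardness reduction, and membership in BQP for the selected-value promise problem. The plan is to encode a BQP circuit's acceptance probability into a single-control DARE whose stabilizing solution has a quadratic-form value \(x_0^*Xx_0\) that is an affine, explicitly controlled function of that probability. Then we invoke Lemma~\ref{lem:selected-value-hardness-app} to pass from a solution encoding to the decision problem.

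\textbf{Hardness reduction.} Given a uniform circuit \(C=U_L\cdots U_1\) on \(w\) qubits with acceptance probability \(p\), I would build a clock-type DARE instance on \(\mathbb C^{2^w}\otimes\mathbb C^{L+1}\). The state matrix \(A\) is a scaled "history shift" \(A=a\sum_{\ell}U_{\ell+1}\otimes|\ell+1\rangle\langle\ell|\). This matrix is nilpotent, hence Schur stable. The control column \(b\) is a single basis vector, giving one control. The choice \(Q=\Pi_{\rm acc}\otimes|L\rangle\langle L|\) plus a small multiple of the identity makes detectability immediate. Stabilizability is automatic because \(A\) is already stable.

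Because \(A\) is nilpotent and the control acts on a direction orthogonal to the propagated history of \(x_0=|0^w\rangle|0\rangle\), I would arrange \(b\) so that \(B^*X A^jx_0=0\). Then the DARE restricted to the Krylov orbit of \(x_0\) reduces to the Stein equation \(X=Q+A^*XA\), which gives
\[
x_0^*Xx_0=\sum_{j}\|Q^{1/2}A^jx_0\|^2 = c_1+c_2\,p
\]
for explicit constants. I would choose \(a\) and the scalings so that \(p\ge2/3\) gives \(x_0^*Xx_0\ge2/9\), \(p\le1/3\) gives \(x_0^*Xx_0\le1/18\), and \(\|X\|\le1\). The normalization \(\alpha_X=2\) is then admissible.

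The block-encodings of \(M\) and \(L\) in \eqref{eq:dare-pencil} are assembled from controlled gates \(U_\ell\) and clock increments with polynomial-size circuits. A solution encoding with \(\varepsilon\le1/100\) changes \(x_0^*Xx_0\) by at most \(1/100\), which is below half the gap \(2/9-1/18=1/6\). Estimating the entry \(\langle x_0|U_X|x_0\rangle\) to constant accuracy by a Hadamard test with constantly many repetitions therefore decides the promise. This is exactly the content of Lemma~\ref{lem:selected-value-hardness-app}.

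\textbf{Membership in BQP.} On the construction family, the parameters \(\mathcal R_{\rm DARE}\) and \(\alpha_X\) are \(O(\mathrm{poly})\) by design: nilpotent \(A\) with small \(a\) keeps the pencil well separated from the unit circle. Hence Theorem~\ref{thm:dare-construction-main}, followed by matrix-element estimation, decides the promise in polynomial time. Alternatively, the closed form \(c_1+c_2p\) lets one estimate \(p\) directly by running \(C\).

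\textbf{Main obstacle.} The hard step is making the single-control structure genuinely a DARE with \(G\ne0\) while keeping the solution exactly computable. The nonlinear term \(A^*XB(R_c+B^*XB)^{-1}B^*XA\) must vanish on the relevant subspace, or contribute a known constant. I would first try placing \(b\) on an extra "sink" clock state outside the range of \(A\) and of \(Q\). Then \(XB=0\) on the block-diagonal decomposition, and the DARE decouples into a Stein equation plus a trivial scalar DARE on the sink. Stabilizability and detectability would be verified directly on each block.
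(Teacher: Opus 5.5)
\textbf{Gap: the selected value you construct is linear in the acceptance probability, and no linear dependence can meet the thresholds in the statement.}

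Your Stein reduction gives
\(x_0^*Xx_0=\epsilon\sum_{j=0}^{L}a^{2j}+a^{2L}p=c_1+c_2p\) with \(c_1\ge0\). No such affine map sends \(p\ge2/3\) to \(\ge2/9\) and \(p\le1/3\) to \(\le1/18\). The first requirement minus the second gives \(c_2/3\ge2/9-1/18=1/6\), so \(c_2\ge1/2\). But then \(c_1+c_2/3\ge1/6>1/18\), which violates the second requirement.

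The stated thresholds are \(2/9=\tfrac12(2/3)^2\) and \(1/18=\tfrac12(1/3)^2\), so they need a value quadratic in \(p\). The paper obtains this in two steps:
\begin{itemize}
\item It uses compute--copy--uncompute, so that the amplitude \(b^*A^Nx_0\) itself equals \(p\).
\item It places both the cost and the single control at the same terminal basis state, with \(B=b\), \(R_c=1\), \(Q=G=bb^*\).
\end{itemize}
The control then cancels half of the arriving cost. This gives \(X=Q+\tfrac12\sum_{j=1}^N(A^*)^jQA^j\), with \(\|X\|=1\) and \(x_0^*Xx_0=p^2/2\).

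\textbf{Second failing step: \(a\) cannot be small.} Your signal coefficient is \(c_2=a^{2L}\), which is exponentially small for any constant \(a<1\). You need \(a=1\). Nilpotency already gives Schur stability, so nothing is lost, and the unit-circle pencil resolvent then grows only polynomially in \(L\).

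\textbf{How your route can be repaired.}
\begin{itemize}
\item Amplify the source circuit first, so that no instances have \(p\le1/18\).
\item Take \(a=1\).
\item Drop the \(\epsilon\mathsf I\) shift in \(Q\); detectability is automatic because \(A\) is nilpotent.
\end{itemize}
Your decoupled sink then gives \(x_0^*Xx_0=p\) exactly, which meets both thresholds.

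Two further remarks on the sink construction:
\begin{itemize}
\item Your claim \(XB=0\) is inconsistent with \(Q\succeq\epsilon\mathsf I\): in that case \(b\) cannot lie outside the range of \(Q\), and \(b^*Xb\ge\epsilon\). What the decoupling actually needs is \(B^*XA=0\). This follows if the sink row and column of \(A\) vanish and \(X\) is block diagonal.
\item The sink makes the control inert, since \(F=A\), so the instance is just a Stein equation. That still satisfies ``single-control'' as the theorem is literally stated. The paper's construction, by contrast, has the control genuinely shaping the optimal value, which is the content of Remark~\ref{rem:lqr-computational-hardness}.
\end{itemize}
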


\begin{proof}[Informal proof]
Let \(V\) be a real source circuit on \(|0\rangle_{\rm work}\), with
acceptance probability \(p\). Apply \(V\), copy its accepting output
bit to a fresh flag qubit, and undo \(V\). The amplitude of
\(|0\rangle_{\rm work}|1\rangle_{\rm flag}\) is then
\begin{equation}
{}_{\rm work}\langle0|V^*
\bigl(|1\rangle\langle1|_{\rm out}\otimes\mathsf I\bigr)
V|0\rangle_{\rm work}=p.
\label{eq:dare-bqp-amplitude-main}
\end{equation}
Thus the accepting and rejecting source circuits give different
amplitudes at one fixed, known basis state.

Write the gates of this augmented circuit as \(U_1,\ldots,U_N\).
Introduce clock levels \(0,\ldots,N\) and set
\begin{equation}
\begin{aligned}
A&=\sum_{j=0}^{N-1}|j+1\rangle\langle j|\otimes U_{j+1},\\
x_0&=|0\rangle_{\rm clk}|0\rangle_{\rm work}|0\rangle_{\rm flag},&
b&=|N\rangle_{\rm clk}|0\rangle_{\rm work}|1\rangle_{\rm flag}.
\end{aligned}
\label{eq:dare-bqp-clock-main}
\end{equation}
Each multiplication by \(A\) advances one clock level and applies
one circuit gate, so \(b^*A^Nx_0=p\). The terminal level has no
outgoing transition, giving \(A^{N+1}=0\). Choose the single control
and the state cost at that terminal basis state:
\(B=b\), \(R_c=1\), and \(Q=G=bb^*\).

The control can partially cancel the arriving terminal amplitude.
For a real state \(z\) on clock level \(N-1\), its Bellman update
minimizes \(u^2+|b^*Az+u|^2\), giving \(|b^*Az|^2/2\).
Propagating this value back through the preceding clock levels yields
\begin{equation}
X=Q+\frac12\sum_{j=1}^{N}(A^*)^jQA^j.
\label{eq:dare-bqp-solution-main}
\end{equation}
The summands act on different clock levels, so \(\|X\|=1\).
The closed loop \(F=(\mathsf I-Q/2)A\) remains nilpotent, making
this the stabilizing Riccati solution.

Since \(x_0\) lies on clock level zero, only the term \(j=N\)
contributes to its value. Hence
\begin{equation}
x_0^*Xx_0=\frac12|b^*A^Nx_0|^2=\frac{p^2}{2}.
\label{eq:dare-bqp-value-main}
\end{equation}
Source acceptance probabilities at least \(2/3\) and at most
\(1/3\) therefore give optimal costs at least \(2/9\) and at most
\(1/18\), respectively. A solution encoding with normalization
\(\alpha_X=2\) and the stated accuracy reveals this gap through a
constant number of matrix-element tests. Conversely, running the
given source circuit distinguishes the two cases, proving membership
on this construction family. The clock and coefficient-access
circuits have polynomial-size descriptions; their precise
implementation and the full Riccati verification are given in
Appendix~\ref{app:dare-bqp}.
\end{proof}

\begin{theorem}[Single-control DRE hardness]
\label{thm:dre-bqp-main}
There exists a uniformly circuit-generated family of single-control,
zero-terminal LQR instances with polynomial horizon \(T\) whose
time-reversed DRE solution \(P(T)\) in
Problem~\ref{prob:quantum-dre} is \(\mathsf{BQP}\)-hard to
block-encode with \(\alpha_{P(T)}=4\) and \(\varepsilon\le1/100\).
On the construction family in Appendix~\ref{app:dre-bqp},
distinguishing \(x_0^*P(T)x_0>1/20\) from \(x_0^*P(T)x_0<1/200\),
for a specified basis vector \(x_0\), is \(\mathsf{BQP}\)-complete.
\end{theorem}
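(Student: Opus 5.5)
We mirror the DARE hardness argument (Theorem~\ref{thm:dare-bqp-main}), replacing the discrete clock chain by a continuous-time nilpotent chain. The problem is the zero-terminal LQR, reversed in time. Its value matrix \(P(T)\) equals the optimal cost of a linear--quadratic problem started at \(x_0\), so we compute it through a control-theoretic characterization rather than by solving the Riccati equation directly.

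\textbf{Step 1 (source amplitude and dynamics).} Given a real source circuit \(V\) with acceptance probability \(p\), form the augmented circuit \(U_N\cdots U_1\) that computes, flags and uncomputes. By \eqref{eq:dare-bqp-amplitude-main}, its amplitude at the known basis state \(|0\rangle_{\rm work}|1\rangle_{\rm flag}\) is \(p\). Take
\[
A=\omega\sum_{j=0}^{N-1}|j+1\rangle\langle j|\otimes U_{j+1},
\]
which is nilpotent with \(A^{N+1}=0\). Then \(e^{tA}x_0\) has clock-level-\(N\) component \((\omega t)^N/N!\cdot U_N\cdots U_1x_0\), so \(b^*e^{tA}x_0=p\,(\omega t)^N/N!\) with \(b=|N\rangle|0\rangle|1\rangle\). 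To avoid the transient Poisson spreading across clock levels, I would instead make the last level absorbing with a small damping. A simpler option is to add a ``drain'' \(-\gamma|N\rangle\langle N|\) together with a cost \(Q=bb^*\) only at the terminal level, and pick \(T=\mathrm{poly}(N)\) so that the clock mass reaches level \(N\) with constant probability. Set \(B=b\), \(R_c=1\), \(P_T=0\).

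\textbf{Step 2 (value computation).} The uncontrolled dynamics mixes levels through real unitaries, and the clock-level occupation evolves as a classical birth process, so the map \(x_0\mapsto\) terminal-level component is \(p\) times a scalar function \(f(t)\) of the clock alone. The control \(u\) acts only along \(b\), which lies in a one-dimensional subspace of the terminal level. By decomposing the terminal level into \(b\) and its orthogonal complement, the optimal cost reduces to a scalar LQR in the coordinate \(y(t)=b^*x(t)\):
\[
\dot y=-\gamma y+u+p\,\omega f(t),\qquad \text{cost}=\int_0^T\bigl(|y|^2+|u|^2\bigr)\,dt.
\]
Components orthogonal to \(b\) incur no cost and cannot be influenced. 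The optimal cost is therefore \(p^2 c(T)\), where \(c(T)\) is an explicit, \(p\)-independent constant computed from the scalar Riccati equation and the forcing integral. By the Riccati/value identity \(x_0^*P(T)x_0=\min J\) (Appendix~\ref{app:lqr-conventions}), we get \(x_0^*P(T)x_0=p^2c(T)\). Choosing \(\omega,\gamma,T\) with \(c(T)\in[9/40,\,9/40\cdot\text{const}]\), suitably normalized, turns \(p\ge2/3\) versus \(p\le1/3\) into \(>1/20\) versus \(<1/200\). Separately, we bound \(\|P(s)\|\le 4\) uniformly; the cost matrix is rank one at the terminal level and the input-to-cost operator has bounded norm, which gives this bound.

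\textbf{Step 3 (hardness and membership).} A block-encoding with \(\alpha=4\) and \(\varepsilon\le1/100\) perturbs \(x_0^*P(T)x_0\) by at most \(1/100\). Lemma~\ref{lem:selected-value-hardness-app} then decides the promise with a constant number of Hadamard tests, because the gap \(1/20-1/200\) exceeds \(2/100\) plus the estimation error. For membership, \(x_0^*P(T)x_0=p^2c(T)\) with \(c(T)\) classically computable, so estimating \(p\) by running \(V\) decides the promise.

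\textbf{Main obstacle.} The delicate step is Step~2: showing that the continuous-time clock gives a value exactly proportional to \(p^2\) with a \(p\)-independent constant. The difficulty is that the control feedback could act back on intermediate levels through \(A^*\) in the Hamiltonian. I expect this to be resolved because cost and control live only on the terminal level, whose outgoing transitions vanish, so the costate is supported there. Keeping \(T\) polynomial while \(c(T)\) stays bounded below needs a concentration estimate for the clock. I would first try \(\omega=1\) and \(T=2N\), and invoke a Poisson tail bound for the clock occupation.
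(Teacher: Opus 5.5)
There is a genuine gap in Step~2. Your one-directional clock cannot make $x_0^*P(T)x_0$ a constant while also keeping $\|P(T)\|\le4$. The second condition is not optional: an encoding with $\alpha_{P(T)}=4$ and $\varepsilon\le1/100$ exists only if $\|P(T)\|\le4+1/100$.

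\textbf{The construction as written.} $A=\omega\sum_j|j+1\rangle\langle j|\otimes U_{j+1}$ has no diagonal, so it is not a birth process. The level-$j$ amplitude of $e^{tA}x_0$ is $(\omega t)^j/j!$, which is about $(2e)^N$ at your $\omega=1$, $T=2N$. Already $x=|0\rangle(U_N\cdots U_1)^*b_{\rm w}$, where $b=|N\rangle b_{\rm w}$, makes $x^*P(T)x$ exponentially large for any drain $\gamma$ of polynomial size.

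\textbf{The Poisson chain.} Suppose you meant diagonal $-\omega$ below level $N$ and a drain $\gamma$ at level $N$. Then your scalar reduction is correct: $\dot y=-\gamma y+u+p\,w_0$, where $w_0(t)=\omega e^{-\omega t}(\omega t)^{N-1}/(N-1)!$ is a unit-mass bump of width $\sqrt N/\omega$. The admissible control $u=-p\,w_0$ keeps $y\equiv0$, so
\[
c(T)\le\int_0^\infty w_0(t)^2\,\mathrm{d}t=\omega\binom{2N-2}{N-1}2^{1-2N}\approx\frac{\omega}{2\sqrt{\pi N}},
\]
and the choice $u=0$ gives $c(T)\le1/(2\gamma)$. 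With your $\omega=1$ the gap therefore vanishes. More generally, $x_0^*P(T)x_0>1/20$ forces $\omega\gtrsim\sqrt N$ and $\gamma<10$.

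In that regime, take the unit vector spread evenly over $K\approx N$ levels $j<N$, with work components $(U_N\cdots U_{j+1})^*b_{\rm w}$. These components reach $b$ coherently and drive $y$ at rate about $\omega/\sqrt K$ for a time of about $K/\omega$. Integrating the scalar equation over that window shows every control pays $\Omega(\min\{\omega,N\}/(1+\gamma^2))$. Hence $\|P(T)\|=\Omega(\sqrt N)$ exactly where $c(T)=\Omega(1)$.

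The same computation shows that the spread $\sqrt N/\omega$ of arrival times keeps $x_0^*P(T)x_0/\|P(T)\|=O(1/\sqrt N)$ for every choice of $\omega$, $\gamma$ and the weights in $Q$ and $R_c$. Your remark about a rank-one cost and a bounded input-to-cost operator misses this coherent accumulation; that operator has norm of order $\sqrt\omega$. The Poisson tail bound you flag is not the real obstacle.

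\textbf{Two smaller problems.} First, a value $p^2c$ cannot turn $p\ge2/3$ versus $p\le1/3$ into $>1/20$ versus $<1/200$. That would need $c>9/80$ and $c<9/200$ simultaneously, so you must amplify first. Second, with the literal nilpotent $A$, the uncontrolled and uncosted modes leave zero eigenvalues in $\mathcal H_{\rm DRE}$. This falls outside the setting in which Problem~\ref{prob:quantum-dre} is posed.

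\textbf{How the paper avoids this.} The paper does not compute the value exactly; it builds the norm bound into the construction.
\begin{itemize}
\item It uses a damped skew-symmetric perfect-state-transfer clock, with couplings $\sqrt{(j+1)(N-j)}/(2N)$ and damping $-\mathsf I/(100N)$. Thus $\|e^{tA}\|\le1$, $A$ is Hurwitz, and $x_0$ is carried to level $N$ at $T=\pi N$ without spreading.
\item It puts the cost $Q=\frac1N\mathsf I\otimes|1\rangle\langle1|_{\rm flag}$ on the flag at every clock level. The zero-control bound then gives $\|P(T)\|\le T\|Q\|=\pi$ directly.
\item It copies the accepting bit only in the last source gate and appends $4\ell$ identity gates. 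The flag is then raised with weight $p$ for a constant fraction of the transfer, so $J_T(0)=\Theta(p)$.
\item It places the single control at $x_0$ with $R_c=N$. A Young-inequality estimate shows this control lowers the cost by at most the factor $1+\pi^2$.
\end{itemize}
Combined with amplification to $p\ge15/16$ versus $p\le1/1024$, these two-sided constant-factor bounds give the thresholds $1/20$ and $1/200$ without any exact Riccati computation.
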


Both construction families satisfy the corresponding problem assumptions
and admit the algorithms in Appendices~\ref{app:dre-complexity}
and~\ref{app:dare-construction}, as verified in their proofs.

\begin{remark}[Computational difficulty of LQR values]
\label{rem:lqr-computational-hardness}
Theorems~\ref{thm:dare-bqp-main} and~\ref{thm:dre-bqp-main} show that
LQR optimal values can encode general quantum computations even with
one control input and positive semidefinite quadratic costs. A single
known initial basis state carries the acceptance decision through its
optimal cost, with a constant gap. This identifies an intrinsic
computational difficulty in extracting value information from the
dynamics. Here input size is the length of the finite circuit
description, which can be much smaller than the state dimension.
A randomized classical algorithm that estimates these optimal values
to the stated constant accuracy in time polynomial in that description
length would decide every problem in BQP with bounded error.
\end{remark}


\section{Conclusion}
\label{sec:conclusion}

We develop a unified quantum approach to continuous- and discrete-time
algebraic Riccati equations, differential Riccati equations, and finite
Riccati recursions. The Quantum Weighted Riesz Method combines spectral
branch selection with the initial data to construct decaying graph
projectors and recovers algebraic solutions from their limiting
projectors. Complete block-row recovery then produces the solution
block-encoding, avoiding an additional square-block invertibility
assumption and controlling the recovery pseudoinverse through the
solution norm. This common construction preserves the initial information
needed for dynamics and the stability selection needed for algebraic
solutions.

We establish matching upper and lower query bounds, up to logarithmic
factors, on explicit algebraic input-oracle families at the actual
normalizations achieved by our algorithm. The dynamic lower bounds show
that spectral separation and initialization costs must multiply in the
worst case. Our single-control DARE and DRE reductions also establish
BQP-hardness of constant-normalization, constant-accuracy solution
encodings and BQP-completeness of selected-value promise problems on
the circuit-generated families. Applications to RPA in quantum chemistry and heated boundary
control demonstrate reduced query bounds from local inverse scales and
efficient feedback evaluation in the specified thermal-network regime.

Future work will focus on extending quantum matrix-equation methods
beyond Riccati problems, including coupled and higher-degree nonlinear
equations. For matrix differential equations, an important direction is
to combine suitable linear representations with quantum linear-system
algorithms (QLSAs), and to study linearization and time discretization
when such representations are unavailable. The resulting methods should
account for both approximation error and the cost of recovering the
requested matrix solution or observable. Efficient input block-encoding
preparation is another practical ingredient that merits further study.

\section*{Acknowledgement}
JPL acknowledges support from the Quantum Science and Technology National Science and Technology Major Project under Grant No.~2024ZD0300500, the Excellent Young Scientists Fund Program, start-up funding from Tsinghua University, and the Beijing Institute of Mathematical Sciences and Applications.

\bibliographystyle{unsrt}
\bibliography{Riccati}

\clearpage
\appendix
\section{Foundations and contour construction}
\label{app:common-foundations}

The constructions in Section~\ref{sec:quantum-tools} use a common sequence
of operations: identify a weighted spectral branch, encode its shifted
inverses, and combine them into a Riesz block. This appendix supplies the
identities and error bounds behind that sequence. We first verify the LQR
conventions and the inverse primitive, then establish the spectral meaning
of the ordinary and pencil integrals. The contour proofs construct their
finite sums; quadrature and input perturbations complete the approximation
to the original Riesz operator.

\subsection{LQR identities and time conventions}
\label{app:lqr-conventions}

The finite-horizon formulations in Subsection~\ref{subsec:lqr-riccati-problems}
use two time conventions: physical time for continuous control and the
number of remaining steps for discrete control. The following identities
verify the signs, matrix products, and feedback indices used there.
They also explain why the positive semidefinite recursion remains defined
when its iterates are singular.

\begin{proposition}[Finite-horizon LQR identities]
\label{prop:lqr-conventions-app}
Use the control data of \eqref{eq:lqr-continuous-cost}
and~\eqref{eq:lqr-discrete-cost}, with \(G=BR_c^{-1}B^*\).
If a differentiable Hermitian \(P_{\rm LQR}\) satisfies
\eqref{eq:lqr-terminal-dre} on \([0,T]\), then every admissible control
and its trajectory satisfy \cite[Sec.~2.3, Problem~2.3-2]{AndersonMoore2007OptimalControl}
\begin{equation}
J_T(u)=x_0^*P_{\rm LQR}(0)x_0+
\int_0^T v(t)^*R_cv(t)\,\mathrm dt,
\qquad v(t)=u(t)+R_c^{-1}B^*P_{\rm LQR}(t)x(t).
\label{eq:lqr-square-continuous-app}
\end{equation}
Thus the feedback in Subsection~\ref{subsec:lqr-riccati-problems} is optimal.
The reversed matrix \(P(s)=P_{\rm LQR}(T-s)\) solves
Definition~\ref{def:problem-dre} with data \((-A,-G,-Q,P_T)\).

For any \(P\succeq0\), the one-step discrete minimization satisfies \cite[Sec.~2.4]{AndersonMoore2007OptimalControl}
\begin{equation}
\begin{aligned}
\min_u\{x^*Qx+u^*R_cu+(Ax+Bu)^*P(Ax+Bu)\}
 &=x^*\bigl[Q+A^*P(\mathsf I+GP)^{-1}A\bigr]x,\\
u^*&=-(R_c+B^*PB)^{-1}B^*PAx.
\end{aligned}
\label{eq:lqr-bellman-app}
\end{equation}
More generally, for any \(G,P\succeq0\), \(\mathsf I+GP\) is
invertible and \(P(\mathsf I+GP)^{-1}\succeq0\).
Consequently, Definition~\ref{def:problem-recursion} preserves positive
semidefiniteness, and its \(k\)-step control uses \(P_{k-1}\) as in
\eqref{eq:recursion-first-control}.
\end{proposition}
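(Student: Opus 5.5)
The statement has three parts, and I would prove them in order. First comes the continuous completing-the-square identity \eqref{eq:lqr-square-continuous-app}. Next is the time-reversal claim. Last are the discrete Bellman identity and the positive semidefiniteness facts.

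\textbf{Continuous identity.} Differentiate \(x(t)^*P_{\rm LQR}(t)x(t)\) along an admissible trajectory of \(\dot x=Ax+Bu\), obtaining
\[
\frac{\mathrm d}{\mathrm dt}\,x^*P_{\rm LQR}x
=x^*\dot P_{\rm LQR}x+x^*(A^*P_{\rm LQR}+P_{\rm LQR}A)x+u^*B^*P_{\rm LQR}x+x^*P_{\rm LQR}Bu .
\]
Then substitute \eqref{eq:lqr-terminal-dre} for \(\dot P_{\rm LQR}\). The terms \(A^*P_{\rm LQR}+P_{\rm LQR}A\) cancel, leaving
\[
-x^*Qx+x^*P_{\rm LQR}GP_{\rm LQR}x+2\operatorname{Re}\bigl(u^*B^*P_{\rm LQR}x\bigr).
\]
Expanding \(v^*R_cv\) with \(v=u+R_c^{-1}B^*P_{\rm LQR}x\) gives
\[
v^*R_cv=u^*R_cu+2\operatorname{Re}\bigl(u^*B^*P_{\rm LQR}x\bigr)+x^*P_{\rm LQR}GP_{\rm LQR}x .
\]
Hence the derivative equals \(v^*R_cv-x^*Qx-u^*R_cu\). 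Integrate over \([0,T]\) and use \(P_{\rm LQR}(T)=P_T\) to obtain \eqref{eq:lqr-square-continuous-app}. Since \(R_c\succ0\), the cost is minimized exactly when \(v\equiv0\), which is the stated feedback. This needs only absolute continuity of \(x\), which holds for admissible (say, integrable) controls.

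\textbf{Time reversal.} Set \(P(s)=P_{\rm LQR}(T-s)\). Then
\[
\dot P(s)=-\dot P_{\rm LQR}(T-s)=Q+A^*P+PA-PGP .
\]
Substituting \((A,G,Q)\mapsto(-A,-G,-Q)\) into the right side of \eqref{eq:problem-dre} gives
\[
-(-Q)-(-A)^*P-P(-A)+P(-G)P=Q+A^*P+PA-PGP,
\]
which matches, with \(P(0)=P_T\). Hermiticity of the data is preserved under this substitution.

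\textbf{Discrete step.} The main care is needed when \(P\) is singular, since I want to avoid \(P^{-1}\).

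1. Show that \(\mathsf I+GP\) is invertible for \(G,P\succeq0\). If \((\mathsf I+GP)y=0\), then \(Py=-PGPy\), so \(y^*Py=-(Py)^*G(Py)\le0\). This forces \(P^{1/2}y=0\), hence \(Py=0\), and then \(y=-GPy=0\).

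2. Show that \(P(\mathsf I+GP)^{-1}\succeq0\). Use the push-through identity \(P(\mathsf I+GP)^{-1}=(\mathsf I+PG)^{-1}P\). This gives Hermiticity, since the adjoint of \(P(\mathsf I+GP)^{-1}\) is \((\mathsf I+PG)^{-1}P\). Next write
\[
P(\mathsf I+GP)^{-1}=P^{1/2}\bigl(\mathsf I+P^{1/2}GP^{1/2}\bigr)^{-1}P^{1/2}
\]
by another push-through. This is manifestly \(\succeq0\).

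3. Handle the one-step minimization. The objective is a convex quadratic in \(u\) with Hessian \(R_c+B^*PB\succ0\). Setting the gradient to zero gives the stated \(u^*\). The minimal value is
\[
x^*\bigl[Q+A^*PA-A^*PB(R_c+B^*PB)^{-1}B^*PA\bigr]x .
\]
A Woodbury-type identity, obtained by push-through with \(G=BR_c^{-1}B^*\), gives
\[
P-PB(R_c+B^*PB)^{-1}B^*P=P(\mathsf I+GP)^{-1}.
\]
I verify this by multiplying on the right by \(\mathsf I+GP\) and simplifying, so no inverse of \(P\) is required.

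4. Conclude by induction. The recursion preserves positive semidefiniteness, since each iterate \(Q+A^*[P_j(\mathsf I+GP_j)^{-1}]A\) is \(\succeq0\). Dynamic programming over the remaining steps then shows that the first control of a \(k\)-step problem minimizes against the \((k-1)\)-step value \(P_{k-1}\), giving \eqref{eq:recursion-first-control}.

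I expect the Woodbury verification and the singular-\(P\) positivity argument to be the only delicate steps. Both are handled by the push-through identity applied to \(P^{1/2}\).
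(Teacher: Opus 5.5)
Your proposal is correct and follows essentially the same route as the paper. You complete the square along trajectories for the continuous identity, substitute directly for the time reversal, and for the discrete part you complete the square in \(u\), use the Woodbury-type identity (verified by right multiplication by \(\mathsf I+GP\)) and the congruence form \(P^{1/2}(\mathsf I+P^{1/2}GP^{1/2})^{-1}P^{1/2}\), then finish by induction. The only deviation is that you prove invertibility of \(\mathsf I+GP\) with a direct kernel argument instead of Sylvester's determinant identity \(\det(\mathsf I+GP)=\det(\mathsf I+P^{1/2}GP^{1/2})>0\), and this is equally valid.
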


\begin{proof}
Along \(\dot x=Ax+Bu\), substitute the terminal Riccati equation into
the derivative of \(x^*P_{\rm LQR}x\). Completing the square gives
\[
x^*Qx+u^*R_cu+\frac{\mathrm d}{\mathrm dt}(x^*P_{\rm LQR}x)
=(u+R_c^{-1}B^*P_{\rm LQR}x)^*R_c
 (u+R_c^{-1}B^*P_{\rm LQR}x).
\]
Integration and \(P_{\rm LQR}(T)=P_T\) prove
\eqref{eq:lqr-square-continuous-app}. Its integral is nonnegative
and vanishes under the displayed feedback. The change of variable gives
\(P'(s)=Q+A^*P+PA-PGP\), which is precisely the stated input
substitution in \eqref{eq:problem-dre}.

For the discrete step, \(R_c+B^*PB\succ0\). Completing its quadratic
form in \(u\) gives the optimizer in \eqref{eq:lqr-bellman-app} and
the value matrix
\[
Q+A^*\bigl[P-PB(R_c+B^*PB)^{-1}B^*P\bigr]A.
\]
The identity
\[
(\mathsf I+BR_c^{-1}B^*P)^{-1}
=\mathsf I-B(R_c+B^*PB)^{-1}B^*P
\]
follows by multiplication, and proves the compact formula \cite[Eq.~(24)]{Poloni2020RiccatiIterations}.

For arbitrary positive semidefinite \(G,P\), Sylvester's determinant
identity \cite[Problem~1.49]{Higham2008Functions} yields
\(\det(\mathsf I+GP)=\det(\mathsf I+P^{1/2}GP^{1/2})>0\).
Multiplying the following expression on the right by \(\mathsf I+GP\)
verifies that
\begin{equation}
P(\mathsf I+GP)^{-1}
=P^{1/2}(\mathsf I+P^{1/2}GP^{1/2})^{-1}P^{1/2}\succeq0.
\label{eq:lqr-psd-identity-app}
\end{equation}
This proves existence of each inverse and positive semidefiniteness by
induction. With \(j+1\) steps remaining, the next-stage value is \(P_j\);
setting \(j=k-1\) proves the first-control formula.
\end{proof}

These identities identify the matrix outputs of the finite-horizon
problems. The next step is to specify how inverses of encoded matrices
are represented, including their normalization and decoded error.

\subsection{Inverse primitives and input perturbations}
\label{app:inverse-perturbation}

Proposition~\ref{prop:local-inverse} applies the QSVT pseudoinverse
construction to a matrix with arbitrary input normalization. The proof
below makes that rescaling explicit, so that the same primitive can be
used at each contour node.

\newtheorem*{inverserestatement}{Proposition~\ref{prop:local-inverse}}
\begin{inverserestatement}
Let \(U_T\) be an exact block-encoding of \(T\) with normalization
\(\alpha_T\), and suppose the nonzero singular values of \(T\) lie in
\([\gamma,\alpha_T]\), where \(0<\gamma\le\alpha_T\) is supplied to the inverse
construction. Under the controlled-access and singular-value-transformation
assumptions of that primitive, an encoding of the Moore--Penrose
pseudoinverse \(T^+\) can be constructed
with normalization \(O(1/\gamma)\), decoded error
\(\varepsilon_{\rm inv}>0\) satisfying \(\alpha_T\varepsilon_{\rm inv}\le1\),
and query cost
\begin{equation*}
O\!\left(
\frac{\alpha_T}{\gamma}
\log\!\left(e+\frac{1}{\gamma\varepsilon_{\rm inv}}\right)
\right).
\end{equation*}
For square invertible \(T\), the target is \(T^{-1}\).
For a rectangular matrix, \(T^+\) vanishes on the zero singular subspace.
\end{inverserestatement}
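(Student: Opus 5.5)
The plan is to reduce to the standard QSVT pseudoinverse theorem for a matrix whose nonzero singular values lie in \([\delta,1]\), and then undo the rescaling explicitly.

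\textbf{Step 1: normalize the input.} The supplied unitary \(U_T\) is a block-encoding of \(\widehat T=T/\alpha_T\) at normalization one. The nonzero singular values of \(\widehat T\) lie in \([\delta,1]\) with \(\delta=\gamma/\alpha_T\).

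\textbf{Step 2: choose an odd polynomial.} I would invoke the QSVT matrix-inversion result of \cite{GilyenSuLowWiebe2019QSVT} (Corollary~69/Lemma~40 type). It supplies an odd real polynomial \(p\) of degree
\[
d=O\!\bigl(\delta^{-1}\log(1/(\delta\epsilon'))\bigr)
\]
with two properties:
\[
|p(x)|\le1 \text{ on } [-1,1],\qquad
\Bigl|p(x)-\frac{\delta}{2x}\Bigr|\le\epsilon' \text{ for } \delta\le|x|\le1 .
\]
Applying singular-value transformation to \(U_T\) with this odd polynomial, using controlled access and the adjoint, gives an exact block-encoding of \(p^{(SV)}(\widehat T)\) with \(d\) queries. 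Here \(p^{(SV)}\) acts on the singular value decomposition \(W\Sigma V^*\) as \(V p(\Sigma) W^*\). Because \(p\) is odd, \(p(0)=0\), so the zero singular subspace is annihilated exactly.

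\textbf{Step 3: error and rescaling.} On the nonzero singular values, the encoded block is within \(\epsilon'\), in operator norm, of
\[
\frac{\delta}{2}\,\widehat T^{+}=\frac{\gamma}{2\alpha_T}\cdot\alpha_T T^{+}=\frac{\gamma}{2}\,T^{+}.
\]
Hence the block-encoding represents \(T^+\) with normalization \(2/\gamma=O(1/\gamma)\). The decoded error is \((2/\gamma)\epsilon'\), so I choose \(\epsilon'=\gamma\varepsilon_{\rm inv}/2\).

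Substituting into the degree gives
\[
O\!\left(\frac{\alpha_T}{\gamma}\log\frac{2\alpha_T}{\gamma^2\varepsilon_{\rm inv}}\right).
\]
The assumption \(\alpha_T\varepsilon_{\rm inv}\le1\) is used here. It gives \(\alpha_T\le 1/\varepsilon_{\rm inv}\), but that alone would leave a \(\log(1/(\gamma^2\varepsilon_{\rm inv}^2))\). A cleaner route is to note that \(\delta/\epsilon'=2/(\alpha_T\varepsilon_{\rm inv})\), so
\[
\frac{1}{\delta\epsilon'}=\frac{2\alpha_T}{\gamma^2\varepsilon_{\rm inv}}\le\frac{2\alpha_T}{\gamma}\cdot\frac{1}{\gamma\varepsilon_{\rm inv}} .
\]
Then \(\log\) of this splits into \(\log(\alpha_T/\gamma)+\log(1/(\gamma\varepsilon_{\rm inv}))\).

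\textbf{Main obstacle.} The step I expect to be hardest is matching exactly the stated logarithm \(\log(e+1/(\gamma\varepsilon_{\rm inv}))\) without a separate \(\log(\alpha_T/\gamma)\) term.

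The first thing I would try is the refined degree bound, in which the polynomial depends on \(\log(1/\epsilon')\) with \(\epsilon'\) measured in the scale-invariant way. Its accuracy is relative to the peak value \(\delta/2\) of the target, so the degree is \(O(\delta^{-1}\log(\delta/\epsilon'))\) rather than \(O(\delta^{-1}\log(1/(\delta\epsilon')))\). With \(\epsilon'=\gamma\varepsilon_{\rm inv}/2\), the ratio \(\delta/\epsilon'\) equals \(2/(\alpha_T\varepsilon_{\rm inv})\). This is at most \(2\alpha_T/(\alpha_T^2\varepsilon_{\rm inv})\), which is bounded by a constant times \(1/(\gamma\varepsilon_{\rm inv})\) since \(\gamma\le\alpha_T\). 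That yields the claimed \(\log(e+1/(\gamma\varepsilon_{\rm inv}))\).

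If only the cruder degree bound is available, the extra term is \(\log(\alpha_T/\gamma)\). I would argue it is absorbed, because \(\alpha_T/\gamma\le 1/(\gamma\varepsilon_{\rm inv})\) whenever \(\alpha_T\varepsilon_{\rm inv}\le1\), and this is exactly why the hypothesis is imposed. Either way the log term is dominated by \(O(\log(e+1/(\gamma\varepsilon_{\rm inv})))\), which completes the bound.
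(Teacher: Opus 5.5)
Your proof is correct and follows the paper's route: rescale to \(T/\alpha_T\), apply the odd QSVT pseudoinverse polynomial (which kills the zero singular subspace), read off normalization \(O(1/\gamma)\), and set the inner precision to \(\Theta(\gamma\varepsilon_{\rm inv})\). The one difference is that the paper quotes the sharper degree \(O(\delta^{-1}\log(1/\zeta))\), so no \(\log(\alpha_T/\gamma)\) term appears and the hypothesis \(\alpha_T\varepsilon_{\rm inv}\le1\) is used only to keep the inner precision \(\gamma\varepsilon_{\rm inv}/4\) below \(\gamma/(4\alpha_T)\le1/4\); your fallback instead absorbs that term correctly via \(\alpha_T/\gamma\le1/(\gamma\varepsilon_{\rm inv})\), and the resulting \(\log(1/(\gamma\varepsilon_{\rm inv})^2)\) you flagged as an obstacle is only twice the required logarithm. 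Drop your ``refined'' branch, though: \(\delta/(2x)\) on \(\delta\le|x|\le1\) peaks at \(1/2\), not \(\delta/2\), so the claimed degree \(O(\delta^{-1}\log(\delta/\epsilon'))\) rests on a false premise and is unjustified (and unnecessary).
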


\begin{proof}
Set \(A=T/\alpha_T\),
\(\delta=\gamma/(2\alpha_T)\), and
\(\zeta=\gamma\varepsilon_{\rm inv}/4\).
Then \(0<\zeta\le\delta/2\le1/4\), and every singular value of
\(A\) is either zero or at least \(2\delta\).
The QSVT pseudoinverse construction
\cite{GilyenSuLowWiebe2019QSVT} produces a block
approximating \((\delta/2)A^+\) to error \(\zeta\), using
\(O(\delta^{-1}\log(1/\zeta))\) oracle calls.
Since
\[
\frac{\delta}{2}A^+=\frac{\gamma}{4}T^+,
\]
the output normalization is \(4/\gamma\), and its decoded error
is at most \((4/\gamma)\zeta=\varepsilon_{\rm inv}\).
Substitution gives \eqref{eq:scaled-inverse-query}.

To check the rectangular orientation, write \(T=W\Sigma V^*\).
Applying the odd inverse polynomial to the singular-value transformation
of \(A^*\) gives the orientation \(V p(\Sigma/\alpha_T)W^*\),
which approximates \((\gamma/4)T^+\).
Odd parity gives \(p(0)=0\), so the ideal polynomial block vanishes
on the zero singular subspace. The promise excludes nonzero singular
values below \(\gamma\); no approximation on that excluded interval
is needed. For square invertible \(T\), \(T^+=T^{-1}\).
\end{proof}

The primitive above encodes the inverse of its actual matrix input.
When that input is a decoded approximation to the intended matrix, a
separate perturbation estimate is needed. For the square shifted matrices
used by the contour construction, the following bound suffices.

\begin{lemma}[Inverse perturbation bound]
\label{lem:inverse-input-perturbation}
Let \(T\in\mathbb C^{d\times d}\) be invertible, and let
\(\Delta T\in\mathbb C^{d\times d}\) satisfy
\(\|\Delta T\|\le\eta\), where \(\eta\ge0\) and
\(\eta\|T^{-1}\|<1\). Then \(T+\Delta T\) is invertible and
\begin{equation}
\begin{aligned}
\|(T+\Delta T)^{-1}\|
&\le\frac{\|T^{-1}\|}{1-\eta\|T^{-1}\|},\\
\|(T+\Delta T)^{-1}-T^{-1}\|
&\le\frac{\eta\|T^{-1}\|^2}{1-\eta\|T^{-1}\|}.
\end{aligned}
\label{eq:inverse-input-perturbation}
\end{equation}
\end{lemma}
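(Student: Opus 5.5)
The statement to prove is Lemma~\ref{lem:inverse-input-perturbation}. The plan is a Neumann-series argument. First factor
\[
T+\Delta T=T(\mathsf I+T^{-1}\Delta T).
\]
Set \(E=T^{-1}\Delta T\). Submultiplicativity gives \(\|E\|\le\|T^{-1}\|\eta<1\). Hence the series \(\sum_{k\ge0}(-E)^k\) converges absolutely in operator norm. Its sum is a two-sided inverse of \(\mathsf I+E\), which one checks by telescoping the partial sums. The sum also satisfies
\[
\|(\mathsf I+E)^{-1}\|\le\sum_k\|E\|^k=\frac1{1-\|E\|}\le\frac1{1-\eta\|T^{-1}\|}.
\]
Then \(T+\Delta T\) is invertible, being a product of invertible matrices, with
\[
(T+\Delta T)^{-1}=(\mathsf I+E)^{-1}T^{-1}.
\]
Taking norms gives the first inequality in \eqref{eq:inverse-input-perturbation}.

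For the difference, use the second resolvent-type identity
\[
(T+\Delta T)^{-1}-T^{-1}=-(T+\Delta T)^{-1}\,\Delta T\,T^{-1}.
\]
This is verified by multiplying on the left by \(T+\Delta T\): the left side becomes \(\mathsf I-(T+\Delta T)T^{-1}=-\Delta T\,T^{-1}\). Bounding the three factors by the first inequality, by \(\eta\), and by \(\|T^{-1}\|\), respectively, yields
\[
\frac{\eta\|T^{-1}\|^2}{1-\eta\|T^{-1}\|}.
\]

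There is no real obstacle. The only point needing care is the invertibility claim, which is handled by the convergent Neumann series, or equivalently by noting that \(\sigma_{\min}(T+\Delta T)\ge\sigma_{\min}(T)-\eta>0\) via Weyl's inequality. The case \(\eta=0\) is trivial and is consistent with the bounds.
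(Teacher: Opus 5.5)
Your proposal is correct and follows the paper's proof essentially step for step. The paper also factors \(T+\Delta T=T(\mathsf I+E)\) with \(E=T^{-1}\Delta T\), uses the Neumann series to get invertibility and \(\|(\mathsf I+E)^{-1}\|\le1/(1-\eta\|T^{-1}\|)\), and bounds the difference with the same identity \((T+\Delta T)^{-1}-T^{-1}=-(T+\Delta T)^{-1}\Delta T\,T^{-1}\).
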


\begin{proof}
Set \(E=T^{-1}\Delta T\). Since
\(\|E\|\le\eta\|T^{-1}\|<1\), the Neumann series~\cite[Chap.~I, Sec.~4.4]{Kato1995Perturbation} for
\((\mathsf I+E)^{-1}\) converges in norm and gives
\[
\|(\mathsf I+E)^{-1}\|
\le\frac{1}{1-\|E\|}
\le\frac{1}{1-\eta\|T^{-1}\|}.
\]
The factorization \(T+\Delta T=T(\mathsf I+E)\) proves invertibility
and the first bound. The resolvent identity
\[
(T+\Delta T)^{-1}-T^{-1}
=-(T+\Delta T)^{-1}\Delta T\,T^{-1}
\]
then yields the second bound by submultiplicativity of the norm.
\end{proof}

The smallness condition supplies both invertibility of the perturbed
shift and a bound on its inverse norm. We will apply it node by node
after establishing the exact contour construction.

\subsection{Spectral meaning of weighted Riesz operators}
\label{app:riesz-calculus}

Inversion provides access to resolvents, but the branch interpretation
comes from their contour integral. The next proposition justifies the
ordinary and pencil targets in
Definition~\ref{def:weighted-riesz-interface-main}. It allows arbitrary
Jordan structure and a singular pencil coefficient \(L\).

\begin{proposition}[Weighted spectral selection]
\label{prop:riesz-calculus-app}
Under the domain, contour, and holomorphy conditions of
Definition~\ref{def:weighted-riesz-interface-main}, the ordinary Riesz
operator acts on a Jordan block \(J_\lambda=\lambda\mathsf I+N_\lambda\)
of size \(s_\lambda\) as~\cite[Sec.~1.2.1]{Higham2008Functions}
\begin{equation}
\mathfrak R_\chi[g;J_\lambda,\mathsf I]
=\begin{cases}
\displaystyle\sum_{r=0}^{s_\lambda-1}
 \frac{g^{(r)}(\lambda)}{r!}N_\lambda^r,&\lambda\in K_\chi,\\[2mm]
0,&\lambda\notin K_\chi.
\end{cases}
\label{eq:riesz-jordan-app}
\end{equation}
In particular, \(g=1\) gives the spectral projector \(\Pi_\chi\).
If \(g\) extends holomorphically to a neighborhood of the full spectrum,
then \(\mathfrak R_\chi[g;M,\mathsf I]=g(M)\Pi_\chi\).

For a regular pencil, choose nonsingular matrices \(T_\ell,T_r\) in a
Weierstrass form \cite[Sec.~2, Eq.~(2.1)]{BennerStykel2014ProjectedRiccati}
\[
T_\ell MT_r=\operatorname{diag}(J_f,\mathsf I),\qquad
T_\ell LT_r=\operatorname{diag}(\mathsf I,N),
\]
where \(J_f\) contains the finite Jordan blocks and \(N\) is nilpotent.
Then
\begin{equation}
\mathfrak R_\chi[g;M,L]
=T_r\begin{bmatrix}
\mathfrak R_\chi[g;J_f,\mathsf I]&0\\0&0
\end{bmatrix}T_r^{-1}.
\label{eq:riesz-weierstrass-app}
\end{equation}
Thus \(g=1\) gives the right deflating projector onto the selected
finite modes. If \(L\) is invertible, the pencil operator equals
\(\mathfrak R_\chi[g;L^{-1}M,\mathsf I]\).
\end{proposition}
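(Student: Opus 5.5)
The proof reduces both the ordinary and the pencil statements to the Cauchy integral formula applied blockwise in a canonical form. For the ordinary case, take a Jordan decomposition \(M=SJS^{-1}\) with \(J=\bigoplus_\lambda J_\lambda\). Then \((z\mathsf I-M)^{-1}=S(z\mathsf I-J)^{-1}S^{-1}\), and the contour integral commutes with this similarity and with the block-diagonal structure. It therefore suffices to treat a single block \(J_\lambda=\lambda\mathsf I+N_\lambda\).

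For one block, \(N_\lambda\) is nilpotent of order \(s_\lambda\), so the finite Neumann expansion gives
\[
(z\mathsf I-J_\lambda)^{-1}=\sum_{r=0}^{s_\lambda-1}(z-\lambda)^{-r-1}N_\lambda^r,
\]
valid for \(z\ne\lambda\). Integrating term by term against \(g(z)/(2\pi\mathrm i)\) produces the scalar integrals
\[
\frac{1}{2\pi\mathrm i}\oint_{\Gamma_\chi}g(z)(z-\lambda)^{-r-1}\,\mathrm dz .
\]
If \(\lambda\in K_\chi\), the winding number is one and \(g\) is holomorphic on a neighborhood of \(\overline{D_\chi}\), so the generalized Cauchy formula gives \(g^{(r)}(\lambda)/r!\). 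If \(\lambda\notin K_\chi\), the contour does not wind around \(\lambda\), so the integrand is holomorphic inside and the integral vanishes. This proves \eqref{eq:riesz-jordan-app}.

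The consequences for the ordinary case follow directly. Setting \(g=1\) gives the identity on the selected Jordan blocks and zero elsewhere, which is the spectral projector \(\Pi_\chi\). If \(g\) is holomorphic near the whole spectrum, then the Jordan-block formula for \(g(M)\) (Higham) agrees with \eqref{eq:riesz-jordan-app} on the selected blocks. Multiplying \(g(M)\) by \(\Pi_\chi\) keeps exactly those blocks and removes the rest, so \(\mathfrak R_\chi[g;M,\mathsf I]=g(M)\Pi_\chi\).

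For the pencil, substitute the Weierstrass form \(M=T_\ell^{-1}\operatorname{diag}(J_f,\mathsf I)T_r^{-1}\), \(L=T_\ell^{-1}\operatorname{diag}(\mathsf I,N)T_r^{-1}\). Then
\[
(zL-M)^{-1}L=T_r\operatorname{diag}\bigl((z\mathsf I-J_f)^{-1},\,(zN-\mathsf I)^{-1}N\bigr)T_r^{-1},
\]
because the factors \(T_\ell\) cancel. The finite block gives the ordinary integral \(\mathfrak R_\chi[g;J_f,\mathsf I]\). For the infinite block, \((zN-\mathsf I)^{-1}N=-\sum_{k\ge0}z^kN^{k+1}\) is a finite sum since \(N\) is nilpotent, hence polynomial in \(z\). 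Multiplied by \(g\), it is holomorphic on a neighborhood of \(\overline{D_\chi}\), so Cauchy's theorem makes its integral vanish. This yields \eqref{eq:riesz-weierstrass-app}, and \(g=1\) gives the right deflating projector onto the selected finite modes.

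When \(L\) is invertible, \((zL-M)^{-1}L=(z\mathsf I-L^{-1}M)^{-1}\) pointwise on the contour. The finite generalized spectrum of the pencil equals \(\operatorname{spec}(L^{-1}M)\), so the contour and domain conditions are the same, and the two operators coincide.

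The main obstacle is not computational but a matter of bookkeeping hypotheses. The contour must avoid the finite generalized spectrum, which is exactly the condition \(\det(zL-M)\ne0\). One must also check that the assumption "\(g\) holomorphic on a neighborhood of \(\overline{D_\chi}\)" is enough for the infinite block, with no condition required at \(z=\infty\). This holds because that integrand is entire.
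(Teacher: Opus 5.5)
Your proposal is correct and follows essentially the same route as the paper: a blockwise Neumann expansion on Jordan blocks with Cauchy's differentiation formula, then the Weierstrass form, in which the infinite block becomes a polynomial in $z$ whose product with $g$ integrates to zero, and finally the pointwise identity $(zL-M)^{-1}L=(z\mathsf I-L^{-1}M)^{-1}$ for invertible $L$. One small imprecision is in your closing remark: only the polynomial factor $(zN-\mathsf I)^{-1}N$ is entire, while the full integrand is holomorphic on a neighborhood of $\overline{D_\chi}$, which is what you correctly used earlier and all that Cauchy's theorem requires.
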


\begin{proof}
On a Jordan block,
\[
(z\mathsf I-J_\lambda)^{-1}
=\sum_{r=0}^{s_\lambda-1}\frac{N_\lambda^r}{(z-\lambda)^{r+1}}.
\]
Cauchy's differentiation formula~\cite[Chap.~I, Sec.~5.6]{Kato1995Perturbation} gives \eqref{eq:riesz-jordan-app}
on selected blocks; on every other block the integrand is holomorphic
inside the domain and its integral vanishes. Similarity then gives the
ordinary formula. For \(g=1\), its Jordan blocks are identities or
zeros, proving idempotence. For a full-spectrum extension of \(g\),
the same block formula is the holomorphic matrix function \(g(M)\)
restricted by \(\Pi_\chi\).

For the pencil, its Weierstrass form yields
\[
(zL-M)^{-1}L
=T_r\operatorname{diag}\bigl((z\mathsf I-J_f)^{-1},
                         (zN-\mathsf I)^{-1}N\bigr)T_r^{-1}.
\]
If \(N^\nu=0\), the second block is the polynomial
\[
(zN-\mathsf I)^{-1}N=-\sum_{r=0}^{\nu-1}z^rN^{r+1}.
\]
Its product with \(g\) is holomorphic throughout the selected domain,
so it integrates to zero. Applying the ordinary result to the finite
block proves \eqref{eq:riesz-weierstrass-app}. Finally, if \(L\)
is invertible, the factorization
\(zL-M=L(z\mathsf I-L^{-1}M)\) gives
\((zL-M)^{-1}L=(z\mathsf I-L^{-1}M)^{-1}\), proving the reduction.
\end{proof}

The similarity matrices in this argument need not be unitary, so the
resulting projectors need not be orthogonal. The argument requires
holomorphy only on the selected domain; it therefore also permits a
bounded annular domain for an exterior branch weighted by \(z^{-k}\).
Such a domain excludes zero and its inner boundary has clockwise
orientation. The integral identifies the selected spectral subspace;
its relation to a particular Riccati graph uses the assumptions of
the corresponding equation-specific construction.

\subsection{Proofs of the contour constructions}
\label{app:contour-proofs}

The preceding proposition identifies the desired integral. We now
implement its finite quadrature sum. The ordinary construction contains
the common argument: construct each inverse, choose the LCU amplitudes
from its declared normalization, and allocate the implementation error.
The pencil construction then changes the shifted matrix and right factor.

\newtheorem*{ordinarycontourrestatement}{Proposition~\ref{thm:local-contour-sum}}
\begin{ordinarycontourrestatement}
Let \(M,R\) have exact block-encodings with normalizations
\(\alpha_M,\alpha_R\), and let every node shift be invertible.
Assume the coherent access just described, including supplied uniform
constant-factor inverse-norm estimates. With \(\beta_j\) defined
in~\eqref{eq:ordinary-node-normalization}, set
\begin{equation*}
\alpha_\Gamma=\alpha_R\sum_{j=1}^{m}|\omega_j|\beta_j.
\end{equation*}
For \(0<\varepsilon_{\rm impl}<\alpha_\Gamma\), the construction returns
a block-encoding of \(S_\Gamma\) with normalization \(\alpha_\Gamma\),
decoded error at most \(\varepsilon_{\rm impl}\), and query cost
\begin{equation*}
O\!\left(
\mathcal R_\Gamma
\log\!\left(e+\frac{\mathcal R_\Gamma\alpha_\Gamma}
{\varepsilon_{\rm impl}}\right)
\right)
\end{equation*}
to the matrix oracle and its adjoint, together with one call to \(U_R\).
\end{ordinarycontourrestatement}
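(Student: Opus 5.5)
The construction is an LCU over the node register, where each node applies a QSVT inverse of its shifted matrix. The proof has three parts: build the per-node inverse encodings, set the LCU amplitudes from the declared scales \(\alpha_R|\omega_j|\beta_j\), and allocate the decoded error.

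\emph{Per-node inverses.} At node \(z_j\), one LCU of \(U_M\) and the identity, with coefficients \(z_j/s_j\) and \(-\alpha_M/s_j\), encodes \((z_j\mathsf I-M)/s_j\) exactly. The normalization is \(s_j=|z_j|+\alpha_M\), and the circuit makes one query to \(U_M\) or its adjoint. The singular values of this normalized matrix lie in \([\gamma_j/s_j,1]\), where \(\gamma_j\) is the supplied constant-factor lower bound for \(\sigma_{\min}(z_j\mathsf I-M)=\|(z_j\mathsf I-M)^{-1}\|^{-1}\). Proposition~\ref{prop:local-inverse} then gives an encoding of \((z_j\mathsf I-M)^{-1}\) with normalization \(\beta_j=\Theta(1/\gamma_j)\) and decoded error \(\varepsilon_j\). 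Its cost is
\[
O\!\left(s_j\|(z_j\mathsf I-M)^{-1}\|\log\!\left(e+\frac{\beta_j}{\varepsilon_j}\right)\right)
=O\!\left(\mathcal R_\Gamma\log\!\left(e+\frac{\beta_j}{\varepsilon_j}\right)\right),
\]
by the definition \eqref{eq:contour-resolvent-bound}. The inverse degrees differ across nodes. I would pad each node's QSVT sequence to the common maximal length, using identity phases or a degree-raised polynomial with the same target. A node-controlled circuit \(\sum_j|j\rangle\langle j|\otimes W_j\) then has query depth equal to the maximum over nodes, not the sum.

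\emph{LCU assembly.} Prepare \(\sum_j\sqrt{|\omega_j|\beta_j/\Lambda}\,|j\rangle\), with \(\Lambda=\sum_j|\omega_j|\beta_j\), and fold the phases \(\omega_j/|\omega_j|\) into the select operation. Compose \(U_R\) once on the right, outside the select. Unpreparing then yields the block
\[
\frac{1}{\alpha_R\Lambda}\sum_j\omega_j\beta_j\cdot\frac{\widetilde{(z_j\mathsf I-M)^{-1}}}{\beta_j}\,R ,
\]
whose decoded value is \(\sum_j\omega_j\widetilde{(z_j\mathsf I-M)^{-1}}R\) at normalization \(\alpha_\Gamma=\alpha_R\Lambda\). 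This uses the standard LCU and product rules for block-encodings. The product of two encodings is exact in the decoded sense, so the errors add linearly.

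\emph{Error allocation.} The decoded error is at most \(\alpha_R\sum_j|\omega_j|\varepsilon_j\). Choose \(\varepsilon_j=\varepsilon_{\rm impl}\beta_j/\alpha_\Gamma\), so that this sum is exactly \(\varepsilon_{\rm impl}\). Then \(\beta_j/\varepsilon_j=\alpha_\Gamma/\varepsilon_{\rm impl}\). The logarithm in Proposition~\ref{prop:local-inverse} uses \(1/(\gamma\varepsilon)\) measured against the normalized matrix, which contributes a factor of order \(s_j\beta_j/\varepsilon_j\le\mathcal R_\Gamma\alpha_\Gamma/\varepsilon_{\rm impl}\). This gives the stated logarithm. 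Two side conditions of that proposition must be checked: the precondition \(s_j\varepsilon_j\le1\), and the hypothesis \(\varepsilon_{\rm impl}<\alpha_\Gamma\). If the precondition fails, I would shrink \(\varepsilon_j\) further; this only changes the argument of the logarithm by a constant factor.

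The main obstacle is the coherent padding. One must show that padding node-dependent QSVT schedules to a common length keeps every branch a valid block-encoding of its own inverse. One must also show that the controlled node-selected phase sequence costs \(O(\max_j\deg_j)\) queries rather than \(\sum_j\deg_j\). I would handle this by using a single sequence of controlled \(U_M\) calls, shared across nodes, interleaved with node-controlled phase rotations, with trivial phases on the padded steps. The remaining steps are bookkeeping.
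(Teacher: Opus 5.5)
Your proposal is correct and follows the paper's proof step for step: the exact two-term LCU for \((z_j\mathsf I-M)/s_j\), the inverse primitive at threshold \(\Theta(1/\beta_j)\) so each node has normalization \(\beta_j\), amplitudes proportional to \(\alpha_R|\omega_j|\beta_j\) with phases \(e^{\mathrm i\arg\omega_j}\) and one right call to \(U_R\), node errors proportional to \(\beta_j\), and max-over-nodes query depth (the paper takes the coherent node-dependent schedules as a hypothesis rather than proving your padding claim). Two bookkeeping fixes are needed. First, reserve part of \(\varepsilon_{\rm impl}\) for coefficient preparation and composition; the paper sets \(\varepsilon_j=\min\{1/s_j,\varepsilon_{\rm impl}\beta_j/(2\alpha_\Gamma)\}\) and gives the other half to preparation. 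Second, the inverse primitive's logarithm is \(\log(e+O(\beta_j/\varepsilon_j))\) with \(\beta_j/\varepsilon_j=\max\{s_j\beta_j,2\alpha_\Gamma/\varepsilon_{\rm impl}\}=O(\mathcal R_\Gamma\alpha_\Gamma/\varepsilon_{\rm impl})\), not \(s_j\beta_j/\varepsilon_j\). Your claimed bound on that expression would require \(s_j=O(\mathcal R_\Gamma)\), which can fail at nodes where \(\|(z_j\mathsf I-M)^{-1}\|<1\).
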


\begin{proof}
Discard nodes with zero weight and set
\(A_j=z_j\mathsf I-M\), \(s_j=|z_j|+\alpha_M\).
A two-term LCU of the identity and the matrix oracle encodes
\(A_j/s_j\) exactly using a constant number of matrix calls.
Choose the supplied inverse normalizations so that
\[
4\|A_j^{-1}\|\le\beta_j\le C\|A_j^{-1}\|,
\]
where \(C\ge4\) is fixed independently of the node.
Proposition~\ref{prop:local-inverse}, applied to \(A_j\) with input
normalization \(s_j\) and threshold \(4/\beta_j\), then gives inverse
normalization \(\beta_j\). These declared quantities are known upper
estimates; the construction does not require exact inverse norms.

Assign node errors
\begin{equation}
\varepsilon_j=\min\!\left\{\frac1{s_j},
                  \frac{\varepsilon_{\rm impl}\beta_j}{2\alpha_\Gamma}\right\}.
\label{eq:contour-node-budget-app}
\end{equation}
If the decoded inverse at node \(j\) is \(\widetilde A_j^{-1}\),
then \(\|\widetilde A_j^{-1}-A_j^{-1}\|\le\varepsilon_j\).
Prepare an index state with probabilities
\(p_j=\alpha_R|\omega_j|\beta_j/\alpha_\Gamma\).
The controlled inverse blocks, with phases \(e^{\mathrm i\arg\omega_j}\),
and subsequent unpreparation produce the signal block
\[
\sum_{j=1}^{m} p_j e^{\mathrm i\arg\omega_j}
        \frac{\widetilde A_j^{-1}}{\beta_j}
=\frac{\alpha_R}{\alpha_\Gamma}\sum_{j=1}^{m}\omega_j\widetilde A_j^{-1}.
\]
Multiplying this block by the encoding of \(R/\alpha_R\), with
separate signal ancillas, gives a block for
\(\sum_{j=1}^{m}\omega_j\widetilde A_j^{-1}R\) with normalization \(\alpha_\Gamma\).
This product uses \(U_R\) once.

The node approximation contributes at most
\[
\alpha_R\sum_{j=1}^{m}|\omega_j|\varepsilon_j
\le\frac{\varepsilon_{\rm impl}}{2\alpha_\Gamma}
       \alpha_R\sum_{j=1}^{m}|\omega_j|\beta_j
=\frac{\varepsilon_{\rm impl}}2.
\]
Choose coefficient preparation and coherent composition to contribute
at most \(\varepsilon_{\rm impl}/2\) in decoded norm. This proves
the error claim at the declared normalization.

It remains to bound the longest node circuit. The inverse query bound is
\(O(s_j\beta_j\log(e+\beta_j/\varepsilon_j))\), and
\begin{equation}
\frac{\beta_j}{\varepsilon_j}
=\max\!\left\{s_j\beta_j,
                     \frac{2\alpha_\Gamma}{\varepsilon_{\rm impl}}\right\}.
\label{eq:contour-log-budget-app}
\end{equation}
Uniform constant-factor estimates imply
\(s_j\beta_j=O(\mathcal R_\Gamma)\).
Also \(\mathcal R_\Gamma\ge1\), since
\(s_j\ge\|A_j\|\), and \(\alpha_\Gamma/\varepsilon_{\rm impl}>1\).
The assumed coherent access to node-dependent phases and inverse
schedules therefore implements all nodes with query depth
\[
O\!\left(\max_j s_j\beta_j
 \log(e+\beta_j/\varepsilon_j)\right)
=O\!\left(\mathcal R_\Gamma
 \log\!\left(e+\frac{\mathcal R_\Gamma\alpha_\Gamma}
                         {\varepsilon_{\rm impl}}\right)\right).
\]
This is \eqref{eq:ordinary-lcu-query}.
\end{proof}

The sum in the normalization and the maximum in the query count have
different origins: LCU adds the weighted inverse scales, while coherent
node processing shares the matrix calls across nodes. Padding alone does
not provide the required access to node-dependent phases and activation
schedules; that access is part of the proposition's assumptions.

\newtheorem*{pencilcontourrestatement}{Proposition~\ref{thm:affine-pencil-lcu-main}}
\begin{pencilcontourrestatement}
Let \(M,L,R\) have exact block-encodings with normalizations
\(\alpha_M,\alpha_L,\alpha_R\), and suppose every \(z_jL-M\) is invertible.
Assume controlled access to \(M,L\) and their adjoints, coherent coefficient
and inverse schedules, and supplied constant-factor inverse-norm estimates.
For declared node normalizations
\(\beta_j=\Theta(\|(z_jL-M)^{-1}\|)\), set
\begin{equation*}
\alpha_{\Gamma,L}=\alpha_L\alpha_R\sum_{j=1}^{m}|\omega_j|\beta_j.
\end{equation*}
For \(0<\varepsilon_{\rm impl}<\alpha_{\Gamma,L}\), an encoding of
\(S_{\Gamma,L}\) is obtained with normalization \(\alpha_{\Gamma,L}\),
decoded error at most \(\varepsilon_{\rm impl}\), and query cost
\begin{equation*}
O\!\left(
\mathcal R_{\Gamma,L}
\log\!\left(e+\frac{\mathcal R_{\Gamma,L}\alpha_{\Gamma,L}}
{\varepsilon_{\rm impl}}\right)
\right),
\end{equation*}
plus one call to \(U_R\).
\end{pencilcontourrestatement}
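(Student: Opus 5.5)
I will follow the proof of Proposition~\ref{thm:local-contour-sum} step by step, making two changes. The shifted matrix becomes \(A_j=z_jL-M\). The right factor becomes the product \(LR\), encoded with normalization \(\alpha_L\alpha_R\).

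\textbf{Node inverses.} First discard zero-weight nodes and set \(s_j=|z_j|\alpha_L+\alpha_M\). A two-term LCU of the controlled encodings of \(L\) and \(M\) has coefficients proportional to \(|z_j|\alpha_L\) and \(\alpha_M\), with the phase of \(z_j\) and a sign absorbed into the select operation. It encodes \(A_j/s_j\) exactly with a constant number of calls. Because \(\|A_j\|\le s_j\), the singular values lie in \([\|A_j^{-1}\|^{-1},s_j]\). I choose the supplied estimates so that \(4\|A_j^{-1}\|\le\beta_j\le C\|A_j^{-1}\|\) with a node-independent \(C\). Proposition~\ref{prop:local-inverse} with threshold \(4/\beta_j\) then gives an encoding of \(A_j^{-1}\) at normalization \(\beta_j\). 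Its cost is \(O(s_j\beta_j\log(e+\beta_j/\varepsilon_j))\), and \(s_j\beta_j=O(\mathcal R_{\Gamma,L})\) by definition of \(\mathcal R_{\Gamma,L}\) in \eqref{eq:pencil-resolvent-bound}.

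\textbf{LCU and right factors.} I prepare the index state with \(p_j=\alpha_L\alpha_R|\omega_j|\beta_j/\alpha_{\Gamma,L}\), apply the controlled inverses with phases \(e^{\mathrm i\arg\omega_j}\), and unprepare. This yields the block \((\alpha_L\alpha_R/\alpha_{\Gamma,L})\sum_j\omega_j\widetilde A_j^{-1}/(\alpha_L\alpha_R)\). Multiplying on the right by the block of \(L/\alpha_L\) and then \(R/\alpha_R\), each with fresh signal ancillas, gives \(\sum_j\omega_j\widetilde A_j^{-1}LR\) at normalization \(\alpha_{\Gamma,L}\). This uses \(U_L\) once and \(U_R\) once. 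The single \(U_L\) call is absorbed into the matrix-oracle count.

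\textbf{Error and depth.} Set \(\varepsilon_j=\min\{1/s_j,\ \varepsilon_{\rm impl}\beta_j/(2\alpha_{\Gamma,L})\}\). The node errors then contribute at most \(\alpha_L\alpha_R\sum_j|\omega_j|\varepsilon_j\le\varepsilon_{\rm impl}/2\), using \(\|L\|\le\alpha_L\) and \(\|R\|\le\alpha_R\). Coefficient preparation and composition receive the remaining \(\varepsilon_{\rm impl}/2\). For the depth, \(\beta_j/\varepsilon_j=\max\{s_j\beta_j,2\alpha_{\Gamma,L}/\varepsilon_{\rm impl}\}\). We also have \(\mathcal R_{\Gamma,L}\ge1\), since \(s_j\|A_j^{-1}\|\ge\|A_j\|\|A_j^{-1}\|\ge1\), and \(\alpha_{\Gamma,L}/\varepsilon_{\rm impl}>1\). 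Coherent padded schedules therefore give the maximum-depth bound \eqref{eq:pencil-lcu-query}.

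\textbf{Main obstacle.} The delicate step is the affine encoding of \(z_jL-M\) at scale \(s_j\). The node-dependent coefficient pair and the phase of \(z_j\) must be loaded coherently inside the controlled node circuit, under the assumed schedule access. I must also check that the normalization \(s_j\) matches the \(|z|\alpha_L+\alpha_M\) weight in \(\mathcal R_{\Gamma,L}\). This check is what makes the query bound independent of any inverse of \(L\).
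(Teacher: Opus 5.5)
Your proposal is correct and is essentially the paper's proof: it reruns the ordinary inverse-and-LCU argument with \(A_j=z_jL-M\), \(s_j=|z_j|\alpha_L+\alpha_M\), right factor \(LR\) at scale \(\alpha_L\alpha_R\), probabilities \(p_j=\alpha_L\alpha_R|\omega_j|\beta_j/\alpha_{\Gamma,L}\), the same node budget \(\varepsilon_j\), and the same maximum-depth bound via \(s_j\beta_j=O(\mathcal R_{\Gamma,L})\), with the extra \(U_L\) call absorbed into the count. One harmless slip: the index-averaged inverse block is \((\alpha_L\alpha_R/\alpha_{\Gamma,L})\sum_j\omega_j\widetilde A_j^{-1}\), without your extra division by \(\alpha_L\alpha_R\), and your final claim of \(\sum_j\omega_j\widetilde A_j^{-1}LR\) at normalization \(\alpha_{\Gamma,L}\) is the correct one.
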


\begin{proof}
Apply the preceding inverse-and-LCU argument with
\(A_j=z_jL-M\), \(s_j=|z_j|\alpha_L+\alpha_M\), and right
factor \(LR\) encoded at normalization \(\alpha_L\alpha_R\).
The two-term shift encoding uses \(U_L,U_M\). The same supplied
inverse bounds and thresholds \(4/\beta_j\) give normalization
\(\beta_j\) at each node. In \eqref{eq:contour-node-budget-app}
and the LCU probabilities, replace \(\alpha_\Gamma\) by
\(\alpha_{\Gamma,L}\); also replace \(\alpha_R\) by
\(\alpha_L\alpha_R\) in those probabilities. The decoded target is then
\(\sum_{j=1}^{m}\omega_j\widetilde A_j^{-1}LR\), and the same error
allocation gives total implementation error at most
\(\varepsilon_{\rm impl}\).
Since \(s_j\beta_j=O(\mathcal R_{\Gamma,L})\), the maximum-depth
argument following \eqref{eq:contour-log-budget-app} proves
\eqref{eq:pencil-lcu-query}. The right product uses one additional
call to \(U_L\), absorbed in that bound, and one call to \(U_R\).
At no point is \(L\) inverted.
\end{proof}

The ordinary proposition addresses
\(0<\varepsilon_{\rm impl}<\alpha_\Gamma\).
If all weights vanish, an exact zero block represents the finite sum.
If \(\varepsilon_{\rm impl}\ge\alpha_\Gamma>0\), then
\(\|S_\Gamma\|\le\alpha_\Gamma\), so a zero block at normalization
\(\alpha_\Gamma\) suffices. The same conclusions hold for the pencil
sum with \(\alpha_{\Gamma,L}\) in place of \(\alpha_\Gamma\).
These shortcuts concern the finite sums; the quadrature error must
still be included when approximating the Riesz operator.

\subsection{Quadrature error}
\label{app:contour-quadrature}

The implementation bounds above take the nodes and weights as supplied
data. To connect them to the continuous Riesz target, one needs a
quadrature error bound. Analytic parametrizations give a useful sufficient
condition for logarithmic dependence of the node count on inverse
precision, as in the classical trapezoidal-rule analysis
\cite{TrefethenWeideman2014Trap}.

\begin{lemma}[Analytic periodic quadrature]
\label{lem:analytic-quadrature-app}
Let \(F\) be a matrix-valued, \(2\pi\)-periodic function, holomorphic
on a neighborhood of the closed strip \(|\operatorname{Im}t|\le a\),
where \(a>0\).
For an integer \(m\ge1\), set
\[
\mathcal G=\frac1{2\pi}\int_0^{2\pi}F(t)\,\mathrm dt,
\qquad S_m=\frac1m\sum_{j=0}^{m-1}F(2\pi j/m).
\]
Then
\begin{equation}
\|S_m-\mathcal G\|\le\frac{2\sup_{|\operatorname{Im}t|\le a}\|F(t)\|}{e^{am}-1}.
\label{eq:analytic-quadrature-error-app}
\end{equation}
In particular, error at most \(\varepsilon_{\rm quad}>0\) is obtained by
\begin{equation}
m\ge\max\!\left\{1,\left\lceil\frac1a
       \log\!\left(1+\frac{2\sup_{|\operatorname{Im}t|\le a}\|F(t)\|}{\varepsilon_{\rm quad}}\right)
                   \right\rceil\right\}.
\label{eq:analytic-quadrature-count-app}
\end{equation}
\end{lemma}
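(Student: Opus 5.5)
The plan is the standard trapezoidal-rule argument through the Fourier coefficients of \(F\). First, expand \(F(t)=\sum_{k\in\mathbb Z}c_ke^{\mathrm ikt}\) with matrix coefficients \(c_k=\frac1{2\pi}\int_0^{2\pi}F(t)e^{-\mathrm ikt}\,\mathrm dt\). The series converges absolutely and uniformly because \(F\) is holomorphic on a neighborhood of the strip. Then \(\mathcal G=c_0\). The discrete sum annihilates every exponential except those whose frequency is a multiple of \(m\): \(\frac1m\sum_{j}e^{2\pi\mathrm ikj/m}\) equals \(1\) if \(m\mid k\) and \(0\) otherwise. Interchanging the finite sum with the uniformly convergent series therefore gives the aliasing identity
\[
S_m-\mathcal G=\sum_{\ell\ne0}c_{\ell m}.
\]

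Second, bound the coefficients by shifting the contour. Write \(B=\sup_{|\operatorname{Im}t|\le a}\|F(t)\|\). For \(k>0\), move the integration line to \(\operatorname{Im}t=-a\); for \(k<0\), move it to \(\operatorname{Im}t=+a\). Periodicity makes the vertical side contributions cancel, and Cauchy's theorem applies because holomorphy holds on a neighborhood of the closed strip. On the shifted line \(|e^{-\mathrm ikt}|=e^{-|k|a}\), so \(\|c_k\|\le Be^{-|k|a}\). The integral is matrix-valued, but the norm inequality for Bochner or Riemann integrals handles this without change.

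Third, sum the aliased terms:
\[
\|S_m-\mathcal G\|\le2B\sum_{\ell\ge1}e^{-\ell am}=\frac{2B}{e^{am}-1},
\]
which is \eqref{eq:analytic-quadrature-error-app}. For the node count, the condition \(2B/(e^{am}-1)\le\varepsilon_{\rm quad}\) is equivalent to \(m\ge a^{-1}\log(1+2B/\varepsilon_{\rm quad})\). Taking the ceiling, together with \(m\ge1\), gives \eqref{eq:analytic-quadrature-count-app}.

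The only delicate step is the contour shift. It needs holomorphy slightly beyond \(|\operatorname{Im}t|\le a\), or else continuity up to the boundary, to use the supremum at exactly \(\pm a\); the hypothesis of a neighborhood of the closed strip gives this directly. No other obstacle is expected.
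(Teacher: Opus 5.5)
Your proposal is correct and follows essentially the same route as the paper. Both arguments bound the Fourier coefficients by shifting the contour to \(\operatorname{Im}t=\mp a\), which gives \(\|F_k\|\le Be^{-a|k|}\), then use the aliasing identity \(S_m-\mathcal G=\sum_{\ell\ne0}F_{\ell m}\) and sum the geometric series; solving \(2B/(e^{am}-1)\le\varepsilon_{\rm quad}\) for \(m\) gives the node count. One small point of ordering: the absolute convergence you assert in your first step actually follows from the coefficient bound in your second step, which is the order the paper uses.
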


\begin{proof}
Let \(F_k=(2\pi)^{-1}\int_0^{2\pi}F(t)e^{-\mathrm ikt}\,\mathrm dt\).
For \(k>0\), shift the integration path to \(\operatorname{Im}t=-a\);
for \(k<0\), shift it to \(\operatorname{Im}t=a\).
Holomorphy and periodicity cancel the vertical sides and give
\(\|F_k\|\le \sup_{|\operatorname{Im}t|\le a}\|F(t)\|e^{-a|k|}\).
The Fourier series is therefore absolutely convergent on the real axis.
Averaging it at the \(m\) equispaced nodes leaves precisely the terms with
indices divisible by \(m\), so
\[
\begin{aligned}
S_m-\mathcal G&=\sum_{\ell\ne0}F_{\ell m},\\
\|S_m-\mathcal G\|
&\le2\left(\sup_{|\operatorname{Im}t|\le a}\|F(t)\|\right)
\sum_{\ell=1}^{\infty}e^{-a\ell m}
=\frac{2\sup_{|\operatorname{Im}t|\le a}\|F(t)\|}{e^{am}-1}.
\end{aligned}
\]
Solving this inequality for \(m\) proves the stated node count.
A supplied upper bound on the strip supremum gives a conservative
choice of \(m\).
\end{proof}

For a parametrized ordinary contour \(z(t)\), apply the lemma to
\begin{equation}
F(t)=\frac{g(z(t))z'(t)}{\mathrm i}
            (z(t)\mathsf I-M)^{-1}R.
\label{eq:quadrature-integrand-app}
\end{equation}
Then \(S_m\) has the form \eqref{eq:weighted-contour-coefficients}
with \(\nu_j=2\pi z'(2\pi j/m)/m\).
For a pencil, replace the resolvent-right-factor product by
\((z(t)L-M)^{-1}LR\). With several boundary components, apply the
lemma to each oriented parametrization and add their errors.

The assumption concerns the full integrand: its analytic strip must
avoid resolvent poles and singularities of the scalar weight.
The quantities \(a\) and \(\sup_{|\operatorname{Im}t|\le a}\|F(t)\|\) can depend on spectral
geometry, nonnormality, time, or the recursion index. Logarithmic
dependence on inverse precision holds with those data fixed.
For a general piecewise smooth contour, the implementation propositions
still apply to any supplied quadrature rule, but its error requires a
separate bound; piecewise smoothness alone does not imply
\eqref{eq:analytic-quadrature-error-app}.

\subsection{Approximate inputs and total error}
\label{app:contour-input-errors}

Quadrature compares an exact finite sum with the desired integral.
An approximate input oracle changes that finite sum before the inverse
circuits are applied. The next proposition bounds this change at the
quadrature nodes and then adds the implementation error. It does not
require the perturbed matrices to define the same spectral branch on
the entire contour.

\begin{proposition}[Contour construction with perturbed inputs]
\label{prop:contour-input-errors-app}
Let \(\widetilde M,\widetilde R\) be the decoded matrices of input
block-encodings with normalizations \(\alpha_M,\alpha_R\), and suppose
\[
\|\widetilde M-M\|\le\varepsilon_M,\qquad
\|\widetilde R-R\|\le\varepsilon_R.
\]
For the ordinary construction, set
\[
\begin{aligned}
A_j&=z_j\mathsf I-M,&
\widetilde A_j&=z_j\mathsf I-\widetilde M,&
\eta_j&=\varepsilon_M,\\
S&=\sum_{j=1}^{m}\omega_j A_j^{-1}R,&
\widetilde S&=\sum_{j=1}^{m}\omega_j\widetilde A_j^{-1}\widetilde R.
\end{aligned}
\]
Assume every \(A_j\) is invertible and
\(\eta_j\|A_j^{-1}\|<1\).
If \(\|S-\mathcal G_\chi\|\le\varepsilon_{\rm quad}\) and a decoded
output satisfies
\(\|S_{\rm out}-\widetilde S\|\le\varepsilon_{\rm impl}\), then
\begin{equation}
\|S_{\rm out}-\mathcal G_\chi\|
\le\varepsilon_{\rm quad}+\varepsilon_{\rm input}
+\varepsilon_{\rm impl},
\label{eq:contour-total-error-app}
\end{equation}
where the ordinary input error is bounded by
\[
\varepsilon_{\rm input}
=\sum_{j=1}^{m}|\omega_j|
\left[
\frac{\alpha_R\varepsilon_M\|A_j^{-1}\|^2}
{1-\varepsilon_M\|A_j^{-1}\|}
+\|A_j^{-1}\|\varepsilon_R
\right].
\]

For the pencil construction, let \(\widetilde L\) be the decoded
matrix of an input encoding with normalization \(\alpha_L\), with
\(\|\widetilde L-L\|\le\varepsilon_L\).
Use
\[
\begin{aligned}
A_j&=z_jL-M,&
\widetilde A_j&=z_j\widetilde L-\widetilde M,\\
S&=\sum_{j=1}^{m}\omega_jA_j^{-1}LR,&
\widetilde S&=\sum_{j=1}^{m}\omega_j\widetilde A_j^{-1}
                          \widetilde L\widetilde R.
\end{aligned}
\]
Under the same invertibility, smallness, quadrature, and implementation
conditions, \eqref{eq:contour-total-error-app} holds with
\begin{equation}
\begin{aligned}
\eta_j&=|z_j|\varepsilon_L+\varepsilon_M,\\
\varepsilon_{\rm input}
&=\sum_{j=1}^{m}|\omega_j|
\Biggl[
\frac{\alpha_L\alpha_R\eta_j\|A_j^{-1}\|^2}
{1-\eta_j\|A_j^{-1}\|}\\
&\hspace{29mm}
+\|A_j^{-1}\|
 \bigl(\alpha_L\varepsilon_R+\alpha_R\varepsilon_L
                         +\varepsilon_L\varepsilon_R\bigr)
\Biggr].
\end{aligned}
\label{eq:pencil-input-data-app}
\end{equation}

For any of these constructions, suppose the supplied node
normalizations satisfy
\[
4\|A_j^{-1}\|\le\beta_j\le C\|A_j^{-1}\|,
\]
with a fixed \(C\ge4\). The sufficient condition
\(\eta_j\beta_j\le2\) guarantees that inverse normalization
\(2\beta_j\) is admissible for \(\widetilde A_j\) and remains
within a constant factor of its inverse norm. Thus the normalization
and matrix-query estimates for exact inputs change by at most
constant factors, with \(\varepsilon_{\rm input}\) counted separately.
\end{proposition}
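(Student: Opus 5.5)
The plan splits \(\widetilde S-S\) nodewise and telescopes each node term. The total error then follows from the triangle inequality
\[
\|S_{\rm out}-\mathcal G_\chi\|\le\|S_{\rm out}-\widetilde S\|+\|\widetilde S-S\|+\|S-\mathcal G_\chi\|,
\]
where the first and third terms are bounded by \(\varepsilon_{\rm impl}\) and \(\varepsilon_{\rm quad}\) by hypothesis. It therefore remains to show \(\|\widetilde S-S\|\le\varepsilon_{\rm input}\). Since \(\widetilde S-S=\sum_j\omega_j(\widetilde A_j^{-1}\widetilde R-A_j^{-1}R)\), it suffices to bound each node term by the corresponding bracket and sum with weights \(|\omega_j|\).

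\textbf{Ordinary case.} I write
\[
\widetilde A_j^{-1}\widetilde R-A_j^{-1}R=(\widetilde A_j^{-1}-A_j^{-1})\widetilde R+A_j^{-1}(\widetilde R-R).
\]
Since \(\widetilde A_j-A_j=-(\widetilde M-M)\) has norm at most \(\varepsilon_M=\eta_j\), Lemma~\ref{lem:inverse-input-perturbation} gives invertibility of \(\widetilde A_j\) and
\[
\|\widetilde A_j^{-1}-A_j^{-1}\|\le\frac{\varepsilon_M\|A_j^{-1}\|^2}{1-\varepsilon_M\|A_j^{-1}\|}.
\]
Next I use \(\|\widetilde R\|\le\alpha_R\): a decoded block has norm at most the normalization, since it is \(\alpha_R\) times a block of a unitary. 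The second term is bounded by \(\|A_j^{-1}\|\varepsilon_R\). Together these give exactly the stated bracket.

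\textbf{Pencil case.} I telescope \(\widetilde A_j^{-1}\widetilde L\widetilde R-A_j^{-1}LR\) as
\[
(\widetilde A_j^{-1}-A_j^{-1})\widetilde L\widetilde R+A_j^{-1}(\widetilde L\widetilde R-LR).
\]
The shift perturbation is \(\widetilde A_j-A_j=z_j(\widetilde L-L)-(\widetilde M-M)\), of norm at most \(|z_j|\varepsilon_L+\varepsilon_M=\eta_j\). The lemma then bounds the first factor, and \(\|\widetilde L\widetilde R\|\le\alpha_L\alpha_R\).

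For the second term, I expand
\[
\widetilde L\widetilde R-LR=L(\widetilde R-R)+(\widetilde L-L)R+(\widetilde L-L)(\widetilde R-R).
\]
The main obstacle is matching the stated constants here. The stated bracket uses \(\alpha_L\varepsilon_R+\alpha_R\varepsilon_L+\varepsilon_L\varepsilon_R\), which needs \(\|L\|\le\alpha_L\) and \(\|R\|\le\alpha_R\) for the exact matrices, not just the decoded ones. The exact inputs are not literally encoded, so I would try the expansion \(\widetilde L(\widetilde R-R)+(\widetilde L-L)R\) and bound \(\|\widetilde L\|\le\alpha_L\). This leaves \(\|R\|\le\alpha_R+\varepsilon_R\), and the excess is absorbed by the \(\varepsilon_L\varepsilon_R\) term. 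Summing over nodes gives \(\varepsilon_{\rm input}\).

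\textbf{Normalization claim.} Given \(4\|A_j^{-1}\|\le\beta_j\) and \(\eta_j\beta_j\le2\), I have \(\eta_j\|A_j^{-1}\|\le1/2\). The first inequality of Lemma~\ref{lem:inverse-input-perturbation} then gives \(\|\widetilde A_j^{-1}\|\le2\|A_j^{-1}\|\). Hence
\[
4\|\widetilde A_j^{-1}\|\le8\|A_j^{-1}\|\le2\beta_j.
\]
For the reverse direction, the same lemma with the roles of \(A_j\) and \(\widetilde A_j\) swapped gives \(\|A_j^{-1}\|\le2\|\widetilde A_j^{-1}\|\); alternatively, a lower bound on \(\|\widetilde A_j^{-1}\|\) follows from \(\|\widetilde A_j\|\le\|A_j\|+\eta_j\). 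Either way,
\[
2\beta_j\le2C\|A_j^{-1}\|\le4C\|\widetilde A_j^{-1}\|,
\]
so \(2\beta_j\) lies within a constant factor of \(\|\widetilde A_j^{-1}\|\).

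Finally, I rerun the proofs of Propositions~\ref{thm:local-contour-sum} and~\ref{thm:affine-pencil-lcu-main} with \(\widetilde A_j\) and thresholds \(4/(2\beta_j)\). Since \(s_j\cdot2\beta_j=O(\mathcal R)\), the normalizations and query bounds change only by constant factors. The input error \(\varepsilon_{\rm input}\) is then added separately, as in \eqref{eq:contour-total-error-app}.
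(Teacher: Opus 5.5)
Your proposal is correct and follows the paper's proof: the same three-term triangle inequality, the same nodewise splitting through Lemma~\ref{lem:inverse-input-perturbation}, and the same argument that \(2\beta_j\) is an admissible inverse normalization. Your pencil split \(\widetilde L(\widetilde R-R)+(\widetilde L-L)R\), combined with \(\|R\|\le\alpha_R+\varepsilon_R\), gives exactly the bound the paper obtains from \(\widetilde L\Delta R+\Delta L\widetilde R-\Delta L\Delta R\).

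The reverse normalization bound needs a better justification. Swapping the roles in Lemma~\ref{lem:inverse-input-perturbation} does not directly give \(\|A_j^{-1}\|\le2\|\widetilde A_j^{-1}\|\). You only know \(\eta_j\|\widetilde A_j^{-1}\|\le1\), so the lemma's hypothesis may fail, and even when it holds its factor \(1/(1-\eta_j\|\widetilde A_j^{-1}\|)\) is not bounded by \(2\).

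Your alternative via \(\|\widetilde A_j\|\le\|A_j\|+\eta_j\) fails outright. It compares \(\|A_j^{-1}\|\) with \(1/\|\widetilde A_j\|\) and loses a factor of roughly \(\kappa(A_j)=\|A_j\|\|A_j^{-1}\|\), which is not bounded by a constant.

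The fix, which is what the paper uses, is the resolvent identity \(A_j^{-1}-\widetilde A_j^{-1}=\widetilde A_j^{-1}(\widetilde A_j-A_j)A_j^{-1}\). It gives \(\|A_j^{-1}\|\le(1+\eta_j\|A_j^{-1}\|)\|\widetilde A_j^{-1}\|\le\frac32\|\widetilde A_j^{-1}\|\), and hence \(2\beta_j\le3C\|\widetilde A_j^{-1}\|\).
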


\begin{proof}
For ordinary shifts,
\(\|\widetilde A_j-A_j\|\le\varepsilon_M=\eta_j\).
For pencil shifts, the triangle inequality gives
\[
\|\widetilde A_j-A_j\|
=\|z_j(\widetilde L-L)-(\widetilde M-M)\|
\le|z_j|\varepsilon_L+\varepsilon_M=\eta_j.
\]
Lemma~\ref{lem:inverse-input-perturbation} therefore gives invertibility
of every \(\widetilde A_j\) and
\[
\|\widetilde A_j^{-1}-A_j^{-1}\|
\le
\frac{\eta_j\|A_j^{-1}\|^2}{1-\eta_j\|A_j^{-1}\|}.
\]
For the ordinary right factor, write
\[
\widetilde A_j^{-1}\widetilde R-A_j^{-1}R
=(\widetilde A_j^{-1}-A_j^{-1})\widetilde R
 +A_j^{-1}(\widetilde R-R).
\]
Since a decoded unitary block satisfies
\(\|\widetilde R\|\le\alpha_R\), taking norms and summing with
\(|\omega_j|\) gives the stated ordinary input error.

For the pencil right factor, the analogous identity is
\[
\begin{aligned}
\widetilde A_j^{-1}\widetilde L\widetilde R-A_j^{-1}LR
={}&(\widetilde A_j^{-1}-A_j^{-1})\widetilde L\widetilde R\\
&+A_j^{-1}(\widetilde L\widetilde R-LR).
\end{aligned}
\]
Decoded unitary blocks give
\(\|\widetilde L\widetilde R\|\le\alpha_L\alpha_R\).
With \(\Delta L=\widetilde L-L\) and \(\Delta R=\widetilde R-R\),
\[
\widetilde L\widetilde R-LR
=\widetilde L\Delta R+\Delta L\widetilde R-\Delta L\Delta R,
\]
so
\[
\|\widetilde L\widetilde R-LR\|
\le\alpha_L\varepsilon_R+\alpha_R\varepsilon_L
                         +\varepsilon_L\varepsilon_R.
\]
This proves \eqref{eq:pencil-input-data-app}, without assuming
that the true input matrices are bounded by the decoded normalizations.

In each case, \(\|\widetilde S-S\|\le\varepsilon_{\rm input}\);
adding the quadrature and implementation errors proves
\eqref{eq:contour-total-error-app}.

Finally, \(\eta_j\beta_j\le2\) implies
\(\eta_j\|A_j^{-1}\|\le1/2\), and therefore
\[
\|\widetilde A_j^{-1}\|
\le2\|A_j^{-1}\|\le\frac{\beta_j}{2}.
\]
The inverse primitive with threshold \(2/\beta_j\) has output
normalization \(2\beta_j\). The reverse resolvent identity also gives
\[
\begin{aligned}
\|A_j^{-1}\|
&\le
\bigl(1+\eta_j\|A_j^{-1}\|\bigr)
\|\widetilde A_j^{-1}\|\\
&\le\tfrac32\|\widetilde A_j^{-1}\|.
\end{aligned}
\]
Consequently \(2\beta_j\le3C\|\widetilde A_j^{-1}\|\), proving
the uniform constant-factor claim. Applying the exact-input
construction to the decoded matrices then preserves the stated
asymptotic normalization and query bounds.
\end{proof}

The smallness conditions and input precision can be chosen
conservatively using the supplied inverse-norm upper estimates.
For a prescribed branch tolerance, allocate a third each to
quadrature, input perturbations, and implementation, then use
\eqref{eq:contour-node-budget-app} within the last budget, or its
pencil counterpart with \(\alpha_{\Gamma,L}\).
Coefficient preparation and rotation errors are measured on the
normalized circuit: if its unitary factors have errors \(\delta_\ell\),
telescoping bounds the total by \(\sum_\ell\delta_\ell\).
Multiplication by the declared output normalization converts this to
decoded error. A fixed oracle's decoded matrix error is already counted
in \(\varepsilon_{\rm input}\); errors in physically executing its
individual calls belong to the circuit budget.
This completes the approximation of the weighted Riesz block. Any
subsequent graph recovery propagates this branch error through the
equation-specific inverse or projector-block pseudoinverse construction.

\clearpage
\section{Complexity factors, spectral separation, and coefficient bounds}
\label{app:complexity-factors}

The query bounds include the cost of constructing a graph projector,
the normalization of the recovered Riccati matrix, and, for a dynamic
problem, the conditioning of its initialization. We first explain these
quantities, then relate spectral separation to contours and ordinary
condition numbers. Coefficient bounds give sufficient conditions for
estimating them before solving the Riccati equation.

\subsection{How the factors enter the query complexity}
\label{app:factor-geometry}

In the contour construction, each node requires an inverse of a shifted
matrix. The generalized singularity factor measures the largest inverse
scale that these calls resolve. Both algebraic and dynamic recovery then
multiply the lower block row of the complete graph projector by the
pseudoinverse of its upper block row. For a graph with slope \(X\),
Lemma~\ref{lem:dre-projector-block-recovery-app} gives
\[
X=(E_2^*\Pi)(E_1^*\Pi)^+,
\qquad \|(E_1^*\Pi)^+\|\le\sqrt{1+\|X\|^2}.
\]
The supplied solution-norm bound therefore enters the actual output
normalization and the required projector accuracy. A dynamic projector
also depends on the pseudoinverse of its initial branch column.

Under the calibration and access assumptions of
\eqref{eq:informal-query-upper}, the principal factors are
\begin{equation}
\widetilde O\!\left(r_\Pi\mathcal R\alpha_X\right),\qquad
\widetilde O\!\left(\mathcal R\,\alpha_P
\frac{\|\Pi_{\rm sel}R_0\|}{\sigma_{\min}(\Pi_{\rm sel}R_0)}\right).
\label{eq:factor-cost-interpretation-app}
\end{equation}
The first expression describes algebraic recovery and includes the
projector re-encoding overhead \(r_\Pi\) defined in
Appendix~\ref{app:direct-graph-recovery}. The compact main bound assumes
\(r_\Pi=O(1)\). The second describes the dynamic construction, with
\(\alpha_P=\alpha_{P(t)}\) for DRE and \(\alpha_P=\alpha_{P_k}\) for RR.
It still depends on the stated initial-column calibration, normalization changes,
inverse-error conditions, and supplied solution-norm bound. The expanded
algebraic precision dependence appears in
\eqref{eq:care-projector-query-expanded-app}
and~\eqref{eq:dare-query-expanded-app}.

An LCU normalization records the magnitudes of its summands before
cancellation and can exceed the norm of their sum. For the specific
positive-weight rectangle and unit-circle rules in
Lemmas~\ref{lem:care-normalization-factor-app}
and~\ref{lem:dare-normalization-factor-app}, the raw full-projector scales
satisfy
\[
\alpha_{\Pi_-}^{\rm raw}=\alpha_{\rm CARE}
=O(\mathcal R_{\rm CARE}),\qquad
\alpha_{\Pi_<}^{\rm raw}=\alpha_L\alpha_{\rm DARE}
=O(\mathcal R_{\rm DARE}),
\]
uniformly in dimension and quadrature accuracy. The pencil scale includes
the right factor \(L\). These estimates concern the specified quadrature
rules; the general query bounds include the actual supplied scale and the
cost of any normalization reduction. For direct LCU recovery, a supplied
bound \(\|X\|\le u\) gives the output scale
\(\alpha_X=8\alpha_\Pi\sqrt{1+u^2}\), as in
\eqref{eq:algebraic-projector-output-app}. Conservatism in either supplied
bound remains in the implementation cost. The corresponding full-projector
errors and node counts are given in
Propositions~\ref{prop:care-rectangle-quadrature-app}
and~\ref{prop:dare-circle-quadrature-app}.

We use the name generalized singularity factor both for an unnormalized
inverse norm and for the quantity that includes its shifted-input scale.
For an ordinary shift these are
\(\|(z\mathsf I-M)^{-1}\|\) and
\((|z|+\alpha_M)\|(z\mathsf I-M)^{-1}\|\), respectively;
the corresponding pencil scale is \(|z|\alpha_L+\alpha_M\).
The formula specifies which quantity is meant. Their geometric meaning
follows from the standard distance characterization of the smallest
singular value \cite{TrefethenEmbree2005Spectra}.

\begin{lemma}[Distance to rank deficiency]
\label{lem:factor-rank-distance-app}
For \(T\in\mathbb C^{m\times n}\), \(m\ge n\),
\begin{equation}
\sigma_{\min}(T)=\inf_{\operatorname{rank}(T+\Delta)<n}\|\Delta\|.
\label{eq:factor-rank-distance-app}
\end{equation}
Consequently, the reciprocal of the ordinary contour factor is
\begin{equation}
\mathcal R_\Gamma^{-1}
=\inf_{z\in\Gamma}
\sigma_{\min}\!\left(\frac{z\mathsf I-M}{|z|+\alpha_M}\right).
\label{eq:factor-normalized-distance-app}
\end{equation}
For a pencil, replace the normalized shift by
\((zL-M)/(|z|\alpha_L+\alpha_M)\).
\end{lemma}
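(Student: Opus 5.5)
The plan is to prove the distance characterization \eqref{eq:factor-rank-distance-app} through the singular value decomposition, and then obtain \eqref{eq:factor-normalized-distance-app} by applying it pointwise along the contour and taking reciprocals.

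\emph{Lower bound.} Write \(T=W\Sigma V^*\) with singular values \(\sigma_1\ge\cdots\ge\sigma_n=\sigma_{\min}(T)\). Let \(\Delta\) be any perturbation with \(\operatorname{rank}(T+\Delta)<n\). Then there is a unit vector \(v\) with \((T+\Delta)v=0\). This gives
\[
\|\Delta\|\ge\|\Delta v\|=\|Tv\|\ge\sigma_{\min}(T),
\]
where the last step is the variational characterization \(\sigma_{\min}(T)=\min_{\|v\|=1}\|Tv\|\), valid because \(m\ge n\). Hence the infimum is at least \(\sigma_{\min}(T)\).

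\emph{Attainment.} Let \(w_n\) and \(v_n\) be the last left and right singular vectors, and set \(\Delta=-\sigma_n w_nv_n^*\). Then \(\|\Delta\|=\sigma_n\). The matrix \(T+\Delta\) has SVD with the last singular value replaced by zero, so its rank is less than \(n\). The infimum is therefore a minimum and equals \(\sigma_{\min}(T)\).

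\emph{Contour identity.} For a shifted matrix \(A=z\mathsf I-M\), rank deficiency coincides with singularity. When \(A\) is invertible,
\[
\sigma_{\min}(A)=\|A^{-1}\|^{-1}.
\]
Scaling by the positive constant \(c_z=|z|+\alpha_M\) gives
\[
\sigma_{\min}(A/c_z)=\bigl(c_z\|A^{-1}\|\bigr)^{-1}.
\]
Taking the infimum over \(z\in\Gamma\) converts the supremum in \eqref{eq:contour-resolvent-bound} into its reciprocal, because \(x\mapsto1/x\) is decreasing on \((0,\infty)\). This yields \eqref{eq:factor-normalized-distance-app}. The convention \(1/\infty=0\) covers the degenerate case in which \(\mathcal R_\Gamma\) is infinite because some point of the contour is an eigenvalue.

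The pencil case is the same argument with \(A=zL-M\) and \(c_z=|z|\alpha_L+\alpha_M\).

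No step is a genuine obstacle. The only care needed is the bookkeeping: use invertibility (resolvent-set contour points) so that \(\sigma_{\min}\) equals the reciprocal of the inverse norm, and handle the infinite-factor case explicitly.
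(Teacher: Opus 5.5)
Your proof is correct and follows essentially the same route as the paper: the rank-one perturbation $-\sigma_{\min}(T)w_nv_n^*$ built from the last singular pair gives attainment, and the contour identity follows from $\sigma_{\min}(A)=\|A^{-1}\|^{-1}$ applied to each normalized shift, for both the ordinary and pencil cases. The only minor difference is in the lower bound. You argue directly from a unit kernel vector of $T+\Delta$, giving $\|\Delta\|\ge\|Tv\|\ge\sigma_{\min}(T)$, which is a self-contained proof of the singular-value perturbation inequality the paper cites; your construction with $\sigma_n=0$ also covers the rank-deficient case that the paper treats separately.
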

\begin{proof}
A perturbation of norm less than \(\sigma_{\min}(T)\) preserves
full column rank by the singular-value perturbation inequality.
For unit singular vectors with \(Tv=\sigma_{\min}(T)u\), the
perturbation \(-\sigma_{\min}(T)uv^*\) attains rank deficiency.
If \(T\) is already rank deficient, use the zero perturbation.
Apply this identity to each normalized shift and use
\(\|T^{-1}\|=1/\sigma_{\min}(T)\) for an invertible square matrix.
\end{proof}

Thus a large generalized singularity factor means that some shift is
close to singularity at the stated scale. This is a statement about
the shifted matrices; it is not a condition number for structured
perturbations of the original Riccati data. That interpretation would
require a specified perturbation model for the coefficients.

The dynamic construction additionally depends on its initial branch
column. Its absolute and relative nondegeneracy factors play different
roles in the inverse implementation.

\begin{definition}[Initialization nondegeneracy factors]
\label{def:factor-nondegeneracy-app}
For a selected dynamic projector, the absolute initialization
nondegeneracy factor is \(\sigma_{\min}(\Pi_{\rm sel}R_0)\), where
\(R_0=[\mathsf I;P_0]\).
When this column has full rank, its relative initialization
nondegeneracy factor is
\begin{equation}
\frac{\sigma_{\min}(\Pi_{\rm sel}R_0)}{\|\Pi_{\rm sel}R_0\|}
=\kappa(\Pi_{\rm sel}R_0)^{-1}.
\label{eq:factor-relative-initialization-app}
\end{equation}
The selected projectors for DRE and RR are respectively \(\Pi_+\)
and \(\Pi_>\), with the branch conventions in
Table~\ref{tab:weighted-riesz-routes}.
\end{definition}

An absolute lower bound on \(\sigma_{\min}(\Pi_{\rm sel}R_0)\) does not remove the
column norm from its condition number. The initialization pseudoinverse
depends on its actual encoding normalization divided by
\(\sigma_{\min}(\Pi_{\rm sel}R_0)\); this ratio equals the condition number only under
the corresponding norm calibration.
At rank deficiency the Moore--Penrose inverse can still have finite norm;
these reciprocal identities require full column rank.

The full-projector recovery bound above depends on the graph slope, while
the actual projector normalization can also reflect the angle between its
range and kernel.
For a nontrivial projector,
\(\|\Pi\|=1/\sin\theta_{\min}(\operatorname{ran}\Pi,\ker\Pi)\)
\cite{SimonciniSzyld2010Oblique}. This exact norm is a lower bound on an
exact encoding scale; it does not give a free normalization reduction.
The following geometry translates coefficient bounds on the discarded
graph into initialization bounds.

\begin{lemma}[Initial-column geometry]
\label{lem:factor-graph-distance-app}
Let \(\Pi\) be a rank-\(n\) projector on \(\mathbb C^{2n}\)
and let \(R\in\mathbb C^{2n\times n}\) have full column rank.
Then
\begin{equation}
\begin{aligned}
\sin\theta_{\min}(\operatorname{ran}R,\ker\Pi)
&\le\sigma_{\min}\!\left(\Pi R(R^*R)^{-1/2}\right),\\
\sigma_{\min}\!\left(\Pi R(R^*R)^{-1/2}\right)
&\le\|\Pi\|\sin\theta_{\min}(\operatorname{ran}R,\ker\Pi).
\end{aligned}
\label{eq:factor-initial-angle-app}
\end{equation}
The column \(\Pi R\) has full rank exactly when
\(\operatorname{ran}R\cap\ker\Pi=\{0\}\).
For Hermitian \(X,Y\) and \(R_0=[\mathsf I;P_0]\),
\begin{align}
\operatorname{dist}(R_0u,\operatorname{ran}[\mathsf I;X])
&=\|(\mathsf I+X^2)^{-1/2}(P_0-X)u\|,\notag\\
\operatorname{dist}(R_0u,\operatorname{ran}[-Y;\mathsf I])
&=\|(\mathsf I+Y^2)^{-1/2}(\mathsf I+YP_0)u\|.
\label{eq:factor-graph-distances-app}
\end{align}
\end{lemma}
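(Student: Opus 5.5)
The statement has three parts, and I will handle them in order.

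\emph{Angle inequalities.} First I would normalize the column. Set \(W=R(R^*R)^{-1/2}\), which is an isometry with \(\operatorname{ran}W=\operatorname{ran}R=:\mathcal U\). Then \(\sigma_{\min}(\Pi W)=\min_{x\in\mathcal U,\|x\|=1}\|\Pi x\|\). For a unit vector \(x\in\mathcal U\), write \(x=\Pi x+(\mathsf I-\Pi)x\), with \((\mathsf I-\Pi)x\in\ker\Pi\).
\begin{itemize}
\item \emph{Lower bound.} Since \(\Pi x=x-k\) for some \(k\in\ker\Pi\), we get \(\|\Pi x\|\ge\operatorname{dist}(x,\ker\Pi)\). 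Minimizing over unit \(x\in\mathcal U\) gives \(\sin\theta_{\min}(\mathcal U,\ker\Pi)\), using the characterization \(\sin\theta_{\min}(\mathcal U,\mathcal K)=\min_{x\in\mathcal U,\|x\|=1}\operatorname{dist}(x,\mathcal K)\).
\item \emph{Upper bound.} For any \(k\in\ker\Pi\), \(\Pi x=\Pi(x-k)\), so \(\|\Pi x\|\le\|\Pi\|\,\|x-k\|\). Infimizing over \(k\) gives \(\|\Pi x\|\le\|\Pi\|\operatorname{dist}(x,\ker\Pi)\). Taking the minimum over unit \(x\in\mathcal U\) gives the second inequality.
\end{itemize}

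\emph{Rank statement.} \(\Pi R\) fails to have full column rank iff \(\Pi Ru=0\) for some \(u\ne0\). Since \(R\) has full column rank, \(Ru\ne0\), so this happens iff \(Ru\in\ker\Pi\setminus\{0\}\). Equivalently, \(\operatorname{ran}R\cap\ker\Pi\ne\{0\}\).

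\emph{Distance formulas.} These follow from orthogonal projection onto a graph.
\begin{itemize}
\item For \(S=[\mathsf I;X]\), the orthogonal complement of \(\operatorname{ran}S\) is \(\operatorname{ran}[-X;\mathsf I]\) when \(X\) is Hermitian. Its columns \(N=[-X;\mathsf I]\) satisfy \(N^*N=\mathsf I+X^2\). Hence for any \(w\),
\[
\operatorname{dist}(w,\operatorname{ran}S)=\|(N^*N)^{-1/2}N^*w\|.
\]
With \(w=R_0u=[u;P_0u]\), we have \(N^*w=-Xu+P_0u=(P_0-X)u\), which gives the first formula.
\item For \(\operatorname{ran}[-Y;\mathsf I]\), the complement is \(\operatorname{ran}[\mathsf I;Y]\), with Gram matrix \(\mathsf I+Y^2\). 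Here \([\mathsf I;Y]^*R_0u=(\mathsf I+YP_0)u\), which gives the second formula.
\end{itemize}

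The one step that needs care is justifying \(\operatorname{dist}(w,\operatorname{ran}S)=\|(N^*N)^{-1/2}N^*w\|\). The orthogonal projector onto \(\operatorname{ran}N\) is \(N(N^*N)^{-1}N^*\), so its norm applied to \(w\) squared equals \(w^*N(N^*N)^{-1}N^*w=\|(N^*N)^{-1/2}N^*w\|^2\). I would also check the orthogonality \(S^*N=-X+X^*=0\), which uses that \(X\) is Hermitian; the same check applies for \(Y\).

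I do not expect a genuine obstacle here; the only delicate point is using the correct min-characterization of the smallest principal angle.
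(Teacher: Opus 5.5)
Your proof is correct and follows essentially the same route as the paper. The paper proves \(\operatorname{dist}(x,\ker\Pi)\le\|\Pi x\|\le\|\Pi\|\operatorname{dist}(x,\ker\Pi)\), writing it with the orthogonal projector onto \((\ker\Pi)^\perp\), and minimizes over the unit sphere of \(\operatorname{ran}R\). It then reads off the rank criterion from the kernel and obtains the distance identities from the orthonormal frames \([-X;\mathsf I](\mathsf I+X^2)^{-1/2}\) and \([\mathsf I;Y](\mathsf I+Y^2)^{-1/2}\) of the graph complements.
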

\begin{proof}
Let \(P_{(\ker\Pi)^\perp}\) denote the orthogonal projector onto
\((\ker\Pi)^\perp\). Since \(x-\Pi x\in\ker\Pi\) and
\(\Pi P_{(\ker\Pi)^\perp}=\Pi\),
\[
\|P_{(\ker\Pi)^\perp}x\|\le\|\Pi x\|
\le\|\Pi\|\|P_{(\ker\Pi)^\perp}x\|.
\]
Minimizing on the unit sphere of \(\operatorname{ran}R\) proves the
angle bounds, since \(R(R^*R)^{-1/2}\) is an isometry onto that range.
The kernel gives the rank criterion. Orthonormal frames
for the two graph complements are
\([-X;\mathsf I](\mathsf I+X^2)^{-1/2}\) and
\([\mathsf I;Y](\mathsf I+Y^2)^{-1/2}\).
Multiplication by their adjoints proves the distance identities.
\end{proof}

The spectral factor can also bound the norm of the projector used in
initialization and recovery. For an ordinary Riesz projector,
\begin{equation}
\|\Pi\|\le\frac{\mathcal R_\Gamma}{2\pi}
\int_\Gamma\frac{|\mathrm dz|}{|z|+\alpha_M}.
\label{eq:factor-projector-integral-app}
\end{equation}
For a pencil the analogous bound has numerator
\(\mathcal R_{\Gamma,L}\|L\|\) and denominator
\(|z|\alpha_L+\alpha_M\) inside the integral.
These follow directly by taking norms in the Riesz integral.
If \(\|P_0\|\le p\), then
\(\kappa(\Pi_{\rm sel}R_0)
\le\|\Pi_{\rm sel}\|\sqrt{1+p^2}/\sigma_{\min}(\Pi_{\rm sel}R_0)\).
Using \eqref{eq:factor-projector-integral-app} in this estimate can
introduce another generalized singularity factor; it does not justify
discarding the column norm from \eqref{eq:factor-cost-interpretation-app}.

A supplied positive lower bound on an initialization singular value can
be used conservatively in its inverse threshold and error budget. Its
reciprocal remains in the cost unless it approximates the actual value
within a constant factor. The threshold also divides the supplied bound
by the encoding normalization, as in
Proposition~\ref{prop:local-inverse}. For algebraic recovery, conservatism
in a supplied solution-norm bound instead enters
\eqref{eq:algebraic-projector-output-app} and its precision budget.
With uniform bounds on the spectral factors, actual normalizations,
supplied solution norms, and re-encoding overheads, the two algebraic
theorems give \(O(\log^2(e+1/\varepsilon))\) matrix queries.
Dimension can enter these parameters and each oracle's implementation
cost; it is not an additional multiplier in the count of supplied-oracle
calls.

\subsection{Spectral separation, contours, and condition numbers}
\label{app:factor-spectral}

The usual condition number of an invertible \(H\) examines its singular
values at the zero shift. Selecting a half-plane branch instead requires
separation along the imaginary axis. The following bounds turn that
separation into finite contours and keep the supplied encoding scale
explicit.

\begin{theorem}[Half-plane factors on finite rectangles]
\label{thm:factor-axis-app}
Let \(H\) have no imaginary-axis spectrum and let
\(\alpha_H\ge\|H\|\). Set
\begin{equation}
\Delta_{\rm ax}=\inf_{\omega\in\mathbb R}
\sigma_{\min}(\mathrm i\omega\mathsf I-H)>0.
\label{eq:factor-axis-gaps-app}
\end{equation}
For \(0<\tau<1\), the positively oriented boundaries of
\begin{equation}
\{x+\mathrm iy:\tau\Delta_{\rm ax}\le x\le2\alpha_H,
\ |y|\le2\alpha_H\},\qquad
\{x+\mathrm iy:-2\alpha_H\le x\le-\tau\Delta_{\rm ax},
\ |y|\le2\alpha_H\}
\label{eq:factor-finite-rectangles-app}
\end{equation}
enclose exactly the corresponding half-plane branches.
On both contours the inverse norm is at most
\(1/((1-\tau)\Delta_{\rm ax})\), and their maximum generalized
singularity factor satisfies
\begin{equation}
\frac{\alpha_H}{(1+\tau)\Delta_{\rm ax}}
\le\mathcal R_{H,\tau}\le
\frac{1+2\sqrt2}{1-\tau}\frac{\alpha_H}{\Delta_{\rm ax}}.
\label{eq:factor-rectangle-bound-app}
\end{equation}
Here \(\mathcal R_{H,\tau}\) is the supremum of
\((|z|+\alpha_H)\|(z\mathsf I-H)^{-1}\|\) over both contours.
\end{theorem}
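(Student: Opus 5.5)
The argument has three parts. We show that the finite rectangles enclose exactly the two half-plane branches. We then bound the resolvent on the rectangle boundaries by the imaginary-axis gap \(\Delta_{\rm ax}\), using Weyl's singular-value perturbation inequality. Finally, we convert this into two-sided bounds on \(\mathcal R_{H,\tau}\) through \(|z|+\alpha_H\).

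\emph{Enclosure.} Every eigenvalue \(\lambda\) of \(H\) satisfies \(|\lambda|\le\|H\|\le\alpha_H\), so \(|\operatorname{Re}\lambda|,|\operatorname{Im}\lambda|\le\alpha_H<2\alpha_H\). Every eigenvalue also satisfies \(|\operatorname{Re}\lambda|\ge\Delta_{\rm ax}\). To see this, take a unit eigenvector \(v\) and set \(\omega=\operatorname{Im}\lambda\). Then \(\|(\mathrm i\omega\mathsf I-H)v\|=|\operatorname{Re}\lambda|\), which is at least \(\sigma_{\min}(\mathrm i\omega\mathsf I-H)\ge\Delta_{\rm ax}\). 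Since \(\tau<1\), each right-half-plane eigenvalue lies strictly inside the right rectangle, and each left-half-plane eigenvalue lies strictly inside the left one. The two rectangles are disjoint. Each therefore winds once around its own branch and zero times around the other.

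\emph{Resolvent bound.} Take \(z=x+\mathrm iy\) on either boundary. If \(|x|\le\Delta_{\rm ax}\), Weyl's inequality gives
\[
\sigma_{\min}(z\mathsf I-H)\ge\sigma_{\min}(\mathrm iy\mathsf I-H)-|x|\ge\Delta_{\rm ax}-|x|.
\]
On the contours \(|x|\ge\tau\Delta_{\rm ax}\), so this step alone does not give the claimed bound on the inner vertical sides. The boundary points with \(|x|\ge\tau\Delta_{\rm ax}\) are the ones that need care.

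We split the boundary into two parts. The first part consists of points with \(|x|=2\alpha_H\) or \(|y|=2\alpha_H\). There \(|z|\ge2\alpha_H\), so \(\sigma_{\min}(z\mathsf I-H)\ge|z|-\|H\|\ge\alpha_H\). Also \(\alpha_H\ge\Delta_{\rm ax}\), because \(\Delta_{\rm ax}\le\sigma_{\min}(H)\le\|H\|\). Hence the inverse norm is at most \(1/\Delta_{\rm ax}\) on this part.

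The second part is the inner vertical sides, \(x=\pm\tau\Delta_{\rm ax}\). Here we must read the definition of \(\Delta_{\rm ax}\) so that the stated bound \(1/((1-\tau)\Delta_{\rm ax})\) follows. Weyl's inequality gives \(\sigma_{\min}\ge\Delta_{\rm ax}-\tau\Delta_{\rm ax}=(1-\tau)\Delta_{\rm ax}\) directly. So the perturbation argument yields exactly the claimed inverse-norm bound on the inner sides. The obstruction noted above is an artifact of how the first pass was phrased: the bound \(\Delta_{\rm ax}-|x|\) applies for every \(|x|\le\tau\Delta_{\rm ax}\), and the inner sides sit exactly at \(|x|=\tau\Delta_{\rm ax}\). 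This settles the inverse bound on both contours.

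\emph{Upper bound on \(\mathcal R_{H,\tau}\).} On the contours, \(|z|\le\sqrt{(2\alpha_H)^2+(2\alpha_H)^2}=2\sqrt2\,\alpha_H\), so \(|z|+\alpha_H\le(1+2\sqrt2)\alpha_H\). Multiplying by the inverse bound \(1/((1-\tau)\Delta_{\rm ax})\) gives the stated upper bound.

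\emph{Lower bound on \(\mathcal R_{H,\tau}\).} This is the step we expect to be the main obstacle, because we need a boundary point where the resolvent is nearly as large as \(1/\Delta_{\rm ax}\). By compactness, the infimum in \eqref{eq:factor-axis-gaps-app} is attained at some \(\omega_0\). For large \(|\omega|\), \(\sigma_{\min}(\mathrm i\omega\mathsf I-H)\ge|\omega|-\alpha_H\), so the infimum is taken over \(|\omega|\le2\alpha_H\). A minimizer \(\omega_0\) therefore exists with \(|\omega_0|\le2\alpha_H\), and it lies within the vertical range of both inner sides.

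Now take \(z_0=\pm\tau\Delta_{\rm ax}+\mathrm i\omega_0\) on an inner side. Weyl's inequality gives
\[
\sigma_{\min}(z_0\mathsf I-H)\le\Delta_{\rm ax}+\tau\Delta_{\rm ax}=(1+\tau)\Delta_{\rm ax}.
\]
Hence \(\|(z_0\mathsf I-H)^{-1}\|\ge1/((1+\tau)\Delta_{\rm ax})\). Using \(|z_0|+\alpha_H\ge\alpha_H\) gives \(\mathcal R_{H,\tau}\ge\alpha_H/((1+\tau)\Delta_{\rm ax})\), which is the stated lower bound.
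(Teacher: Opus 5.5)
Your proof is correct and follows essentially the same route as the paper. The enclosure comes from testing a unit eigenvector at the shift \(\mathrm i\operatorname{Im}\lambda\). The inverse bound comes from Weyl perturbation from the imaginary axis on the inner edges, together with \(|z|\ge2\alpha_H\) on the remaining edges. The lower bound comes from a minimizer \(\omega_*\) with \(|\omega_*|\le2\alpha_H\), evaluated at \(\pm\tau\Delta_{\rm ax}+\mathrm i\omega_*\).

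Two presentation fixes. The intermediate remark about an ``obstruction'' on the inner sides is spurious and should be deleted: \(\sigma_{\min}(z\mathsf I-H)\ge\Delta_{\rm ax}-|x|\) holds at every point and is exactly what settles \(|x|=\tau\Delta_{\rm ax}\). The minimizer is localized more directly, as in the paper, by \(\Delta_{\rm ax}=\sigma_{\min}(\mathrm i\omega_*\mathsf I-H)\ge|\omega_*|-\|H\|\).
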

\begin{proof}
The smallest singular value is continuous and is bounded below by
\(|\omega|-\|H\|\) on the imaginary axis. Its infimum is attained
and positive, and evaluation at zero gives
\(\Delta_{\rm ax}\le\sigma_{\min}(H)\le\alpha_H\).
Every eigenvalue \(\lambda\) has modulus at most \(\alpha_H\) and
\(|\operatorname{Re}\lambda|\ge\Delta_{\rm ax}\), by testing a
unit eigenvector at the shift \(\mathrm i\operatorname{Im}\lambda\).
Thus the rectangles enclose exactly their branches.
On the inner edges, singular-value perturbation from the imaginary axis
gives a gap of at least \((1-\tau)\Delta_{\rm ax}\).
On the remaining edges, \(|z|\ge2\alpha_H\), so the gap is at
least \(\alpha_H\). These estimates give the inverse bound and,
since \(|z|+\alpha_H\le(1+2\sqrt2)\alpha_H\), the upper bound
in \eqref{eq:factor-rectangle-bound-app}.

An imaginary-axis minimizer \(\omega_*\) satisfies
\(|\omega_*|\le\|H\|+\Delta_{\rm ax}\le2\alpha_H\).
The points \(\pm\tau\Delta_{\rm ax}+\mathrm i\omega_*\)
therefore lie on the inner edges, where singular-value perturbation
bounds the gap above by \((1+\tau)\Delta_{\rm ax}\).
The shifted encoding scale is at least \(\alpha_H\), proving
the lower bound.
\end{proof}

A supplied positive lower bound on \(\Delta_{\rm ax}\) may replace
it in the contour locations and upper bounds. The matching lower bound
uses the true gap. The rectangles provide a decay margin for time weights;
their corners require a piecewise quadrature rule rather than the global
analytic parametrization of Lemma~\ref{lem:analytic-quadrature-app}.

\begin{proposition}[When a condition number controls the rectangle factor]
\label{prop:factor-condition-number-app}
In the setting of Theorem~\ref{thm:factor-axis-app}, fix \(0<\tau<1\).
Then
\begin{equation}
\mathcal R_{H,\tau}
=\Theta\!\left(
\frac{\alpha_H}{\|H\|}
\frac{\sigma_{\min}(H)}{\Delta_{\rm ax}}\,\kappa(H)\right),
\label{eq:factor-condition-number-app}
\end{equation}
where the constants depend only on \(\tau\).
Over a family, \(\mathcal R_{H,\tau}=O(\kappa(H))\) if and only if
\(\alpha_H=O(\|H\|)\) and
\(\Delta_{\rm ax}\ge c\sigma_{\min}(H)\) for a uniform \(c>0\).
The gap comparison follows from either of the following sufficient conditions:
\begin{enumerate}
\item Suppose \(\sigma_{\min}((H+H^*)/2)>0\) and
\[
\left\|\frac{H-H^*}{2}\right\|
\le\rho\,\sigma_{\min}\!\left(\frac{H+H^*}{2}\right),
\qquad 0\le\rho<1,
\]
where \(\rho\) is uniformly bounded away from one. Then
\(\Delta_{\rm ax}\ge(1-\rho)\sigma_{\min}(H)/(1+\rho)\).
\item If \(H=V\Lambda V^{-1}\) and
\(|\operatorname{Re}\lambda|\ge\beta|\lambda|\) for every eigenvalue,
then
\[
\Delta_{\rm ax}\ge
\frac{\beta}{\|V\|\|V^{-1}\|}\,\sigma_{\min}(H).
\]
Uniform comparison requires \(\|V\|\|V^{-1}\|/\beta\) to be bounded.
\end{enumerate}
In particular, an invertible Hermitian \(H\) has
\(\Delta_{\rm ax}=\sigma_{\min}(H)\).
\end{proposition}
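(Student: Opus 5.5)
Since \(\kappa(H)=\|H\|\,\sigma_{\min}(H)^{-1}\), the expression inside \(\Theta\) in \eqref{eq:factor-condition-number-app} collapses algebraically to \(\alpha_H/\Delta_{\rm ax}\). The plan is to read the two-sided estimate directly off Theorem~\ref{thm:factor-axis-app}. That theorem gives
\[
\frac{1}{1+\tau}\frac{\alpha_H}{\Delta_{\rm ax}}\le\mathcal R_{H,\tau}\le\frac{1+2\sqrt2}{1-\tau}\frac{\alpha_H}{\Delta_{\rm ax}},
\]
with constants depending only on \(\tau\). For the equivalence, I will write
\[
\mathcal R_{H,\tau}/\kappa(H)=\Theta\bigl((\alpha_H/\|H\|)(\sigma_{\min}(H)/\Delta_{\rm ax})\bigr).
\]
Both factors are at least one: \(\alpha_H\ge\|H\|\) by assumption, and \(\Delta_{\rm ax}\le\sigma_{\min}(H)\) by evaluating at \(\omega=0\), as in the proof of Theorem~\ref{thm:factor-axis-app}. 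Hence the product is bounded uniformly over a family if and only if each factor is. This is exactly the pair of conditions \(\alpha_H=O(\|H\|)\) and \(\Delta_{\rm ax}\ge c\,\sigma_{\min}(H)\).

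For sufficient condition 1, I will split \(H=S+K\) with \(S=(H+H^*)/2\) Hermitian and \(K=(H-H^*)/2\) skew-Hermitian. The matrix \(S-\mathrm i\omega\mathsf I\) is normal with singular values \(|s_k-\mathrm i\omega|\ge|s_k|\), so \(\sigma_{\min}(S-\mathrm i\omega\mathsf I)\ge\sigma_{\min}(S)\). Weyl's singular-value perturbation inequality then gives
\[
\sigma_{\min}(\mathrm i\omega\mathsf I-H)\ge\sigma_{\min}(S)-\|K\|\ge(1-\rho)\sigma_{\min}(S)
\]
for every real \(\omega\). In the other direction, \(\sigma_{\min}(H)\le\sigma_{\min}(S)+\|K\|\le(1+\rho)\sigma_{\min}(S)\). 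Dividing the two estimates yields \(\Delta_{\rm ax}\ge(1-\rho)\sigma_{\min}(H)/(1+\rho)\).

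For condition 2, I will use the bound \(\sigma_{\min}(V D V^{-1})\ge\sigma_{\min}(D)/(\|V\|\|V^{-1}\|)\) for diagonal \(D\), applied to \(D=\mathrm i\omega\mathsf I-\Lambda\). Each entry satisfies \(|\mathrm i\omega-\lambda|\ge|\operatorname{Re}\lambda|\ge\beta|\lambda|\). Moreover \(\min|\lambda|\ge\sigma_{\min}(H)\), since \(|\lambda|\) is the norm of \(H\) applied to a unit eigenvector. Together these give the stated bound on \(\Delta_{\rm ax}\). If \(\|V\|\|V^{-1}\|/\beta\) is unbounded, this estimate gives no uniform constant, so uniform comparison through this route requires it to be bounded.

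For Hermitian invertible \(H\), the matrix \(\mathrm i\omega\mathsf I-H\) is normal with singular values \(\sqrt{\omega^2+\lambda^2}\), whose minimum over \(\omega\) is attained at \(\omega=0\); this gives \(\Delta_{\rm ax}=\sigma_{\min}(H)\). The only step needing care is condition 1. There \(S\) may be indefinite, so a numerical-range argument is unavailable, and I will rely on the normal-shift estimate for \(S-\mathrm i\omega\mathsf I\) plus Weyl perturbation instead.
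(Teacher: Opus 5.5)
Your proposal is correct and follows the paper's proof essentially step for step. You substitute \(\kappa(H)=\|H\|/\sigma_{\min}(H)\) into the bounds of Theorem~\ref{thm:factor-axis-app}, observe that both ratios are at least one, and for condition~1 combine the normal shift of the Hermitian part with Weyl perturbation by the skew part. You then use the similarity bound for condition~2 and note equality at \(\omega=0\) in the Hermitian case. Where you differ is only in detail: you spell out the estimate \(\sigma_{\min}(S-\mathrm i\omega\mathsf I)\ge\sigma_{\min}(S)\), which the paper leaves implicit in the phrase ``the spectral theorem applied to \((H+H^*)/2\).''
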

\begin{proof}
Substitute \(\kappa(H)=\|H\|/\sigma_{\min}(H)\) in
\eqref{eq:factor-rectangle-bound-app}.
Both ratios in \eqref{eq:factor-condition-number-app} are at least
one, so their product is uniformly bounded exactly when each is.
For the first sufficient condition, the spectral theorem applied to
\((H+H^*)/2\) and perturbation by \((H-H^*)/2\) give
\[
\begin{aligned}
\Delta_{\rm ax}
&\ge(1-\rho)\sigma_{\min}\!\left(\frac{H+H^*}{2}\right),\\
\sigma_{\min}(H)
&\le(1+\rho)\sigma_{\min}\!\left(\frac{H+H^*}{2}\right).
\end{aligned}
\]
For the second, diagonalization bounds the imaginary-axis inverse norm
by \(\|V\|\|V^{-1}\|/\min_\lambda|\operatorname{Re}\lambda|\), and
\(\min_\lambda|\lambda|\ge\sigma_{\min}(H)\).
For Hermitian \(H\), equality in the gap occurs at zero.
\end{proof}

Without a tight encoding scale the comparison still includes
\(\alpha_H/\sigma_{\min}(H)\), which need not be the usual condition
number. The proposition concerns the stated rectangles: an arbitrary
contour may approach an eigenvalue and have a much larger factor.
A conservative supplied gap likewise remains in the upper bound.

For an ordinary discrete lift the separating set is the unit circle.
The exterior branch requires a contour system that excludes the
interior spectrum as well as the origin, where the inverse power weight
is singular.

\begin{theorem}[Unit-circle separation and annular contours]
\label{thm:factor-circle-app}
Let \(S\) have no unit-circle spectrum and have nonempty interior
and exterior branches. Let \(\alpha_S\ge\|S\|\), and set
\begin{equation}
\eta_{\mathbb T}=\min_{|z|=1}\sigma_{\min}(z\mathsf I-S).
\label{eq:factor-circle-gap-app}
\end{equation}
Then \(0<\eta_{\mathbb T}\le1\), \(\alpha_S>1\), and for
\(0<\tau<1\) the positively oriented contours
\begin{equation}
|z|=1-\tau\eta_{\mathbb T},\qquad
\partial\{1+\tau\eta_{\mathbb T}<|z|<3\alpha_S\}
\label{eq:factor-annular-contours-app}
\end{equation}
enclose exactly the interior and exterior branches, respectively.
The annular inner boundary is oriented clockwise, so the exterior
system has winding number zero about the closed unit disk.
On the two circles adjacent to the unit circle, the smallest singular
value is at least \((1-\tau)\eta_{\mathbb T}\).
The maximum generalized singularity factor over both contour systems
is at most
\begin{equation}
\frac{1+\alpha_S+\tau\eta_{\mathbb T}}
{(1-\tau)\eta_{\mathbb T}}.
\label{eq:factor-annular-bound-app}
\end{equation}
\end{theorem}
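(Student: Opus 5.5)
The proof follows Theorem~\ref{thm:factor-axis-app} with the imaginary axis replaced by the unit circle. There are three parts: elementary bounds on \(\eta_{\mathbb T}\) and \(\alpha_S\), the claim that the contours enclose exactly the right spectrum, and singular-value perturbation edge by edge.

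\textbf{Elementary bounds.} Because \(S\) has no spectrum on the unit circle, \(z\mapsto\sigma_{\min}(z\mathsf I-S)\) is continuous and positive on the compact circle. Hence the minimum \(\eta_{\mathbb T}\) is attained and positive. For the upper bound, take an interior eigenvalue \(\lambda\) with \(|\lambda|<1\) and unit eigenvector \(v\). At the radial point \(z=\lambda/|\lambda|\) (or any unit \(z\) if \(\lambda=0\)),
\[
\|(z\mathsf I-S)v\|=|z-\lambda|=1-|\lambda|\le1,
\]
so \(\eta_{\mathbb T}\le1\). Since the exterior branch is nonempty, some eigenvalue has \(|\lambda|>1\), which gives \(\alpha_S\ge\|S\|\ge\rho(S)>1\).

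\textbf{Location of the spectrum.} For any eigenvalue \(\lambda\) with unit eigenvector \(v\), testing at the radial unit point \(z\) gives \(\eta_{\mathbb T}\le|z-\lambda|=\bigl||\lambda|-1\bigr|\). So every eigenvalue lies outside the open annulus \(1-\eta_{\mathbb T}<|z|<1+\eta_{\mathbb T}\). Every eigenvalue also satisfies \(|\lambda|\le\|S\|\le\alpha_S<3\alpha_S\).

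From this:
\begin{itemize}
\item Interior eigenvalues satisfy \(|\lambda|\le1-\eta_{\mathbb T}<1-\tau\eta_{\mathbb T}\), so they lie strictly inside the inner circle.
\item Exterior eigenvalues satisfy \(1+\tau\eta_{\mathbb T}<1+\eta_{\mathbb T}\le|\lambda|<3\alpha_S\), so they lie strictly inside the annulus.
\end{itemize}
Hence neither contour meets the spectrum. The annular boundary, with its outer circle counterclockwise and its inner circle clockwise, has winding number one about points of the annulus. It has winding number \(1-1=0\) about every point of the closed disk \(|z|\le1+\tau\eta_{\mathbb T}\), which contains the closed unit disk and all interior eigenvalues. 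Likewise the inner circle winds once around interior eigenvalues and zero times around exterior ones. This settles the enclosure and orientation claims.

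\textbf{Singular-value estimates and the factor bound.} If \(|z|=1\pm\tau\eta_{\mathbb T}\), write \(z=(1\pm\tau\eta_{\mathbb T})w\) with \(|w|=1\). Then \(|z-w|=\tau\eta_{\mathbb T}\), and Weyl's singular-value perturbation inequality gives
\[
\sigma_{\min}(z\mathsf I-S)\ge\eta_{\mathbb T}-\tau\eta_{\mathbb T}=(1-\tau)\eta_{\mathbb T}.
\]
On the outer circle \(|z|=3\alpha_S\) we have \(\sigma_{\min}(z\mathsf I-S)\ge|z|-\|S\|\ge2\alpha_S\).

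Now bound \((|z|+\alpha_S)\|(z\mathsf I-S)^{-1}\|\) on each piece:
\begin{itemize}
\item On the two circles adjacent to the unit circle, it is at most \((1+\tau\eta_{\mathbb T}+\alpha_S)/((1-\tau)\eta_{\mathbb T})\). The inner circle, with the smaller radius, gives a smaller value, so the outer adjacent circle is the worst case.
\item On the outer circle, it is at most \(4\alpha_S/(2\alpha_S)=2\).
\end{itemize}
It remains to check that the outer-circle value \(2\) does not exceed the adjacent-circle bound. Since \(\eta_{\mathbb T}\le1\) and \(\alpha_S>1\),
\[
\frac{1+\alpha_S+\tau\eta_{\mathbb T}}{(1-\tau)\eta_{\mathbb T}}\ge1+\alpha_S>2 .
\]
So the maximum over both contour systems is bounded by \eqref{eq:factor-annular-bound-app}.

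The only step needing care is the winding-number bookkeeping for the annular system. Everything else is direct singular-value perturbation.
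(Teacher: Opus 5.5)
Your proof is correct and follows the paper's argument step for step. It uses compactness for \(\eta_{\mathbb T}>0\); eigenvector tests at radial unit points for \(\eta_{\mathbb T}\le1\), \(\alpha_S>1\) and \(\bigl||\lambda|-1\bigr|\ge\eta_{\mathbb T}\); Weyl perturbation on the two adjacent circles; and the bound \(2\) on \(|z|=3\alpha_S\). You also spell out the winding-number bookkeeping and check that \(2\) is dominated by \eqref{eq:factor-annular-bound-app}, which the paper only asserts.

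One small slip: the annular system has winding number zero on the open disk \(|z|<1+\tau\eta_{\mathbb T}\), not the closed one, since the circle \(|z|=1+\tau\eta_{\mathbb T}\) lies on the contour. That open disk already contains the closed unit disk, so the conclusion is unaffected.
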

\begin{proof}
Compactness gives a positive gap. A unit eigenvector for an interior
eigenvalue, tested at a nearest point on the unit circle, gives
\(\eta_{\mathbb T}\le1\); an exterior eigenvalue gives \(\alpha_S>1\).
The same argument gives
\(\bigl||\lambda|-1\bigr|\ge\eta_{\mathbb T}\) for every eigenvalue,
so the proposed contours enclose exactly their branches.
Singular-value perturbation from the unit circle proves the stated
gap on its two adjacent circles, where the shifted scale is at most
\(1+\alpha_S+\tau\eta_{\mathbb T}\).
On \(|z|=3\alpha_S\), the weighted inverse norm is at most
\(4\alpha_S/(3\alpha_S-\|S\|)\le2\), which is bounded by
\eqref{eq:factor-annular-bound-app}.
\end{proof}

For \(S=\mathcal S_F\), these contours give
\(|z^k|\le(1-\tau\eta_{\mathbb T})^k\) on the interior circle and
\(|z^{-k}|\le(1+\tau\eta_{\mathbb T})^{-k}\) on the exterior system.
Scaling the matrix moves the separating circle by the same scale.
The circle gap is separate from the singular value at zero, so the
usual \(\kappa(S)\) alone does not supply it.
For the DARE pencil, a fixed unit-circle contour gives directly
\begin{equation}
\mathcal R_{\mathbb T,L}
=\frac{\alpha_M+\alpha_L}
{\min_{|z|=1}\sigma_{\min}(zL-M)}.
\label{eq:factor-pencil-circle-app}
\end{equation}
This identity holds whenever the pencil is regular with no spectrum
on that circle, including singular \(L\).

\begin{example}[Rotational LQR]
\label{ex:care-rotation-dimension-app}
For even \(n\ge2\), take
\begin{equation}
J=\begin{bmatrix}0&1\\-1&0\end{bmatrix},\qquad
A_n=\operatorname{diag}(J,\ldots,J),\qquad
B_n=Q_n=R_{c,n}=\mathsf I_n.
\label{eq:care-rotation-matrix-app}
\end{equation}
Then \(G_n=\mathsf I_n\), and the stabilizing CARE solution and
feedback are
\begin{equation}
X_n=\mathsf I_n,\qquad u^*(t)=-x(t),\qquad
A_n-G_nX_n=A_n-\mathsf I_n.
\label{eq:care-rotation-solution-app}
\end{equation}
The Hamiltonian \(\mathcal H_n\) has
\(\|\mathcal H_n\|=\sqrt2\), \(\kappa(\mathcal H_n)=1\), and
\(\Delta_{\rm ax}=1\). Use the left rectangle of
\eqref{eq:factor-finite-rectangles-app} with
\(\alpha_H=\sqrt2\) and \(\tau=1/2\). This same contour in every
dimension gives
\begin{equation}
\mathcal R_{\rm CARE}=\Theta(1),\qquad n=2,4,\ldots.
\label{eq:care-rotation-resolvent-app}
\end{equation}
The complete stable projector satisfies
\begin{equation}
\Pi_-=\frac12\begin{bmatrix}\mathsf I_n&\mathsf I_n\\
\mathsf I_n&\mathsf I_n\end{bmatrix},\qquad
\|\Pi_-\|=1,\qquad
\|(E_1^*\Pi_-)^+\|=\sqrt2.
\label{eq:care-rotation-recovery-app}
\end{equation}
The finite-rectangle rule of
Proposition~\ref{prop:care-rectangle-quadrature-app}, with supplied
gap one, has \(\alpha_{\Pi_-}=\Theta(1)\) uniformly in its accuracy
and in \(n\). With \(\|X_n\|=1\), direct recovery therefore attains
\(\alpha_X=8\sqrt2\alpha_{\Pi_-}=\Theta(1)\) without normalization
reduction.

If instead \(n=2\) and \(A=TJ\), \(T\ge1\), the corresponding
Hamiltonian \(\mathcal H_T\), at scale
\(\alpha_H=\|\mathcal H_T\|=\sqrt{1+T^2}\), satisfies
\begin{equation}
\kappa(\mathcal H_T)=1,\qquad \Delta_{\rm ax}=1,\qquad
\mathcal R_{\rm CARE}=\Theta(\sqrt{1+T^2})
\label{eq:factor-rotation-frequency-app}
\end{equation}
on the left rectangle with \(\tau=1/2\).
\end{example}
\begin{proof}
Skew-Hermiticity of \(A_n\) verifies the CARE at \(X_n=\mathsf I_n\)
and makes \(A_n-\mathsf I_n\) Hurwitz. The Hamiltonian
\[
\mathcal H_n=
\begin{bmatrix}A_n&-\mathsf I_n\\-\mathsf I_n&A_n\end{bmatrix}
\]
is normal and acts on \([v;v]\) and \([v;-v]\) as
\(A_n-\mathsf I_n\) and \(A_n+\mathsf I_n\), respectively.
Its spectrum \(\{\pm1\pm\mathrm i\}\) gives the stated norm,
condition number, axis gap and projector. The upper projector row has
all singular values \(1/\sqrt2\).
Theorem~\ref{thm:factor-axis-app} bounds the rectangle factor above
by a constant; every normalized inverse factor is at least one.
Lemma~\ref{lem:care-normalization-factor-app} bounds the actual
positive-weight LCU scale independently of the quadrature order.
Padding to a fixed constant scale and using
\eqref{eq:algebraic-projector-output-app} proves the encoding claim.

For frequency \(T\), the normal spectrum is
\(\{\pm1\pm\mathrm iT\}\), so the matrix norm and smallest singular
value are both \(\sqrt{1+T^2}\), while the axis gap stays one.
The rectangle upper bound is \(O(\sqrt{1+T^2})\).
At its inner-edge point \(z=-1/2+\mathrm iT\), the inverse norm
is two and \(|z|+\alpha_H\ge\alpha_H\), giving the matching lower
bound in \eqref{eq:factor-rotation-frequency-app}.
\end{proof}

Thus the construction factor and actual output normalization can stay
bounded as the dimension grows. Normality and positive definite
coefficients alone do not make the factor on the stated rectangles
comparable to the usual condition number: varying the rotational
frequency changes the relevant gap ratio.

\subsection{Coefficient bounds for CARE and DARE}
\label{app:factor-algebraic-coefficients}

The preceding geometric identities identify the matrices that control
spectral selection and recovery. We now bound them using supplied
coefficient information. For the coefficient estimates below, assume
\begin{equation}
\|A\|\le a,\qquad
g_0\mathsf I\preceq G\preceq\bar g\mathsf I,\qquad
q_0\mathsf I\preceq Q\preceq\bar q\mathsf I,\qquad g_0,q_0>0.
\label{eq:factor-coefficient-data-app}
\end{equation}
These are sufficient conditions for the bounds below, rather than
additional assumptions on all problems in Section~\ref{sec:quantum-tools}.
In particular, singular \(G\) arising from a control input of deficient
row rank is outside these coefficient estimates. The scalar bounds are supplied;
the work needed to obtain them is separate from their use here.

Positive stabilizing solutions and their complementary graphs are
classical tools in algebraic Riccati theory
\cite{LancasterRodman1995Riccati,Mehrmann1991AutonomousLQ}.
The following identity also applies to semidefinite solutions. It
identifies the complete projector used in the spectral and initialization
estimates below.

\begin{lemma}[Complementary semidefinite graphs]
\label{lem:factor-positive-graphs-app}
Let \(X,Y\) be Hermitian positive semidefinite matrices.
The matrix \(\mathsf I+YX\) is invertible, the graphs
\(\operatorname{ran}[\mathsf I;X]\) and
\(\operatorname{ran}[-Y;\mathsf I]\) are complementary, and the
projector onto the first along the second is
\begin{equation}
\Pi=[\mathsf I;X](\mathsf I+YX)^{-1}[\mathsf I,Y].
\label{eq:factor-positive-projector-app}
\end{equation}
\end{lemma}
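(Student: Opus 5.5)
The plan is to prove the three claims of Lemma~\ref{lem:factor-positive-graphs-app} in order: invertibility of \(\mathsf I+YX\), complementarity of the two graphs, and the projector formula. For invertibility I reuse the argument from Proposition~\ref{prop:lqr-conventions-app}. Sylvester's determinant identity gives
\[
\det(\mathsf I+YX)=\det(\mathsf I+X^{1/2}YX^{1/2}),
\]
and the right side is positive because \(X^{1/2}YX^{1/2}\succeq0\). Hence \(\mathsf I+YX\) is invertible without any definiteness beyond semidefiniteness.

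For complementarity I count dimensions and check trivial intersection. Both subspaces have dimension \(n\), since \([\mathsf I;X]\) and \([-Y;\mathsf I]\) each have an identity block and so full column rank. A common vector satisfies \([u;Xu]=[-Yv;v]\), so \(v=Xu\) and \(u=-YXu\). This means \((\mathsf I+YX)u=0\), hence \(u=0\) and then \(v=0\). Together with dimension \(2n\) for the ambient space, this gives \(\mathbb C^{2n}=\operatorname{ran}[\mathsf I;X]\oplus\operatorname{ran}[-Y;\mathsf I]\).

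For the formula, set \(\Pi=[\mathsf I;X](\mathsf I+YX)^{-1}[\mathsf I,Y]\) and verify the three properties that characterize the oblique projector.

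First, \(\Pi\) fixes the first graph. Since \([\mathsf I,Y][\mathsf I;X]=\mathsf I+YX\), we get \(\Pi[\mathsf I;X]=[\mathsf I;X]\).

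Second, \(\Pi\) annihilates the second graph. Since \([\mathsf I,Y][-Y;\mathsf I]=-Y+Y=0\), we get \(\Pi[-Y;\mathsf I]=0\).

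Third, \(\operatorname{ran}\Pi\subseteq\operatorname{ran}[\mathsf I;X]\), which is immediate from the left factor.

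By the direct-sum decomposition, a linear map that acts as the identity on one summand and as zero on the other is exactly the projector onto the first along the second. Idempotence follows as well, from \([\mathsf I,Y][\mathsf I;X](\mathsf I+YX)^{-1}=\mathsf I\).

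The only nonroutine step is the invertibility of \(\mathsf I+YX\) when both matrices are merely semidefinite and \(YX\) is not Hermitian. The determinant identity handles this cleanly. An alternative is to note that \(YX\) is similar on its range to \(X^{1/2}YX^{1/2}\), so its spectrum is nonnegative. Everything after that is direct block multiplication.
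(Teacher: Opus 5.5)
Your proof is correct and follows the paper's argument: invertibility via Sylvester's identity \(\det(\mathsf I+YX)=\det(\mathsf I+X^{1/2}YX^{1/2})>0\) is exactly the paper's step. The only difference is organizational: the paper gets complementarity and the projector formula in one stroke by eliminating \(t\) from \([f;h]=[\mathsf I;X]s+[-Y;\mathsf I]t\), which gives \((\mathsf I+YX)s=f+Yh\). You instead check trivial intersection plus a dimension count, and then verify that \(\Pi\) fixes the first graph and annihilates the second.
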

\begin{proof}
Sylvester's determinant identity gives
\[
\det(\mathsf I+YX)
=\det(\mathsf I+X^{1/2}YX^{1/2})>0,
\]
because \(X^{1/2}YX^{1/2}\succeq0\). Thus \(\mathsf I+YX\) is invertible. In the decomposition
\([f;h]=[\mathsf I;X]s+[-Y;\mathsf I]t\), elimination of \(t\) gives
\((\mathsf I+YX)s=f+Yh\). This proves complementarity and
\eqref{eq:factor-positive-projector-app}.
\end{proof}

For CARE, the same positivity bounds the solution norms and controls
the generalized singularity factor through the imaginary-axis gap.

\begin{proposition}[CARE coefficient bounds]
\label{prop:factor-care-coefficients-app}
Under \eqref{eq:factor-coefficient-data-app}, the equations
\begin{equation}
A^*X+XA-XGX+Q=0,\qquad
AY+YA^*-YQY+G=0
\label{eq:factor-care-dual-app}
\end{equation}
have positive definite stabilizing solutions. They satisfy
\begin{equation}
\begin{aligned}
\|X\|&\le u_{\rm CARE}
 :=\frac{a+\sqrt{a^2+g_0\bar q}}{g_0},\\
\|Y\|&\le v_{\rm CARE}
 :=\frac{a+\sqrt{a^2+q_0\bar g}}{q_0}.
\end{aligned}
\label{eq:factor-care-coefficient-bounds-app}
\end{equation}
For \(H=[A,-G;-Q,-A^*]\), its stable projector is
\eqref{eq:factor-positive-projector-app}, and
\begin{equation}
\Delta_{\rm ax}\ge\min\{g_0,q_0\}.
\label{eq:factor-care-two-bounds-app}
\end{equation}
\end{proposition}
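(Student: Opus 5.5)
The plan is to prove the four assertions in order: existence and definiteness, the norm bounds, the projector formula, and the imaginary-axis gap.

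\textbf{Existence and definiteness.} Since \(G\succeq g_0\mathsf I\succ0\), we may write \(G=BB^*\) with \(B=G^{1/2}\) invertible, so \((A,B)\) is trivially stabilizable. Since \(Q\succ0\), the pair \((Q^{1/2},A)\) is observable. Definition~\ref{def:problem-care} therefore supplies a stabilizing \(X\succeq0\). Rewriting the CARE in closed-loop Lyapunov form gives
\[
(A-GX)^*X+X(A-GX)=-(Q+XGX).
\]
Because \(A-GX\) is Hurwitz and \(Q+XGX\succeq Q\succ0\), the integral representation \(X=\int_0^\infty e^{s(A-GX)^*}(Q+XGX)e^{s(A-GX)}\,\mathrm ds\) shows \(X\succ0\). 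The dual equation is the same CARE with data \((A^*,Q,G)\) in place of \((A,G,Q)\). The identical argument yields a positive definite \(Y\) with \(A^*-QY\) Hurwitz, equivalently \(A-YQ\) Hurwitz.

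\textbf{Norm bounds.} Let \(v\) be a unit eigenvector of \(X\) for its eigenvalue \(\lambda=\|X\|\). Taking \(v^*(\cdot)v\) in the CARE gives
\[
\lambda^2v^*Gv=2\lambda\operatorname{Re}(v^*Av)+v^*Qv .
\]
The left side is at least \(g_0\lambda^2\) and the right side is at most \(2a\lambda+\bar q\). Solving the scalar quadratic inequality \(g_0\lambda^2-2a\lambda-\bar q\le0\) yields \(\lambda\le u_{\rm CARE}\). The same computation on the dual equation gives \(\|Y\|\le v_{\rm CARE}\).

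\textbf{Projector formula.} By \eqref{eq:care-graph-invariance}, \(\operatorname{ran}[\mathsf I;X]\) is \(H\)-invariant with restriction \(A-GX\), which is Hurwitz. For the complementary graph, compute \(H[-Y;\mathsf I]=[-AY-G;\,QY-A^*]\). The dual equation rewrites the first block as \(YA^*-YQY\), so
\[
H[-Y;\mathsf I]=[-Y;\mathsf I]\,(QY-A^*).
\]
The restriction \(-(A-YQ)^*\) has all eigenvalues in the open right half-plane. The two \(n\)-dimensional invariant subspaces are complementary by Lemma~\ref{lem:factor-positive-graphs-app}. Hence the left-half-plane Riesz projector of \(H\) is the oblique projector onto the first graph along the second, which is \eqref{eq:factor-positive-projector-app}.

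\textbf{Imaginary-axis gap.} This is the only nonstandard step, and the plan is an energy identity exploiting the Hamiltonian structure. For \(w=[f;h]\) and real \(\omega\), set \([r_1;r_2]=(H-\mathrm i\omega\mathsf I)w\). Expanding,
\[
h^*r_1+(f^*r_2)^*=-(h^*Gh+f^*Qf),
\]
because the \(A\)-terms and the \(\mathrm i\omega\)-terms cancel in pairs. Cauchy--Schwarz bounds the left side by \(\|w\|\,\|(H-\mathrm i\omega\mathsf I)w\|\). The right side has modulus at least \(\min\{g_0,q_0\}\|w\|^2\). Therefore
\[
\sigma_{\min}(\mathrm i\omega\mathsf I-H)\ge\min\{g_0,q_0\}
\]
for every real \(\omega\), which is \eqref{eq:factor-care-two-bounds-app}. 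The main check I expect to need care with is the sign bookkeeping in this cancellation and in the invariance relation for \([-Y;\mathsf I]\).
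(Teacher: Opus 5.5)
Your proposal is correct. The norm bounds, the invariance relation $H[-Y;\mathsf I]=[-Y;\mathsf I](QY-A^*)$ combined with Lemma~\ref{lem:factor-positive-graphs-app}, and the gap estimate all match the paper's proof. In particular, your identity $h^*r_1+(f^*r_2)^*=-(h^*Gh+f^*Qf)$ is the paper's $\operatorname{Re}\bigl(w^*J_e(\mathrm i\Omega\mathsf I-H)w\bigr)=x^*Qx+y^*Gy$, with $J_e=[0,\mathsf I;\mathsf I,0]$, written in components.

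The genuine difference is the existence step. You import a stabilizing $X\succeq0$ from the classical LQR theorem quoted in Definition~\ref{def:problem-care}. You then upgrade to $X\succ0$ through the closed-loop Lyapunov integral. Finally, you identify the Riesz projector by noting that the two complementary invariant graphs have restrictions with spectra in opposite open half-planes.

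The paper instead derives existence from the Hamiltonian itself, in four steps:
\begin{itemize}
\item the gap bound excludes imaginary-axis spectrum;
\item the symplectic pairing $\lambda\mapsto-\overline\lambda$ gives an $n$-dimensional stable subspace;
\item the energy identity $\frac{\mathrm d}{\mathrm dt}(x^*y)=-x^*Qx-y^*Gy$ along stable trajectories shows $x(0)^*y(0)>0$ for every nonzero stable vector, so that subspace is the graph of some $X=X^*\succ0$;
\item the CARE and the Hurwitz property are then read off from graph invariance.
\end{itemize}

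Your route is shorter and avoids the pairing argument, but it rests on an external existence theorem. The paper's route is self-contained and runs every part of the proposition through one energy structure. The paper reuses that structure in the DRE initialization estimate of Proposition~\ref{prop:factor-dre-initialization-app}, where only $G\succ0$ is assumed and no LQR theorem is invoked.
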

\begin{proof}
Set \(J_e=[0,\mathsf I;\mathsf I,0]\). For every real \(\Omega\)
and every \(w=[x;y]\),
\[
\operatorname{Re}\bigl(w^*J_e(\mathrm i\Omega\mathsf I-H)w\bigr)
=x^*Qx+y^*Gy\ge\min\{g_0,q_0\}\|w\|^2.
\]
Since \(J_e\) is unitary, Cauchy--Schwarz proves the gap bound and
excludes imaginary-axis eigenvalues. With
\(J=[0,\mathsf I;-\mathsf I,0]\), the identity \(H^*J+JH=0\)
pairs eigenvalues as \(\lambda,-\overline\lambda\), including
algebraic multiplicities \cite[Lemma~4]{Poloni2020RiccatiIterations}. The stable subspace therefore has dimension
\(n\).

Along a stable trajectory \(\dot w=Hw\), direct differentiation gives
\[
\frac{\mathrm d}{\mathrm dt}(x^*y)=-x^*Qx-y^*Gy.
\]
Integration to infinity shows that \(x(0)^*y(0)>0\) for every nonzero
stable vector. In particular, no such vector is vertical. The stable
subspace is consequently \(\operatorname{ran}[\mathsf I;X]\);
the displayed positive quadratic forms imply \(X=X^*\succ0\).
Graph invariance gives the CARE and the Hurwitz matrix \(A-GX\).
Applying the same argument to \((A^*,Q,G)\) gives \(Y\succ0\) and
the Hurwitz matrix \(A^*-QY\).

Evaluate the CARE at a unit eigenvector for the largest eigenvalue
of \(X\). This gives \(g_0\|X\|^2\le2a\|X\|+\bar q\), whose
positive root is \(u_{\rm CARE}\). The dual equation yields
\(v_{\rm CARE}\) in the same way. Finally,
\[
H[\mathsf I;X]=[\mathsf I;X](A-GX),\qquad
H[-Y;\mathsf I]=[-Y;\mathsf I](QY-A^*).
\]
The second reduced matrix has spectrum in the right half-plane.
These are the two complete spectral subspaces, so
Lemma~\ref{lem:factor-positive-graphs-app} gives the projector.
\end{proof}

If a bound \((A+A^*)/2\preceq h\mathsf I\) is supplied, \(a\) in
the two quadratic-root bounds can be replaced by \(h\), with
its sign unchanged even when \(h<0\). The gap in
\eqref{eq:factor-care-two-bounds-app} supplies the finite separating
contours of Theorem~\ref{thm:factor-axis-app}. A further comparison
with \(\kappa(H)\) still requires the gap ratio and normalization
conditions of Proposition~\ref{prop:factor-condition-number-app};
positive \(G,Q\) alone do not give that comparison.
The supplied bound \(u_{\rm CARE}\) can be inserted directly in
\eqref{eq:algebraic-projector-output-app} and its precision budget;
the actual projector normalization and any re-encoding cost remain.

For DARE, a dual graph identifies the complete complementary branch
of the pencil. The next proof keeps this branch explicit when \(A\)
is singular, as required by the generalized-eigenvalue formulation
\cite{VanDooren1981Generalized,LancasterRodman1995Riccati}.

\begin{proposition}[DARE coefficient bounds]
\label{prop:factor-dare-coefficients-app}
Under \eqref{eq:factor-coefficient-data-app}, the primal and dual
DAREs have positive definite stabilizing solutions satisfying
\begin{equation}
\begin{aligned}
Q\preceq X&\preceq Q+A^*G^{-1}A,&
\|X\|&\le u_{\rm DARE}:=\bar q+\frac{a^2}{g_0},\\
G\preceq Y&\preceq G+AQ^{-1}A^*,&
\|Y\|&\le v_{\rm DARE}:=\bar g+\frac{a^2}{q_0},
\end{aligned}
\label{eq:factor-dare-coefficient-bounds-app}
\end{equation}
where \(Y=G+AY(\mathsf I+QY)^{-1}A^*\).
The pencil \(M-zL\) in \eqref{eq:dare-pencil} is regular and has
no finite spectrum on the unit circle. Its finite interior branch
has dimension \(n\), and its right deflating projector is
\eqref{eq:factor-positive-projector-app}.
No invertibility assumption on \(A\) or \(L\) is required.
\end{proposition}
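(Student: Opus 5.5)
The plan is to treat the three groups of claims in order: the two-sided solution bounds, the regularity and unit-circle spectrum of the pencil, and the deflating-subspace structure that yields the projector. Throughout I use $G,Q\succ0$ from \eqref{eq:factor-coefficient-data-app}.

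\textbf{Existence and bounds.} Write $G=BR_c^{-1}B^*$ with, for example, $B=G^{1/2}$ and $R_c=\mathsf I$. Then $B$ is invertible, so $(A,B)$ is controllable. Also $Q^{1/2}$ is invertible, so $(Q^{1/2},A)$ is observable. Definition~\ref{def:problem-dare} therefore supplies the unique stabilizing solution $X\succeq0$, with $F=(\mathsf I+GX)^{-1}A$ Schur stable.

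The lower bound $X\succeq Q\succ0$ follows because the second term of \eqref{eq:problem-dare-compact} equals $A^*P(\mathsf I+GP)^{-1}A$ at $P=X$. By \eqref{eq:lqr-psd-identity-app} this term is positive semidefinite, which also gives $X\succ0$. For the upper bound, invertibility of $X$ gives
\[
X(\mathsf I+GX)^{-1}=(X^{-1}+G)^{-1}\preceq G^{-1}.
\]
Hence $X\preceq Q+A^*G^{-1}A$, and taking norms gives $\|X\|\le\bar q+a^2/g_0$.

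The dual DARE is the same problem with data $(A^*,G,Q)$ in place of $(A,Q,G)$. Applying the identical argument gives $Y\succ0$, the stated bounds on $Y$, and Schur stability of $F_d=(\mathsf I+QY)^{-1}A^*$.

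\textbf{No unit-circle spectrum, hence regularity.} Fix $|z|=1$ and suppose $(M-zL)[x;y]=0$. By \eqref{eq:dare-pencil} this reads
\[
Ax=z(x+Gy),\qquad Qx=y-zA^*y.
\]
Multiply the second equation by $x^*$ and substitute the conjugate of $y^*Ax$ taken from the first equation. Using $|z|^2=1$, the terms $x^*y$ cancel and we obtain
\[
x^*Qx+y^*Gy=0.
\]
Positive definiteness then forces $x=y=0$. So $M-zL$ is invertible on the whole unit circle. This shows at once that the pencil is regular and has no finite spectrum on $|z|=1$.

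\textbf{Block diagonalization.} First record the primal deflating relation \eqref{eq:dare-deflating-graph}, $M[\mathsf I;X]=L[\mathsf I;X]F$, which is checked directly from the compact DARE. Next verify the dual relation
\[
L\,[-Y;\mathsf I]=M\,[-Y;\mathsf I]\,F_d^*.
\]
This is again a direct block computation: it uses the dual equation $Y=G+AY(\mathsf I+QY)^{-1}A^*$ together with the push-through identity $Y(\mathsf I+QY)^{-1}=(\mathsf I+YQ)^{-1}Y$. This computation is the step I expect to need the most care, mainly in getting the adjoint/transpose of $F_d$ right.

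Set $T=[T_1,T_2]$ with $T_1=[\mathsf I;X]$ and $T_2=[-Y;\mathsf I]$. The matrix $T$ is invertible because $\mathsf I+YX$ is invertible by Lemma~\ref{lem:factor-positive-graphs-app}. Define $K=[L T_1,\ M T_2]$. The two deflating relations give
\[
(M-zL)T=K\operatorname{diag}\bigl(F-z\mathsf I,\ \mathsf I-zF_d^*\bigr).
\]
At any $|z|=1$ the left side is invertible by the previous step. The right-hand diagonal blocks are also invertible there, since $\rho(F)<1$ and $\rho(F_d)<1$. Therefore $K$ is invertible.

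\textbf{Spectrum and projector.} The factorization now shows that the finite spectrum of the pencil consists of the eigenvalues of $F$, all inside the unit disk, together with the reciprocals of the nonzero eigenvalues of $F_d^*$, all outside. Zero eigenvalues of $F_d^*$ become infinite modes; this is how singular $A$ or $L$ is absorbed, and nothing in the argument inverts $A$ or $L$.

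Consequently the interior branch has dimension $n$ and is exactly $\operatorname{ran}T_1$. Apply Proposition~\ref{prop:riesz-calculus-app}, taking the Weierstrass-type transformation with left factor $K^{-1}$ and right factor $T$. It identifies $\Pi_<$ as the projector onto $\operatorname{ran}T_1$ along $\operatorname{ran}T_2$. Lemma~\ref{lem:factor-positive-graphs-app} then gives exactly the formula \eqref{eq:factor-positive-projector-app}.
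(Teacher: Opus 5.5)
Your plan works, but the identity you flagged is wrong as written. The dual deflating relation is
\[
L\,[-Y;\mathsf I]=M\,[-Y;\mathsf I]\,F_d,\qquad F_d=(\mathsf I+QY)^{-1}A^*,
\]
with \(F_d\) itself, not \(F_d^*\). The lower block row of \(M[-Y;\mathsf I]W=L[-Y;\mathsf I]\) reads \((\mathsf I+QY)W=A^*\), which forces \(W=F_d\). The matrix \(F_d^*=A(\mathsf I+YQ)^{-1}\) satisfies this row only in special cases, such as real scalars. With \(W=F_d\), the upper row is exactly the dual DARE, so no push-through identity is needed. Since \(\rho(F_d)<1\), the rest of your argument goes through with
\[
(M-zL)T=K\operatorname{diag}(F-z\mathsf I,\mathsf I-zF_d),
\]
and the exterior finite eigenvalues are the reciprocals of the nonzero eigenvalues of \(F_d\).

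There is also a smaller issue. The pair \(K^{-1}MT=\operatorname{diag}(F,\mathsf I)\), \(K^{-1}LT=\operatorname{diag}(\mathsf I,F_d)\) is not a Weierstrass form, because \(F_d\) need not be nilpotent. So Proposition~\ref{prop:riesz-calculus-app} does not apply verbatim. It is simplest to integrate
\[
(zL-M)^{-1}L=T\operatorname{diag}\bigl((z\mathsf I-F)^{-1},\,(zF_d-\mathsf I)^{-1}F_d\bigr)T^{-1}
\]
directly, as the paper does. The second block is holomorphic on a neighborhood of the closed unit disk, so it integrates to zero.

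With these corrections the proof is complete, and it takes a partly different route from the paper's. For orientation, your \(K=[LT_1,MT_2]\) is the paper's \(S_p\), and the paper's \(K\) is your \(F_d\).

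\textbf{Existence and bounds.} The paper does not appeal to LQR existence theory. It obtains \(X\) as the monotone limit of the Riccati iteration from zero, using \(0\preceq P(\mathsf I+GP)^{-1}\preceq G^{-1}\) for every \(P\succeq0\). It then proves \(\rho(F)<1\) from the Stein identity \(X-F^*XF=Q+F^*XGXF\succ0\). You instead import existence and stabilization from the cited theorem and get the upper bound in one line from \(X(\mathsf I+GX)^{-1}=(X^{-1}+G)^{-1}\).

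\textbf{Regularity.} The paper shows \(S_p\) is invertible through the accretivity identity \(\tfrac12(DS_p+S_p^*D)=\operatorname{diag}(X+XGX,\,Y+YQY)\succ0\), with \(D=\operatorname{diag}(X,Y)\). It then reads regularity and the unit-circle split off the resulting strict equivalence. You instead prove \(M-zL\) is nonsingular on \(|z|=1\) directly from \(x^*Qx+y^*Gy=0\), and deduce that \(K\) is invertible.

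Your energy identity is more elementary and does not need the Riccati solutions, but it is purely qualitative. The paper's Stein and accretivity identities are chosen because Proposition~\ref{prop:factor-dare-resolvent-app} reuses them quantitatively. They give the contraction rates of \(F\) and \(K\), and the bound \(\|S_p^{-1}\|\le\sqrt{\max\{u_{\rm DARE},v_{\rm DARE}\}/\min\{q_0,g_0\}}\).
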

\begin{proof}
For \(P\succeq0\),
\[
P(\mathsf I+GP)^{-1}
=G^{-1/2}\!\left[\mathsf I-
(\mathsf I+G^{1/2}PG^{1/2})^{-1}\right]G^{-1/2}
\]
is monotone in \(P\) and lies between \(0\) and \(G^{-1}\).
The DARE iteration from zero \cite[Sec.~4.2]{Poloni2020RiccatiIterations} is therefore monotone and bounded
above by \(Q+A^*G^{-1}A\). Its limit \(X\) solves the DARE and
satisfies the first line of \eqref{eq:factor-dare-coefficient-bounds-app}.
For \(F=(\mathsf I+GX)^{-1}A\), substitution gives
\begin{equation}
X-F^*XF=Q+F^*XGXF\succ0.
\label{eq:factor-dare-stein-app}
\end{equation}
Evaluation at an eigenvector of \(F\) proves \(\rho(F)<1\).
The dual iteration gives the second line of the coefficient bounds
and \(K=(\mathsf I+QY)^{-1}A^*\), with
\(Y-K^*YK=G+K^*YQYK\succ0\). Hence \(\rho(K)<1\).

Write \(V_s=[\mathsf I;X]\), \(V_u=[-Y;\mathsf I]\), and
\begin{equation}
T=[V_s,V_u],\qquad
S_p=[LV_s,MV_u]
=\begin{bmatrix}\mathsf I+GX&-AY\\A^*X&\mathsf I+QY\end{bmatrix}.
\label{eq:factor-dare-equivalence-matrices-app}
\end{equation}
Lemma~\ref{lem:factor-positive-graphs-app} makes \(T\) invertible.
For \(D=\operatorname{diag}(X,Y)\),
\[
\frac{DS_p+S_p^*D}{2}
=\operatorname{diag}(X+XGX,Y+YQY)\succ0,
\]
which also makes \(S_p\) invertible. The primal and dual equations
give \(MV_s=LV_sF\) and \(LV_u=MV_uK\), so
\begin{equation}
S_p^{-1}MT=\operatorname{diag}(F,\mathsf I),\qquad
S_p^{-1}LT=\operatorname{diag}(\mathsf I,K).
\label{eq:factor-dare-strict-equivalence-app}
\end{equation}
This proves regularity. The second block has only exterior finite
eigenvalues, namely reciprocals of the nonzero eigenvalues of \(K\),
and possible infinite modes. In particular,
\[
(zL-M)^{-1}L
=T\operatorname{diag}\bigl((z\mathsf I-F)^{-1},
                         (zK-\mathsf I)^{-1}K\bigr)T^{-1}.
\]
The second block is analytic on a neighborhood of the closed unit
disk. Integration gives \(\Pi_<=T\operatorname{diag}(\mathsf I,0)T^{-1}\),
and Lemma~\ref{lem:factor-positive-graphs-app} gives the stated
complete projector.
\end{proof}

For example, \(A=0\) gives \(X=Q\) and \(Y=G\), with an
entirely infinite complementary branch. The bound \(u_{\rm DARE}\)
supplies the solution-norm estimate for
\eqref{eq:algebraic-projector-output-app} and its precision budget,
including in this singular case.

The Stein inequalities \cite[Sec.~2]{Poloni2020RiccatiIterations} also control the unit-circle generalized
singularity factor. We give the bound explicitly and reuse the same
contraction estimates for the forward recursion.

\begin{proposition}[DARE unit-circle generalized singularity factor]
\label{prop:factor-dare-resolvent-app}
Under \eqref{eq:factor-coefficient-data-app},
\begin{equation}
\begin{aligned}
\sup_{|z|=1}\|(zL-M)^{-1}\|
&\le\bigl(1+\max\{u_{\rm DARE},v_{\rm DARE}\}\bigr)
\sqrt{\frac{\max\{u_{\rm DARE},v_{\rm DARE}\}}{\min\{q_0,g_0\}}}\\
&\quad\times\max\left\{
\frac{\sqrt{u_{\rm DARE}/q_0}}{1-\sqrt{1-q_0/u_{\rm DARE}}},
\frac{\sqrt{v_{\rm DARE}/g_0}}{1-\sqrt{1-g_0/v_{\rm DARE}}}
\right\}.
\end{aligned}
\label{eq:factor-dare-resolvent-app}
\end{equation}
Multiplying the right-hand side by \(\alpha_M+\alpha_L\) bounds
\(\mathcal R_{\mathbb T,L}\). All denominators are positive because
\(0<q_0/u_{\rm DARE}\le1\) and \(0<g_0/v_{\rm DARE}\le1\).
The estimate includes zero contraction factors and singular \(A\).
\end{proposition}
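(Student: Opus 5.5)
The plan is to use the block-diagonalization from the proof of Proposition~\ref{prop:factor-dare-coefficients-app}. With \(T=[V_s,V_u]\) and \(S_p=[LV_s,MV_u]\), the strict equivalence \eqref{eq:factor-dare-strict-equivalence-app} gives
\[
zL-M=S_p\operatorname{diag}\bigl(z\mathsf I-F,\ zK-\mathsf I\bigr)T^{-1}.
\]
Hence
\[
\|(zL-M)^{-1}\|\le\|T\|\,\|S_p^{-1}\|\,
\max\{\|(z\mathsf I-F)^{-1}\|,\|(zK-\mathsf I)^{-1}\|\}.
\]
I will bound the three factors separately and match them to the three factors in \eqref{eq:factor-dare-resolvent-app}.

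\emph{Bounding \(\|T\|\) and \(\|S_p^{-1}\|\).} Each column block of \(T\) has norm at most \(\sqrt{1+\|X\|^2}\) or \(\sqrt{1+\|Y\|^2}\). Therefore \(\|T\|\le\sqrt2\,(1+\max\{u,v\})^{1/2}\cdot\)const, and I will absorb this into the factor \((1+\max\{u,v\})\).

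For \(S_p^{-1}\), I will use the accretivity identity \(\operatorname{Re}(DS_p)=\operatorname{diag}(X+XGX,\,Y+YQY)\succeq\operatorname{diag}(X,Y)\), with \(D=\operatorname{diag}(X,Y)\). For any \(w\) this gives \(\|D^{1/2}w\|^2\le\operatorname{Re}\langle Dw,S_pw\rangle\le\|D^{1/2}w\|\,\|D^{1/2}S_pw\|\). Consequently \(\|D^{1/2}w\|\le\|D^{1/2}S_pw\|\le\|D^{1/2}\|\,\|S_pw\|\).

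Since \(X\succeq Q\succeq q_0\) and \(Y\succeq G\succeq g_0\), we have \(\|D^{-1/2}\|\le\min\{q_0,g_0\}^{-1/2}\). Hence \(\|S_p^{-1}\|\le\sqrt{\max\{u,v\}/\min\{q_0,g_0\}}\), which is exactly the middle factor. If the constants do not match exactly, I will redistribute the \(\sqrt{1+\cdot}\) factors between \(\|T\|\) and this term.

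\emph{Contraction estimates.} This is the main step. From the Stein identity \eqref{eq:factor-dare-stein-app}, \(F^*XF\preceq X-Q\preceq(1-q_0/u)X\). So in the \(X\)-weighted norm \(\|x\|_X=\|X^{1/2}x\|\), \(F\) is a contraction with factor \(\rho_F\le\sqrt{1-q_0/u}\).

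For \(|z|=1\), a Neumann series in that norm gives \(\|(z\mathsf I-F)^{-1}\|_X\le1/(1-\rho_F)\). Converting back to the Euclidean norm costs \(\kappa(X^{1/2})\le\sqrt{u/q_0}\). This gives the first entry of the max.

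Symmetrically, \((zK-\mathsf I)^{-1}=-(\mathsf I-zK)^{-1}\), and \(K\) contracts in the \(Y\)-norm with factor \(\sqrt{1-g_0/v}\), where \(\kappa(Y^{1/2})\le\sqrt{v/g_0}\). This gives the second entry.

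The edge cases are handled directly. Singular \(A\) means \(F\) or \(K\) may be zero or nilpotent, which is covered because the contraction factor may be zero. The condition \(q_0/u\le1\) holds because \(X\succeq Q\) implies \(u\ge\|X\|\ge q_0\). Finally, multiplying by \(|z|\alpha_L+\alpha_M=\alpha_L+\alpha_M\) on the unit circle gives the bound on \(\mathcal R_{\mathbb T,L}\) via \eqref{eq:factor-pencil-circle-app}.

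The delicate point is the exact constant bookkeeping: the combination \(\|T\|\,\|S_p^{-1}\|\) must produce \((1+\max)\sqrt{\max/\min}\) rather than a slightly different product. I would first try bounding \(\|T\|\le1+\max\{u,v\}\) crudely.
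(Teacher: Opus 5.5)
Your proposal follows the paper's proof essentially step for step. Both use the factorization $zL-M=S_p\operatorname{diag}(z\mathsf I-F,zK-\mathsf I)T^{-1}$, the $D$-weighted accretivity bound on $\|S_p^{-1}\|$, and the Stein-based contractions of $F$ and $K$ in the $X$- and $Y$-metrics followed by Neumann series. Your $\|S_p^{-1}\|$ bound already matches the middle factor exactly, so no redistribution of factors is needed.

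The one fix is the bound on $\|T\|$. Write $T=\mathsf I+\begin{bmatrix}0&-Y\\X&0\end{bmatrix}$, which gives $\|T\|\le 1+\max\{u_{\rm DARE},v_{\rm DARE}\}$ directly. Your column-block estimate only yields $\sqrt{2(1+m^2)}\ge 1+m$, with $m=\max\{u_{\rm DARE},v_{\rm DARE}\}$, so it cannot reproduce the stated constant.
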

\begin{proof}
Since \(Q\succeq(q_0/u_{\rm DARE})X\), the Stein identity gives
\(F^*XF\preceq(1-q_0/u_{\rm DARE})X\).
Together with its dual, this yields
\begin{equation}
\|X^{1/2}FX^{-1/2}\|\le\sqrt{1-q_0/u_{\rm DARE}},\qquad
\|Y^{1/2}KY^{-1/2}\|\le\sqrt{1-g_0/v_{\rm DARE}}.
\label{eq:factor-discrete-contractions-app}
\end{equation}
Neumann series therefore give, for \(|z|=1\),
\[
\begin{aligned}
\|(z\mathsf I-F)^{-1}\|
&\le\frac{\sqrt{u_{\rm DARE}/q_0}}{1-\sqrt{1-q_0/u_{\rm DARE}}},\\
\|(zK-\mathsf I)^{-1}\|
&\le\frac{\sqrt{v_{\rm DARE}/g_0}}{1-\sqrt{1-g_0/v_{\rm DARE}}}.
\end{aligned}
\]
The calculation in the preceding proof gives
\((DS_p+S_p^*D)/2\succeq D\). Thus the Hermitian part of
\(D^{1/2}S_pD^{-1/2}\) is at least \(\mathsf I\), and its inverse
has norm at most one. Consequently
\[
\|S_p^{-1}\|\le
\sqrt{\frac{\max\{u_{\rm DARE},v_{\rm DARE}\}}{\min\{q_0,g_0\}}},
\qquad
\|T\|\le1+\max\{u_{\rm DARE},v_{\rm DARE}\}.
\]
Taking norms in the inverse of
\[
zL-M=S_p\operatorname{diag}(z\mathsf I-F,zK-\mathsf I)T^{-1}
\]
proves \eqref{eq:factor-dare-resolvent-app}.
On the unit circle, the shifted encoding scale is the constant
\(\alpha_M+\alpha_L\), giving the stated bound on
\(\mathcal R_{\mathbb T,L}\).
\end{proof}

Equation~\eqref{eq:factor-pencil-circle-app} identifies the weighted
factor on this fixed contour, including when \(L\) is singular.

\subsection{Coefficient bounds for DRE and RR}
\label{app:factor-dynamic-coefficients}

A dynamic construction additionally needs the coordinates of the
initial graph in its selected branch. For DRE, \(\Pi_+\) annihilates
the stable positive graph. For the forward recursion, \(\Pi_>\)
annihilates the complementary negative graph. We first bound the
corresponding nondegeneracy factors, then connect them to the
generalized singularity factors and the condition numbers used in
the dynamic query bounds.

\begin{proposition}[DRE initialization from coefficient bounds]
\label{prop:factor-dre-initialization-app}
Use the DRE convention of Definition~\ref{def:problem-dre}, with
\(H=[A,-G;-Q,-A^*]\). Suppose
\[
Q\succeq q_0\mathsf I,\qquad
0\prec G\preceq\bar g\mathsf I,\qquad
(A+A^*)/2\succeq\omega\mathsf I,\qquad
P_0=P_0^*,\quad\|P_0\|\le p,
\]
where \(q_0,\bar g>0\), \(p\ge0\), and \(\omega\) is real.
If \(q_0+2\omega p-\bar g p^2>0\), then \(H\) has no imaginary-axis
spectrum and
\begin{equation}
\sigma_{\min}(\Pi_+R_0)
\ge\frac{\omega+\sqrt{\omega^2+\bar gq_0}-\bar g p}
{\sqrt{\bar g^2+(\omega+\sqrt{\omega^2+\bar gq_0})^2}}>0.
\label{eq:factor-dre-initialization-app}
\end{equation}
The initial graph also satisfies
\begin{equation}
\sin\theta_{\min}(\operatorname{ran}R_0,\ker\Pi_+)
\ge\frac{\omega+\sqrt{\omega^2+\bar gq_0}-\bar g p}
{\sqrt{\bar g^2+(\omega+\sqrt{\omega^2+\bar gq_0})^2}
 \sqrt{1+p^2}}.
\label{eq:factor-dre-initial-angle-app}
\end{equation}
\end{proposition}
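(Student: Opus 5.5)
The proof reduces to the positive-definite CARE geometry of Proposition~\ref{prop:factor-care-coefficients-app}. The key point is that \(\ker\Pi_+\) is the stabilizing graph \(\operatorname{ran}[\mathsf I;X]\), so \(\sigma_{\min}(\Pi_+R_0)\) is controlled by how far the initial graph lies from this kernel. That distance is then bounded by comparing \(X\) with the scalar Riccati root.

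\textbf{Step 1 (spectral structure).} Since \(0\prec G\), we have \(G\succeq g_0\mathsf I\) with \(g_0=\lambda_{\min}(G)>0\). Together with \(Q\succeq q_0\mathsf I\), \(\|A\|<\infty\) and \(\|Q\|<\infty\), the hypotheses \eqref{eq:factor-coefficient-data-app} hold for some finite constants. Proposition~\ref{prop:factor-care-coefficients-app} then gives the following:
\begin{itemize}
\item \(\Delta_{\rm ax}\ge\min\{g_0,q_0\}>0\), so \(H\) has no imaginary-axis spectrum;
\item positive definite stabilizing solutions \(X,Y\) of the primal and dual CAREs exist;
\item \(\operatorname{ran}\Pi_-=\operatorname{ran}[\mathsf I;X]\) and \(\operatorname{ran}\Pi_+=\operatorname{ran}[-Y;\mathsf I]\).
\end{itemize}
By Lemma~\ref{lem:factor-positive-graphs-app}, these two graphs are complementary. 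Hence \(\Pi_+=\mathsf I-\Pi_-\) has \(\ker\Pi_+=\operatorname{ran}[\mathsf I;X]\). The positivity hypothesis \(q_0+2\omega p-\bar gp^2>0\) is not needed here; it enters only in Step 2.

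\textbf{Step 2 (lower bound on \(X\)).} Set
\[
\lambda=\frac{\omega+\sqrt{\omega^2+\bar gq_0}}{\bar g},
\]
which is the positive root of \(q_0+2\omega x-\bar gx^2=0\). Let \(\mu>0\) be the smallest eigenvalue of \(X\), with unit eigenvector \(v\). Evaluating the CARE at \(v\) gives
\[
0=\mu\,v^*(A+A^*)v-\mu^2v^*Gv+v^*Qv\ge 2\omega\mu-\bar g\mu^2+q_0 .
\]
Therefore \(\mu\ge\lambda\), that is, \(X\succeq\lambda\mathsf I\).

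The hypothesis \(q_0+2\omega p-\bar gp^2>0\) says exactly that \(p<\lambda\). It follows that \(\|X^{-1}P_0\|\le p/\lambda<1\).

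\textbf{Step 3 (distance to the kernel).} For any \(x\), the first inequality in the proof of Lemma~\ref{lem:factor-graph-distance-app} gives
\[
\|\Pi_+x\|\ge\|P_{(\ker\Pi_+)^\perp}x\|=\operatorname{dist}(x,\ker\Pi_+).
\]
With \(x=R_0u\), the distance identity \eqref{eq:factor-graph-distances-app} gives
\[
\|\Pi_+R_0u\|\ge\|(\mathsf I+X^2)^{-1/2}(X-P_0)u\|.
\]
Since \(X\) and \(P_0\) need not commute, I factor instead of comparing eigenvalues:
\[
(\mathsf I+X^2)^{-1/2}(X-P_0)=\bigl[(\mathsf I+X^2)^{-1/2}X\bigr]\bigl(\mathsf I-X^{-1}P_0\bigr).
\]
The first factor is a function of \(X\) alone. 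Its singular values are \(x/\sqrt{1+x^2}\ge\lambda/\sqrt{1+\lambda^2}\), because \(x\mapsto x/\sqrt{1+x^2}\) is increasing and \(X\succeq\lambda\mathsf I\). The second factor has \(\sigma_{\min}\ge1-p/\lambda\) by Step 2. Multiplying the two bounds gives
\[
\sigma_{\min}(\Pi_+R_0)\ge\frac{\lambda-p}{\sqrt{1+\lambda^2}}.
\]
Multiplying the numerator and denominator by \(\bar g\) yields \eqref{eq:factor-dre-initialization-app}, which is positive because \(p<\lambda\).

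\textbf{Step 4 (angle bound).} We have \(\sin\theta_{\min}(\operatorname{ran}R_0,\ker\Pi_+)=\min_u\operatorname{dist}(R_0u,\ker\Pi_+)/\|R_0u\|\). Using \(\|R_0u\|\le\sqrt{1+p^2}\,\|u\|\) together with the distance bound from Step 3 gives \eqref{eq:factor-dre-initial-angle-app}.

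The main obstacle is Step 3, because \(P_0\) does not commute with \(X\). Routing the bound through the factorization \(X(\mathsf I-X^{-1}P_0)\) handles this, and the resulting bound is exactly the displayed one. A secondary check is that \(\Pi_+\) in Definition~\ref{def:problem-dre} uses the same Hamiltonian sign convention as Proposition~\ref{prop:factor-care-coefficients-app}. It does, since \(H=[A,-G;-Q,-A^*]\) in both.
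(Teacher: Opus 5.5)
Your proposal is correct and follows essentially the same route as the paper. Both arguments use the energy argument of Proposition~\ref{prop:factor-care-coefficients-app} to get hyperbolicity and \(\ker\Pi_+=\operatorname{ran}[\mathsf I;X]\), the eigenvector bound \(\lambda_{\min}(X)\ge(\omega+\sqrt{\omega^2+\bar gq_0})/\bar g>p\), and the distance identity of Lemma~\ref{lem:factor-graph-distance-app}. The one difference is how you handle the noncommuting \(P_0\): you factor it as \(X(\mathsf I-X^{-1}P_0)\), while the paper uses the additive bound \(\sigma_{\min}(X(\mathsf I+X^2)^{-1/2})-p\|(\mathsf I+X^2)^{-1/2}\|\). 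Both give exactly \((\lambda_{\min}(X)-p)/\sqrt{1+\lambda_{\min}(X)^2}\), so noncommutativity is not really an obstacle here.
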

\begin{proof}
In finite dimensions \(G\succ0\) has a positive smallest eigenvalue.
The energy argument in Proposition~\ref{prop:factor-care-coefficients-app}
therefore proves hyperbolicity and
\(\ker\Pi_+=\operatorname{ran}[\mathsf I;X]\) with \(X\succ0\).
Evaluating its CARE at a unit eigenvector for the smallest eigenvalue
of \(X\) gives
\[
0\ge q_0+2\omega\lambda_{\min}(X)-\bar g\lambda_{\min}(X)^2.
\]
The quadratic has precisely one positive root, so
\begin{equation}
\lambda_{\min}(X)
\ge\frac{\omega+\sqrt{\omega^2+\bar gq_0}}{\bar g}>p.
\label{eq:factor-dre-lower-slope-app}
\end{equation}
The strict inequality follows from the coefficient condition and
\(p\ge0\).

For every unit vector \(w\), Lemma~\ref{lem:factor-graph-distance-app} gives
\[
\begin{aligned}
\|\Pi_+R_0w\|
&\ge\|(\mathsf I+X^2)^{-1/2}(P_0-X)w\|\\
&\ge\sigma_{\min}\bigl(X(\mathsf I+X^2)^{-1/2}\bigr)
   -p\|(\mathsf I+X^2)^{-1/2}\|\\
&=\frac{\lambda_{\min}(X)-p}{\sqrt{1+\lambda_{\min}(X)^2}}.
\end{aligned}
\]
The equality follows by functional calculus for \(X\succ0\).
The function \((s-p)/\sqrt{1+s^2}\) increases for \(s>0\) when
\(p\ge0\). Applying \eqref{eq:factor-dre-lower-slope-app} and
multiplying numerator and denominator by \(\bar g\) proves
\eqref{eq:factor-dre-initialization-app}; no commutation of \(X\)
and \(P_0\) is needed.
Apply the same distance estimate to the orthonormal frame
\(R_0(\mathsf I+P_0^2)^{-1/2}\), whose right factor has smallest
singular value at least \(1/\sqrt{1+p^2}\), to obtain
\eqref{eq:factor-dre-initial-angle-app}.
\end{proof}

This result bounds an initial branch projection; it does not establish
existence of the DRE solution up to the requested time.
Its positive \(G,Q\) assumptions also do not automatically apply after
the sign changes used to represent terminal LQR in
Definition~\ref{def:problem-dre}. If a uniform lower bound
\(G\succeq g_0\mathsf I\) is additionally supplied, the energy estimate
gives \(\Delta_{\rm ax}\ge\min\{g_0,q_0\}\) and the separating-contour
bounds of Theorem~\ref{thm:factor-axis-app}.
Strict positivity without such a uniform lower bound suffices for the
initialization argument but does not supply that uniform spectral bound.

For RR, the branch assignment follows from the forward update itself.
The linear-fractional representation and its relation to DARE
iterations are standard \cite{Poloni2020RiccatiIterations,Mehrmann1991AutonomousLQ};
the following calculation fixes their orientation in our convention.

\begin{lemma}[The forward recursion and its selected graph]
\label{lem:factor-rr-branches-app}
Assume \eqref{eq:factor-coefficient-data-app} and let \(A\) be
invertible. For the pencil matrices in \eqref{eq:dare-pencil}, set
\begin{equation}
\mathcal S_F=M^{-1}L
=\begin{bmatrix}
A^{-1}&A^{-1}G\\
QA^{-1}&A^*+QA^{-1}G
\end{bmatrix}.
\label{eq:factor-rr-lift-app}
\end{equation}
Whenever \(\mathsf I+GP\) is invertible,
\begin{equation}
\mathcal S_F[\mathsf I;P]
=[\mathsf I;Q+A^*P(\mathsf I+GP)^{-1}A]\,
A^{-1}(\mathsf I+GP).
\label{eq:factor-rr-graph-update-app}
\end{equation}
For the DARE solutions \(X,Y\) and matrices \(F,K\) above,
\[
\mathcal S_F[\mathsf I;X]=[\mathsf I;X]F^{-1},
\qquad
\mathcal S_F[-Y;\mathsf I]=[-Y;\mathsf I]K.
\]
Thus the positive graph is the exterior branch and the negative
graph is the interior branch, each of dimension \(n\). In particular,
\begin{equation}
\Pi_>=\Pi_<,\qquad
\Pi_>R_0=[\mathsf I;X](\mathsf I+YX)^{-1}(\mathsf I+YP_0),
\label{eq:factor-rr-initial-graph-app}
\end{equation}
where \(\Pi_<\) on the right is the DARE pencil projector.
\end{lemma}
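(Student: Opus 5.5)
The plan is to verify each claim of Lemma~\ref{lem:factor-rr-branches-app} by direct block multiplication, and then read off the branch assignment from the spectra of \(F^{-1}\) and \(K\).

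\textbf{Step 1: the lift formula.} I would first check \(\mathcal S_F=M^{-1}L\) explicitly. With \(M=[A,0;-Q,\mathsf I]\) and \(A\) invertible,
\[
M^{-1}=[A^{-1},0;QA^{-1},\mathsf I].
\]
Multiplying this by \(L=[\mathsf I,G;0,A^*]\) gives the displayed matrix \eqref{eq:factor-rr-lift-app}.

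\textbf{Step 2: the graph update.} Next I would compute
\[
\mathcal S_F[\mathsf I;P]=\bigl[A^{-1}(\mathsf I+GP);\,QA^{-1}(\mathsf I+GP)+A^*P\bigr].
\]
I then factor out \(A^{-1}(\mathsf I+GP)\) on the right; it is invertible whenever \(\mathsf I+GP\) is. The lower block becomes
\[
Q+A^*P(\mathsf I+GP)^{-1}A,
\]
which is \eqref{eq:factor-rr-graph-update-app}.

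\textbf{Step 3: the two graphs.} Set \(P=X\). The stabilizing DARE in the form \eqref{eq:problem-dare-compact} turns the lower block into \(X\). The right factor is \(A^{-1}(\mathsf I+GX)=F^{-1}\), which exists because \(A\) and \(\mathsf I+GX\) are invertible.

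For the negative graph I would compute
\[
\mathcal S_F[-Y;\mathsf I]=\bigl[A^{-1}(G-Y);\,A^*+QA^{-1}(G-Y)\bigr].
\]
The target is \([-Y;\mathsf I]K\) with \(K=(\mathsf I+QY)^{-1}A^*\). Rather than expanding directly, I would use the pencil form \(L[-Y;\mathsf I]=M[-Y;\mathsf I]K\), already established in the proof of Proposition~\ref{prop:factor-dare-coefficients-app}. Multiplying it on the left by \(M^{-1}\) gives the claim at once. The same trick also gives \(\mathcal S_F[\mathsf I;X]=[\mathsf I;X]F^{-1}\) from \(MV_s=LV_sF\), as a cross-check of Step 3's first part.

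\textbf{Step 4: branch assignment.} From Proposition~\ref{prop:factor-dare-coefficients-app} we have \(\rho(F)<1\) and \(\rho(K)<1\). Since \(F\) is invertible here, every eigenvalue of \(F^{-1}\) has modulus greater than one. So \(\operatorname{ran}[\mathsf I;X]\) is an invariant subspace carrying exterior spectrum, and \(\operatorname{ran}[-Y;\mathsf I]\) is an invariant subspace carrying interior spectrum.

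Lemma~\ref{lem:factor-positive-graphs-app} shows the two graphs are complementary, each of dimension \(n\). Hence \(T=[V_s,V_u]\) block-diagonalizes \(\mathcal S_F\) as \(\operatorname{diag}(F^{-1},K)\). The exterior Riesz projector is therefore \(T\operatorname{diag}(\mathsf I,0)T^{-1}\), which by Proposition~\ref{prop:factor-dare-coefficients-app} equals the DARE projector \(\Pi_<\). One could also argue via \eqref{eq:pencil-ordinary-reduction}, but that route changes orientation, so I prefer this explicit argument.

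\textbf{Step 5: the projected initial column.} Finally, \eqref{eq:factor-positive-projector-app} gives
\[
\Pi_>R_0=[\mathsf I;X](\mathsf I+YX)^{-1}[\mathsf I,Y][\mathsf I;P_0]=[\mathsf I;X](\mathsf I+YX)^{-1}(\mathsf I+YP_0).
\]

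\textbf{Main obstacle.} The only delicate point is the orientation bookkeeping in Step 4. The branches swap relative to the pencil: the stable DARE graph is the exterior branch of \(\mathcal S_F=M^{-1}L\), because \(\mathcal S_F\) acts there as \(F^{-1}\). I would state this carefully rather than rely on the notation.
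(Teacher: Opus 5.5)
Your proposal is correct and follows essentially the paper's own argument: block multiplication gives \eqref{eq:factor-rr-graph-update-app}, and left-multiplying the pencil relations $MV_s=LV_sF$ and $LV_u=MV_uK$ by $M^{-1}$ gives the two invariant graphs. Schur stability of $F$ and $K$ (with $F$ invertible) then assigns the exterior and interior branches, and the projector formula of Proposition~\ref{prop:factor-dare-coefficients-app} yields $\Pi_>R_0$; your explicit check of $M^{-1}L$ and the block-diagonalization $\mathcal S_F=T\operatorname{diag}(F^{-1},K)T^{-1}$ just spell out steps the paper states more tersely.
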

\begin{proof}
Block multiplication proves \eqref{eq:factor-rr-graph-update-app}.
Multiplying the relations \(MV_s=LV_sF\) and \(LV_u=MV_uK\)
by \(M^{-1}\) gives the two invariant graphs.
Here \(F\) is invertible because \(A\) is invertible.
Since \(F,K\) are Schur stable, \(F^{-1}\) has exterior spectrum
and \(K\) has interior spectrum. The canonical projectors therefore
have the same range and kernel as in
Proposition~\ref{prop:factor-dare-coefficients-app}.
Its projector formula gives \eqref{eq:factor-rr-initial-graph-app}.
\end{proof}

The interior DARE branch and exterior RR branch therefore represent
the same positive graph in two different spectral descriptions.
The first initialization bound uses the positive semidefinite data of
Definition~\ref{def:problem-recursion}.

\begin{proposition}[Positive semidefinite RR initialization]
\label{prop:factor-rr-initialization-app}
Under the assumptions of Lemma~\ref{lem:factor-rr-branches-app},
every \(P_0\succeq0\) satisfies
\begin{equation}
\sigma_{\min}(\Pi_>R_0)
\ge\frac{\sqrt{1+q_0^2}}{1+u_{\rm DARE}v_{\rm DARE}}
\sqrt{\frac{g_0}{v_{\rm DARE}}}>0.
\label{eq:factor-rr-psd-initialization-app}
\end{equation}
No upper restriction on \(\|P_0\|\) is needed for this absolute
nondegeneracy bound.
\end{proposition}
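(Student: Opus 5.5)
The plan is to start from the explicit factorization of the projected initial column in Lemma~\ref{lem:factor-rr-branches-app},
\[
\Pi_>R_0=[\mathsf I;X](\mathsf I+YX)^{-1}(\mathsf I+YP_0).
\]
I will bound the smallest singular value of each factor from below. Each factor is either a full-column-rank \(2n\times n\) matrix or an invertible \(n\times n\) matrix. For such factors, \(\sigma_{\min}(ABC)\ge\sigma_{\min}(A)\sigma_{\min}(B)\sigma_{\min}(C)\). The three bounds, taken in order, should produce exactly the three factors in \eqref{eq:factor-rr-psd-initialization-app}.

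\emph{First factor.} We have \([\mathsf I;X]^*[\mathsf I;X]=\mathsf I+X^2\). By Proposition~\ref{prop:factor-dare-coefficients-app}, \(X\succeq Q\succeq q_0\mathsf I\succ0\). Hence
\[
\sigma_{\min}([\mathsf I;X])=\sqrt{1+\lambda_{\min}(X)^2}\ge\sqrt{1+q_0^2}.
\]

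\emph{Middle factor.} Lemma~\ref{lem:factor-positive-graphs-app} shows that \(\mathsf I+YX\) is invertible. Then
\[
\sigma_{\min}\bigl((\mathsf I+YX)^{-1}\bigr)=1/\|\mathsf I+YX\|\ge 1/(1+\|Y\|\|X\|)\ge 1/(1+u_{\rm DARE}v_{\rm DARE}),
\]
using the norm bounds in \eqref{eq:factor-dare-coefficient-bounds-app}.

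\emph{Last factor.} This is the step needing care, because \(P_0\) has no upper bound. Since \(Y\succ0\), I write the similarity
\[
\mathsf I+YP_0=Y^{1/2}\bigl(\mathsf I+Y^{1/2}P_0Y^{1/2}\bigr)Y^{-1/2}.
\]
The middle matrix is \(\mathsf I+S\) with \(S=Y^{1/2}P_0Y^{1/2}\succeq0\) Hermitian. Its eigenvalues are therefore at least one, so \(\sigma_{\min}(\mathsf I+S)\ge1\) no matter how large \(P_0\) is.

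The outer factors contribute \(\sigma_{\min}(Y^{1/2})\ge\sqrt{g_0}\), from \(Y\succeq G\succeq g_0\mathsf I\). They also contribute \(\sigma_{\min}(Y^{-1/2})=\|Y\|^{-1/2}\ge v_{\rm DARE}^{-1/2}\). Together these give
\[
\sigma_{\min}(\mathsf I+YP_0)\ge\sqrt{g_0/v_{\rm DARE}}.
\]
This bound is uniform in \(P_0\succeq0\), which explains the final remark in the statement.

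I expect this last step to be the main obstacle. A naive estimate such as \(\|(\mathsf I+YP_0)^{-1}\|\), or a Hermitian-part argument applied directly to \(\mathsf I+YP_0\), either loses a factor or depends on \(\|P_0\|\). The symmetrizing similarity is what removes that dependence.

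Multiplying the three lower bounds gives \eqref{eq:factor-rr-psd-initialization-app}. Every factor is strictly positive because \(q_0,g_0>0\) and \(v_{\rm DARE}<\infty\). This also confirms that \(\Pi_>R_0\) has full column rank.
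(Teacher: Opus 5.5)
Your proof is correct and follows the paper's argument essentially step for step. You use the same factorization \(\Pi_>R_0=[\mathsf I;X](\mathsf I+YX)^{-1}(\mathsf I+YP_0)\), the same symmetrizing similarity \(\mathsf I+YP_0=Y^{1/2}(\mathsf I+Y^{1/2}P_0Y^{1/2})Y^{-1/2}\) for the \(P_0\)-independent bound \(\sqrt{g_0/v_{\rm DARE}}\), and the same bounds \(\sqrt{1+q_0^2}\) and \((1+u_{\rm DARE}v_{\rm DARE})^{-1}\) on the other two factors; you also make explicit the submultiplicativity of \(\sigma_{\min}\) and the identity \(\sigma_{\min}([\mathsf I;X])=\sqrt{1+\lambda_{\min}(X)^2}\), which the paper leaves implicit.
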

\begin{proof}
The factorization
\[
\mathsf I+YP_0
=Y^{1/2}(\mathsf I+Y^{1/2}P_0Y^{1/2})Y^{-1/2}
\]
has a middle factor at least \(\mathsf I\). Since
\(g_0\mathsf I\preceq Y\preceq v_{\rm DARE}\mathsf I\),
the smallest singular value of \(\mathsf I+YP_0\) is at least
\(\sqrt{g_0/v_{\rm DARE}}\).
Also,
\[
\sigma_{\min}([\mathsf I;X])\ge\sqrt{1+q_0^2},\qquad
\sigma_{\min}((\mathsf I+YX)^{-1})
\ge(1+u_{\rm DARE}v_{\rm DARE})^{-1}.
\]
Multiplying these three bounds in
\eqref{eq:factor-rr-initial-graph-app} proves the result.
\end{proof}

The same discrete contraction estimates now bound the generalized
singularity factor used to select these two branches.

\begin{proposition}[RR unit-circle generalized singularity factor]
\label{prop:factor-rr-resolvent-app}
Under the assumptions of Lemma~\ref{lem:factor-rr-branches-app},
\begin{equation}
\begin{aligned}
\sup_{|z|=1}\|(z\mathsf I-\mathcal S_F)^{-1}\|
&\le\bigl(1+\max\{u_{\rm DARE},v_{\rm DARE}\}\bigr)
\sqrt{\frac{\max\{u_{\rm DARE},v_{\rm DARE}\}}{\min\{q_0,g_0\}}}\\
&\quad\times\max\left\{
\frac{\sqrt{u_{\rm DARE}/q_0}\sqrt{1-q_0/u_{\rm DARE}}}
 {1-\sqrt{1-q_0/u_{\rm DARE}}},
\frac{\sqrt{v_{\rm DARE}/g_0}}{1-\sqrt{1-g_0/v_{\rm DARE}}}
\right\}.
\end{aligned}
\label{eq:factor-rr-resolvent-app}
\end{equation}
The reciprocal of the right-hand side of
\eqref{eq:factor-rr-resolvent-app} is a lower bound on
\(\eta_{\mathbb T}\). Multiplying that right-hand side by
\(1+\alpha_{\mathcal S_F}\) bounds the unit-circle generalized
singularity factor at the declared scale
\(\alpha_{\mathcal S_F}\ge\|\mathcal S_F\|\).
The projector satisfies
\begin{equation}
\|\Pi_>\|\le
\bigl(1+\max\{u_{\rm DARE},v_{\rm DARE}\}\bigr)
\sqrt{\frac{\max\{u_{\rm DARE},v_{\rm DARE}\}}{\min\{q_0,g_0\}}}.
\label{eq:factor-rr-projector-bound-app}
\end{equation}
\end{proposition}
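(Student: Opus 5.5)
The plan is to diagonalize the forward lift by the same graph frame used for the DARE pencil, and then to bound each factor of the resulting similarity separately.

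Let \(V_s=[\mathsf I;X]\), \(V_u=[-Y;\mathsf I]\) and \(T=[V_s,V_u]\), as in the proof of Proposition~\ref{prop:factor-dare-coefficients-app}. Lemma~\ref{lem:factor-rr-branches-app} gives \(\mathcal S_F V_s=V_sF^{-1}\) and \(\mathcal S_F V_u=V_uK\). Since \(T\) is invertible by Lemma~\ref{lem:factor-positive-graphs-app}, this yields
\[
\mathcal S_F=T\operatorname{diag}(F^{-1},K)T^{-1},\qquad
(z\mathsf I-\mathcal S_F)^{-1}
=T\operatorname{diag}\bigl(F(zF-\mathsf I)^{-1},(z\mathsf I-K)^{-1}\bigr)T^{-1}.
\]
The first block uses \((z\mathsf I-F^{-1})^{-1}=F(zF-\mathsf I)^{-1}\), which is valid because \(F\) is invertible when \(A\) is. Moreover \(\Pi_>=T\operatorname{diag}(\mathsf I,0)T^{-1}\). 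The three claims therefore reduce to bounding \(\|T\|\), \(\|T^{-1}\|\), and the two diagonal blocks for \(|z|=1\).

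\emph{The diagonal blocks.} Proposition~\ref{prop:factor-dare-resolvent-app} gives the weighted contractions \eqref{eq:factor-discrete-contractions-app}. Set \(c=\sqrt{1-q_0/u_{\rm DARE}}\). In the \(X^{1/2}\)-similarity, a Neumann series gives
\[
\|X^{1/2}F(zF-\mathsf I)^{-1}X^{-1/2}\|\le \frac{c}{1-c}.
\]
Undoing the similarity costs \(\|X^{1/2}\|\|X^{-1/2}\|\le\sqrt{u_{\rm DARE}/q_0}\), since \(q_0\mathsf I\preceq Q\preceq X\) and \(\|X\|\le u_{\rm DARE}\). This produces the first entry of the maximum, including its extra factor \(\sqrt{1-q_0/u_{\rm DARE}}\). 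The block \((z\mathsf I-K)^{-1}\) is handled in the same way with \(Y\), using \(g_0\mathsf I\preceq G\preceq Y\preceq v_{\rm DARE}\mathsf I\). This gives the second entry.

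\emph{The frame and its inverse.} Clearly \(\|T\|\le1+\max\{\|X\|,\|Y\|\}\le1+\max\{u_{\rm DARE},v_{\rm DARE}\}\). For \(T^{-1}\) I would imitate the \(S_p\) argument. Put \(D=\operatorname{diag}(X,Y)\) and \(w=[s;t]\). Then
\[
w^*DTw=s^*Xs+t^*Yt+\bigl(t^*YXs-\overline{t^*YXs}\bigr),
\]
and the bracketed cross term is purely imaginary, so the Hermitian part of \(DT\) equals \(D\). Hence \(D^{1/2}TD^{-1/2}\) has Hermitian part \(\mathsf I\), and its inverse has norm at most one. Undoing the similarity gives
\[
\|T^{-1}\|\le\sqrt{\max\{u_{\rm DARE},v_{\rm DARE}\}/\min\{q_0,g_0\}}.
\]

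\emph{Assembly.} Multiplying \(\|T\|\), \(\|T^{-1}\|\) and the larger of the two block bounds proves \eqref{eq:factor-rr-resolvent-app}. The product \(\|T\|\|T^{-1}\|\) alone proves \eqref{eq:factor-rr-projector-bound-app}. Because \(\sigma_{\min}(z\mathsf I-\mathcal S_F)=\|(z\mathsf I-\mathcal S_F)^{-1}\|^{-1}\), the reciprocal of the bound is a lower bound on \(\eta_{\mathbb T}\). On the unit circle the shifted scale is \(|z|+\alpha_{\mathcal S_F}=1+\alpha_{\mathcal S_F}\), which gives the stated generalized singularity factor.

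I expect the main obstacle to be the bound on \(\|T^{-1}\|\). It must come out at exactly the square-root ratio rather than the cruder \(\max/\min\). The cancellation of the cross terms in \(w^*DTw\) is the step to verify carefully. If it fails, I would fall back on writing \(T^{-1}=\operatorname{diag}(F,\mathsf I)^{-1}S_p^{-1}M\), but that reintroduces \(F^{-1}\) and is worse.
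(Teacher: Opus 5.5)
Your proposal is correct and follows essentially the same route as the paper: the similarity \(\mathcal S_F=T\operatorname{diag}(F^{-1},K)T^{-1}\), the accretivity identity \((DT+T^*D)/2=D\) giving \(\|T^{-1}\|\le\sqrt{\max\{u_{\rm DARE},v_{\rm DARE}\}/\min\{q_0,g_0\}}\), and Neumann series for the two diagonal blocks in the \(X\)- and \(Y\)-weighted norms, assembled with \(\|T\|\le1+\max\{u_{\rm DARE},v_{\rm DARE}\}\). The cross-term cancellation you flagged does hold, since the bracket equals \(2\mathrm i\operatorname{Im}(t^*YXs)\), so the fallback through \(S_p\) is unnecessary.
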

\begin{proof}
The invariant graph relations give
\(\mathcal S_F=T\operatorname{diag}(F^{-1},K)T^{-1}\).
For \(D=\operatorname{diag}(X,Y)\),
\((DT+T^*D)/2=D\), so the accretivity argument in
Proposition~\ref{prop:factor-dare-resolvent-app} gives
\[
\|T^{-1}\|\le
\sqrt{\frac{\max\{u_{\rm DARE},v_{\rm DARE}\}}{\min\{q_0,g_0\}}}.
\]
Together with the bound on \(\|T\|\) in that proof, this proves
\eqref{eq:factor-rr-projector-bound-app}.
In the \(X\) metric, the identity
\[
(z\mathsf I-F^{-1})^{-1}=-(\mathsf I-zF)^{-1}F
\]
and the contraction bounds in \eqref{eq:factor-discrete-contractions-app}
give, for \(|z|=1\),
\[
\begin{aligned}
\|(z\mathsf I-F^{-1})^{-1}\|
&\le\frac{\sqrt{u_{\rm DARE}/q_0}\sqrt{1-q_0/u_{\rm DARE}}}
 {1-\sqrt{1-q_0/u_{\rm DARE}}},\\
\|(z\mathsf I-K)^{-1}\|
&\le\frac{\sqrt{v_{\rm DARE}/g_0}}{1-\sqrt{1-g_0/v_{\rm DARE}}}.
\end{aligned}
\]
The second estimate follows from the \(Y\)-metric Neumann series.
Similarity by \(T\) proves \eqref{eq:factor-rr-resolvent-app}.
Taking reciprocal singular values gives the lower bound on
\(\eta_{\mathbb T}\), and the shifted scale on the unit circle
is \(1+\alpha_{\mathcal S_F}\).
\end{proof}

A fixed fraction of the reciprocal of the right-hand side of
\eqref{eq:factor-rr-resolvent-app} supplies a radial margin for
the interior circle and exterior annulus of
Theorem~\ref{thm:factor-circle-app}. The exterior domain must exclude
zero for the weight \(z^{-k}\); scaling \(\mathcal S_F\) moves its
separating circle and must also rescale the contour.
Its encoding scale and construction cost include those of \(A^{-1}\),
which cannot be bounded from \(\|A\|\) above alone.

Finally, these lower bounds concern absolute initialization
nondegeneracy. If \(\|P_0\|\le p\), then for either selected projector
\[
\kappa(\Pi_{\rm sel}R_0)
=\frac{\|\Pi_{\rm sel}R_0\|}
       {\sigma_{\min}(\Pi_{\rm sel}R_0)}
\le\frac{\|\Pi_{\rm sel}\|\sqrt{1+p^2}}
         {\sigma_{\min}(\Pi_{\rm sel}R_0)}.
\]
For RR, \eqref{eq:factor-rr-projector-bound-app} bounds the projector
norm; the appropriate initialization result bounds the denominator. For DRE, a separate projector-norm bound is still
needed. If it is obtained from the integral estimate in
Appendix~\ref{app:factor-geometry}, its generalized singularity factor
and contour dependence remain in this condition-number bound.
This distinction preserves the \(\kappa\) dependence in the dynamic
complexity formulas while keeping the required coefficient,
initialization, and time or step assumptions explicit.

\clearpage

\section{Weyl--LCHM construction of half-plane projectors}
\label{app:weyl-lchm}

Subsection~\ref{subsec:weyl-lchm-riesz} reduces the construction of
\(\Pi_\chi R\) to approximating a scalar selector and implementing the
resulting polynomial. We first prove the Weyl implementation in
Proposition~\ref{cor:lchm-branch-block}, then derive the polynomial degree
required by Theorem~\ref{thm:lchm-explicit-query}. A Faber construction
then combines scalar approximation with contour resolvent control,
allowing matrices with arbitrary Jordan structure.
The final subsection applies these exact-input results to approximate
block-encodings by separating selector, matrix-input, and circuit errors.
We use the normalized variables and selector notation of
\eqref{eq:lchm-half-plane-gap}--\eqref{eq:lchm-polynomial-selectors}.

\subsection{Weyl realization and proof of the branch-block construction}
\label{app:weyl-realization}

Proposition~\ref{cor:lchm-branch-block} turns a polynomial selector into
an encoding of the Riesz block. We prove that implementation first;
the degree estimates in the following subsections can then be converted
directly into query bounds. We use the exact-input and coherent-access
assumptions of Subsection~\ref{subsec:weyl-lchm-riesz}.

The angular identity itself holds for any contraction. Let
\(\widetilde H\) denote the signal block of the supplied unit-normalized
encoding, equal to \(\widehat H\) for exact inputs. Write the Weyl lift as
\begin{equation}
2S_{\chi,r}(z)-\tfrac12=\sum_{m=0}^{d_S}c_mz^m,
\qquad c_0=\tfrac12.
\label{eq:lchm-weyl-lift}
\end{equation}
Use the normalization defined in \eqref{eq:lchm-selector-normalization}.
The subtraction of \(1/2\) preserves the constant coefficient once;
the other coefficients are doubled.

\begin{lemma}[Exact angular realization]
\label{lem:lchm-angular-realization}
Let \(N>d_S\), \(\theta_j=\pi j/N\), and
\[
X_j=\tfrac12(e^{-\mathrm i\theta_j}\widetilde H+
                     e^{\mathrm i\theta_j}\widetilde H^*).
\]
For the Chebyshev polynomials \(T_m\) of the first kind, define
\(q_{\theta_j}(x)=c_0+\sum_{m=1}^{d_S}
c_me^{\mathrm im\theta_j}T_m(x)\).
Then
\begin{equation}
S_{\chi,r}(\widetilde H)
=\frac1N\sum_{j=0}^{N-1}q_{\theta_j}(X_j),
\qquad
\max_{x\in[-1,1]}|q_{\theta_j}(x)|
\le\alpha_{S_{\chi,r}}.
\label{eq:lchm-angular-identity}
\end{equation}
\end{lemma}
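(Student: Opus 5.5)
The plan is to prove the two claims of \eqref{eq:lchm-angular-identity} separately. The first is an exact algebraic identity obtained by expanding Chebyshev polynomials of the Hermitian matrices \(X_j\) into noncommutative words in \(\widetilde H,\widetilde H^*\) and averaging the phases. The second is a scalar bound obtained by writing \(q_\theta(\cos\phi)\) as an average of two values of the Weyl lift \(2S_{\chi,r}-\tfrac12\) on the unit circle. No property of \(\widetilde H\) beyond being a square matrix is needed for the identity. Contractivity only guarantees that \(X_j\) has spectrum in \([-1,1]\), so the scalar bound applies to the polynomials that are actually implemented.

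\textbf{Identity.} For \(m\ge1\), write \(T_m(x)=2^{m-1}x^m+(\text{lower-degree terms})\) and substitute \(x=X_\theta=\tfrac12(e^{-\mathrm i\theta}\widetilde H+e^{\mathrm i\theta}\widetilde H^*)\).

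1. Every power \(X_\theta^{\ell}\) with \(\ell\le m\) expands into words of length \(\ell\). A word containing \(k\) factors \(\widetilde H\) and \(\ell-k\) factors \(\widetilde H^*\) carries the phase \(e^{-\mathrm i\theta(2k-\ell)}\).
2. After multiplication by \(e^{\mathrm im\theta}\), this phase becomes \(e^{\mathrm i\theta(m-\ell+2(\ell-k))}\). The exponent integer \(m-\ell+2(\ell-k)\) lies in \([0,2m]\), and it vanishes only when \(\ell=m\) and \(k=m\).
3. Average over \(\theta_j=\pi j/N\). The average \(\frac1N\sum_j e^{\mathrm i\pi jp/N}\) equals \(1\) when \(p\equiv0\pmod{2N}\) and \(0\) otherwise. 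Since \(0\le p\le2m\le2d_S<2N\), only the term with \(p=0\) survives.
4. Consequently
\[
\frac1N\sum_{j=0}^{N-1}e^{\mathrm im\theta_j}T_m(X_j)=2^{m-1}2^{-m}\widetilde H^m=\tfrac12\widetilde H^m .
\]
5. The constant term \(c_0\) averages to itself.

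Now write \(S_{\chi,r}(z)=\sum_m s_mz^m\). From \eqref{eq:lchm-polynomial-selectors}, \(s_0=\tfrac12\). From \eqref{eq:lchm-weyl-lift}, \(c_0=\tfrac12\) and \(c_m=2s_m\) for \(m\ge1\). Summing the averages therefore gives
\[
\frac1N\sum_{j}q_{\theta_j}(X_j)=\tfrac12+\sum_{m\ge1}s_m\widetilde H^m=S_{\chi,r}(\widetilde H).
\]
The main bookkeeping point is that the lower-order Chebyshev terms also cancel. This is why the argument must track the exponent \(m-\ell+2(\ell-k)\), not only the leading words, and why the condition \(N>d_S\) is required.

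\textbf{Scalar bound.} Let \(p(z)=2S_{\chi,r}(z)-\tfrac12=\sum_m c_mz^m\). For \(x=\cos\phi\in[-1,1]\), the relation \(T_m(\cos\phi)=\tfrac12(e^{\mathrm im\phi}+e^{-\mathrm im\phi})\) gives
\[
q_\theta(\cos\phi)=\tfrac12\bigl(p(e^{\mathrm i(\theta+\phi)})+p(e^{\mathrm i(\theta-\phi)})\bigr).
\]
The constant term is consistent with this formula because \(c_0\) appears with weight one on both sides. Hence \(|q_\theta(x)|\le\max_{|z|=1}|p(z)|\le\alpha_{S_{\chi,r}}\) by \eqref{eq:lchm-selector-normalization}. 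This proves the second claim for every \(\theta_j\).
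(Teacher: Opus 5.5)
Your plan follows the paper's proof: expand \(e^{\mathrm im\theta}T_m(X_\theta)\) into words in \(\widetilde H,\widetilde H^*\), let the angular average remove every word except \(\widetilde H^m\), and bound \(q_\theta(\cos\phi)\) by the mean of the Weyl lift at \(e^{\mathrm i(\theta\pm\phi)}\). Your coefficient bookkeeping (\(c_0=s_0=\tfrac12\), \(c_m=2s_m\)) and your scalar bound are correct. The gap is in step 3. The nodes \(\theta_j=\pi j/N\), \(0\le j\le N-1\), cover only half the circle, so
\[
\frac1N\sum_{j=0}^{N-1}e^{\mathrm i\pi jp/N}=\frac{1-(-1)^p}{N\bigl(1-e^{\mathrm i\pi p/N}\bigr)}\qquad(p\not\equiv0\pmod{2N}).
\]
This vanishes for even \(p\) but not for odd \(p\); for example, \(N=2\), \(p=1\) gives \((1+\mathrm i)/2\). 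Your claim that the average is zero whenever \(2N\nmid p\) is therefore false. Because your step 1 admits every degree \(\ell\le m\), odd exponents \(p=m+\ell-2k\) would survive the average as you have written the argument. The bound \(p<2N\) you invoke is the right condition for averaging over all \(2N\)-th roots of unity, not over the \(N\) nodes the lemma actually uses.

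The missing observation is parity. \(T_m\) contains only the powers \(x^m,x^{m-2},\dots\), so \(\ell\equiv m\pmod 2\). Hence every exponent \(p=m+\ell-2k=2q\) is even, with \(0\le q\le m\le d_S<N\). Then \(e^{\mathrm i\pi jp/N}=e^{2\pi\mathrm ijq/N}\), whose average over \(j\) is \(1\) when \(q=0\) and \(0\) otherwise. Since \(p=(m-\ell)+2(\ell-k)\), the case \(q=0\) forces \(\ell=k=m\). The paper makes this point explicitly when it notes that the non-leading degrees of \(T_m\) are \(m-2,m-4,\dots\), so every phase is an even multiple of \(\theta_j\). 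Parity is also why \(N>d_S\) suffices rather than \(N>2d_S\). With this observation added to steps 1 and 3, your proof is complete and coincides with the paper's.
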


\begin{proof}
For \(m\ge1\), \(T_m(x)\) has leading term \(2^{m-1}x^m\), and
all its other powers have degrees \(m-2,m-4,\ldots\).
In a term of degree \(m-2\ell\), a word containing \(k\) factors of
\(\widetilde H^*\) acquires the phase \(e^{2\mathrm i(\ell+k)\theta_j}\)
after multiplication by \(e^{\mathrm im\theta_j}\).
The integer \(\ell+k\) lies between zero and \(m<N\).
Its discrete average vanishes unless \(\ell=k=0\), in which case the
word is \(\widetilde H^m\) and its coefficient is \(1/2\).
Thus
\[
\frac1N\sum_{j=0}^{N-1}
e^{\mathrm im\theta_j}T_m(X_j)=\frac12\widetilde H^m
\quad(m\ge1).
\]
The constant term averages to \(c_0\mathsf I\), proving the identity.
For \(x=\cos\phi\), the scalar angular polynomial satisfies
\[
q_\theta(\cos\phi)
=S_{\chi,r}(e^{\mathrm i(\theta+\phi)})
+S_{\chi,r}(e^{\mathrm i(\theta-\phi)})-\tfrac12.
\]
This is the average of the Weyl lift at two points on the unit circle,
so its absolute value is at most \(\alpha_{S_{\chi,r}}\).
\end{proof}

The identity expresses the selector as an average of Hermitian
polynomials. To realize this average with \(O(d_S)\) queries, the angles
must be processed coherently. The following signal construction lets a
single polynomial signal-processing sequence act on all angles
\cite{WangLiangChenLiu2026LCHM}.
On one extra qubit define the Hermitian unitary
\[
\mathcal V_j=
\begin{bmatrix}
0&e^{-\mathrm i\theta_j}U_{\widehat H}\\
e^{\mathrm i\theta_j}U_{\widehat H}^*&0
\end{bmatrix}.
\]
Let \(J|\psi\rangle=|+\rangle|0^{a_H}\rangle|\psi\rangle\) be the
signal isometry. Then \(J^*\mathcal V_jJ=X_j\), and the qubitized walk
\(W_j=(2JJ^*-\mathsf I)\mathcal V_j\) satisfies
\(J^*W_j^mJ=T_m(X_j)\).
The angle-controlled unitary
\[
\mathsf Z=\sum_{j=0}^{N-1}|j\rangle\langle j|
\otimes e^{\mathrm i\theta_j}W_j
\]
therefore has the required Chebyshev powers in its signal block.
Generalized quantum signal processing implements
\(\sum_{m=0}^{d_S}c_m\mathsf Z^m/\alpha_{S_{\chi,r}}\), since
the normalized polynomial is bounded by one on the unit circle
\cite{MotlaghWiebe2023GQSP}.
Any supplied bound in \eqref{eq:lchm-selector-normalization} is therefore
admissible. Preparing and unpreparing
\(N^{-1/2}\sum_j|j\rangle\) gives the average in
Lemma~\ref{lem:lchm-angular-realization}.

Averaging therefore implements the selector at the declared normalization.
Composing it with the encoding of \(R\) and accounting for the two error
sources proves the main-text construction result.

\newtheorem*{lchmbranchrestatement}{Proposition~\ref{cor:lchm-branch-block}}
\begin{lchmbranchrestatement}
Under the input and access assumptions above, let
\(S_{\chi,r}\) be the selector in
\eqref{eq:lchm-polynomial-selectors} and satisfy
\(\|(S_{\chi,r}(\widehat H)-\Pi_\chi)R\|\le\varepsilon/2\),
where \(\varepsilon>0\).
Weyl--LCHM returns a block-encoding of \(\Pi_\chi R\) with normalization
\(\alpha_{S_{\chi,r}}\alpha_R\), decoded error at most \(\varepsilon\),
using
\[
O(d_S)\quad\text{queries to the block-encodings of }\widehat H\text{ and }R.
\]
\end{lchmbranchrestatement}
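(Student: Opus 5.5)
The plan is to assemble three ingredients that are already in place just before the statement: the exact angular identity of Lemma~\ref{lem:lchm-angular-realization}, the coherent signal construction through \(\mathsf Z\) and generalized QSP, and block-encoding multiplication by \(U_R\). The error then splits into one selector term and one circuit term.

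First, take the Weyl lift \eqref{eq:lchm-weyl-lift} and apply Lemma~\ref{lem:lchm-angular-realization} with \(N=d_S+1\). For exact inputs \(\widetilde H=\widehat H\), so this writes \(S_{\chi,r}(\widehat H)\) as the uniform average of \(q_{\theta_j}(X_j)\) with no quadrature error.

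Next, realize this average in three steps:
\begin{enumerate}
\item Prepare \(N^{-1/2}\sum_j|j\rangle\) on the index register.
\item Apply a GQSP sequence in the angle-controlled walk \(\mathsf Z\) that implements \(\sum_{m}c_m\mathsf Z^m/\alpha_{S_{\chi,r}}\).
\item Unprepare the index register.
\end{enumerate}
The GQSP step is admissible because \eqref{eq:lchm-selector-normalization} bounds the normalized polynomial by one on the unit circle. The signal identity \(J^*W_j^mJ=T_m(X_j)\) places \(q_{\theta_j}(X_j)/\alpha_{S_{\chi,r}}\) in the \(j\)th branch, with the phase \(e^{\mathrm i m\theta_j}\) supplied by \(\mathsf Z\). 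The preparation and unpreparation then yield a block-encoding of \(S_{\chi,r}(\widehat H)\) at normalization \(\alpha_{S_{\chi,r}}\).

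Each power of \(\mathsf Z\) costs one controlled call to \(U_{\widehat H}\) or its adjoint, so the circuit uses \(O(d_S)\) queries. Multiplying on the right by \(U_R\), with separate signal ancillas, gives a block-encoding of \(S_{\chi,r}(\widehat H)R\) at normalization \(\alpha_{S_{\chi,r}}\alpha_R\) and adds one query to \(U_R\).

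For the error, the decoded matrix differs from \(\Pi_\chi R\) by
\[
\|(S_{\chi,r}(\widehat H)-\Pi_\chi)R\|+\text{(GQSP phase and state-preparation error)}\cdot\alpha_{S_{\chi,r}}\alpha_R .
\]
The first term is at most \(\varepsilon/2\) by hypothesis. I would require the circuit contribution to be at most \(\varepsilon/2\) after scaling by the declared normalization, which is a synthesis precision and not a query count. It therefore does not change the \(O(d_S)\) bound, and the total decoded error is at most \(\varepsilon\).

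The step I expect to be the main obstacle is checking the coherent angle handling. The controlled \(\mathsf Z\) must give the phase \(e^{\mathrm i m\theta_j}\) and the Chebyshev power together in the \(j\)th branch. One also has to confirm that \(W_j\) with the phase \(e^{\mathrm i\theta_j}\) is a valid unitary signal for GQSP, whose Laurent or polynomial degree matches \(d_S\). I would settle both points by expanding \((e^{\mathrm i\theta_j}W_j)^m\) on \(\operatorname{ran}J\) and using the qubitization identity branchwise, as in \cite{WangLiangChenLiu2026LCHM}.
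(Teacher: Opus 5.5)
Your proposal is correct and follows the paper's proof. Both use the exact angular identity, coherent GQSP in \(\mathsf Z\) justified by the unit-circle bound \eqref{eq:lchm-selector-normalization}, a single right multiplication by \(U_R\), and an \(\varepsilon/2+\varepsilon/2\) split between selector error and scaled circuit error with \(O(d_S)\) signal calls. Two cosmetic differences: the paper takes \(N=2^{\lceil\log_2(d_S+1)\rceil}\) so the uniform index state is prepared exactly by Hadamards, whereas your \(N=d_S+1\) puts a small preparation error into the circuit budget; and each \(\mathsf Z\) call uses a constant number of controlled \(U_{\widehat H}\) and \(U_{\widehat H}^*\) calls, not one, since \(\mathcal V_j\) contains both.
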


\begin{proof}
Choose \(N=2^b\), where \(b=\lceil\log_2(d_S+1)\rceil\).
The construction above gives the exact signal block
\(S_{\chi,r}(\widehat H)/\alpha_{S_{\chi,r}}\) for ideal circuits.
Composing with the encoding of \(R\) gives
\(S_{\chi,r}(\widehat H)R\) at normalization
\(\alpha_{S_{\chi,r}}\alpha_R\).
Implement the polynomial circuit to normalized block error at most
\(\varepsilon/(2\alpha_{S_{\chi,r}}\alpha_R)\).
Adding this implementation error to the selector error gives decoded error at most
\(\varepsilon\).
The signal-processing sequence has \(O(d_S)\) signal calls,
each using a constant number of controlled input or adjoint calls.
Multiplication by \(R\) uses one further call.
\end{proof}

Coherent angular processing introduces no additional factor \(N\) in the
query count. With \(b=\lceil\log_2(d_S+1)\rceil\), the angle register
uses \(b\) qubits and the \(O(d_S)\) signal calls use
\(O(d_S\log(d_S+1))\) angle rotations in total.
A direct composition uses at most \(a_H+a_R+b+2\) logical ancillas;
the two extra qubits implement the Hermitian unitary and signal processing.
Workspace internal to the oracles, reflections, and synthesis is additional.
The cost now depends on which selector is chosen. We next determine its
degree under the spectral assumptions of the main theorem.

\subsection{Proof of the sector query bound}
\label{app:weyl-explicit-query}

The implementation above applies to any selector with the required
accuracy. Theorem~\ref{thm:lchm-explicit-query} obtains an explicit cost by
using the sector condition to choose that selector. After squaring the
eigenvalues, a scaled binomial series converges uniformly on the spectral
region. Its tail determines the degree, while the eigenvector condition
bound transfers scalar accuracy to the Riesz block.

\newtheorem*{lchmqueryrestatement}{Theorem~\ref{thm:lchm-explicit-query}}
\begin{lchmqueryrestatement}
Assume the exact-input access of Proposition~\ref{cor:lchm-branch-block}
and \(\widehat H=V\Lambda V^{-1}\), where \(\Lambda\) is diagonal.
Let \(\Delta_H\) be a supplied lower bound in
\eqref{eq:lchm-half-plane-gap}, and let
\(\kappa_V\ge\|V\|\|V^{-1}\|\) be a supplied upper bound.
Suppose a supplied \(0\le\eta<1\) satisfies
\[
|\operatorname{Im}\lambda|\le\eta|\operatorname{Re}\lambda|
\qquad(\lambda\in\operatorname{spec}(\widehat H)).
\]
For any \(\varepsilon>0\), Weyl--LCHM constructs a block-encoding of
\(\Pi_\chi R\) with decoded error at most \(\varepsilon\)
and normalization \(\alpha_{S_{\chi,r}}\alpha_R\) as in
\eqref{eq:lchm-selector-normalization}, using
\[
\widetilde O\!\left(\frac{1}{\Delta_H^2(1-\eta^2)^2}\right)
\]
queries to the block-encodings of \(\widehat H\) and \(R\).
Here \(\widetilde O\) suppresses logarithmic factors in
\(e+\alpha_R\kappa_V/\varepsilon\) and \((1-\eta^2)^{-1}\).
\end{lchmqueryrestatement}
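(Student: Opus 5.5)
The plan is to bypass the contour amplification factor \(\mathcal A_H\) of Lemma~\ref{lem:lchm-branch-selector}. Instead I will use the diagonalization \(\widehat H=V\Lambda V^{-1}\) directly, which requires scalar accuracy only at the eigenvalues. For any polynomial selector,
\[
S_{\chi,r}(\widehat H)-\Pi_\chi
=V\bigl(S_{\chi,r}(\Lambda)-\mathbf 1_\chi(\Lambda)\bigr)V^{-1}.
\]
Here \(\mathbf 1_\chi\) is the indicator of the selected half-plane, and \(\Pi_\chi=V\mathbf 1_\chi(\Lambda)V^{-1}\) by Proposition~\ref{prop:riesz-calculus-app}. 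Hence
\[
\|(S_{\chi,r}(\widehat H)-\Pi_\chi)R\|\le\kappa_V\alpha_R\max_\lambda|S_{\chi,r}(\lambda)-\mathbf 1_\chi(\lambda)|.
\]
Since \(\operatorname{sign}\lambda=\lambda(\lambda^2)^{-1/2}\), we have
\[
|S_{\chi,r}(\lambda)-\mathbf 1_\chi(\lambda)|=\tfrac12|\lambda|\,|q_{r-1}(\lambda^2)-(\lambda^2)^{-1/2}|\le\tfrac12\sup_{w\in W}|q_{r-1}(w)-w^{-1/2}|,
\]
where \(W\) is the set of squared eigenvalues and \(|\lambda|\le1\). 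It then suffices to make this scalar error at most \(\varepsilon/(2\kappa_V\alpha_R)\). Proposition~\ref{cor:lchm-branch-block} then gives the block-encoding at the declared normalization \(\alpha_{S_{\chi,r}}\alpha_R\) with \(O(d_S)=O(r)\) queries.

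Next I locate \(W\). Writing \(\lambda=x+\mathrm iy\) with \(|y|\le\eta|x|\) gives \(|\arg\lambda|\le\arctan\eta\) (modulo \(\pi\)), so \(w=\lambda^2\) satisfies \(|\arg w|\le2\arctan\eta<\pi/2\). This yields
\[
\operatorname{Re}w\ge\gamma|w|,\qquad \gamma=\frac{1-\eta^2}{1+\eta^2},
\]
together with \(\Delta_H^2\le|w|\le1\), because \(|\lambda|\ge|\operatorname{Re}\lambda|\ge\Delta_H\). Thus \(W\) lies in a closed sector-annulus strictly inside the right half-plane, where the principal branch of \(w^{-1/2}\) is analytic.

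For the polynomial I take a truncated scaled binomial series. Set \(c=1/\gamma\) and expand
\[
w^{-1/2}=c^{-1/2}\bigl(1-(1-w/c)\bigr)^{-1/2}=c^{-1/2}\sum_{k\ge0}\binom{2k}{k}4^{-k}(1-w/c)^k,
\]
then let \(q_{r-1}\) be the partial sum up to \(k=r-1\). With \(t=|w|/c\in[\gamma\Delta_H^2,\gamma]\), the sector bound gives
\[
|1-w/c|^2\le1-2\gamma t+t^2.
\]
The right-hand side is convex in \(t\), so its maximum on that interval is attained at an endpoint. Both endpoints give at most \(1-\gamma^2\Delta_H^2\) (using \(\gamma\le1\)). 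Hence \(\rho:=\sup_W|1-w/c|\le1-\gamma^2\Delta_H^2/2\). The binomial coefficients are bounded by one, so the tail is at most \(c^{-1/2}\rho^r/(1-\rho)\). This is below \(\varepsilon/(2\kappa_V\alpha_R)\) once
\[
r=O\!\left(\frac{1}{\gamma^2\Delta_H^2}\log\frac{\kappa_V\alpha_R}{\varepsilon\,\gamma\Delta_H}\right).
\]
Because \(1+\eta^2\le2\), we have \(\gamma^{-2}=O((1-\eta^2)^{-2})\), and so \(d_S=2r-1=\widetilde O(\Delta_H^{-2}(1-\eta^2)^{-2})\).

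The main obstacle is the geometric step of choosing the expansion center so that the whole squared spectral region sits inside a disk \(|1-w/c|\le\rho\) with \(1-\rho=\Omega(\gamma^2\Delta_H^2)\). A naive center such as \(c=1\) fails because of points with \(|w|\approx1\) near the sector edge. I expect that the convexity check with \(c=1/\gamma\) handles this, though it may cost a constant and force me to track the \(\log\Delta_H^{-1}\) factor in the logarithm. Two bookkeeping points remain: the normalization \(\alpha_{S_{\chi,r}}\) is simply whatever the supplied bound \eqref{eq:lchm-selector-normalization} gives, and no contour or eigenvector oracle enters the query count.
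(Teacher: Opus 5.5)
Your route is the paper's own argument. With center \(c=1/\gamma\), your polynomial is exactly the paper's \(q_{r-1}(w)=\sqrt{\gamma}\sum_{k=0}^{r-1}b_k(1-\gamma w)^k\), where \(b_k=4^{-k}\binom{2k}{k}\). The transfer \(\|(S_{\chi,r}(\widehat H)-\Pi_\chi)R\|\le\kappa_V\alpha_R\max_\lambda|S_{\chi,r}(\lambda)-\mathbf 1_\chi(\lambda)|\) is the same. Your convexity check is a uniform version of the paper's pointwise estimate \(|1-\gamma\lambda^2|^2\le1-\gamma^2|\lambda|^2\), so the center you worried about is the right one.

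The one real shortfall is the logarithm. You bound \(|\lambda|\) by one and use the geometric tail \(\rho^r/(1-\rho)\) with \(1-\rho\ge\gamma^2\Delta_H^2/2\). Your degree is therefore \(r=O\bigl(\gamma^{-2}\Delta_H^{-2}\log(\kappa_V\alpha_R/(\varepsilon\gamma^{3/2}\Delta_H^{2}))\bigr)\), which carries an extra \(\log(1/\Delta_H)\) term inside the logarithm. The theorem's \(\widetilde O\) hides only logarithms of \(e+\alpha_R\kappa_V/\varepsilon\) and \((1-\eta^2)^{-1}\). As written, you therefore prove a statement weaker by a \(\log(1/\Delta_H)\) factor, which is the issue you yourself flagged.

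The repair is to work pointwise in \(\lambda\), keep the factor \(|\lambda|\), and use the sharper tail available for these coefficients. Since \(b_k\) is decreasing and \(\sum_{j\ge0}b_jt^j=(1-t)^{-1/2}\), one has \(\sum_{k\ge r}b_kt^k\le t^r(1-t)^{-1/2}\). Set \(t_\lambda=|1-\gamma\lambda^2|\), so that \(1-t_\lambda\ge\tfrac12\gamma^2|\lambda|^2\). Then
\[
|S_{\chi,r}(\lambda)-\mathbf 1_\chi(\lambda)|
\le\frac{|\lambda|}{2}\sqrt{\gamma}\,\frac{t_\lambda^{\,r}}{\sqrt{1-t_\lambda}}
\le\frac{1}{\sqrt{2\gamma}}\exp\!\Bigl(-\frac{r\gamma^2\Delta_H^2}{2}\Bigr),
\]
because \(|\lambda|\) cancels against \(\sqrt{1-t_\lambda}\ge\gamma|\lambda|/\sqrt2\). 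Hence \(r=O\bigl((1-\eta^2)^{-2}\Delta_H^{-2}\log(e+\alpha_R\kappa_V/(\varepsilon\sqrt{1-\eta^2}))\bigr)\) suffices, which is the paper's degree choice.

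Leaving the normalization abstract is acceptable for this statement, since the \(O(d_S)\) query count of Proposition~\ref{cor:lchm-branch-block} does not depend on it. The paper additionally exhibits the admissible value \(\alpha_{S_{\chi,r}}=2^r\), using \(|1-\gamma z^2|\le2\) on \(|z|=1\). That value grows exponentially with the degree, which matters only for the later recovery cost.
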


\begin{proof}
We first construct the inverse-square-root approximation on the squared
spectrum. Set \(b_k=4^{-k}\binom{2k}{k}\) and define
\begin{equation}
q_{r-1}(w)
=\sqrt{\frac{1-\eta^2}{1+\eta^2}}
\sum_{k=0}^{r-1}b_k
\left(1-\frac{1-\eta^2}{1+\eta^2}w\right)^k.
\label{eq:lchm-spectral-binomial}
\end{equation}
Use the selectors in \eqref{eq:lchm-polynomial-selectors}.
Then \(\deg S_{\pm,r}\le d_S=2r-1\) and
\(S_{\pm,r}(0)=1/2\), as required by the Weyl construction.
For a spectral point \(\lambda\), the sector assumption gives
\[
\operatorname{Re}(\lambda^2)
\ge\frac{1-\eta^2}{1+\eta^2}|\lambda|^2.
\]
Writing
\(t_\lambda=|1-\frac{1-\eta^2}{1+\eta^2}\lambda^2|\)
and using \(|\lambda|\le1\), we obtain
\[
\begin{aligned}
t_\lambda^2
&\le1-2\left(\frac{1-\eta^2}{1+\eta^2}\right)^2|\lambda|^2
+\left(\frac{1-\eta^2}{1+\eta^2}\right)^2|\lambda|^4\\
&\le1-\left(\frac{1-\eta^2}{1+\eta^2}\right)^2|\lambda|^2<1,
\end{aligned}
\]
and therefore
\[
1-t_\lambda
\ge\frac12\left(\frac{1-\eta^2}{1+\eta^2}\right)^2|\lambda|^2.
\]
Consequently the binomial series converges to the principal inverse
square root of \(\lambda^2\). The coefficients \(b_k\) decrease with
\(k\), so for \(0\le t<1\),
\[
\sum_{k=r}^{\infty}b_kt^k
=t^r\sum_{j=0}^{\infty}b_{j+r}t^j
\le\frac{t^r}{\sqrt{1-t}}.
\]
Let \(f_\pm\) denote the corresponding half-plane indicator.
Combining this tail estimate with
\(\lambda(\lambda^2)^{-1/2}=\operatorname{sign}(\operatorname{Re}\lambda)\)
gives
\[
\begin{aligned}
|S_{\pm,r}(\lambda)-f_\pm(\lambda)|
&\le\frac{|\lambda|}{2}
\sqrt{\frac{1-\eta^2}{1+\eta^2}}
\frac{t_\lambda^r}{\sqrt{1-t_\lambda}}\\
&\le\sqrt{\frac{1+\eta^2}{2(1-\eta^2)}}
\exp\!\left(
-\frac{r(1-\eta^2)^2|\lambda|^2}{2(1+\eta^2)^2}
\right)\\
&\le\frac{1}{\sqrt{1-\eta^2}}
\exp\!\left(-\frac{r(1-\eta^2)^2\Delta_H^2}{8}\right).
\end{aligned}
\]
The last inequality uses \(1+\eta^2\le2\) and
\(|\lambda|\ge|\operatorname{Re}\lambda|\ge\Delta_H\).
It remains valid when \(t_\lambda=0\), since the tail then vanishes
for every \(r\ge1\).
This scalar estimate controls the Riesz block through diagonalization.
Using \(\|R\|\le\alpha_R\), we obtain
\[
\|(S_{\chi,r}(\widehat H)-\Pi_\chi)R\|
\le\frac{\alpha_R\kappa_V}{\sqrt{1-\eta^2}}
\exp\!\left(-\frac{r(1-\eta^2)^2\Delta_H^2}{8}\right).
\]
With \(\log_+x=\max\{0,\log x\}\), choose
\begin{equation}
r=\max\!\left\{1,\left\lceil
\frac{8}{(1-\eta^2)^2\Delta_H^2}
\log_+\!\left(
\frac{2\alpha_R\kappa_V}{\varepsilon\sqrt{1-\eta^2}}
\right)
\right\rceil\right\}.
\label{eq:lchm-sector-degree}
\end{equation}
The selector error is then at most \(\varepsilon/2\).

The degree choice controls the selector error on the spectrum. To invoke
Proposition~\ref{cor:lchm-branch-block}, we also need a normalization on
the unit circle. For \(|z|=1\), the same polynomial satisfies
\[
\begin{aligned}
|2S_{\chi,r}(z)-1/2|
&\le\frac12+\sqrt{\frac{1-\eta^2}{1+\eta^2}}
\sum_{k=0}^{r-1}b_k
\left(1+\frac{1-\eta^2}{1+\eta^2}\right)^k\\
&\le\frac12+\sum_{k=0}^{r-1}2^k<2^r.
\end{aligned}
\]
Hence an admissible normalization is
\begin{equation}
\alpha_{S_{\chi,r}}=2^r.
\label{eq:lchm-explicit-normalization}
\end{equation}
Implement the normalized polynomial block to error at most
\(\varepsilon/(2^{r+1}\alpha_R)\).
At output scale \(2^r\alpha_R\), the resulting implementation error
is at most \(\varepsilon/2\), giving total decoded error at most
\(\varepsilon\).
The Weyl construction uses \(O(d_S)\) queries to the block-encodings
of \(\widehat H\) and \(R\), including one query to the latter.
Substituting the chosen \(r\) yields
\begin{equation}
O\!\left(
\frac{1}{(1-\eta^2)^2\Delta_H^2}
\log\!\left(e+
\frac{\alpha_R\kappa_V}{\varepsilon\sqrt{1-\eta^2}}
\right)\right)
\label{eq:lchm-sector-query-expanded}
\end{equation}
queries. Suppressing the indicated logarithms gives
\eqref{eq:lchm-sector-query}.
\end{proof}

The scale in \eqref{eq:lchm-explicit-normalization} is sufficient for
this construction. Its effect on input precision is accounted for in
Appendix~\ref{app:weyl-errors}; subsequent amplification and graph
recovery must also include the chosen scale.

\subsection{Faber approximation with resolvent control}
\label{app:weyl-degree}

Faber approximation removes the sector restriction. Combined with
Lemma~\ref{lem:lchm-branch-selector}, it controls selector error for
arbitrary Jordan structure through the squared-cap geometry and
generalized singularity factor.

\begin{proposition}[Faber approximation on a squared cap]
\label{cor:lchm-general-gap-faber}
Let \(\Omega\) be the image under \(z\mapsto z^2\) of a closed circular
cap with nonempty interior contained in the open right half-plane,
and let \(F_k\) be its
Faber polynomials. Let \(\Lambda_\varrho\) be the image of
\(|\zeta|=\varrho>1\) under the conformal map from the exterior unit disk
to the exterior of \(\Omega\), normalized at infinity. Assume that its
closed interior excludes zero.
Let \(h\) be the holomorphic inverse square root on a neighborhood of
that interior, agreeing with the principal branch on \(\Omega\).
Given \(B_\Omega\ge\sup_{k\ge0}\|F_k\|_{\infty,\Omega}\), set
\begin{equation}
M_\varrho=\max_{w\in\Lambda_\varrho}|w|^{-1/2},\qquad
q_{r-1}(w)=\sum_{k=0}^{r-1}a_kF_k(w),\qquad r\ge1,
\label{eq:lchm-faber-polynomial}
\end{equation}
where \(a_k\) are the Faber coefficients of \(h\). Then
\begin{equation}
\|q_{r-1}-h\|_{\infty,\Omega}
\le\frac{B_\Omega M_\varrho}{\varrho-1}\varrho^{-(r-1)}.
\label{eq:lchm-faber-error}
\end{equation}
\end{proposition}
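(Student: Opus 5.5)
The argument is the classical Faber-series truncation bound; the only work is to keep track of where the holomorphy of \(h\) is used and where the uniform bound \(B_\Omega\) enters. Let \(\Phi\) denote the conformal map from the exterior of \(\Omega\) onto \(\{|\zeta|>1\}\), normalized at infinity, and \(\Psi=\Phi^{-1}\). Then \(\Lambda_\varrho=\Psi(\{|\zeta|=\varrho\})\). Since \(\Omega\) is the image of a cap in the open right half-plane, it is a compact, connected set with connected complement. Faber polynomials \(F_k\) are therefore defined through the generating function
\[
\frac{\Psi'(\zeta)}{\Psi(\zeta)-w}=\sum_{k\ge0}\frac{F_k(w)}{\zeta^{k+1}},
\]
valid for \(w\) in \(\Omega\) and \(|\zeta|>1\).

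The Faber coefficients are
\[
a_k=\frac1{2\pi\mathrm i}\oint_{|\zeta|=\varrho}\frac{h(\Psi(\zeta))}{\zeta^{k+1}}\,\mathrm d\zeta .
\]
By hypothesis, \(h\) is holomorphic on a neighborhood of the closed interior of \(\Lambda_\varrho\), which excludes zero. On \(\Lambda_\varrho\) we have \(|h|=|w|^{-1/2}\le M_\varrho\), because \(h\) is a branch of the inverse square root. The Cauchy estimate on \(|\zeta|=\varrho\) then gives \(|a_k|\le M_\varrho\varrho^{-k}\). One point must be checked here: the coefficients are independent of the choice of radius \(\varrho'\in(1,\varrho]\) by holomorphy. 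Evaluating on \(|\zeta|=\varrho\) itself is legitimate because \(h\circ\Psi\) extends continuously to that circle, and \(h\) is holomorphic slightly beyond it.

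Next we establish the representation \(h=\sum_k a_kF_k\) on \(\Omega\). For \(w\in\Omega\), start from Cauchy's formula on \(\Lambda_{\varrho'}\) with \(1<\varrho'<\varrho\):
\[
h(w)=\frac1{2\pi\mathrm i}\oint_{\Lambda_{\varrho'}}\frac{h(s)}{s-w}\,\mathrm ds .
\]
Substitute \(s=\Psi(\zeta)\) and expand with the generating function, which converges uniformly for \(|\zeta|=\varrho'\) when \(w\in\Omega\). Integrating term by term gives the Faber series. It converges absolutely on \(\Omega\) because \(|a_kF_k(w)|\le B_\Omega M_\varrho\varrho^{-k}\). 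Notice that the only property of \(F_k\) used is the uniform bound \(B_\Omega\), so no Kövari--Pommerenke type estimate is needed.

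The error bound follows by summing the geometric tail:
\[
\|q_{r-1}-h\|_{\infty,\Omega}\le\sum_{k\ge r}B_\Omega M_\varrho\varrho^{-k}=\frac{B_\Omega M_\varrho\,\varrho^{-r}}{1-\varrho^{-1}}=\frac{B_\Omega M_\varrho}{\varrho-1}\varrho^{-(r-1)},
\]
which is \eqref{eq:lchm-faber-error}.

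The main obstacle is justifying the Cauchy representation on \(\Omega\) itself. If \(\Omega\) has interior, Cauchy's formula applies to interior points and extends to boundary points by continuity. If \(\Omega\) were degenerate, one would approximate by interior points of the region bounded by \(\Lambda_{\varrho'}\). The compactness of \(\Omega\) together with the fact that the closed interior of \(\Lambda_\varrho\) omits zero are the conditions that make this work.
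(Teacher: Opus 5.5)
Your proof is correct and follows the paper's argument. Both use a Cauchy estimate \(|a_k|\le M_\varrho\varrho^{-k}\) in the exterior conformal coordinate, the supplied uniform Faber bound \(B_\Omega\), and the geometric tail sum; you additionally spell out the Faber-series representation that the paper cites. The ``main obstacle'' in your last paragraph is not one: every point of \(\Omega\), boundary included, lies strictly inside the Jordan curve \(\Lambda_{\varrho'}\), so Cauchy's formula applies there directly without any continuity or degeneracy argument.
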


\begin{proof}
The square map is injective near the cap, so \(\Omega\) is a Jordan
compact set with piecewise analytic boundary. The assumed holomorphic
extension gives a convergent Faber expansion on \(\Omega\), with
\(|a_k|\le M_\varrho\varrho^{-k}\) by Cauchy's estimate in the exterior
conformal coordinate \cite{BeckermannReichel2009Faber}. Thus
\[
\left\|h-\sum_{k=0}^{r-1}a_kF_k\right\|_{\infty,\Omega}
\le B_\Omega M_\varrho\sum_{k=r}^{\infty}\varrho^{-k}
=\frac{B_\Omega M_\varrho}{\varrho-1}\varrho^{-(r-1)}.
\]
\end{proof}

For \(\Omega=\Omega_{r_d,r_c}\) in \eqref{eq:lchm-squared-caps},
Lemma~\ref{lem:lchm-branch-selector} converts this error to a sufficient
degree. With \(\log_+x=\max\{0,\log x\}\), the choice
\begin{equation}
r=1+\left\lceil
\frac{\log_+\!\left(
\dfrac{\alpha_R\mathcal A_Hr_cB_\Omega M_\varrho}
{2(\varrho-1)\varepsilon_{\rm sel}}\right)}
{\log\varrho}
\right\rceil,\qquad d_S=2r-1,
\label{eq:lchm-faber-degree}
\end{equation}
satisfies \eqref{eq:lchm-selector-budget} for arbitrary Jordan structure.
The level curve and \(B_\Omega,M_\varrho\) are supplied classical
approximation data. Finite total boundary rotation ensures a finite
uniform Faber bound \cite{BeckermannReichel2009Faber}; its value for
the chosen geometry must still be supplied. The degree thus depends on
both geometric data and \(\mathcal A_H\).
Proposition~\ref{cor:lchm-branch-block} implements the selector at its
declared normalization; the next subsection controls input and circuit
errors through that scale and the polynomial sensitivity.

\subsection{Input errors and precision}
\label{app:weyl-errors}

The preceding constructions choose a selector for the exact lift and
then implement it coherently. With approximate inputs, we keep that
polynomial fixed and compare its values at the exact and decoded matrices.
This avoids requiring the perturbed spectrum to satisfy the half-plane
separation condition. Both matrices are contractions, so coefficient telescoping
gives the sensitivity needed for this comparison.

\begin{lemma}[Polynomial input sensitivity]
\label{lem:lchm-input-sensitivity}
With the coefficients in \eqref{eq:lchm-weyl-lift}, define
\[
\mathcal L_{\chi,r}=\frac12\sum_{m=1}^{d_S}m|c_m|.
\]
If \(\|\widetilde H-\widehat H\|\le\varepsilon_{\widehat H}\)
and both matrices are contractions, then
\begin{equation}
\begin{aligned}
\|S_{\chi,r}(\widetilde H)-S_{\chi,r}(\widehat H)\|
&\le\mathcal L_{\chi,r}\varepsilon_{\widehat H},\\
\mathcal L_{\chi,r}
&\le\frac{\alpha_{S_{\chi,r}}}{2}
\sqrt{\frac{d_S(d_S+1)(2d_S+1)}6}
\le\frac{\alpha_{S_{\chi,r}}d_S^{3/2}}2.
\end{aligned}
\label{eq:lchm-input-sensitivity}
\end{equation}
\end{lemma}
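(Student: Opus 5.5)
The plan is to telescope the polynomial difference one monomial at a time, and then control the resulting coefficient sum through the unit-circle bound built into \eqref{eq:lchm-selector-normalization}.

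\textbf{Telescoping.} Write \(S_{\chi,r}(z)=\tfrac14+\tfrac12\sum_{m\ge1}c_mz^m\). This follows from \eqref{eq:lchm-weyl-lift}, since \(2S-\tfrac12=\sum c_mz^m\) with \(c_0=\tfrac12\). The constant terms cancel in the difference \(S_{\chi,r}(\widetilde H)-S_{\chi,r}(\widehat H)\). For each \(m\ge1\) use
\[
\widetilde H^m-\widehat H^m=\sum_{k=0}^{m-1}\widetilde H^{k}(\widetilde H-\widehat H)\widehat H^{m-1-k}.
\]
Since both matrices are contractions, each summand has norm at most \(\varepsilon_{\widehat H}\). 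Hence \(\|\widetilde H^m-\widehat H^m\|\le m\varepsilon_{\widehat H}\). Summing with weights \(|c_m|/2\) gives the first inequality with \(\mathcal L_{\chi,r}\) exactly as defined.

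\textbf{Coefficient bound.} Bound \(\sum_m m|c_m|\) by Cauchy--Schwarz:
\[
\sum_{m=1}^{d_S}m|c_m|\le\Bigl(\sum_{m=1}^{d_S}m^2\Bigr)^{1/2}\Bigl(\sum_{m=1}^{d_S}|c_m|^2\Bigr)^{1/2}.
\]
The first factor equals \(\sqrt{d_S(d_S+1)(2d_S+1)/6}\). For the second factor, apply Parseval on the unit circle to \(p(z)=2S_{\chi,r}(z)-\tfrac12\):
\[
\sum_{m\ge0}|c_m|^2=\frac1{2\pi}\int_0^{2\pi}|p(e^{\mathrm i\theta})|^2\,\mathrm d\theta\le\max_{|z|=1}|p(z)|^2\le\alpha_{S_{\chi,r}}^2,
\]
where the last step is \eqref{eq:lchm-selector-normalization}. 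Dropping \(m=0\) only decreases the sum, so \(\bigl(\sum_{m\ge1}|c_m|^2\bigr)^{1/2}\le\alpha_{S_{\chi,r}}\). Multiplying by \(\tfrac12\) gives the middle bound in \eqref{eq:lchm-input-sensitivity}.

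\textbf{Final simplification.} For \(d_S\ge1\), note \((d_S+1)(2d_S+1)\le 6d_S^2\), which is equivalent to \(4d_S^2-3d_S-1\ge0\), i.e.\ \((4d_S+1)(d_S-1)\ge0\). Therefore \(d_S(d_S+1)(2d_S+1)/6\le d_S^3\), and the middle expression is at most \(\alpha_{S_{\chi,r}}d_S^{3/2}/2\).

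No step is a real obstacle. The only point needing care is to use Parseval on the full Weyl lift \(2S-\tfrac12\) rather than on \(S\), so that the normalization from \eqref{eq:lchm-selector-normalization} applies directly.
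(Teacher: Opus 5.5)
Your proposal is correct and follows the paper's proof essentially step for step: you telescope the monomials under the contraction hypothesis, apply Parseval to the Weyl lift \(2S_{\chi,r}-\tfrac12\) together with \eqref{eq:lchm-selector-normalization}, use Cauchy--Schwarz against \(\sum m^2\), and finish with \(\sum_{m\le d_S}m^2\le d_S^3\), which you verify explicitly where the paper only asserts it. One harmless slip: since \(c_0=\tfrac12\), the expansion should read \(S_{\chi,r}(z)=\tfrac12+\tfrac12\sum_{m\ge1}c_mz^m\) (consistent with \(S_{\chi,r}(0)=\tfrac12\)), not \(\tfrac14+\cdots\); the constant cancels in the difference, so nothing downstream changes.
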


\begin{proof}
Telescoping gives
\(\widetilde H^m-\widehat H^m
=\sum_{\ell=0}^{m-1}\widetilde H^\ell
(\widetilde H-\widehat H)\widehat H^{m-1-\ell}\),
whose norm is at most \(m\varepsilon_{\widehat H}\).
The nonconstant coefficients of \(S_{\chi,r}\) are \(c_m/2\).
Parseval's identity and \eqref{eq:lchm-selector-normalization} give
\(\sum_m|c_m|^2\le\alpha_{S_{\chi,r}}^2\).
Cauchy--Schwarz, with
\(\sum_{m=1}^{d_S}m^2=d_S(d_S+1)(2d_S+1)/6\le d_S^3\),
proves the remaining estimates.
\end{proof}

The sensitivity bound isolates the matrix-input error. To combine it
with the other errors, let \(\varepsilon_{\rm sel}\) bound the error of
the chosen selector at the exact matrix. The following result applies
regardless of which preceding approximation supplies that bound.

\begin{proposition}[Branch-block error with approximate inputs]
\label{prop:lchm-approximate-inputs}
Let \(\widehat H\) satisfy \eqref{eq:lchm-half-plane-gap}, and let the
selector in \eqref{eq:lchm-polynomial-selectors} satisfy
\(\|(S_{\chi,r}(\widehat H)-\Pi_\chi)R\|\le\varepsilon_{\rm sel}\).
Suppose \(U_{\widehat H}\) is a
\((1,a_H,\varepsilon_{\widehat H})\) block-encoding and
\(U_R\) is an \((\alpha_R,a_R,\varepsilon_R)\) block-encoding
with \(\|R\|\le\alpha_R\).
Assume the controlled access and coherent realization of
Proposition~\ref{cor:lchm-branch-block}.
Let \(\varepsilon_{\rm circ}\) bound the normalized block error of
the polynomial circuit relative to
\(S_{\chi,r}(\widetilde H)/\alpha_{S_{\chi,r}}\).
Then the composed block-encoding has normalization
\(\alpha_{S_{\chi,r}}\alpha_R\) and decoded error
\begin{equation}
\begin{aligned}
\varepsilon_{\rm branch}\le{}&
\varepsilon_{\rm sel}
+\alpha_R\mathcal L_{\chi,r}\varepsilon_{\widehat H}\\
&+\alpha_R\alpha_{S_{\chi,r}}\varepsilon_{\rm circ}
+\alpha_{S_{\chi,r}}\varepsilon_R.
\end{aligned}
\label{eq:lchm-branch-block-error}
\end{equation}
The factors \(\alpha_R\) in the second and third terms may be
replaced by \(\|R\|\).
The matrix-query count is unchanged.
\end{proposition}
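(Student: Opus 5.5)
The bound \eqref{eq:lchm-branch-block-error} follows from a four-term telescoping decomposition of the decoded output minus the target \(\Pi_\chi R\). Let \(\widetilde R\) be the decoded matrix of \(U_R\), so \(\|\widetilde R-R\|\le\varepsilon_R\) and \(\|\widetilde R\|\le\alpha_R\). Let \(W\) denote the implemented normalized polynomial block, with \(\|W-S_{\chi,r}(\widetilde H)/\alpha_{S_{\chi,r}}\|\le\varepsilon_{\rm circ}\) and \(\|W\|\le1\). As in the proof of Proposition~\ref{cor:lchm-branch-block}, composing with the encoding of \(R/\alpha_R\) on separate signal ancillas gives the decoded output \(\alpha_{S_{\chi,r}}\alpha_R\,W\,(\widetilde R/\alpha_R)=\alpha_{S_{\chi,r}}W\widetilde R\) at normalization \(\alpha_{S_{\chi,r}}\alpha_R\). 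This settles the normalization claim.

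Write
\[
\alpha_{S_{\chi,r}}W\widetilde R-\Pi_\chi R
=\alpha_{S_{\chi,r}}W(\widetilde R-R)
+\bigl(\alpha_{S_{\chi,r}}W-S_{\chi,r}(\widetilde H)\bigr)R
+\bigl(S_{\chi,r}(\widetilde H)-S_{\chi,r}(\widehat H)\bigr)R
+\bigl(S_{\chi,r}(\widehat H)-\Pi_\chi\bigr)R .
\]
Each term is bounded separately.
\begin{enumerate}
\item The first term is at most \(\alpha_{S_{\chi,r}}\varepsilon_R\), since \(\|W\|\le1\).
\item The second term is at most \(\alpha_{S_{\chi,r}}\varepsilon_{\rm circ}\|R\|\).
\item The third term is at most \(\mathcal L_{\chi,r}\varepsilon_{\widehat H}\|R\|\), by Lemma~\ref{lem:lchm-input-sensitivity}. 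The lemma applies because both \(\widetilde H\) (a block of a unitary) and \(\widehat H\) (by \eqref{eq:lchm-half-plane-gap}) are contractions.
\item The fourth term is at most \(\varepsilon_{\rm sel}\) by hypothesis.
\end{enumerate}
Using \(\|R\|\le\alpha_R\) gives the stated bound, and keeping \(\|R\|\) in the second and third terms gives the stated refinement.

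The only delicate point is the ordering in the first step. We pair \(W\) with the error \(\widetilde R-R\), so that \(\|W\|\le1\) and no bound on \(\|\widetilde R\|\) is needed there. The remaining terms then involve the exact \(R\), which is why \(\|R\|\) may replace \(\alpha_R\).

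The query count is unchanged because the circuit is identical to the exact-input one; only the oracle contents differ. The angular identity of Lemma~\ref{lem:lchm-angular-realization} holds for any contraction \(\widetilde H\), so the coherent realization is the same.
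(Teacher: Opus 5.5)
Your proposal is correct and matches the paper's proof. The paper uses exactly this four-term splitting of \(\alpha_{S_{\chi,r}}B_\chi\widetilde R-\Pi_\chi R\) (your \(W\) is its \(B_\chi\)): \(\|B_\chi\|\le1\) handles the \(\widetilde R-R\) term, and the circuit-error assumption, Lemma~\ref{lem:lchm-input-sensitivity}, and the selector hypothesis handle the remaining three. Your remarks on why both \(\widetilde H\) and \(\widehat H\) are contractions, and why the query count is unchanged, just make explicit what the paper leaves implicit.
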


\begin{proof}
Let \(B_\chi\) be the implemented normalized polynomial block and
\(\widetilde R\) the decoded block of \(U_R\).
The decoded output is
\(\alpha_{S_{\chi,r}}B_\chi\widetilde R\).
Subtracting the target gives
\[
\begin{aligned}
\alpha_{S_{\chi,r}}B_\chi\widetilde R-\Pi_\chi R
={}&\alpha_{S_{\chi,r}}B_\chi(\widetilde R-R)\\
&+\bigl(\alpha_{S_{\chi,r}}B_\chi
                  -S_{\chi,r}(\widetilde H)\bigr)R\\
&+\bigl(S_{\chi,r}(\widetilde H)
                  -S_{\chi,r}(\widehat H)\bigr)R\\
&+\bigl(S_{\chi,r}(\widehat H)-\Pi_\chi\bigr)R.
\end{aligned}
\]
Since \(B_\chi\) is a block of a unitary, \(\|B_\chi\|\le1\).
The circuit-error assumption, Lemma~\ref{lem:lchm-input-sensitivity},
and the assumed selector bound control the other three terms.
\end{proof}

For a contour-based approximation, take \(\varepsilon_{\rm sel}\)
from \eqref{eq:lchm-direct-branch-error} or
\eqref{eq:lchm-selector-budget}, with the Faber degree choice in
\eqref{eq:lchm-faber-degree} when applicable. Under the sector assumption,
use the selector estimate in the proof of
Theorem~\ref{thm:lchm-explicit-query}.
The same input-error analysis therefore applies to each choice of polynomial.

For a requested branch error \(\varepsilon>0\), one sufficient allocation is
\begin{equation}
\begin{aligned}
\varepsilon_{\rm sel}
&\le\frac{\varepsilon}{4},
&\varepsilon_{\widehat H}
&\le\frac{\varepsilon}{4\alpha_R\mathcal L_{\chi,r}},\\
\varepsilon_{\rm circ}
&\le\frac{\varepsilon}
{4\alpha_R\alpha_{S_{\chi,r}}},
&\varepsilon_R
&\le\frac{\varepsilon}{4\alpha_{S_{\chi,r}}}.
\end{aligned}
\label{eq:lchm-precision-allocation}
\end{equation}
A condition with zero multiplying factor is omitted.
Choose the degree to meet the selector budget, then evaluate the
normalization and input sensitivity before fixing the remaining precisions.
For the exact-input degree choice in
\eqref{eq:lchm-sector-degree}, replace its tolerance
\(\varepsilon\) by \(\varepsilon/2\) to obtain
selector error at most \(\varepsilon/4\).
For fixed input accuracy, increasing the degree need not reduce the
total error, because these two factors can grow.

Apply the circuit-error convention of
Appendix~\ref{app:contour-input-errors}, keeping the fixed decoded input
error \(\varepsilon_{\widehat H}\) separate from errors in executing
its unitary. The realization in Appendix~\ref{app:weyl-realization}
uses \(n_{\rm sig}=O(d_S)\) signal calls and
\(n_{\rm rot}=O(d_S\log(d_S+1))\) angle rotations.
For per-operation errors \(\delta_{\rm sig},\delta_{\rm rot}\),
the sum \(n_{\rm sig}\delta_{\rm sig}+n_{\rm rot}\delta_{\rm rot}\),
together with coefficient and phase-data errors, must fit within
\(\varepsilon_{\rm circ}\) in \eqref{eq:lchm-precision-allocation}.
Classical coefficient and phase computation, oracle and reflection costs,
and subsequent graph recovery remain separate from these resource counts.

\clearpage
\section{Differential Riccati constructions and direct recovery}
\label{app:dre}

We prove the graph identities used in Section~\ref{sec:dre}, construct
the four weighted blocks by finite-contour quadrature, and propagate
their errors through the recovery algorithm.
The final resource analysis separates the general implementation
scales from the normalization assumptions used in
Theorem~\ref{thm:dre-construction-main}.

\subsection{Hamiltonian flow and the decaying graph projector}
\label{app:dre-graph}

The graph-flow identity \cite[Chap.~4]{BittantiLaubWillems1991RiccatiEquation} below requires only existence of the solution;
it does not require spectral separation.

\begin{lemma}[Hamiltonian graph flow]
\label{lem:dre-graph-flow-app}
For the data in Definition~\ref{def:problem-dre}, write
\(\mathcal W(t)=e^{t\mathcal H_{\rm DRE}}R_0
=[\mathcal W_1(t);\mathcal W_2(t)]\).
On the promised existence interval,
\begin{equation}
\mathcal W(t)=\begin{bmatrix}\mathsf I\\P(t)\end{bmatrix}Y(t),
\qquad
\dot Y(t)=(A-GP(t))Y(t),\quad Y(0)=\mathsf I,
\label{eq:dre-flow-factorization-app}
\end{equation}
and \(Y(t)\) is invertible.
Consequently \(\mathcal W_1(t)\) is invertible and
\(P(t)=\mathcal W_2(t)\mathcal W_1(t)^{-1}\).
Conversely, on any interval containing zero where \(\mathcal W_1(t)\)
is invertible, this formula solves the DRE with initial value \(P_0\).
\end{lemma}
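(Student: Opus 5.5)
I would prove Lemma~\ref{lem:dre-graph-flow-app} by a direct ODE argument that does not use any spectral information. The plan is to define the candidate factor \(Y\) from the known solution, show that \([\mathsf I;P]Y\) satisfies the same linear ODE as \(\mathcal W\), and then invoke uniqueness.

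Let \(P\) be the solution on the existence interval. Define \(Y(t)\) as the solution of the linear ODE \(\dot Y=(A-GP(t))Y\) with \(Y(0)=\mathsf I\). Since \(P\) is continuous, this system has continuous coefficients, so \(Y\) exists and is unique on the whole interval. Liouville's formula, \(\det Y(t)=\exp\int_0^t\operatorname{tr}(A-GP(s))\,\mathrm ds\neq0\), shows that \(Y(t)\) is invertible.

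Next set \(Z(t)=[Y(t);P(t)Y(t)]\), so that \(Z(0)=R_0\). The upper block satisfies
\[
\dot Y=AY-GPY,
\]
which is exactly the upper block of \(\mathcal H_{\rm DRE}Z\). For the lower block, the product rule and the DRE give
\[
\tfrac{\mathrm d}{\mathrm dt}(PY)=\dot PY+P\dot Y=(-Q-A^*P-PA+PGP)Y+P(A-GP)Y=(-Q-A^*P)Y.
\]
This is the lower block of \(\mathcal H_{\rm DRE}Z\). Hence \(\dot Z=\mathcal H_{\rm DRE}Z\) with \(Z(0)=R_0\), and uniqueness for linear constant-coefficient ODEs gives \(Z=e^{t\mathcal H_{\rm DRE}}R_0=\mathcal W\). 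Reading off the blocks, \(\mathcal W_1=Y\) is invertible and \(\mathcal W_2=PY\), so \(P=\mathcal W_2\mathcal W_1^{-1}\).

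For the converse, suppose \(\mathcal W_1(t)\) is invertible on an interval containing zero, and set \(P=\mathcal W_2\mathcal W_1^{-1}\). Using \(\tfrac{\mathrm d}{\mathrm dt}\mathcal W_1^{-1}=-\mathcal W_1^{-1}\dot{\mathcal W}_1\mathcal W_1^{-1}\), we get
\[
\dot P=\dot{\mathcal W}_2\mathcal W_1^{-1}-P\dot{\mathcal W}_1\mathcal W_1^{-1}.
\]
Substituting \(\dot{\mathcal W}_1=A\mathcal W_1-G\mathcal W_2\) and \(\dot{\mathcal W}_2=-Q\mathcal W_1-A^*\mathcal W_2\) yields \(-Q-A^*P-PA+PGP\), which is the DRE. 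Finally, \(\mathcal W(0)=R_0\) gives \(P(0)=P_0\).

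Every step is elementary. The only point that needs care is to construct \(Y\) from the solution \(P\) first, and not assume in advance that \(\mathcal W_1\) is invertible; this is what makes invertibility hold on the entire promised existence interval.
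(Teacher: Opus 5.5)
Your proposal is correct and follows the paper's proof essentially step for step. Both arguments define \(Y\) as the fundamental matrix of \(A-GP(t)\), use the DRE to show \([\mathsf I;P]Y\) solves the Hamiltonian system with initial value \(R_0\), conclude by uniqueness, and prove the converse by differentiating \(\mathcal W_2\mathcal W_1^{-1}\). Your appeal to Liouville's formula simply makes explicit the invertibility of the fundamental matrix that the paper asserts directly.
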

\begin{proof}
The coefficient \(A-GP(t)\) is continuous on the existence interval,
so its fundamental matrix \(Y(t)\) is invertible.
Using the DRE to differentiate the product gives
\[
\frac{d}{dt}(P(t)Y(t))
=(-Q-A^*P(t))Y(t).
\]
Thus \([\mathsf I;P(t)]Y(t)\) solves the same linear system and has
the same initial value as \(e^{t\mathcal H_{\rm DRE}}R_0\).
Uniqueness proves \eqref{eq:dre-flow-factorization-app}.
For the converse, the two blocks of the linear flow satisfy
\[
\dot{\mathcal W}_1=A\mathcal W_1-G\mathcal W_2,\qquad
\dot{\mathcal W}_2=-Q\mathcal W_1-A^*\mathcal W_2.
\]
Differentiating \(\mathcal W_2\mathcal W_1^{-1}\) gives
\(-Q-A^*P-PA+PGP\), and its initial value is \(P_0\).
\end{proof}

Existence supplies this qualitative invertibility, but does not imply
spectral separation or full rank of \(\Pi_+R_0\).
The next proof uses these two additional conditions separately.

\newtheorem*{dreseededrestatement}{Theorem~\ref{thm:dre-seeded-projector-main}}
\begin{dreseededrestatement}
For Problem~\ref{prob:quantum-dre}, suppose that \(\Pi_+R_0\)
has full column rank. At the requested time \(t\), set
\begin{equation*}
\mathcal E(t)=\Pi_+
+e^{t\mathcal H_{\rm DRE}}\Pi_-R_0
(\Pi_+R_0)^+e^{-t\mathcal H_{\rm DRE}}\Pi_+.
\end{equation*}
Then
\begin{equation*}
\begin{aligned}
\mathcal E(t)^2&=\mathcal E(t),\\
\operatorname{ran}\mathcal E(t)
&=\operatorname{ran}\begin{bmatrix}\mathsf I\\P(t)\end{bmatrix},
&\ker\mathcal E(t)&=\operatorname{ran}\Pi_-.
\end{aligned}
\end{equation*}
\end{dreseededrestatement}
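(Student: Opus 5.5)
The plan is to verify the three claims by direct algebra with the Riesz projectors, using only that \(\Pi_\pm\) commute with \(e^{\pm t\mathcal H_{\rm DRE}}\) and that \(\Pi_+\Pi_-=\Pi_-\Pi_+=0\), \(\Pi_++\Pi_-=\mathsf I\). These facts follow from Proposition~\ref{prop:riesz-calculus-app}, since \(e^{\pm t\mathcal H_{\rm DRE}}\) is an entire function of \(\mathcal H_{\rm DRE}\). Write \(C=\Pi_+R_0\) and
\[
K=e^{t\mathcal H_{\rm DRE}}\Pi_-R_0\,C^+e^{-t\mathcal H_{\rm DRE}}\Pi_+,
\qquad\text{so that}\qquad \mathcal E(t)=\Pi_++K.
\]
Because \(C\) has full column rank, \(C^+C=\mathsf I_n\). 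This is the only place where the rank hypothesis is used in the algebra.

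\emph{Idempotence.} Commuting \(\Pi_+\) past the exponential gives \(\Pi_+K=e^{t\mathcal H_{\rm DRE}}\Pi_+\Pi_-(\cdots)=0\). Since \(\Pi_+^2=\Pi_+\), we also have \(K\Pi_+=K\). Hence \(K^2=K\Pi_+K=0\), and expanding the square gives
\[
\mathcal E(t)^2=\Pi_++\Pi_+K+K\Pi_++K^2=\Pi_++K=\mathcal E(t).
\]

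\emph{Kernel.} First, \(\mathcal E(t)\Pi_-=\Pi_+\Pi_-+K\Pi_+\Pi_-=0\), so \(\operatorname{ran}\Pi_-\subset\ker\mathcal E(t)\). Conversely, \(\Pi_+\mathcal E(t)=\Pi_+\). So if \(\mathcal E(t)x=0\), then \(\Pi_+x=0\), which means \(x=\Pi_-x\in\operatorname{ran}\Pi_-\).

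\emph{Range.} Set \(\mathcal W(t)=e^{t\mathcal H_{\rm DRE}}R_0\). Then
\[
\mathcal E(t)\mathcal W(t)
=e^{t\mathcal H_{\rm DRE}}\Pi_+R_0
+e^{t\mathcal H_{\rm DRE}}\Pi_-R_0\,C^+C
=e^{t\mathcal H_{\rm DRE}}(\Pi_++\Pi_-)R_0=\mathcal W(t).
\]
Therefore \(\operatorname{ran}\mathcal W(t)\subset\operatorname{ran}\mathcal E(t)\). By Lemma~\ref{lem:dre-graph-flow-app}, \(\mathcal W(t)=[\mathsf I;P(t)]Y(t)\) with \(Y(t)\) invertible, so \(\operatorname{ran}\mathcal W(t)=\operatorname{ran}[\mathsf I;P(t)]\) has dimension \(n\).

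It remains to show \(\operatorname{rank}\mathcal E(t)\le n\); this dimension count is the step I expect to need the most care. Since \(\mathcal E(t)\) is a projector with kernel \(\operatorname{ran}\Pi_-\), its rank is \(2n-\operatorname{rank}\Pi_-=\operatorname{rank}\Pi_+\). Full column rank of \(C=\Pi_+R_0\) already gives \(\operatorname{rank}\Pi_+\ge n\), so we need the reverse inequality.

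For this, use the Hamiltonian symmetry. With \(J=[0,\mathsf I;-\mathsf I,0]\), one has \(\mathcal H_{\rm DRE}^*J+J\mathcal H_{\rm DRE}=0\), because \(G,Q\) are Hermitian. Eigenvalues therefore pair as \(\lambda\leftrightarrow-\overline\lambda\) with equal algebraic multiplicities, as in the proof of Proposition~\ref{prop:factor-care-coefficients-app}. Since there is no imaginary-axis spectrum, each half-plane branch has dimension exactly \(n\), so \(\operatorname{rank}\Pi_+=n\).

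The two subspaces \(\operatorname{ran}[\mathsf I;P(t)]\subset\operatorname{ran}\mathcal E(t)\) then both have dimension \(n\) and coincide, which completes the three claims of the theorem.
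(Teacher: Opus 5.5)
Your proof is correct. It reaches the conclusion by a somewhat different route from the paper.

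\textbf{The paper's route.} It introduces \(L(t)=(\Pi_+R_0)^+e^{-t\mathcal H_{\rm DRE}}\Pi_+\) and proves the factorization \(\mathcal W(t)L(t)=\mathcal E(t)\), \(L(t)\mathcal W(t)=\mathsf I\). Idempotence and \(\operatorname{ran}\mathcal E(t)=\operatorname{ran}\mathcal W(t)\) then follow at once, without any dimension count. The fact \(\operatorname{rank}\Pi_+=n\), obtained from the same \(J\)-symmetry you use, enters inside that factorization. It is what lets the paper say that \(\Pi_+R_0\) spans \(\operatorname{ran}\Pi_+\). Hence the orthogonal projector \((\Pi_+R_0)(\Pi_+R_0)^+\) fixes the columns of \(e^{-t\mathcal H_{\rm DRE}}\Pi_+\).

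\textbf{Your route.} You write \(\mathcal E(t)=\Pi_++K\) with \(K=\Pi_-K\Pi_+\). This makes idempotence and the kernel identity purely structural. Both hold without the full-column-rank hypothesis and without knowing \(\operatorname{rank}\Pi_+\). All use of those two facts is isolated in the range step: \(C^+C=\mathsf I\) gives \(\mathcal E\mathcal W=\mathcal W\), and \(\operatorname{rank}\Pi_+=n\) gives the dimension match.

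\textbf{Common ground and trade-off.} Your kernel argument, via \(\mathcal E\Pi_-=0\) and \(\Pi_+\mathcal E=\Pi_+\), is the same as the paper's. What the factorization buys in exchange is an explicit left inverse of \(\mathcal W(t)\). This exhibits \(\mathcal E(t)\) directly as the projector onto \(\operatorname{ran}\mathcal W(t)\) along \(\ker L(t)\). The paper reuses the same factored form in the recursion analogue, Theorem~\ref{thm:rr-seeded-projector-main}, where it bounds the rank of \(\mathcal E_k\) by \(n\).
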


\begin{proof}
For \(J=\begin{bmatrix}0&\mathsf I\\-\mathsf I&0\end{bmatrix}\),
Hermitian \(G,Q\) give
\(\mathcal H_{\rm DRE}^*J+J\mathcal H_{\rm DRE}=0\).
Hence \(\mathcal H_{\rm DRE}\) is similar to
\(-\mathcal H_{\rm DRE}^*\), and its spectrum is paired by
\(\lambda\mapsto-\overline\lambda\), with algebraic multiplicities.
In the absence of imaginary-axis eigenvalues both Riesz projectors
therefore have rank \(n\).
They commute with the flow, sum to the identity, and annihilate
one another.

Set
\[
L(t)=(\Pi_+R_0)^+e^{-t\mathcal H_{\rm DRE}}\Pi_+.
\]
The full-column-rank matrix \(\Pi_+R_0\) spans
\(\operatorname{ran}\Pi_+\). Its orthogonal projector
\((\Pi_+R_0)(\Pi_+R_0)^+\) therefore fixes the columns of
\(e^{-t\mathcal H_{\rm DRE}}\Pi_+\). Consequently
\begin{equation}
\begin{aligned}
\mathcal W(t)L(t)&=\mathcal E(t),\\
L(t)\mathcal W(t)&=(\Pi_+R_0)^+(\Pi_+R_0)=\mathsf I.
\end{aligned}
\label{eq:dre-seeded-factorization-app}
\end{equation}
For the first identity, split \(R_0\) into its two spectral branches:
the right-branch term is
\[
e^{t\mathcal H_{\rm DRE}}(\Pi_+R_0)(\Pi_+R_0)^+
e^{-t\mathcal H_{\rm DRE}}\Pi_+=\Pi_+.
\]
The remaining term is the second term of
\eqref{eq:dre-seeded-projector}.
The second identity follows by commuting \(\Pi_+\) through the flow.
Equation~\eqref{eq:dre-seeded-factorization-app} proves idempotence
and equality of the ranges of \(\mathcal E(t)\) and \(\mathcal W(t)\).
Furthermore \(\mathcal E(t)\Pi_-=0\) and
\(\Pi_+\mathcal E(t)=\Pi_+\), which imply
\(\ker\mathcal E(t)=\ker\Pi_+=\operatorname{ran}\Pi_-\).
Lemma~\ref{lem:dre-graph-flow-app} identifies the range with the
solution graph on the promised interval.
\end{proof}

The factorization proof is valid for arbitrary Jordan structure.
Its projector identities hold for every real \(t\); identification
with a finite Riccati matrix uses the existence interval.
In particular, \((\Pi_+R_0)(\Pi_+R_0)^+\) need not equal the spectral projector \(\Pi_+\),
and the final \(\Pi_+\) in \eqref{eq:dre-seeded-projector} is kept.

\begin{example}[Existence without full-rank initialization]
\label{ex:dre-initialization-scope-app}
Take \(A=0\), \(G=Q=1\), and \(P_0=1\).
Then \(P(t)=1\) solves \(\dot P=-1+P^2\) for every \(t\), and
\(\mathcal H_{\rm DRE}=\begin{bmatrix}0&-1\\-1&0\end{bmatrix}\)
has eigenvalues \(1,-1\).
However, \(R_0=[1;1]\) belongs to the negative eigenspace, so
\(\Pi_+R_0=0\).
Thus existence and spectral separation do not imply the
initialization condition of Theorem~\ref{thm:dre-construction-main}.
\end{example}

Lemma~\ref{lem:factor-graph-distance-app} identifies full rank of
\(\Pi_+R_0\) with
\(\operatorname{ran}R_0\cap\operatorname{ran}\Pi_-=\{0\}\).
Under the coefficient assumptions of
Proposition~\ref{prop:factor-dre-initialization-app},
\eqref{eq:factor-dre-initialization-app} gives an explicit positive
lower bound on \(\sigma_{\min}(\Pi_+R_0)\).
That proposition supplies an initialization bound, not a time-existence
result. Its conversion to a condition-number bound includes the factor
\(\|\Pi_+R_0\|\); it is not automatically applicable to the
sign-reversed terminal-LQR data in Appendix~\ref{app:lqr-conventions}.

\subsection{Recovery from a graph projector}
\label{app:dre-recovery}

The recovery identity depends only on the range of the projector.
We state it for a general graph so that the discrete construction can
use the same argument.

\begin{lemma}[Upper-block inverse bound]
\label{lem:dre-projector-block-recovery-app}
Let \(\mathcal E\) be an idempotent matrix with range
\(\operatorname{ran}[\mathsf I;P]\), and set
\(E_1=[\mathsf I;0]\), \(E_2=[0;\mathsf I]\). Then
\begin{equation}
\begin{aligned}
(E_1^*\mathcal E)\begin{bmatrix}\mathsf I\\P\end{bmatrix}
&=\mathsf I,& E_2^*\mathcal E&=P E_1^*\mathcal E,\\
P&=(E_2^*\mathcal E)(E_1^*\mathcal E)^+,&
\|(E_1^*\mathcal E)^+\|&\le\sqrt{1+\|P\|^2}.
\end{aligned}
\label{eq:dre-projector-inverse-bound-app}
\end{equation}
\end{lemma}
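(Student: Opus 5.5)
The argument uses only idempotence of \(\mathcal E\) and the graph form of its range. It never needs orthogonality. Write \(\Gamma=[\mathsf I;P]\). Since \(\operatorname{ran}\mathcal E=\operatorname{ran}\Gamma\) and \(\mathcal E\) is idempotent, \(\mathcal E\) fixes every vector in its range, so \(\mathcal E\Gamma=\Gamma\). Taking the upper block row gives \((E_1^*\mathcal E)\Gamma=E_1^*\Gamma=\mathsf I\), which is the first identity. For the second, every column of \(\mathcal E\) lies in \(\operatorname{ran}\Gamma\), so we can write \(\mathcal E=\Gamma C\) for some \(C\). The upper block gives \(E_1^*\mathcal E=C\) and the lower block gives \(E_2^*\mathcal E=PC\), hence \(E_2^*\mathcal E=PE_1^*\mathcal E\).

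For the recovery formula, the identity \((E_1^*\mathcal E)\Gamma=\mathsf I\) shows that \(E_1^*\mathcal E\) has a right inverse. It therefore has full row rank, and its Moore--Penrose pseudoinverse is a right inverse: \((E_1^*\mathcal E)(E_1^*\mathcal E)^+=\mathsf I\). Multiplying the second identity on the right by \((E_1^*\mathcal E)^+\) then gives \(P=(E_2^*\mathcal E)(E_1^*\mathcal E)^+\).

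The pseudoinverse bound is the only step that is not pure algebra. For a full-row-rank matrix \(C\), the pseudoinverse \(C^+=C^*(CC^*)^{-1}\) is the minimum-norm right inverse columnwise. That is, for each \(y\), the vector \(C^+y\) has the smallest norm among all solutions \(x\) of \(Cx=y\). The vector \(\Gamma y\) is one such solution, because \(C\Gamma=\mathsf I\). Hence
\[
\|C^+y\|\le\|\Gamma y\|\le\|\Gamma\|\|y\|,
\]
so \(\|C^+\|\le\|\Gamma\|\). Finally, \(\|\Gamma\|^2=\|\Gamma^*\Gamma\|=\|\mathsf I+P^*P\|=1+\|P\|^2\), which gives \(\|(E_1^*\mathcal E)^+\|\le\sqrt{1+\|P\|^2}\). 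The potential obstacle here is that \(\Gamma\) is only some right inverse, not the pseudoinverse; the minimum-norm characterization of \(C^+y\) is exactly what resolves this.
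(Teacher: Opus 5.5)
Your proof is correct and follows the paper's argument. The paper also uses \(\mathcal E\Gamma=\Gamma\) and \(\mathcal E=\Gamma E_1^*\mathcal E\) for the identities. For the bound it writes \((E_1^*\mathcal E)^+=(E_1^*\mathcal E)^+(E_1^*\mathcal E)\Gamma\) and notes that \((E_1^*\mathcal E)^+(E_1^*\mathcal E)\) is an orthogonal projector, which is your minimum-norm-solution characterization stated in operator form.
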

\begin{proof}
Write \(V=[\mathsf I;P]\). The range identity gives
\(\mathcal E=V E_1^*\mathcal E\), while idempotence gives
\(\mathcal EV=V\). Taking upper blocks yields
\((E_1^*\mathcal E)V=\mathsf I\). The upper block therefore has full
row rank, and multiplying the lower-block identity by its pseudoinverse
gives the formula for \(P\). Finally,
\[
(E_1^*\mathcal E)^+
=(E_1^*\mathcal E)^+(E_1^*\mathcal E)V.
\]
The first two factors form an orthogonal projector, so the norm of the
product is at most \(\|V\|=\sqrt{1+\|P\|^2}\).
\end{proof}

\newtheorem*{drerecoveryrestatement}{Lemma~\ref{lem:dre-projector-block-recovery}}
\begin{drerecoveryrestatement}
Under the assumptions of Theorem~\ref{thm:dre-seeded-projector-main},
\begin{equation*}
\begin{aligned}
(E_1^*\mathcal E(t))\begin{bmatrix}\mathsf I\\P(t)\end{bmatrix}
&=\mathsf I,\\
E_2^*\mathcal E(t)&=P(t)E_1^*\mathcal E(t),\\
P(t)&=(E_2^*\mathcal E(t))(E_1^*\mathcal E(t))^+.
\end{aligned}
\end{equation*}
\end{drerecoveryrestatement}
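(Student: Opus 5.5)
This is a direct specialization of the general graph-projector recovery, Lemma~\ref{lem:dre-projector-block-recovery-app}, to the projector \(\mathcal E=\mathcal E(t)\) with \(P=P(t)\). That lemma needs exactly two hypotheses: \(\mathcal E\) is idempotent, and its range is \(\operatorname{ran}[\mathsf I;P]\). Both are supplied by Theorem~\ref{thm:dre-seeded-projector-main} under the present assumptions. That theorem, together with Lemma~\ref{lem:dre-graph-flow-app} on the existence interval, gives \(\mathcal E(t)^2=\mathcal E(t)\) and \(\operatorname{ran}\mathcal E(t)=\operatorname{ran}[\mathsf I;P(t)]\). Substituting into \eqref{eq:dre-projector-inverse-bound-app} then yields all three displayed identities at once.

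For completeness, I would also sketch the short direct argument behind each identity, since it is only a few lines. Set \(V=[\mathsf I;P(t)]\).

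\emph{First identity.} Because every vector in \(\operatorname{ran}V\) lies in \(\operatorname{ran}\mathcal E(t)\), idempotence gives \(\mathcal E(t)V=V\). Taking the upper block rows gives \((E_1^*\mathcal E(t))V=\mathsf I\).

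\emph{Second identity.} Every column of \(\mathcal E(t)\) lies in the graph, so it has the form \([w;P(t)w]\) with \(w\) equal to its upper coordinate. Hence \(\mathcal E(t)=V E_1^*\mathcal E(t)\), and taking the lower block rows gives \(E_2^*\mathcal E(t)=P(t)E_1^*\mathcal E(t)\).

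\emph{Third identity.} The first identity shows that \(E_1^*\mathcal E(t)\) has full row rank, so \((E_1^*\mathcal E(t))(E_1^*\mathcal E(t))^+=\mathsf I\). Right-multiplying the second identity by \((E_1^*\mathcal E(t))^+\) gives the recovery formula.

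There is no real obstacle here. The only point needing care is that the range identification in Theorem~\ref{thm:dre-seeded-projector-main} uses the promised existence of \(P\) on \([0,t]\), through Lemma~\ref{lem:dre-graph-flow-app}, as well as the full-column-rank hypothesis on \(\Pi_+R_0\). I would state explicitly that both are part of the assumptions being invoked.
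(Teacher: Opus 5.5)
Your proposal is correct and is essentially the paper's proof: it applies Lemma~\ref{lem:dre-projector-block-recovery-app} with \(\mathcal E=\mathcal E(t)\) and \(P=P(t)\), taking idempotence and the graph range from Theorem~\ref{thm:dre-seeded-projector-main}. Your direct sketch of the three identities also reproduces that general lemma's proof: \(\mathcal E V=V\), then \(\mathcal E=VE_1^*\mathcal E\), then right-multiplication by the pseudoinverse of the full-row-rank upper block.
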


\begin{proof}
Theorem~\ref{thm:dre-seeded-projector-main} supplies an idempotent
matrix with the stated graph range. Apply
Lemma~\ref{lem:dre-projector-block-recovery-app} with \(P=P(t)\).
\end{proof}

\subsection{Weighted finite-contour quadrature and normalization}
\label{app:dre-quadrature}

The four operators in Table~\ref{tab:dre-weighted-blocks} are ordinary
weighted Riesz blocks for \(\widehat{\mathcal H}_{\rm DRE}\):
\begin{equation}
\begin{aligned}
\mathfrak R_+[1;\widehat{\mathcal H}_{\rm DRE},\mathsf I]
&=\Pi_+,\\
\mathfrak R_+[1;\widehat{\mathcal H}_{\rm DRE},\mathsf I]R_0
&=\Pi_+R_0,\\
\mathfrak R_-[e^{\alpha_Htz};\widehat{\mathcal H}_{\rm DRE},\mathsf I]R_0
&=e^{t\mathcal H_{\rm DRE}}\Pi_-R_0,\\
\mathfrak R_+[e^{-\alpha_Htz};\widehat{\mathcal H}_{\rm DRE},\mathsf I]
&=e^{-t\mathcal H_{\rm DRE}}\Pi_+.
\end{aligned}
\label{eq:dre-weighted-riesz-blocks-app}
\end{equation}
These identities follow from Proposition~\ref{prop:riesz-calculus-app}
and hold without a diagonalization. For each corresponding choice of
\((\sigma,g,R)\), write
\[
\mathcal T_{\sigma,g,R}
=\frac1{2\pi\mathrm i}\int_{\Gamma_\sigma}
g(z)(z\mathsf I-\widehat{\mathcal H}_{\rm DRE})^{-1}R\,\mathrm dz.
\]
Here \(R=\mathsf I\) has normalization \(\alpha_R=1\), and
\(R=R_0\) has supplied normalization \(\alpha_R=\alpha_{R_0}\).
The supplied Hamiltonian encoding also encodes
\(\widehat{\mathcal H}_{\rm DRE}\) at normalization one.

For geometric weights \(\nu_j\), use the convention of
\eqref{eq:weighted-contour-coefficients}:
\begin{equation}
\begin{aligned}
\omega_j&=\frac{\nu_jg(z_j)}{2\pi\mathrm i},\qquad
\alpha_{g,R}=\alpha_R\sum_j|\omega_j|\beta_j,\\
S_{\sigma,g,R}&=\sum_j\omega_j
 (z_j\mathsf I-\widehat{\mathcal H}_{\rm DRE})^{-1}R,\\
\beta_j&=\Theta\!\left(
 \|(z_j\mathsf I-\widehat{\mathcal H}_{\rm DRE})^{-1}\|\right).
\end{aligned}
\label{eq:dre-weighted-contour-sum-app}
\end{equation}
Thus \(\omega_j\) already includes the scalar weight.
Under the exact-input and coherent-access assumptions of
Proposition~\ref{thm:local-contour-sum}, including uniform supplied
constant-factor node estimates, this sum has a block-encoding with
normalization \(\alpha_{g,R}\) and decoded implementation error
\(0<\varepsilon_{\rm impl}<\alpha_{g,R}\), using
\begin{equation}
O\!\left(\mathcal R_{\rm DRE}
\log\!\left(e+\frac{\mathcal R_{\rm DRE}\alpha_{g,R}}
{\varepsilon_{\rm impl}}\right)\right)
\label{eq:dre-weighted-block-query-app}
\end{equation}
Hamiltonian queries, together with one call to the encoding of \(R\).
For \(R=\mathsf I\), that last call is trivial; the two blocks with
\(R=R_0\) each use its encoding once. If the quadrature error is
\(\varepsilon_{\rm quad}\), the complete decoded target error is at
most \(\varepsilon_{\rm quad}+\varepsilon_{\rm impl}\).

The following rule gives an absolute error bound uniform in forward
time by using the decaying weight on each branch.

\begin{proposition}[Paired-rectangle weighted Gauss--Legendre construction]
\label{prop:dre-weighted-quadrature-app}
Let the exact input encodings and coherent access be as above. Supply
\begin{equation}
0<\underline\Delta_{\rm ax}\le
\inf_{\omega\in\mathbb R}
\sigma_{\min}(\mathrm i\omega\mathsf I-\mathcal H_{\rm DRE}),
\qquad \delta=\frac{\underline\Delta_{\rm ax}}{\alpha_H}\in(0,1].
\label{eq:dre-quadrature-gap-app}
\end{equation}
For \(\widehat{\mathcal H}_{\rm DRE}\), choose the positively
oriented rectangle boundaries
\begin{equation}
\begin{aligned}
\Gamma_+&=\partial\{x+\mathrm iy:\delta/2\le x\le2,
\ |y|\le2\},\\
\Gamma_-&=\partial\{x+\mathrm iy:-2\le x\le-\delta/2,
\ |y|\le2\},\qquad
\ell(\Gamma_\sigma)=12-\delta.
\end{aligned}
\label{eq:dre-quadrature-rectangles-app}
\end{equation}
Divide each side of length \(L\) into \(\lceil4L/\delta\rceil\)
equal panels. Write an oriented panel as
\(z_\ell(s)=c_\ell+h_\ell s\), \(-1\le s\le1\), and use the
\(p\)-point Gauss--Legendre nodes \(s_j,w_j\) on \([-1,1]\).
Its nodes and weights are
\begin{equation}
z_{\ell j}=c_\ell+h_\ell s_j,\qquad
\nu_{\ell j}=h_\ell w_j,\qquad
\omega_{\ell j}=\frac{h_\ell w_jg(z_{\ell j})}{2\pi\mathrm i}.
\label{eq:dre-quadrature-nodes-app}
\end{equation}
For every target in \eqref{eq:dre-weighted-riesz-blocks-app}, its
finite sum in \eqref{eq:dre-weighted-contour-sum-app} satisfies
\begin{equation}
\|S_{\sigma,g,R}-\mathcal T_{\sigma,g,R}\|
\le\frac{8\ell(\Gamma_\sigma)\alpha_R}{\pi\delta}\,4^{-p},
\qquad t\ge0.
\label{eq:dre-weighted-quadrature-error-app}
\end{equation}
For a target tolerance \(\varepsilon_T>0\), choose
\begin{equation}
p=\left\lceil
\frac{\log\!\left(e+16\ell(\Gamma_\sigma)\alpha_R/
(\pi\delta\varepsilon_T)\right)}{\log4}
\right\rceil.
\label{eq:dre-weighted-quadrature-order-app}
\end{equation}
Then the quadrature error is at most \(\varepsilon_T/2\).
The exact number of panels on either rectangle and the node count
for one target are
\begin{equation}
\begin{aligned}
N_{\rm pan}
&=2\left\lceil\frac{16}{\delta}\right\rceil
 +2\left\lceil\frac{4(2-\delta/2)}{\delta}\right\rceil
 \le\frac{4\ell(\Gamma_\sigma)}{\delta}+4,\\
m_T&=N_{\rm pan}p
=O\!\left(\delta^{-1}
\log\!\left(e+\frac{\alpha_R}{\delta\varepsilon_T}\right)\right).
\end{aligned}
\label{eq:dre-weighted-quadrature-count-app}
\end{equation}
For these contours and rules,
\begin{equation}
\begin{aligned}
\mathcal R_{\rm DRE}
&\le\frac{2(1+2\sqrt2)}{\delta},\\
\alpha_{g,R}
&\le\frac{6c_\beta}{\pi}\alpha_R\mathcal R_{\rm DRE},&
\|\mathcal T_{\sigma,g,R}\|
&\le\frac6\pi\alpha_R\mathcal R_{\rm DRE},
\end{aligned}
\label{eq:dre-weighted-quadrature-scales-app}
\end{equation}
where \(c_\beta\) is a uniform constant satisfying
\(\beta_{\ell j}\le c_\beta
\|(z_{\ell j}\mathsf I-\widehat{\mathcal H}_{\rm DRE})^{-1}\|\).
In particular, the four declared normalizations can be chosen with
\begin{equation}
\begin{aligned}
\alpha_{\Pi_+},\quad
\alpha_{e^{-t\mathcal H_{\rm DRE}}\Pi_+}
&=O(\mathcal R_{\rm DRE}),\\
\alpha_{\Pi_+R_0},\quad
\alpha_{e^{t\mathcal H_{\rm DRE}}\Pi_-R_0}
&=O(\alpha_{R_0}\mathcal R_{\rm DRE}).
\end{aligned}
\label{eq:dre-four-block-normalizations-app}
\end{equation}
These bounds and the absolute quadrature error are uniform in
\(t\ge0\) and the number of nodes.
\end{proposition}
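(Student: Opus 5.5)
The plan is to treat the four targets uniformly. Each target is \(\mathcal T_{\sigma,g,R}\) with \(g\in\{1,e^{\alpha_Htz},e^{-\alpha_Htz}\}\) on the matching rectangle, so I will verify four things in order: contour geometry, resolvent and weight bounds on a neighbourhood of the contour, the panelwise Gauss--Legendre error, and finally the normalization and node-count bookkeeping.

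\emph{Geometry and resolvent bounds.} Normalizing by \(\alpha_H\) turns the supplied gap into \(\inf_\omega\sigma_{\min}(\mathrm i\omega\mathsf I-\widehat{\mathcal H}_{\rm DRE})\ge\delta\). The argument in the proof of Theorem~\ref{thm:factor-axis-app} then gives, for every eigenvalue, \(|\operatorname{Re}\lambda|\ge\delta\) and \(|\lambda|\le1\). So \(\Gamma_\pm\) in \eqref{eq:dre-quadrature-rectangles-app} enclose exactly the two branches, and their perimeter is \(2(2-\delta/2)+8=12-\delta\).

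For the resolvent, let \(d\) be the distance of \(z\) from the imaginary axis. Weyl's singular-value perturbation gives \(\sigma_{\min}(z\mathsf I-\widehat{\mathcal H}_{\rm DRE})\ge\delta-d\) near the axis, and \(|z|-1\) far away. On the inner edges this yields \(\ge\delta/2\); on the outer edges it yields \(\ge1\ge\delta/2\). Hence \(\|(z\mathsf I-\widehat{\mathcal H}_{\rm DRE})^{-1}\|\le2/\delta\). Since \(1+|z|\le1+2\sqrt2\) on the contour, we get \(\mathcal R_{\rm DRE}\le2(1+2\sqrt2)/\delta\).

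For the weights, \(t\ge0\) together with \(\operatorname{Re}z\le-\delta/2\) on \(\Gamma_-\) and \(\operatorname{Re}z\ge\delta/2\) on \(\Gamma_+\) gives \(|g(z)|\le1\) for all three weights. This is the source of uniformity in \(t\).

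\emph{Quadrature error.} Panels have half-length \(|h_\ell|\le\delta/8\), because each side of length \(L\) is split into \(\lceil4L/\delta\rceil\) panels. On each panel I will use the standard Gauss--Legendre bound for functions analytic in the Bernstein ellipse \(E_\rho\), taking \(\rho=2\):
\[
\text{error}\le\tfrac{64}{15}\,\mu\,\rho^{-2p}/(\rho^2-1)\cdot|h_\ell|,
\]
where \(\mu\) bounds the integrand on the ellipse. The image of \(E_2\) lies within distance \(\tfrac54|h_\ell|\le5\delta/32\) of the panel. I must check that this neighbourhood:
\begin{itemize}
\item stays in the correct half-plane with \(|\operatorname{Re}z|\ge\delta/2-5\delta/32>\delta/4\), so that \(|g|\le1\) persists;
\item keeps \(\sigma_{\min}\ge\delta-(\delta/2+5\delta/32)\) or better, giving a resolvent norm \(\le C/\delta\);
\item does not cross the imaginary axis at the outer edges or corners.
\end{itemize}
This step, especially the corners and the precise constant \(8/\pi\), is the main obstacle. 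I will first try the crude estimate \(\mu\le4\alpha_R/\delta\) and absorb the factor \(1/(2\pi)\) from the Riesz integral. Summing over panels then produces \(\ell(\Gamma_\sigma)\) and a bound of the form \(\tfrac{8\ell\alpha_R}{\pi\delta}4^{-p}\); the weaker \(4^{-p}\), compared with the available \(4^{-2p}\), leaves slack to cover the constants. Solving for \(p\) gives \eqref{eq:dre-weighted-quadrature-order-app}, and the panel count is direct arithmetic from \(\lceil16/\delta\rceil\) on the long sides.

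\emph{Normalizations.} This step follows from the nodewise bound
\[
\beta_{\ell j}\le c_\beta\|(z_{\ell j}\mathsf I-\widehat{\mathcal H}_{\rm DRE})^{-1}\|\le c_\beta\mathcal R_{\rm DRE}/(1+|z_{\ell j}|)\le c_\beta\mathcal R_{\rm DRE},
\]
together with \(|g|\le1\) and the positivity of the Gauss weights. These give
\[
\alpha_{g,R}\le\alpha_R c_\beta\mathcal R_{\rm DRE}\sum_{\ell,j}|h_\ell|w_j/(2\pi)=\alpha_Rc_\beta\mathcal R_{\rm DRE}\,\ell(\Gamma_\sigma)/(2\pi)\le6c_\beta\alpha_R\mathcal R_{\rm DRE}/\pi.
\]
The same computation applied to the integral bounds \(\|\mathcal T_{\sigma,g,R}\|\). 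Specializing \(\alpha_R\in\{1,\alpha_{R_0}\}\) yields \eqref{eq:dre-four-block-normalizations-app}, and all of these bounds are independent of \(t\ge0\) and of \(p\).
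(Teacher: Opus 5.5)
Your proposal is correct. It matches the paper's proof everywhere except the per-panel quadrature estimate. The shared steps are:
\begin{itemize}
\item the enclosure and the contour bound $2/\delta$ (the paper cites Theorem~\ref{thm:factor-axis-app}, which is exactly your Weyl-perturbation argument);
\item the bound $|g|\le1$ for $t\ge0$;
\item the normalization count via $\beta_{\ell j}\le c_\beta\mathcal R_{\rm DRE}/(1+|z_{\ell j}|)$ and $\sum_{\ell,j}|h_\ell|w_j/(2\pi)=\ell(\Gamma_\sigma)/(2\pi)\le6/\pi$.
\end{itemize}

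\textbf{The paper's panel estimate.} The paper keeps this step self-contained. Lemma~\ref{lem:inverse-input-perturbation} gives the inverse bound $4/\delta$ on the disk $|z-c_\ell|\le\delta/4$, which lies in $|\operatorname{Re}z|\ge\delta/4$. Cauchy estimates with ratio $|h_\ell|/(\delta/4)\le1/2$ then bound the Taylor remainder beyond degree $2p-1$ by $(8\alpha_R/\delta)4^{-p}$ on the panel. The only quadrature facts used are exactness through degree $2p-1$ and $w_j>0$, $\sum_jw_j=2$, and these give the constant $8/\pi$ exactly.

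\textbf{Your panel estimate.} You instead import the Bernstein-ellipse error bound for Gauss--Legendre, applied to matrix elements $u^*f(s)v$ of unit vectors. That works, but there is a slip in your slack argument. With $\rho=2$ one has $\rho^{-2p}=4^{-p}$, so no $4^{-2p}$ is available; the slack is in the constant instead.
\begin{itemize}
\item On the ellipse image, which lies within $5|h_\ell|/4\le5\delta/32$ of $c_\ell\in\Gamma_\sigma$, perturbing from $c_\ell$ gives $\sigma_{\min}\ge\delta/2-5\delta/32=11\delta/32$. Hence $\mu\le4\alpha_R/\delta$.
\item Each panel then contributes at most $\frac{64}{45}\cdot\frac{4\alpha_R}{\delta}\cdot\frac{|h_\ell|}{2\pi}\,4^{-p}$.
\item Summing with $\sum_\ell|h_\ell|=\ell(\Gamma_\sigma)/2$ gives $\frac{64}{45\pi}\,\frac{\ell(\Gamma_\sigma)\alpha_R}{\delta}\,4^{-p}$, which is below the claimed bound.
\end{itemize}
The corners you flagged need no separate treatment. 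Each side is paneled separately, so no panel straddles a corner, and the same midpoint perturbation also gives $|\operatorname{Re}z|\ge11\delta/32$, so $|g|\le1$ persists.

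\textbf{Comparison.} Your route trades self-containedness for a better constant. With $\rho=4$, whose image stays within $17\delta/64<\delta/2$ of $c_\ell$, it would even give the faster rate $16^{-p}$. The paper's Taylor argument needs nothing beyond exactness and positivity of the Gauss weights.
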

\begin{proof}
Theorem~\ref{thm:factor-axis-app}, applied to
\(\widehat{\mathcal H}_{\rm DRE}\) with supplied axis bound
\(\delta\) and inner-edge distance \(\delta/2\), gives the
required spectral enclosure, contour inverse bound \(2/\delta\),
and first inequality in \eqref{eq:dre-weighted-quadrature-scales-app}.
Also \(1+|z|\ge\|z\mathsf I-\widehat{\mathcal H}_{\rm DRE}\|\)
gives \(\mathcal R_{\rm DRE}\ge1\).

Fix a panel midpoint \(c_\ell\). Lemma~\ref{lem:inverse-input-perturbation},
with \(T=c_\ell\mathsf I-\widehat{\mathcal H}_{\rm DRE}\)
and \(\Delta T=(z-c_\ell)\mathsf I\), gives
\[
\|(z\mathsf I-\widehat{\mathcal H}_{\rm DRE})^{-1}\|
\le4/\delta,\qquad |z-c_\ell|\le\delta/4,
\]
since \(\|T^{-1}\|\le2/\delta\).
This closed disk remains within the corresponding open half-plane:
its real part is at least \(\delta/4\) on the right and at most
\(-\delta/4\) on the left. Consequently \(|g(z)|\le1\) throughout
the disk for every permitted weight and every \(t\ge0\).
The full integrand
\(g(z)(z\mathsf I-\widehat{\mathcal H}_{\rm DRE})^{-1}R\)
is analytic on a neighborhood of the disk and has norm at most
\(4\alpha_R/\delta\).

Expand this integrand at \(c_\ell\) through degree \(2p-1\).
Cauchy's coefficient bound~\cite[Chap.~I, Sec.~1.7]{Kato1995Perturbation}, applied to unit-vector matrix elements,
also holds in operator norm. The panel half-length satisfies
\(|h_\ell|\le\delta/8\), so the ratio to the disk radius is at
most \(1/2\). Hence the Taylor remainder on the panel has norm at most
\[
\frac{4\alpha_R}{\delta}
\sum_{k=2p}^{\infty}2^{-k}
=\frac{8\alpha_R}{\delta}\,4^{-p}.
\]
The positive Gauss--Legendre weights satisfy \(\sum_jw_j=2\)
and integrate every polynomial of degree at most \(2p-1\) exactly
\cite{NISTDLMFQuadrature}. Only the remainder contributes to the integral-minus-sum error. Its contribution
on this oriented panel, including the factor \(1/(2\pi\mathrm i)\),
is at most
\[
\frac{|h_\ell|}{2\pi}
\left(2+\sum_jw_j\right)
\frac{8\alpha_R}{\delta}\,4^{-p}
=\frac{16|h_\ell|\alpha_R}{\pi\delta}\,4^{-p}.
\]
Summing with \(\sum_\ell2|h_\ell|=\ell(\Gamma_\sigma)\)
proves \eqref{eq:dre-weighted-quadrature-error-app} for the entire
matrix target. The displayed choice of \(p\) yields its half-budget
bound. The two vertical sides have length four and the two horizontal
sides have length \(2-\delta/2\), which proves the exact panel count.
Summing their ceiling bounds and using \(11\le\ell(\Gamma_\sigma)<12\)
gives \eqref{eq:dre-weighted-quadrature-count-app}.

For the normalization, at every node,
\(\beta_{\ell j}\le c_\beta\mathcal R_{\rm DRE}/(1+|z_{\ell j}|)\),
and the geometric coefficients satisfy
\[
\sum_{\ell,j}\frac{|h_\ell|w_j}{2\pi}
=\frac{\ell(\Gamma_\sigma)}{2\pi}\le\frac6\pi.
\]
Using \(|g(z_{\ell j})|\le1\) in
\eqref{eq:dre-weighted-contour-sum-app} proves the bound on
\(\alpha_{g,R}\). Bounding the ideal integral by the same
length estimate proves the target-norm bound. Taking
\(\alpha_R=1\) or \(\alpha_{R_0}\) gives
\eqref{eq:dre-four-block-normalizations-app}. All estimates of the
weight used only \(t\ge0\), so the stated uniformity follows.
\end{proof}

With exact inputs, allocate the remaining \(\varepsilon_T/2\) to
the finite-sum implementation in
\eqref{eq:dre-weighted-block-query-app}. If
\(\alpha_{g,R}\le\varepsilon_T/2\), the zero-block shortcut in
Appendix~\ref{app:contour-proofs} already approximates the finite sum
within that budget. For approximate inputs,
Proposition~\ref{prop:contour-input-errors-app} supplies an additional
error on the fixed ideal contours; choose quadrature, input, and
implementation budgets whose sum is at most \(\varepsilon_T\).
The half-budget choice above then changes according to
\eqref{eq:dre-weighted-quadrature-error-app}.

The uniform absolute error does not remove the cost of preparing
time-dependent coefficients or performing arithmetic with \(\alpha_Ht\).
Node and weight preparation and the coherent reflection circuits
have separate gate costs. The node count in
\eqref{eq:dre-weighted-quadrature-count-app} does not multiply the
Hamiltonian query bound under the stated coherent access model.
The supplied imaginary-axis lower bound also need not provide the
required constant-factor inverse-norm estimate at every node.
Finally, \eqref{eq:dre-four-block-normalizations-app} controls these
four input blocks for the specified rules. The initial-column
pseudoinverse, products, sums, and recompression still determine the
DGP normalization \(\alpha_{\mathcal E}\) and the
output normalization \(\alpha_{P(t)}\), as accounted for in
Appendix~\ref{app:dre-complexity}.

\subsection{Stability and query complexity}
\label{app:dre-complexity}

There are two pseudoinverses in the construction. The first recovers
the initial branch coordinates, and the second recovers the solution
from its graph projector. We first bound the second operation, then
substitute the cost and accuracy of constructing the projector.

At the requested time, the graph projector is
\begin{equation}
\mathcal E=\Pi_+
+e^{t\mathcal H_{\rm DRE}}\Pi_-R_0(\Pi_+R_0)^+
e^{-t\mathcal H_{\rm DRE}}\Pi_+.
\label{eq:dre-local-blocks-app}
\end{equation}
Supply \(0<\gamma_+\le\sigma_{\min}(\Pi_+R_0)\) and a solution bound
\(\sup_{0\le s\le T}\|P(s)\|\le M\). If the encoding of
\(\mathcal E\) has normalization \(\alpha_{\mathcal E}\), choose
\begin{equation}
\alpha_{P(t)}\ge8\alpha_{\mathcal E}\sqrt{1+M^2}.
\label{eq:dre-supplied-output-scale-app}
\end{equation}
Equality is the default choice; a larger supplied output scale is
also valid. Choose a supplied normalization
\(\alpha_{\mathcal E}\ge\|\mathcal E\|\).
Since \(\mathcal E\) is a nonzero projector, \(\alpha_{\mathcal E}\ge1\).

\begin{proposition}[Stability and cost of direct recovery]
\label{prop:dre-direct-projector-stability-app}
Suppose an encoding of \(\mathcal E(t)\) with normalization
\(\alpha_{\mathcal E}\) and decoded error at most \(\delta_E\)
costs \(Q_{\mathcal E}(\delta_E)\) matrix queries per call.
For \(0<\varepsilon\le1\), it suffices to use
\begin{equation}
\delta_E\le\frac{\varepsilon}{8(1+M^2)},\qquad
\varepsilon_{\rm inv}=\frac{\varepsilon}{4(1+\alpha_{\mathcal E})},
\label{eq:dre-direct-projector-budget-app}
\end{equation}
with decoded product error at most \(\varepsilon/4\).
The upper-block pseudoinverse threshold
\(4\alpha_{\mathcal E}/\alpha_{P(t)}\) then returns an
\((\alpha_{P(t)},a_{P(t)},\varepsilon)\) block-encoding of \(P(t)\), using
\begin{equation}
O\!\left(Q_{\mathcal E}(\delta_E)\alpha_{P(t)}
\log\!\left(e+\frac{\alpha_{P(t)}}{\varepsilon}\right)\right)
\label{eq:dre-direct-projector-local-query-app}
\end{equation}
matrix queries. The same statement holds for any graph projector
satisfying Lemma~\ref{lem:dre-projector-block-recovery-app}.
\end{proposition}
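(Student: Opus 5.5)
The plan is to split the decoded error into a projector-error term and a pseudoinverse-error term, both measured against the ideal recovery identity of Lemma~\ref{lem:dre-projector-block-recovery-app}. Write \(\widetilde{\mathcal E}\) for the decoded projector, with \(\|\widetilde{\mathcal E}-\mathcal E\|\le\delta_E\). Set \(U=E_1^*\mathcal E\), \(L=E_2^*\mathcal E\), and let \(\widetilde U,\widetilde L\) be the decoded rows. Their errors are then at most \(\delta_E\), since \(E_j\) are isometries. The lemma gives \(P=LU^+\), \(\|U^+\|\le\sqrt{1+M^2}\), and full row rank of \(U\). Consequently the smallest singular value of \(U\) is at least \((1+M^2)^{-1/2}\).

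\textbf{Step 1: perturbation of the upper row.} Weyl's inequality gives
\[
\sigma_{\min}(\widetilde U)\ge(1+M^2)^{-1/2}-\delta_E\ge\tfrac12(1+M^2)^{-1/2},
\]
because \(\delta_E\le 1/(8(1+M^2))\). Hence \(\widetilde U\) keeps full row rank, and its pseudoinverse has norm at most \(2\sqrt{1+M^2}\). For full-row-rank matrices of constant rank, the pseudoinverse perturbation bound (Wedin, with constant at most \(\sqrt2\) or \(3\)) gives
\[
\|\widetilde U^+-U^+\|\le C\,\|U^+\|\,\|\widetilde U^+\|\,\delta_E\le 2C(1+M^2)\delta_E .
\]

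\textbf{Step 2: encoding of the pseudoinverse.} The decoded block of \(\widetilde U\) has normalization \(\alpha_{\mathcal E}\). Its nonzero singular values therefore lie in \([\tfrac12(1+M^2)^{-1/2},\alpha_{\mathcal E}]\). The default output scale \eqref{eq:dre-supplied-output-scale-app} gives \(4\alpha_{\mathcal E}/\alpha_{P(t)}\le \tfrac12(1+M^2)^{-1/2}\), so the stated threshold \(\gamma=4\alpha_{\mathcal E}/\alpha_{P(t)}\) is admissible. Proposition~\ref{prop:local-inverse} then returns \(\widetilde U^+\) with normalization \(O(1/\gamma)=O(\alpha_{P(t)}/\alpha_{\mathcal E})\) and decoded error \(\varepsilon_{\rm inv}\). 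The condition \(\alpha_{\mathcal E}\varepsilon_{\rm inv}\le1\) holds. Its cost is \(O(\alpha_{P(t)}\log(e+\alpha_{P(t)}/\varepsilon))\) calls to the encoding of \(\widetilde U\); each such call is one call to \(U_{\mathcal E}\) costing \(Q_{\mathcal E}(\delta_E)\). One further call supplies \(\widetilde L\). Multiplying the lower-row block at normalization \(\alpha_{\mathcal E}\) by the inverse block then gives output normalization \(\alpha_{P(t)}\), after padding the constant.

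\textbf{Step 3: total error.} Write \(W\) for the decoded inverse block. The error decomposes as
\[
\widetilde L W-LU^+=\widetilde L(W-\widetilde U^+)+(\widetilde L-L)\widetilde U^+ + L(\widetilde U^+-U^+).
\]
The three terms are bounded as follows.
\begin{itemize}
\item The first is at most \(\alpha_{\mathcal E}\varepsilon_{\rm inv}\le\varepsilon/4\), using \(\|\widetilde L\|\le\alpha_{\mathcal E}\).
\item The second is at most \(2\sqrt{1+M^2}\,\delta_E\le\varepsilon/4\).
\item The third uses \(\|L\|\le\|\mathcal E\|\le\alpha_{\mathcal E}\), which would produce an unwanted factor \(\alpha_{\mathcal E}\). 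To avoid it, use \(L=PU\) from the lemma, so that \(L(\widetilde U^+-U^+)=P\,U(\widetilde U^+-U^+)\). Since \(U U^+=\widetilde U\widetilde U^+=\mathsf I\), this equals \(P\,(U-\widetilde U)\widetilde U^+\). Its norm is at most \(M\cdot\delta_E\cdot 2\sqrt{1+M^2}\le\varepsilon/4\).
\end{itemize}
This factoring sidesteps Wedin's bound entirely; only the lower bound on \(\sigma_{\min}(\widetilde U)\) from Step 1 is needed. Adding the product-implementation error \(\varepsilon/4\) gives total error at most \(\varepsilon\).

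The main obstacle is this third term: it must be controlled without paying the projector norm \(\alpha_{\mathcal E}\), and the right-inverse identity handles it. Everything else is budget bookkeeping. The proof uses only Lemma~\ref{lem:dre-projector-block-recovery-app}, so it applies verbatim to any graph projector satisfying that lemma.
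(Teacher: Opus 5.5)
Your proof is correct and follows the paper's argument. The paper likewise gets full row rank of the perturbed upper block (through the right inverse \((E_1^*\mathcal E)^+\bigl((E_1^*\widetilde{\mathcal E})(E_1^*\mathcal E)^+\bigr)^{-1}\) rather than Weyl), and it uses \(E_2^*\mathcal E=P(t)E_1^*\mathcal E\) to write the error as \(\begin{bmatrix}-P(t)&\mathsf I\end{bmatrix}\Delta\mathcal E\,(E_1^*\widetilde{\mathcal E})^+\), which is exactly the sum of your second and third terms. Bounding that term in one step gives the slightly sharper \(2(1+M^2)\delta_E\le\varepsilon/4\), but your split bound still fits within \(\varepsilon\), and the Wedin estimate in your Step 1 is indeed unnecessary.
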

\begin{proof}
Let \(\widetilde{\mathcal E}=\mathcal E+\Delta\mathcal E\) be the
decoded matrix. The upper block satisfies
\[
\|(E_1^*\Delta\mathcal E)(E_1^*\mathcal E)^+\|
\le\delta_E\sqrt{1+M^2}\le\tfrac18.
\]
Hence \((E_1^*\widetilde{\mathcal E})(E_1^*\mathcal E)^+\)
is invertible. It provides a right inverse of the perturbed upper block
after multiplication by its inverse, and the minimum-norm right inverse
therefore obeys
\begin{equation}
\|(E_1^*\widetilde{\mathcal E})^+\|
\le\frac{\sqrt{1+M^2}}{1-\delta_E\sqrt{1+M^2}}
\le2\sqrt{1+M^2}.
\label{eq:dre-direct-perturbed-inverse-app}
\end{equation}
In particular, the upper block remains full row rank. Using its right
inverse and the exact lower-block identity gives
\begin{equation}
(E_2^*\widetilde{\mathcal E})(E_1^*\widetilde{\mathcal E})^+-P(t)
=\begin{bmatrix}-P(t)&\mathsf I\end{bmatrix}
\Delta\mathcal E\,(E_1^*\widetilde{\mathcal E})^+.
\label{eq:dre-direct-projector-error-identity-app}
\end{equation}
The error is at most \(2(1+M^2)\delta_E\le\varepsilon/4\).
This argument uses the shared matrix perturbation of the two blocks;
the approximate matrix need not be idempotent.

The inverse implementation error contributes at most
\(\|E_2^*\widetilde{\mathcal E}\|\varepsilon_{\rm inv}
\le\alpha_{\mathcal E}\varepsilon_{\rm inv}\le\varepsilon/4\).
Adding the product error proves the required accuracy.
Equation~\eqref{eq:dre-direct-perturbed-inverse-app} permits the
threshold \(4\alpha_{\mathcal E}/\alpha_{P(t)}\), whose inverse
normalization under Proposition~\ref{prop:local-inverse} is
\(\alpha_{P(t)}/\alpha_{\mathcal E}\). That proposition uses
\(O(\alpha_{P(t)}\log(e+\alpha_{P(t)}/\varepsilon))\) calls
to the upper-block encoding. Multiplication adds one lower-block call,
which proves \eqref{eq:dre-direct-projector-local-query-app}.
\end{proof}

For the decaying graph projector, use the exact inputs in
Problem~\ref{prob:quantum-dre} and the coherent access assumptions of
Section~\ref{sec:quantum-tools}. Each weighted target
\[
T\in\bigl\{\Pi_+,\ \Pi_+R_0,
e^{t\mathcal H_{\rm DRE}}\Pi_-R_0,
e^{-t\mathcal H_{\rm DRE}}\Pi_+\bigr\}
\]
has quadrature normalization \(\alpha_T^{\rm quad}\) and actual
normalization \(\alpha_T\).
Any reduction between them has a declared query overhead \(r_T\ge1\);
define \(r_{\mathcal E}\) similarly for the assembled projector.
Use overhead one when no reduction is made. Fix accuracy-independent
normalization bounds before choosing the internal precision, padding
the encodings to these declared scales when needed. The scales and
overheads are supplied uniformly over the required accuracies. The compact
theorem additionally assumes
\begin{equation}
\begin{gathered}
\gamma_+=\Theta(\sigma_{\min}(\Pi_+R_0)),\qquad
\alpha_{\Pi_+R_0}=\Theta(\|\Pi_+R_0\|),\\
r_{\mathcal E},\ \max_T r_T=O(1).
\end{gathered}
\label{eq:dre-direct-calibration-app}
\end{equation}
These are separate from the quadrature bounds; \(M\) need not be a
constant because it remains in the output normalization.

\begin{proposition}[DRE stability and general query bound]
\label{prop:dre-general-resources-app}
Under the preceding access and scale assumptions, the common internal
accuracy
\begin{equation}
\begin{aligned}
\eta={}&\frac{c\varepsilon}
{(1+M^2)(1+\alpha_{e^{t\mathcal H_{\rm DRE}}\Pi_-R_0})
(1+\alpha_{e^{-t\mathcal H_{\rm DRE}}\Pi_+})}\\
&\times\frac{1}{(1+\gamma_+^{-1})^2
(1+\alpha_{\Pi_+R_0}+\alpha_{\mathcal E})}
\end{aligned}
\label{eq:dre-outer-budget-app}
\end{equation}
is sufficient for a small absolute constant \(c>0\). Use this decoded
accuracy for the four weighted blocks, both pseudoinverse implementations,
and the combined multiplication and re-encoding error at each stage.
The initial pseudoinverse threshold is \(\gamma_+/2\), and direct
sums and products give
\begin{equation}
\alpha_{\mathcal E}^{\rm raw}
=\alpha_{\Pi_+}+\frac{8\alpha_{e^{t\mathcal H_{\rm DRE}}\Pi_-R_0}\alpha_{e^{-t\mathcal H_{\rm DRE}}\Pi_+}}{\gamma_+}.
\label{eq:dre-raw-normalizations-app}
\end{equation}
The final encoding has decoded error at most \(\varepsilon\).
Including the declared normalization reductions, its matrix-query cost is
\begin{equation}
\begin{aligned}
Q_{P(t)}=O\!\Bigg(&\mathcal R_{\rm DRE}
\frac{\alpha_{\Pi_+R_0}}{\gamma_+}\alpha_{P(t)}
\,r_{\mathcal E}\max_T r_T\\
&\times\log^3\!\left(e+
\frac{\mathcal R_{\rm DRE}(1+\alpha_{P(t)})(1+\gamma_+^{-1})
\bigl(1+\max_T\alpha_T^{\rm quad}\bigr)}{\eta}\right)\Bigg).
\end{aligned}
\label{eq:dre-general-query-expanded-app}
\end{equation}
For the rectangular rule in
Proposition~\ref{prop:dre-weighted-quadrature-app}, with normalized
gap \(\delta\), a common sufficient node count for each target is
\begin{equation}
O\!\left(\delta^{-1}\log\!\left(e+
\frac{\alpha_{R_0}}{\delta\eta}\right)\right).
\label{eq:dre-final-node-counts-app}
\end{equation}
The count does not multiply the coherent matrix-query bound.
\end{proposition}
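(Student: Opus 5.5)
The plan is to prove Proposition~\ref{prop:dre-general-resources-app} by a staged error propagation through Algorithm~\ref{alg:dre-direct-projector-construction}, composing results already established. The structure is: four weighted blocks, then the initial pseudoinverse, then the assembled projector $\mathcal E$, and finally direct recovery via Proposition~\ref{prop:dre-direct-projector-stability-app}. The last step already converts a decoded projector accuracy $\delta_E\le\varepsilon/(8(1+M^2))$ into final error $\varepsilon$ at the output scale \eqref{eq:dre-supplied-output-scale-app}. The main task is therefore to show that the common internal accuracy $\eta$ of \eqref{eq:dre-outer-budget-app} delivers such a $\delta_E$. The other task is to count queries multiplicatively across the nested stages.

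\textbf{Step 1: the initial pseudoinverse.} Let $\widetilde B$ be the decoded block of $\Pi_+R_0$, with $\|\widetilde B-\Pi_+R_0\|\le\eta\le\gamma_+/2$. Weyl's inequality then gives $\sigma_{\min}(\widetilde B)\ge\gamma_+/2$. So threshold $\gamma_+/2$ is admissible in Proposition~\ref{prop:local-inverse}, which gives inverse normalization $8/\gamma_+$ up to constants, and hence \eqref{eq:dre-raw-normalizations-app} for direct sums and products. For full-column-rank matrices,
\[
\|\widetilde B^+-(\Pi_+R_0)^+\|\le 2\|\widetilde B^+\|\,\|(\Pi_+R_0)^+\|\,\eta=O(\gamma_+^{-2}\eta),
\]
using the left-inverse identity on full-rank perturbations (a Wedin-type bound). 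Adding the QSVT implementation error $\eta$ gives total pseudoinverse error $O((1+\gamma_+^{-1})^2\eta)$.

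\textbf{Step 2: the assembled projector.} Telescope the triple product $e^{t\mathcal H}\Pi_-R_0\cdot(\Pi_+R_0)^+\cdot e^{-t\mathcal H}\Pi_+$. Each factor's error is multiplied by the norms, bounded by the normalizations, of the other two. Together with the error of the $\Pi_+$ block, this gives
\[
\delta_E\le C\,\eta\,(1+\alpha_{e^{t\mathcal H}\Pi_-R_0})(1+\alpha_{e^{-t\mathcal H}\Pi_+})(1+\gamma_+^{-1})^2 .
\]
The factor $(1+\alpha_{\Pi_+R_0}+\alpha_{\mathcal E})$ in $\eta$ absorbs the multiplication and re-encoding errors at each stage. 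With $c$ small this is at most $\varepsilon/(8(1+M^2))$. The inverse-implementation budget $\varepsilon_{\rm inv}=\varepsilon/(4(1+\alpha_{\mathcal E}))$ also holds since $\eta$ is smaller. Proposition~\ref{prop:dre-direct-projector-stability-app} then gives the final error $\varepsilon$.

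\textbf{Step 3: query count.} Each weighted block costs $O(\mathcal R_{\rm DRE}\log(e+\mathcal R_{\rm DRE}\alpha_T^{\rm quad}/\eta))$ by \eqref{eq:dre-weighted-block-query-app}, times $r_T$. The initial pseudoinverse calls the $\Pi_+R_0$ block $O((\alpha_{\Pi_+R_0}/\gamma_+)\log(e+1/(\gamma_+\eta)))$ times. One call to $\mathcal E$ therefore costs $Q_{\mathcal E}$, which is this product times $r_{\mathcal E}$. By \eqref{eq:dre-direct-projector-local-query-app}, recovery multiplies $Q_{\mathcal E}$ by $\alpha_{P(t)}\log(e+\alpha_{P(t)}/\varepsilon)$. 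Bounding the three logarithms by the single expression in \eqref{eq:dre-general-query-expanded-app} gives the $\log^3$ factor, using $\varepsilon\ge\eta$. The node count \eqref{eq:dre-final-node-counts-app} follows by substituting $\varepsilon_T=\eta$ in \eqref{eq:dre-weighted-quadrature-count-app}, with $\alpha_R\le\max(1,\alpha_{R_0})$. Under coherent access the node count only affects the LCU normalization, not the query depth.

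\textbf{Main obstacle.} The hardest step is the pseudoinverse perturbation in Step 1. I would handle it through the left-inverse identity
\[
\widetilde B^+-B^+=\widetilde B^+(B-\widetilde B)B^+\quad\text{on }\operatorname{ran}B,
\]
together with a correction term from the differing ranges, bounded by $\|\widetilde B^+\|^2\eta$. A second point needs care: the ideal factor $(\Pi_+R_0)^+$ multiplies $e^{-t\mathcal H}\Pi_+$, whose columns lie in $\operatorname{ran}\Pi_+$, but the perturbed range differs. The projection error must be kept inside this same $\gamma_+^{-2}\eta$ term rather than assumed away.
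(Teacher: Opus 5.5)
Your proposal is correct and follows essentially the same route as the paper. It uses the same full-column-rank pseudoinverse identity (a left-inverse term plus a range-correction term bounded by $\|\widetilde B^+\|^2\eta$, giving $6\eta/\gamma_+^2$), the same telescoping of the triple product, the same appeal to Proposition~\ref{prop:dre-direct-projector-stability-app}, and the same product of three logarithms.

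One small correction. In the paper, the factor $(1+\alpha_{\Pi_+R_0}+\alpha_{\mathcal E})$ in $\eta$ enforces the inverse-primitive precision conditions $\alpha_{\Pi_+R_0}\eta\le1$ and $\alpha_{\mathcal E}\eta\le1$, together with $\eta\le\varepsilon/[4(1+\alpha_{\mathcal E})]$. The multiplication and re-encoding errors are only additive $\eta$ terms absorbed into your constant $C$, so you should state those two conditions explicitly. Also, your worry about the perturbed range in Step~1 is moot once you use the operator-norm perturbation bound on the whole space.
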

\begin{proof}
Let \(\widetilde{\Pi_+R_0}\) be the decoded approximation to
\(\Pi_+R_0\), with error at most \(\eta\).
The chosen accuracy gives \(\eta\le\gamma_+/2\), so this approximation
remains full column rank. The identity
\[
\begin{aligned}
(\widetilde{\Pi_+R_0})^+-(\Pi_+R_0)^+
={}&-(\widetilde{\Pi_+R_0})^+
(\widetilde{\Pi_+R_0}-\Pi_+R_0)(\Pi_+R_0)^+\\
&+\bigl((\widetilde{\Pi_+R_0})^*
\widetilde{\Pi_+R_0}\bigr)^{-1}
(\widetilde{\Pi_+R_0}-\Pi_+R_0)^*\\
&\qquad\times\bigl(\mathsf I-(\Pi_+R_0)(\Pi_+R_0)^+\bigr)
\end{aligned}
\]
implies
\[
\|(\widetilde{\Pi_+R_0})^+-(\Pi_+R_0)^+\|
\le6\eta/\gamma_+^2.
\]
Expanding the product defining \(\mathcal E\), including the inverse
implementation and assembly errors, therefore gives
\begin{equation}
\begin{aligned}
\|\widetilde{\mathcal E}-\mathcal E\|
&=O\!\left((1+\alpha_{e^{t\mathcal H_{\rm DRE}}\Pi_-R_0})
(1+\alpha_{e^{-t\mathcal H_{\rm DRE}}\Pi_+})
(1+\gamma_+^{-1})^2\eta\right)\\
&\le\frac{\varepsilon}{8(1+M^2)}.
\end{aligned}
\label{eq:dre-projector-product-error-app}
\end{equation}
The same choice ensures
\(\eta\le\varepsilon/[4(1+\alpha_{\mathcal E})]\),
\(\alpha_{\Pi_+R_0}\eta\le1\), and \(\alpha_{\mathcal E}\eta\le1\).
Thus the two pseudoinverse implementations meet the precision conditions
of Proposition~\ref{prop:local-inverse}, and
Proposition~\ref{prop:dre-direct-projector-stability-app} gives the
final error. No successive list of smaller error budgets is needed.

A weighted block costs
\(O(\mathcal R_{\rm DRE}
\log(e+\mathcal R_{\rm DRE}\alpha_T^{\rm quad}/\eta))\)
before re-encoding. The initial pseudoinverse uses
\(O((\alpha_{\Pi_+R_0}/\gamma_+)
\log(e+\gamma_+^{-1}/\eta))\) calls to its input, and the final
pseudoinverse uses
\(O(\alpha_{P(t)}\log(e+\alpha_{P(t)}/\eta))\) calls to the
DGP encoding. Their product, with the stated overheads,
is bounded by \eqref{eq:dre-general-query-expanded-app}.
The node inverses and the two pseudoinverses supply three logarithms.
Each is bounded by the displayed common logarithm, so their product
is bounded by its third power. The rectangle node count follows by substituting
\(\eta\) in \eqref{eq:dre-weighted-quadrature-count-app} and using
\(\alpha_{R_0}\ge\|R_0\|\ge1\).
\end{proof}

\begin{corollary}[Output normalization for the rectangular rule]
\label{cor:dre-construction-normalization-app}
For the rule of Proposition~\ref{prop:dre-weighted-quadrature-app},
with \(\gamma_+=\Theta(\sigma_{\min}(\Pi_+R_0))\), direct assembly
and the default output scale allow
\begin{equation}
\alpha_{P(t)}
=O\!\left(\sqrt{1+M^2}\left(
\mathcal R_{\rm DRE}
+\frac{\alpha_{R_0}\mathcal R_{\rm DRE}^{\,2}}
{\sigma_{\min}(\Pi_+R_0)}\right)\right).
\label{eq:dre-direct-output-general-bound-app}
\end{equation}
If \(M,\alpha_{R_0}=O(1)\), this reduces to
\eqref{eq:dre-output-normalization-bound}.
\end{corollary}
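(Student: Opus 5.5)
The plan is to plug the four block normalizations of Proposition~\ref{prop:dre-weighted-quadrature-app} into the raw assembly formula \eqref{eq:dre-raw-normalizations-app}, and then feed the result into the default output scale \eqref{eq:dre-supplied-output-scale-app}. No new analytic estimate should be needed, because the corollary is bookkeeping on top of bounds already proved.

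First, I use \eqref{eq:dre-four-block-normalizations-app} (equivalently the bound \(\alpha_{g,R}\le(6c_\beta/\pi)\alpha_R\mathcal R_{\rm DRE}\) from \eqref{eq:dre-weighted-quadrature-scales-app}) with the appropriate right input. This gives \(\alpha_{\Pi_+}=O(\mathcal R_{\rm DRE})\) and \(\alpha_{e^{-t\mathcal H_{\rm DRE}}\Pi_+}=O(\mathcal R_{\rm DRE})\), both with \(R=\mathsf I\), and \(\alpha_{e^{t\mathcal H_{\rm DRE}}\Pi_-R_0}=O(\alpha_{R_0}\mathcal R_{\rm DRE})\). All of these hold uniformly in \(t\ge0\) and in the node count, since \(|g|\le1\) on the contours. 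Under direct assembly no normalization reduction is made, so the actual scales equal these quadrature scales.

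Next, I substitute into \eqref{eq:dre-raw-normalizations-app}:
\[
\alpha_{\mathcal E}=\alpha_{\Pi_+}+\frac{8\,\alpha_{e^{t\mathcal H_{\rm DRE}}\Pi_-R_0}\,\alpha_{e^{-t\mathcal H_{\rm DRE}}\Pi_+}}{\gamma_+}
=O\!\left(\mathcal R_{\rm DRE}+\frac{\alpha_{R_0}\mathcal R_{\rm DRE}^{\,2}}{\gamma_+}\right).
\]
The calibration \(\gamma_+=\Theta(\sigma_{\min}(\Pi_+R_0))\) replaces \(\gamma_+\) by \(\sigma_{\min}(\Pi_+R_0)\) up to constants. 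Taking the default equality \(\alpha_{P(t)}=8\alpha_{\mathcal E}\sqrt{1+M^2}\) in \eqref{eq:dre-supplied-output-scale-app} then gives \eqref{eq:dre-direct-output-general-bound-app}.

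For the reduction, set \(M,\alpha_{R_0}=O(1)\), so that \(\sqrt{1+M^2}=O(1)\). I then absorb the first term \(\mathcal R_{\rm DRE}\) into the second. This uses \(\mathcal R_{\rm DRE}\ge1\), shown in the proof of Proposition~\ref{prop:dre-weighted-quadrature-app}, together with an upper bound on \(\sigma_{\min}(\Pi_+R_0)\). For the latter, \(\sigma_{\min}(\Pi_+R_0)\le\|\Pi_+R_0\|\le\|\mathcal T_{+,1,R_0}\|\le(6/\pi)\alpha_{R_0}\mathcal R_{\rm DRE}\) by \eqref{eq:dre-weighted-quadrature-scales-app}. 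It follows that \(\mathcal R_{\rm DRE}=O(\mathcal R_{\rm DRE}^2\alpha_{R_0}/\sigma_{\min}(\Pi_+R_0))\), which yields \eqref{eq:dre-output-normalization-bound}.

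The only delicate point is confirming that the scales entering \eqref{eq:dre-raw-normalizations-app} are the accuracy-independent declared scales rather than accuracy-dependent ones. Proposition~\ref{prop:dre-weighted-quadrature-app} states its bounds uniformly in the number of nodes, so this holds. Beyond that check, the argument is direct substitution.
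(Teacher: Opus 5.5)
Your proposal is correct and follows the paper's proof essentially step for step. You substitute the block scales of Proposition~\ref{prop:dre-weighted-quadrature-app} into \eqref{eq:dre-raw-normalizations-app}, take equality in \eqref{eq:dre-supplied-output-scale-app}, and absorb the first term via \(\sigma_{\min}(\Pi_+R_0)\le\|\Pi_+R_0\|=O(\alpha_{R_0}\mathcal R_{\rm DRE})\). Your appeal to \(\mathcal R_{\rm DRE}\ge1\) is harmless but unnecessary, since that singular-value bound alone already gives \(\mathcal R_{\rm DRE}\le(6/\pi)\alpha_{R_0}\mathcal R_{\rm DRE}^{\,2}/\sigma_{\min}(\Pi_+R_0)\).
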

\begin{proof}
The specified quadrature gives
\[
\begin{aligned}
\alpha_{\Pi_+},\quad\alpha_{e^{-t\mathcal H_{\rm DRE}}\Pi_+}
&=O(\mathcal R_{\rm DRE}),\\
\alpha_{e^{t\mathcal H_{\rm DRE}}\Pi_-R_0}
&=O(\alpha_{R_0}\mathcal R_{\rm DRE}).
\end{aligned}
\]
Substitute in \eqref{eq:dre-raw-normalizations-app} and then in
\eqref{eq:dre-supplied-output-scale-app}, taking equality.
When \(M,\alpha_{R_0}=O(1)\), the same rule gives
\(\sigma_{\min}(\Pi_+R_0)\le\|\Pi_+R_0\|
=O(\mathcal R_{\rm DRE})\), so the second term absorbs the first.
The constants are uniform in time, quadrature order, and precision
under the supplied estimates. These conclusions concern this specified
rule; any later normalization reduction has its own query cost.
\end{proof}

\newtheorem*{dreproblemrestatement}{Problem~\ref{prob:quantum-dre}}
\begin{dreproblemrestatement}
For the data and existence interval in Definition~\ref{def:problem-dre},
let \(t\in[0,T]\) be the requested time.
Given exact block-encodings of \(\mathcal H_{\rm DRE}\) and \(R_0\)
in \eqref{eq:dre-lift-seed}, their adjoints and controlled versions,
with normalizations \(\alpha_H\) and \(\alpha_{R_0}\), and a target
\(0<\varepsilon\le1\), construct a block-encoding with decoded matrix
\(\widetilde P(t)\) satisfying
\(\|\widetilde P(t)-P(t)\|\le\varepsilon\).
\end{dreproblemrestatement}

\newtheorem*{dremainrestatement}{Theorem~\ref{thm:dre-construction-main}}
\begin{dremainrestatement}
For Problem~\ref{prob:quantum-dre}, suppose that
\(\Pi_+R_0\) has full column rank, and supply a bound
\(\sup_{0\le s\le T}\|P(s)\|\le M\). Choose positively oriented contours
\(\Gamma_\pm\) enclosing exactly the corresponding spectral branches
of \(\widehat{\mathcal H}_{\rm DRE}\), and define
\begin{equation*}
\mathcal R_{\rm DRE}
=\max_{\sigma\in\{-,+\}}\sup_{z\in\Gamma_\sigma}
(1+|z|)\|(z\mathsf I-\widehat{\mathcal H}_{\rm DRE})^{-1}\|.
\end{equation*}
Under the implementation and normalization assumptions of
Appendix~\ref{app:dre-complexity}, the algorithm returns an
\((\alpha_{P(t)},a_{P(t)},\varepsilon)\) block-encoding of \(P(t)\), using
\begin{equation*}
Q_{\rm DRE}
=\widetilde O\!\left(
\mathcal R_{\rm DRE}\,
\kappa(\Pi_+R_0)\,
\alpha_{P(t)}
\right)
\end{equation*}
queries to the supplied matrix oracles.
Here \(\alpha_{P(t)}\) is the output normalization specified in that
appendix, and \(\widetilde O\) suppresses logarithmic dependence on
precision and the supplied scales.
\end{dremainrestatement}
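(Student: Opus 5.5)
The restated Theorem~\ref{thm:dre-construction-main} follows by assembling results already proved in this appendix; I would not do any new analysis. Correctness comes first. By Theorem~\ref{thm:dre-seeded-projector-main}, the matrix \(\mathcal E(t)\) in \eqref{eq:dre-local-blocks-app} is an idempotent with range \(\operatorname{ran}[\mathsf I;P(t)]\), under the standing hyperbolicity assumption and full column rank of \(\Pi_+R_0\). Lemma~\ref{lem:dre-projector-block-recovery-app} then gives the recovery identity \(P(t)=(E_2^*\mathcal E)(E_1^*\mathcal E)^+\), together with the bound \(\|(E_1^*\mathcal E)^+\|\le\sqrt{1+M^2}\) from the supplied solution bound. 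By \eqref{eq:dre-weighted-riesz-blocks-app}, the four ingredients in Algorithm~\ref{alg:dre-direct-projector-construction} are exactly weighted Riesz blocks for \(\widehat{\mathcal H}_{\rm DRE}\). The contour construction of Proposition~\ref{thm:local-contour-sum} encodes each of them at depth \(O(\mathcal R_{\rm DRE}\log(\cdot))\), as in \eqref{eq:dre-weighted-block-query-app}, with the quadrature error made small by any admissible rule, for instance Proposition~\ref{prop:dre-weighted-quadrature-app}.

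For accuracy, I would invoke Proposition~\ref{prop:dre-general-resources-app} with the common internal precision \(\eta\) of \eqref{eq:dre-outer-budget-app}. That proposition controls the perturbation of the initial pseudoinverse through the bound \(6\eta/\gamma_+^2\). It then propagates this through the product assembling \(\mathcal E\), giving \eqref{eq:dre-projector-product-error-app}. Finally it feeds the result into Proposition~\ref{prop:dre-direct-projector-stability-app}, which produces an \((\alpha_{P(t)},a_{P(t)},\varepsilon)\) encoding with \(\alpha_{P(t)}=8\alpha_{\mathcal E}\sqrt{1+M^2}\). This settles the error claim, and I would only need to check that the hypotheses \(\eta\le\gamma_+/2\) and \(\alpha\eta\le1\) are met by the choice of \(c\).

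For the query count, I would start from \eqref{eq:dre-general-query-expanded-app}, which is multiplicative: (depth of one weighted block) \(\times\) (initial pseudoinverse calls \(\alpha_{\Pi_+R_0}/\gamma_+\)) \(\times\) (final pseudoinverse calls \(\alpha_{P(t)}\)) \(\times\) overheads \(\times\log^3\). I would then apply the calibration \eqref{eq:dre-direct-calibration-app}. Under it, \(\alpha_{\Pi_+R_0}/\gamma_+=\Theta(\|\Pi_+R_0\|/\sigma_{\min}(\Pi_+R_0))=\Theta(\kappa(\Pi_+R_0))\), and \(r_{\mathcal E},\max_Tr_T=O(1)\). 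Absorbing the cubic logarithm, whose argument involves only \(\mathcal R_{\rm DRE}\), \(\alpha_{P(t)}\), \(\gamma_+^{-1}\), the quadrature scales and \(1/\eta\), into \(\widetilde O\) yields \eqref{eq:dre-query-main}.

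The main obstacle is bookkeeping rather than mathematics. I must confirm that every quantity inside the logarithm is polynomially bounded in the supplied scales and precision, so that hiding it is legitimate. I must also confirm that the final pseudoinverse is applied to the full \(\mathcal E\)-encoding, so that its \(\alpha_{P(t)}\) calls multiply rather than add to the cost of constructing \(\mathcal E\). The point to verify there is that \(\eta\) depends only polynomially on \(\varepsilon\), \(M\), the block normalizations and \(\gamma_+^{-1}\). That holds by inspection of \eqref{eq:dre-outer-budget-app}.
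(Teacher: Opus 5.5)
Your proposal is correct and follows the paper's proof: correctness comes from Theorem~\ref{thm:dre-seeded-projector-main} together with the projector-block recovery lemma, accuracy and the multiplicative cost come from Proposition~\ref{prop:dre-general-resources-app}, and the compact form comes from the calibration \eqref{eq:dre-direct-calibration-app}, which turns \(\alpha_{\Pi_+R_0}/\gamma_+\) into \(O(\kappa(\Pi_+R_0))\) and makes the re-encoding overheads constant. The extra checks you list are already carried out inside that proposition's proof: \(\eta\le\gamma_+/2\), \(\alpha\eta\le1\), and the polynomial dependence of \(\eta^{-1}\) inside the cubic logarithm.
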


\begin{proof}
Theorem~\ref{thm:dre-seeded-projector-main} and
Lemma~\ref{lem:dre-projector-block-recovery} identify the final product
with \(P(t)\). Proposition~\ref{prop:dre-general-resources-app} supplies
its encoding and accuracy. Under \eqref{eq:dre-direct-calibration-app},
\(\alpha_{\Pi_+R_0}/\gamma_+=O(\kappa(\Pi_+R_0))\) and the re-encoding
overheads are constant, so \eqref{eq:dre-general-query-expanded-app}
gives the claimed bound. Coefficient preparation, node arithmetic, and
gate synthesis have separate costs.
\end{proof}

\begin{proposition}[DRE construction from approximate inputs]
\label{prop:dre-approximate-inputs-app}
Under the fixed ideal data of Proposition~\ref{prop:dre-general-resources-app}
and the nodewise perturbation conditions of
Proposition~\ref{prop:contour-input-errors-app}, choose \(\eta\) by
\eqref{eq:dre-outer-budget-app}. If each weighted block has total decoded
error at most \(\eta\), including input, quadrature, and implementation
errors, the same construction returns \(P(t)\) to error \(\varepsilon\)
with the actual scales and re-encoding costs.
\end{proposition}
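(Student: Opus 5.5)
The proof reduces to Proposition~\ref{prop:dre-general-resources-app}, which was established for exact inputs. The only change is that each weighted block is now supplied at total decoded error at most \(\eta\), rather than at implementation error alone. That proposition's argument only ever used the decoded error of each block, never its source. So the plan is to check two things: that every step there depended on the blocks only through a bound \(\|\widetilde T-T\|\le\eta\) plus the declared normalizations, and that the perturbation conditions of Proposition~\ref{prop:contour-input-errors-app} keep those normalizations and query bounds within constant factors.

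\textbf{Step 1: each block.} For each of the four targets, I will apply Proposition~\ref{prop:contour-input-errors-app} on the fixed ideal contours. The nodewise condition \(\eta_j\beta_j\le2\) makes each perturbed shift invertible. It also makes \(2\beta_j\) an admissible inverse normalization, within a constant factor of \(\|\widetilde A_j^{-1}\|\). Therefore \(\alpha_{g,R}\), the node depths \(s_j\beta_j=O(\mathcal R_{\rm DRE})\), and the query bound \eqref{eq:dre-weighted-block-query-app} change by at most constant factors. By \eqref{eq:contour-total-error-app}, the decoded error of each block is \(\varepsilon_{\rm quad}+\varepsilon_{\rm input}+\varepsilon_{\rm impl}\), which the hypothesis caps at \(\eta\). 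The \(R_0\) factor enters only through \(\varepsilon_R\) inside \(\varepsilon_{\rm input}\), so nothing further is needed for it.

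\textbf{Step 2: assembly and recovery.} I will rerun the proof of Proposition~\ref{prop:dre-general-resources-app} with the blocks treated as black boxes of error \(\eta\).

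First, \(\eta\le\gamma_+/2\) keeps \(\widetilde{\Pi_+R_0}\) full column rank. The pseudoinverse perturbation identity then gives an error of \(6\eta/\gamma_+^2\), and the inverse primitive with threshold \(\gamma_+/2\) applies to the decoded matrix.

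Next, the product expansion gives \eqref{eq:dre-projector-product-error-app}, namely \(\|\widetilde{\mathcal E}-\mathcal E\|\le\varepsilon/(8(1+M^2))\). Here \(\mathcal E\) is the ideal projector built from exact \(\mathcal H_{\rm DRE}\) and \(R_0\). This is the key point: the recovery argument in Proposition~\ref{prop:dre-direct-projector-stability-app} compares against the exact graph projector, and it explicitly does not need \(\widetilde{\mathcal E}\) to be idempotent or spectrally separated.

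Finally, Proposition~\ref{prop:dre-direct-projector-stability-app} yields total error \(\varepsilon\) at scale \(\alpha_{P(t)}\).

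\textbf{Step 3: normalizations and cost.} The actual normalizations \(\alpha_T\) are fixed before \(\eta\) is chosen, and the blocks are padded to those declared scales. Re-encoding overheads \(r_T,r_{\mathcal E}\) multiply as before. Hence \eqref{eq:dre-outer-budget-app}, \eqref{eq:dre-raw-normalizations-app} and \eqref{eq:dre-general-query-expanded-app} carry over with the actual scales.

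\textbf{Main obstacle.} The delicate step is ensuring that no step silently used exactness of the inputs. Two places need checking: the bound \(\|E_2^*\widetilde{\mathcal E}\|\le\alpha_{\mathcal E}\) in the inverse-error term, and the rank and threshold conditions for the two pseudoinverses. For the first, I will verify that it holds because decoded blocks of unitaries are bounded by their declared normalizations. For the second, I will verify that they were derived only from \(\eta\)-closeness. If \(\alpha_{\mathcal E}\) must dominate \(\|\widetilde{\mathcal E}\|\), padding to the declared scale handles it.
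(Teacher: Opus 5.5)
Your proposal is correct and takes the same route as the paper. The paper's proof invokes Proposition~\ref{prop:contour-input-errors-app} to supply the four blockwise errors and then applies Proposition~\ref{prop:dre-general-resources-app}, whose stability argument uses only decoded block errors and declared scales, as you check in Steps~1--3; the paper also makes explicit two details you leave implicit, namely that the Hamiltonian input error enters in normalized form, \(\varepsilon_{\widehat H}=\varepsilon_H/\alpha_H\), and that weight preparation is counted in the implementation error.
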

\begin{proof}
Proposition~\ref{prop:contour-input-errors-app} supplies the four input
errors required by Proposition~\ref{prop:dre-general-resources-app}.
Apply the latter with \(\varepsilon_{\widehat H}=\varepsilon_H/\alpha_H\),
including weight preparation in the implementation error as in
Appendix~\ref{app:contour-input-errors}.
\end{proof}

\clearpage
\section{Finite Riccati recursions and direct projector-block recovery}
\label{app:rr}

We use the data and forward lift of Subsection~\ref{sec:rr}.
Thus \(G,Q,P_0\succeq0\), \(A\) is invertible, and
\(\mathcal S_F\) has no unit-circle eigenvalues.
The projectors \(\Pi_<,\Pi_>\) refer to this lift.
We prove the discrete graph identities and power-weighted quadrature,
then apply the projector-block recovery of
Appendix~\ref{app:dre-recovery}.

\subsection{Linear-fractional propagation and the decaying graph projector}
\label{app:rr-graph}

\begin{lemma}[Forward propagation of the Riccati graph]
\label{lem:rr-graph-propagation-app}
For the data above, every iterate in \eqref{eq:rr-section-equation}
is well defined and positive semidefinite. The lift \(\mathcal S_F\)
is invertible, its interior and exterior spectral subspaces each
have dimension \(n\), and for every integer \(j\ge0\),
\begin{equation}
\begin{aligned}
\mathcal S_F^jR_0
&=\begin{bmatrix}\mathsf I_n\\P_j\end{bmatrix}
 E_1^*\mathcal S_F^jR_0,\\
P_j&=(E_2^*\mathcal S_F^jR_0)
 (E_1^*\mathcal S_F^jR_0)^{-1}.
\end{aligned}
\label{eq:rr-graph-blocks-app}
\end{equation}
In particular, the leading block in this formula is invertible.
The solution obeys
\begin{equation}
\|P_k\|\le\|A\|^{2k}\|P_0\|
+\|Q\|\sum_{\ell=0}^{k-1}\|A\|^{2\ell},
\label{eq:rr-solution-norm-app}
\end{equation}
with the empty sum equal to zero when \(k=0\).
\end{lemma}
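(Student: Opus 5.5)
The proof runs by induction on \(j\), using the graph-update identity \eqref{eq:factor-rr-graph-update-app} of Lemma~\ref{lem:factor-rr-branches-app}. That identity is pure block multiplication, so it needs only invertibility of \(A\) and of \(\mathsf I+GP\). Well-definedness and positive semidefiniteness of the iterates come from Proposition~\ref{prop:lqr-conventions-app}: for \(G,P\succeq0\), \(\mathsf I+GP\) is invertible and \(P(\mathsf I+GP)^{-1}\succeq0\), so every \(P_{j+1}\succeq0\).

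For the graph propagation, I would prove the stronger invariant \(\mathcal S_F^jR_0=[\mathsf I;P_j]Y_j\) with \(Y_j\) invertible. The base case is \(j=0\) with \(Y_0=\mathsf I\). For the step, \eqref{eq:factor-rr-graph-update-app} gives
\[
\mathcal S_F[\mathsf I;P_j]=[\mathsf I;P_{j+1}]A^{-1}(\mathsf I+GP_j),
\]
so I set \(Y_{j+1}=A^{-1}(\mathsf I+GP_j)Y_j\), which is invertible as a product of invertible factors. Taking the upper block of the invariant gives \(E_1^*\mathcal S_F^jR_0=Y_j\). This yields both displayed identities in \eqref{eq:rr-graph-blocks-app} together with invertibility of the leading block.

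Invertibility of \(\mathcal S_F\) follows from \(\mathcal S_F=M^{-1}L\), where \(\det M=\det A\ne0\) and \(\det L=\det A^*\ne0\). For the branch dimensions, I would use symplectic pairing: \(\mathcal S_F^*J\mathcal S_F=J\) with \(J=[0,\mathsf I;-\mathsf I,0]\). Verifying this from \(M,L\) amounts to checking \(L^*JL=M^*JM\) up to sign conventions, which is a direct block computation using Hermitian \(G,Q\). Given this, the eigenvalues pair as \(\lambda\leftrightarrow1/\overline\lambda\) with equal algebraic multiplicities. Since no eigenvalue lies on the unit circle, the interior and exterior multiplicities are equal, so each branch has dimension \(n\). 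This pairing step is where I expect the main care to be needed, because the sign convention must be checked against the specific form of the lift rather than assumed. If the identity comes out as \(\mathcal S_F^*J\mathcal S_F=-J\) or similar, the same pairing argument still applies.

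For the norm bound \eqref{eq:rr-solution-norm-app}, I would use the Bellman form from \eqref{eq:lqr-psd-identity-app}:
\[
P(\mathsf I+GP)^{-1}=P^{1/2}(\mathsf I+P^{1/2}GP^{1/2})^{-1}P^{1/2}\preceq P.
\]
Hence \(0\preceq P_{j+1}\preceq Q+A^*P_jA\), which gives \(\|P_{j+1}\|\le\|Q\|+\|A\|^2\|P_j\|\). Unrolling this recursion \(k\) times yields the stated geometric sum.
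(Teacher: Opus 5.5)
Your proof follows the paper's argument almost step for step. Positive semidefiniteness comes from Proposition~\ref{prop:lqr-conventions-app}. The graph blocks come by induction through \eqref{eq:factor-rr-graph-update-app}, with invertible upper blocks \(Y_{j+1}=A^{-1}(\mathsf I+GP_j)Y_j\). The branch dimensions come from reciprocal-conjugate pairing via \(\mathcal S_F^*J\mathcal S_F=J\). The norm bound comes from \(0\preceq P_{j+1}\preceq Q+A^*P_jA\). All of these are correct.

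The one step that fails as you describe it is the verification of the symplectic identity. The condition \(L^*JL=M^*JM\) is false in general. Block multiplication with Hermitian \(G,Q\) gives
\[
M^*JM=\begin{bmatrix}QA-A^*Q&A^*\\-A&0\end{bmatrix},\qquad
L^*JL=\begin{bmatrix}0&A^*\\-A&GA^*-AG\end{bmatrix}.
\]
These agree only when \(QA=A^*Q\) and \(AG=GA^*\). They happen to agree in the scalar case, but the discrepancy is not a sign, so your hedge does not cover it.

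Since \(\mathcal S_F=M^{-1}L\), the identity \(\mathcal S_F^*J\mathcal S_F=L^*M^{-*}JM^{-1}L=J\) is instead equivalent to \(MJM^*=LJL^*\). Both sides of that equal \(\begin{bmatrix}0&A\\-A^*&0\end{bmatrix}\), so \(\mathcal S_F^*J\mathcal S_F=J\) holds exactly, with no sign change. The paper avoids this bookkeeping by factoring
\[
\mathcal S_F=\begin{bmatrix}\mathsf I&0\\Q&\mathsf I\end{bmatrix}
\begin{bmatrix}A^{-1}&0\\0&A^*\end{bmatrix}
\begin{bmatrix}\mathsf I&G\\0&\mathsf I\end{bmatrix}.
\]
Each factor preserves \(J\) under adjoint congruence; the two shears do so precisely because \(Q\) and \(G\) are Hermitian. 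The same factorization gives invertibility of \(\mathcal S_F\) directly. With this correction, the rest of your proof is complete.
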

\begin{proof}
Proposition~\ref{prop:lqr-conventions-app} gives existence and
positive semidefiniteness of every iterate. Its identity
\eqref{eq:lqr-psd-identity-app} also gives
\(0\preceq P(\mathsf I+GP)^{-1}\preceq P\) for \(P,G\succeq0\).
Use the algebraic identity \eqref{eq:factor-rr-graph-update-app}
with \(P=P_j\); it requires only the inverse just established,
not the coefficient bounds used elsewhere in that lemma.
Starting with \(E_1^*R_0=\mathsf I_n\), induction gives
\[
E_1^*\mathcal S_F^{j+1}R_0
=A^{-1}(\mathsf I_n+GP_j)E_1^*\mathcal S_F^jR_0,
\]
whose factors are invertible, and proves
\eqref{eq:rr-graph-blocks-app}.

To verify the spectral split, factor the lift as
\[
\mathcal S_F
=\begin{bmatrix}\mathsf I&0\\Q&\mathsf I\end{bmatrix}
 \begin{bmatrix}A^{-1}&0\\0&A^*\end{bmatrix}
 \begin{bmatrix}\mathsf I&G\\0&\mathsf I\end{bmatrix}.
\]
Every factor is invertible and preserves the matrix
\(\begin{bmatrix}0&\mathsf I\\-\mathsf I&0\end{bmatrix}\)
under congruence by its adjoint. Consequently
\begin{equation}
\mathcal S_F^*
\begin{bmatrix}0&\mathsf I\\-\mathsf I&0\end{bmatrix}
\mathcal S_F
=\begin{bmatrix}0&\mathsf I\\-\mathsf I&0\end{bmatrix}.
\label{eq:rr-symplectic-identity-app}
\end{equation}
Hence \(\mathcal S_F^{-1}\) is similar to \(\mathcal S_F^*\).
Eigenvalues occur in reciprocal-conjugate pairs, with algebraic
multiplicity \cite[Lemma~2]{Poloni2020RiccatiIterations}. Absence of unit-circle eigenvalues leaves exactly
\(n\) eigenvalues on each side, counted with multiplicity, and
therefore rank \(n\) for each Riesz projector.
This argument does not require diagonalizability.

Finally, the positive semidefinite inequality above gives
\(\|P_{j+1}\|\le\|Q\|+\|A\|^2\|P_j\|\).
Iterating this scalar inequality proves
\eqref{eq:rr-solution-norm-app}.
\end{proof}

Supplied upper bounds on \(\|A\|,\|Q\|,\|P_0\|\) may be used in
\eqref{eq:rr-solution-norm-app}. The estimate supplies an output
scale but does not assert that this scale is uniform in \(k\).
We next identify the graph through its spectral coordinates.

\begin{theorem}[Decaying graph projector]
\label{thm:rr-seeded-projector-main}
For Problem~\ref{prob:quantum-rr}, suppose that \(\Pi_>R_0\) has
full column rank. At the requested step \(k\), set
\begin{equation*}
\mathcal E_k=\Pi_>
+\mathcal S_F^k\Pi_<R_0(\Pi_>R_0)^+
\mathcal S_F^{-k}\Pi_>.
\end{equation*}
Then
\begin{equation*}
\begin{aligned}
\mathcal E_k^2&=\mathcal E_k,\\
\operatorname{ran}\mathcal E_k
&=\operatorname{ran}\begin{bmatrix}\mathsf I_n\\P_k\end{bmatrix},
&\ker\mathcal E_k&=\operatorname{ran}\Pi_<.
\end{aligned}
\end{equation*}
\end{theorem}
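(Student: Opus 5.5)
I will mirror the proof of Theorem~\ref{thm:dre-seeded-projector-main}, replacing the Hamiltonian flow by powers of \(\mathcal S_F\) and the half-plane projectors by the interior/exterior Riesz projectors. First, Lemma~\ref{lem:rr-graph-propagation-app} gives that \(\mathcal S_F\) is invertible, that \(\Pi_<\) and \(\Pi_>\) each have rank \(n\), and that they commute with \(\mathcal S_F^{\pm k}\). They are complementary (sum to \(\mathsf I\), mutual products vanish). Invertibility of \(\mathcal S_F\) makes \(\mathcal S_F^{-k}\Pi_>\) meaningful, and it coincides with the weighted Riesz operator with weight \(z^{-k}\) by Proposition~\ref{prop:riesz-calculus-app}. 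That proposition is only needed to interpret the algorithm, not for the algebra here.

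Set \(\mathcal W_k=\mathcal S_F^kR_0\) and \(L_k=(\Pi_>R_0)^+\mathcal S_F^{-k}\Pi_>\). Since \(\Pi_>R_0\) has full column rank \(n\) and lies in the rank-\(n\) space \(\operatorname{ran}\Pi_>\), its columns span \(\operatorname{ran}\Pi_>\). Hence the orthogonal projector \((\Pi_>R_0)(\Pi_>R_0)^+\) fixes every column of \(\mathcal S_F^{-k}\Pi_>\), whose range lies in \(\operatorname{ran}\Pi_>\) by commutation. Splitting \(R_0=\Pi_>R_0+\Pi_<R_0\) then gives
\[
\mathcal W_kL_k=\mathcal S_F^k(\Pi_>R_0)(\Pi_>R_0)^+\mathcal S_F^{-k}\Pi_>+\mathcal S_F^k\Pi_<R_0L_k=\Pi_>+\mathcal S_F^k\Pi_<R_0(\Pi_>R_0)^+\mathcal S_F^{-k}\Pi_>=\mathcal E_k .
\]
In the other order, commuting \(\Pi_>\) through \(\mathcal S_F^{-k}\) gives \(L_k\mathcal W_k=(\Pi_>R_0)^+\Pi_>R_0=\mathsf I_n\), using full column rank.

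From \(\mathcal E_k=\mathcal W_kL_k\) and \(L_k\mathcal W_k=\mathsf I\), idempotence follows at once: \(\mathcal E_k^2=\mathcal W_k(L_k\mathcal W_k)L_k=\mathcal E_k\). The ranges agree, \(\operatorname{ran}\mathcal E_k=\operatorname{ran}\mathcal W_k\), because \(\mathcal E_k\mathcal W_k=\mathcal W_k\). By \eqref{eq:rr-graph-blocks-app}, \(\mathcal W_k=[\mathsf I_n;P_k]\,E_1^*\mathcal S_F^kR_0\) with an invertible right factor, so \(\operatorname{ran}\mathcal E_k=\operatorname{ran}[\mathsf I_n;P_k]\).

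For the kernel, \(\mathcal E_k\Pi_<=0\) since both terms end in \(\Pi_>\). In addition, \(\Pi_>\mathcal E_k=\Pi_>\), because \(\Pi_>\mathcal S_F^k\Pi_<=0\). Therefore \(\ker\mathcal E_k\subseteq\ker\Pi_>=\operatorname{ran}\Pi_<\subseteq\ker\mathcal E_k\).

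The only delicate point is the spanning claim \(\operatorname{ran}(\Pi_>R_0)=\operatorname{ran}\Pi_>\). It rests on the dimension count from the symplectic pairing in Lemma~\ref{lem:rr-graph-propagation-app}, so I will cite that lemma explicitly. I will also note that \((\Pi_>R_0)(\Pi_>R_0)^+\) need not equal \(\Pi_>\), which is why the trailing \(\Pi_>\) is kept.
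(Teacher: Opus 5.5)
Your proposal is correct and follows essentially the same route as the paper: the factorization \(\mathcal E_k=\mathcal S_F^kR_0(\Pi_>R_0)^+\mathcal S_F^{-k}\Pi_>\) together with \(\mathcal E_k\mathcal S_F^kR_0=\mathcal S_F^kR_0\), which rests on \(\operatorname{ran}(\Pi_>R_0)=\operatorname{ran}\Pi_>\) via the rank-\(n\) count in Lemma~\ref{lem:rr-graph-propagation-app}. The only cosmetic difference is in the kernel step. The paper uses a dimension count: \(\mathcal E_k\) has rank at most \(n\) and annihilates the \(n\)-dimensional \(\operatorname{ran}\Pi_<\). You instead use \(\Pi_>\mathcal E_k=\Pi_>\), as in the paper's DRE proof, and both arguments are valid.
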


\begin{proof}
Full column rank and Lemma~\ref{lem:rr-graph-propagation-app} give
\(\operatorname{ran}(\Pi_>R_0)=\operatorname{ran}\Pi_>\).
Thus \((\Pi_>R_0)(\Pi_>R_0)^+\) acts as the identity on the
exterior subspace, even though it need not equal the spectral
projector \(\Pi_>\) on the whole space.
Commutation of the spectral projectors with \(\mathcal S_F\) yields
\[
(\mathcal S_F^{-k}\Pi_>)(\mathcal S_F^kR_0)=\Pi_>R_0.
\]
Splitting the propagated column into its two branches in
\eqref{eq:rr-seeded-projector} therefore gives
\begin{equation}
\mathcal E_k
=\mathcal S_F^kR_0(\Pi_>R_0)^+\mathcal S_F^{-k}\Pi_>,
\qquad
\mathcal E_k\mathcal S_F^kR_0=\mathcal S_F^kR_0.
\label{eq:rr-projector-factorization-app}
\end{equation}
The first identity bounds the rank of \(\mathcal E_k\) by \(n\);
the second proves that its range contains the rank-\(n\) column
\(\mathcal S_F^kR_0\) and that it is the identity there.
It is therefore idempotent, with the range stated in the theorem.
It annihilates \(\operatorname{ran}\Pi_<\), which has dimension
\(n\), so this subspace is exactly its kernel.
\end{proof}

The rank condition is equivalent to
\(\operatorname{ran}R_0\cap\operatorname{ran}\Pi_<=\{0\}\).
It need not follow from positive semidefiniteness alone: for
\(A=2\) and \(G=Q=P_0=0\), the lift is
\(\operatorname{diag}(1/2,2)\), the recursion is well defined,
but \(\Pi_>R_0=0\).
Under the stronger coefficient hypotheses of
Proposition~\ref{prop:factor-rr-initialization-app}, its lower
bound does ensure full column rank for every \(P_0\succeq0\).
The branch correspondence in Lemma~\ref{lem:factor-rr-branches-app}
then identifies the exterior RR projector with the positive-graph
DARE pencil projector; it does not identify the two interior
projectors under the local conventions of the respective sections.

\subsection{Direct recovery from the projector blocks}
\label{app:rr-recovery}

The decaying graph projector fixes every vector of the solution graph.
Its upper block therefore has a right inverse supplied by the graph
itself, which controls the final pseudoinverse without an additional
leading-block condition.

\begin{corollary}[Direct recovery of a finite iterate]
\label{cor:rr-projector-block-recovery-app}
Under Theorem~\ref{thm:rr-seeded-projector-main},
\begin{equation}
\begin{aligned}
(E_1^*\mathcal E_k)\begin{bmatrix}\mathsf I_n\\P_k\end{bmatrix}
&=\mathsf I_n,\qquad
E_2^*\mathcal E_k=P_k(E_1^*\mathcal E_k),\\
P_k&=(E_2^*\mathcal E_k)(E_1^*\mathcal E_k)^+,\\
\|(E_1^*\mathcal E_k)^+\|&\le\sqrt{1+\|P_k\|^2}.
\end{aligned}
\label{eq:rr-projector-recovery-scales-app}
\end{equation}
In particular, \(E_1^*\mathcal E_k\) has full row rank.
\end{corollary}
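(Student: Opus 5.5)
This corollary follows by applying Lemma~\ref{lem:dre-projector-block-recovery-app} with \(\mathcal E=\mathcal E_k\) and \(P=P_k\). That lemma was stated for an arbitrary idempotent matrix whose range is a graph, not only for the DRE projector. Its hypotheses are therefore exactly what Theorem~\ref{thm:rr-seeded-projector-main} provides: \(\mathcal E_k^2=\mathcal E_k\) and \(\operatorname{ran}\mathcal E_k=\operatorname{ran}[\mathsf I_n;P_k]\).

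The plan is to restate the lemma's mechanism in this setting so that every conclusion is visible, in the following order.

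\begin{enumerate}
\item Set \(V=[\mathsf I_n;P_k]\). The range identity says every column of \(\mathcal E_k\) has the form \(Vc\), so \(\mathcal E_k=V\,E_1^*\mathcal E_k\). Taking lower blocks gives \(E_2^*\mathcal E_k=P_kE_1^*\mathcal E_k\).
\item Idempotence and the range identity together give \(\mathcal E_kV=V\). Taking upper blocks gives \((E_1^*\mathcal E_k)V=\mathsf I_n\).
\item From step 2, \(E_1^*\mathcal E_k\) has a right inverse, so it has full row rank and \((E_1^*\mathcal E_k)(E_1^*\mathcal E_k)^+=\mathsf I_n\). Multiplying the lower-block identity of step 1 on the right by \((E_1^*\mathcal E_k)^+\) yields \(P_k=(E_2^*\mathcal E_k)(E_1^*\mathcal E_k)^+\).
\item For the norm bound, write \((E_1^*\mathcal E_k)^+=(E_1^*\mathcal E_k)^+(E_1^*\mathcal E_k)V\), which holds by step 2. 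The factor \((E_1^*\mathcal E_k)^+(E_1^*\mathcal E_k)\) is an orthogonal projector, so the norm is at most \(\|V\|=\sqrt{1+\|P_k\|^2}\).
\end{enumerate}

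None of these steps is a real obstacle. The only point to check is that the theorem's projector is exactly idempotent with the stated range, which Theorem~\ref{thm:rr-seeded-projector-main} asserts for the exact \(\mathcal E_k\). Finiteness of \(P_k\), needed for \(V\) to be a genuine graph basis, comes from Lemma~\ref{lem:rr-graph-propagation-app}.
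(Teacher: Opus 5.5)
Your proposal is correct and follows the paper's proof, which likewise takes idempotence and the graph range of \(\mathcal E_k\) from Theorem~\ref{thm:rr-seeded-projector-main} and applies Lemma~\ref{lem:dre-projector-block-recovery-app} with \(\mathcal E=\mathcal E_k\) and \(P=P_k\). Your four unpacked steps reproduce that lemma's own proof, including the orthogonal-projector argument for the pseudoinverse norm bound.
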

\begin{proof}
Theorem~\ref{thm:rr-seeded-projector-main} makes \(\mathcal E_k\)
an idempotent with range \(\operatorname{ran}[\mathsf I_n;P_k]\).
Apply Lemma~\ref{lem:dre-projector-block-recovery-app} with
\(\mathcal E=\mathcal E_k\) and \(P=P_k\).
\end{proof}

Both selected blocks inherit \(\alpha_{\mathcal E_k}\), the
normalization of the decaying graph projector, chosen to satisfy
\(\alpha_{\mathcal E_k}\ge\|\mathcal E_k\|\ge1\). A supplied bound
\(\max_{0\le j\le k}\|P_j\|\le M\) therefore permits the output scale
\begin{equation}
\alpha_{P_k}=8\alpha_{\mathcal E_k}\sqrt{1+M^2}.
\label{eq:rr-direct-output-scale}
\end{equation}
The upper-block pseudoinverse uses normalization
\(8\sqrt{1+M^2}\); its product with the lower block has exactly
the scale in \eqref{eq:rr-direct-output-scale}.
Equation~\eqref{eq:rr-solution-norm-app}, maximized over the requested
steps and evaluated using supplied coefficient bounds, provides one
admissible \(M\). It may grow with \(k\):
\(A=2,G=0,Q=1,P_0=0\) gives \(P_k=(4^k-1)/3\).

\subsection{Power-weighted annular quadrature and normalization}
\label{app:rr-quadrature}

The unit-circle separation in
Theorem~\ref{thm:factor-circle-app}, with \(S=\mathcal S_F\) and
\(\tau=1/2\), gives the contours
\begin{equation}
\Gamma_<:\ |z|=1-\eta_{\mathbb T}/2,\qquad
\Gamma_>=\partial\{1+\eta_{\mathbb T}/2<|z|<3\alpha_{\mathcal S_F}\}.
\label{eq:rr-quadrature-contours-app}
\end{equation}
The interior circle and the outer boundary of the annulus are
counterclockwise; the annular inner boundary is clockwise.
In particular, \(\Gamma_>\) excludes the origin, where the inverse
power weight is singular. The factor \(\mathcal R_{\rm RR}\) below
is evaluated on these complete contours.
The gap satisfies \(0<\eta_{\mathbb T}\le1\), and
\(\alpha_{\mathcal S_F}>1\). To select the contours algorithmically,
one may replace \(\eta_{\mathbb T}\) in their radii, the strip widths,
and the node counts by any supplied positive lower bound. All estimates
then use that lower bound and the factor on the resulting contours.

Each required matrix is a weighted Riesz target
\begin{equation}
\begin{gathered}
T=\frac1{2\pi\mathrm i}\int_{\Gamma_\chi}
g(z)(z\mathsf I-\mathcal S_F)^{-1}R\,\mathrm dz,\\[1mm]
\begin{array}{c|ccc}
T&\chi&g(z)&R\\ \hline
\Pi_>&>&1&\mathsf I\\
\Pi_>R_0&>&1&R_0\\
\mathcal S_F^k\Pi_<R_0&<&z^k&R_0\\
\mathcal S_F^{-k}\Pi_>&>&z^{-k}&\mathsf I
\end{array}
\end{gathered}
\label{eq:rr-weighted-riesz-blocks-app}
\end{equation}
Here \(\alpha_R=1\) for \(R=\mathsf I\), and
\(\alpha_R=\alpha_{R_0}\) for \(R=R_0\).
On a circle of radius \(\rho\) and orientation \(s\in\{1,-1\}\),
the \(m\)-point trapezoidal rule uses
\begin{equation}
z_j=\rho e^{2\pi\mathrm i j/m},\qquad
\omega_j=\frac{s z_jg(z_j)}m,\qquad 0\le j<m.
\label{eq:rr-quadrature-nodes-app}
\end{equation}
The complete finite sum \(S_\Gamma\) is the sum of
\(\omega_j(z_j\mathsf I-\mathcal S_F)^{-1}R\) over all circles
in \(\Gamma_\chi\). Thus the LCU coefficient already contains the
power weight, which is applied only once.

\begin{proposition}[Power-weighted circular construction]
\label{prop:rr-weighted-quadrature-app}
Let \(k\ge0\) be an integer. Assume exact encodings of
\(\mathcal S_F\) and \(R_0\), and the coherent access and supplied
uniform constant-factor node inverse-norm estimates of
Proposition~\ref{thm:local-contour-sum}.
For each boundary circle in \eqref{eq:rr-quadrature-contours-app},
use the following orientation and angular strip width:
\begin{equation}
\begin{array}{c|c|c}
\rho&s&a\\ \hline
1-\eta_{\mathbb T}/2&+1&
\displaystyle\log\frac{1-\eta_{\mathbb T}/4}{1-\eta_{\mathbb T}/2}\\[2mm]
1+\eta_{\mathbb T}/2&-1&
\displaystyle\log\frac{1+3\eta_{\mathbb T}/4}{1+\eta_{\mathbb T}/2}\\[2mm]
3\alpha_{\mathcal S_F}&+1&\log(4/3)
\end{array}
\label{eq:rr-quadrature-strips-app}
\end{equation}
For every target in \eqref{eq:rr-weighted-riesz-blocks-app},
the complete quadrature error is bounded by
\begin{equation}
\|S_\Gamma-T\|
\le\sum_{\text{circles of }\Gamma_\chi}
\frac{8\alpha_R\mathcal R_{\rm RR}}{e^{am}-1}.
\label{eq:rr-weighted-quadrature-error-app}
\end{equation}
Given a target tolerance \(\varepsilon_T>0\), choose the node
count separately on each circle as
\begin{equation}
\begin{aligned}
m&=\left\lceil\frac1a\log\!\left(1+
\frac{16\alpha_R\mathcal R_{\rm RR}}{\varepsilon_T}\right)
\right\rceil,&&\chi=<,\\[1mm]
m&=\left\lceil\frac1a\log\!\left(1+
\frac{32\alpha_R\mathcal R_{\rm RR}}{\varepsilon_T}\right)
\right\rceil,&&\chi=>.
\end{aligned}
\label{eq:rr-weighted-quadrature-order-app}
\end{equation}
Then \(\|S_\Gamma-T\|\le\varepsilon_T/2\), with total node count
\begin{equation}
m_T=\sum_{\text{circles of }\Gamma_\chi}m
=O\!\left(\eta_{\mathbb T}^{-1}
\log\!\left(e+\frac{\alpha_R\mathcal R_{\rm RR}}
{\varepsilon_T}\right)\right).
\label{eq:rr-weighted-quadrature-count-app}
\end{equation}
The outer circle requires only the logarithmic factor in this bound.

Choose the supplied inverse normalizations with
\(4\|(z_j\mathsf I-\mathcal S_F)^{-1}\|\le\beta_j
\le c_\beta\|(z_j\mathsf I-\mathcal S_F)^{-1}\|\),
where \(c_\beta\) is fixed independently of the node.
The raw normalization of the finite-sum encoding satisfies
\begin{equation}
\begin{aligned}
\alpha_T^{\rm quad}
&=\alpha_R\sum_j|\omega_j|\beta_j\\
&\le c_\beta\alpha_R\mathcal R_{\rm RR}
\sum_{\text{circles of }\Gamma_\chi}
\frac{\rho}{\rho+\alpha_{\mathcal S_F}}
\le2c_\beta\alpha_R\mathcal R_{\rm RR}.
\end{aligned}
\label{eq:rr-weighted-quadrature-scales-app}
\end{equation}
In particular, this rule gives
\begin{equation}
\begin{aligned}
\alpha_{\Pi_>}^{\rm quad},\quad
\alpha_{\mathcal S_F^{-k}\Pi_>}^{\rm quad}
&=O(\mathcal R_{\rm RR}),\\
\alpha_{\Pi_>R_0}^{\rm quad},\quad
\alpha_{\mathcal S_F^k\Pi_<R_0}^{\rm quad}
&=O(\alpha_{R_0}\mathcal R_{\rm RR}).
\end{aligned}
\label{eq:rr-four-block-normalizations-app}
\end{equation}
After assigning the remaining \(\varepsilon_T/2\) to inverse
and LCU implementation, a decoded approximation to \(T\) has error
at most \(\varepsilon_T\). It uses
\begin{equation}
O\!\left(\mathcal R_{\rm RR}
\log\!\left(e+\frac{\mathcal R_{\rm RR}\alpha_T^{\rm quad}}
{\varepsilon_T}\right)\right)
\label{eq:rr-weighted-block-query-app}
\end{equation}
queries to the supplied \(\mathcal S_F\) encoding and its adjoint,
and one call to the encoding of \(R\).
The absolute quadrature and normalization bounds are uniform in
\(k\) and the node counts, and allow arbitrary Jordan structure.
\end{proposition}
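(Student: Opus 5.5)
The plan has three parts: the quadrature error, the node count, and then the normalization and query cost. Throughout we use Lemma~\ref{lem:analytic-quadrature-app} on each boundary circle, Theorem~\ref{thm:factor-circle-app} with \(\tau=1/2\), and Proposition~\ref{thm:local-contour-sum} for the implementation.

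\textbf{Quadrature error.} Parametrize a circle of radius \(\rho\) and orientation \(s\) as \(z(t)=\rho e^{\mathrm i t}\). Then \(z'(t)=\mathrm i z(t)\), and the integrand \eqref{eq:quadrature-integrand-app} becomes \(F(t)=s\,z\,g(z)(z\mathsf I-\mathcal S_F)^{-1}R\). This reproduces the weights \eqref{eq:rr-quadrature-nodes-app}. Complexifying \(t\) by \(\pm\mathrm i a\) sweeps the annulus \(\rho e^{-a}\le|z|\le\rho e^{a}\). We check that the strip widths in \eqref{eq:rr-quadrature-strips-app} keep this annulus inside the appropriate radial margin on each circle:
\begin{itemize}
\item for the interior circle, the annulus lies between radii \(1-3\eta_{\mathbb T}/4\) and \(1-\eta_{\mathbb T}/4\), at least to first order; the exact outer radius is \(1-\eta_{\mathbb T}/4\);
\item for the annular inner circle, it lies in \([1+3\eta_{\mathbb T}/8,\,1+3\eta_{\mathbb T}/4]\), roughly;
\item for the outer circle, it lies in \([9\alpha_{\mathcal S_F}/4,\,4\alpha_{\mathcal S_F}]\).
\end{itemize}
On each such annulus we bound the weight and the resolvent. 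The weight satisfies \(|z^k|\le1\) inside the unit disk and \(|z^{-k}|\le1\) outside it, so \(|g|\le1\) on every annulus. For the resolvent, the distance of each annulus from the unit circle is at least \(\eta_{\mathbb T}/4\). Singular-value perturbation from the unit-circle gap \(\eta_{\mathbb T}\), in the spirit of Theorem~\ref{thm:factor-circle-app}, then gives \(\sigma_{\min}(z\mathsf I-\mathcal S_F)\gtrsim\eta_{\mathbb T}/4\) there. We want this bound to be comparable with the contour values used in \(\mathcal R_{\rm RR}\). This is the main obstacle: \(\mathcal R_{\rm RR}\) is defined only on the contours, not on the strips. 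We plan two ways to handle it:
\begin{itemize}
\item First choice: take a contour point \(z_0\) at the same angle. Apply Lemma~\ref{lem:inverse-input-perturbation} with \(\Delta T=(z-z_0)\mathsf I\). Since \(|z-z_0|\le\eta_{\mathbb T}/4\le\tfrac12\|(z_0\mathsf I-\mathcal S_F)^{-1}\|^{-1}\) (because \(\|(z_0\mathsf I-\mathcal S_F)^{-1}\|\le2/\eta_{\mathbb T}\) on the adjacent circles), we obtain \(\|(z\mathsf I-\mathcal S_F)^{-1}\|\le2\|(z_0\mathsf I-\mathcal S_F)^{-1}\|\).
\item Fallback: on the outer circle, a Neumann series directly gives the bound.
\end{itemize}
Combining the two, \(|z|\,\|(z\mathsf I-\mathcal S_F)^{-1}\|\le 2\cdot\tfrac{|z|}{|z_0|}(|z_0|+\alpha_{\mathcal S_F})\|(z_0\mathsf I-\mathcal S_F)^{-1}\|\le 4\mathcal R_{\rm RR}\), since \(|z|/|z_0|\le e^a\le 2\). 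Hence \(\sup\|F\|\le4\alpha_R\mathcal R_{\rm RR}\). Inserting this into \eqref{eq:analytic-quadrature-error-app} gives \eqref{eq:rr-weighted-quadrature-error-app}.

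\textbf{Node count.} The interior contour has one circle and the exterior contour has two. Splitting \(\varepsilon_T/2\) evenly over the circles yields \eqref{eq:rr-weighted-quadrature-order-app} by \eqref{eq:analytic-quadrature-count-app}. The two inner strip widths are \(a=\Theta(\eta_{\mathbb T})\) because \(\eta_{\mathbb T}\le1\), while the outer width \(a=\log(4/3)\) is constant. This gives \eqref{eq:rr-weighted-quadrature-count-app}.

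\textbf{Normalization and query cost.} At each node, \(|\omega_j|\le\rho/m\) because \(|g|\le1\). Also, \(\beta_j\le c_\beta\|(z_j\mathsf I-\mathcal S_F)^{-1}\|\le c_\beta\mathcal R_{\rm RR}/(\rho+\alpha_{\mathcal S_F})\). Summing over the \(m\) nodes of a circle gives \(c_\beta\alpha_R\mathcal R_{\rm RR}\rho/(\rho+\alpha_{\mathcal S_F})\). Each circle contributes less than one; the two small circles contribute at most \(\tfrac{3/2}{1+\alpha_{\mathcal S_F}}<\tfrac34\) together, and the outer one \(3/4\). The total is therefore at most \(2c_\beta\alpha_R\mathcal R_{\rm RR}\), which gives \eqref{eq:rr-weighted-quadrature-scales-app} and \eqref{eq:rr-four-block-normalizations-app}. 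Finally, Proposition~\ref{thm:local-contour-sum} with \(\varepsilon_{\rm impl}=\varepsilon_T/2\) gives the query bound \eqref{eq:rr-weighted-block-query-app}; its hypothesis \(\mathcal R_\Gamma\le\mathcal R_{\rm RR}\) holds by definition. All of these bounds use only \(|g|\le1\) and resolvent estimates, so they are uniform in \(k\) and valid for arbitrary Jordan structure.
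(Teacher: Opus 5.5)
Your proposal is correct and follows the paper's proof essentially step for step. The shared steps are the same-angle contour point with radial displacement \(\rho(e^{a}-1)=\eta_{\mathbb T}/4\) and Lemma~\ref{lem:inverse-input-perturbation} (factor \(2\)), the Neumann bound on the outer circle, the common integrand bound \(4\alpha_R\mathcal R_{\rm RR}\) fed into Lemma~\ref{lem:analytic-quadrature-app}, the split of \(\varepsilon_T/2\) over one or two circles, and \(\sum_j|\omega_j|\le\rho\) together with \(\beta_j\le c_\beta\mathcal R_{\rm RR}/(\rho+\alpha_{\mathcal S_F})\) for the normalization.

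A few small slips need fixing, none of which affects the argument:
\begin{itemize}
\item The inner annular strip actually reaches down to about \(1+\eta_{\mathbb T}/4\), not \(1+3\eta_{\mathbb T}/8\).
\item A direct singular-value bound on the strips requires their distance to the unit circle to be at most \(3\eta_{\mathbb T}/4\), not at least \(\eta_{\mathbb T}/4\). Your same-angle argument makes this point moot.
\item The normalization sum runs over the circles of a single \(\Gamma_\chi\). There, the fact that each circle contributes less than one already gives the factor \(2\). So drop your combined estimate \(\tfrac{3/2}{1+\alpha_{\mathcal S_F}}\) for the interior and inner annular circles, which is false as \(\eta_{\mathbb T}\to0\).
\end{itemize}
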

\begin{proof}
Theorem~\ref{thm:factor-circle-app} ensures that these contours
enclose the indicated complete branches. The Riesz functional calculus
in Appendix~\ref{app:riesz-calculus} identifies the four integrals
in \eqref{eq:rr-weighted-riesz-blocks-app}, including inverse powers
on the annulus, without assuming diagonalizability.

Write \(z(t)=\rho e^{\mathrm it}\) for one circle, with
\(|\operatorname{Im}t|\le a\).
For the interior circle, its largest strip radius is
\(1-\eta_{\mathbb T}/4\); for the inner annular circle, it is
\(1+3\eta_{\mathbb T}/4\).
In both cases the outward radial displacement from the real-parameter
circle is exactly \(\eta_{\mathbb T}/4\), and the inward
displacement is smaller, since \(1-e^{-a}\le e^a-1\).
The real-parameter circles have smallest singular values at least
\(\eta_{\mathbb T}/2\), by perturbation from the unit circle.
With \(z_0=\rho e^{\mathrm i\operatorname{Re}t}\), apply
Lemma~\ref{lem:inverse-input-perturbation} to
\(T=z_0\mathsf I-\mathcal S_F\) and
\(\Delta T=(z(t)-z_0)\mathsf I\). Their norm product is at most
\((\eta_{\mathbb T}/4)(2/\eta_{\mathbb T})=1/2\), so
\begin{equation}
\|(z(t)\mathsf I-\mathcal S_F)^{-1}\|
\le2\|(z_0\mathsf I-\mathcal S_F)^{-1}\|
\le\frac{2\mathcal R_{\rm RR}}
{\rho+\alpha_{\mathcal S_F}}.
\label{eq:rr-strip-resolvent-app}
\end{equation}
Both closed strips therefore avoid every resolvent pole.
The interior strip lies strictly inside the unit disk, while the
annular inner strip lies strictly outside it, because its smallest
radius is at least \(1+\eta_{\mathbb T}/4\).
Thus \(|z(t)^k|\le1\) on the interior strip and
\(|z(t)^{-k}|\le1\) on the annular inner strip for every \(k\ge0\).

For the outer circle, the strip radii range from
\(9\alpha_{\mathcal S_F}/4\) to \(4\alpha_{\mathcal S_F}\).
They are outside the spectrum and the unit disk.
Lemma~\ref{lem:inverse-input-perturbation}, with
\(T=z(t)\mathsf I\) and \(\Delta T=-\mathcal S_F\), gives
\[
|z(t)|\|(z(t)\mathsf I-\mathcal S_F)^{-1}\|
\le\frac{|z(t)|}{|z(t)|-\alpha_{\mathcal S_F}}
\le\frac95,
\]
and again \(|z(t)^{-k}|\le1\).
Since \(\mathcal R_{\rm RR}\ge1\) follows directly from
\(|z|+\alpha_{\mathcal S_F}\ge\|z\mathsf I-\mathcal S_F\|\),
all three strips satisfy the common full-integrand bound
\begin{equation}
\sup_{|\operatorname{Im}t|\le a}
\|s z(t)g(z(t))(z(t)\mathsf I-\mathcal S_F)^{-1}R\|
\le4\alpha_R\mathcal R_{\rm RR}.
\label{eq:rr-strip-integrand-app}
\end{equation}
For the near-unit circles, this uses
\(|z(t)|\le\rho+\eta_{\mathbb T}/4
<\rho+\alpha_{\mathcal S_F}\) in
\eqref{eq:rr-strip-resolvent-app}. Every bound has a strictly
positive margin, so the full integrand is holomorphic on a
neighborhood of the closed strip.

Lemma~\ref{lem:analytic-quadrature-app}, applied to the periodic
angular integrand with bound \eqref{eq:rr-strip-integrand-app},
gives circle error at most
\(8\alpha_R\mathcal R_{\rm RR}/(e^{am}-1)\).
Adding the oriented components proves
\eqref{eq:rr-weighted-quadrature-error-app}.
The choices in \eqref{eq:rr-weighted-quadrature-order-app} give
error at most \(\varepsilon_T/2\) on the sole interior circle,
or \(\varepsilon_T/4\) on each of the two exterior circles.
The first two strip widths are bounded below by constant multiples
of \(\eta_{\mathbb T}\), whereas the third is constant, proving
\eqref{eq:rr-weighted-quadrature-count-app}.
This argument bounds absolute error and imposes no condition
\(m>k\); it already includes all Fourier indices created by the power.

At the real-parameter nodes, \(|g(z_j)|\le1\) and
\(\beta_j\le c_\beta\mathcal R_{\rm RR}/
(|z_j|+\alpha_{\mathcal S_F})\).
The sum of the absolute geometric weights on a circle is \(\rho\).
Substitution in the LCU normalization proves
\eqref{eq:rr-weighted-quadrature-scales-app} and
\eqref{eq:rr-four-block-normalizations-app}.
The same integral estimate gives
\(\|T\|\le2\alpha_R\mathcal R_{\rm RR}\).
These are upper bounds for the specified rule, not lower bounds
on its raw normalization.

Finally, apply Proposition~\ref{thm:local-contour-sum} to the
finite sum with implementation tolerance \(\varepsilon_T/2\).
Its inverse and LCU error allocation, proved in
Appendix~\ref{app:contour-proofs}, gives
\eqref{eq:rr-weighted-block-query-app}; the right-input encoding
is used once. If the finite-sum normalization is below this tolerance,
its zero approximation suffices instead. Adding the quadrature error
proves the complete decoded-error claim.
\end{proof}

The node count controls the node register, coefficient preparation,
and arithmetic; it does not multiply the matrix-query count in
\eqref{eq:rr-weighted-block-query-app}.
Computing the power weights still depends on the representation of
\(k\) and the required working precision. Uniform absolute
quadrature error therefore does not make all algorithmic resources
independent of \(k\). Likewise, access to the supplied lift encoding
does not make the construction of \(A^{-1}\) from the original
data free. Approximate input encodings are handled separately by
Proposition~\ref{prop:contour-input-errors-app} on these fixed
ideal contours, with an input-error budget added to quadrature
and implementation errors.

For the specified contours,
Theorem~\ref{thm:factor-circle-app} gives
\begin{equation}
\mathcal R_{\rm RR}
\le\frac{2(1+\alpha_{\mathcal S_F})+\eta_{\mathbb T}}
{\eta_{\mathbb T}}
=O\!\left(\frac{1+\alpha_{\mathcal S_F}}{\eta_{\mathbb T}}\right).
\label{eq:rr-quadrature-spectral-bound-app}
\end{equation}
An additional comparison
\(\eta_{\mathbb T}\ge c\,\sigma_{\min}(\mathcal S_F)\),
with fixed \(c>0\), changes this to
\(O(\alpha_{\mathcal S_F}/\sigma_{\min}(\mathcal S_F))\).
This becomes \(O(\kappa(\mathcal S_F))\) only when
\(\alpha_{\mathcal S_F}=O(\|\mathcal S_F\|)\).
Alternatively, under the coefficient hypotheses of
Proposition~\ref{prop:factor-rr-resolvent-app}, use the reciprocal of
the right-hand side of \eqref{eq:factor-rr-resolvent-app} as a
supplied lower bound on \(\eta_{\mathbb T}\).
Substitution in \eqref{eq:rr-quadrature-spectral-bound-app} bounds
\(\mathcal R_{\rm RR}\) by \(O(1+\alpha_{\mathcal S_F})\) times
that same right-hand side. These substitutions apply to the stated
contour construction.

\subsection{Stability and query complexity}
\label{app:rr-complexity}

Use the coherent inverse and contour access of
Section~\ref{sec:quantum-tools}, and construct the four targets in
\eqref{eq:rr-weighted-riesz-blocks-app} by
Proposition~\ref{prop:rr-weighted-quadrature-app}, or a supplied
rule with the same error and query bounds. Write \(\alpha_T^{\rm quad}\)
for a target's raw quadrature normalization, \(\alpha_T\) for its
actual normalization, and \(r_T\ge1\) for the query overhead of any
normalization reduction. Define \(r_{\mathcal E_k}\) similarly;
an unused reduction has overhead one. These scales and bounds are
supplied uniformly over the required accuracies.

Assume that \(\Pi_>R_0\) has full column rank and that the solution
bound \(M\) is supplied. We write its smallest initial singular
value explicitly; a supplied positive lower bound may replace it
in every threshold, precision, normalization, and query bound below.
For the compact bound in Theorem~\ref{thm:rr-construction-main},
this lower bound is within a constant factor of the true singular
value, and we assume
\begin{equation}
\alpha_{\Pi_>R_0}=\Theta(\|\Pi_>R_0\|),\qquad
r_{\mathcal E_k},\ \max_T r_T=O(1).
\label{eq:rr-direct-calibration}
\end{equation}
Fix accuracy-independent normalization bounds before choosing the
internal precision, padding the encodings to the declared scales when
needed. The output scale remains \eqref{eq:rr-direct-output-scale}; the
solution bound need not be constant.

\begin{proposition}[Stability and cost of direct recovery]
\label{prop:rr-direct-projector-stability}
Suppose an encoding of \(\mathcal E_k\), with known normalization
\(\alpha_{\mathcal E_k}\ge1\) and decoded error at most \(\delta\),
costs \(Q_{\mathcal E_k}(\delta)\) queries per call. Set
\(\alpha_{P_k}\) by \eqref{eq:rr-direct-output-scale}. For
\(0<\varepsilon\le1\), it suffices to use
\begin{equation}
\delta=\frac{\varepsilon}{8(1+M^2)},\qquad
\varepsilon_{\rm inv}=\frac{\varepsilon}{4(1+\alpha_{\mathcal E_k})}
\label{eq:rr-direct-projector-budget}
\end{equation}
for the projector error and the upper-block inverse implementation
error, respectively. With inverse threshold
\(4\alpha_{\mathcal E_k}/\alpha_{P_k}\) and decoded product error
at most \(\varepsilon/4\), the result has error at most
\(\varepsilon\) and query cost
\begin{equation}
O\!\left(Q_{\mathcal E_k}(\delta)\,\alpha_{P_k}
\log\!\left(e+\frac{\alpha_{P_k}}{\varepsilon}\right)\right).
\label{eq:rr-direct-projector-local-query}
\end{equation}
\end{proposition}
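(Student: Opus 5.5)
This proposition is the discrete analogue of Proposition~\ref{prop:dre-direct-projector-stability-app}, and I will prove it by the same route. Lemma~\ref{lem:dre-projector-block-recovery-app} is stated for an arbitrary idempotent whose range is a graph, so it carries over unchanged. Corollary~\ref{cor:rr-projector-block-recovery-app} supplies the exact identities
\((E_1^*\mathcal E_k)[\mathsf I_n;P_k]=\mathsf I_n\) and \(E_2^*\mathcal E_k=P_kE_1^*\mathcal E_k\), together with the bound \(\|(E_1^*\mathcal E_k)^+\|\le\sqrt{1+\|P_k\|^2}\le\sqrt{1+M^2}\). The last inequality uses the supplied bound \(\max_{j\le k}\|P_j\|\le M\).

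\textbf{Step 1 (perturbed upper block).} Write \(\widetilde{\mathcal E}_k=\mathcal E_k+\Delta\) with \(\|\Delta\|\le\delta\). Then
\(\|(E_1^*\Delta)(E_1^*\mathcal E_k)^+\|\le\delta\sqrt{1+M^2}\le 1/8\).
Consequently \((E_1^*\widetilde{\mathcal E}_k)(E_1^*\mathcal E_k)^+\) is invertible, and \((E_1^*\mathcal E_k)^+\) times its inverse is a right inverse of \(E_1^*\widetilde{\mathcal E}_k\). The perturbed upper block therefore keeps full row rank. Since the Moore--Penrose inverse is the minimum-norm right inverse, we get \(\|(E_1^*\widetilde{\mathcal E}_k)^+\|\le 2\sqrt{1+M^2}\).

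\textbf{Step 2 (product error).} Combining \(E_2^*\mathcal E_k=P_kE_1^*\mathcal E_k\) with the right-inverse property gives
\((E_2^*\widetilde{\mathcal E}_k)(E_1^*\widetilde{\mathcal E}_k)^+-P_k=[-P_k,\ \mathsf I]\,\Delta\,(E_1^*\widetilde{\mathcal E}_k)^+\).
Its norm is at most \(\sqrt{1+M^2}\,\delta\cdot2\sqrt{1+M^2}=2(1+M^2)\delta=\varepsilon/4\). This step does not need \(\widetilde{\mathcal E}_k\) to be idempotent, because both block rows come from the same decoded matrix.

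\textbf{Step 3 (implementation and normalization).} Step~1 gives \(\|(E_1^*\widetilde{\mathcal E}_k)^+\|\le 2\sqrt{1+M^2}\), so the nonzero singular values of \(E_1^*\widetilde{\mathcal E}_k\) are at least \(1/(2\sqrt{1+M^2})=4\alpha_{\mathcal E_k}/\alpha_{P_k}\), using \eqref{eq:rr-direct-output-scale}. The upper-block encoding has scale \(\alpha_{\mathcal E_k}\), so Proposition~\ref{prop:local-inverse} applies with threshold \(\gamma=4\alpha_{\mathcal E_k}/\alpha_{P_k}\). It yields inverse normalization \(O(1/\gamma)\); padding this to exactly \(\alpha_{P_k}/\alpha_{\mathcal E_k}=8\sqrt{1+M^2}\) makes the product with the lower block, which has scale \(\alpha_{\mathcal E_k}\), have exactly scale \(\alpha_{P_k}\).

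The precision condition of that proposition is \(\alpha_{\mathcal E_k}\varepsilon_{\rm inv}\le1\), which holds because \(\varepsilon\le1\). The inverse error then contributes at most \(\|E_2^*\widetilde{\mathcal E}_k\|\varepsilon_{\rm inv}\le\alpha_{\mathcal E_k}\varepsilon_{\rm inv}\le\varepsilon/4\). Together with the allotted product error \(\varepsilon/4\) and the Step~2 error \(\varepsilon/4\), the total is at most \(\varepsilon\).

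\textbf{Query count.} The inverse costs \(O((\alpha_{\mathcal E_k}/\gamma)\log(e+1/(\gamma\varepsilon_{\rm inv})))\) calls to the upper-block encoding, each costing \(Q_{\mathcal E_k}(\delta)\). Here \(\alpha_{\mathcal E_k}/\gamma=O(\alpha_{P_k})\). For the logarithm, \(1/(\gamma\varepsilon_{\rm inv})=O(\alpha_{P_k}/\varepsilon)\), since \(1/\gamma=\alpha_{P_k}/(4\alpha_{\mathcal E_k})\), \(1/\varepsilon_{\rm inv}=4(1+\alpha_{\mathcal E_k})/\varepsilon\), and \(\alpha_{\mathcal E_k}\ge1\). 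One further call to the lower block is absorbed, which gives \eqref{eq:rr-direct-projector-local-query}.

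\textbf{Main obstacle.} The delicate point is Step~1: bounding the pseudoinverse of the perturbed rectangular block uniformly. Norm perturbation of pseudoinverses fails if rank can drop, so the rank must be controlled. I would handle this through the explicit right-inverse construction rather than general pseudoinverse perturbation theory. The rest is bookkeeping identical to the DRE case.
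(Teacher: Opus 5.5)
Your proposal is correct and is essentially the paper's proof. The paper substitutes \(\mathcal E_k\), \(P_k\), and \(\alpha_{P_k}\) into Proposition~\ref{prop:dre-direct-projector-stability-app}, whose argument is exactly your perturbed right inverse, the identity \([-P_k,\ \mathsf I]\,\Delta\,(E_1^*\widetilde{\mathcal E}_k)^+\), and the threshold \(4\alpha_{\mathcal E_k}/\alpha_{P_k}\); one small correction is that no padding is needed in Step~3, since the primitive's construction gives inverse normalization exactly \(4/\gamma=\alpha_{P_k}/\alpha_{\mathcal E_k}\), and padding could not lower a larger constant hidden in \(O(1/\gamma)\) anyway.
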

\begin{proof}
Apply Proposition~\ref{prop:dre-direct-projector-stability-app}
with \(P(t)=P_k\), \(\mathcal E(t)=\mathcal E_k\),
\(\alpha_{\mathcal E}=\alpha_{\mathcal E_k}\),
\(\alpha_{P(t)}=\alpha_{P_k}\), and \(\delta_E=\delta\).
Corollary~\ref{cor:rr-projector-block-recovery-app} supplies its
exact graph identities, and \eqref{eq:rr-direct-output-scale}
is its permitted output normalization. Its thresholds, implementation
budget and query bound become
\eqref{eq:rr-direct-projector-budget} and
\eqref{eq:rr-direct-projector-local-query}.
The approximate projector need not be idempotent.
\end{proof}

The remaining task is to obtain the required projector accuracy.
The first pseudoinverse preserves the coordinates selected by the
initial data; its singular value, rather than the solution bound,
controls this step.

\begin{proposition}[Seed construction and total query bound]
\label{prop:rr-general-resources-app}
Under the assumptions above, there is an absolute \(c>0\) such that
the common decoded accuracy
\begin{equation}
\begin{aligned}
\eta={}&\frac{c\varepsilon}
{(1+M^2)(1+\alpha_{\Pi_>R_0}+\alpha_{\mathcal E_k})}\\
&\times\frac{1}{(1+\alpha_{\mathcal S_F^k\Pi_<R_0})
(1+\alpha_{\mathcal S_F^{-k}\Pi_>})
(1+\sigma_{\min}(\Pi_>R_0)^{-1})^2}
\end{aligned}
\label{eq:rr-outer-budget-app}
\end{equation}
suffices for the four weighted targets, the initial pseudoinverse,
and each fixed sum, product, and normalization reduction used to
assemble \(\mathcal E_k\). With threshold
\(\sigma_{\min}(\Pi_>R_0)/2\), direct composition gives
\begin{equation}
\alpha_{\mathcal E_k}^{\rm raw}
=\alpha_{\Pi_>}+
\frac{8\alpha_{\mathcal S_F^k\Pi_<R_0}
\alpha_{\mathcal S_F^{-k}\Pi_>}}{\sigma_{\min}(\Pi_>R_0)}.
\label{eq:rr-raw-normalizations-app}
\end{equation}
Combining this construction with direct recovery gives
\begin{equation}
\begin{aligned}
Q_{P_k}=O\!\Bigg(&\mathcal R_{\rm RR}
\frac{\alpha_{\Pi_>R_0}}{\sigma_{\min}(\Pi_>R_0)}
\alpha_{P_k}\,r_{\mathcal E_k}\max_T r_T\\
&\times\log^3\!\left(e+
\frac{\mathcal R_{\rm RR}(1+\alpha_{P_k})
(1+\sigma_{\min}(\Pi_>R_0)^{-1})
(1+\max_T\alpha_T^{\rm quad})}{\eta}\right)\Bigg).
\end{aligned}
\label{eq:rr-general-query-expanded-app}
\end{equation}
A common sufficient number of quadrature nodes for each target is
\begin{equation}
m_T=O\!\left(\eta_{\mathbb T}^{-1}
\log\!\left(e+\frac{\alpha_{R_0}\mathcal R_{\rm RR}}{\eta}\right)\right).
\label{eq:rr-final-node-counts-app}
\end{equation}
The coherent query bound does not multiply by this node count.
\end{proposition}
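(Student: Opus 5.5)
The proof will follow the DRE argument in Proposition~\ref{prop:dre-general-resources-app} almost line by line, with \(\mathcal S_F\), \(\Pi_>\), the power weights \(z^{\pm k}\) and the circular rule of Proposition~\ref{prop:rr-weighted-quadrature-app} replacing their continuous counterparts. First I will handle the initial pseudoinverse. Write \(\widetilde{\Pi_>R_0}\) for the decoded approximation, with error at most \(\eta\). The choice \eqref{eq:rr-outer-budget-app} gives \(\eta\le\sigma_{\min}(\Pi_>R_0)/2\), so Weyl's inequality keeps full column rank and \(\sigma_{\min}(\widetilde{\Pi_>R_0})\ge\sigma_{\min}(\Pi_>R_0)/2\). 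This justifies the threshold \(\sigma_{\min}(\Pi_>R_0)/2\) in Proposition~\ref{prop:local-inverse}, with output normalization \(8/\sigma_{\min}(\Pi_>R_0)\). The full-column-rank pseudoinverse perturbation identity then gives
\[
\|(\widetilde{\Pi_>R_0})^+-(\Pi_>R_0)^+\|\le 6\eta/\sigma_{\min}(\Pi_>R_0)^2 .
\]

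Next I will assemble \(\mathcal E_k\) as the sum of \(\Pi_>\) and the triple product \(\mathcal S_F^k\Pi_<R_0\cdot(\Pi_>R_0)^+\cdot\mathcal S_F^{-k}\Pi_>\), using direct LCU sums and products with separate signal ancillas. The normalization of the product is the product of normalizations, and the sum adds normalizations, which gives \eqref{eq:rr-raw-normalizations-app}. For the error I will telescope the triple product. Each factor is replaced one at a time. The exact factors are bounded by their normalizations, and the exact pseudoinverse is bounded by \(\sigma_{\min}^{-1}\). The resulting bound is
\[
O\bigl((1+\alpha_{\mathcal S_F^k\Pi_<R_0})(1+\alpha_{\mathcal S_F^{-k}\Pi_>})(1+\sigma_{\min}^{-1})^2\eta\bigr),
\]
together with additive \(\eta\) terms for \(\Pi_>\), for the inverse implementation, and for the assembly and re-encoding. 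Choosing \(c\) small then yields \(\|\widetilde{\mathcal E}_k-\mathcal E_k\|\le\varepsilon/(8(1+M^2))\). The same \(\eta\) also satisfies \(\eta\le\varepsilon/[4(1+\alpha_{\mathcal E_k})]\) and \(\alpha_{\Pi_>R_0}\eta\le1\), which are the precision conditions of Proposition~\ref{prop:local-inverse}. Proposition~\ref{prop:rr-direct-projector-stability} then gives final error at most \(\varepsilon\) at scale \(\alpha_{P_k}\).

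For the cost, each weighted block uses \(O(\mathcal R_{\rm RR}\log(e+\mathcal R_{\rm RR}\alpha_T^{\rm quad}/\eta))\) queries by \eqref{eq:rr-weighted-block-query-app}. The initial pseudoinverse calls the \(\Pi_>R_0\) block \(O((\alpha_{\Pi_>R_0}/\sigma_{\min})\log(e+\sigma_{\min}^{-1}/\eta))\) times. The final recovery calls the \(\mathcal E_k\) encoding \(O(\alpha_{P_k}\log(e+\alpha_{P_k}/\eta))\) times. Multiplying these nested costs, including the overheads \(r_T\) and \(r_{\mathcal E_k}\), and bounding each of the three logarithms by the common logarithm gives \eqref{eq:rr-general-query-expanded-app}. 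Substituting \(\varepsilon_T=\eta\) into \eqref{eq:rr-weighted-quadrature-count-app} gives the node count \eqref{eq:rr-final-node-counts-app}. This count does not multiply the query bound, because node processing is coherent.

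The main obstacle is making the single common \(\eta\) simultaneously dominate every propagated term. The pseudoinverse error carries \(\sigma_{\min}^{-2}\) and is multiplied by both outer normalizations, and this must be reconciled with the separate conditions required by the inverse primitive. I will check each term against the displayed product in \eqref{eq:rr-outer-budget-app} explicitly. I also need to confirm that the \(\alpha_{\mathcal E_k}\) term in the denominator of \(\eta\) covers the final precision condition.
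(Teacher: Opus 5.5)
Your proposal is correct and is essentially the paper's proof. The paper invokes the argument of Proposition~\ref{prop:dre-general-resources-app} under the substitutions \(\Pi_+\mapsto\Pi_>\), \(\gamma_+\mapsto\sigma_{\min}(\Pi_>R_0)\), \(\mathcal R_{\rm DRE}\mapsto\mathcal R_{\rm RR}\), and so on, and your write-up reproduces that argument step by step: the threshold \(\sigma_{\min}/2\), the \(6\eta/\sigma_{\min}^2\) pseudoinverse perturbation, the telescoped triple product, the inverse-primitive precision conditions, the three logarithms bounded by the common one, and the substitution of \(\eta\) into the circular node count. The only detail you leave implicit is that \(\alpha_R\le\alpha_{R_0}\) for both right inputs, because \(\alpha_{R_0}\ge\|R_0\|\ge1\); this is what makes the node count common to all four targets.
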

\begin{proof}
Use the stability and cost argument of
Proposition~\ref{prop:dre-general-resources-app} with the substitutions
\[
\begin{aligned}
\Pi_+&\mapsto\Pi_>,& \Pi_+R_0&\mapsto\Pi_>R_0,\\
e^{t\mathcal H_{\rm DRE}}\Pi_-R_0
&\mapsto\mathcal S_F^k\Pi_<R_0,&
e^{-t\mathcal H_{\rm DRE}}\Pi_+
&\mapsto\mathcal S_F^{-k}\Pi_>,\\
\gamma_+&\mapsto\sigma_{\min}(\Pi_>R_0),&
\mathcal E&\mapsto\mathcal E_k,\\
P(t)&\mapsto P_k,&
\mathcal R_{\rm DRE}&\mapsto\mathcal R_{\rm RR}.
\end{aligned}
\]
The corresponding actual normalizations, raw quadrature scales and
re-encoding overheads are replaced as well.
Theorem~\ref{thm:rr-seeded-projector-main} supplies the same
projector composition from four targets and full-column-rank
initialization;
\eqref{eq:rr-weighted-block-query-app} supplies the required weighted
block cost. Thus that argument gives the common precision
\eqref{eq:rr-outer-budget-app}, raw normalization
\eqref{eq:rr-raw-normalizations-app}, and three-logarithm query bound
\eqref{eq:rr-general-query-expanded-app}, with the absolute initial
singular value and every declared overhead included.
Substituting \(\eta\) in
\eqref{eq:rr-weighted-quadrature-count-app}, and using
\(\alpha_{R_0}\ge\|R_0\|\ge1\), gives
\eqref{eq:rr-final-node-counts-app}.
\end{proof}

Algorithm~\ref{alg:rr-direct-projector-construction} uses the supplied
bounds above and the working accuracy in \eqref{eq:rr-outer-budget-app}.

\begin{algorithm}[H]
\caption{RR: construct a block-encoding of \(P_k\)}
\label{alg:rr-direct-projector-construction}
\small
\begin{algorithmic}[1]
\Require Exact BEs of \(\mathcal S_F\) and
\(R_0=[\mathsf I_n;P_0]\), with their scales;
integer \(k\ge0\), \(0<\varepsilon\le1\).
\Ensure A BE of \(P_k\) with decoded error \(\le\varepsilon\)
and scale \(\alpha_{P_k}\) defined in Appendix~\ref{app:rr-recovery}.

\State Encode \(\Pi_>R_0\) using Algorithm~\ref{alg:weighted-riesz-be}:
\[
U_{\Pi_>R_0}\gets\operatorname{RieszBE}
(U_{\mathcal S_F},U_{\mathsf I},\Gamma_>,1,U_{R_0};\eta).
\]
\State Apply the pseudoinverse primitive to obtain \(U_{(\Pi_>R_0)^+}\).
\State Construct the other three weighted Riesz BEs:
\[
\begin{aligned}
U_{\Pi_>}&\gets\operatorname{RieszBE}
(U_{\mathcal S_F},U_{\mathsf I},\Gamma_>,1,U_{\mathsf I};\eta),\\
U_{\mathcal S_F^k\Pi_<R_0}
&\gets\operatorname{RieszBE}
(U_{\mathcal S_F},U_{\mathsf I},\Gamma_<,z^k,U_{R_0};\eta),\\
U_{\mathcal S_F^{-k}\Pi_>}
&\gets\operatorname{RieszBE}
(U_{\mathcal S_F},U_{\mathsf I},\Gamma_>,z^{-k},U_{\mathsf I};\eta).
\end{aligned}
\]
\State Assemble a BE of the DGP:
\[
\mathcal E_k=\Pi_>
+\mathcal S_F^k\Pi_<R_0(\Pi_>R_0)^+
\mathcal S_F^{-k}\Pi_>.
\]
\State Use both complete rows of the same \(U_{\mathcal E_k}\)
to encode
\[
P_k=(E_2^*\mathcal E_k)(E_1^*\mathcal E_k)^+.
\]
\Statex \Return \((U_{P_k},\alpha_{P_k},a_{P_k})\).
\end{algorithmic}
\end{algorithm}

Calls to the initial-column oracle obey the same bound. Coefficient
preparation and arithmetic for \(z^{\pm k}\) have separate gate
costs. In particular, the uniform quadrature estimate does not
assert that all resources are independent of \(k\). The explicit
logarithm also records the supplied encoding scales suppressed
by \(\widetilde O\).

\begin{corollary}[RR output normalization]
\label{cor:rr-construction-normalization-app}
For the specific contours and quadrature of
Proposition~\ref{prop:rr-weighted-quadrature-app}, a constant-factor
initial singular-value estimate, and direct sums and products,
one may choose
\begin{equation}
\alpha_{P_k}=O\!\left(\sqrt{1+M^2}\left(
\mathcal R_{\rm RR}
+\frac{\alpha_{R_0}\mathcal R_{\rm RR}^{\,2}}
{\sigma_{\min}(\Pi_>R_0)}\right)\right).
\label{eq:rr-direct-output-general-bound}
\end{equation}
If \(M=O(1)\) and \(\alpha_{R_0}=O(1)\), this gives
\eqref{eq:rr-output-normalization-bound}, uniformly in \(k\)
when the supplied bounds are uniform.
\end{corollary}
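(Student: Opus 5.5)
The plan mirrors the DRE proof of Corollary~\ref{cor:dre-construction-normalization-app}, with the power-weighted circular rule in place of the rectangles. First, read off the four raw block normalizations from Proposition~\ref{prop:rr-weighted-quadrature-app}. By \eqref{eq:rr-four-block-normalizations-app},
\(\alpha_{\Pi_>}\) and \(\alpha_{\mathcal S_F^{-k}\Pi_>}\) are \(O(\mathcal R_{\rm RR})\), while \(\alpha_{\mathcal S_F^k\Pi_<R_0}\) is \(O(\alpha_{R_0}\mathcal R_{\rm RR})\). These hold uniformly in \(k\) and in the node count, because \(|z^{\pm k}|\le1\) on the real-parameter nodes of the chosen circles. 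With direct sums and products and no reduction, the actual scales equal these raw scales.

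Next, substitute into the raw assembly formula \eqref{eq:rr-raw-normalizations-app}. Proposition~\ref{prop:rr-general-resources-app} uses the initial pseudoinverse threshold \(\sigma_{\min}(\Pi_>R_0)/2\). A constant-factor singular-value estimate in its place changes only constants. This gives
\(\alpha_{\mathcal E_k}=O(\mathcal R_{\rm RR}+\alpha_{R_0}\mathcal R_{\rm RR}^2/\sigma_{\min}(\Pi_>R_0))\).
Then insert this into the output scale \eqref{eq:rr-direct-output-scale}, \(\alpha_{P_k}=8\alpha_{\mathcal E_k}\sqrt{1+M^2}\), which yields \eqref{eq:rr-direct-output-general-bound}. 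The admissibility of this scale, namely \(\alpha_{\mathcal E_k}\ge\|\mathcal E_k\|\ge1\), comes from Corollary~\ref{cor:rr-projector-block-recovery-app}. If needed, pad to \(\max\{1,\cdot\}\), which is absorbed since \(\mathcal R_{\rm RR}\ge1\).

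For the reduction to \eqref{eq:rr-output-normalization-bound} when \(M,\alpha_{R_0}=O(1)\), I need the second term to absorb the first. Use the same integral estimate as in the proof of Proposition~\ref{prop:rr-weighted-quadrature-app}, which gives \(\|\Pi_>R_0\|\le2\alpha_{R_0}\mathcal R_{\rm RR}\). Hence \(\sigma_{\min}(\Pi_>R_0)\le\|\Pi_>R_0\|=O(\mathcal R_{\rm RR})\), so \(\mathcal R_{\rm RR}=O(\mathcal R_{\rm RR}^2/\sigma_{\min}(\Pi_>R_0))\). Uniformity in \(k\) follows because every constant used is independent of \(k\), given uniform supplied bounds \(M\), \(\alpha_{R_0}\), and the singular-value estimate.

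The only delicate point is keeping actual and raw normalizations equal, that is, checking that no reduction overhead or extra padding enters the direct composition. The sum \(\Pi_>+\text{product}\) has normalization equal to the sum of the normalizations. The factor \(8\) arises from the pseudoinverse normalization \(4/\gamma\) with \(\gamma=\sigma_{\min}/2\) in Proposition~\ref{prop:local-inverse}. Everything else is substitution.
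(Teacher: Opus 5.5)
Your proposal is correct and follows the paper's proof. You substitute the block scales of \eqref{eq:rr-four-block-normalizations-app} into \eqref{eq:rr-raw-normalizations-app} and then into \eqref{eq:rr-direct-output-scale}, and you absorb the first term using \(\sigma_{\min}(\Pi_>R_0)\le\|\Pi_>R_0\|\le2\alpha_{R_0}\mathcal R_{\rm RR}\) from the same contour integral estimate; your extra bookkeeping (the constant-factor threshold, padding, and \(k\)-independence of the power weights) fits the paper's standing conventions and only changes constants.
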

\begin{proof}
The four normalization bounds in
\eqref{eq:rr-four-block-normalizations-app}, substituted into
\eqref{eq:rr-raw-normalizations-app} and
\eqref{eq:rr-direct-output-scale}, give the first assertion.
When \(\alpha_{R_0}=O(1)\), the same contour estimates give
\(\sigma_{\min}(\Pi_>R_0)\le\|\Pi_>R_0\|=O(\mathcal R_{\rm RR})\).
Consequently the second term absorbs the first. This bound concerns
the specified quadrature and its direct assembly; reducing the
declared output normalization has its own query cost.
\end{proof}

\newtheorem*{rrproblemrestatement}{Problem~\ref{prob:quantum-rr}}
\begin{rrproblemrestatement}
For the data in Definition~\ref{def:problem-recursion} and a requested
integer \(k\ge0\), suppose that exact block-encodings of
\(\mathcal S_F\) and \(R_0\) in \eqref{eq:rr-lift-seed}, their
adjoints and controlled versions, are supplied with normalizations
\(\alpha_{\mathcal S_F}\) and \(\alpha_{R_0}\).
For \(0<\varepsilon\le1\), construct a block-encoding with decoded
matrix \(\widetilde P_k\) satisfying
\(\|\widetilde P_k-P_k\|\le\varepsilon\).
\end{rrproblemrestatement}

\newtheorem*{rrmainrestatement}{Theorem~\ref{thm:rr-construction-main}}
\begin{rrmainrestatement}
For Problem~\ref{prob:quantum-rr}, suppose that \(\Pi_>R_0\) has full
column rank, and supply a bound \(\|P_j\|\le M\) for
\(0\le j\le k\). Choose an interior contour \(\Gamma_<\) in
\(|z|<1\) and an exterior contour system \(\Gamma_>\) in
\(|z|>1\), each enclosing exactly its spectral branch, with
\(\Gamma_>\) having winding number zero about the closed unit disk.
Define
\begin{equation*}
\mathcal R_{\rm RR}
=\max_{\chi\in\{<,>\}}\sup_{z\in\Gamma_\chi}
(|z|+\alpha_{\mathcal S_F})
\|(z\mathsf I-\mathcal S_F)^{-1}\|.
\end{equation*}
Under the implementation and normalization assumptions of
Appendix~\ref{app:rr-complexity}, the algorithm returns an
\((\alpha_{P_k},a_{P_k},\varepsilon)\) block-encoding of \(P_k\), using
\begin{equation*}
Q_{\rm RR}=\widetilde O\!\left(
\mathcal R_{\rm RR}\,\kappa(\Pi_>R_0)\,\alpha_{P_k}\right)
\end{equation*}
queries to the supplied matrix oracles. Here \(\alpha_{P_k}\) is
the direct-product output normalization specified in that appendix;
\(\widetilde O\) suppresses logarithmic dependence on precision
and the supplied scales.
\end{rrmainrestatement}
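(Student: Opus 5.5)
The proof will mirror the proof of Theorem~\ref{thm:dre-construction-main}, replacing the continuous-time ingredients by their discrete counterparts established in this appendix. The argument has three stages: correctness of the encoded target, accuracy and normalization of the construction, and reduction of the general bound to the compact form.

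\textbf{Correctness.} Theorem~\ref{thm:rr-seeded-projector-main} shows that, under full column rank of \(\Pi_>R_0\), the decaying graph projector \(\mathcal E_k\) is idempotent with range \(\operatorname{ran}[\mathsf I_n;P_k]\). Corollary~\ref{cor:rr-projector-block-recovery-app} then gives
\[
P_k=(E_2^*\mathcal E_k)(E_1^*\mathcal E_k)^+ .
\]
Hence the final product formed in Algorithm~\ref{alg:rr-direct-projector-construction} is exactly the requested iterate. No further graph argument is needed, because the lift propagation was already verified in Lemma~\ref{lem:rr-graph-propagation-app}.

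\textbf{Accuracy and cost.} I will invoke Proposition~\ref{prop:rr-general-resources-app} with the common internal accuracy \(\eta\) of \eqref{eq:rr-outer-budget-app}. The four weighted blocks come from Proposition~\ref{prop:rr-weighted-quadrature-app}, or from any supplied rule with the same bounds, each at cost \eqref{eq:rr-weighted-block-query-app}. The initial pseudoinverse uses threshold \(\sigma_{\min}(\Pi_>R_0)/2\), or a calibrated lower bound in its place. Final recovery goes through Proposition~\ref{prop:rr-direct-projector-stability}. Together these yield the expanded bound \eqref{eq:rr-general-query-expanded-app}, whose three logarithms are polylogarithmic in precision and the supplied scales.

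\textbf{Reduction to the compact bound.} I substitute the calibration \eqref{eq:rr-direct-calibration}, namely \(\alpha_{\Pi_>R_0}=\Theta(\|\Pi_>R_0\|)\), a constant-factor singular-value estimate, and \(O(1)\) re-encoding overheads. Then
\[
\frac{\alpha_{\Pi_>R_0}}{\sigma_{\min}(\Pi_>R_0)}=O\bigl(\kappa(\Pi_>R_0)\bigr),
\]
and the product of overheads is constant. Absorbing the \(\log^3\) factor into \(\widetilde O\) gives \(Q_{\rm RR}=\widetilde O(\mathcal R_{\rm RR}\,\kappa(\Pi_>R_0)\,\alpha_{P_k})\). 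Here \(\alpha_{P_k}\) is the output scale fixed in \eqref{eq:rr-direct-output-scale}.

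\textbf{Main obstacle.} The delicate point is to confirm that the DRE stability argument transfers verbatim to the exterior branch. Two things need checking. First, the weights \(z^{\pm k}\) must have modulus at most one on the relevant contours. Proposition~\ref{prop:rr-weighted-quadrature-app} establishes this on the annular strips, including the clockwise inner circle that excludes the origin. It follows that the block normalizations in \eqref{eq:rr-four-block-normalizations-app} and the quadrature errors are uniform in \(k\). Second, \(\mathcal R_{\rm RR}\) defined on general contours must dominate the node cost \(s_j\beta_j\); this holds because \(s_j=|z_j|+\alpha_{\mathcal S_F}\).

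Finally, gate costs for computing the power weights and preparing coefficients are excluded, as the theorem's convention specifies.
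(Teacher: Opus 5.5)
Your proposal is correct and follows the paper's proof essentially step for step. Correctness comes from Theorem~\ref{thm:rr-seeded-projector-main} and Corollary~\ref{cor:rr-projector-block-recovery-app}, accuracy and cost from Propositions~\ref{prop:rr-direct-projector-stability} and~\ref{prop:rr-general-resources-app}, and the compact form from the calibration \eqref{eq:rr-direct-calibration}, which turns \(\alpha_{\Pi_>R_0}/\sigma_{\min}(\Pi_>R_0)\) into \(O(\kappa(\Pi_>R_0))\) with constant overheads. Your extra checks that \(|z^{\pm k}|\le1\) and \(s_j\beta_j=O(\mathcal R_{\rm RR})\) are already contained in Proposition~\ref{prop:rr-weighted-quadrature-app} and the contour construction, so they add no new step.
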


\begin{proof}
Lemma~\ref{lem:rr-graph-propagation-app} and
Theorem~\ref{thm:rr-seeded-projector-main} identify the range of
\(\mathcal E_k\) with the requested graph.
Corollary~\ref{cor:rr-projector-block-recovery-app} then proves
that the final product equals \(P_k\), and
Propositions~\ref{prop:rr-direct-projector-stability}
and~\ref{prop:rr-general-resources-app} give the required decoded
accuracy and total query count. Under
\eqref{eq:rr-direct-calibration}, the initial inverse factor is
\(O(\kappa(\Pi_>R_0))\) and the normalization-reduction overheads
are constant. Equation~\eqref{eq:rr-general-query-expanded-app}
therefore gives \eqref{eq:rr-query-main}. The solution bound remains
in \(\alpha_{P_k}\), without requiring \(M=O(1)\).
\end{proof}

\subsection{Approximate inputs and construction of the lift}
\label{app:rr-inputs}

The matrix-query bound takes the lift and initial-column encodings
as inputs. Their approximation errors must satisfy the same weighted
target accuracy as the quadrature. If the lift is instead assembled
from \(A,G,Q\), its inverse of \(A\) also has a query cost.

\begin{proposition}[RR construction from approximate inputs]
\label{prop:rr-approximate-inputs-app}
The same recovery result holds for approximate lift and initial-column
encodings when the total error in each of the four weighted targets,
including input, coefficient, quadrature, and implementation errors,
meets \eqref{eq:rr-outer-budget-app}. Sufficient input conditions
are those of Proposition~\ref{prop:contour-input-errors-app},
applied to the fixed ideal contours in
\eqref{eq:rr-quadrature-contours-app}. The query bounds use the actual
perturbed normalizations and normalization-reduction overheads.

For exact encodings of \(A,G,Q\), an approximate inverse of \(A\)
gives the lift accuracy \(\varepsilon_{\mathcal S_F}\) whenever
\begin{equation}
\sqrt{1+\|Q\|^2}\sqrt{1+\|G\|^2}\,
\varepsilon_{A^{-1}}+\varepsilon_{\mathcal S_F,\rm impl}
\le\varepsilon_{\mathcal S_F}.
\label{eq:rr-inverse-a-error-app}
\end{equation}
Direct composition has scale
\begin{equation}
\alpha_{\mathcal S_F}^{\rm raw}
=\alpha_A+(1+\alpha_Q)\alpha_{A^{-1}}(1+\alpha_G).
\label{eq:rr-lift-raw-scale-app}
\end{equation}
With \(\alpha_{A^{-1}}=4/\sigma_{\min}(A)\) and
\(\alpha_A\varepsilon_{A^{-1}}\le1\), each lift call uses
\begin{equation}
O\!\left(\frac{\alpha_A}{\sigma_{\min}(A)}
\log\!\left(e+\frac1{\sigma_{\min}(A)\varepsilon_{A^{-1}}}\right)\right)
\label{eq:rr-inverse-a-query-app}
\end{equation}
queries to the \(A\) oracle, in addition to constant calls to the
coefficient oracles. A supplied lower bound can replace
\(\sigma_{\min}(A)\) throughout these inverse formulas.
\end{proposition}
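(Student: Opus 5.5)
The proposition has two parts. The first transfers the recovery guarantee to approximate inputs. The second bounds the error, scale and cost of assembling \(\mathcal S_F\) from \(A,G,Q\). I would handle them separately.

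\textbf{First part.} The proof of Proposition~\ref{prop:rr-general-resources-app}, which is the DRE argument of Proposition~\ref{prop:dre-general-resources-app} with the stated substitutions, uses only one property of the four weighted blocks: each decoded block lies within \(\eta\) of its ideal target. It does not use exactness of the oracles. I would therefore treat each block's total error as the sum of input, coefficient, quadrature and implementation contributions, as in \eqref{eq:contour-total-error-app}. Proposition~\ref{prop:contour-input-errors-app} on the fixed ideal contours \eqref{eq:rr-quadrature-contours-app} bounds the input contribution, under its nodewise smallness conditions. Requiring this total to meet \eqref{eq:rr-outer-budget-app} lets the remaining steps go through unchanged: the seed pseudoinverse perturbation bound \(6\eta/\sigma_{\min}^2\), the projector error \eqref{eq:dre-projector-product-error-app}, and Proposition~\ref{prop:rr-direct-projector-stability}, which never needs idempotence of the decoded projector. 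The final part of Proposition~\ref{prop:contour-input-errors-app} states that the node normalizations stay within constant factors when \(\eta_j\beta_j\le2\). Hence the query bounds hold with the actual perturbed scales and overheads.

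\textbf{Second part.} I would start from the exact algebraic decomposition
\[
\mathcal S_F=\begin{bmatrix}\mathsf I\\Q\end{bmatrix}A^{-1}\begin{bmatrix}\mathsf I&G\end{bmatrix}+\begin{bmatrix}0&0\\0&A^*\end{bmatrix},
\]
which can be checked directly against \eqref{eq:rr-lift-seed}. Replacing \(A^{-1}\) by its decoded approximation changes \(\mathcal S_F\) only by \([\mathsf I;Q]\,\Delta A^{-1}\,[\mathsf I,G]\). Its norm is at most \(\sqrt{1+\|Q\|^2}\sqrt{1+\|G\|^2}\,\varepsilon_{A^{-1}}\). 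Adding the circuit implementation error by the triangle inequality gives \eqref{eq:rr-inverse-a-error-app}.

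For the scale, I would encode \([\mathsf I;Q]\) and \([\mathsf I,G]\) by two-term LCUs, at scales \(1+\alpha_Q\) and \(1+\alpha_G\). I would multiply them with the inverse block at scale \(\alpha_{A^{-1}}\), then add the \(A^*\) block at scale \(\alpha_A\) by a direct sum of normalizations. This yields \eqref{eq:rr-lift-raw-scale-app}.

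For the cost, I would apply Proposition~\ref{prop:local-inverse} to \(A\) with \(\gamma=\sigma_{\min}(A)\). Its proof gives inverse normalization \(4/\gamma\) and query cost \eqref{eq:rr-inverse-a-query-app}, under the condition \(\alpha_A\varepsilon_{A^{-1}}\le1\). A supplied lower bound may replace \(\gamma\), since the primitive only needs a valid threshold.

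\textbf{Main obstacle.} The delicate point is the first part. The error budget \(\eta\) in \eqref{eq:rr-outer-budget-app} depends on the normalizations, but with approximate inputs these are the actual perturbed scales. To avoid circularity, I would fix accuracy-independent normalization bounds first, as the appendix already prescribes. Only then would I choose \(\eta\) and the input precisions that satisfy the nodewise conditions \(\eta_j\beta_j\le2\).
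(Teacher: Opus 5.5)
Your proposal is correct and follows the paper's own proof. For the first part, you chain Proposition~\ref{prop:contour-input-errors-app} on the fixed contours into Propositions~\ref{prop:rr-general-resources-app} and~\ref{prop:rr-direct-projector-stability}. For the second, you use the same decomposition \(\mathcal S_F=\operatorname{diag}(0,A^*)+[\mathsf I;Q]A^{-1}[\mathsf I,G]\) with factor norms \(\sqrt{1+\|Q\|^2}\) and \(\sqrt{1+\|G\|^2}\), sums and products for the scale, and Proposition~\ref{prop:local-inverse} for the cost. You add detail the paper leaves implicit: the constant-factor node normalizations under \(\eta_j\beta_j\le2\), and fixing the scales before choosing \(\eta\).
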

\begin{proof}
Proposition~\ref{prop:contour-input-errors-app} bounds the errors
relative to the four ideal weighted targets.
Proposition~\ref{prop:rr-general-resources-app} then supplies the
required projector accuracy, and
Proposition~\ref{prop:rr-direct-projector-stability} gives the final
error. For construction from the coefficients, write
\[
\mathcal S_F=\begin{bmatrix}0&0\\0&A^*\end{bmatrix}
+[\mathsf I;Q]A^{-1}[\mathsf I,G].
\]
The rectangular factors have norms \(\sqrt{1+\|Q\|^2}\) and
\(\sqrt{1+\|G\|^2}\), which prove
\eqref{eq:rr-inverse-a-error-app}. Sums and products give
\eqref{eq:rr-lift-raw-scale-app}, and
Proposition~\ref{prop:local-inverse} gives
\eqref{eq:rr-inverse-a-query-app}.
\end{proof}

Any reduction of the lift normalization contributes its actual
error and query overhead. The factor \(\mathcal R_{\rm RR}\) is
evaluated using the resulting declared normalization of
\(\mathcal S_F\).

\clearpage
\section{Algebraic Riccati constructions and proofs}
\label{app:algebraic-riccati}

The algebraic algorithms encode the limiting Riesz projectors directly.
We apply the graph recovery and stability results of
Appendix~\ref{app:dre} and specialize the quadrature arguments of
Appendices~\ref{app:dre-quadrature} and~\ref{app:rr-quadrature}.
The Cayley alternative has additional inverse costs.

Throughout, assume the exact input encodings, coherent node access,
and uniform constant-factor inverse-norm estimates of
Propositions~\ref{thm:local-contour-sum}
and~\ref{thm:affine-pencil-lcu-main}. Contours are positively oriented
and lie in the relevant resolvent set. Supply a bound on \(\|X\|\)
before choosing the output scale and precision. Node preparation,
arithmetic, and gate synthesis have separate costs.

\subsection{Limiting projectors and shared recovery}
\label{app:direct-graph-recovery}

\begin{lemma}[Limits of the decaying graph projectors]
\label{lem:algebraic-projector-limits-app}
For the CARE Hamiltonian, let \(\Pi_-,\Pi_+\) be its two half-plane
projectors. If \(\Pi_-R_0\) has full column rank, the decaying graph projector
for \(e^{-t\mathcal H_{\rm CARE}}R_0\) is
\begin{equation}
\Pi_-+e^{-t\mathcal H_{\rm CARE}}\Pi_+R_0
(\Pi_-R_0)^+e^{t\mathcal H_{\rm CARE}}\Pi_-.
\label{eq:care-reversed-seeded-projector-app}
\end{equation}
It converges in norm to \(\Pi_-\) as \(t\to\infty\).
For the unscaled RR lift and the full-rank initialization of
Theorem~\ref{thm:rr-seeded-projector-main},
\(\mathcal E_k\to\Pi_>\) as \(k\to\infty\).
When \(A\) is invertible in the DARE pencil,
\(\mathcal S_F=M^{-1}L=(L^{-1}M)^{-1}\), and this exterior
projector equals the finite unit-disk pencil projector
\(\mathfrak R_<[1;M,L]\).
\end{lemma}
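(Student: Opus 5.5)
The plan has four parts: a structural identity for the CARE projector, a decay estimate that proves its limit, an algebraic factorization for the RR limit, and a pencil identity for the DARE identification.

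\textbf{Structure of the CARE projector.} This is exactly Theorem~\ref{thm:dre-seeded-projector-main} with the Hamiltonian replaced by \(-\mathcal H_{\rm CARE}\). The right-half-plane projector of \(-\mathcal H_{\rm CARE}\) is \(\Pi_-\), and its left one is \(\Pi_+\). So the displayed matrix \eqref{eq:care-reversed-seeded-projector-app} is the decaying graph projector of \(e^{-t\mathcal H_{\rm CARE}}R_0\). The factorization argument of that proof does not use the existence interval, so it holds for every real \(t\). It shows the matrix is idempotent with kernel \(\operatorname{ran}\Pi_+\).

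\textbf{CARE limit.} I will bound the second term by a product of three factors:
\[
\|e^{-t\mathcal H_{\rm CARE}}\Pi_+\|\,\|\Pi_+R_0\|\,\|(\Pi_-R_0)^+\|\,\|e^{t\mathcal H_{\rm CARE}}\Pi_-\| .
\]
The middle two factors are constants independent of \(t\). For the two exponential factors, I will use the weighted Riesz representation from Proposition~\ref{prop:riesz-calculus-app}:
\[
e^{t\mathcal H}\Pi_-=\frac1{2\pi\mathrm i}\oint_{\Gamma_-}e^{tz}(z\mathsf I-\mathcal H)^{-1}\,\mathrm dz .
\]
The contour \(\Gamma_-\) lies in \(\operatorname{Re}z\le-c<0\), so this is bounded by \(C e^{-ct}\). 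The same argument on \(\Gamma_+\) gives \(\|e^{-t\mathcal H}\Pi_+\|\le Ce^{-ct}\). The product is \(O(e^{-2ct})\), so the matrix converges to \(\Pi_-\) in norm. This route needs no diagonalization and allows arbitrary Jordan structure.

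\textbf{RR limit.} The argument is the same, now with the formula of Theorem~\ref{thm:rr-seeded-projector-main}. The second term of \(\mathcal E_k\) is
\[
\mathcal S_F^k\Pi_<R_0\,(\Pi_>R_0)^+\,\mathcal S_F^{-k}\Pi_> .
\]
Circle contours of radius \(\rho<1\) and annular contours with inner radius \(\rho'>1\) give \(\|\mathcal S_F^k\Pi_<\|\le C\rho^k\) and \(\|\mathcal S_F^{-k}\Pi_>\|\le C\rho'^{-k}\). These use the weights \(z^{\pm k}\), which are legitimate on those domains by Proposition~\ref{prop:riesz-calculus-app}. Hence \(\mathcal E_k\to\Pi_>\) geometrically.

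\textbf{DARE identification.} When \(A\) is invertible, \(M=\begin{bmatrix}A&0\\-Q&\mathsf I\end{bmatrix}\) is invertible, and block multiplication gives \(M^{-1}L=\mathcal S_F\). Then \((zL-M)^{-1}L=(zM^{-1}L-\mathsf I)^{-1}M^{-1}L\). Writing \(S=\mathcal S_F\) and putting \(w=1/z\),
\[
(zS-\mathsf I)^{-1}S=w(S-w\mathsf I)^{-1}S=-w\mathsf I+(\mathsf I-w^{-1}S)^{-1}\ldots,
\]
but a cleaner route is the following.
\begin{itemize}
\item The finite generalized eigenvalues of \(M-zL\) are exactly the reciprocals \(1/\mu\) of the nonzero eigenvalues \(\mu\) of \(S\). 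Eigenvalues \(\mu=0\) of \(S\) give infinite modes, but there are none because \(S\) is invertible.
\item The right deflating subspace for finite eigenvalues in the unit disk equals the spectral subspace of \(S\) for \(|\mu|>1\). The Weierstrass form in Proposition~\ref{prop:riesz-calculus-app} with \(T_\ell=M^{-1}\)-related transformations gives a common similarity diagonalizing both.
\item Both are spectral projectors onto the same invariant subspace along the complementary invariant subspace. The complement is the \(|\mu|<1\) part for \(S\) and the exterior finite part for the pencil. Since they share range and kernel, the two projectors are equal.
\end{itemize}
The main obstacle is this last step: matching the kernels and ranges carefully through the Kronecker form rather than via an eigenvector argument. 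I would do it by conjugating \(S\) into Jordan form \(S=W\operatorname{diag}(J_>,J_<)W^{-1}\). Then \(M-zL=M W\operatorname{diag}(\mathsf I-zJ_>,\mathsf I-zJ_<)W^{-1}\), which is a Weierstrass form with \(T_r=W\). Applying \eqref{eq:riesz-weierstrass-app} then reads off \(\mathfrak R_<[1;M,L]=W\operatorname{diag}(\mathsf I,0)W^{-1}=\Pi_>\).
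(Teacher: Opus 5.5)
Your proposal is correct. For the two limits it is the paper's proof. You apply Theorem~\ref{thm:dre-seeded-projector-main} to \(-\mathcal H_{\rm CARE}\) with the branches interchanged, bound the correction term by a product of norms, and show that the restricted exponentials (resp.\ \(\mathcal S_F^k\Pi_<\) and \(\mathcal S_F^{-k}\Pi_>\)) decay. Your weighted Riesz integrals over contours in \(\operatorname{Re}z\le-c\), \(|z|=\rho<1\), or the annulus \(\rho'\le|z|\le R\) with \(\rho'>1\) make quantitative the paper's appeal to strict separation with Jordan polynomial factors, and they give explicit geometric rates.

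The DARE identification is where your route differs. The paper uses that \(M\) and \(L\) are both invertible and rewrites the pencil operator as \(\mathfrak R_<[1;L^{-1}M,\mathsf I]\) via \eqref{eq:pencil-ordinary-reduction}. It then uses that inversion carries the interior spectral subspaces of \(L^{-1}M\) to the exterior ones of \(\mathcal S_F\). You instead factor \(M-zL=M(\mathsf I-z\mathcal S_F)\) through \(M^{-1}\) and evaluate the pencil integral in a block decomposition of \(\mathcal S_F\). This proves the reduction and the inversion step in one self-contained computation. The paper's version is shorter because it reuses an identity already established.

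Some small repairs are needed.

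\textbf{Weierstrass normalization.} The factorization \(MW\operatorname{diag}(\mathsf I-zJ_>,\mathsf I-zJ_<)W^{-1}\) is not in the normalized Weierstrass form of Proposition~\ref{prop:riesz-calculus-app}, where the finite blocks have \(L\)-coefficient \(\mathsf I\). With \(J=\operatorname{diag}(J_>,J_<)\), take \(T_r=W\) and \(T_\ell=J^{-1}W^{-1}M^{-1}\), so the finite block is \(J^{-1}\). Alternatively, integrate \((zL-M)^{-1}L=W(zJ-\mathsf I)^{-1}JW^{-1}\) blockwise. On the \(J_>\) block the integrand is \((z\mathsf I-J_>^{-1})^{-1}\), whose poles lie inside the unit circle, so it integrates to \(\mathsf I\). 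On the \(J_<\) block it is holomorphic on a neighborhood of the closed disk, so it integrates to \(0\).

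\textbf{Abandoned computation.} Delete the unfinished computation with \(w=1/z\). It is not needed, and the correct expansion is \((z\mathcal S_F-\mathsf I)^{-1}\mathcal S_F=w\mathsf I+w^2(\mathcal S_F-w\mathsf I)^{-1}\).

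\textbf{Factor count.} Your CARE norm bound lists four factors, not three.
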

\begin{proof}
Apply the factorization in
Theorem~\ref{thm:dre-seeded-projector-main} to
\(-\mathcal H_{\rm CARE}\), interchanging the two branches.
Its correction term has norm at most
\[
\|e^{-t\mathcal H_{\rm CARE}}\Pi_+\|\,
\|R_0\|\,\|(\Pi_-R_0)^+\|\,
\|e^{t\mathcal H_{\rm CARE}}\Pi_-\|.
\]
Both restricted exponentials tend to zero. This follows from their
strict half-plane separation, including the polynomial factors of any
Jordan blocks. In \eqref{eq:rr-seeded-projector}, the analogous
factors \(\mathcal S_F^k\Pi_<\) and
\(\mathcal S_F^{-k}\Pi_>\) both tend to zero, proving the
discrete limit. If \(A\) is invertible, so are \(M,L\); inversion
maps the inside spectrum of \(L^{-1}M\) to the outside spectrum of
\(\mathcal S_F\), with the same spectral subspaces and projector.
Equation~\eqref{eq:pencil-ordinary-reduction} identifies it with the
pencil Riesz operator.
\end{proof}

These are limits of projectors. Their graph property on any
interval where the Riccati solution is finite follows from the existing
graph-flow and recursion results. The algebraic algorithms encode the
limit directly, so their query costs do not include an initialization
pseudoinverse or a convergence time. For singular \(A\), the DARE
graph is identified directly from its regular pencil in
Appendix~\ref{app:dare-construction}.

For either limiting projector, Lemma~\ref{lem:dre-projector-block-recovery-app}
and Proposition~\ref{prop:dre-direct-projector-stability-app} apply with
\(P(t)=X\) and \(\mathcal E(t)=\Pi\). In particular,
\begin{equation}
X=(E_2^*\Pi)(E_1^*\Pi)^+,
\qquad \|(E_1^*\Pi)^+\|\le\sqrt{1+\|X\|^2}.
\label{eq:algebraic-projector-recovery-app}
\end{equation}
For actual normalization \(\alpha_\Pi\ge\|\Pi\|\), use
\begin{equation}
\alpha_X\ge8\alpha_\Pi\sqrt{1+\|X\|^2},
\label{eq:algebraic-projector-output-app}
\end{equation}
and, for \(0<\varepsilon\le1\),
\begin{equation}
\delta=\frac{\varepsilon}{8(1+\|X\|^2)},\qquad
\varepsilon_{\rm inv}=\frac{\varepsilon}{4(1+\alpha_\Pi)}.
\label{eq:algebraic-projector-budget-app}
\end{equation}
Replace every occurrence of \(\|X\|\) in these choices by the same
supplied upper bound; equality in the output scale is the default.
With product error at most \(\varepsilon/4\), the cited proposition
gives decoded error at most \(\varepsilon\) using
\begin{equation}
O\!\left(Q_\Pi(\delta)\alpha_X
\log\!\left(e+\frac{\alpha_X}{\varepsilon}\right)\right)
\label{eq:algebraic-projector-query-app}
\end{equation}
matrix queries, where \(Q_\Pi(\delta)\) is the cost of one projector
call at error \(\delta\). For the contour implementations, split this
error equally between quadrature and circuit implementation.

For the two contour constructions, keep the existing quadrature scales
\begin{equation}
\begin{aligned}
\alpha_{\rm CARE}&=\sum_j|\omega_j|\beta_j,
&\beta_j&=\Theta\!\left(\|(z_j\mathsf I-\mathcal H_{\rm CARE})^{-1}\|\right),\\
\alpha_{\rm DARE}&=\sum_j|\omega_j|\beta_j,
&\beta_j&=\Theta\!\left(\|(z_jL-M)^{-1}\|\right).
\end{aligned}
\label{eq:algebraic-quadrature-normalizations-app}
\end{equation}
With unit weight, the raw normalizations of the complete projectors are
\begin{equation}
\alpha_{\Pi_-}^{\rm raw}=\alpha_{\rm CARE},\qquad
\alpha_{\Pi_<}^{\rm raw}=\alpha_L\alpha_{\rm DARE}.
\label{eq:algebraic-raw-projector-scales-app}
\end{equation}
The factor \(L\) remains in the pencil integral. Choose
accuracy-independent supplied scales, padding when necessary; the
actual projector normalization \(\alpha_\Pi\) in
\eqref{eq:algebraic-projector-output-app} may differ from the raw
one only through an implemented normalization change. Denote its
query overhead by \(r_\Pi\ge1\), equal to one without a reduction.
The two compact main theorems assume \(r_\Pi=O(1)\), uniformly
over the required accuracies. The expanded bounds below include it.

\subsection{CARE stable projector, finite contours, and construction bound}
\label{app:care-construction}

The classical Hamiltonian graph relation
\cite{Laub1979Schur,LancasterRodman1995Riccati} identifies the target
of both the sign and contour constructions. We record the full
projector property needed by the recovery.

\newtheorem*{caregraphrestatement}{Lemma~\ref{lem:care-graph-recovery}}
\begin{caregraphrestatement}
Let \(X\) be the stabilizing solution in
Definition~\ref{def:problem-care}.
For \(E_1=[\mathsf I;0]\) and \(E_2=[0;\mathsf I]\), the stable
Riesz projector of \(\mathcal H_{\rm CARE}\) satisfies
\begin{equation*}
\Pi_-^2=\Pi_-,
\qquad
\operatorname{ran}\Pi_-=\operatorname{ran}\begin{bmatrix}\mathsf I\\X\end{bmatrix}.
\end{equation*}
Its upper block row has full row rank, and
\begin{equation*}
X=(E_2^*\Pi_-)(E_1^*\Pi_-)^+.
\end{equation*}
\end{caregraphrestatement}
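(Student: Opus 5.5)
The plan is to reduce the statement to Lemma~\ref{lem:dre-projector-block-recovery-app}, which already gives full row rank of the upper block and the recovery formula for any idempotent whose range is a graph \(\operatorname{ran}[\mathsf I;X]\). So it suffices to show that \(\Pi_-\) is idempotent and that its range is \(\operatorname{ran}[\mathsf I;X]\).

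Idempotence follows from Proposition~\ref{prop:riesz-calculus-app} with \(g=1\): the Riesz integral over \(\Gamma_-\) is the spectral projector onto the generalized eigenspace of the eigenvalues enclosed by \(\Gamma_-\). This requires the spectrum of \(\mathcal H_{\rm CARE}\) to avoid the imaginary axis. I would obtain that, together with the count of stable eigenvalues, from the Hamiltonian structure. With \(J=[0,\mathsf I;-\mathsf I,0]\), one has \(\mathcal H_{\rm CARE}^*J+J\mathcal H_{\rm CARE}=0\), so eigenvalues pair as \(\lambda\mapsto-\overline\lambda\) with equal algebraic multiplicities, as in the proof of Theorem~\ref{thm:dre-seeded-projector-main}.

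For the range, I would use the graph invariance \eqref{eq:care-graph-invariance}: \(\mathcal H_{\rm CARE}[\mathsf I;X]=[\mathsf I;X](A-GX)\). Since \(A-GX\) is Hurwitz, \(\operatorname{ran}[\mathsf I;X]\) is an \(n\)-dimensional invariant subspace on which all eigenvalues lie in the open left half-plane. Any invariant subspace on which the restriction has spectrum inside \(\Gamma_-\) is contained in the range of the Riesz projector for \(\Gamma_-\); one can see this by applying the resolvent integral to vectors of the subspace, where \((z\mathsf I-\mathcal H)^{-1}[\mathsf I;X]=[\mathsf I;X](z\mathsf I-(A-GX))^{-1}\) integrates to \([\mathsf I;X]\). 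Hence \(\operatorname{ran}[\mathsf I;X]\subseteq\operatorname{ran}\Pi_-\), and equality follows once \(\operatorname{rank}\Pi_-=n\).

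The main obstacle is this dimension count under only stabilizability and detectability, with \(G\) and \(Q\) possibly singular, since the coefficient bounds of Proposition~\ref{prop:factor-care-coefficients-app} are unavailable here. I would argue as follows. The \(n\) eigenvalues of \(A-GX\) are stable. By the \(\lambda\mapsto-\overline\lambda\) pairing, the other \(n\) eigenvalues, with multiplicity, are the reflections of these and so lie in the open right half-plane. For a concrete check, the similarity by \(T=[\mathsf I,0;X,\mathsf I]\) block-triangularizes \(\mathcal H_{\rm CARE}\) into diagonal blocks \(A-GX\) and \(-(A-GX)^*\), using the CARE and \(X=X^*\). This is the triangularization the paper cites as \eqref{eq:care-triangularization-app}. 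It simultaneously shows that there is no imaginary-axis spectrum and that the stable generalized eigenspace has dimension exactly \(n\).

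Then \(\operatorname{rank}\Pi_-=n\), the range equals the graph, and Lemma~\ref{lem:dre-projector-block-recovery-app} yields full row rank of \(E_1^*\Pi_-\), the identity \(X=(E_2^*\Pi_-)(E_1^*\Pi_-)^+\), and the pseudoinverse bound.
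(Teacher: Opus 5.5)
Your proposal is correct and follows essentially the same route as the paper. You use the block triangularization by \(T=[\mathsf I,0;X,\mathsf I]\) to exclude imaginary-axis spectrum and fix the stable dimension at \(n\), the graph invariance \(\mathcal H_{\rm CARE}[\mathsf I;X]=[\mathsf I;X](A-GX)\) to identify the stable subspace, and Riesz calculus for idempotence, then apply Lemma~\ref{lem:dre-projector-block-recovery-app}. Your resolvent-integral argument for \(\operatorname{ran}[\mathsf I;X]\subseteq\operatorname{ran}\Pi_-\) (the device the paper uses in the DARE lemma) and your preliminary eigenvalue-pairing count only spell out steps the paper states in one line.
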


\begin{proof}
Let \(F=A-GX\), \(V=[\mathsf I;X]\), and
\(T=\begin{bmatrix}\mathsf I&0\\X&\mathsf I\end{bmatrix}\).
The CARE and the identities \(X=X^*\), \(G=G^*\) give
\begin{equation}
T^{-1}\mathcal H_{\rm CARE}T
=\begin{bmatrix}F&-G\\0&-F^*\end{bmatrix}.
\label{eq:care-triangularization-app}
\end{equation}
Since \(F\) is Hurwitz, the Hamiltonian has no imaginary-axis spectrum
and its stable subspace has dimension \(n\).
The relation \(\mathcal H_{\rm CARE}V=VF\) shows that
\(\operatorname{range}(V)\) is precisely this subspace.
Idempotence follows from the Riesz calculus. Applying
Lemma~\ref{lem:dre-projector-block-recovery-app} to this graph range
proves full row rank and the recovery formula.
\end{proof}

\newtheorem*{careproblemrestatement}{Problem~\ref{prob:quantum-care}}
\begin{careproblemrestatement}
For the LQR data in Definition~\ref{def:problem-care}, let
\(X\succeq0\) be the stabilizing solution, with \(\|X\|<\infty\).
Given exact block-encodings of \(\mathcal H_{\rm CARE}\) in
\eqref{eq:care-graph-invariance}, its adjoint and controlled versions,
with normalization \(\alpha_H\), and a target
\(0<\varepsilon\le1\), construct a block-encoding with decoded
matrix \(\widetilde X\) satisfying \(\|\widetilde X-X\|\le\varepsilon\).
\end{careproblemrestatement}

\newtheorem*{caremainrestatement}{Theorem~\ref{thm:care-construction-main}}
\begin{caremainrestatement}
For Problem~\ref{prob:quantum-care}, supply an upper bound on \(\|X\|\)
and a positively oriented contour \(\Gamma_-\) enclosing exactly the
stable spectrum of \(\mathcal H_{\rm CARE}\). Define
\begin{equation*}
\mathcal R_{\rm CARE}
=\sup_{z\in\Gamma_-}(|z|+\alpha_H)
\|(z\mathsf I-\mathcal H_{\rm CARE})^{-1}\|.
\end{equation*}
Under the implementation and normalization assumptions of
Appendix~\ref{app:algebraic-riccati}, the algorithm returns an
\((\alpha_X,a_X,\varepsilon)\) block-encoding of \(X\), using
\begin{equation*}
Q_{\rm CARE}=\widetilde O\!\left(\mathcal R_{\rm CARE}\alpha_X\right)
\end{equation*}
queries to the supplied Hamiltonian encoding and its adjoint and
controlled versions. Here \(\alpha_X\) is the actual output
normalization in \eqref{eq:algebraic-projector-output-app};
\(\widetilde O\) suppresses logarithmic dependence on precision and
the supplied scales.
\end{caremainrestatement}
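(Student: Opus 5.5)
The proof has three parts: identify the target, encode the stable projector by contour quadrature, and then apply the shared projector-block recovery of Appendix~\ref{app:direct-graph-recovery}.

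\textbf{Identification.} Lemma~\ref{lem:care-graph-recovery} shows that \(\Pi_-\) is an idempotent with range \(\operatorname{ran}[\mathsf I;X]\). Hence \eqref{eq:algebraic-projector-recovery-app} gives
\[
X=(E_2^*\Pi_-)(E_1^*\Pi_-)^+,\qquad \|(E_1^*\Pi_-)^+\|\le\sqrt{1+\|X\|^2}.
\]
From this point on, the supplied upper bound \(u\ge\|X\|\) replaces \(\|X\|\) everywhere. The output scale is
\[
\alpha_X=8\alpha_{\Pi_-}\sqrt{1+u^2},
\]
as in \eqref{eq:algebraic-projector-output-app}, and the projector tolerance is \(\delta=\varepsilon/(8(1+u^2))\), as in \eqref{eq:algebraic-projector-budget-app}.

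\textbf{Projector encoding.} We encode \(\Pi_-=\mathfrak R_-[1;\mathcal H_{\rm CARE},\mathsf I]\) with Algorithm~\ref{alg:weighted-riesz-be}, taking \(g=1\), \(R=\mathsf I\) and \(\alpha_R=1\). The decoded error \(\delta\) is split equally between quadrature and implementation.

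For the quadrature half, use any rule on \(\Gamma_-\) with a supplied error bound. For the finite rectangle, the panel Gauss--Legendre argument of Proposition~\ref{prop:dre-weighted-quadrature-app} applies verbatim with \(g=1\); the paper records this as Proposition~\ref{prop:care-rectangle-quadrature-app}. It gives a node count logarithmic in \(1/\delta\) and a raw normalization \(\alpha_{\rm CARE}=O(\mathcal R_{\rm CARE})\).

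For the implementation half, Proposition~\ref{thm:local-contour-sum} with \(\varepsilon_{\rm impl}=\delta/2\) gives
\[
Q_\Pi(\delta)=O\!\left(\mathcal R_{\rm CARE}\log\!\left(e+\mathcal R_{\rm CARE}\alpha_{\rm CARE}/\delta\right)\right)
\]
Hamiltonian queries. If the declared \(\alpha_{\Pi_-}\) differs from the raw \(\alpha_{\rm CARE}\), we multiply by the overhead \(r_\Pi\), which is assumed to be \(O(1)\).

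\textbf{Recovery and cost.} Proposition~\ref{prop:dre-direct-projector-stability-app} is stated for any graph projector satisfying Lemma~\ref{lem:dre-projector-block-recovery-app}, so we apply it with \(\mathcal E=\Pi_-\) and \(P=X\). Its perturbation argument shows that the noisy upper block keeps full row rank, with pseudoinverse norm at most \(2\sqrt{1+u^2}\), and that the product error is at most \(\varepsilon/4\). The pseudoinverse primitive with threshold \(4\alpha_{\Pi_-}/\alpha_X\) adds at most \(\varepsilon/4\). The total cost is
\[
O\bigl(Q_\Pi(\delta)\,\alpha_X\log(e+\alpha_X/\varepsilon)\bigr),
\]
as in \eqref{eq:algebraic-projector-query-app}. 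Substituting \(Q_\Pi\) yields
\[
O\!\left(r_\Pi\,\mathcal R_{\rm CARE}\,\alpha_X\log^2\!\left(e+\frac{\mathcal R_{\rm CARE}\alpha_X(1+u^2)}{\varepsilon}\right)\right),
\]
which is \(\widetilde O(\mathcal R_{\rm CARE}\alpha_X)\).

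\textbf{Main obstacle.} The delicate step is the bookkeeping that keeps the logarithms honest. The scales \(\alpha_{\Pi_-}\) and \(\alpha_X\) must be fixed independently of accuracy before \(\delta\) is chosen, which is done by padding to declared scales. We must also check the precondition of Proposition~\ref{prop:local-inverse}, namely \(\alpha_{\Pi_-}\varepsilon_{\rm inv}\le1\); it holds because \(\varepsilon_{\rm inv}=\varepsilon/(4(1+\alpha_{\Pi_-}))\). Finally, the coherent node processing is what makes the query depth depend on \(\mathcal R_{\rm CARE}\) rather than on the number of nodes.
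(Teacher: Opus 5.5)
Your proposal is correct and follows the paper's proof essentially step for step. Lemma~\ref{lem:care-graph-recovery} identifies \(\Pi_-\); Proposition~\ref{thm:local-contour-sum} with \(g=1\), \(R=\mathsf I\) encodes it at the tolerance \(\delta\) of \eqref{eq:algebraic-projector-budget-app}, with overhead \(r_{\Pi_-}\); and Proposition~\ref{prop:dre-direct-projector-stability-app} with \(\mathcal E=\Pi_-\) then yields the two-logarithm bound \eqref{eq:care-projector-query-expanded-app}. Your merged \(\log^2\) form is only a cosmetic difference: strictly, the first logarithm should carry the raw scale \(\alpha_{\rm CARE}\) rather than \(\alpha_X\) when a normalization reduction is used, but \(\widetilde O\) absorbs this anyway.
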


\begin{proof}
Lemma~\ref{lem:care-graph-recovery} identifies \(\Pi_-\) as the
solution graph projector. Choose \(\delta\) by
\eqref{eq:algebraic-projector-budget-app}. Proposition~\ref{thm:local-contour-sum}
with \(M=\mathcal H_{\rm CARE}\), \(g=1\), and \(R=\mathsf I\)
encodes the full projector to decoded error \(\delta\), after combining
the quadrature and implementation errors. Its raw normalization is
\(\alpha_{\rm CARE}\), so one call after any declared normalization
change costs
\[
Q_{\Pi_-}(\delta)=O\!\left(r_{\Pi_-}\mathcal R_{\rm CARE}
\log\!\left(e+\frac{\mathcal R_{\rm CARE}\alpha_{\rm CARE}}{\delta}\right)\right).
\]
Proposition~\ref{prop:dre-direct-projector-stability-app} gives the
output normalization and error, with the expanded bound
\begin{equation}
\begin{aligned}
Q_{\rm CARE}=O\!\Bigg(&r_{\Pi_-}\mathcal R_{\rm CARE}\alpha_X
\log\!\left(e+
\frac{\mathcal R_{\rm CARE}\alpha_{\rm CARE}(1+\|X\|^2)}{\varepsilon}\right)\\
&\times\log\!\left(e+\frac{\alpha_X}{\varepsilon}\right)\Bigg).
\end{aligned}
\label{eq:care-projector-query-expanded-app}
\end{equation}
The supplied solution-norm bound is used throughout. The logarithms
come from the node inverses and the upper-row pseudoinverse.
Under \(r_{\Pi_-}=O(1)\), this proves the main query bound.
\end{proof}

Algorithm~\ref{alg:care-direct-projector-construction} uses the
projector and inverse accuracies in \eqref{eq:algebraic-projector-budget-app},
with the product error budget specified there.

\begin{algorithm}[H]
\caption{CARE: construct a block-encoding of \(X\)}
\label{alg:care-direct-projector-construction}
\small
\begin{algorithmic}[1]
\Require An exact BE of \(\mathcal H_{\rm CARE}\) with scale
\(\alpha_H\); \(0<\varepsilon\le1\).
\Ensure A BE of \(X\) with decoded error \(\le\varepsilon\)
and scale \(\alpha_X\) defined in
Appendix~\ref{app:direct-graph-recovery}.

\State Construct the complete stable projector BE:
\[
U_{\Pi_-}\gets\operatorname{RieszBE}
(U_{\mathcal H_{\rm CARE}},U_{\mathsf I},\Gamma_-,1,U_{\mathsf I};\delta).
\]
\State Encode \((E_1^*\Pi_-)^+\) from the complete upper row of
\(U_{\Pi_-}\).
\State Compose the complete lower-row BE from the same
\(U_{\Pi_-}\) with this pseudoinverse BE to encode
\[
X=(E_2^*\Pi_-)(E_1^*\Pi_-)^+.
\]
\Statex \Return \((U_X,\alpha_X,a_X)\).
\end{algorithmic}
\end{algorithm}

A finite rectangle makes the quadrature and normalization explicit.

\begin{lemma}[CARE normalization on a finite rectangle]
\label{lem:care-normalization-factor-app}
For the parameters in \eqref{eq:care-construction-parameters}
and~\eqref{eq:algebraic-quadrature-normalizations-app}, let
\(c_\beta\) be a uniform constant such that
\(\beta_j\le c_\beta\|(z_j\mathsf I-\mathcal H_{\rm CARE})^{-1}\|\).
Then
\begin{equation}
\alpha_{\rm CARE}\le c_\beta\mathcal R_{\rm CARE}
\sum_j\frac{|\omega_j|}{|z_j|+\alpha_H}.
\label{eq:care-normalization-weight-bound-app}
\end{equation}
In particular, take \(\Gamma_-\) to be the left rectangle in
\eqref{eq:factor-finite-rectangles-app}, with
\(H=\mathcal H_{\rm CARE}\) and \(0<\tau<1\).
On each oriented straight panel, use a nonnegative quadrature rule
in a linear parameter that is exact for constants. Then
\begin{equation}
\alpha_{\Pi_-}=\alpha_{\rm CARE}
\le\frac{6c_\beta}{\pi}\mathcal R_{\rm CARE}
=O(\mathcal R_{\rm CARE}),
\label{eq:care-normalization-factor-app}
\end{equation}
uniformly in the number of panels and quadrature nodes, using the
direct LCU normalization without reduction.
\end{lemma}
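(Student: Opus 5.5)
The first inequality is a direct estimate. Every coefficient satisfies \(\beta_j\le c_\beta\|(z_j\mathsf I-\mathcal H_{\rm CARE})^{-1}\|\). By the definition of \(\mathcal R_{\rm CARE}\) in \eqref{eq:care-construction-parameters}, every contour node \(z_j\in\Gamma_-\) also satisfies
\[
\|(z_j\mathsf I-\mathcal H_{\rm CARE})^{-1}\|\le\frac{\mathcal R_{\rm CARE}}{|z_j|+\alpha_H}.
\]
Substituting this into \(\alpha_{\rm CARE}=\sum_j|\omega_j|\beta_j\) from \eqref{eq:algebraic-quadrature-normalizations-app} gives \eqref{eq:care-normalization-weight-bound-app} termwise. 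No quadrature accuracy is used at this stage.

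For the rectangle bound, I would write each oriented straight panel linearly as \(z_\ell(s)=c_\ell+h_\ell s\) on a parameter interval. The node coefficients are then \(\omega_{\ell j}=h_\ell w_j/(2\pi\mathrm i)\), with unit weight \(g=1\), where the \(w_j\ge0\) are the panel weights. Nonnegativity gives \(|\omega_{\ell j}|=|h_\ell|w_j/(2\pi)\). Exactness for constants gives \(\sum_j w_j\) equal to the parameter length, so \(\sum_j|h_\ell|w_j\) equals the Euclidean length of the panel. Summing over panels shows that \(\sum_j|\omega_j|=\ell(\Gamma_-)/(2\pi)\), independent of how many panels or nodes are used.

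It remains to bound the contour length against the weight \(1/(|z_j|+\alpha_H)\). The left rectangle in \eqref{eq:factor-finite-rectangles-app} has horizontal sides of length \(2\alpha_H-\tau\Delta_{\rm ax}\le2\alpha_H\) and vertical sides of length \(4\alpha_H\), so \(\ell(\Gamma_-)\le12\alpha_H\). Using \(|z_j|+\alpha_H\ge\alpha_H\), this yields
\[
\sum_j\frac{|\omega_j|}{|z_j|+\alpha_H}\le\frac{1}{\alpha_H}\cdot\frac{12\alpha_H}{2\pi}=\frac6\pi.
\]
Combining this with the first inequality gives \eqref{eq:care-normalization-factor-app}. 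Theorem~\ref{thm:factor-axis-app} guarantees that the rectangle lies in the resolvent set and encloses exactly the stable branch, so \(\mathcal R_{\rm CARE}\) is finite there. Since the LCU normalization is used directly with \(g=1\) and \(R=\mathsf I\), we have \(\alpha_{\Pi_-}=\alpha_{\rm CARE}\), as in \eqref{eq:algebraic-raw-projector-scales-app}.

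None of these steps is hard. The only point needing care is the bookkeeping in the second step: the oriented step \(h_\ell\) may be complex, but only its modulus enters \(|\omega_j|\), and the cancellation from orientation is never used. Consequently the bound holds before any cancellation and uniformly in the quadrature order, which is what makes the normalization independent of the number of panels and nodes.
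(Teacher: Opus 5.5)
Your proposal is correct and follows the paper's proof essentially step for step. It uses the nodewise bound $\beta_j\le c_\beta\mathcal R_{\rm CARE}/(|z_j|+\alpha_H)$, then $\sum_j|\omega_j|=\ell(\Gamma_-)/(2\pi)\le 6\alpha_H/\pi$ from nonnegative constant-exact panel weights, and finally $|z_j|+\alpha_H\ge\alpha_H$. The only difference is that the paper gets the weight sum by citing the positive-weight argument of Proposition~\ref{prop:dre-weighted-quadrature-app} rescaled by $\alpha_H$, whereas you carry out that panel bookkeeping directly.
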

\begin{proof}
At every node, \(\beta_j\le c_\beta\mathcal R_{\rm CARE}/(|z_j|+\alpha_H)\),
which gives \eqref{eq:care-normalization-weight-bound-app}.
The positive-weight argument in Proposition~\ref{prop:dre-weighted-quadrature-app},
rescaled by \(\alpha_H\), gives
\(\sum_j|\omega_j|=\ell(\Gamma_-)/(2\pi)\le6\alpha_H/\pi\).
Since \(|z_j|+\alpha_H\ge\alpha_H\), this proves
\eqref{eq:care-normalization-factor-app}.
\end{proof}

Here \(c_\beta\) is uniform in the instance, node count, and accuracy.
The same panel argument gives the full-projector error below.

\begin{proposition}[CARE on a finite rectangle]
\label{prop:care-rectangle-quadrature-app}
Use the exact-input, coherent-access, node inverse-norm, and recovery
assumptions stated at the start of Appendix~\ref{app:algebraic-riccati}.
Suppose a lower bound
\(0<\underline\Delta_{\rm ax}\le\Delta_{\rm ax}\) is supplied, and set
\(d=\underline\Delta_{\rm ax}/2\). Choose the positively oriented contour
\begin{equation}
\Gamma_-=\partial\{x+\mathrm iy:-2\alpha_H\le x\le-d,
\ |y|\le2\alpha_H\},\qquad
\ell(\Gamma_-)=12\alpha_H-\underline\Delta_{\rm ax}.
\label{eq:care-rectangle-contour-app}
\end{equation}
Divide each side into equal panels, taking the ceiling of its length
divided by \(d\) as the number of panels.
Write each oriented panel as \(z_\ell(t)=c_\ell+h_\ell t\),
\(-1\le t\le1\), so \(2|h_\ell|\le d\), and let
\(t_j,w_j\), \(1\le j\le p\), be the \(p\)-point Gauss--Legendre
nodes and weights on \([-1,1]\). With \(N_{\rm pan}\) panels, set
\begin{equation}
\begin{aligned}
z_{\ell j}&=c_\ell+h_\ell t_j,&
\omega_{\ell j}&=\frac{h_\ell w_j}{2\pi\mathrm i},&
m&=N_{\rm pan}p,\\
S_m&=\sum_{\ell,j}\omega_{\ell j}
 (z_{\ell j}\mathsf I-\mathcal H_{\rm CARE})^{-1}.
\end{aligned}
\label{eq:care-rectangle-rule-app}
\end{equation}
Then \(\Gamma_-\) encloses exactly the stable branch and
\begin{equation}
\begin{aligned}
N_{\rm pan}&\le\frac{\ell(\Gamma_-)}d+4,&
\|S_m-\Pi_-\|
&\le\frac{2\ell(\Gamma_-)}{\pi d}\,4^{-p},\\
\mathcal R_{\rm CARE}&\le\frac{(1+2\sqrt2)\alpha_H}{d},&
\|\Pi_-\|&\le\frac{\ell(\Gamma_-)}{2\pi d}.
\end{aligned}
\label{eq:care-rectangle-bounds-app}
\end{equation}
Lemma~\ref{lem:care-normalization-factor-app} applies to this rule and gives
\begin{equation}
\alpha_{\Pi_-}=\alpha_{\rm CARE}
\le\frac{6c_\beta}{\pi}\mathcal R_{\rm CARE}
=O\!\left(\frac{\alpha_H}{\underline\Delta_{\rm ax}}\right),
\label{eq:care-rectangle-normalization-app}
\end{equation}
uniformly in \(p\). For \(0<\delta\le1/8\), the positive integer
\begin{equation}
p=\left\lceil
\frac{\log(4\ell(\Gamma_-)/(\pi d\delta))}{2\log2}
\right\rceil
\label{eq:care-rectangle-order-app}
\end{equation}
ensures \(\|S_m-\Pi_-\|\le\delta/2\), with
\begin{equation}
m=O\!\left(\frac{\alpha_H}{\underline\Delta_{\rm ax}}
\log\!\left(e+\frac{\alpha_H}{\underline\Delta_{\rm ax}\delta}\right)\right).
\label{eq:care-rectangle-nodes-app}
\end{equation}

For \(0<\varepsilon\le1\), take
\(\delta=\varepsilon/[8(1+\|X\|^2)]\), using a supplied upper
bound on \(\|X\|\) consistently wherever it occurs, and allocate
at most \(\delta/2\) to implementing \(S_m\).
Proposition~\ref{prop:dre-direct-projector-stability-app} gives decoded
error at most \(\varepsilon\), with normalization
\begin{equation}
\alpha_X=8\alpha_{\Pi_-}\sqrt{1+\|X\|^2}
=O\!\left(\frac{\alpha_H}{\underline\Delta_{\rm ax}}
\sqrt{1+\|X\|^2}\right)
\label{eq:care-rectangle-output-app}
\end{equation}
and matrix-query complexity
\begin{equation}
\begin{aligned}
Q_{\rm CARE}=O\!\Bigg(&
\mathcal R_{\rm CARE}\alpha_X
\log\!\left(e+
\frac{\mathcal R_{\rm CARE}\alpha_{\Pi_-}}{\delta}\right)
\log\!\left(e+\frac{\alpha_X}{\varepsilon}\right)\Bigg).
\end{aligned}
\label{eq:care-rectangle-query-app}
\end{equation}
For this precision choice, the sufficient node count is
\begin{equation}
m=O\!\left(\frac{\alpha_H}{\underline\Delta_{\rm ax}}
\log\!\left(e+
\frac{\alpha_H(1+\|X\|^2)}
{\underline\Delta_{\rm ax}\varepsilon}\right)\right).
\label{eq:care-rectangle-final-nodes-app}
\end{equation}
\end{proposition}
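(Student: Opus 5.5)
The plan is to reuse the machinery already established for the DRE paired-rectangle rule (Proposition~\ref{prop:dre-weighted-quadrature-app}), rescaled from the normalized Hamiltonian to \(\mathcal H_{\rm CARE}\) at scale \(\alpha_H\), with weight \(g=1\) and right factor \(R=\mathsf I\). I first verify the enclosure and the bound on \(\mathcal R_{\rm CARE}\). Applying Theorem~\ref{thm:factor-axis-app} with the supplied lower bound \(\underline\Delta_{\rm ax}\) in place of \(\Delta_{\rm ax}\) and \(\tau=1/2\), the inner edge sits at \(-d=-\underline\Delta_{\rm ax}/2\). Singular-value perturbation from the imaginary axis then gives \(\sigma_{\min}(z\mathsf I-\mathcal H_{\rm CARE})\ge d\) on that edge, and at least \(\alpha_H\ge d\) on the outer edges, where \(|z|\ge2\alpha_H\). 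Hence \(\|(z\mathsf I-\mathcal H_{\rm CARE})^{-1}\|\le1/d\) on \(\Gamma_-\). Using \(|z|+\alpha_H\le(1+2\sqrt2)\alpha_H\) gives the stated bound on \(\mathcal R_{\rm CARE}\), and the eigenvalue localization in that theorem gives exact enclosure of the stable branch.

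Next come the panel count and the quadrature error. Each side has length \(L\) and receives \(\lceil L/d\rceil\le L/d+1\) panels, which sums to \(N_{\rm pan}\le\ell(\Gamma_-)/d+4\), with half-length \(|h_\ell|\le d/2\). For the error, I follow the Taylor-disk argument of the DRE proposition. Around each panel midpoint \(c_\ell\), Lemma~\ref{lem:inverse-input-perturbation} bounds the resolvent by \(2/d\) on the disk \(|z-c_\ell|\le d\), because \(\|(c_\ell\mathsf I-\mathcal H)^{-1}\|\le1/d\) and the perturbation ratio is at most \(1/2\). However, the disk must be taken with radius \(d/2\) to guarantee the ratio \(1/2\). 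That gives a resolvent bound of \(2/d\) on radius \(d/2\), and the panel half-length ratio \(|h_\ell|/(d/2)\le1\) is too weak. This is the main obstacle: matching the constants \(2\ell/(\pi d)\,4^{-p}\). I would first try enlarging the disk to radius \(r\) with \(r\cdot(1/d)<1\), say \(r=3d/4\). Then the resolvent bound is \(4/d\) and the ratio is \(2/3\), which gives a geometric rate worse than \(1/4\). Alternatively, I would use the sharper Bernstein-ellipse Gauss–Legendre estimate. If the exact constant does not come out, I would accept a modified constant or rate base, since only \(p=O(\log(\cdot/\delta))\) matters downstream. In either case the Taylor remainder through degree \(2p-1\) is integrated exactly by Gauss–Legendre, the remainder bound is multiplied by \(|h_\ell|(2+\sum w_j)/(2\pi)\), and summing over panels with \(\sum 2|h_\ell|=\ell(\Gamma_-)\) gives the error \(\|S_m-\Pi_-\|\). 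The projector-norm bound \(\|\Pi_-\|\le\ell/(2\pi d)\) follows directly by taking norms in the Riesz integral.

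The remaining items follow from results already in place. Lemma~\ref{lem:care-normalization-factor-app} applies to this positive-weight rule and gives \(\alpha_{\rm CARE}\le(6c_\beta/\pi)\mathcal R_{\rm CARE}=O(\alpha_H/\underline\Delta_{\rm ax})\), uniformly in \(p\). Solving \(2\ell/(\pi d)\,4^{-p}\le\delta/2\) gives the stated \(p\), and \(m=N_{\rm pan}p\) yields the node count. For recovery, I take \(\delta=\varepsilon/[8(1+\|X\|^2)]\) and assign \(\delta/2\) to implementation through Proposition~\ref{thm:local-contour-sum}. Invoking Proposition~\ref{prop:dre-direct-projector-stability-app} with \(\mathcal E=\Pi_-\), via \eqref{eq:algebraic-projector-recovery-app}, then gives error \(\varepsilon\) and \(\alpha_X=8\alpha_{\Pi_-}\sqrt{1+\|X\|^2}\). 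Finally, multiplying the per-call cost \(Q_{\Pi_-}(\delta)\) by \(O(\alpha_X\log(e+\alpha_X/\varepsilon))\) produces \eqref{eq:care-rectangle-query-app}, and substituting \(\delta\) gives the final node count.
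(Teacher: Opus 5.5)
\textbf{Gap at the quadrature-error step.} Your outline follows the paper's proof except at the step you flagged, and that step is not established. Your Cauchy-estimate route on a disk of radius \(r<d\) around \(c_\ell\) must pay the factor \(1/(1-r/d)\) from Lemma~\ref{lem:inverse-input-perturbation}. With panel half-lengths up to \(d/2\), the best you get is something like rate \((4/9)^p\) (your choice \(r=3d/4\)) with a larger constant. The Bernstein-ellipse alternative is only mentioned, not carried out.

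Settling for ``a modified constant or rate base'' does not prove the proposition as stated. The claimed bound \(\|S_m-\Pi_-\|\le\frac{2\ell(\Gamma_-)}{\pi d}4^{-p}\) has that explicit constant and rate. The explicit order \(p=\lceil\log(4\ell(\Gamma_-)/(\pi d\delta))/(2\log2)\rceil\) is calibrated to it. With a rate of \((4/9)^p\), that \(p\) does not in general guarantee error \(\delta/2\). The asymptotic node and query counts would survive, but these two explicit claims would not.

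\textbf{The missing idea.} With constant weight \(g=1\), you need no disk and no Cauchy estimate: the resolvent's own Neumann series about the panel center already has sharply bounded coefficients. Expand
\[
(z\mathsf I-\mathcal H_{\rm CARE})^{-1}=\sum_{k\ge0}(c_\ell-z)^k\,(c_\ell\mathsf I-\mathcal H_{\rm CARE})^{-(k+1)} .
\]
Since \(\|(c_\ell\mathsf I-\mathcal H_{\rm CARE})^{-1}\|\le1/d\), submultiplicativity bounds the \(k\)-th coefficient by \(d^{-(k+1)}\). This is effectively the Cauchy bound at the full radius \(d\), with no \(1/(1-r/d)\) loss.

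On the panel, \(|z-c_\ell|\le|h_\ell|\le d/2\). Truncating after degree \(2p-1\) in the panel parameter \(t\) therefore leaves a tail of norm at most \(\sum_{k\ge2p}d^{-1}2^{-k}=(2/d)4^{-p}\). The truncated part is a polynomial of degree at most \(2p-1\) in \(t\), so \(p\)-point Gauss--Legendre integrates it exactly. Positivity of the weights and \(\sum_jw_j=2\) then bound the panel error by
\[
\frac{|h_\ell|}{2\pi}\,(2+2)\,\frac2d\,4^{-p}=\frac{4|h_\ell|}{\pi d}\,4^{-p}.
\]
Summing with \(\sum_\ell2|h_\ell|=\ell(\Gamma_-)\) gives exactly the stated bound, and solving \(\frac{2\ell}{\pi d}4^{-p}\le\delta/2\) gives the stated \(p\). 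This is why the paper remarks that for \(g=1\) the panels may have length \(d\). The disk in the DRE proof was needed only to control the exponential weight, which forced shorter panels relative to the inverse bound there.

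\textbf{What matches the paper.} The rest of your plan is the paper's argument:
\begin{itemize}
\item enclosure and the bound on \(\mathcal R_{\rm CARE}\) from Theorem~\ref{thm:factor-axis-app};
\item the norm integral for \(\|\Pi_-\|\);
\item the ceiling panel count;
\item Lemma~\ref{lem:care-normalization-factor-app}, applicable with \(\tau=d/\Delta_{\rm ax}\le1/2\);
\item recovery through Proposition~\ref{prop:dre-direct-projector-stability-app}.
\end{itemize}
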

\begin{proof}
Theorem~\ref{thm:factor-axis-app}, with supplied gap
\(\underline\Delta_{\rm ax}\) and inner-edge distance
\(d=\underline\Delta_{\rm ax}/2\), gives the stable spectral
enclosure, inverse bound \(1/d\), and generalized singularity bound
in \eqref{eq:care-rectangle-bounds-app}. Integrating the inverse bound
over the contour gives the stated projector-norm bound.

For the constant weight \(g=1\), the panels may have length \(d\).
At their centers, \(\|(c_\ell\mathsf I-\mathcal H_{\rm CARE})^{-1}\|\le1/d\)
and \(|h_\ell|\le d/2\); the Neumann series truncated after degree
\(2p-1\) therefore has remainder at most \((2/d)4^{-p}\).
Gauss--Legendre exactness and positivity, as in the proof of
Proposition~\ref{prop:dre-weighted-quadrature-app}, give panel error
at most \(4|h_\ell|4^{-p}/(\pi d)\).
Summing \(2|h_\ell|\) over the contour proves the stated error;
the four ceiling bounds give \(N_{\rm pan}\).
Lemma~\ref{lem:care-normalization-factor-app} applies with
\(\tau=d/\Delta_{\rm ax}\le1/2\).

The order in \eqref{eq:care-rectangle-order-app} gives quadrature
error at most \(\delta/2\) and the stated node count, since
\(\ell(\Gamma_-)/d=O(\alpha_H/\underline\Delta_{\rm ax})\).
The remaining implementation budget and
Proposition~\ref{prop:dre-direct-projector-stability-app}, with
the choices in \eqref{eq:algebraic-projector-output-app}
and~\eqref{eq:algebraic-projector-budget-app}, give the output
scale and query bound. Substituting \(\delta\) gives
\eqref{eq:care-rectangle-final-nodes-app}.
\end{proof}

\begin{remark}[Nodewise scales from coefficient intervals]
\label{rem:care-nodewise-scales-app}
Useful nodewise inverse scales require only mild coefficient assumptions.
For example, take the case of \(A=A^*\), put \(a=\|A\|<1\), and use
\eqref{eq:factor-coefficient-data-app}; put
\(\mu=\min\{g_0,q_0\}\), \(m=\max\{\bar g,\bar q\}\), and assume
\(0<\mu\le1\), \(m/\mu=O(1)\). Write
\(f(t)=\sqrt{t^2+((m+\mu)/2)^2}-(m-\mu)/2\).
\[
\beta_{\rm DARE}(z)=\frac4{f(\operatorname{dist}(z,[-a,a]))},\qquad |z|=1,
\quad
\|X\|\le\frac{\bar q+\sqrt{\bar q^2+4\bar q/g_0}}2=O(1).
\]
At \(z_j=e^{2\pi\mathrm i j/N}\), with \(N\ge\mu^{-1}\) also satisfying
\eqref{eq:dare-circle-count-app},
and bounded input scales,
\(\mathcal R_{\rm DARE}=\Theta((\mu+1-a)^{-1})\) and
\(\alpha_{\rm DARE}=N^{-1}\sum_j\beta_j
=O(\log(e+\mathcal R_{\rm DARE}))\);
a common \(\beta_j=4/f(1-a)\) instead gives \(\Theta(\mathcal R_{\rm DARE})\).
The other three problems in the same condition admit
\[
\begin{aligned}
\beta_{\rm CARE}(\pm\mu/2+\mathrm iy)
&=\frac4{\sqrt{y^2+m^2/4}-(m-\mu)/2},\\
\beta_{\rm DRE}(z)&=\alpha_H\beta_{\rm CARE}(\alpha_Hz),\\
\beta_{\rm RR}(re^{\mathrm i\theta})
&=\frac{4(1+m)\max\{1,r^{-1}\}}{f(\kappa_r|\sin\theta|)-\delta_r},
\end{aligned}
\]
where RR still requires \(A\) to be invertible and uses
\(r=1\pm\mu/[2(1+m)]\), \(\kappa_r=(r+r^{-1})/2\),
\(\delta_r=|r-r^{-1}|/2\le\mu/2\).
On remaining boundary pieces with \(|z|>\alpha\), use \(4/(|z|-\alpha)\)
for the corresponding operator scale \(\alpha\).
The logarithm follows from \(f(t)=\Theta(t+\mu)\).
These bounds need not be pointwise tight, but nodewise inversion can reduce
\(\alpha_{\rm CARE/DARE/DRE/RR}\) from \(O(\mathcal R_{\rm CARE/DARE/DRE/RR})\) to
\(O(\log(e+\mathcal R_{\rm CARE/DARE/DRE/RR}))\), as this example shows.
\end{remark}

The panel quadrature applies to matrices with arbitrary Jordan structure
and to contours with corners. Under coherent access, the node count does
not multiply the matrix-query bound. The costs of preparing the nodes and
weights are accounted for separately.

Two types of supplied data give explicit choices of
\(\underline\Delta_{\rm ax}\). Under the coefficient assumptions of
Proposition~\ref{prop:factor-care-coefficients-app}, take
\(\underline\Delta_{\rm ax}=\min\{g_0,q_0\}\).
A supplied diagonalization of \(\mathcal H_{\rm CARE}/\alpha_H\)
instead gives \(\underline\Delta_{\rm ax}=\alpha_H\Delta_H/\kappa_V\),
where \(\Delta_H\) is its half-plane gap and \(\kappa_V\) bounds the
eigenvector condition number. These lower bounds need not approximate
\(\Delta_{\rm ax}\) within a constant factor. The nodewise constant-factor
inverse-norm estimates required by Proposition~\ref{thm:local-contour-sum}
are separate inputs.

For approximate Hamiltonian inputs, use
Proposition~\ref{prop:contour-input-errors-app} with the exact right
input \(R=\mathsf I\).
The quadrature, input, and implementation errors must sum to at most
\(\delta\); assigning each at most \(\delta/3\) suffices.
The nodewise smallness conditions in that proposition preserve the inverse
bounds up to constant factors. The same projector-recovery proof then applies.

Weyl--LCHM constructs the same full projector. Its degree and actual
normalization both enter the recovery cost.

\begin{corollary}[CARE through Weyl--LCHM]
\label{cor:care-weyl-recovery-app}
For the CARE projector of Lemma~\ref{lem:care-graph-recovery}, let
\(\widehat H=\mathcal H_{\rm CARE}/\alpha_H\) have the exact-input
access of Proposition~\ref{cor:lchm-branch-block}. Choose its
left-half-plane selector to approximate \(\Pi_-\) to the tolerance
\(\delta\) in \eqref{eq:algebraic-projector-budget-app}, with degree
bound \(d_S\) and normalization \(\alpha_{S_{-,r}}\).
Pad the projector encoding to normalization \(2\alpha_{S_{-,r}}\),
and set \(\alpha_X=16\alpha_{S_{-,r}}\sqrt{1+\|X\|^2}\), with the supplied
solution-norm bound substituted as in
Proposition~\ref{prop:dre-direct-projector-stability-app}.
Then the output has decoded error at most \(\varepsilon\) and query cost
\begin{equation}
Q_{\rm CARE}^{\rm Weyl}
=O\!\left(d_S\alpha_X
\log\!\left(e+\frac{\alpha_X}{\varepsilon}\right)\right)
\label{eq:care-weyl-query-app}
\end{equation}
to the Hamiltonian encoding and its adjoint and controlled versions.
If \(\widehat H=V\Lambda V^{-1}\) satisfies
Theorem~\ref{thm:lchm-explicit-query}, with supplied
\(\kappa_V\ge\|V\|\|V^{-1}\|\), half-plane gap \(\Delta_H>0\),
and sector parameter \(0\le\eta<1\), then
\begin{equation}
Q_{\rm CARE}^{\rm Weyl}
=\widetilde O\!\left(
\frac{\alpha_X}{\Delta_H^2(1-\eta^2)^2}\right).
\label{eq:care-weyl-sector-app}
\end{equation}
Only the recovery logarithm and the degree logarithms in
Theorem~\ref{thm:lchm-explicit-query}, evaluated at \(\delta\),
are suppressed.
\end{corollary}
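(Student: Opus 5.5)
The plan is to treat the Weyl--LCHM construction as a black-box projector oracle and feed it into the common direct-recovery result, Proposition~\ref{prop:dre-direct-projector-stability-app}, exactly as in the contour case of Theorem~\ref{thm:care-construction-main}.

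First, fix the projector accuracy. Take \(\delta=\varepsilon/[8(1+\|X\|^2)]\) from \eqref{eq:algebraic-projector-budget-app}, with the supplied solution bound in place of \(\|X\|\) everywhere. Choose the selector \(S_{-,r}\) so that \(\|S_{-,r}(\widehat H)-\Pi_-\|\le\delta/2\) (here \(R=\mathsf I\) and \(\alpha_R=1\)). Proposition~\ref{cor:lchm-branch-block} then returns a block-encoding of \(\Pi_-\) with normalization \(\alpha_{S_{-,r}}\), decoded error at most \(\delta\), and \(O(d_S)\) Hamiltonian queries. Lemma~\ref{lem:care-graph-recovery} identifies \(\Pi_-\) as an idempotent with range \(\operatorname{ran}[\mathsf I;X]\), so Lemma~\ref{lem:dre-projector-block-recovery-app} and the stability proposition apply with \(\mathcal E=\Pi_-\).

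Second, verify the normalization hypothesis \(\alpha_{\mathcal E}\ge\|\mathcal E\|\) required before \eqref{eq:dre-supplied-output-scale-app}. I expect this to be the main point needing care, because \(\alpha_{S_{-,r}}\) bounds the encoded block, not the exact projector. The decoded block \(\widetilde\Pi\) satisfies \(\|\widetilde\Pi\|\le\alpha_{S_{-,r}}\) and \(\|\widetilde\Pi-\Pi_-\|\le\delta\le1/8\). Hence \(\|\Pi_-\|\le\alpha_{S_{-,r}}+1/8\le2\alpha_{S_{-,r}}\), using \(\alpha_{S_{-,r}}\ge|2S_{-,r}(1)-\tfrac12|\ge\tfrac12\) from \eqref{eq:lchm-selector-normalization}. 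Padding to \(\alpha_{\mathcal E}=2\alpha_{S_{-,r}}\) is an exact rescaling: it uses one extra ancilla rotation, costs no queries, and leaves the decoded error unchanged. With this \(\alpha_{\mathcal E}\), the default output scale \eqref{eq:dre-supplied-output-scale-app} becomes \(\alpha_X=16\alpha_{S_{-,r}}\sqrt{1+\|X\|^2}\). The stability proposition, with \(\delta_E=\delta\), \(\varepsilon_{\rm inv}=\varepsilon/[4(1+\alpha_{\mathcal E})]\), and product error at most \(\varepsilon/4\), then gives decoded error at most \(\varepsilon\). Its cost is \(O(Q_{\Pi_-}(\delta)\,\alpha_X\log(e+\alpha_X/\varepsilon))\) with \(Q_{\Pi_-}(\delta)=O(d_S)\), which is \eqref{eq:care-weyl-query-app}.

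Third, for the sector case, invoke Theorem~\ref{thm:lchm-explicit-query} with \(R=\mathsf I\), \(\alpha_R=1\), and target tolerance \(\delta\). Its proof chooses the degree \eqref{eq:lchm-sector-degree} so that the selector error is at most \(\delta/2\), which is exactly what the first step requires. This gives
\[
d_S=O\!\left(\frac{1}{\Delta_H^2(1-\eta^2)^2}\log\!\left(e+\frac{\kappa_V}{\delta\sqrt{1-\eta^2}}\right)\right).
\]
The only subtlety is which normalization to use. The corollary uses the actual \(\alpha_{S_{-,r}}\) inside \(\alpha_X\), whatever admissible value is supplied (for example \(2^r\) from \eqref{eq:lchm-explicit-normalization}). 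The bound is therefore stated in terms of \(\alpha_X\) and not simplified further. Substituting \(d_S\) into \eqref{eq:care-weyl-query-app} and suppressing the degree logarithm at \(\delta\) together with the recovery logarithm gives \eqref{eq:care-weyl-sector-app}.
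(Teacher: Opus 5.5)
Your argument is the same as the paper's proof. You take the Weyl--LCHM block from Proposition~\ref{cor:lchm-branch-block} with \(\chi=-\) and \(R=\mathsf I\) as the projector oracle, and pad it by two so that the declared scale dominates \(\|\Pi_-\|\). You then pass it to Proposition~\ref{prop:dre-direct-projector-stability-app} with the budget \eqref{eq:algebraic-projector-budget-app}, and substitute the sector degree of Theorem~\ref{thm:lchm-explicit-query} at tolerance \(\delta\).

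One justification in your padding step is false, although the conclusion it supports still holds. You write
\[
|2S_{-,r}(1)-\tfrac12|=|\tfrac12-q_{r-1}(1)|\ge\tfrac12 .
\]
This fails whenever \(0<q_{r-1}(1)<1\). That case occurs for the paper's own binomial selector with \(\eta>0\), because its partial sums at \(w=1\) increase to \(1\) from below.

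The padding only needs \(\alpha_{S_{-,r}}\ge1/8\), and there are two valid ways to get it:
\begin{itemize}
\item As in the paper: \(\alpha_{S_{-,r}}\ge\|\widetilde\Pi\|\ge\|\Pi_-\|-\delta\ge7/8\).
\item From the Weyl lift: its constant coefficient is \(c_0=\tfrac12\), so its maximum modulus on the unit circle is at least its mean \(\tfrac12\). This is the Parseval bound used in Lemma~\ref{lem:lchm-input-sensitivity}.
\end{itemize}
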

\begin{proof}
Apply Proposition~\ref{cor:lchm-branch-block} with \(\chi=-\) and
\(R=\mathsf I\). Positive scaling preserves the spectral projector.
The circuit uses \(O(d_S)\) Hamiltonian queries and has normalization
\(\alpha_{S_{-,r}}\) before padding. Since \(\|\Pi_-\|\ge1\)
and \(\delta\le1/8\), this normalization is at least \(7/8\),
so padding by two also bounds the exact projector norm.
Proposition~\ref{prop:dre-direct-projector-stability-app}
gives \eqref{eq:care-weyl-query-app}; substitution of the sector degree
bound proves \eqref{eq:care-weyl-sector-app}.
\end{proof}

The sufficient normalization \(\alpha_{S_{-,r}}=2^r\) in
\eqref{eq:lchm-explicit-normalization} can grow with the degree.
Thus the degree alone does not bound the full recovery cost
polynomially in the inverse gap or target error. Any reduction of this
normalization includes its implementation overhead in each projector
call. For approximate inputs, the selector, input, and circuit errors
of Proposition~\ref{prop:lchm-approximate-inputs} must together fit
within \(\delta\).

\subsection{DARE finite projector, unit-circle quadrature, and construction bound}
\label{app:dare-construction}

The finite stable graph of the symplectic pencil
\cite{VanDooren1981Generalized,LancasterRodman1995Riccati} gives the
same recovery, including when \(L\) is singular.

\newtheorem*{daregraphrestatement}{Lemma~\ref{lem:dare-graph-recovery}}
\begin{daregraphrestatement}
Let \(X\) be the stabilizing solution in
Definition~\ref{def:problem-dare}.
For \(E_1=[\mathsf I;0]\) and \(E_2=[0;\mathsf I]\), the finite
unit-disk projector of \(M-zL\) satisfies
\begin{equation*}
\Pi_<^2=\Pi_<,
\qquad
\operatorname{ran}\Pi_<=\operatorname{ran}\begin{bmatrix}\mathsf I\\X\end{bmatrix}.
\end{equation*}
Its upper block row has full row rank, and
\begin{equation*}
X=(E_2^*\Pi_<)(E_1^*\Pi_<)^+.
\end{equation*}
\end{daregraphrestatement}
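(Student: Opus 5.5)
The plan follows the pattern of the CARE proof, but I cannot assume that \(A\), \(L\), or \(Q\) are invertible and I cannot use the coefficient bounds of Proposition~\ref{prop:factor-dare-coefficients-app}. So I first build a strict equivalence of the pencil from the stabilizing solution \(X\) alone. Set \(V=[\mathsf I;X]\) and \(F=(\mathsf I+GX)^{-1}A\), which is Schur stable by Definition~\ref{def:problem-dare}. By \eqref{eq:dare-deflating-graph}, \(MV=LVF\). For the complement, take \(T=\begin{bmatrix}\mathsf I&0\\X&\mathsf I\end{bmatrix}\) and look for a block upper-triangular reduction of both \(M\) and \(L\) under a left factor \(S\). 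Direct computation gives
\[
MT=\begin{bmatrix}A&0\\X-Q&\mathsf I\end{bmatrix},\qquad
LT=\begin{bmatrix}\mathsf I+GX&G\\A^*X&A^*\end{bmatrix}.
\]
I will take \(S=[LV,\ M E_2]\), that is,
\[
S=\begin{bmatrix}\mathsf I+GX&0\\A^*X&\mathsf I\end{bmatrix}.
\]
This matrix is invertible because \(\mathsf I+GX\) is invertible (Proposition~\ref{prop:lqr-conventions-app}, with \(G,X\succeq0\)). Using the DARE in the form \(X-Q=A^*X F\), I expect
\[
S^{-1}MT=\begin{bmatrix}F&0\\0&\mathsf I\end{bmatrix},\qquad
S^{-1}LT=\begin{bmatrix}\mathsf I&(\mathsf I+GX)^{-1}G\\0&A^*-A^*X(\mathsf I+GX)^{-1}G\end{bmatrix}.
\]
The lower-right block is \(K:=A^*(\mathsf I+XG)^{-1}=F^*\), by the push-through identity \(X(\mathsf I+GX)^{-1}=(\mathsf I+XG)^{-1}X\). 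Verifying these two block identities is the routine part.

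The pencil is then strictly equivalent to
\[
z\begin{bmatrix}\mathsf I&W\\0&F^*\end{bmatrix}-\begin{bmatrix}F&0\\0&\mathsf I\end{bmatrix},\qquad W=(\mathsf I+GX)^{-1}G.
\]
Its determinant is \(\det(z\mathsf I-F)\det(zF^*-\mathsf I)\). This is not identically zero, so the pencil is regular. The finite eigenvalues of the second factor are the reciprocals of the nonzero eigenvalues of \(F^*\), all of which lie outside the closed unit disk; the remaining modes are infinite. Hence the finite spectrum inside the disk is exactly \(\operatorname{spec}F\). To get the projector, I compute the inverse
\[
(zL-M)^{-1}L=T\,(z\widehat L-\widehat M)^{-1}\widehat L\,T^{-1},
\]
with \(\widehat L=S^{-1}LT\) and \(\widehat M=S^{-1}MT\). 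The block upper-triangular inverse has \((1,1)\) entry \((z\mathsf I-F)^{-1}\) and \((2,2)\) entry \((zF^*-\mathsf I)^{-1}F^*\). The second of these is holomorphic on a neighborhood of the closed unit disk, because \(\rho(F)<1\) gives \(\|zF^*\|<1\) there.

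The main obstacle is the off-diagonal term, which couples the \((1,1)\) resolvent to \((zF^*-\mathsf I)^{-1}\). I first check that the block structure is preserved under multiplication by \(\widehat L\). The \((1,1)\) entry of the product \((z\widehat L-\widehat M)^{-1}\widehat L\) is \((z\mathsf I-F)^{-1}\), whose contour integral is \(\mathsf I\). The \((2,1)\) entry is \(0\). The \((2,2)\) entry is holomorphic inside the contour, so it integrates to \(0\). The \((1,2)\) entry \(B(z)\) need not integrate to zero. The integral of the full operator is therefore
\[
\begin{bmatrix}\mathsf I&\ast\\0&0\end{bmatrix},
\]
which is idempotent as a matrix. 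Conjugating back by \(T\) gives \(\Pi_<=T\begin{bmatrix}\mathsf I&\ast\\0&0\end{bmatrix}T^{-1}\). This is idempotent, and its range is \(T\operatorname{ran}E_1=\operatorname{ran}V\). This argument relies only on Proposition~\ref{prop:riesz-calculus-app}'s general fact that the unit-weight pencil Riesz operator is the deflating projector. Alternatively, I can avoid computing \(\ast\) altogether: \(\Pi_<\) is idempotent by Proposition~\ref{prop:riesz-calculus-app}, and its range is the finite deflating subspace for \(\operatorname{spec}F\), of dimension \(n\), which contains \(\operatorname{ran}V\) because \(MV=LVF\) with \(\rho(F)<1\). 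Equal dimensions then give \(\operatorname{ran}\Pi_<=\operatorname{ran}V\).

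With idempotence and \(\operatorname{ran}\Pi_<=\operatorname{ran}[\mathsf I;X]\) established, Lemma~\ref{lem:dre-projector-block-recovery-app} applied with \(\mathcal E=\Pi_<\) and \(P=X\) gives full row rank of \(E_1^*\Pi_<\) and the recovery formula \(X=(E_2^*\Pi_<)(E_1^*\Pi_<)^+\). No inverse of \(A\) or \(L\) is used at any step.
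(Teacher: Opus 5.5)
Your proof is correct and is essentially the paper's argument: your \(S^{-1}\) is exactly the paper's left factor \(T_\ell\) and your \(T\) is its \(T_r\), so you get the same block-triangular form of \(T_\ell(M-zL)T_r\), the same determinant factorization for regularity and for the location of the finite spectrum, and, in your alternative, the same range identification via \(\Pi_<V=V\) plus a dimension count before invoking Lemma~\ref{lem:dre-projector-block-recovery-app}. One small correction: \(\rho(F)<1\) does not give \(\|zF^*\|<1\); what you need is \(\rho(zF^*)<1\) for \(|z|\le1\) (or simply that the contour excludes the reciprocal eigenvalues), and this still makes the \((2,2)\) block holomorphic inside the contour.
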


\begin{proof}
Set \(F=(\mathsf I+GX)^{-1}A\) and \(V=[\mathsf I;X]\).
The compact DARE gives \(MV=LVF\).
Consider the invertible matrices
\[
T_\ell=
\begin{bmatrix}
(\mathsf I+GX)^{-1}&0\\
-A^*X(\mathsf I+GX)^{-1}&\mathsf I
\end{bmatrix},\qquad
T_r=\begin{bmatrix}\mathsf I&0\\X&\mathsf I\end{bmatrix}.
\]
Using \(X=X^*\), \(G=G^*\), and the compact DARE, block multiplication gives
\begin{equation}
T_\ell(M-zL)T_r
=\begin{bmatrix}
F-z\mathsf I&-z(\mathsf I+GX)^{-1}G\\
0&\mathsf I-zF^*
\end{bmatrix}.
\label{eq:dare-triangularization-app}
\end{equation}
Hence \(\det(M-zL)\) is a nonzero constant multiple of
\(\det(F-z\mathsf I)\det(\mathsf I-zF^*)\).
The first factor has \(n\) roots inside the unit disk, counted with
algebraic multiplicity, because \(\rho(F)<1\).
The second has no roots on or inside that disk.
This proves regularity and absence of finite unit-circle spectrum.
Proposition~\ref{prop:riesz-calculus-app} therefore gives a finite
unit-disk projector \(\Pi_<\) of rank \(n\), even when the pencil
also has infinite modes.

Choose a positively oriented contour \(\Gamma_<\) enclosing precisely
the finite unit-disk spectrum. The graph relation implies, on this contour,
\[
(zL-M)^{-1}LV=V(z\mathsf I-F)^{-1}.
\]
Integration yields
\[
\Pi_<V
=\frac{1}{2\pi\mathrm i}\oint_{\Gamma_<}
(zL-M)^{-1}LV\,\mathrm dz
=V,
\]
since all eigenvalues of \(F\) lie inside the contour.
Both \(\operatorname{range}(V)\) and \(\operatorname{range}(\Pi_<)\)
have dimension \(n\), so they coincide.
Idempotence and this graph range permit
Lemma~\ref{lem:dre-projector-block-recovery-app} to give the full-row
recovery formula. The argument includes infinite modes in the
complementary branch and does not invert \(L\).
\end{proof}

\newtheorem*{dareproblemrestatement}{Problem~\ref{prob:quantum-dare}}
\begin{dareproblemrestatement}
For the LQR data in Definition~\ref{def:problem-dare}, let
\(X\succeq0\) be the stabilizing solution, with \(\|X\|<\infty\).
Given exact block-encodings of \(M,L\) in \eqref{eq:dare-pencil},
their adjoints and controlled versions, with normalizations
\(\alpha_M,\alpha_L\), and a target \(0<\varepsilon\le1\),
construct a block-encoding with decoded matrix \(\widetilde X\)
satisfying \(\|\widetilde X-X\|\le\varepsilon\).
\end{dareproblemrestatement}

\newtheorem*{daremainrestatement}{Theorem~\ref{thm:dare-construction-main}}
\begin{daremainrestatement}
For Problem~\ref{prob:quantum-dare}, supply an upper bound on \(\|X\|\)
and a positively oriented contour \(\Gamma_<\) enclosing exactly the
finite unit-disk spectrum of \(M-zL\). Define
\begin{equation*}
\mathcal R_{\rm DARE}
=\sup_{z\in\Gamma_<}(|z|\alpha_L+\alpha_M)\|(zL-M)^{-1}\|.
\end{equation*}
Under the implementation and normalization assumptions of
Appendix~\ref{app:algebraic-riccati}, the algorithm returns an
\((\alpha_X,a_X,\varepsilon)\) block-encoding of \(X\), using
\begin{equation*}
Q_{\rm DARE}=\widetilde O\!\left(\mathcal R_{\rm DARE}\alpha_X\right)
\end{equation*}
queries to the supplied pencil encodings and their adjoint and
controlled versions. Here \(\alpha_X\) is the actual output
normalization in \eqref{eq:algebraic-projector-output-app};
\(\widetilde O\) suppresses logarithmic dependence on precision and
the supplied scales.
\end{daremainrestatement}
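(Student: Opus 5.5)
The proof follows the CARE argument, with the ordinary contour construction replaced by the affine-pencil construction.

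\textbf{Step 1 (target).} Lemma~\ref{lem:dare-graph-recovery} shows that \(\Pi_<=\mathfrak R_<[1;M,L]\) is an idempotent with range \(\operatorname{ran}[\mathsf I;X]\), even when \(A\) or \(L\) is singular. Therefore \eqref{eq:algebraic-projector-recovery-app} applies: \(X=(E_2^*\Pi_<)(E_1^*\Pi_<)^+\) and \(\|(E_1^*\Pi_<)^+\|\le\sqrt{1+\|X\|^2}\).

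We fix the output scale \(\alpha_X=8\alpha_{\Pi_<}\sqrt{1+u^2}\), where \(u\) is the supplied bound on \(\|X\|\). We take the budgets \(\delta,\varepsilon_{\rm inv}\) from \eqref{eq:algebraic-projector-budget-app}, with \(u\) substituted consistently for \(\|X\|\).

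\textbf{Step 2 (projector encoding).} Apply Proposition~\ref{thm:affine-pencil-lcu-main} with \(g=1\), \(R=\mathsf I\) and \(\alpha_R=1\). This encodes the finite sum \(S_{\Gamma,L}=\sum_j\omega_j(z_jL-M)^{-1}L\). Its raw normalization is \(\alpha_L\alpha_{\rm DARE}\), as in \eqref{eq:algebraic-raw-projector-scales-app}.

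Split \(\delta\) equally between quadrature and implementation. The theorem assumes a supplied quadrature rule; when a concrete one is wanted, use the circular trapezoid rule with Lemma~\ref{lem:analytic-quadrature-app}. On a thin strip about \(\Gamma_<\), the pencil resolvent is bounded by perturbation (Lemma~\ref{lem:inverse-input-perturbation} applied to \(z_0L-M\) with \(\Delta T=(z-z_0)L\)), exactly as in Proposition~\ref{prop:rr-weighted-quadrature-app}. Either way, the quadrature node count enters only logarithmically and does not multiply the query count under coherent access.

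One projector call then costs
\[
Q_{\Pi_<}(\delta)=O\!\left(r_{\Pi_<}\mathcal R_{\rm DARE}\log\!\left(e+\mathcal R_{\rm DARE}\alpha_L\alpha_{\rm DARE}/\delta\right)\right).
\]

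\textbf{Step 3 (recovery).} Apply Proposition~\ref{prop:dre-direct-projector-stability-app} with \(\mathcal E(t)=\Pi_<\) and \(P(t)=X\). Its proof uses only idempotence, the graph range, and a shared perturbation of the two block rows. Both block rows come from the same \(U_{\Pi_<}\), so the same argument applies unchanged.

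The result is decoded error \(\varepsilon\) and a total cost given by \eqref{eq:algebraic-projector-query-app}:
\[
Q_{\rm DARE}=O\!\left(r_{\Pi_<}\mathcal R_{\rm DARE}\alpha_X\log\!\left(e+\frac{\mathcal R_{\rm DARE}\alpha_L\alpha_{\rm DARE}(1+u^2)}{\varepsilon}\right)\log\!\left(e+\frac{\alpha_X}{\varepsilon}\right)\right).
\]
Under \(r_{\Pi_<}=O(1)\), this is \(\widetilde O(\mathcal R_{\rm DARE}\alpha_X)\). The extra calls to \(U_L\) for the right factor are constant per node circuit and are absorbed.

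\textbf{Main obstacle.} The delicate point is bookkeeping rather than analysis: Proposition~\ref{thm:affine-pencil-lcu-main} requires \(0<\varepsilon_{\rm impl}<\alpha_{\Gamma,L}\). If \(\alpha_{\Gamma,L}\le\delta/2\), use the zero-block shortcut from Appendix~\ref{app:contour-proofs} instead. We must also check that the normalization of the right factor \(L\) is counted in \(\alpha_{\Pi_<}\), and hence in \(\alpha_X\), and not in the query depth. The depth is governed by \(\mathcal R_{\rm DARE}\), since \(s_j\beta_j=O(\mathcal R_{\rm DARE})\) with \(s_j=|z_j|\alpha_L+\alpha_M\).
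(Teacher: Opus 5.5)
Your proposal is correct and matches the paper's proof step for step. It identifies \(\Pi_<\) through Lemma~\ref{lem:dare-graph-recovery}, encodes it by Proposition~\ref{thm:affine-pencil-lcu-main} with \(g=1\), \(R=\mathsf I\) and raw scale \(\alpha_L\alpha_{\rm DARE}\) with \(\delta\) split evenly between quadrature and implementation, and then applies Proposition~\ref{prop:dre-direct-projector-stability-app} with \(\mathcal E(t)=\Pi_<\) to reach the same two-logarithm bound \eqref{eq:dare-query-expanded-app}. One small imprecision in your extra remarks (the zero-block shortcut and the circular rule as in Proposition~\ref{prop:dare-circle-quadrature-app}): under coherent access the node count does not enter the query bound at all, rather than entering it logarithmically.
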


\begin{proof}
Lemma~\ref{lem:dare-graph-recovery} identifies the complete finite
projector \(\Pi_<\). Choose \(\delta\) by
\eqref{eq:algebraic-projector-budget-app}, and apply
Proposition~\ref{thm:affine-pencil-lcu-main} with \(g=1\) and
\(R=\mathsf I\) to a quadrature with error at most \(\delta/2\).
The implemented sum is \(\sum_j\omega_j(z_jL-M)^{-1}L\), with
raw normalization \(\alpha_L\alpha_{\rm DARE}\).
Allocate the remaining \(\delta/2\) to implementation.
Including any declared normalization change, one projector call costs
\[
Q_{\Pi_<}(\delta)=O\!\left(r_{\Pi_<}\mathcal R_{\rm DARE}
\log\!\left(e+
\frac{\mathcal R_{\rm DARE}\alpha_L\alpha_{\rm DARE}}{\delta}\right)\right).
\]
The extra call implementing the right factor \(L\) is absorbed in
this bound. Proposition~\ref{prop:dre-direct-projector-stability-app}
then yields the output normalization and error, with
\begin{equation}
\begin{aligned}
Q_{\rm DARE}=O\!\Bigg(&r_{\Pi_<}\mathcal R_{\rm DARE}\alpha_X
\log\!\left(e+
\frac{\mathcal R_{\rm DARE}\alpha_L\alpha_{\rm DARE}(1+\|X\|^2)}{\varepsilon}\right)\\
&\times\log\!\left(e+\frac{\alpha_X}{\varepsilon}\right)\Bigg).
\end{aligned}
\label{eq:dare-query-expanded-app}
\end{equation}
With the supplied norm bound and \(r_{\Pi_<}=O(1)\), this is the
claimed query complexity.
\end{proof}

Algorithm~\ref{alg:dare-direct-projector-construction} uses the
projector and inverse accuracies in \eqref{eq:algebraic-projector-budget-app},
with the product error budget specified there.

\begin{algorithm}[H]
\caption{DARE: construct a block-encoding of \(X\)}
\label{alg:dare-direct-projector-construction}
\small
\begin{algorithmic}[1]
\Require Exact BEs of \(M,L\) with scales \(\alpha_M,\alpha_L\);
\(0<\varepsilon\le1\).
\Ensure A BE of \(X\) with decoded error \(\le\varepsilon\)
and scale \(\alpha_X\) defined in
Appendix~\ref{app:direct-graph-recovery}.

\State Construct the complete finite unit-disk projector BE:
\[
U_{\Pi_<}\gets\operatorname{RieszBE}
(U_M,U_L,\Gamma_<,1,U_{\mathsf I};\delta).
\]
\State Encode \((E_1^*\Pi_<)^+\) from the complete upper row of
\(U_{\Pi_<}\).
\State Compose the complete lower-row BE from the same
\(U_{\Pi_<}\) with this pseudoinverse BE to encode
\[
X=(E_2^*\Pi_<)(E_1^*\Pi_<)^+.
\]
\Statex \Return \((U_X,\alpha_X,a_X)\).
\end{algorithmic}
\end{algorithm}

The unit-circle rule controls the entire pencil integrand, including
finite-rule residuals from unselected and infinite modes. Its right
factor \(L\) remains in the normalization and the error bound.

\begin{lemma}[DARE normalization on the unit circle]
\label{lem:dare-normalization-factor-app}
For the parameters in \eqref{eq:dare-construction-parameters}
and~\eqref{eq:algebraic-quadrature-normalizations-app}, let
\(c_\beta\) be a uniform constant such that
\(\beta_j\le c_\beta\|(z_jL-M)^{-1}\|\).
Then
\begin{equation}
\alpha_{\rm DARE}\le c_\beta\mathcal R_{\rm DARE}
\sum_j\frac{|\omega_j|}{|z_j|\alpha_L+\alpha_M}.
\label{eq:dare-normalization-weight-bound-app}
\end{equation}
In particular, on the counterclockwise unit circle use the trapezoidal
nodes and coefficients
\[
z_j=e^{2\pi\mathrm i j/m},\qquad
\omega_j=\frac{z_j}{m},\qquad 0\le j<m.
\]
For the standard pencil in \eqref{eq:dare-pencil},
\begin{equation}
\begin{aligned}
\alpha_{\rm DARE}
&\le\frac{c_\beta}{\alpha_M+\alpha_L}\mathcal R_{\rm DARE}
\le\frac{c_\beta}{2}\mathcal R_{\rm DARE},\\
\alpha_{\Pi_<}=\alpha_L\alpha_{\rm DARE}
&\le\frac{c_\beta\alpha_L}{\alpha_M+\alpha_L}
\mathcal R_{\rm DARE}=O(\mathcal R_{\rm DARE}),
\end{aligned}
\label{eq:dare-normalization-factor-app}
\end{equation}
uniformly in \(m\), using the direct LCU normalization without reduction.
\end{lemma}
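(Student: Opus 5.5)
The proof mirrors Lemma~\ref{lem:care-normalization-factor-app}, with the pencil shift scale \(|z|\alpha_L+\alpha_M\) taking the place of \(|z|+\alpha_H\). We handle the general bound first, then specialize to the unit-circle trapezoidal rule.

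For the general inequality, use the definition \(\mathcal R_{\rm DARE}=\sup_{z\in\Gamma_<}(|z|\alpha_L+\alpha_M)\|(zL-M)^{-1}\|\). Since each node \(z_j\) lies on \(\Gamma_<\), it gives \(\|(z_jL-M)^{-1}\|\le\mathcal R_{\rm DARE}/(|z_j|\alpha_L+\alpha_M)\). Combining this with \(\beta_j\le c_\beta\|(z_jL-M)^{-1}\|\) and substituting into \(\alpha_{\rm DARE}=\sum_j|\omega_j|\beta_j\) from \eqref{eq:algebraic-quadrature-normalizations-app} yields \eqref{eq:dare-normalization-weight-bound-app} directly. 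This step uses no structure of the pencil.

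Now specialize to the unit circle. Every node has \(|z_j|=1\), so each denominator is the constant \(\alpha_L+\alpha_M\), and \(|\omega_j|=1/m\) gives \(\sum_j|\omega_j|=1\) independently of \(m\). Hence
\[
\alpha_{\rm DARE}\le\frac{c_\beta\mathcal R_{\rm DARE}}{\alpha_M+\alpha_L}.
\]
To reach \(\frac{c_\beta}{2}\mathcal R_{\rm DARE}\), we need \(\alpha_M+\alpha_L\ge2\) for the standard pencil \eqref{eq:dare-pencil}. This holds because an exact block-encoding has \(\alpha_M\ge\|M\|\) and \(\alpha_L\ge\|L\|\). The block-triangular matrix \(M=[A,0;-Q,\mathsf I]\) satisfies \(\|M\|\ge\|ME_2\|=\|[0;\mathsf I]\|=1\), and similarly \(\|L\|\ge\|LE_1\|=\|[\mathsf I;0]\|=1\). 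The lower-right block \(\mathsf I\) of \(M\) and the upper-left block \(\mathsf I\) of \(L\) are what guarantee these lower bounds even when \(A\) or \(L\) is singular.

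For the second line, multiply by \(\alpha_L\) and use \eqref{eq:algebraic-raw-projector-scales-app}, \(\alpha_{\Pi_<}=\alpha_L\alpha_{\rm DARE}\). Since \(\alpha_L/(\alpha_M+\alpha_L)\le1\), the result is \(O(\mathcal R_{\rm DARE})\), uniformly in \(m\), because the node count entered only through \(\sum_j|\omega_j|=1\). The argument is entirely routine. The only point needing care is the lower bound \(\alpha_M+\alpha_L\ge2\), which depends on the identity blocks of the standard pencil rather than on any invertibility; no normalization reduction is involved, since the bound is for the direct LCU scale.
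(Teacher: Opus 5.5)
Your proof is correct and follows the paper's argument essentially step for step. Both use the nodewise bound from the definition of \(\mathcal R_{\rm DARE}\), \(|z_j|=1\) and \(\sum_j|\omega_j|=1\) on the trapezoidal unit circle, and \(\alpha_M,\alpha_L\ge1\) via \(ME_2=E_2\), \(LE_1=E_1\), then multiply by \(\alpha_L\) for the right factor to obtain the projector scale.
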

\begin{proof}
The nodewise estimate
\(\beta_j\le c_\beta\mathcal R_{\rm DARE}/(|z_j|\alpha_L+\alpha_M)\)
gives \eqref{eq:dare-normalization-weight-bound-app}.
As in Proposition~\ref{prop:rr-weighted-quadrature-app}, the circle
has \(|z_j|=1\) and \(\sum_j|\omega_j|=1\).
The standard pencil satisfies \(ME_2=E_2\), \(LE_1=E_1\), so
\(\alpha_M,\alpha_L\ge1\). Multiplication by the encoded \(L\)
gives the full-projector normalization, whether or not \(L\) is invertible.
\end{proof}

A supplied singular-value bound gives an explicit quadrature order.

\begin{proposition}[DARE on the unit circle]
\label{prop:dare-circle-quadrature-app}
For the DARE projector of Lemma~\ref{lem:dare-graph-recovery}, use the
input-access, node inverse-norm, and recovery-bound assumptions stated
at the start of Appendix~\ref{app:algebraic-riccati}, and suppose
\[
0<\underline\eta_{\rm pen}\le
\min_{|z|=1}\sigma_{\min}(zL-M).
\]
Choose the counterclockwise unit circle and set
\begin{equation}
\begin{aligned}
a_{\rm pen}&=\log\!\left(1+
\frac{\underline\eta_{\rm pen}}{2\alpha_L}\right),\\
z_j&=e^{2\pi\mathrm i j/m},\qquad \omega_j=z_j/m,
\qquad 0\le j<m,\\
S_m&=\frac1m\sum_{j=0}^{m-1}z_j(z_jL-M)^{-1}L.
\end{aligned}
\label{eq:dare-circle-rule-app}
\end{equation}
For every integer \(m\ge1\), the full-projector error satisfies
\begin{equation}
\|S_m-\Pi_<\|
\le\frac{4\alpha_Le^{a_{\rm pen}}}
{\underline\eta_{\rm pen}(e^{a_{\rm pen}m}-1)}.
\label{eq:dare-circle-error-app}
\end{equation}
For any \(\delta>0\), error at most \(\delta/2\) is obtained whenever
\begin{equation}
m\ge\left\lceil
\frac1{a_{\rm pen}}\log\!\left(1+
\frac{8\alpha_Le^{a_{\rm pen}}}{\underline\eta_{\rm pen}\delta}\right)
\right\rceil.
\label{eq:dare-circle-count-app}
\end{equation}
The construction parameters obey
\begin{equation}
\begin{aligned}
\mathcal R_{\rm DARE}&\le
\frac{\alpha_M+\alpha_L}{\underline\eta_{\rm pen}},&
\alpha_{\rm DARE}&\le\frac{c_\beta}{\underline\eta_{\rm pen}},\\
\alpha_{\Pi_<}&\le\frac{c_\beta\alpha_L}{\underline\eta_{\rm pen}},&
\|\Pi_<\|&\le\frac{\alpha_L}{\underline\eta_{\rm pen}},
\end{aligned}
\label{eq:dare-circle-parameters-app}
\end{equation}
where \(c_\beta\) is the uniform constant in
Lemma~\ref{lem:dare-normalization-factor-app}.
For \(0<\varepsilon\le1\), take
\(\delta=\varepsilon/[8(1+\|X\|^2)]\), using a supplied upper
bound on \(\|X\|\) consistently wherever it occurs.
Taking \(m\) equal to the right-hand side of
\eqref{eq:dare-circle-count-app} gives
\begin{equation}
m=O\!\left(\frac{\alpha_L}{\underline\eta_{\rm pen}}
\log\!\left(e+
\frac{\alpha_L(1+\|X\|^2)}
{\underline\eta_{\rm pen}\varepsilon}\right)\right).
\label{eq:dare-circle-count-epsilon-app}
\end{equation}
With projector implementation error at most \(\delta/2\), the output
has decoded error at most \(\varepsilon\), normalization
\begin{equation}
\alpha_X=8\alpha_{\Pi_<}\sqrt{1+\|X\|^2}
=O\!\left(
\frac{\alpha_L\sqrt{1+\|X\|^2}}{\underline\eta_{\rm pen}}\right),
\label{eq:dare-circle-output-app}
\end{equation}
and matrix-query cost
\begin{equation}
\begin{aligned}
Q_{\rm DARE}=O\!\Bigg(&
\mathcal R_{\rm DARE}\alpha_X
\log\!\left(e+
\frac{\mathcal R_{\rm DARE}\alpha_{\Pi_<}}{\delta}\right)
\log\!\left(e+\frac{\alpha_X}{\varepsilon}\right)\Bigg).
\end{aligned}
\label{eq:dare-circle-query-app}
\end{equation}
Supplied norm bounds are used consistently in the precision choices,
output normalization, and logarithms.
\end{proposition}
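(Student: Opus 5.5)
The plan is to reduce the quadrature claim to Lemma~\ref{lem:analytic-quadrature-app} applied to the periodic integrand
\[
F(t)=z(t)\bigl(z(t)L-M\bigr)^{-1}L,\qquad z(t)=e^{\mathrm it}.
\]
Since \(\mathrm dz=\mathrm iz\,\mathrm dt\), the average \((2\pi)^{-1}\int_0^{2\pi}F\,\mathrm dt\) equals the pencil Riesz integral \eqref{eq:dare-full-riesz-projector} over the unit circle. The equispaced average is exactly \(S_m\) in \eqref{eq:dare-circle-rule-app}.

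To identify the target, I would use Lemma~\ref{lem:dare-graph-recovery}, which also gives regularity and excludes finite spectrum on the unit circle. Together with Proposition~\ref{prop:riesz-calculus-app}, this shows that the counterclockwise unit circle encloses exactly the finite unit-disk spectrum. Hence the continuous average is \(\Pi_<\), with infinite modes contributing nothing.

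The main step, and the one I expect to need the most care, is the strip bound that certifies holomorphy and bounds the integrand.

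\textbf{Strip estimate.} For \(|\operatorname{Im}t|\le a_{\rm pen}\), write \(z_0=e^{\mathrm i\operatorname{Re}t}\). Then
\[
|z(t)-z_0|\le e^{a_{\rm pen}}-1=\frac{\underline\eta_{\rm pen}}{2\alpha_L}.
\]
The inward displacement is smaller, since \(1-e^{-a}\le e^a-1\). I would apply Lemma~\ref{lem:inverse-input-perturbation} with \(T=z_0L-M\) and \(\Delta T=(z(t)-z_0)L\). Because \(\|L\|\le\alpha_L\), we have \(\|\Delta T\|\,\|T^{-1}\|\le1/2\). This gives invertibility on the whole closed strip, with some margin to spare, and the bound \(\|(z(t)L-M)^{-1}\|\le 2/\underline\eta_{\rm pen}\).

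It follows that \(F\) is holomorphic on a neighbourhood of the strip, with
\[
\sup_{|\operatorname{Im}t|\le a_{\rm pen}}\|F(t)\|\le \frac{2\alpha_Le^{a_{\rm pen}}}{\underline\eta_{\rm pen}}.
\]
Substituting this supremum into \eqref{eq:analytic-quadrature-error-app} yields \eqref{eq:dare-circle-error-app}, and solving for \(m\) gives \eqref{eq:dare-circle-count-app}.

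\textbf{Construction parameters.} The bounds in \eqref{eq:dare-circle-parameters-app} follow directly on the unit circle:
\begin{itemize}
\item \(\mathcal R_{\rm DARE}\le(\alpha_M+\alpha_L)/\underline\eta_{\rm pen}\), because \(|z|=1\).
\item \(\alpha_{\rm DARE}\le c_\beta/\underline\eta_{\rm pen}\), from \(\sum_j|\omega_j|=1\) together with \(\beta_j\le c_\beta\|(z_jL-M)^{-1}\|\), as in Lemma~\ref{lem:dare-normalization-factor-app}.
\item \(\alpha_{\Pi_<}=\alpha_L\alpha_{\rm DARE}\le c_\beta\alpha_L/\underline\eta_{\rm pen}\).
\item \(\|\Pi_<\|\le\alpha_L/\underline\eta_{\rm pen}\), by taking norms in the Riesz integral.
\end{itemize}

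\textbf{Node count.} For \eqref{eq:dare-circle-count-epsilon-app}, write \(x=\underline\eta_{\rm pen}/(2\alpha_L)\) and use \(\log(1+x)\ge x/(1+x)\). This gives \(1/a_{\rm pen}\le1+2\alpha_L/\underline\eta_{\rm pen}\) and \(e^{a_{\rm pen}}=1+x\). The extra additive constants are absorbed using \(\alpha_L\ge1\) and the \(e+\) inside the logarithm; this bookkeeping is routine. I would then substitute \(\delta=\varepsilon/[8(1+\|X\|^2)]\).

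\textbf{Recovery.} Finally, I would allocate \(\delta/2\) to quadrature and \(\delta/2\) to implementation via Proposition~\ref{thm:affine-pencil-lcu-main}, with \(g=1\) and \(R=\mathsf I\). With this allocation, Proposition~\ref{prop:dre-direct-projector-stability-app}, using the choices \eqref{eq:algebraic-projector-output-app}--\eqref{eq:algebraic-projector-budget-app}, gives decoded error at most \(\varepsilon\) and the output scale \eqref{eq:dare-circle-output-app}. The query bound \eqref{eq:dare-circle-query-app} is \eqref{eq:dare-query-expanded-app} with \(r_{\Pi_<}=1\) (no normalization reduction) and \(\alpha_{\Pi_<}=\alpha_L\alpha_{\rm DARE}\). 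The supplied bound on \(\|X\|\) is used consistently throughout.
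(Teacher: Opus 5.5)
Your proof follows the paper's route. It uses the same strip estimate via Lemma~\ref{lem:inverse-input-perturbation} with \(T=e^{\mathrm iu}L-M\), and the same integrand bound \(2\alpha_Le^{a_{\rm pen}}/\underline\eta_{\rm pen}\) fed into Lemma~\ref{lem:analytic-quadrature-app}. The unit-circle parameter bounds and the closing appeal to Proposition~\ref{prop:dre-direct-projector-stability-app} are also the same.

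The one step that fails as justified is the node count \eqref{eq:dare-circle-count-epsilon-app}. Set \(x=\underline\eta_{\rm pen}/(2\alpha_L)\). Your bounds \(1/a_{\rm pen}\le1+1/x\) and \(e^{a_{\rm pen}}=1+x\) are \(O(\alpha_L/\underline\eta_{\rm pen})\) and \(O(1)\) only when \(x=O(1)\), that is, \(\underline\eta_{\rm pen}=O(\alpha_L)\). The fact \(\alpha_L\ge1\) bounds \(\alpha_L\) from below but says nothing about the ratio \(\underline\eta_{\rm pen}/\alpha_L\). If that ratio were large, two things would break:
\begin{itemize}
\item The right-hand side of \eqref{eq:dare-circle-count-epsilon-app} would tend to zero, while \(m\ge1\).
\item The term \(4/\delta\) inside \(8\alpha_Le^{a_{\rm pen}}/(\underline\eta_{\rm pen}\delta)=8\alpha_L/(\underline\eta_{\rm pen}\delta)+4/\delta\) would not be controlled by \(\alpha_L(1+\|X\|^2)/(\underline\eta_{\rm pen}\varepsilon)\) in the logarithm.
\end{itemize}
So the absorption is not routine without an extra fact.

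The missing ingredient is \(\underline\eta_{\rm pen}\le\alpha_L\). The paper obtains it from a finite eigenpair \(Mv=\lambda Lv\) with \(\|v\|=1\) and \(|\lambda|<1\). Such a pair exists because Lemma~\ref{lem:dare-graph-recovery} gives a finite unit-disk branch of dimension \(n\ge1\). At the nearest unit-circle point \(\zeta\), one has \(\underline\eta_{\rm pen}\le\|(\zeta L-M)v\|=|\zeta-\lambda|\,\|Lv\|\le(1-|\lambda|)\alpha_L\le\alpha_L\). You can also read it off your own parameter bound: \(\Pi_<\) is a nonzero idempotent, so \(1\le\|\Pi_<\|\le\alpha_L/\underline\eta_{\rm pen}\). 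With this, \(a_{\rm pen}=\Theta(\underline\eta_{\rm pen}/\alpha_L)\) and \(e^{a_{\rm pen}}\le3/2\), and the rest of your bookkeeping goes through.
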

\begin{proof}
For the complete pencil integrand
\(e^{\mathrm it}(e^{\mathrm it}L-M)^{-1}L\), set
\(t=u+\mathrm iv\) with \(|v|\le a_{\rm pen}\). Then
\[
\|(e^{\mathrm it}-e^{\mathrm iu})L\|
\le(e^{a_{\rm pen}}-1)\alpha_L=\underline\eta_{\rm pen}/2.
\]
Lemma~\ref{lem:inverse-input-perturbation}, with
\(T=e^{\mathrm iu}L-M\) and
\(\Delta T=(e^{\mathrm it}-e^{\mathrm iu})L\), bounds the inverse
norm by \(2/\underline\eta_{\rm pen}\).
The integrand is therefore analytic on a neighborhood of the closed strip,
with norm at most \(2\alpha_Le^{a_{\rm pen}}/\underline\eta_{\rm pen}\).
Lemma~\ref{lem:analytic-quadrature-app} gives
\eqref{eq:dare-circle-error-app} and~\eqref{eq:dare-circle-count-app}
for the entire projector, including infinite modes, without inverting \(L\).
The unit-circle inverse bound, Lemma~\ref{lem:dare-normalization-factor-app},
and the exact integral give \eqref{eq:dare-circle-parameters-app}.

For a finite eigenpair \(Mv=\lambda Lv\), with \(\|v\|=1\)
and \(|\lambda|<1\), its nearest unit-circle point \(\zeta\) gives
\(\underline\eta_{\rm pen}\le\|(\zeta L-M)v\|
\le(1-|\lambda|)\|L\|\le\alpha_L\).
Hence \(a_{\rm pen}=\Theta(\underline\eta_{\rm pen}/\alpha_L)\),
so the displayed choice of \(m\) proves
\eqref{eq:dare-circle-count-epsilon-app}.
Proposition~\ref{prop:dre-direct-projector-stability-app}, with
\eqref{eq:algebraic-projector-output-app}
and~\eqref{eq:algebraic-projector-budget-app}, gives the output
scale and query cost.
\end{proof}

The node count describes the supplied coherent quadrature data; it
does not multiply the matrix-query bound. The lower bound
\(\underline\eta_{\rm pen}\) alone need not supply the required
constant-factor inverse-norm estimate at every node. Under the
coefficient assumptions of
Proposition~\ref{prop:factor-dare-resolvent-app}, one may choose
\(\underline\eta_{\rm pen}\) as the reciprocal of the right-hand
side of \eqref{eq:factor-dare-resolvent-app}; any conservatism in
this choice remains in the displayed bounds.

For approximate pencil inputs, apply
Proposition~\ref{prop:contour-input-errors-app} with exact right input
\(R=\mathsf I\). The implemented sum contains
\((z_j\widetilde L-\widetilde M)^{-1}\widetilde L\), so both the
shifted inverse and the right factor contribute input error.
Under its nodewise smallness conditions, allocate quadrature, input,
and implementation errors a total budget \(\delta\), then use the
same recovery proposition. No spectral assumption on the perturbed
pencil over the entire contour is needed.

\subsection{Cayley construction under additional invertibility}
\label{app:dare-cayley}

When \(L\) is invertible, the Cayley transformation converts the
DARE disk selection to a half-plane selection
\cite{Higham2008Functions}. The following implementation includes
both the formation of \(L^{-1}M\) and the shifted-inverse cost.

\begin{proposition}[DARE through a Cayley transformation]
\label{prop:dare-cayley-construction-app}
For the DARE projector of Lemma~\ref{lem:dare-graph-recovery}, use
the recovery choices \eqref{eq:algebraic-projector-output-app}
and~\eqref{eq:algebraic-projector-budget-app}.
Suppose \(L\) is invertible and set \(S_D=L^{-1}M\).
Then \(S_D+\mathsf I\) is invertible, and
\begin{equation}
C_D=(S_D-\mathsf I)(S_D+\mathsf I)^{-1},\qquad
\Pi_<=\mathfrak R_-[1;C_D,\mathsf I].
\label{eq:dare-cayley-projector-app}
\end{equation}
Suppose an encoding of \(S_D\), with normalization
\(\alpha_{S_D}\), decoded error \(\varepsilon_{S_D}\), and cost
\(Q_{S_D}\) base matrix queries, is supplied. Assume a constant-factor
upper bound on \(\|(S_D+\mathsf I)^{-1}\|\),
\(\varepsilon_{S_D}\|(S_D+\mathsf I)^{-1}\|\le1/2\), and the
inverse access of Proposition~\ref{prop:local-inverse}.
For \(\varepsilon_+>0\) with
\((\alpha_{S_D}+1)\varepsilon_+\le1\), a Cayley encoding has
\begin{equation}
\begin{aligned}
\alpha_C&=1+2\beta_+,
\qquad \beta_+=O(\|(S_D+\mathsf I)^{-1}\|),\quad
\beta_+\ge2\|(S_D+\mathsf I)^{-1}\|,\\
\varepsilon_C&\le4\|(S_D+\mathsf I)^{-1}\|^2\varepsilon_{S_D}
+2\varepsilon_+,\\
Q_C&=O\!\left(Q_{S_D}(\alpha_{S_D}+1)\|(S_D+\mathsf I)^{-1}\|
\log\!\left(e+\frac{\|(S_D+\mathsf I)^{-1}\|}{\varepsilon_+}\right)\right).
\end{aligned}
\label{eq:dare-cayley-input-app}
\end{equation}
For \(\widehat C_D=C_D/\alpha_C\), choose a left-half-plane
selector under Proposition~\ref{prop:lchm-approximate-inputs}, with
degree bound \(d_S\) and normalization \(\alpha_{S_{-,r}}\), such that
\begin{equation}
\varepsilon_{\rm sel}
+\mathcal L_{-,r}\frac{\varepsilon_C}{\alpha_C}
+\alpha_{S_{-,r}}\varepsilon_{\rm circ}\le\delta,
\label{eq:dare-cayley-branch-error-app}
\end{equation}
where \(\delta\) is given in
\eqref{eq:algebraic-projector-budget-app}. Pad its normalization to
\(2\alpha_{S_{-,r}}\). Direct recovery then has decoded error at
most \(\varepsilon\), output scale and query cost
\begin{equation}
\begin{aligned}
\alpha_X&=16\alpha_{S_{-,r}}\sqrt{1+\|X\|^2},\\
Q_X&=O\!\left(Q_Cd_S\alpha_X
\log\!\left(e+\frac{\alpha_X}{\varepsilon}\right)\right),
\end{aligned}
\label{eq:dare-cayley-recovery-app}
\end{equation}
with the supplied solution-norm bound substituted in \(\alpha_X\).
\end{proposition}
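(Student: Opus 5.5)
Since \(L=\begin{bmatrix}\mathsf I&G\\0&A^*\end{bmatrix}\) is invertible, the pencil Riesz operator reduces to an ordinary one. I will prove the three parts in order.

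\textbf{Cayley identity and the projector.} By \eqref{eq:pencil-ordinary-reduction}, \(\Pi_<=\mathfrak R_<[1;S_D,\mathsf I]\), the Riesz projector of \(S_D=L^{-1}M\) for its spectrum inside the unit disk. Lemma~\ref{lem:dare-graph-recovery}, through the triangularization \eqref{eq:dare-triangularization-app}, shows that \(\det(M-zL)\) vanishes only at eigenvalues of \(F\) (all with \(|z|<1\)) and at reciprocals of eigenvalues of \(F^*\) (all with \(|z|>1\)). Hence \(-1\) is not an eigenvalue of \(S_D\), and \(S_D+\mathsf I\) is invertible. The Möbius map \(\phi(s)=(s-1)/(s+1)\) sends the open unit disk onto the open left half-plane and the exterior onto the right half-plane. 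Since \(\phi\) is holomorphic near \(\operatorname{spec}(S_D)\), the holomorphic functional calculus gives \(C_D=\phi(S_D)\). The spectral mapping theorem then implies that the Riesz projector of \(C_D\) for \(\phi(\text{disk part})\) equals the Riesz projector of \(S_D\) for the disk part; this is the standard composition rule for spectral projectors under a holomorphic map that is injective on the spectral components. Therefore \(\Pi_<=\mathfrak R_-[1;C_D,\mathsf I]\).

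\textbf{Encoding \(C_D\).} Write \(C_D=\mathsf I-2(S_D+\mathsf I)^{-1}\).
\begin{itemize}
\item An LCU of \(U_{S_D}\) and the identity encodes \(S_D+\mathsf I\) at scale \(\alpha_{S_D}+1\).
\item Applying Proposition~\ref{prop:local-inverse} with threshold \(1/(2\|(S_D+\mathsf I)^{-1}\|)\) gives \((\widetilde S_D+\mathsf I)^{-1}\) at scale \(\beta_+\ge 2\|(S_D+\mathsf I)^{-1}\|\), with implementation error \(\varepsilon_+\). This costs \(O((\alpha_{S_D}+1)\beta_+\log(e+\cdot))\) calls, each costing \(Q_{S_D}\).
\item The admissibility of this threshold for the perturbed matrix, and the input-error term \(4\|(S_D+\mathsf I)^{-1}\|^2\varepsilon_{S_D}\), both follow from Lemma~\ref{lem:inverse-input-perturbation} using \(\varepsilon_{S_D}\|(S_D+\mathsf I)^{-1}\|\le1/2\). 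The factor \(2\) from the coefficient is absorbed into the constant \(4\).
\item A final LCU \(\mathsf I-2(\cdot)\) gives scale \(\alpha_C=1+2\beta_+\) and error \(2\cdot(\text{inverse error})\), which is \eqref{eq:dare-cayley-input-app}.
\end{itemize}

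\textbf{Selector and recovery.} The normalized \(\widehat C_D=C_D/\alpha_C\) is a contraction with the same half-plane projectors, since positive scaling preserves them. Its encoding has decoded error \(\varepsilon_C/\alpha_C\). Proposition~\ref{prop:lchm-approximate-inputs} with \(R=\mathsf I\) (so \(\varepsilon_R=0\) and \(\alpha_R=1\)) bounds the branch error by the left side of \eqref{eq:dare-cayley-branch-error-app}, which is at most \(\delta\). The construction uses \(O(d_S)\) calls to the \(C_D\) encoding, i.e. \(O(d_Sq_C)\) base queries.

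As in Corollary~\ref{cor:care-weyl-recovery-app}, \(\|\Pi_<\|\ge1\) and \(\delta\le1/8\) imply that padding to \(2\alpha_{S_{-,r}}\) gives a valid normalization at least \(\|\Pi_<\|\). Proposition~\ref{prop:dre-direct-projector-stability-app}, applied with \eqref{eq:algebraic-projector-output-app} and \eqref{eq:algebraic-projector-budget-app}, then yields \(\alpha_X=16\alpha_{S_{-,r}}\sqrt{1+\|X\|^2}\), decoded error at most \(\varepsilon\), and the query cost in \eqref{eq:dare-cayley-recovery-app}.

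\textbf{Main obstacle.} The delicate step is the second: fixing the inverse threshold so that Proposition~\ref{prop:local-inverse} applies to the perturbed matrix \(\widetilde S_D+\mathsf I\), while ensuring \(\beta_+\) stays within a constant factor of \(\|(S_D+\mathsf I)^{-1}\|\). This is handled exactly as in Proposition~\ref{prop:contour-input-errors-app}.
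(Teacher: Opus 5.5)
Your proposal is correct and follows essentially the same route as the paper. The paper likewise argues that the spectrum of $S_D$ avoids the unit circle, so $S_D+\mathsf I$ is invertible and the Cayley map carries the disk projector to the left-half-plane projector. It then encodes $C_D=\mathsf I-2(S_D+\mathsf I)^{-1}$ via a halved inverse threshold, with Lemma~\ref{lem:inverse-input-perturbation} supplying $4\|(S_D+\mathsf I)^{-1}\|^2\varepsilon_{S_D}+2\varepsilon_+$, and finishes with Proposition~\ref{prop:lchm-approximate-inputs} at $R=\mathsf I$, padding by two, and Proposition~\ref{prop:dre-direct-projector-stability-app}. The only cosmetic points are that the threshold should be half the reciprocal of the supplied constant-factor upper bound rather than the exact norm, and that your $q_C$ should read $Q_C$; neither changes the bounds.
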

\begin{proof}
The spectrum of \(S_D\) is the generalized spectrum of \((M,L)\)
and avoids the unit circle by Lemma~\ref{lem:dare-graph-recovery}.
Since
\(\operatorname{Re}((z-1)/(z+1))=(|z|^2-1)/|z+1|^2\),
analytic functional calculus maps the inside and outside spectral
subspaces to the left and right half-planes, proving
\eqref{eq:dare-cayley-projector-app}.
The input-error assumption gives
\(\sigma_{\min}(\widetilde S_D+\mathsf I)
\ge\tfrac12\sigma_{\min}(S_D+\mathsf I)\).
Apply Proposition~\ref{prop:local-inverse} at half the reciprocal
of the supplied inverse-norm upper bound, with inverse error
\(\varepsilon_+\) and normalization \(\beta_+\).
Lemma~\ref{lem:inverse-input-perturbation} bounds its total decoded
inverse error by
\(2\|(S_D+\mathsf I)^{-1}\|^2\varepsilon_{S_D}+\varepsilon_+\).
The identity \(C_D=\mathsf I-2(S_D+\mathsf I)^{-1}\) now gives
\eqref{eq:dare-cayley-input-app} by exact affine LCU; any finite
precision composition error is added to \(\varepsilon_C\).
The declared \(\alpha_C\) bounds both the ideal and decoded norms.
Proposition~\ref{prop:lchm-approximate-inputs}, with \(R=\mathsf I\)
and normalized input error \(\varepsilon_C/\alpha_C\), constructs
the projector using \(O(d_S)\) Cayley calls. Padding by two also
bounds its exact norm, as in Corollary~\ref{cor:care-weyl-recovery-app}.
Proposition~\ref{prop:dre-direct-projector-stability-app}, with
\(P(t)=X\), \(\mathcal E(t)=\Pi_<\), and the supplied solution
bound, proves \eqref{eq:dare-cayley-recovery-app}.
\end{proof}

Fix \(\beta_+\) and \(\alpha_C\) from the supplied inverse-norm
bound before choosing the selector for the ideal normalized matrix.
Its sensitivity then determines the permitted
\(\varepsilon_C/\alpha_C\), and hence the input and inverse
accuracies. Apply both the Cayley map and normalization when computing
the half-plane gap. If only \(M,L\) are encoded, first construct
\(L^{-1}\) by Proposition~\ref{prop:local-inverse} and multiply by
\(M\); include this cost in \(Q_{S_D}\) and its decoded error in
\(\varepsilon_{S_D}\).

Alternatively, Proposition~\ref{prop:care-rectangle-quadrature-app}
applies with \(\mathcal H_{\rm CARE},\alpha_H,
\underline\Delta_{\rm ax}\) replaced by
\(C_D,\alpha_C,\underline\Delta_{{\rm ax},C}\), provided
\(0<\underline\Delta_{{\rm ax},C}\le
\inf_{\omega\in\mathbb R}\sigma_{\min}(\mathrm i\omega\mathsf I-C_D)\)
and the node inverse access are supplied. Each Cayley call incurs
\(Q_C\); approximate inputs use
Proposition~\ref{prop:contour-input-errors-app} on the fixed ideal
contour. A unit-circle eigenvalue gap alone does not give this
singular-value bound, which is divided by \(\alpha_C\) upon normalization.

\clearpage
\section{Classical outputs and application complexity}
\label{app:control-applications}

This appendix proves the application results of
Section~\ref{sec:control-applications}: classical feedback, the stable
RPA comparison, and thermal-network input and parameter bounds.
The proofs reuse the contour, inverse, and recovery estimates of
Appendices~\ref{app:algebraic-riccati},
\ref{app:dre-recovery}, and~\ref{app:dre-complexity},
with the required instance-specific substitutions.
Appendix~\ref{app:hbc-numerics} gives the HBC matrix reconstruction
and closed-loop validation.

\subsection{Finite-horizon continuous LQR feedback}
\label{app:lqr-classical-output}

\newtheorem*{lqroutputrestatement}{Proposition~\ref{prop:lqr-be-classical-main}}
\begin{lqroutputrestatement}
For \eqref{eq:lqr-continuous-demonstration}, suppose the input encodings
are exact and the reverse-time DRE satisfies the structural, weighted-block
access and general inverse assumptions of Appendix~\ref{app:dre-complexity}.
Use direct sums and products with the raw normalizations, without
normalization reduction. Supply a bound on the solution over the
requested interval and use the output normalization
\(\alpha_{P(s)}\) specified in that appendix.
For \(0<\delta\le1/3\) and
\(0<\varepsilon_u<\alpha_b\alpha_{P(s)}\alpha_x/r\), the algorithm
returns a classical estimate satisfying
\begin{equation*}
|\widehat u(t)+r^{-1}b^*P(s)x|\le\varepsilon_u
\end{equation*}
with probability at least \(1-\delta\), using a total of
\begin{equation*}
Q_u=O\!\left(
\frac{\alpha_b\alpha_x\alpha_{P(s)}}{r\varepsilon_u}\,
\log\frac1\delta
\right)Q_{\rm DRE}
\end{equation*}
queries to the original coefficient and state block-encodings, their
adjoints and controlled versions. Here \(Q_{\rm DRE}\) is the
full query cost of one DRE solution circuit at the accuracy specified
in \eqref{eq:lqr-continuous-outer-accuracy-app}, including its actual
normalizations and inverse thresholds in
Proposition~\ref{prop:dre-general-resources-app}.
The compact bound \eqref{eq:dre-query-main} applies under its additional
normalization assumptions. For a tolerance at least the displayed output bound, or a
known zero factor, returning zero suffices.
\end{lqroutputrestatement}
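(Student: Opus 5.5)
The plan composes three earlier results: the input assembly, the DRE solution circuit of Proposition~\ref{prop:dre-general-resources-app}, and a standard Hadamard-test/amplitude-estimation readout. The error budget is split between the solution encoding and the amplitude estimate.

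First, assemble the inputs exactly. From \(U_b\) we form \(G=bb^*/r\) as a product of \(U_b\) with its adjoint, at scale \(\alpha_b^2/r\). The lift \(\mathcal H_{\rm DRE}=[-A,G;Q,A^*]\) and \(R_0=[\mathsf I;P_T]\) are then obtained by direct block sums using the raw normalizations. Each call to these lifted encodings costs \(O(1)\) calls to the coefficient oracles, so query counts change only by constants. Proposition~\ref{prop:lqr-conventions-app} gives \(P(s)=P_{\rm LQR}(T-s)\) for the data \((-A,-G,-Q,P_T)\) in the convention of Definition~\ref{def:problem-dre}. Hence the output of Algorithm~\ref{alg:dre-direct-projector-construction} at time \(s\) is the value matrix we need, and \(u(t)=-r^{-1}b^*P(s)x\) holds.

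Second, fix the accuracy of the DRE call. I would choose \(\varepsilon_P=r\varepsilon_u/(2\alpha_b\alpha_x)\), capped at one; this is the intended content of \eqref{eq:lqr-continuous-outer-accuracy-app}. Since \(\|b\|\le\alpha_b\) and \(\|x\|\le\alpha_x\), this gives
\[
r^{-1}|b^*(\widetilde P(s)-P(s))x|\le\varepsilon_u/2 .
\]
Proposition~\ref{prop:dre-general-resources-app} then produces \(U_{P(s)}\) at scale \(\alpha_{P(s)}\) using \(Q_{\rm DRE}\) queries at this accuracy. The accuracy enters only inside the logarithms.

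Third, read out the scalar. The product \(U_b^*U_{P(s)}U_x\), with separate signal ancillas, has signal amplitude \eqref{eq:lqr-feedback-amplitude}, and this amplitude has modulus at most one. A controlled Hadamard test, applied once for the real part and once (with a phase) for the imaginary part, turns each part into an acceptance probability. Amplitude estimation estimates each probability to additive error \(\varepsilon_u r/(4\sqrt2\,\alpha_b\alpha_{P(s)}\alpha_x)\) with \(O(\alpha_b\alpha_{P(s)}\alpha_x/(r\varepsilon_u))\) controlled calls. Median amplification over \(O(\log(1/\delta))\) repetitions pushes the success probability to \(1-\delta\), using a union bound over the two parts. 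Rescaling by \(-\alpha_b\alpha_{P(s)}\alpha_x/r\) converts this to error at most \(\varepsilon_u/2\) in \(\widehat u\).

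Adding the two halves gives the bound \eqref{eq:lqr-continuous-output}. Every controlled call invokes one DRE circuit plus \(O(1)\) coefficient and state calls, which gives \(Q_u\). The degenerate case is immediate: if \(\varepsilon_u\) is at least \(\alpha_b\alpha_{P(s)}\alpha_x/r\), this bound dominates \(|u|\), so returning zero suffices.

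The main obstacle is bookkeeping rather than a new estimate. I need to check that controlled versions of the whole DRE circuit are available at the same query cost, which holds because every primitive is assumed to come with controlled access. I also need to confirm that the reverse-time sign substitution preserves the structural hypotheses of Appendix~\ref{app:dre-complexity}, which the proposition assumes.
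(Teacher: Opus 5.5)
Your proposal is correct and follows the paper's proof essentially step for step. Like the paper, you assemble $G=bb^*/r$ and the reverse-time lift with raw block-sum scales, invoke Proposition~\ref{prop:dre-general-resources-app} at an accuracy that makes $\alpha_b\alpha_x\varepsilon/r$ a fixed fraction of $\varepsilon_u$, and read out $U_b^*U_{P(s)}U_x$ by a Hadamard test with amplitude estimation and median amplification. The only difference is the error split: the paper takes $\alpha_b\alpha_x\varepsilon\le r\varepsilon_u/4$ and a readout error of $\varepsilon_u/4$, leaving slack for input and arithmetic errors, whereas you spend $\varepsilon_u/2$ on each part, which affects only constants and logarithms.
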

\begin{proof}
Block-encoding multiplication constructs \(G=bb^*/r\) with scale
\(\alpha_b^2/r\). The time convention in
Proposition~\ref{prop:lqr-conventions-app} uses
\((-A,-G,-Q,P_T)\) as the forward DRE data at \(s=T-t\).
Consequently the Hamiltonian and initial column are exactly those in
Algorithm~\ref{alg:lqr-be-classical}. Direct block sums have, for example,
scales \(\alpha_A+\alpha_b^2/r+\alpha_Q\) and \(1+\alpha_{P_T}\);
each uses a constant number of original input calls.

Apply Proposition~\ref{prop:dre-general-resources-app} at solution
accuracy \(\varepsilon\), and denote the full solution-circuit query
cost by \(Q_{\rm DRE}\). Each lifted-input query uses a constant
number of original coefficient or initial-data calls.
The general bound \eqref{eq:dre-general-query-expanded-app} includes
the initialization pseudoinverse, upper-projector-block pseudoinverse
and their product normalization. Using the raw scales sets its
normalization-reduction overheads to one; no initialization norm
calibration is assumed.

For a unitary signal element \(z\), a Hadamard test has success
probability \((1+\operatorname{Re}z)/2\); changing the control phase
gives the imaginary part. Amplitude estimation and median amplification
estimate each part to error \(\tau\) using
\(O(\tau^{-1}\log(1/\delta))\) controlled unitary and adjoint calls
\cite{BrassardHoyerMoscaTapp2002AmplitudeEstimation}.
Thus reading \eqref{eq:lqr-feedback-amplitude} to physical signal error
\(r\varepsilon_u/4\) requires
\[
O\!\left(\frac{\alpha_b\alpha_{P(s)}\alpha_x}{r\varepsilon_u}
\log\frac1\delta\right)
\]
uses of its signal circuit. For complex data, the real and imaginary
estimates share the error and failure probability, changing only
absolute constants.

The decoded solution error changes the physical control by at most
\(\alpha_b\alpha_x\varepsilon/r\). It suffices to choose
\begin{equation}
0<\varepsilon\le1,\qquad
\alpha_b\alpha_x\varepsilon\le r\varepsilon_u/4.
\label{eq:lqr-continuous-outer-accuracy-app}
\end{equation}
For example,
\(\varepsilon=r\varepsilon_u/(4\alpha_b\alpha_x+r\varepsilon_u)\)
meets both inequalities. Use this accuracy in
\eqref{eq:dre-outer-budget-app}. The solution and readout errors then
sum to at most \(\varepsilon_u/2\), leaving room for input and
classical-arithmetic approximations.

Each signal circuit makes one call to the solution encoding and one
each to \(U_b^*\) and \(U_x\), hence costs \(O(Q_{\rm DRE})\)
original input queries. Multiplying by the amplitude-estimation
count proves \eqref{eq:lqr-continuous-total}.
All three DRE precision logarithms remain inside \(Q_{\rm DRE}\),
evaluated at the sufficient solution accuracy just chosen.
The direct recovery cost already contains one factor \(\alpha_{P(s)}\),
and classical readout contributes another. Under the compact theorem's
calibration assumptions, the resulting bound is
\[
Q_u=\widetilde O\!\left(
\frac{\alpha_b\alpha_x}{r\varepsilon_u}
\mathcal R_{\rm DRE}\kappa(\Pi_+R_0)\alpha_{P(s)}^2
\log\frac1\delta\right).
\]
Without those assumptions, the actual initial-column scale and supplied
inverse threshold remain in \(Q_{\rm DRE}\).
\end{proof}

Approximate inputs use the same outer allocation. Take each displayed
normalization to bound the ideal and decoded factors, enlarging its
implemented scale if necessary. With decoded vector errors
\(\varepsilon_b,\varepsilon_x\), a telescoping product expansion
bounds the signal error by
\begin{equation}
\alpha_b\alpha_x\varepsilon
+\alpha_{P(s)}(\alpha_x\varepsilon_b+\alpha_b\varepsilon_x).
\label{eq:lqr-continuous-product-error-app}
\end{equation}
Here \(\varepsilon\) is the total solution error relative to the ideal
problem, including coefficient errors. Requiring this expression to
be at most \(r\varepsilon_u/4\) preserves the stated guarantee.
For instance, \(\|\widetilde G-G\|\le2\alpha_b\varepsilon_b/r\)
bounds the Gramian input error. The block sum then bounds the
Hamiltonian perturbation, and
Propositions~\ref{prop:contour-input-errors-app}
and~\ref{prop:dre-approximate-inputs-app} feed these errors into the
same common internal accuracy. This directly reuses the existing
quadrature, inverse and recovery analysis.

For multiple inputs, the formula is \(-R_c^{-1}B^*P(s)x\).
Proposition~\ref{prop:local-inverse} and
Lemma~\ref{lem:inverse-input-perturbation} control the additional
control-space inverse and its input error. Coordinate selection gives
the same scalar estimation task for each requested component, with
the actual control-space and repeated solution-circuit costs included.

\subsection{Stable RPA family and comparison proof}
\label{app:rpa-review}

\newtheorem*{rparestated}{Proposition~\ref{prop:rpa-review-comparison}}
\begin{rparestated}
\RPAComparisonStatement
\end{rparestated}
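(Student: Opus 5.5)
The plan is to exploit the decoupling of the family~\eqref{eq:rpa-review-family} into two independent scalar Riccati problems, compute every quantity in Table~\ref{tab:rpa-review-comparison} in closed form, and then plug these quantities into the existing general bounds. With \(A=\mathsf I_2\) and \(B=\operatorname{diag}(b,0)\), \(b=1-u\), a coordinate permutation makes \(\mathcal H\) a direct sum of two \(2\times2\) blocks.

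\textbf{Step 1: the two blocks and the stable graph.} The second block is \(\operatorname{diag}(-1,1)\) with \(T_{22}=0\). The first block is
\[
H_1=\begin{bmatrix}-1&-b\\ b&1\end{bmatrix},
\]
with eigenvalues \(\pm\lambda\), where \(\lambda=\sqrt{1-b^2}=\sqrt{2u-u^2}\). The stable eigenvector \([1;t]\) gives \(t=(-1+\lambda)/b\in(-1,0]\). Hence \(\|T\|\le1\), and Lemma~\ref{lem:care-graph-recovery} applies with \(\Pi_-\) the Riesz projector of \(\mathcal H\). The upper-row pseudoinverse is bounded by Lemma~\ref{lem:dre-projector-block-recovery-app} by \(\sqrt{1+\|T\|^2}=O(1)\). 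The same value is used by both implementations, which gives the third row of the table. The energy is \(e_c=(1-u)t/(4V)\).

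\textbf{Step 2: resolvent profile on the rectangle.} For \(z=x+\mathrm iy\) near the imaginary axis I will use
\[
|\det(z-H_1)|=|z^2-\lambda^2|
\]
together with \(\|z-H_1\|=\Theta(1+|z|)\). Since \(\sigma_{\min}\sigma_{\max}=|\det|\) for \(2\times2\) matrices, this gives
\[
\|(z-H_1)^{-1}\|=\Theta\!\left(\frac{1+|z|}{|z^2-\lambda^2|}\right).
\]
On the inner edge of the left rectangle of Proposition~\ref{prop:care-rectangle-quadrature-app}, this behaves like \((y^2+u)^{-1}\) for \(|y|\lesssim1\), since the inner-edge offset is \(\Theta(\Delta_{\rm ax})=\Theta(u)\) and the eigenvalue is at distance \(\Theta(\sqrt u)\). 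The second block contributes \(O(1)\). The supremum is attained near \(y=0\) and equals \(\Theta(u^{-1})\); with \(\alpha_H=2-u\) this gives \(\mathcal R_{\rm CARE}=\Theta(u^{-1})\) and \(\Delta_{\rm ax}=\Theta(u)\), the first row.

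\textbf{Step 3: the two normalizations.} With the uniform threshold, \(\beta_j=\Theta(u^{-1})\) at every node. Since the positive panel weights satisfy \(\sum_j|\omega_j|=\Theta(1)\), the projector normalization is \(\Theta(u^{-1})\). With the local \(\beta_j\), positivity of the Gauss--Legendre weights and the panel argument of Lemma~\ref{lem:care-normalization-factor-app} turn \(\sum_j|\omega_j|\beta_j\) into a Riemann sum for \(\tfrac1{2\pi}\int_\Gamma\|(z-\mathcal H)^{-1}\||\mathrm dz|\), up to constants. This requires panels of length at most \(d=\Theta(u)\) so that the resolvent is nearly constant on each panel. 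The integral is
\[
\Theta\!\left(\int\frac{\mathrm dy}{y^2+u}\right)+O(1)=\Theta(u^{-1/2}).
\]
Lower bounds follow from the same integrand, giving the second row. Then \eqref{eq:algebraic-projector-output-app} gives \(\alpha_T=\Theta(u^{-1})\), respectively \(\Theta(u^{-1/2})\), the fourth row.

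\textbf{Step 4: query counts and energy readout.} Proposition~\ref{prop:care-rectangle-quadrature-app}, or \eqref{eq:care-projector-query-expanded-app} with \(r_\Pi=1\), gives \(\widetilde O(\mathcal R_{\rm CARE}\alpha_T)\). This is \(\widetilde O(u^{-2})\) for the uniform implementation and \(\widetilde O(u^{-3/2})\) for the local one. For the energy, estimate \(e_1^*\widetilde Te_1/\alpha_T\) by amplitude estimation to precision \(\Theta(V\varepsilon_c/((1-u)\alpha_T))\), taking the solution accuracy \(\varepsilon_T=2V\varepsilon_c/(1-u)\le1\). The hypothesis \(\varepsilon_c\le(1-u)/(4V)\) is exactly what keeps this accuracy in range. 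Median amplification to failure probability \(1/3\) then multiplies the solution cost by \(O(\alpha_T/\varepsilon_c)\). This gives \(\widetilde O(u^{-3}/\varepsilon_c)\) and \(\widetilde O(u^{-2}/\varepsilon_c)\), respectively.

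The main obstacle is Step 3's two-sided estimate for the local normalization. It needs a uniform constant-factor comparison between the discrete weighted sum and the contour integral, uniformly in \(u\) and in the quadrature order, as well as matching lower bounds for the \(\Theta\) claims. I will handle this by restricting to panels of length at most \(d\) and applying the resolvent perturbation Lemma~\ref{lem:inverse-input-perturbation} at each panel center, so that \(\beta_j\) is within a factor of two of its panel average.
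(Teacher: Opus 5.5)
Your proposal is correct and follows essentially the same route as the paper: you compute the stable graph $T=\operatorname{diag}(t,0)$ explicitly, use the inner-edge resolvent profile $\Theta((u+y^2)^{-1})$ to get $\mathcal R_{\rm CARE}=\Theta(u^{-1})$, and apply a panelwise Neumann comparison with positive Gauss--Legendre weights to turn the local sum $\sum_j|\omega_j|\beta_j$ into $\Theta(u^{-1/2})$, followed by the same recovery and amplitude estimation with $\varepsilon_T=2V\varepsilon_c/(1-u)$. Two minor remarks: Lemma~\ref{lem:care-graph-recovery} is stated for LQR data, whereas here $Q=-B\preceq0$, so the graph identification should rest on your direct eigenvector computation, as the paper's does; and the upper-row pseudoinverse norm is exactly $1$ (the singular values are $\sqrt{1+t^2}/(1-t^2)\ge1$ and $1$), a value your $O(1)$ bound suffices for but does not reproduce.
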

\begin{proof}
Write \(E_1=[\mathsf I;0]\), \(E_2=[0;\mathsf I]\), and let
\(\Pi_-\) denote the stable half-plane projector.
For~\eqref{eq:rpa-review-family}, direct substitution gives
\begin{equation}
\begin{aligned}
T&=\operatorname{diag}(t,0),\qquad
t=-\frac{1-u}{1+\sqrt{2u-u^2}},\\
\Pi_-&=\begin{bmatrix}\mathsf I\\T\end{bmatrix}
 (\mathsf I-T^2)^{-1}\begin{bmatrix}\mathsf I&-T\end{bmatrix}.
\end{aligned}
\label{eq:rpa-review-stable-projector}
\end{equation}
The graph \([\mathsf I;T]\) is invariant with eigenvalues
\(-\sqrt{2u-u^2}\) and \(-1\); the complementary graph
\([T;\mathsf I]\) has their positives. The full upper block row is
\begin{equation}
\begin{aligned}
E_1^*\Pi_-&=(\mathsf I-T^2)^{-1}
 \begin{bmatrix}\mathsf I&-T\end{bmatrix},\\
\|(E_1^*\Pi_-)^+\|&=1.
\end{aligned}
\label{eq:rpa-review-recovery-scales}
\end{equation}
Its two singular values are \(\sqrt{1+t^2}/(1-t^2)\) and \(1\);
the first is at least one because \(|t|<1\).
Both constructions therefore recover
\(T=(E_2^*\Pi_-)(E_1^*\Pi_-)^+\) by
Lemma~\ref{lem:dre-projector-block-recovery-app}.
With \(J=\operatorname{diag}(\mathsf I_2,-\mathsf I_2)\), the identity
\(J(\mathrm iy\mathsf I-\mathcal H)=K+\mathrm iyJ\)
gives the axis gap \(u\), with equality at \(y=0\), and
\(\alpha_H=\|K\|=2-u\). These identities establish the required
graph and separation properties directly; the LQR sign assumptions
are not needed.

Use the rectangle and panel rule of
Proposition~\ref{prop:care-rectangle-quadrature-app} with
\(\alpha_H=2-u\), \(d=u/2\), and right input \(\mathsf I\).
Its quadrature proof depends only on the supplied axis gap and matrix
normalization, so \eqref{eq:care-rectangle-bounds-app}
and~\eqref{eq:care-rectangle-nodes-app} give full-projector error
at most \(\delta/2\) with
\(m=O(u^{-1}\log(e+1/(u\delta)))\) nodes.
Only the sharper local normalization needs verification.
After permuting the coordinates, the coupled resolvent block is
\begin{equation}
\frac{1}{z^2-(2u-u^2)}
\begin{bmatrix}z-1&-(1-u)\\1-u&z+1\end{bmatrix}.
\label{eq:rpa-review-resolvent}
\end{equation}
On the right edge \(z=-u/2+\mathrm iy\), \(|y|\le1\), its numerator
is bounded above and below in norm, and its denominator has magnitude
\(\Theta(u+y^2)\). Hence the full inverse norm is
\(\Theta((u+y^2)^{-1})\); the other contour segments contribute
\(O(1)\) to its integral. The Neumann estimate in the same quadrature
proof bounds inverse-norm ratios within a panel by three. Positive
Gauss--Legendre weights integrate constants exactly, so their weighted
inverse-norm sum is within fixed factors of the contour integral.
Choose actual node scales satisfying
\(4\le\beta_j/\|(z_j\mathsf I-\mathcal H)^{-1}\|\le c_\beta\)
for fixed \(c_\beta\ge4\).
For example, the computable choice
\[
\beta_j=4\max\!\left\{
\frac{\sqrt{2|z_j|^2+2+2(1-u)^2}}{|z_j^2-(2u-u^2)|},
\frac1{|z_j-1|},\frac1{|z_j+1|}
\right\}
\]
satisfies
\(4\|(z_j\mathsf I-\mathcal H)^{-1}\|
\le\beta_j\le4\sqrt2\|(z_j\mathsf I-\mathcal H)^{-1}\|\).
The local construction thus has
\begin{equation}
\mathcal R_{\rm CARE}=\Theta(u^{-1}),\qquad
\alpha_{\Pi_-}=\alpha_{\rm CARE}
=\sum_{j=1}^{m}|\omega_j|\beta_j=\Theta(u^{-1/2}).
\label{eq:rpa-review-local-scales-app}
\end{equation}

For the uniform construction on this same stable contour, use the
node inverse threshold in the proof of
\cite[Theorem~C.4]{RodenasRuizZhaoLee2026NonlinearMatrixEquations}.
The bound \(\kappa=2(1+2\sqrt2)\alpha_H/u\) is valid at every node,
and its inverse normalization is
\begin{equation}
\beta_j=\frac{8\kappa}{3(\alpha_H+|z_j|)}=\Theta(u^{-1}),\qquad
\alpha_{\Pi_-}=\sum_{j=1}^{m}|\omega_j|\beta_j=\Theta(u^{-1}).
\label{eq:rpa-review-uniform-scales}
\end{equation}
Both projector constructions cost \(\widetilde O(u^{-1})\) queries
per call at the required precision. The displayed LCU scales are
used, with no normalization reduction; they are uniform in the
quadrature order.

Apply Proposition~\ref{prop:dre-direct-projector-stability-app}
to both constructions with the supplied solution bound \(M=1\),
using its common precision budget and
\(\delta=\varepsilon_T/16\). Allocate half of \(\delta\) to
quadrature and half to the projector circuit. The normalization and
recovery-query rule in Appendix~\ref{app:direct-graph-recovery} give
\begin{equation}
\alpha_T=8\sqrt2\,\alpha_{\Pi_-},\qquad
Q_T=\widetilde O(\mathcal R_{\rm CARE}\alpha_T).
\label{eq:rpa-review-local-query}
\end{equation}
Substituting the two actual projector scales yields
\(Q_T=\widetilde O(u^{-2})\) for uniform thresholds and
\(Q_T=\widetilde O(u^{-3/2})\) for local inverse scales, with the
same decoded error \(\varepsilon_T\).

Finally, apply the scalar-readout argument of
Appendix~\ref{app:lqr-classical-output} with both states equal to
\(e_1\) and prefactor \((1-u)/(4V)\).
Taking \(\varepsilon_T=2V\varepsilon_c/(1-u)\le1/2\) leaves at most
\(\varepsilon_c/2\) encoding bias. The readout requires
\(O((1-u)\alpha_T/(V\varepsilon_c))\) solution calls for the
remaining half of the error budget. Substituting the two values of
\(\alpha_T\) and \(Q_T\) gives the energy bounds in
Table~\ref{tab:rpa-review-comparison}.
\end{proof}

\subsection{Thermal-network input constructions}
\label{app:thermal-inputs}

The local thermal coefficients admit direct encodings with scales
that remain bounded as nodes are added.

\begin{lemma}[Thermal-network inputs]
\label{lem:thermal-inputs-app}
Let \(n\ge2\), \(\nu,q,r>0\), and \(\kappa\ge0\).
Let \(\mathcal L\) be the path Laplacian defined in
Subsection~\ref{subsec:thermal-model}, and set
\begin{equation}
A_{\rm c}=-\nu\mathsf I-\kappa\mathcal L,\qquad
b_{\rm c}=e_1,\qquad P_T=0.
\label{eq:thermal-inputs-app}
\end{equation}
The coefficient inputs have block-encodings with normalizations
\begin{equation}
\alpha_{A_{\rm c}}=\nu+4\kappa,\qquad
\alpha_{b_{\rm c}}=1.
\label{eq:thermal-continuous-scales-app}
\end{equation}
The state weight \(q\mathsf I\) has normalization \(q\), and the
zero terminal input has a constant-cost encoding.
These constructions use \(\operatorname{polylog}n\) gates,
with polylogarithmic overhead in the inverse decoded input accuracy
for scalar-rotation synthesis.
\end{lemma}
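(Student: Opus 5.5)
The plan is to construct each of the four inputs explicitly from a small number of elementary unitaries acting on a \(\lceil\log_2 n\rceil\)-qubit index register, and to read off the normalizations from the LCU coefficient sums. For the drift, write \(A_{\rm c}=-\nu\mathsf I-\kappa\mathcal L\), and split the path Laplacian by the odd/even edge matchings as \(\mathcal L=\mathcal L_{\rm odd}+\mathcal L_{\rm even}\). Here \(\mathcal L_{\rm odd}\) collects the edges \(\{2i-1,2i\}\) and \(\mathcal L_{\rm even}\) collects the edges \(\{2i,2i+1\}\). Each matching Laplacian is a direct sum of \(2\times2\) blocks \(\begin{bmatrix}1&-1\\-1&1\end{bmatrix}\) on the paired indices, padded by zeros on unmatched endpoints. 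This makes \(\mathcal L_{\rm m}=\mathsf I_{\rm m}-S_{\rm m}\), where \(\mathsf I_{\rm m}\) is the diagonal indicator of the matched vertices and \(S_{\rm m}\) is the involutive permutation swapping paired indices (acting as identity on unmatched ones). The degree counts check out: interior vertices lie in one edge of each matching, giving diagonal \(2\), and the endpoints lie in exactly one matched edge, giving diagonal \(1\). This reproduces the definition of \(\mathcal L\) in Subsection~\ref{subsec:thermal-model}.

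Each of \(S_{\rm m}\) and the sign-diagonal \(2\mathsf I_{\rm m}-\mathsf I\) is a unitary. \(S_{\rm odd}\) flips the least significant bit of the \(0\)-based index, conditioned on the index being below the last complete pair. \(S_{\rm even}\) is a conjugation of this by the cyclic increment, restricted to the valid range. Comparators against \(n\) and the increment are reversible arithmetic circuits with \(\operatorname{polylog}n\) gates. Hence \(\mathsf I_{\rm m}=\tfrac12(\mathsf I+(2\mathsf I_{\rm m}-\mathsf I))\) is an LCU of total weight \(1\), and \(\mathcal L_{\rm m}\) has weight \(2\). Combining
\[
A_{\rm c}=-\nu\mathsf I-\kappa(\mathcal L_{\rm odd}+\mathcal L_{\rm even})
\]
as an LCU with coefficient magnitudes \(\nu\), \(2\kappa\), \(2\kappa\) gives normalization \(\nu+4\kappa\), as claimed. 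The index-state preparation needs only amplitudes proportional to \(\sqrt{\nu},\sqrt{2\kappa},\sqrt{2\kappa}\) (with phases \(-1\)). These are a constant number of scalar rotations, synthesized to the required accuracy with polylogarithmic overhead; this is the only source of decoded input error.

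The remaining inputs are immediate. The column \(b_{\rm c}=e_1\) is the identity acting on \(|0\rangle\), viewed as a column block with normalization \(1\) and no gates beyond padding. The weight \(q\mathsf I\) is the identity at normalization \(q\). The terminal datum \(P_T=0\) is encoded by any unitary whose signal block vanishes, for instance \(X\) on one ancilla, at constant cost. If required, controlled and adjoint versions follow by the standard controlled-LCU construction, and adjoints are trivial since \(A_{\rm c}\) is real symmetric.

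The main obstacle is the bookkeeping at the boundary for general (possibly odd) \(n\). The matchings must cover exactly the \(n-1\) edges, and the unmatched endpoint must receive the correct zero on the diagonal of \(\mathcal L_{\rm m}\). I would handle this by defining both \(S_{\rm m}\) and \(\mathsf I_{\rm m}\) through one comparator predicate on the index, so that unmatched indices are fixed by \(S_{\rm m}\) and carry sign \(-1\) in \(2\mathsf I_{\rm m}-\mathsf I\). This makes their LCU contribution cancel exactly, and I would then verify the entries on the four cases (first, last, interior even, interior odd).
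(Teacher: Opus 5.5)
\textbf{A false identity at the heart of the construction.} Your route is the paper's: split \(\mathcal L\) into the odd and even edge matchings, realize each matching by an involutive swap permutation built from reversible index arithmetic and comparators, and combine by LCU. However, the identity you build on, \(\mathcal L_{\rm m}=\mathsf I_{\rm m}-S_{\rm m}\), is false for the \(S_{\rm m}\) you define.

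Because \(S_{\rm m}\) acts as the identity on unmatched indices, an unmatched index \(i\) gives \((\mathsf I_{\rm m}-S_{\rm m})_{ii}=0-1=-1\). The matching Laplacian has \(0\) there. For every \(n\ge2\), exactly the two endpoints are unmatched, once each across the two matchings. So on the physical indices your LCU encodes \(-\nu\mathsf I-\kappa(\mathcal L-e_1e_1^*-e_ne_n^*)\) rather than \(A_{\rm c}\). For \(n=2\), the even matching is empty and your \(\mathcal L_{\rm even}\) equals \(-\mathsf I\).

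Your boundary paragraph does not repair this. Writing \(\mathsf I_{\rm m}=\tfrac12(\mathsf I+(2\mathsf I_{\rm m}-\mathsf I))\) makes that piece vanish at unmatched indices, but \(-S_{\rm m}\) still contributes \(-1\) there, so nothing cancels. A permutation unitary cannot vanish on those indices, so the defect has to be removed on the diagonal side. The four-case entry check you deferred would have exposed this.

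\textbf{The fix.} Drop \(\mathsf I_{\rm m}\) entirely. Since \(S_{\rm m}\) fixes every unmatched (and padded) index, \(\mathsf I-S_{\rm m}\) is exactly the matching Laplacian. This is the paper's identity \(\mathcal L=(\mathsf I-S_0)+(\mathsf I-S_1)\). Then \(A_{\rm c}=-(\nu+2\kappa)\mathsf I+\kappa S_0+\kappa S_1\) is a three-term LCU of unitaries with total weight \(\nu+4\kappa\), and several things follow:
\begin{itemize}
\item your claimed normalization survives;
\item the sign-diagonal \(2\mathsf I_{\rm m}-\mathsf I\) is no longer needed;
\item on padded indices the encoded operator is \(-(\nu+2\kappa)+2\kappa=-\nu\), with no coupling to the physical indices, so the physical subspace is preserved automatically.
\end{itemize}
Your treatment of \(b_{\rm c}=e_1\), \(q\mathsf I\), the zero terminal block, and scalar-rotation synthesis is fine as written.
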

\begin{proof}
The swaps \(S_{0}\) and \(S_{1}\) exchange the pairs
\((1,2),(3,4),\ldots\) and \((2,3),(4,5),\ldots\), respectively.
Both fix every unpaired endpoint. Then
\[
\mathcal L=(\mathsf I-S_{0})+(\mathsf I-S_{1}),\qquad
A_{\rm c}=-(\nu+2\kappa)\mathsf I
                    +\kappa S_{0}+\kappa S_{1}.
\]
The swaps use reversible index arithmetic and endpoint comparisons on
\(\lceil\log_2n\rceil\) qubits. Fixing all padded indices preserves
the physical subspace and extends \(A_{\rm c}\) by
\(-\nu\mathsf I\) on its complement. A three-term LCU gives
\eqref{eq:thermal-continuous-scales-app}; preparing \(e_{1}\),
encoding \(q\mathsf I\), and encoding the zero matrix use constant
additional operations. Approximating the scalar LCU rotations to a
prescribed decoded input error costs a polylogarithmic number of gates
in the inverse error. Thus each call to the base matrix oracle itself
has the stated polylogarithmic dimension cost, including its index
circuit, with scalar synthesis charged at the accuracy required by
the consuming construction. The same decomposition and the constant null vector of the
path Laplacian give
\begin{equation}
\begin{gathered}
0\preceq\mathcal L\preceq4\mathsf I,\qquad
\sigma(A_{\rm c})\subseteq[-(\nu+4\kappa),-\nu],\\
\|(A_{\rm c})^{-1}\|=\nu^{-1}.
\end{gathered}
\label{eq:thermal-dissipation}
\end{equation}
\end{proof}

\subsection{DRE parameter bounds}
\label{app:thermal-dre}

The continuous model supplies the spectral gap and the absolute
initialization bound required by Appendix~\ref{app:dre-complexity}.
The estimate below allows the rank-one input
\(G_{\rm c}=b_{\rm c}(b_{\rm c})^*/r\).

\begin{lemma}[DRE parameters for the thermal network]
\label{lem:thermal-dre-parameters-app}
Let \(A_{\rm c}=(A_{\rm c})^*\preceq-\nu\mathsf I_{n}\),
\(G_{\rm c}\succeq0\), and let the state weight be
\(q\mathsf I_n\), where \(\nu,q>0\).
The physical primal and dual CAREs have stabilizing solutions
\(X_{\rm c}\succ0\) and \(Y_{\rm c}\succeq0\), respectively,
and the finite-horizon solution with \(P_{T}=0\) satisfies
\begin{equation}
\begin{aligned}
\|X_{\rm c}\|&\le\frac{q}{2\nu},&
\|Y_{\rm c}\|&\le\frac{\|G_{\rm c}\|}{2\nu},&
0&\preceq P(s)\preceq X_{\rm c},\qquad s\ge0.
\end{aligned}
\label{eq:thermal-dre-solution-bounds-app}
\end{equation}
For the reverse-time Hamiltonian and initial column
\begin{equation}
\mathcal H_{{\rm DRE}}
=\begin{bmatrix}-A_{\rm c}&G_{\rm c}\\
q\mathsf I_{n}&A_{\rm c}\end{bmatrix},
\qquad R_{0}=E_{1},
\label{eq:thermal-dre-hamiltonian-app}
\end{equation}
the imaginary-axis resolvent satisfies
\begin{equation}
\sup_{\omega\in\mathbb R}
\|(\mathrm i\omega\mathsf I_{2n}
-\mathcal H_{{\rm DRE}})^{-1}\|
\le\left(1+\frac{\|G_{\rm c}\|}{2\nu}\right)^2
\left(\frac1{\nu}+\frac{q}{\nu^2}\right).
\label{eq:thermal-dre-axis-bound-app}
\end{equation}
Let \(\Pi_{+}\) be its right-half-plane projector. Then
\begin{equation}
\begin{aligned}
\Pi_+R_0&=\begin{bmatrix}\mathsf I_{n}\\X_{\rm c}\end{bmatrix}
 (\mathsf I_{n}+Y_{\rm c}X_{\rm c})^{-1},\\
\sigma_{\min}(\Pi_+R_0)&\ge
 \left(1+\frac{q\|G_{\rm c}\|}{4\nu^2}\right)^{-1},\\
\|\Pi_+R_0\|&\le\sqrt{1+\left(\frac{q}{2\nu}\right)^2}
 \left(1+\frac{q\|G_{\rm c}\|}{4\nu^2}\right),\\
\kappa(\Pi_+R_0)&\le\sqrt{1+\left(\frac{q}{2\nu}\right)^2}
 \left(1+\frac{q\|G_{\rm c}\|}{4\nu^2}\right)^2.
\end{aligned}
\label{eq:thermal-dre-initialization-app}
\end{equation}
\end{lemma}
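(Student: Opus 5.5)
The plan is to reduce everything to the coefficient machinery already established for CARE, after checking the sign conventions of the thermal Hamiltonian. The physical primal CARE is \(A_{\rm c}^*X+XA_{\rm c}-XG_{\rm c}X+q\mathsf I=0\), whose Hamiltonian is \([A_{\rm c},-G_{\rm c};-q\mathsf I,-A_{\rm c}]\). The reverse-time Hamiltonian in \eqref{eq:thermal-dre-hamiltonian-app} is its negative. Its right-half-plane projector \(\Pi_+\) is therefore the stable projector of the physical CARE Hamiltonian, with range \(\operatorname{ran}[\mathsf I;X_{\rm c}]\) and kernel \(\operatorname{ran}[-Y_{\rm c};\mathsf I]\).

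For existence I would not invoke Proposition~\ref{prop:factor-care-coefficients-app} directly, since \(G_{\rm c}\) may be singular. Instead:
\begin{itemize}
\item \(A_{\rm c}\) is Hurwitz, so stabilizability holds, and \(q>0\) gives detectability.
\item The stabilizing \(X_{\rm c}\succ0\) then exists. Comparison with the Lyapunov solution of \(A_{\rm c}X+XA_{\rm c}+q\mathsf I=0\), namely \(X_L=\int_0^\infty e^{sA_{\rm c}}qe^{sA_{\rm c}}ds\preceq q/(2\nu)\), gives \(X_{\rm c}\preceq X_L\) by the standard monotonicity (the cost with zero control bounds the optimal cost).
\item For the dual CARE \(A_{\rm c}Y+YA_{\rm c}-qY^2+G_{\rm c}=0\), \(A_{\rm c}\) is Hurwitz and \(q>0\), so a stabilizing \(Y_{\rm c}\succeq0\) exists. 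The same Lyapunov comparison gives \(\|Y_{\rm c}\|\le\|G_{\rm c}\|/(2\nu)\).
\item \(0\preceq P(s)\preceq X_{\rm c}\) follows from the LQR value interpretation: a finite horizon with zero terminal cost is bounded by the infinite-horizon value. Alternatively, use monotonicity of the DRE from zero.
\end{itemize}

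Next, apply Lemma~\ref{lem:factor-positive-graphs-app} with \(X_{\rm c},Y_{\rm c}\) to get \(\Pi_+=[\mathsf I;X_{\rm c}](\mathsf I+Y_{\rm c}X_{\rm c})^{-1}[\mathsf I,Y_{\rm c}]\). Multiplying by \(E_1\) gives the displayed \(\Pi_+R_0\). For the bounds:
\begin{itemize}
\item Since \(\sigma_{\min}([\mathsf I;X_{\rm c}])\ge1\) and \(\|\mathsf I+Y_{\rm c}X_{\rm c}\|\le1+\|Y_{\rm c}\|\|X_{\rm c}\|\le1+q\|G_{\rm c}\|/(4\nu^2)\), we get \(\sigma_{\min}(\Pi_+R_0)\ge(1+q\|G_{\rm c}\|/(4\nu^2))^{-1}\).
\item For the norm, \(\|(\mathsf I+YX)^{-1}\|\le1\). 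Write \(\mathsf I+YX=Y^{1/2}(\cdots)\) via \(X^{1/2}(\mathsf I+X^{1/2}YX^{1/2})^{-1}X^{-1/2}\), or more crudely use the similarity \((\mathsf I+YX)^{-1}=X^{-1/2}(\mathsf I+X^{1/2}YX^{1/2})^{-1}X^{1/2}\). This leaves a condition-number factor, so the stated norm bound may instead come from the cruder route \(\|(\mathsf I+YX)^{-1}\|\le 1+\|Y\|\|X\|\). That route also works here: \((\mathsf I+YX)^{-1}=\mathsf I-Y(\mathsf I+XY)^{-1}X\), and \(\|Y^{1/2}(\mathsf I+Y^{1/2}XY^{1/2})^{-1}Y^{1/2}\|\le\|Y\|\), so the norm is at most \(1+\|Y\|\|X\|\).
\item Multiplying by \(\|[\mathsf I;X_{\rm c}]\|\le\sqrt{1+(q/2\nu)^2}\) gives the \(\|\Pi_+R_0\|\) bound, and the ratio gives \(\kappa\).
\end{itemize}

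For the axis resolvent, factor \(\mathcal H_{\rm DRE}=-T\operatorname{diag}(F,-F^*)T^{-1}\)-type block-triangular form via \eqref{eq:care-triangularization-app}. Better, use the full similarity \(\mathcal H_{\rm DRE}=S\operatorname{diag}(-F_X,\ F_Y')S^{-1}\) with \(S=[[\mathsf I;X_{\rm c}],[-Y_{\rm c};\mathsf I]]\), where \(F_X=A_{\rm c}-G_{\rm c}X_{\rm c}\) and the other block is \(q Y_{\rm c}-A_{\rm c}\) up to sign.

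The main obstacle is bounding the diagonal-block resolvents on the imaginary axis by \(1/\nu+q/\nu^2\), together with \(\|S\|\|S^{-1}\|\le(1+\|G_{\rm c}\|/(2\nu))^2\).
\begin{itemize}
\item I would try the block-triangular route instead. Use \(T=[\mathsf I,0;X_{\rm c},\mathsf I]\), whose norm and inverse norm are each at most \(1+\|X_{\rm c}\|\). Alternatively use the dual triangularization with \(Y_{\rm c}\), giving factors \(1+\|Y_{\rm c}\|\le1+\|G_{\rm c}\|/(2\nu)\), whose square appears in the target.
\item With the dual choice, the triangular form has diagonal blocks \(-(A_{\rm c}-Y_{\rm c}q)\)-type and \(A_{\rm c}-qY_{\rm c}\). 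The off-diagonal block is \(q\mathsf I\).
\item Block inversion then gives the resolvent as at most \(\|D_1^{-1}\|+\|D_2^{-1}\|q\|D_1^{-1}\|\)-style terms.
\item Each diagonal block has Hermitian part with magnitude at least \(\nu\), since \(A_{\rm c}\preceq-\nu\mathsf I\) and \(Y_{\rm c}\succeq0\). The accretivity argument then bounds each block's axis resolvent by \(1/\nu\).
\item This yields \(1/\nu+q/\nu^2\), times the conditioning factor \((1+\|Y_{\rm c}\|)^2\), as required.
\end{itemize}
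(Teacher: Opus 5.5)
Your final argument is correct and is essentially the paper's proof: zero-control (Lyapunov) comparison for \(X_{\rm c}\), \(Y_{\rm c}\) and \(P(s)\), Lemma~\ref{lem:factor-positive-graphs-app} for \(\Pi_+R_0\), the push-through identity giving \(\|(\mathsf I+Y_{\rm c}X_{\rm c})^{-1}\|\le1+\|Y_{\rm c}\|\|X_{\rm c}\|\), and the dual shear by \(Y_{\rm c}\). That shear yields a block-triangular form with Hermitian diagonal blocks \(\pm(qY_{\rm c}-A_{\rm c})\), where \(qY_{\rm c}-A_{\rm c}\succeq\nu\mathsf I\), and off-diagonal block \(q\mathsf I\). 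Delete your opening claim \(\|(\mathsf I+YX)^{-1}\|\le1\), which is false for noncommuting \(X,Y\succeq0\). Also drop the abandoned target \(\|S\|\|S^{-1}\|\le(1+\|G_{\rm c}\|/(2\nu))^2\) for the full graph similarity, which you never justify and do not need.
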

\begin{proof}
Since \(A_{\rm c}\) is Hurwitz, the primal and dual LQR problems
satisfy the stabilizability and detectability conditions for their
stabilizing semidefinite solutions
\cite{LancasterRodman1995Riccati,Mehrmann1991AutonomousLQ}.
The positive state weight makes \(X_{\rm c}\) positive definite.
Zero control and
\(\|e^{sA_{\rm c}}\|\le e^{-\nu{}s}\) give
\[
X_{\rm c}\preceq
\int_0^\infty e^{sA_{\rm c}}q\mathsf I
 e^{sA_{\rm c}}\,\mathrm ds
\preceq\frac{q}{2\nu}\mathsf I_{n}.
\]
Applying the same comparison to the dual problem, whose state weight
is \(G_{\rm c}\), bounds \(Y_{\rm c}\).
The finite-horizon value with zero terminal cost is nonnegative and
does not exceed the infinite-horizon value~\cite[Sections~3.1--3.2]{AndersonMoore2007OptimalControl}. Together with
Proposition~\ref{prop:lqr-conventions-app}, this proves
\eqref{eq:thermal-dre-solution-bounds-app} and global existence of
\(P(s)\) for \(s\ge0\).

The dual equation gives
\(-A_{\rm c}Y_{\rm c}-Y_{\rm c}A_{\rm c}
+q(Y_{\rm c})^2=G_{\rm c}\). Hence
\begin{equation}
\begin{bmatrix}\mathsf I_{n}&Y_{\rm c}\\0&\mathsf I_{n}\end{bmatrix}
\mathcal H_{{\rm DRE}}
\begin{bmatrix}\mathsf I_{n}&-Y_{\rm c}\\0&\mathsf I_{n}\end{bmatrix}
=\begin{bmatrix}
-A_{\rm c}+qY_{\rm c}&0\\
q\mathsf I_{n}&A_{\rm c}-qY_{\rm c}
\end{bmatrix}.
\label{eq:thermal-dre-shear-app}
\end{equation}
The first diagonal block is Hermitian and at least
\(\nu\mathsf I_{n}\); the second is its negative.
Their imaginary-axis inverses have norm at most \(1/\nu\),
and the off-diagonal block of the triangular inverse has norm at
most \(q/\nu^2\). Each shear factor has norm at most
\(1+\|Y_{\rm c}\|\). This proves
\eqref{eq:thermal-dre-axis-bound-app}.

The right spectral branch of \(\mathcal H_{{\rm DRE}}\) is the
stable graph of the physical CARE Hamiltonian, with the sign reversed.
Its complementary graph is
\(\operatorname{ran}[-Y_{\rm c};\mathsf I_{n}]\).
Lemma~\ref{lem:factor-positive-graphs-app} therefore gives the formula
for \(\Pi_+R_0\). Since \([\mathsf I_{n};X_{\rm c}]\) has smallest
singular value at least one,
\[
\sigma_{\min}(\Pi_+R_0)\ge
\|\mathsf I_{n}+Y_{\rm c}X_{\rm c}\|^{-1}
\ge(1+\|Y_{\rm c}\|\|X_{\rm c}\|)^{-1}.
\]
The identity
\[
(\mathsf I_{n}+Y_{\rm c}X_{\rm c})^{-1}
=\mathsf I_{n}-Y_{\rm c}(X_{\rm c})^{1/2}
\bigl(\mathsf I_{n}+(X_{\rm c})^{1/2}Y_{\rm c}
 (X_{\rm c})^{1/2}\bigr)^{-1}(X_{\rm c})^{1/2}
\]
bounds its norm by \(1+\|Y_{\rm c}\|\|X_{\rm c}\|\).
Using \eqref{eq:thermal-dre-solution-bounds-app} proves the norm bound;
dividing it by the supplied smallest-singular-value bound proves the
condition-number bound in \eqref{eq:thermal-dre-initialization-app}.
\end{proof}

\newtheorem*{thermaldrerestatement}{Corollary~\ref{cor:thermal-dre-main}}
\begin{thermaldrerestatement}
For \eqref{eq:thermal-continuous-model}, fix
\(\nu,\kappa,q,r>0\) and a bounded, finitely represented
time range \(0\le s\le T\), independently of \(n\).
Use the input implementation and state access of
Subsection~\ref{subsec:thermal-model}. For \(0<\varepsilon\le1\),
the direct projector-block construction encodes \(P(s)\) with decoded
error at most \(\varepsilon\), output normalization
\(\alpha_P=O(1)\) as specified in \eqref{eq:thermal-dre-output-scale}, and
\(O(\log^3(e+1/\varepsilon))\) basic continuous-matrix queries.
For \(0<\delta\le1/3\) and a nontrivial control tolerance
\(0<\varepsilon_{u}\le1\), it returns
\begin{equation*}
\widehat u(t)\approx
-r^{-1}(b_{\rm c})^*P(T-t)x(t)
\end{equation*}
with absolute error at most \(\varepsilon_{u}\) and failure
probability at most \(\delta\), using
\begin{equation*}
O\!\left(\frac{\log(1/\delta)}{\varepsilon_{u}}
\log^3(e+1/\varepsilon_{u})\right)
\end{equation*}
basic matrix and state-oracle calls. The total gate cost is
\(\varepsilon_{u}^{-1}\log(1/\delta)
\operatorname{polylog}(n)\operatorname{polylog}(e+1/\varepsilon_{u})\).
The constants depend on the fixed physical parameters, time range and
supplied state normalization, but not on \(n\).
\end{thermaldrerestatement}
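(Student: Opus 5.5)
The plan is to reduce the corollary to Proposition~\ref{prop:dre-general-resources-app} and Proposition~\ref{prop:lqr-be-classical-main}, after showing that every complexity parameter is $O(1)$ uniformly in $n$. Lemma~\ref{lem:thermal-inputs-app} supplies $\alpha_{A_{\rm c}}=\nu+4\kappa$, $\alpha_{b_{\rm c}}=1$, and the scales $q$ and $1/r$ for $q\mathsf I$ and $G_{\rm c}=e_1e_1^*/r$. Direct block sums then give $\alpha_H=O(1)$ for the reverse-time Hamiltonian \eqref{eq:thermal-dre-hamiltonian-app} and $\alpha_{R_0}=1$ for $R_0=E_1$. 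Each uses $\operatorname{polylog}n$ gates and a constant number of base calls.

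\textbf{Parameter bounds.} I would next invoke Lemma~\ref{lem:thermal-dre-parameters-app} with $\|G_{\rm c}\|=1/r$. This lemma gives global existence of $P(s)$ and the solution bound $M=q/(2\nu)$, since $0\preceq P(s)\preceq X_{\rm c}$. It also gives an imaginary-axis resolvent bound $C_{\rm ax}$ depending only on $\nu,q,r$, together with $\sigma_{\min}(\Pi_+R_0)^{-1}$ and $\kappa(\Pi_+R_0)$ bounded by functions of $\nu,q,r$. The supplied gap is $\underline\Delta_{\rm ax}=1/C_{\rm ax}$. Setting $\delta=\underline\Delta_{\rm ax}/\alpha_H$, the paired-rectangle rule of Proposition~\ref{prop:dre-weighted-quadrature-app} yields $\mathcal R_{\rm DRE}\le 2(1+2\sqrt2)/\delta=O(1)$. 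The same proposition gives four block normalizations of size $O(1)$, uniformly in $s\ge0$ and in the number of nodes. Corollary~\ref{cor:dre-construction-normalization-app} then gives $\alpha_P=O(1)$, which I would record as the declared output scale \eqref{eq:thermal-dre-output-scale}, using $\gamma_+$ equal to the explicit lower bound. No normalization reduction is used, so every overhead $r_T$ and $r_{\mathcal E}$ equals one.

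\textbf{Query counts.} I would substitute these $O(1)$ quantities into \eqref{eq:dre-general-query-expanded-app}. With every prefactor constant, $\eta=\Theta(\varepsilon)$ in \eqref{eq:dre-outer-budget-app}, so the solution cost is $O(\log^3(e+1/\varepsilon))$ basic matrix queries. For the feedback, Proposition~\ref{prop:lqr-be-classical-main} with $\alpha_b=1$, $\alpha_x=O(1)$ and $\alpha_{P(s)}=O(1)$ requires $O(\varepsilon_u^{-1}\log(1/\delta))$ signal circuits. The solution accuracy from \eqref{eq:lqr-continuous-outer-accuracy-app} is $\varepsilon=\Theta(\varepsilon_u)$. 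Multiplying the two counts gives \eqref{eq:thermal-dre-total}. Tolerances above the output scale are handled by returning zero.

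\textbf{Gate count, the main obstacle.} The total gate count multiplies the query count by the per-call gate cost. That cost is $\operatorname{polylog}n$ for the oracles (Lemma~\ref{lem:thermal-inputs-app} and the state-access assumption) plus the node-preparation cost. The node count \eqref{eq:dre-final-node-counts-app} is $O(\log(e+1/\varepsilon_u))$, so preparing nodes and weights $e^{\pm\alpha_H s z}$ at working precision $\operatorname{poly}(\varepsilon_u)$ costs $\operatorname{polylog}(e+1/\varepsilon_u)$, because $s\le T$ is bounded and finitely represented. The step I expect to be most delicate is the \emph{nodewise constant-factor inverse-norm estimates} $\beta_j$ assumed by Proposition~\ref{thm:local-contour-sum}. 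I would try the cheapest choice: a uniform $\beta_j=4C_{\rm ax}'$, using the contour inverse bound $2/\delta$ from Theorem~\ref{thm:factor-axis-app}. This may violate the lower constant factor at far nodes. If so, I would use $\beta_j=4\max\{2/\delta,\,1/(|z_j|-1)\}$ on the outer edges, in the style of Remark~\ref{rem:care-nodewise-scales-app}. In that case I would argue that only upper estimates matter for normalization and error, and that the lower comparison only affects constants, which stay independent of $n$.
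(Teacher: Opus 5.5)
Your proposal is correct and follows essentially the paper's route. The paper also takes the supplied data from Lemmas~\ref{lem:thermal-inputs-app} and~\ref{lem:thermal-dre-parameters-app}, uses the paired-rectangle rule with raw normalizations and no reduction, applies Proposition~\ref{prop:dre-general-resources-app} with $\eta=\Theta(\varepsilon)$, and reads out via Proposition~\ref{prop:lqr-be-classical-main} at $\varepsilon=\Theta(\varepsilon_u)$. On the step you flag, the paper makes a uniform choice of the same kind as yours (threshold $\underline\Delta_{\rm ax}/(4\alpha_H)$, i.e.\ $\beta_j=16\alpha_H/\underline\Delta_{\rm ax}$ at every node) and bounds the per-node query depth and the four block scales directly from this upper estimate, so nodewise scales are unnecessary. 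In fact $\|(z\mathsf I-\widehat{\mathcal H}_{\rm DRE})^{-1}\|\ge1/(1+|z|)$ on the rectangles already makes the comparison constant depend only on the fixed parameters, and your $\max$-based fallback coincides with the uniform choice on the outer edges anyway.
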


\begin{proof}
For the thermal data, \(\|G_{\rm c}\|=1/r\), and direct block
composition supplies
\begin{equation}
\begin{aligned}
\alpha_{H}&=\nu+4\kappa+q+r^{-1},&
\alpha_{R_0}&=1,\\
\underline\Delta_{{\rm ax}}
&=\frac{\nu^2}
 {(1+(2r\nu)^{-1})^2(\nu+q)},&
\gamma_{+}&=\left(1+\frac{q}{4r\nu^2}\right)^{-1}.
\end{aligned}
\label{eq:thermal-dre-supplied-data-app}
\end{equation}
Lemma~\ref{lem:thermal-dre-parameters-app} proves both supplied lower
bounds. In particular, the initial column has full rank. The graph
and recovery hypotheses follow from
Lemma~\ref{lem:dre-graph-flow-app} and
Theorem~\ref{thm:dre-seeded-projector-main}.
Supply the solution bound \(M=q/(2\nu)\). With the decaying graph projector
normalization constructed below, choose
\begin{equation}
\alpha_{P}=8\alpha_{\mathcal E}\left(1+\frac{q}{2\nu}\right).
\label{eq:thermal-dre-output-scale}
\end{equation}
This satisfies \eqref{eq:dre-supplied-output-scale-app} at every requested
time and uses the scale of the direct recovery product.

Use the rectangles and positive Gauss--Legendre rule of
Proposition~\ref{prop:dre-weighted-quadrature-app}, with normalized
gap \(\underline\Delta_{{\rm ax}}/\alpha_{H}\).
The construction has
\begin{equation}
\mathcal R_{{\rm DRE}}
\le2(1+2\sqrt2)\frac{\alpha_{H}}
 {\underline\Delta_{{\rm ax}}}.
\label{eq:thermal-dre-contour-bound-app}
\end{equation}
On every node, the shifted matrix has normalization \(1+|z|\)
and smallest singular value at least
\(\underline\Delta_{{\rm ax}}/(2\alpha_{H})\).
Use the smaller threshold
\(\underline\Delta_{{\rm ax}}/(4\alpha_{H})\) in
Proposition~\ref{prop:local-inverse}; its inverse normalization is
\(16\alpha_{H}/\underline\Delta_{{\rm ax}}\).
(Nodewise inverse scales are unnecessary for the analysis here.)
This choice also
allows a normalized Hamiltonian input error at most
\(\underline\Delta_{{\rm ax}}/(4\alpha_{H})\).
The length estimate and positive weights in the proof of
Proposition~\ref{prop:dre-weighted-quadrature-app} then give the
actual raw scales
\begin{equation}
\begin{aligned}
\alpha_{\Pi_+},\quad\alpha_{\Pi_+R_0}
&\le\frac{96\alpha_H}{\pi\underline\Delta_{\rm ax}},\\
\alpha_{e^{s\mathcal H_{\rm DRE}}\Pi_-R_0},\quad
\alpha_{e^{-s\mathcal H_{\rm DRE}}\Pi_+}
&\le\frac{96\alpha_H}{\pi\underline\Delta_{\rm ax}}.
\end{aligned}
\label{eq:thermal-dre-raw-blocks-app}
\end{equation}
The coherent inverse-sum construction in
Appendix~\ref{app:contour-proofs}, before its constant-factor
inverse-norm simplification, gives a decoded block error
\(\eta\) using
\begin{equation}
O\!\left(\frac{\alpha_{H}}{\underline\Delta_{{\rm ax}}}
\log\!\left(e+
\frac{\alpha_{H}}
 {\underline\Delta_{{\rm ax}}\eta}\right)\right)
\label{eq:thermal-dre-block-query-app}
\end{equation}
Hamiltonian queries. This uses the supplied threshold directly;
it does not assume that this bound estimates each inverse norm
within an absolute constant factor.

Use these raw normalizations and perform no normalization
reduction. Equation~\eqref{eq:dre-raw-normalizations-app} gives
\begin{equation}
\begin{aligned}
\alpha_{\mathcal E}
&=\alpha_{\Pi_+}\\
&\quad+\frac{
8\alpha_{e^{s\mathcal H_{\rm DRE}}\Pi_-R_0}
 \alpha_{e^{-s\mathcal H_{\rm DRE}}\Pi_+}}{\gamma_+}.
\end{aligned}
\label{eq:thermal-dre-composition-scales-app}
\end{equation}
Consequently the DGP normalization is
\(O((\alpha_{H}/\underline\Delta_{{\rm ax}})^2
\gamma_{+}^{-1})\). The initial pseudoinverse still incurs
\(\alpha_{\Pi_+R_0}/\gamma_{+}\); this actual ratio is not replaced
by \(\kappa(\Pi_+R_0)\).
For fixed \(\nu,\kappa,q,r>0\), the supplied data in
\eqref{eq:thermal-dre-supplied-data-app} give \(O(1)\) bounds on
the input scales, generalized singularity factor, absolute initialization
inverse, and DGP normalization, independently of \(n\).
Equation~\eqref{eq:thermal-dre-output-scale} gives the same conclusion
for \(\alpha_P\). These bounds use the actual block construction
and do not assume that its conservative inverse scales equal the
resolvent norms.

Choose the common accuracy \(\eta\) by
\eqref{eq:dre-outer-budget-app}, using these scales and target
\(\varepsilon\). The stability proof of
Proposition~\ref{prop:dre-general-resources-app} applies unchanged.
Its query proof, with \eqref{eq:thermal-dre-block-query-app}
as the weighted-block cost and all re-encoding overheads equal to
one, gives the sufficient bound
\begin{equation}
\begin{aligned}
Q_{P}=O\!\Bigg(&\alpha_{P}
\left(\frac{\alpha_{H}}{\underline\Delta_{{\rm ax}}}\right)^2
\left(1+\frac{q}{4r\nu^2}\right)\\
&\times\log^3\!\left(e+
\frac{\alpha_{P}\alpha_{H}}
 {\underline\Delta_{{\rm ax}}\varepsilon}
\left(1+\frac{q}{4r\nu^2}\right)\right)\Bigg).
\end{aligned}
\label{eq:thermal-dre-physical-query-app}
\end{equation}
Only polynomial expressions in the displayed positive scales enter
\(\eta^{-1}\), so the three logarithms from the resolvent inverse,
initial-column pseudoinverse and upper-block pseudoinverse are each
bounded by the common logarithm. Their product is bounded by its
displayed third power. Each Hamiltonian call
uses a constant number of calls to the physical coefficient encodings.
At fixed physical parameters, \(Q_{P}=O(\log^3(e+1/\varepsilon))\);
the number of nodes for each weighted block is
\(O(\log(e+1/\varepsilon))\) by
\eqref{eq:dre-final-node-counts-app}.

Apply the readout argument of Proposition~\ref{prop:lqr-be-classical-main}
to the solution circuit just constructed, with
\(\alpha_{b_{\rm c}}=1\) and
\(\alpha_{x}=O(1)\). At fixed \(r\) and \(\alpha_{P}\),
a sufficiently small constant multiple of \(\varepsilon_{u}\)
is an admissible Riccati accuracy. This gives
\(O(\varepsilon_{u}^{-1}\log(1/\delta)
\log^3(e+1/\varepsilon_{u}))\) base-input calls for the
classical feedback. The path-input circuits, node preparation,
rotation synthesis, and arithmetic for
\(\alpha_{H}s\) have the gate costs specified in the input
construction. With a fixed finite time range and the stipulated
state-column access, they yield the claimed polylogarithmic
dependence on \(n\). Finite-precision coefficients and gate
synthesis use Proposition~\ref{prop:contour-input-errors-app} and
Proposition~\ref{prop:dre-approximate-inputs-app} at the same supplied
scales; the inverse and recovery error analysis is already contained
in Proposition~\ref{prop:dre-general-resources-app}.
\end{proof}

\subsection{Numerical experiments for heated boundary control}
\label{app:hbc-numerics}

We use the baseline physical parameters and contours of
Subsection~\ref{subsec:thermal-model}.
Figure~\ref{fig:hbc-parameters} varies \(\nu\) and \(r\) at
\(n=64\), and then varies \(n\) at the baseline coefficients.
Figure~\ref{fig:hbc-queries} combines local inverse QSP simulations,
a classical matrix reconstruction, and logical oracle counts for
the complete \(P(1)\) block-encoding.

\begin{figure}[H]
\centering
\includegraphics[width=0.95\textwidth]{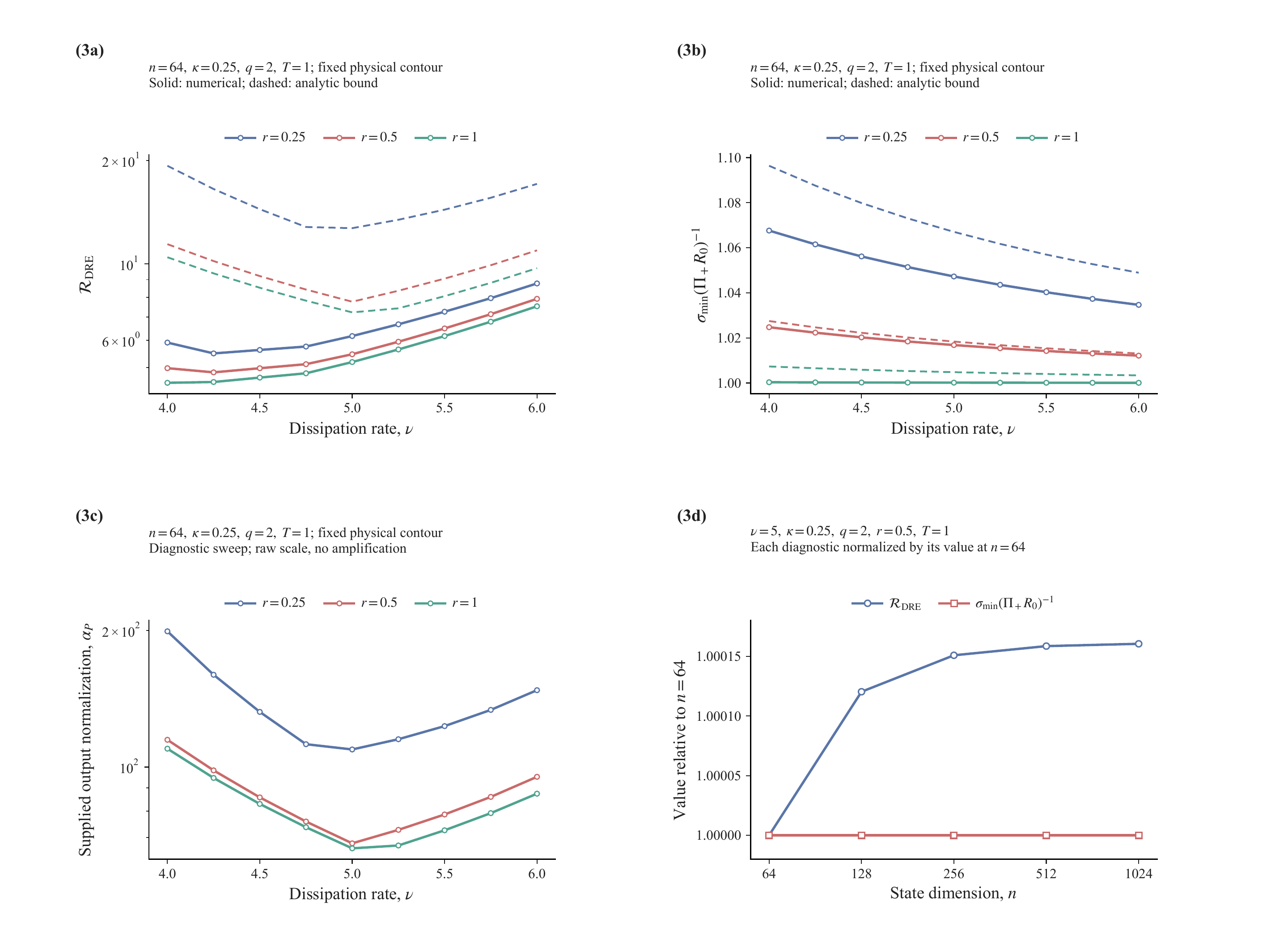}
\small\caption{%
Physical parameters and algorithmic scales, with \(\kappa=1/4\),
\(q=2\), \(T=1\), and a fixed physical contour.
(a) Numerical estimate of the generalized singularity factor
\(\mathcal R_{\rm DRE}\).
(b) Absolute initialization inverse \(\sigma_{\min}(\Pi_+R_0)^{-1}\).
Solid curves are numerical diagnostics; dashed curves are analytic bounds.
(c) Supplied output normalization \(\alpha_P\) from input bounds and
104-node quadrature, without normalization amplification.
Panels (a--c) use \(n=64\), nine equally spaced values of
\(\nu\in[4,6]\), and \(r=0.25,0.5,1\).
These are diagnostic and scale calculations: only three of the 27
parameter variants meet the fixed-node quadrature budget, which alone
does not establish their full-recovery accuracy.
(d) Five dimensions \(n=64,128,256,512,1024\) at \(\nu=5\), \(r=0.5\),
with each diagnostic divided by its value at \(n=64\).
Local coefficients remain fixed as nodes are added; this is not
fixed-interval mesh refinement. All 32 cases are included.%
}
\label{fig:hbc-parameters}
\end{figure}

Figure~\ref{fig:hbc-parameters} is consistent with the \(O(1)\)
parameter bounds at fixed physical coefficients. For fixed time range, accuracy, and failure
probability, Corollary~\ref{cor:thermal-dre-main} then gives \(O(1)\)
oracle queries in \(n\) and \(\operatorname{polylog}n\) gates under
the stated input-access assumptions.

Figure~\ref{fig:hbc-queries}(d) varies the nodes per contour from
104 to 260 while fixing the three QSP degrees at \(757,29,57\).
Each complete \(P(1)\) block-encoding uses \(82\,516\,028\) logical
Hamiltonian-oracle queries and has measured spectral error approximately
\(1.88\times10^{-10}\). The count comes from phase-by-phase success-block
simulation and recursive circuit traces; the expression
\(4\cdot757(3+4\cdot29)(1+4\cdot57)\) provides a cross-check.
The simulation does not apply that many dense oracles in the full
auxiliary space. The construction includes the transient term and
both recovery pseudoinverses, without normalization amplification.
The count excludes readout, node preparation, and other gate costs.

\begin{figure}[H]
\centering
\includegraphics[width=0.95\textwidth]{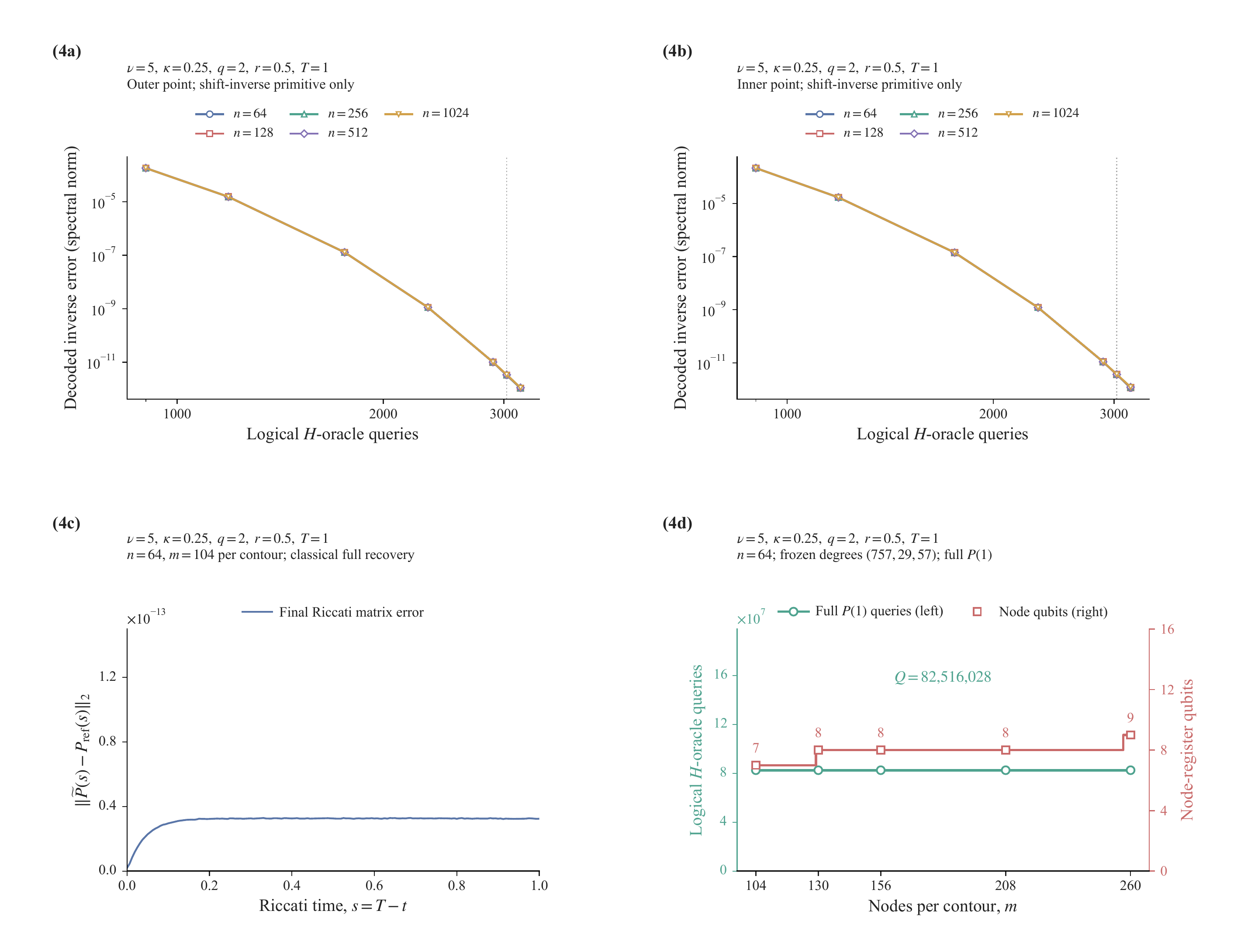}
\small\caption{%
Oracle queries, Riccati error, and node-register resources at
\(\nu=5\), \(\kappa=1/4\), \(q=2\), \(r=0.5\), and \(T=1\).
(a,b) Decoded spectral-norm errors for the inverse of
\(\zeta\mathsf I-\widehat{\mathcal H}_{\rm DRE}\), with
\(\alpha_H=10\) and \(\zeta=z/\alpha_H\), at the outer and inner
real-axis intersections of the right contour.
Five dimensions \(n=64,128,256,512,1024\) and seven frozen phase sets
give 70 local inverse simulations, with \(4d\) logical queries at degree
\(d=225,297,439,581,723,757,793\); all meet their local error budgets.
(c) Classical full-matrix error \(\|\widetilde P(s)-P_{\rm ref}(s)\|_2\)
at 161 times \(s=T-t\), with \(n=64\) and 104 nodes per contour.
It includes quadrature and recovery errors before control interpolation.
(d) Complete \(P(1)\) block-encoding with \(n=64\) and frozen degrees
\((757,29,57)\): \(m=104,130,156,208,260\) nodes per contour give
\(82\,516\,028\) logical queries in each case
(\(41\,258\,014\) each to \(U_H\) and \(U_H^\dagger\)).
The transient term and both recovery inverses are included.
No normalization amplification is used; the count excludes readout.
The right axis counts only the node register,
\(\lceil\log_2m\rceil=7,8,8,8,9\), excluding system and other ancillas.
Squares mark the five cases; the staircase is the register-size formula.%
}
\label{fig:hbc-queries}
\end{figure}

For Figure~\ref{fig:hbc-dimension-cost}, RK4 uses the tridiagonal
structure of \(-A_{\rm c}\) and the rank-one structure of \(G_{\rm c}\).
Each dimension executes five integrations with \(7,14,28,56,112\) steps,
for 217 steps in total; the accepted output uses 112 steps.
Step selection uses the Richardson indicator
\(\sqrt{\|\Delta P\|_1\|\Delta P\|_\infty}/15\), where \(\Delta P\)
is the difference between successive RK4 outputs.
Refinement stops after two consecutive indicators are at most
\(\varepsilon/2\) and the corresponding difference-norm ratios lie
in \([4,64]\). This is an empirical error indicator; the reference
is consulted only after step selection, to measure the final error.
All refinement runs, setup arithmetic, and indicator arithmetic contribute
to the plotted count \(12174n^2-5206n+65\).
Square roots, absolute values, comparisons, and data movement are excluded
from the count of additions, subtractions, multiplications, and divisions.

We set \(\nu=5\), \(\kappa=1/4\), \(q=2\), \(r=0.5\), \(T=1\),
\(P_T=0\), and \(x(0)=(e_1+e_2)/\sqrt2\).
We evaluate the finite Riesz sums and both pseudoinverses in
Algorithm~\ref{alg:dre-direct-projector-construction} classically.
The reference is checked with 70-digit arithmetic at
\(n=8,s=0.25,1\) and DRE integration at all 161 times for \(n=64\).

In Figure~\ref{fig:hbc-convergence}, the Riccati error falls from
about \(3.05\times10^{-7}\) at \(p=4\) to
\(3.27\times10^{-12}\) at \(p=8\) for all five dimensions;
higher orders reach a floating-point plateau of about
\(3.2\times10^{-14}\) to \(1.2\times10^{-13}\).
The initial scalar-feedback error follows the same trend.

\begin{figure}[H]
\centering
\includegraphics[width=\textwidth]{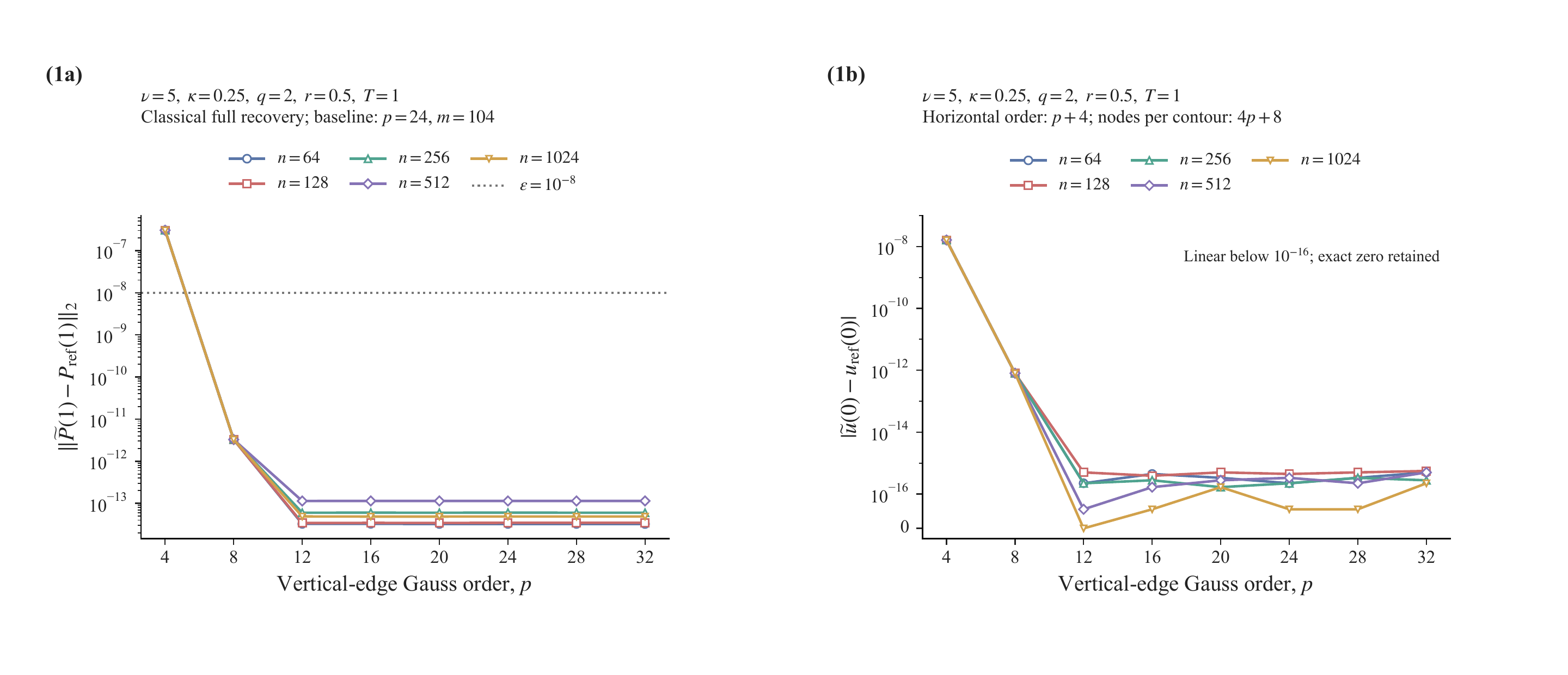}
\small\caption{%
Finite-contour convergence in a classical matrix simulation.
(a) Final Riccati error \(\|\widetilde P(T)-P_{\rm ref}(T)\|_2\).
(b) Initial feedback error \(|\widetilde u(0)-u_{\rm ref}(0)|\).
All 40 reconstructions use \(n=64,128,256,512,1024\) and
vertical-edge Gauss--Legendre order \(p=4,8,12,16,20,24,28,32\).
The horizontal-edge order is \(p+4\), giving \(4p+8\) nodes per
rectangle and \(8p+16\) across both contours; \(p=24\) gives the
104-node baseline. Inverses and both recovery pseudoinverses are
computed classically. The dotted line in (a) marks \(10^{-8}\).
In (b), the ordinate is linear below \(10^{-16}\) and logarithmic above,
so the floating-point zero at \(n=1024,p=12\) remains visible.%
}
\label{fig:hbc-convergence}
\end{figure}

\clearpage
\section{Lower bounds and complexity proofs}
\label{app:lower-bounds}

We prove the common oracle and output estimates, then the four product
lower bounds and the two circuit reductions.

\subsection{Input oracles and output separation}
\label{app:lower-bound-tools}

For a strict contraction \(T\), use the Julia input unitary
\cite{Robinson2018Julia}
\begin{equation}
U_T=\begin{bmatrix}
T&(\mathsf I-TT^*)^{1/2}\\
(\mathsf I-T^*T)^{1/2}&-T^*
\end{bmatrix}.
\label{eq:julia-input-unitary-app}
\end{equation}
For a matrix encoded with public normalization \(\alpha\), substitute
that matrix divided by \(\alpha\) for \(T\). The complete unitary is
specified because its other blocks could otherwise reveal the hidden input.

\begin{lemma}[Continuity of the input unitaries]
\label{lem:julia-input-continuity-app}
The matrix in \eqref{eq:julia-input-unitary-app} is unitary. If
\(\|T_0\|,\|T_1\|\le\rho<1\), then
\begin{equation}
\|U_{T_1}-U_{T_0}\|
\le\left(1+\frac{\rho}{\sqrt{1-\rho^2}}\right)
\|T_1-T_0\|.
\label{eq:julia-input-continuity-app}
\end{equation}
\end{lemma}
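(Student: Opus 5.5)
The proof has two parts: unitarity by the standard intertwining identity, and continuity by splitting the difference \(U_{T_1}-U_{T_0}\) into its diagonal and off-diagonal block parts and bounding each.

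\emph{Unitarity.} Write \(D_T=(\mathsf I-T^*T)^{1/2}\) and \(D_{T^*}=(\mathsf I-TT^*)^{1/2}\); these exist because \(\|T\|\le1\). The key identity is \(TD_T=D_{T^*}T\), equivalently \(D_TT^*=T^*D_{T^*}\). To prove it:
\begin{itemize}
\item \(T(T^*T)^k=(TT^*)^kT\) for every \(k\ge0\), so \(Tp(T^*T)=p(TT^*)T\) for every polynomial \(p\).
\item Approximate \(s\mapsto\sqrt{1-s}\) uniformly on \([0,1]\) by polynomials (Weierstrass); both spectra lie in \([0,1]\), so the identity passes to the limit.
\end{itemize}
Then compute \(U_T^*U_T\) blockwise:
\begin{itemize}
\item The \((1,1)\) block is \(T^*T+D_T^2=\mathsf I\).
\item The \((2,2)\) block is \(D_{T^*}^2+TT^*=\mathsf I\).
\item The off-diagonal blocks are \(T^*D_{T^*}-D_TT^*\) and its adjoint, which vanish by the intertwining identity.
\end{itemize}
Since the matrix is square and finite-dimensional, \(U_T^*U_T=\mathsf I\) already implies unitarity.

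\emph{Continuity: splitting.} Set \(\Delta=T_1-T_0\). Then
\[
U_{T_1}-U_{T_0}=\begin{bmatrix}\Delta&0\\0&-\Delta^*\end{bmatrix}
+\begin{bmatrix}0&E_1\\E_2&0\end{bmatrix},
\]
with \(E_1=D_{T_1^*}-D_{T_0^*}\) and \(E_2=D_{T_1}-D_{T_0}\). The block-diagonal part has norm \(\|\Delta\|\). The block off-diagonal part has norm \(\max\{\|E_1\|,\|E_2\|\}\). By the triangle inequality it therefore suffices to show
\[
\|E_i\|\le\frac{\rho\|\Delta\|}{\sqrt{1-\rho^2}},\qquad i=1,2.
\]

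\emph{Continuity: the square-root step.} This is the main obstacle. Let \(A=\mathsf I-T_1T_1^*\) and \(B=\mathsf I-T_0T_0^*\). Both satisfy \(A,B\succeq(1-\rho^2)\mathsf I\), so their square roots are \(\succeq\sqrt{1-\rho^2}\,\mathsf I\).
\begin{itemize}
\item The difference \(X=A^{1/2}-B^{1/2}\) solves the Sylvester equation \(A^{1/2}X+XB^{1/2}=A-B\). Check by expanding: \(A-A^{1/2}B^{1/2}+A^{1/2}B^{1/2}-B\).
\item Because \(A^{1/2}\) and \(B^{1/2}\) are positive definite, the unique solution has the representation
\[
X=\int_0^\infty e^{-sA^{1/2}}(A-B)\,e^{-sB^{1/2}}\,\mathrm ds .
\]
\item Each exponential has norm at most \(e^{-s\sqrt{1-\rho^2}}\), so integrating gives
\[
\|X\|\le\frac{\|A-B\|}{2\sqrt{1-\rho^2}}.
\]
\end{itemize}
Finally, \(T_1T_1^*-T_0T_0^*=\Delta T_1^*+T_0\Delta^*\), hence \(\|A-B\|\le2\rho\|\Delta\|\). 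This yields \(\|E_1\|\le\rho\|\Delta\|/\sqrt{1-\rho^2}\). The same argument with \(T^*T\) in place of \(TT^*\) bounds \(\|E_2\|\) identically.

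Adding the diagonal and off-diagonal contributions gives the claimed bound \eqref{eq:julia-input-continuity-app}.
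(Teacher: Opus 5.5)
Your proposal is correct and follows the paper's proof step for step. You use the intertwining identity \(T(\mathsf I-T^*T)^{1/2}=(\mathsf I-TT^*)^{1/2}T\) for unitarity, the Sylvester equation for the difference of square roots with its integral solution (giving the factor \(1/(2\sqrt{1-\rho^2})\)), the bound \(2\rho\|T_1-T_0\|\) on the difference of the Gram matrices, and the diagonal/off-diagonal splitting. The only cosmetic difference is that you derive the intertwining identity by polynomial approximation of \(\sqrt{1-s}\), whereas the paper reads it off from the singular-value decomposition.
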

\begin{proof}
The singular-value decomposition gives
\(T(\mathsf I-T^*T)^{1/2}=(\mathsf I-TT^*)^{1/2}T\),
so block multiplication proves unitarity.
For \(D_c=\mathsf I-T_cT_c^*\) or
\(D_c=\mathsf I-T_c^*T_c\), we have
\(D_c\succeq(1-\rho^2)\mathsf I\). The difference
\(E=D_1^{1/2}-D_0^{1/2}\) solves
\[
D_1^{1/2}E+ED_0^{1/2}=D_1-D_0.
\]
Its integral solution~\cite[Eq.~(22)]{Simoncini2016LinearMatrixEquations} bounds
\[
\|E\|\le\frac{\|D_1-D_0\|}{2\sqrt{1-\rho^2}}
\le\frac{\rho\|T_1-T_0\|}{\sqrt{1-\rho^2}}.
\]
Adding the norms of the diagonal and off-diagonal parts of
\(U_{T_1}-U_{T_0}\) proves the claim.
\end{proof}

Queries below may select coherently from a fixed finite collection of
input unitaries, their adjoints, and controlled versions at unit cost.
All other gates, auxiliary registers, numerical metadata, and signal
spaces are independent of the hidden bit.

\begin{lemma}[Query hybrid and solution-block separation]
\label{lem:query-hybrid-app}
For \(c\in\{0,1\}\), let \(V_c\) be the output unitary of one common
coherent circuit using at most \(q\) calls to the supplied unitaries
\(U_c^{(j)}\). Suppose \(V_c\) encodes the target \(Y_c\) with a
common public normalization \(\alpha_Y\), a common signal space,
and decoded error at most \(\varepsilon\). Then
\begin{equation}
q\max_j\|U_1^{(j)}-U_0^{(j)}\|
\ge\|V_1-V_0\|
\ge\frac{\|Y_1-Y_0\|-2\varepsilon}{\alpha_Y}.
\label{eq:query-hybrid-output-app}
\end{equation}
\end{lemma}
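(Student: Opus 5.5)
The two inequalities in \eqref{eq:query-hybrid-output-app} are independent, and I will prove them separately.

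\emph{Left inequality: standard hybrid argument.} Write the common circuit as
\[
V_c=W_qO_c^{(q)}W_{q-1}\cdots W_1O_c^{(1)}W_0,
\]
where the \(W_i\) are fixed unitaries independent of \(c\). Each \(O_c^{(i)}\) is one query. In general it is a coherently selected, controlled, or adjoint version of the supplied unitaries, so it has the form \(\sum_j|j\rangle\langle j|\otimes U_c^{(j)}\), possibly with adjoints, control on some register, and the identity on the remaining branch.

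Define the hybrids \(H_i\) by using \(c=1\) oracles for the first \(i\) queries and \(c=0\) oracles afterwards. Then \(H_0=V_0\), \(H_q=V_1\), and consecutive hybrids differ in exactly one oracle. Unitary invariance of the operator norm gives
\[
\|H_i-H_{i-1}\|=\|O_1^{(i)}-O_0^{(i)}\|.
\]
Two facts bound this difference by \(\max_j\|U_1^{(j)}-U_0^{(j)}\|\):
\begin{itemize}
\item the norm of a block-diagonal (selected or controlled) operator is the maximum over its blocks;
\item \(\|U_1^*-U_0^*\|=\|U_1-U_0\|\).
\end{itemize}
The triangle inequality over the \(q\) steps then gives the left inequality.

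\emph{Right inequality: from the unitaries to the decoded blocks.} Let \(\Phi(V)=(\langle0^a|\otimes\mathsf I)V(|0^a\rangle\otimes\mathsf I)\) be the block extraction on the common signal space. Because \(\Phi\) is a compression by an isometry and its adjoint, it is contractive:
\[
\|\Phi(V_1)-\Phi(V_0)\|\le\|V_1-V_0\|.
\]
The decoded matrices are \(\widetilde Y_c=\alpha_Y\Phi(V_c)\), with \(\|\widetilde Y_c-Y_c\|\le\varepsilon\). The triangle inequality gives
\[
\|Y_1-Y_0\|\le2\varepsilon+\alpha_Y\|\Phi(V_1)-\Phi(V_0)\|\le2\varepsilon+\alpha_Y\|V_1-V_0\|.
\]
Rearranging gives the right inequality.

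\emph{Main obstacle.} The only real issue is bookkeeping. The normalization \(\alpha_Y\), the ancilla register, and the signal space must all be identical for both values of \(c\); otherwise the contraction step compares different blocks. This is exactly what the hypotheses of the lemma supply. One must also check that every allowed query type (selected, controlled, adjoint) has its deviation bounded by the maximal single-unitary deviation, which the two facts above handle.
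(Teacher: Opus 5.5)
Your proposal is correct and follows the paper's own proof. The paper also replaces the queries one at a time, using unitarity and the fact that adjoints, controls and coherent selection preserve the maximal oracle difference, to get the first bound. It then uses compression to the common signal space together with the triangle inequality to get the second.
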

\begin{proof}
Replace the query gates one at a time. Unitarity bounds each change
by the corresponding oracle difference; adjoints, controls, and
coherent selection preserve the maximum difference norm. Summing
gives the first bound. Compression to the common signal space and
the reverse triangle inequality give the second.
\end{proof}

For the BQP reductions, use the finite-input and controlled-output
conventions of Section~\ref{subsec:bqp-hardness}. The input includes
the source gate list, register layout, and coefficient-access circuit
descriptions, with polynomial cost in their length and the requested
number of precision bits. Amplification and realification give the
required acceptance thresholds using real local gates with polynomial
overhead \cite{BernsteinVazirani1997QuantumComplexity,RudolphGrover2002Rebit}.

\begin{lemma}[From a selected value to a solution oracle]
\label{lem:selected-value-hardness-app}
Let a promised family have a Hermitian target \(X\), a known basis
vector \(x_0\), and a BQP-hard decision problem for \(x_0^*Xx_0\)
with a fixed positive gap, under a polynomial-time reduction supplying
the finite descriptions above. A controlled
\((\alpha_X,a_X,\varepsilon)\) block-encoding of \(X\) decides this
problem with a constant number of calls if the public
\(\alpha_X=O(1)\) and the fixed decoded error \(\varepsilon\) is
smaller than one quarter of the gap. An independent polynomial-time
quantum decision procedure on the same family establishes BQP
completeness there.
\end{lemma}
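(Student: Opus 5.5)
The plan is to reduce the decision problem to estimating one matrix element of the decoded block, using a Hadamard test run a constant number of times. Let \(U_X\) be the controlled block-encoding, with decoded matrix \(\widetilde X\) satisfying \(\|\widetilde X-X\|\le\varepsilon\). Preparing \(|0^{a_X}\rangle|x_0\rangle\) is cheap, because \(x_0\) is a known basis vector. By the definition of a block-encoding,
\[
\langle0^{a_X}|\langle x_0|\,U_X\,|0^{a_X}\rangle|x_0\rangle
=\frac{x_0^*\widetilde Xx_0}{\alpha_X},
\]
and \(|x_0^*\widetilde Xx_0-x_0^*Xx_0|\le\varepsilon\). A controlled-\(U_X\) Hadamard test with ancilla \(|+\rangle\) accepts with probability \(\tfrac12\bigl(1+\operatorname{Re}(x_0^*\widetilde Xx_0)/\alpha_X\bigr)\). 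Since \(X\) is Hermitian, the real part of \(x_0^*\widetilde Xx_0\) lies within \(\varepsilon\) of \(x_0^*Xx_0\).

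Write the promise as \(x_0^*Xx_0\ge\theta+g\) versus \(x_0^*Xx_0\le\theta\), with a fixed gap \(g>0\). Because \(\varepsilon<g/4\), the two cases give acceptance probabilities separated by at least \((g-2\varepsilon)/(2\alpha_X)>g/(4\alpha_X)\). With \(\alpha_X=O(1)\), this separation is a positive constant. Repeating the test \(O(\alpha_X^2/g^2\cdot\log(1/\delta))=O(1)\) times and thresholding the empirical mean at the midpoint decides the promise with error \(\le1/3\), by Hoeffding's inequality. This uses a constant number of calls to \(U_X\). Amplitude estimation could be used instead, but plain repetition already suffices at constant gap.

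For hardness, compose with the given polynomial-time reduction. Any BQP language reduces to the selected-value promise on the family. The test above then uses a polynomial-size circuit that calls the oracle a constant number of times. So an efficient construction of such a block-encoding yields a polynomial-time quantum decision procedure for BQP-hard instances, which is what hardness of construction means in the oracle sense of Section~\ref{subsec:bqp-hardness}. Completeness on the family then follows by combining this hardness with the assumed independent polynomial-time quantum procedure, which places the promise problem in BQP.

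The main obstacle is bookkeeping rather than analysis. The constant-call bound needs \(\alpha_X\) to be public and constant, so that the threshold can be computed classically, and it needs the decoded error to be measured against the same normalization. I would make explicit that the midpoint threshold \(\tfrac12(1+(\theta+g/2)/\alpha_X)\) is known, and that the Hadamard test needs only the controlled access included in the hypothesis.
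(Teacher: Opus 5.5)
Your proposal is correct and follows essentially the same argument as the paper: a controlled Hadamard test on \(|0^{a_X}\rangle|x_0\rangle\), a constant number of repetitions, and a midpoint threshold justified by \(\varepsilon<\text{gap}/4\), with membership supplied by the independent procedure. The paper also notes that when the encoding is built from the finite-precision input circuits, inverse-polynomial per-call errors change the output only by the telescoping bound of Lemma~\ref{lem:query-hybrid-app}; your hardness paragraph should include this bookkeeping step.
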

\begin{proof}
A controlled Hadamard test on the known signal state with system
register \(x_0\) estimates the real diagonal element of the output
unitary. After multiplication by \(\alpha_X\), the encoding error
changes the selected value by at most \(\varepsilon\). A constant
number of tests bounds the statistical error by \(\varepsilon\) with
probability at least \(5/6\). The total error is less than half the
gap, so comparison with its midpoint decides the problem.

In a polynomial-time implementation, sufficiently small
inverse-polynomial operator errors per input call preserve bounded
error by the same telescoping argument as in
Lemma~\ref{lem:query-hybrid-app}. The precision-adjustable input
circuits provide this accuracy with polynomial overhead. Membership
is supplied by the stated decision procedure on the same family.
\end{proof}

\subsection{Product lower bounds for the four Riccati problems}
\label{app:product-lower-bounds}

We use the conventions of Section~\ref{subsec:product-lower-bounds}
and the common argument of Appendix~\ref{app:lower-bound-tools}.
Only the specified input oracles depend on \(c\); all other data,
operations, and output signal spaces are public and common.
The seed normalizations are \(\alpha_{R_0}=\sqrt5/2\) for DRE and
\(\alpha_{R_0}=1\) for RR. All estimates hold for each input.

\newtheorem*{dareproductrestatement}{Theorem~\ref{thm:dare-product-lower-main}}
\begin{dareproductrestatement}
Under the above access and accuracy conventions, there exists a family
of scalar instances of Problem~\ref{prob:quantum-dare} requiring
\[
q_{\rm DARE}=\Omega\!\left(\mathcal R_{\rm DARE}\alpha_X\right)
\]
quantum queries in the worst case. The two factors can be made
independently arbitrarily large.
\end{dareproductrestatement}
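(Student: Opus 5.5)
The plan is to build two explicit scalar DARE instances indexed by the hidden bit \(c\in\{0,1\}\), whose data depend smoothly on \((\mu_1,\mu_2)\), and then apply Lemma~\ref{lem:query-hybrid-app} with the Julia input unitaries of Lemma~\ref{lem:julia-input-continuity-app}.

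\textbf{Reverse-engineering the coefficients.} Fix \(r=1-\mu_1\) and prescribe the stabilizing solution \(X_c=(1+c)/\mu_2\). I want the dual solution to be \(Y=1\) for both inputs, because then Lemma~\ref{lem:factor-positive-graphs-app} reproduces the projector \eqref{eq:dare-product-idea-main}: the factor \((1+YX_c)^{-1}=\mu_2/(1+c+\mu_2)\) comes out exactly. I also want the closed loop \(F=(1+GX_c)^{-1}A\) and the dual closed loop \(K=(1+QY)^{-1}A\) both to equal \(r\).

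Imposing \(F=K=r\) gives \(GX_c=Q\) and \(A=r(1+Q)\). Substituting into the primal and dual equations, both reduce to the single relation \(G+r^2(1+GX_c)=1\). Hence
\[
G_c=\frac{1-r^2}{1+r^2X_c},\qquad Q_c=G_cX_c,\qquad A_c=r(1+Q_c).
\]
All three are positive, and \(G_c,Q_c\) are \(O(\mu_1)\). I will verify directly from the scalar DARE that \(X_c\) is the stabilizing solution: \(X_c>0\) and \(\rho(F)=r<1\). Lemma~\ref{lem:dare-graph-recovery} then confirms the projector.

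\textbf{Key estimates.} Three bounds are needed.

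First, each of \(G_c,Q_c,A_c\) changes by \(O(\mu_1\mu_2)\) between \(c=0\) and \(c=1\). This follows because \(G_c\) and \(Q_c\) are \((1-r^2)=O(\mu_1)\) times a function of \(X_c\) whose variation over \(X_c\ge1/\mu_2\) is \(O(\mu_2)\). For example, \(Q_c=(1-r^2)r^{-2}\bigl(1-(1+r^2X_c)^{-1}\bigr)\).

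Second, the pencil entries are bounded, so I fix public normalizations \(\alpha_M=\alpha_L=2\). This keeps \(\|M/\alpha_M\|,\|L/\alpha_L\|\le\rho<1\) uniformly, and Lemma~\ref{lem:julia-input-continuity-app} then gives \(\|U_1-U_0\|=O(\mu_1\mu_2)\) for both pencil oracles.

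Third, for \(\mathcal R_{\rm DARE}=\Theta(\mu_1^{-1})\) on the unit circle, I use \eqref{eq:factor-pencil-circle-app}. For the \(2\times2\) pencil with \(O(1)\) entries, \(\sigma_{\min}(zL-M)=\Theta(|\det(zL-M)|)\), since \(\sigma_{\max}=\Theta(1)\) there. Moreover \(\det(zL-M)\) is a bounded nonzero constant times \((z-r)(z-r^{-1})\), whose minimum modulus on \(|z|=1\) is \(\Theta(\mu_1)\), attained at \(z=1\).

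\textbf{Conclusion.} Proposition~\ref{prop:algebraic-product-normalization-app} supplies a common public \(\alpha_X=\Theta(1/\mu_2)\). I require \(\varepsilon\le\|X_1-X_0\|/8=1/(8\mu_2)\). Lemma~\ref{lem:query-hybrid-app} then yields
\[
q\cdot O(\mu_1\mu_2)\ \ge\ \frac{1/\mu_2-2\varepsilon}{\alpha_X}=\Omega(1).
\]
Hence \(q=\Omega((\mu_1\mu_2)^{-1})=\Omega(\mathcal R_{\rm DARE}\alpha_X)\). Varying \(\mu_1\) and \(\mu_2\) independently makes the two factors independently large.

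\textbf{Main obstacle.} The step I expect to be hardest is the uniform, two-sided control of \(\mathcal R_{\rm DARE}\) and of the oracle difference with explicit constants. The determinant factorization must be checked to carry a constant prefactor bounded away from zero and infinity uniformly in \(\mu_1,\mu_2\le1/8\). I also need to confirm that every \(\mu_2\)-dependence enters only through terms already carrying a factor of \(\mu_1\), so that the oracle difference is genuinely \(O(\mu_1\mu_2)\) rather than \(O(\mu_2)\). I would handle this by writing \(\det(M-zL)=(A-z)(1-zA)-zGQ\) explicitly, substituting the parametrization above, and reading off its two roots \(r\) and \(r^{-1}\).
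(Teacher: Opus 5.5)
Your reverse-engineered family is exactly the paper's. With \(r=1-\mu_1\), your formulas give \(G_c=\mu_1(2-\mu_1)\mu_2/((1+c)(1-\mu_1)^2+\mu_2)\), \(Q_c=G_cX_c\) and \(A_c=(1-\mu_1)(1+c+\mu_2)/((1+c)(1-\mu_1)^2+\mu_2)\), which is \eqref{eq:dare-product-data-app}; the requirement \(Y=1\), \(F=K=r\) is a clean way to explain those coefficients. Your oracle-difference, Julia, hybrid and normalization steps also match the paper's.

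\textbf{The gap is in your estimate of \(\mathcal R_{\rm DARE}\).} Your factorization \(\det(zL-M)=A(z-r)(z-r^{-1})\) is correct. On \(|z|=1\) one has \(|z-r^{-1}|=|z-r|/r\), so \(|\det(zL-M)|=A|z-r|^2/r\). Its minimum, at \(z=1\), is \(A\mu_1^2/(1-\mu_1)=\Theta(\mu_1^2)\), not \(\Theta(\mu_1)\), because both roots lie within \(O(\mu_1)\) of \(z=1\).

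Likewise, \(\sigma_{\max}(zL-M)\) is not \(\Theta(1)\) near \(z=1\). There \(zL-M=\begin{bmatrix}1-A&G\\Q&A-1\end{bmatrix}\) with \(A-1,Q=\Theta(\mu_1)\) and \(G=O(\mu_1\mu_2)\), so \(\sigma_{\max}=\Theta(\mu_1)\).

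Each of your two claims is false, and the errors happen to cancel. If you carry out the step as written, computing the determinant correctly and using only \(\sigma_{\max}=O(1)\), you get just \(\sigma_{\min}\ge\Omega(\mu_1^2)\), i.e. \(\mathcal R_{\rm DARE}=O(\mu_1^{-2})\). The hybrid bound \(\Omega((\mu_1\mu_2)^{-1})\) could then only be converted into \(\Omega(\mathcal R_{\rm DARE}^{1/2}\alpha_X)\), not the stated product.

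\textbf{Repair within your route.} On \(|z|=1\), use the exact \(2\times2\) identity \(\sigma_{\min}=|\det|/\sigma_{\max}\). Combine it with \(\|zL-M\|\lesssim|z-1|+\mu_1\), which holds since \(|A-1|=O(\mu_1)\) and \(zA-1=z(A-1)+(z-1)\). Also use \(|z-r|\ge(|z-1|+\mu_1)/3\). Together these give \(\sigma_{\min}(zL-M)\gtrsim|z-1|+\mu_1\ge c\,\mu_1\). For the other direction, \(\sigma_{\min}(L-M)\le\|L-M\|=O(\mu_1)\). Hence \(\mathcal R_{\rm DARE}=4/\min_{|z|=1}\sigma_{\min}(zL-M)=\Theta(\mu_1^{-1})\).

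\textbf{The paper's route} avoids the determinant. It writes \((zL_c-M_c)^{-1}=\bigl(\Pi_<^{(c)}/(z-r)+(\mathsf I-\Pi_<^{(c)})/(z-r^{-1})\bigr)L_c^{-1}\). For the upper bound it uses \(\|\Pi_<^{(c)}\|,\|\mathsf I-\Pi_<^{(c)}\|\le\sqrt2\), \(\|L_c^{-1}\|<3/2\), and pole distances at least \(\mu_1\) on the unit circle. For the matching lower bound it uses the stable eigenvector at \(z=1\). That route shows directly why the singularity factor is first order in \(\mu_1\) even though the determinant vanishes to second order.
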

\begin{proof}
For \(c\in\{0,1\}\), take the positive scalar coefficients
\begin{equation}
\begin{aligned}
A_c&=\frac{(1-\mu_1)(1+c+\mu_2)}
 {(1+c)(1-\mu_1)^2+\mu_2},\\
G_c&=\frac{\mu_1(2-\mu_1)\mu_2}
 {(1+c)(1-\mu_1)^2+\mu_2},&
Q_c&=\frac{\mu_1(2-\mu_1)(1+c)}
 {(1+c)(1-\mu_1)^2+\mu_2}.
\end{aligned}
\label{eq:dare-product-data-app}
\end{equation}
Choosing \(B_c=\sqrt{G_c}\), \(R_c=1\) gives stabilizability and
detectability. The pencil in \eqref{eq:dare-pencil} has invertible
\(L_c\), and
\begin{equation}
\begin{aligned}
\Pi_<^{(c)}
&=\frac1{1+c+\mu_2}
 \begin{bmatrix}\mu_2&\mu_2\\1+c&1+c\end{bmatrix},\\
S_D^{(c)}
&=L_c^{-1}M_c
 =\frac{\mathsf I-\mu_1(2-\mu_1)\Pi_<^{(c)}}{1-\mu_1}.
\end{aligned}
\label{eq:dare-product-projector-app}
\end{equation}
The projector is idempotent and selects the eigenvalue \(1-\mu_1\);
the other eigenvalue is \((1-\mu_1)^{-1}\).
Its range is the graph of \(X_c=(1+c)/\mu_2\).
Substitution gives
\(Q_c+A_c^2X_c/(1+G_cX_c)=X_c\) and
\(A_c/(1+G_cX_c)=1-\mu_1\), proving that \(X_c\) is stabilizing.

The coefficients give
\(\|M_c\|,\|L_c\|<8/5\), \(\|L_c^{-1}\|<3/2\), and
\(\|\Pi_<^{(c)}\|,\|\mathsf I-\Pi_<^{(c)}\|\le\sqrt2\).
For the pencil bounds, use
\(\max\{A_c,A_c^{-1}\}\le(1-\mu_1)^{-1}\) and
\(G_c,Q_c\le\mu_1(2-\mu_1)/(1-\mu_1)^2\);
the inverse additionally uses
\(G_c/A_c\le\mu_1(2-\mu_1)/(1-\mu_1)\).
The resolvent is
\[
(zL_c-M_c)^{-1}
=\left(\frac{\Pi_<^{(c)}}{z-(1-\mu_1)}
+\frac{\mathsf I-\Pi_<^{(c)}}{z-(1-\mu_1)^{-1}}\right)L_c^{-1}.
\]
On \(|z|=1\), both pole distances are at least \(\mu_1\), giving
an \(O(\mu_1^{-1})\) bound. At \(z=1\), a unit
\((1-\mu_1)\)-eigenvector \(v\) satisfies
\((L_c-M_c)^{-1}L_cv=v/\mu_1\), giving the matching lower bound.
Thus, with the common \(\alpha_M=\alpha_L=2\),
\begin{equation}
\mathcal R_{\rm DARE}
=\sup_{|z|=1}4\|(zL_c-M_c)^{-1}\|
=\Theta(\mu_1^{-1}).
\label{eq:dare-product-resolvent-app}
\end{equation}

Direct subtraction of \eqref{eq:dare-product-data-app} gives
\[
\begin{bmatrix}A_1-A_0\\G_1-G_0\\Q_1-Q_0\end{bmatrix}
=
\frac{\mu_1(2-\mu_1)\mu_2}
 {((1-\mu_1)^2+\mu_2)(2(1-\mu_1)^2+\mu_2)}
\begin{bmatrix}1-\mu_1\\-(1-\mu_1)^2\\1\end{bmatrix}.
\]
Consequently \(\|M_1-M_0\|,\|L_1-L_0\|<3\mu_1\mu_2\).
Their signal matrices have norms below \(4/5\), so the Julia
estimate in Appendix~\ref{app:lower-bound-tools} yields
\[
\max\{\|U_{M_1/2}-U_{M_0/2}\|,
       \|U_{L_1/2}-U_{L_0/2}\|\}<4\mu_1\mu_2.
\]
This also covers their adjoints, controlled versions, and public
selection between the two oracles.
Use the common actual output normalization
\(\alpha_X=\Theta(\|X_c\|)=\Theta(1/\mu_2)\)
(see Proposition~\ref{prop:algebraic-product-normalization-app} for the proof).
Since \(\|X_1-X_0\|=1/\mu_2\), the decoded-error convention gives
\[
\frac{\|X_1-X_0\|-2\varepsilon}{\alpha_X}
\ge\frac{3}{4\mu_2\alpha_X}=\Omega(1).
\]
Lemma~\ref{lem:query-hybrid-app} therefore gives
\(q_{\rm DARE}=\Omega((\mu_1\mu_2)^{-1})\).
Together with \(\mathcal R_{\rm DARE}=\Theta(\mu_1^{-1})\),
this establishes the claimed product for each input.
\end{proof}

\newtheorem*{careproductrestatement}{Theorem~\ref{thm:care-product-lower-main}}
\begin{careproductrestatement}
Under the above access and accuracy conventions, there exists a family
of two-state instances of Problem~\ref{prob:quantum-care} requiring
\[
q_{\rm CARE}=\Omega\!\left(\mathcal R_{\rm CARE}\alpha_X\right)
\]
quantum queries in the worst case. The two factors can be made
independently arbitrarily large.
\end{careproductrestatement}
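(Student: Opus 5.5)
The plan mirrors the DARE proof: build two two-state CARE instances, indexed by \(c\in\{0,1\}\), in which a single spectral pair is separated from the imaginary axis by \(\Theta(\mu_1)\) and the stabilizing solutions are \(X_c\) with \(\|X_1-X_0\|=\Theta(1/\mu_2)\). The Hamiltonian oracles should differ by only \(O(\mu_1\mu_2)\). The cleanest route is to pass through the scalar structure. I take \(n=1\) padded to two states, or a direct sum of a fixed well-conditioned \(1\times1\) CARE block with a hard scalar block; the latter keeps the "two-state" label while leaving all estimates scalar. For the hard block I prescribe the stable projector directly. I want \(\Pi_-^{(c)}\) to be the rank-one oblique projector with range \(\operatorname{ran}[1;X_c]\), where \(X_c=(1+c)/\mu_2\), and kernel a fixed complementary line \(\operatorname{ran}[-y;1]\) with \(y=\Theta(\mu_2)\). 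The Hamiltonian is then \(H_c=-\lambda\Pi_-^{(c)}+\lambda(\mathsf I-\Pi_-^{(c)})\) with \(\lambda=\mu_1\) (or \(\mu_1\) times a constant). I then read off \(A_c,G_c,Q_c\) from the Hamiltonian form \([A,-G;-Q,-A]\). Positivity of \(G_c,Q_c\) and the sign pattern \(H^*J+JH=0\) must be checked. By Lemma~\ref{lem:factor-positive-graphs-app}, positive \(X_c\) and \(y\) give a legitimate stabilizing pair, and the stability of \(A_c-G_cX_c=-\lambda\) confirms that \(X_c\) is the stabilizing solution.

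Next come the parameter estimates. I use fixed public \(\alpha_H\) and the rectangles of Theorem~\ref{thm:factor-axis-app}. The resolvent is \((z+\lambda)^{-1}\Pi_-+(z-\lambda)^{-1}(\mathsf I-\Pi_-)\), and the projector norms stay \(O(1)\) because the range and kernel are nearly orthogonal: the range is close to vertical, and the kernel has slope \(1/y\), which is also nearly vertical. I need to verify this angle carefully; if the two lines become nearly parallel, I will swap the kernel to a fixed horizontal-ish line \([1;-y']\) and recompute. The axis gap is then \(\Theta(\mu_1)\). Evaluating at \(z=-\lambda/2\) on the inner edge gives the matching lower bound, so \(\mathcal R_{\rm CARE}=\Theta(\mu_1^{-1})\). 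For the output scale I use Proposition~\ref{prop:algebraic-product-normalization-app}, which gives a common \(\alpha_X=\Theta(1/\mu_2)\).

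Then comes the oracle difference. Since \(H_1-H_0=-2\lambda(\Pi_-^{(1)}-\Pi_-^{(0)})\) and the two projectors share a kernel, their difference is controlled by the change of the range direction \([1;X_c]/\|[1;X_c]\|\), which is \(O(\mu_2)\) in angle. This gives \(\|H_1-H_0\|=O(\mu_1\mu_2)\). Lemma~\ref{lem:julia-input-continuity-app}, with the signal norm bounded below one by the fixed \(\alpha_H\), transfers this bound to the Julia unitaries.

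Finally, with \(\varepsilon\le\|X_1-X_0\|/8\), Lemma~\ref{lem:query-hybrid-app} gives \(q\cdot O(\mu_1\mu_2)\ge\tfrac34\|X_1-X_0\|/\alpha_X=\Omega(1)\), so \(q=\Omega((\mu_1\mu_2)^{-1})=\Omega(\mathcal R_{\rm CARE}\alpha_X)\). The main obstacle is the explicit coefficient design. The Hamiltonian must have the exact CARE block form with \(G_c,Q_c\succeq0\) and the correct LQR conditions, while the projector norm stays bounded, so that the difference is genuinely \(O(\mu_1\mu_2)\) rather than \(O(\mu_1)\). I would first try the scalar ansatz above, solving for \(A_c,G_c,Q_c\) symbolically as in \eqref{eq:dare-product-data-app}.
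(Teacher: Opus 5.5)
Your architecture matches the paper's. Its slow block has exactly your form \(\mu_1(\mathsf I-2\Pi_-)\), since the paper notes \(\Pi_-=\tfrac12\mathsf I-H/(2\mu_1)\) there. It is direct-summed, after the rotation \(O\), with a fixed fast block, and then comes the rectangle estimate, the Julia continuity bound and the query hybrid.

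\textbf{The gap.} The concrete kernel you lead with, \(\operatorname{ran}[-y;1]\) with \(y=\Theta(\mu_2)\), does not work, and your justification for it is self-contradictory. The range \([1;X_c]\) with \(X_c=(1+c)/\mu_2\) and the kernel \([-y;1]\) with small \(y\) are \emph{both} nearly vertical. So range and kernel are nearly parallel, not nearly orthogonal. With \(y=\mu_2\) and \(\lambda=\mu_1\),
\[
\Pi_-^{(c)}=\frac{[1;X_c]\,[1,\ y]}{1+yX_c}
=\frac1{2+c}\begin{bmatrix}1&\mu_2\\(1+c)/\mu_2&1+c\end{bmatrix}.
\]
This has three consequences.
\begin{itemize}
\item \(\|\Pi_-^{(c)}\|=\Theta(\mu_2^{-1})\).
\item The read-off \(Q_c=2\lambda X_c/(1+yX_c)=\Theta(\mu_1/\mu_2)\) makes \(\|H_c\|\) unbounded relative to any fixed public \(\alpha_H\).
\item The \((2,1)\) entry of \(\Pi_-^{(1)}-\Pi_-^{(0)}\) equals \(1/(6\mu_2)\), so \(\|H_1-H_0\|\ge\mu_1/(3\mu_2)\) instead of \(O(\mu_1\mu_2)\).
\end{itemize}
Your ``angle \(O(\mu_2)\)'' step silently assumes the range--kernel angle \(\theta\) is bounded below. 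An oblique projector amplifies a range rotation by roughly \(\sin^{-2}\theta\), and here \(\sin\theta\approx2\mu_2\).

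The normalization step would also break. Proposition~\ref{prop:algebraic-product-normalization-app} is proved only for the paper's families, so you must rerun its small-circle argument for yours anyway. With \(\|\Pi_-\|=\Theta(\mu_2^{-1})\), that argument gives raw projector scale \(\Theta(\mu_2^{-1})\) and hence \(\alpha_X=\Theta(\mu_2^{-2})\).

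\textbf{The repair.} Your fallback is correct and should be the construction itself, not a contingency: take the kernel graph \([-y;1]\) with a fixed constant \(y>0\). Then:
\begin{itemize}
\item \(\|\Pi_-^{(c)}\|=\sqrt{1+X_c^2}\sqrt{1+y^2}/(1+yX_c)=O(1)\).
\item The ansatz \(H_c=\lambda(\mathsf I-2\Pi_-^{(c)})\) is automatically Hamiltonian, since it has trace zero. It reads off as \(A_c=\lambda(yX_c-1)/(1+yX_c)\), \(G_c=2\lambda y/(1+yX_c)=\Theta(\mu_1\mu_2)\) and \(Q_c=2\lambda X_c/(1+yX_c)=\Theta(\mu_1)\), with \(G_c,Q_c>0\).
\item Both entries of \([1;X_c]/(1+yX_c)\) move by \(O(\mu_2)\) between the two inputs, so \(\|H_1-H_0\|=O(\mu_1\mu_2)\).
\end{itemize}
The small-circle normalization argument then goes through with \(\alpha_\Pi=O(1)\) and \(\alpha_X=\Theta(\mu_2^{-1})\), and the rest of your argument is fine. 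This is exactly the paper's choice. Its slow stable projector has kernel \(\operatorname{ran}[-2(1-(1+c)\mu_2);1]\), a constant-order \(y\) that varies slightly with \(c\) so that \(Q\), rather than the kernel, is common to both inputs.
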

\begin{proof}
Set
\[
O=\frac1{\sqrt2}\begin{bmatrix}1&1\\1&-1\end{bmatrix},\qquad
\begin{aligned}
A_c&=O\operatorname{diag}(\mu_1(1-2(1+c)\mu_2),0)O^*,\\
G_c&=O\operatorname{diag}
 (4\mu_1(1+c)\mu_2(1-(1+c)\mu_2),1)O^*,\\
Q&=O\operatorname{diag}(\mu_1,1)O^*.
\end{aligned}
\]
Here \(G_c,Q\succ0\); \(B_c=G_c^{1/2}\), \(R_c=\mathsf I\)
give the required LQR conditions. The queried input is the assembled
Hamiltonian, not separate encodings of these control data.

In the coordinates defined by \(O\), the slow Hamiltonian block
squares to \(\mu_1^2\mathsf I\) and has norm at most \(2\mu_1<1\);
the fast block is \(\begin{bmatrix}0&-1\\-1&0\end{bmatrix}\).
Hence
\[
\|\mathcal H_{\rm CARE}^{(c)}\|=1,\qquad
\operatorname{spec}(\mathcal H_{\rm CARE}^{(c)})
=\{-1,-\mu_1,\mu_1,1\}.
\]
On the slow block, subtracting the Hamiltonian divided by \(2\mu_1\)
from \(\mathsf I/2\) gives the stable projector. Both slow
projectors therefore have norm at most \(3/2\), and the fast
projectors are orthogonal. The solution is
\begin{equation}
X_c=O\operatorname{diag}
 \left(\frac1{2(1+c)\mu_2},1\right)O^*,\qquad
\|X_c\|=\frac1{2(1+c)\mu_2}.
\label{eq:care-product-graph-app}
\end{equation}
Substitution verifies the CARE and
\(A_c-G_cX_c=O\operatorname{diag}(-\mu_1,-1)O^*\),
so \(X_c\) is stabilizing.

For \(\alpha_H=2\), use the common positively oriented contour
\[
\Gamma_-=
\partial\{x+\mathrm iy:-3/2\le x\le-\mu_1/2,\ |y|\le1/2\}.
\]
It selects the stable spectrum of the unscaled Hamiltonian.
The two slow projectors have uniformly bounded norms, and their
pole distances are at least \(\mu_1/2\) and \(3\mu_1/2\);
the fast resolvent is uniformly bounded. At \(z=-\mu_1/2\),
a \(-\mu_1\)-eigenvector supplies the matching lower bound.
Since \(|z|+2\) stays between positive constants,
\[
\mathcal R_{\rm CARE}
=\sup_{z\in\Gamma_-}(|z|+2)
 \|(z\mathsf I-\mathcal H_{\rm CARE}^{(c)})^{-1}\|
=\Theta(\mu_1^{-1}).
\]

The explicit coefficients give
\(\|\mathcal H_{\rm CARE}^{(1)}
-\mathcal H_{\rm CARE}^{(0)}\|\le\sqrt{24}\mu_1\mu_2\).
As \(\|\mathcal H_{\rm CARE}^{(c)}/2\|=1/2\), the Julia estimate
in Appendix~\ref{app:lower-bound-tools} gives
\[
\|U_{\mathcal H_{\rm CARE}^{(1)}/2}
   -U_{\mathcal H_{\rm CARE}^{(0)}/2}\|<4\mu_1\mu_2.
\]
Use the common actual output normalization
\(\alpha_X=\Theta(\|X_c\|)=\Theta(1/\mu_2)\)
(see Proposition~\ref{prop:algebraic-product-normalization-app} for the proof).
Here \(\|X_1-X_0\|=1/(4\mu_2)\), so the same decoded-error
convention gives
\[
\frac{\|X_1-X_0\|-2\varepsilon}{\alpha_X}
\ge\frac{3}{16\mu_2\alpha_X}=\Omega(1).
\]
The same hybrid argument yields
\(q_{\rm CARE}=\Omega((\mu_1\mu_2)^{-1})\).
Since \(\mathcal R_{\rm CARE}=\Theta(\mu_1^{-1})\), this proves
the claimed product for either input.
\end{proof}

\begin{proposition}[Actual normalization on the algebraic lower-bound families]
\label{prop:algebraic-product-normalization-app}
For the CARE and DARE families above, public contours and quadrature
rules independent of \(c\) implement the full-projector recovery
algorithm with
\begin{equation}
\alpha_\Pi=\Theta(1),\qquad r_\Pi=1,\qquad
\alpha_X=\Theta(\|X_c\|)=\Theta(\mu_2^{-1}),\qquad
\mathcal R=\Theta(\mu_1^{-1}).
\label{eq:algebraic-product-actual-normalization-app}
\end{equation}
The supplied scales, including \(\alpha_X\), are common to both
inputs and independent of quadrature accuracy. Under the coherent node
access of the construction theorems, projector error
\(0<\delta\le1/8\) requires \(O(\log(1/\delta))\) nodes and
\[
Q_\Pi(\delta)
=O\!\left(\mu_1^{-1}
\log\!\left(e+\frac1{\mu_1\delta}\right)\right)
\]
input queries. Thus the solution construction uses
\(\widetilde O(\mathcal R\alpha_X)\) queries at an actual
normalization comparable to the solution norm, uniformly as
\(\mu_2\to0\).
\end{proposition}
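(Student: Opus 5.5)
The key point is that both lower-bound families have spectra that split into a slow part near the separating curve and an $O(1)$-separated fast part. So I will use small public contours around the selected eigenvalues, not the large rectangles or the unit circle. On these contours the weight $|\omega_j|$ is as small as the inverse scale $\beta_j$ is large, so the raw LCU normalization $\sum_j|\omega_j|\beta_j$ stays $O(1)$. Each claim in \eqref{eq:algebraic-product-actual-normalization-app} then follows from the general construction theorems.

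\textbf{CARE.} Work in the $O$-coordinates, where $\mathcal H^{(c)}_{\rm CARE}$ is block diagonal: a slow $2\times2$ block with eigenvalues $\pm\mu_1$ and the fast block with eigenvalues $\pm1$.
\begin{itemize}
\item Take $\Gamma_-$ to consist of two public circles: $|z+\mu_1|=\mu_1/2$ and $|z+1|=1/2$. Both are independent of $c$.
\item On the small circle, the resolvent equals $\Pi^{\rm slow}_-/(z+\mu_1)+\Pi^{\rm slow}_+/(z-\mu_1)$ plus a bounded fast part. The slow projectors have norm at most $3/2$, so the inverse norm is $\Theta(\mu_1^{-1})$ uniformly on the circle. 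Since $|z|+2=\Theta(1)$ there, this gives $\mathcal R=\Theta(\mu_1^{-1})$.
\item Use the $m$-point trapezoidal rule: $\omega_j=(z_j-z_{\rm c})/m$ with center $z_{\rm c}$, so $|\omega_j|=\mu_1/(2m)$. With $\beta_j=\Theta(\mu_1^{-1})$ this yields $\sum_j|\omega_j|\beta_j=O(1)$. The large circle contributes $O(1)$ as well. Hence $\alpha_\Pi=\Theta(1)$, with $\alpha_\Pi\ge\|\Pi_-\|\ge1$, and $r_\Pi=1$ since no reduction is used.
\item Apply Lemma~\ref{lem:analytic-quadrature-app} with strip width $a=\log(4/3)$, a constant fraction of each radius. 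The integrand along the strip stays $O(1)$: the resolvent is $O(\mu_1^{-1})$ but is multiplied by $z'(t)=O(\mu_1)$. The error is therefore $O(e^{-am})$, and $O(\log(1/\delta))$ nodes suffice.
\end{itemize}

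\textbf{DARE.} The pencil finite spectrum is $\{1-\mu_1,(1-\mu_1)^{-1}\}$. Use the single public circle $|z-(1-\mu_1)|=\mu_1/2$. The resolvent formula in the proof of Theorem~\ref{thm:dare-product-lower-main} gives $\|(zL_c-M_c)^{-1}\|=\Theta(\mu_1^{-1})$ on this circle; the pole at $(1-\mu_1)^{-1}$ stays at distance $\ge\mu_1$. The weights again satisfy $|\omega_j|=\mu_1/(2m)$, so $\alpha_{\rm DARE}=O(1)$, and with $\alpha_L=2$ we get $\alpha_{\Pi_<}=\alpha_L\alpha_{\rm DARE}=\Theta(1)$. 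The shifted scale $|z|\alpha_L+\alpha_M$ is $\Theta(1)$, so $\mathcal R=\Theta(\mu_1^{-1})$. The quadrature estimate is identical to the CARE case.

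\textbf{Common output scale and query cost.} Supply the solution bound $u=1/\mu_2$ for $c=0$ and $2/\mu_2$ for $c=1$ (DARE), or $1/(2\mu_2)$ (CARE). Both are public. Then set $\alpha_X=8\alpha_\Pi\sqrt{1+u^2}=\Theta(\mu_2^{-1})$ via \eqref{eq:algebraic-projector-output-app}. Proposition~\ref{thm:local-contour-sum} or \ref{thm:affine-pencil-lcu-main} gives the stated $Q_\Pi(\delta)$. Proposition~\ref{prop:dre-direct-projector-stability-app} then gives total cost $\widetilde O(\mathcal R\alpha_X)$.

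\textbf{Main obstacle.} The delicate part is making every supplied scale public while still getting $\alpha_\Pi=\Theta(1)$. This requires two things:
\begin{itemize}
\item nodewise inverse-norm estimates $\beta_j$ that are within a constant factor for both $c=0$ and $c=1$, so one common $\beta_j$ can be used;
\item a check that the fast and complementary blocks are bounded uniformly in $\mu_2$.
\end{itemize}
For the first, I would bound $\|(z\mathsf I-\mathcal H)^{-1}\|$ above and below by $C^{\pm1}/\operatorname{dist}(z,\text{nearest pole})$, using the explicit eigendecompositions with uniformly bounded projector norms. Then the single public choice $\beta_j=4C/\operatorname{dist}(z_j,\cdot)$ works for both inputs.
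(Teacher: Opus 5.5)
Your construction is essentially the paper's. It uses circles of radius $\mu_1/2$ centred at the selected slow eigenvalue ($-\mu_1$ for CARE, $1-\mu_1$ for DARE), an $O(1)$ circle around $-1$ for CARE, trapezoidal weights whose moduli sum to the radius, and public node scales of order $\mu_1^{-1}$. Together these give $\sum_j|\omega_j|\beta_j=O(1)$ while $\mathcal R=\Theta(\mu_1^{-1})$. Your deviations are harmless:
\begin{itemize}
\item your fast circle has radius $1/2$ instead of $1/4$; the pole $-\mu_1$ then comes within $1/2-\mu_1$ of it, but your distance-based $\beta_j$ with $C=\sum_\lambda\|\Pi_\lambda\|\le5$ absorbs this;
\item you use distance-based node scales rather than the paper's fixed $32/\mu_1$ and $32$;
\item you use Lemma~\ref{lem:analytic-quadrature-app} on a strip in place of the paper's exact geometric-series evaluation of the trapezoidal residuals.
\end{itemize}
Both quadrature arguments give $O(\log(1/\delta))$ nodes uniformly in $\mu_1,\mu_2$.

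One step fails as written: the DARE solution bound. You supply $u=1/\mu_2$ when $c=0$ and $u=2/\mu_2$ when $c=1$. The bit $c$ is hidden, so the algorithm cannot make this choice. Making it would also render $\alpha_X=8\alpha_\Pi\sqrt{1+u^2}$ and the projector budget $\delta=\varepsilon/(8(1+u^2))$ dependent on $c$. That contradicts the part of the statement asserting that every supplied scale, including $\alpha_X$, is common to both inputs. This commonality is exactly what the query-hybrid argument in Theorem~\ref{thm:dare-product-lower-main} needs, and calling both values \emph{public} does not repair it.

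The fix is to supply the single bound $u=2/\mu_2\ge\max_c\|X_c\|$ for both inputs, as you already do for CARE with $u=1/(2\mu_2)$. In both families $u\le2\|X_c\|$ for each $c$, so this still gives $\alpha_X=\Theta(\|X_c\|)=\Theta(\mu_2^{-1})$. The rest of your argument then goes through unchanged.
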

\begin{proof}
For CARE, use the two counterclockwise circles
\begin{equation}
|z+\mu_1|=\frac{\mu_1}{2},\qquad |z+1|=\frac14,
\label{eq:care-product-normalization-contours-app}
\end{equation}
and sum their integrals. Each circle encloses one stable eigenvalue
and no other eigenvalue. On the first circle, the slow-block
resolvent norm is at most \(4/\mu_1\), because its projectors have
norm at most \(3/2\) and the two pole distances are at least
\(\mu_1/2\) and \(3\mu_1/2\). The fast block is uniformly bounded.
On the second circle, the fast-block resolvent has norm \(4\),
while the slow-block norm is at most
\((3/2)(8/5+4/3)<5\).
A unit eigenvector at the enclosed eigenvalue gives the matching
lower bound on each circle. Consequently the public inverse scales
\(\beta_j=32/\mu_1\) and \(\beta_j=32\), respectively, are
admissible constant-factor inverse normalizations for both inputs.
The generalized singularity factor on their union is
\(\Theta(\mu_1^{-1})\).

For DARE, use the counterclockwise circle
\begin{equation}
|z-(1-\mu_1)|=\frac{\mu_1}{2}.
\label{eq:dare-product-normalization-contour-app}
\end{equation}
It encloses the stable eigenvalue and excludes its reciprocal.
Their separation is at least \(2\mu_1\). The projector and
\(L_c^{-1}\) bounds in the DARE proof give
\[
\|(zL_c-M_c)^{-1}\|
\le\frac32\sqrt2\left(\frac2{\mu_1}
+\frac2{3\mu_1}\right)
=\frac{4\sqrt2}{\mu_1}.
\]
For a unit stable eigenvector \(v\),
\((zL_c-M_c)^{-1}L_cv=v/(z-(1-\mu_1))\), giving a matching
\(\Omega(\mu_1^{-1})\) bound. Thus \(\beta_j=32/\mu_1\)
is an admissible constant-factor inverse normalization, and
\(\mathcal R_{\rm DARE}=\Theta(\mu_1^{-1})\) also on this circle.
The construction uses the pencil inverses and the right factor
\(L_c\); \(L_c^{-1}\) is used only in this estimate.
(Nodewise inverse scales are unnecessary for this analysis;
a fixed scale on each circle suffices.)

On a circle with center \(a\) and radius \(\rho\), take
\(z_j=a+\rho e^{2\pi\mathrm i j/m}\) and
\(\omega_j=\rho e^{2\pi\mathrm i j/m}/m\).
The selected eigenvalue equals \(a\), so its scalar quadrature is
exact. For any excluded eigenvalue \(\lambda\), the geometric-series
identity gives
\[
\frac1m\sum_{j=0}^{m-1}
\frac{\rho e^{2\pi\mathrm i j/m}}
 {a+\rho e^{2\pi\mathrm i j/m}-\lambda}
=\frac1{1-((\lambda-a)/\rho)^m}.
\]
For the CARE circles, every excluded pole satisfies
\(\rho/|\lambda-a|\le2/7<1/3\); for DARE the ratio is at
most \(1/4\). The uniformly bounded spectral projectors therefore
give quadrature error \(O(3^{-m})\) and \(O(4^{-m})\),
respectively. Taking \(m=O(\log(1/\delta))\) allocates at most
\(\delta/2\) to quadrature.

The absolute weights sum to the circle radius. Hence the actual
raw normalizations, including \(\alpha_L=2\) for DARE, are
\begin{equation}
\begin{aligned}
\alpha_{\Pi_-}^{\rm raw}
&=\frac{\mu_1}{2}\frac{32}{\mu_1}+\frac14\,32=24,\\
\alpha_{\Pi_<}^{\rm raw}
&=2\,\frac{\mu_1}{2}\frac{32}{\mu_1}=32.
\end{aligned}
\label{eq:algebraic-product-raw-scales-app}
\end{equation}
Both dominate the exact projector norms and can be used without
normalization reduction, so \(r_\Pi=1\). The contour primitives
implement the remaining error \(\delta/2\) at the stated
\(Q_\Pi(\delta)\), using public node scales and weights.
Finally, substitute the public solution bounds \(1/(2\mu_2)\)
for CARE and \(2/\mu_2\) for DARE into
\eqref{eq:algebraic-projector-output-app} and
\eqref{eq:algebraic-projector-budget-app}. Each bound lies within a
factor of two of \(\|X_c\|\), so the resulting actual
\(\alpha_X\) has the order claimed. Equation~\eqref{eq:algebraic-projector-query-app}
then gives the total query bound.

This construction supplies the common actual output normalizations
used in Section~\ref{subsec:product-lower-bounds}, without
normalization reduction.
\end{proof}

\newtheorem*{dreproductrestatement}{Theorem~\ref{thm:dre-product-lower-main}}
\begin{dreproductrestatement}
Under the above access and accuracy conventions, there exists a family
of two-state instances of Problem~\ref{prob:quantum-dre} requiring
\[
q_{\rm DRE}=\Omega\!\left(\mathcal R_{\rm DRE}\kappa(\Pi_+R_0)\right)
\]
quantum queries in the worst case. The two factors can be made
independently arbitrarily large.
\end{dreproductrestatement}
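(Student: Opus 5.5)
Following the DARE and CARE lower bounds, I will build two hidden-bit instances \(c\in\{0,1\}\) of a two-state DRE whose complete Julia Hamiltonian oracles differ by \(O(\mu_1\mu_2)\). At the public time \(t=\log(1/\mu_2)/(2\mu_1)\), the solutions must still differ by a constant at the public normalization \(\alpha_{P(t)}=3\). Lemma~\ref{lem:query-hybrid-app} then gives \(q_{\rm DRE}=\Omega((\mu_1\mu_2)^{-1})\). Together with \(\mathcal R_{\rm DRE}=\Theta(\mu_1^{-1})\) and \(\kappa(\Pi_+R_0)=\Theta(\mu_2^{-1})\), this is the claimed product.

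\emph{Construction.} I take a block structure like the CARE family, rotated by \(O\). There is a fast block with eigenvalues \(\pm1\) and a slow block with eigenvalues \(\pm\mu_1\). The initial datum \(P_0\) is fixed and public, so \(R_0\) and its seed oracle (normalization \(\sqrt5/2\), so \(\|P_0\|\le1/2\)) do not depend on \(c\). The hidden bit enters only through an \(O(\mu_1\mu_2)\) perturbation of the slow block of \(\mathcal H_{\rm DRE}\). This perturbation tilts the slow unstable eigenvector so that the initial graph has an \(O(\mu_2)\) component along the right-half-plane slow direction. That component gives \(\sigma_{\min}(\Pi_+R_0)=\Theta(\mu_2)\), while \(\|\Pi_+R_0\|=\Theta(1)\), so \(\kappa(\Pi_+R_0)=\Theta(\mu_2^{-1})\). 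The tilt direction or sign depends on \(c\).

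\emph{Dynamics.} In the slow block, the scalar Riccati flow moves the graph slope from near the stable value toward the unstable one. This happens once \(e^{2\mu_1 t}\mu_2\) becomes order one, which is exactly the public time \(t\). I will compute \(P_c(t)\) explicitly from Lemma~\ref{lem:dre-graph-flow-app}, as \(\mathcal W_2\mathcal W_1^{-1}\) for a \(2\times2\) hyperbolic flow. I then check three things: the solution exists on \([0,t]\), \(\|P_c(t)\|\le3/2\), and \(\|P_1(t)-P_0(t)\|\ge c_0\) for an absolute constant \(c_0\). Together with \(\varepsilon\le1/16\) and \(\alpha_{P(t)}=3\), these give an \(\Omega(1)\) normalized separation.

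\emph{Spectral factor.} For \(\mathcal R_{\rm DRE}\) I use Hamiltonian normalization \(\alpha_H=2\) and public rectangles \(\Gamma_\pm\) at distance \(\mu_1/2\) from the axis. Since the slow projectors are uniformly bounded, I can reuse the CARE estimate: the upper bound comes from pole distances, and the lower bound from an eigenvector at \(z=\pm\mu_1/2\). The oracle difference bound follows from Lemma~\ref{lem:julia-input-continuity-app} with signal norm \(\le1/2\).

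\emph{Main obstacle.} The hard step is tuning the construction so that three requirements hold at once. The oracle change must be only \(O(\mu_1\mu_2)\). The \(O(\mu_2)\) initial component must be amplified by \(e^{2\mu_1 t}=\mu_2^{-1}\) into an \(O(1)\) change in \(P(t)\). And \(P(t)\) must stay bounded, with no finite-time blowup, so that the fixed constant normalization is valid. I would first try a slow block \(\mu_1\begin{bmatrix}0&1\\1&0\end{bmatrix}\) conjugated by a shear of size \(\mu_2\) whose sign is set by \(c\). I would then verify the scalar Möbius formula for the slope explicitly.
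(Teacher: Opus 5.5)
Your overall strategy matches the paper's: a public seed, a two-state Hamiltonian with a fast pair $\pm1$ and a slow pair $\pm\mu_1$, and a hidden $O(\mu_1\mu_2)$ oracle change. That change displaces the slow direction of $\ker\Pi_+$ by $\Theta(\mu_2)$ relative to the seed. At the public time $t=\log(1/\mu_2)/(2\mu_1)$ the factor $e^{2\mu_1t}=\mu_2^{-1}$ turns this offset into an $O(1)$ output gap, and $\alpha_{P(t)}=3$, $\varepsilon\le1/16$ and Lemma~\ref{lem:query-hybrid-app} finish. Only the instance differs.

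The paper takes $G=0$, $A=\operatorname{diag}(-\mu_1,1)$, $P_0=\operatorname{diag}(1/2,0)$, and hides $c$ in $(Q_c)_{11}=\mu_1(1-2(1+c)\mu_2)$. The two offsets $\mu_2$ and $2\mu_2$ therefore have the same sign. The DRE is then linear and exists for all time, with $P^{(c)}_{11}(s)=\frac12+(1+c)\mu_2(e^{2\mu_1s}-1)$, so the output gap is exactly $(1-\mu_2)e_1e_1^*$. The projector $\Pi_+$ is read off from the solution $Y_c$ of $AY+YA=-Q_c$, which gives $\sigma_{\min}(\Pi_+R_0)=(1+c)\mu_2$ exactly.

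Your tentative nonlinear slow block fails as written, at exactly the step you flag as the obstacle. Conjugating $\mu_1\begin{bmatrix}0&1\\1&0\end{bmatrix}$ by the shear with entry $s$ gives $\mu_1\begin{bmatrix}-s&1\\1-s^2&s\end{bmatrix}$. Its slope flow is $\dot p=\mu_1\bigl(1-(p-s)^2\bigr)$, so $1+s$ attracts and $-1+s$ repels. There are three problems.

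\emph{Seed scale.} Getting $\kappa(\Pi_+R_0)=\Theta(\mu_2^{-1})$ requires the slow seed slope to lie within $O(\mu_2)$ of $-1+s$. But your seed normalization $\sqrt5/2$ forces $\|P_0\|\le1/2$. With such a seed, the slow column of $\Pi_+R_0$ has norm $\Theta(1)$ rather than $O(\mu_2)$, and the second factor disappears.

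\emph{Sign-dependent shear.} Put the seed at the unperturbed slope $-1$ and let the sign of $s=\pm\mu_2$ depend on $c$. Then the instance $s=+\mu_2$ starts on the far side of the repelling equilibrium. Its slope is $p(\tau)-s=-\coth\bigl(\mu_1(t_*-\tau)\bigr)$ with $t_*=\frac1{2\mu_1}\log\frac{2+\mu_2}{\mu_2}$. The solution does exist on $[0,t]$, but $|p(t)|=\frac{3+\mu_2}{1+\mu_2}-\mu_2\to3$. This contradicts your claim $\|P_c(t)\|\le3/2$ and leaves essentially no margin at $\alpha_{P(t)}=3$.

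\emph{Contour placement.} $\mathcal R_{\rm DRE}$ is defined for $\widehat{\mathcal H}_{\rm DRE}=\mathcal H_{\rm DRE}/\alpha_H$. With $\alpha_H=2$ the normalized slow eigenvalues are $\pm\mu_1/2$, so rectangles at distance $\mu_1/2$ pass through them; they must sit at $\mu_1/4$. The paper uses $\alpha_H=3$ and $\mu_1/6$.

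Rescaling your block to slopes $\pm\frac12$ fixes the first two problems. Using same-sign offsets, or shifting the seed, also keeps both trajectories away from blow-up. After these repairs your Riccati variant works, but its nonlinearity buys nothing over the paper's linear $G=0$ family and adds the Möbius and blow-up bookkeeping.
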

\begin{proof}
Take
\begin{equation}
\begin{gathered}
A=\operatorname{diag}(-\mu_1,1),\qquad
G=0,\qquad P_0=\operatorname{diag}(1/2,0),\\
Q_c=\begin{bmatrix}
\mu_1(1-2(1+c)\mu_2)&\mu_1/4\\
\mu_1/4&1
\end{bmatrix}.
\end{gathered}
\label{eq:dre-product-data-app}
\end{equation}
The first diagonal entry of \(Q_c\) is at least \(\mu_1/2\), and
\(\det Q_c\ge\mu_1/2-\mu_1^2/16>0\).
For the Hamiltonian and common seed,
\[
\mathcal H_{\rm DRE}^{(c)}
=\begin{bmatrix}A&0\\-Q_c&-A\end{bmatrix},\qquad
R_0=\begin{bmatrix}\mathsf I\\P_0\end{bmatrix},
\]
we have \(1\le\|\mathcal H_{\rm DRE}^{(c)}\|\le65/32\) and
\(\|R_0\|=\sqrt5/2\). Thus \(\alpha_H=3\) is within a constant
factor of the actual norm. The nonzero off-diagonal entries of \(Q_c\)
connect the two states.

To compute both the projector and resolvent, use
\[
Y_c=\begin{bmatrix}
1/2-(1+c)\mu_2&-\mu_1/(4(1-\mu_1))\\
-\mu_1/(4(1-\mu_1))&-1/2
\end{bmatrix}.
\]
Since \(AY_c+Y_cA=-Q_c\),
\[
\mathcal H_{\rm DRE}^{(c)}
=\begin{bmatrix}\mathsf I&0\\Y_c&\mathsf I\end{bmatrix}
 \begin{bmatrix}A&0\\0&-A\end{bmatrix}
 \begin{bmatrix}\mathsf I&0\\-Y_c&\mathsf I\end{bmatrix}.
\]
The similarity has condition number at most
\((1+\|Y_c\|)^2<4\), because
\(\|Y_c\|\le1/2+\mu_1/(4(1-\mu_1))<1\).
Conjugating the positive-eigenvalue projector gives
\[
\Pi_+^{(c)}
=\begin{bmatrix}
0&0&0&0\\0&1&0&0\\
-1/2+(1+c)\mu_2&0&1&0\\0&-1/2&0&0
\end{bmatrix}.
\]
Its seeded columns are orthogonal, so
\begin{equation}
\begin{aligned}
\Pi_+^{(c)}R_0&=
 \begin{bmatrix}0&0\\0&1\\(1+c)\mu_2&0\\0&-1/2\end{bmatrix},&
\|\Pi_+^{(c)}R_0\|&=\sqrt5/2,\\
\sigma_{\min}(\Pi_+^{(c)}R_0)&=(1+c)\mu_2,&
\kappa(\Pi_+^{(c)}R_0)&=\frac{\sqrt5}{2(1+c)\mu_2}.
\end{aligned}
\label{eq:dre-product-seeded-columns-app}
\end{equation}

For \(\widehat{\mathcal H}_{\rm DRE}^{(c)}
=\mathcal H_{\rm DRE}^{(c)}/3\), choose the positively oriented rectangles
\[
\Gamma_+
=\partial\{x+\mathrm iy:\mu_1/6\le x\le1/2,\ |y|\le1/12\},
\qquad \Gamma_-=-\Gamma_+.
\]
Their pole distances are at least \(\mu_1/6\).
The bounded similarity gives an \(O(\mu_1^{-1})\) resolvent bound,
while the \(\mu_1/3\)-eigenvector \(e_3\) gives \(6/\mu_1\) at
\(z=\mu_1/6\). Therefore
\begin{equation}
\begin{aligned}
\mathcal R_{\rm DRE}
&=\max_{\sigma\in\{-,+\}}\sup_{z\in\Gamma_\sigma}
(1+|z|)\|(z\mathsf I-\widehat{\mathcal H}_{\rm DRE}^{(c)})^{-1}\|\\
&=\Theta(\mu_1^{-1}).
\end{aligned}
\label{eq:dre-product-resolvent-app}
\end{equation}

Only one entry of \(Q_c\) varies, giving
\(\|\widehat{\mathcal H}_{\rm DRE}^{(1)}
-\widehat{\mathcal H}_{\rm DRE}^{(0)}\|=2\mu_1\mu_2/3\).
Since both signal norms are at most \(65/96<1\), the Julia estimate
of Appendix~\ref{app:lower-bound-tools} implies
\begin{equation}
\|U_{\mathcal H_{\rm DRE}^{(1)}/3}
  -U_{\mathcal H_{\rm DRE}^{(0)}/3}\|
<\frac43\mu_1\mu_2.
\label{eq:dre-product-oracle-separation-app}
\end{equation}
The common seed encoding contributes no difference.

With \(G=0\), the DRE is linear and exists at every finite time.
Its solution is
\begin{equation}
P^{(c)}(s)=
\begin{bmatrix}
\displaystyle\frac12+(1+c)\mu_2(e^{2\mu_1s}-1)&
\displaystyle-\frac{\mu_1(1-e^{-(1-\mu_1)s})}{4(1-\mu_1)}\\[2mm]
\displaystyle-\frac{\mu_1(1-e^{-(1-\mu_1)s})}{4(1-\mu_1)}&
\displaystyle-\frac12(1-e^{-2s})
\end{bmatrix}.
\label{eq:dre-product-solution-app}
\end{equation}
Direct differentiation verifies the equation and initial value.
For \(0\le s\le t=\log(1/\mu_2)/(2\mu_1)\), its row sums give
\(\|P^{(c)}(s)\|\le5/2+\mu_1/(4(1-\mu_1))<3\), and
\begin{equation}
\begin{aligned}
P^{(1)}(s)-P^{(0)}(s)&=\mu_2(e^{2\mu_1s}-1)e_1e_1^*,\\
\|P^{(1)}(t)-P^{(0)}(t)\|&=1-\mu_2\ge7/8.
\end{aligned}
\label{eq:dre-product-output-difference-app}
\end{equation}
With \(\alpha_{P(t)}=3\) and decoded error \(1/16\),
Lemma~\ref{lem:query-hybrid-app} and
\eqref{eq:dre-product-oracle-separation-app} give
\begin{equation}
\|V_1-V_0\|\ge\frac{7/8-2/16}{3}=\frac14,\qquad
q_{\rm DRE}\ge\frac{3}{16\mu_1\mu_2}.
\label{eq:dynamic-product-hybrid-app}
\end{equation}
Equations~\eqref{eq:dre-product-seeded-columns-app} and
\eqref{eq:dre-product-resolvent-app} identify the two actual factors.
\end{proof}

\newtheorem*{rrproductrestatement}{Theorem~\ref{thm:rr-product-lower-main}}
\begin{rrproductrestatement}
Under the above access and accuracy conventions, there exists a family
of two-state instances of Problem~\ref{prob:quantum-rr} requiring
\[
q_{\rm RR}=\Omega\!\left(\mathcal R_{\rm RR}\kappa(\Pi_>R_0)\right)
\]
quantum queries in the worst case. The two factors can be made
independently arbitrarily large.
\end{rrproductrestatement}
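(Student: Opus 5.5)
I would mirror the DRE construction in Theorem~\ref{thm:dre-product-lower-main}, replacing the continuous flow by the forward lift \(\mathcal S_F\). The plan is to take scalar-decoupled two-state data \(G=0\), \(A=\operatorname{diag}(a_1,a_2)\) with \(a_1=(1+\mu_1)^{-1/2}\) and \(a_2\) a fixed constant such as \(2\), a fixed \(P_0=\operatorname{diag}(p,0)\), and \(Q_c\) differing only in its \((1,1)\) entry by \(O(\mu_1\mu_2)\).

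\textbf{Spectral data.} With \(G=0\), the lift \(\mathcal S_F=\begin{bmatrix}A^{-1}&0\\QA^{-1}&A^*\end{bmatrix}\) is block lower triangular with eigenvalues \(a_i^{-1}\) and \(a_i\). The first mode has \(a_1^{-1}=\sqrt{1+\mu_1}\), just outside the unit circle, and \(a_1\), just inside. The second mode is well separated. The recursion is the Stein iteration \(P_{j+1}=Q+A^*P_jA\), so \(P_k\) is explicit. A Sylvester solution \(Y_c\) of \(Y-A^*YA^{-1}=\dots\) block-diagonalizes \(\mathcal S_F\) by a unit-triangular similarity of bounded condition number, exactly as in the DRE proof. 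This gives \(\Pi_>^{(c)}\) explicitly.

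\textbf{Choice of data and factors.} I would tune \(Q_c\) and \(P_0\) so that \(\Pi_>^{(c)}R_0\) has orthogonal columns, one of norm \((1+c)\mu_2\) along the slowly growing exterior mode and one of constant norm. This makes \(\kappa(\Pi_>R_0)=\Theta(\mu_2^{-1})\). Equivalently, in the first coordinate, \(p\) differs from the fixed-point slope \(x_c=q_c/(1-a_1^2)\) of the interior graph by \((1+c)\mu_2\), and a \(c\)-dependent \(q_c\) moves \(x_c\). Because \(1-a_1^2=\Theta(\mu_1)\), the choice \(q_c=\Theta(\mu_1)\) with an \(O(\mu_1\mu_2)\) variation keeps \(x_c=O(1)\).

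For \(\mathcal R_{\rm RR}\), I would use the contours of Theorem~\ref{thm:factor-circle-app} with \(\eta_{\mathbb T}=\Theta(\mu_1)\), scaled with the constant \(\alpha_{\mathcal S_F}\). The bounded similarity gives the upper bound \(O(\mu_1^{-1})\), and testing a unit eigenvector at the nearest contour point gives the matching lower bound.

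\textbf{Oracle separation.} The oracle difference is \(\|\mathcal S_F^{(1)}-\mathcal S_F^{(0)}\|=O(\mu_1\mu_2)\), since only \(QA^{-1}\) changes. Lemma~\ref{lem:julia-input-continuity-app} then gives a unitary difference of \(O(\mu_1\mu_2)\). The seed \(R_0\) is common to both inputs, with \(\alpha_{R_0}=1\) after choosing \(p\le\) a constant and padding.

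\textbf{Output separation.} The deviation \(P_k-x_c\) grows like \(a_1^{-2k}=(1+\mu_1)^k\) times \((1+c)\mu_2\). At \(k=\lceil\log(1/\mu_2)/(2\log(1+\mu_1))\rceil\), the two outputs differ by a constant, about \(1-\mu_2\) times a fixed factor, while \(\|P_k\|<3\). I would verify this directly from the Stein recursion in closed form. With \(\alpha_{P_k}=3\) and \(\varepsilon\le1/16\), Lemma~\ref{lem:query-hybrid-app} then yields \(q_{\rm RR}=\Omega((\mu_1\mu_2)^{-1})=\Omega(\mathcal R_{\rm RR}\kappa(\Pi_>R_0))\).

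\textbf{Main obstacle.} The hardest step is making the data consistent with Definition~\ref{def:problem-recursion}, which requires \(Q,P_0\succeq0\), while still getting a growing first mode. In the scalar Stein iteration with \(a_1<1\), the iterates converge, so the deviation from the fixed point decays rather than grows. I would therefore try \(a_1=\sqrt{1+\mu_1}\), which makes \(P_k\) grow as \((1+\mu_1)^k\) from \(p-x_c\), where \(x_c=q_c/(1-a_1^2)<0\). In that case the interior branch carries the negative graph, and the full-rank condition on \(\Pi_>R_0\) must be rechecked. The interior and exterior roles may need swapping, and a positive off-diagonal coupling, as in the DRE family, may be needed to keep \(Q_c\succeq0\) and the two states connected. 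Verifying positivity, the orthogonality of the seeded columns, and the explicit projector for this signed choice is where I expect most of the work.
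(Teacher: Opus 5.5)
Your template is the paper's: decoupled two-mode data with $G=0$, the Stein recursion in closed form, an explicit $\Pi_>$, and Julia continuity plus Lemma~\ref{lem:query-hybrid-app}. The parameter regime you specify, however, fails at several concrete points, and your last paragraph repairs only the growth direction.

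\textbf{First mode.} With $a_1=(1+\mu_1)^{-1/2}<1$, the trouble goes beyond the decay $P_j-x_c=a_1^{2j}(p-x_c)$; the initialization factor is also wrong.
\begin{itemize}
\item The $(u_1,v_1)$ block of $\mathcal S_F$ is $\begin{bmatrix}a_1^{-1}&0\\ q_ca_1^{-1}&a_1\end{bmatrix}$. For $a_1<1$ its exterior eigenvector is $[1;x_c]$ and its interior one is $[0;1]$.
\item Hence $\Pi_>[1;p]=[1;x_c]$, which has norm at least one for every $p$. The gap $p-x_c$ is the component along $\ker\Pi_>$ and cannot make $\sigma_{\min}(\Pi_>R_0)$ small.
\item Only for $a_1>1$ is $[1;x_c]$, with $x_c=q_c/(1-a_1^2)<0$, the interior eigenvector. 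Then $\Pi_>[1;p]=[0;p-x_c]$, and $p-x_c=p+q_c/(a_1^2-1)>0$ once $q_c>0$.
\item So no swap of branches and no off-diagonal coupling is needed, and $Q_c=\operatorname{diag}(q_c,0)\succeq0$ already holds.
\item The same formula shows that $\sigma_{\min}=\Theta(\mu_2)$ forces $p=O(\mu_2)$ and forces $q_c$ itself, not just its variation, to be $O(\mu_1\mu_2)$. Your choice $q_c=\Theta(\mu_1)$ gives $|x_c|=\Theta(1)$ and a seeded column of constant norm.
\item The public step $k=\lceil\log(1/\mu_2)/(2\log(1+\mu_1))\rceil$ requires $a_1^{2k}\approx\mu_2^{-1}$, i.e.\ $a_1=1+\mu_1$. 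With $a_1=\sqrt{1+\mu_1}$ the growth at that $k$ is only about $\mu_2^{-1/2}$, so the two outputs differ by about $\sqrt{\mu_2}$ rather than a constant.
\end{itemize}

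\textbf{Second mode.} Take $a_2=2$ with zero second entries in $Q_c$ and $P_0$. That block of $\mathcal S_F$ is $\operatorname{diag}(1/2,2)$, whose exterior eigenvector is $[0;1]$. The second column of $\Pi_>R_0$ is therefore zero, and $\kappa(\Pi_>R_0)$ is not the finite $\Theta(\mu_2^{-1})$ factor you need. Making that column nonzero requires $(P_0)_{22}\ne x_2$, i.e.\ a nonzero second entry of $P_0$ or $Q_c$. Then $(P_k)_{22}=4^k((P_0)_{22}-x_2)+x_2$ blows up at $k=\Theta(\mu_1^{-1}\log\mu_2^{-1})$, destroying $\|P_k\|<3$ and the prescribed $\alpha_{P_k}=3$. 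The fixed mode must contract, $|a_2|<1$, so that its exterior eigenvector is the state direction $[1;0]$.

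\textbf{The working configuration.} The paper takes exactly this corrected data:
\[
A=\operatorname{diag}(1+\mu_1,1/2),\qquad G=0,\qquad P_0=0,\qquad Q_c=\operatorname{diag}(\mu_1(2+\mu_1)(1+c)\mu_2,0).
\]
Then $x_c=-(1+c)\mu_2$, and the seeded columns are $(1+c)\mu_2e_3$ and $e_2$, so $\kappa(\Pi_>R_0)=1/((1+c)\mu_2)$. The iterates are $P_j^{(c)}=(1+c)\mu_2((1+\mu_1)^{2j}-1)e_1e_1^*$, which give $\|P_k^{(c)}\|<3$ and $\|P_k^{(1)}-P_k^{(0)}\|\ge1-\mu_2$. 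With these data your remaining steps go through as in the paper: the bounded triangular similarity for $\mathcal R_{\rm RR}=\Theta(\mu_1^{-1})$, the $O(\mu_1\mu_2)$ oracle difference, and the hybrid bound.
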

\begin{proof}
Set
\begin{equation}
\begin{gathered}
A=\operatorname{diag}(1+\mu_1,1/2),\qquad G=0,\qquad P_0=0,\\
Q_c=\operatorname{diag}(\mu_1(2+\mu_1)(1+c)\mu_2,0).
\end{gathered}
\label{eq:rr-product-data-app}
\end{equation}
The data satisfy \(A\) invertible and \(G,Q_c,P_0\succeq0\).
All iterates of \(P_{j+1}=Q_c+A^*P_jA\) exist and are positive
semidefinite. The unscaled lift is
\[
\mathcal S_F^{(c)}
=\begin{bmatrix}
(1+\mu_1)^{-1}&0&0&0\\
0&2&0&0\\
\frac{\mu_1(2+\mu_1)(1+c)\mu_2}{1+\mu_1}&0&1+\mu_1&0\\
0&0&0&1/2
\end{bmatrix},\qquad R_0=\begin{bmatrix}\mathsf I\\0\end{bmatrix}.
\]
Its eigenvalues are \((1+\mu_1)^{-1},1/2,1+\mu_1,2\).
Direct multiplication verifies the exterior projector
\[
\Pi_>^{(c)}
=\begin{bmatrix}
0&0&0&0\\0&1&0&0\\(1+c)\mu_2&0&1&0\\0&0&0&0
\end{bmatrix}.
\]
Thus \(\Pi_>^{(c)}R_0\) is the matrix in
\eqref{eq:dre-product-seeded-columns-app} with its lower-right entry
\(-1/2\) replaced by \(0\). The same column calculation gives
\[
\sigma_{\min}(\Pi_>^{(c)}R_0)=(1+c)\mu_2,\qquad
\|\Pi_>^{(c)}R_0\|=1,\qquad
\kappa(\Pi_>^{(c)}R_0)=\frac1{(1+c)\mu_2}.
\]

For the complete contours, take
\begin{equation}
\Gamma_<:\ |z|=1-\mu_1/4,\qquad
\Gamma_>=\partial\{1+\mu_1/4<|z|<9\}.
\label{eq:rr-product-contours-app}
\end{equation}
The interior circle and annular outer boundary are counterclockwise;
the annular inner boundary is clockwise, so \(\Gamma_>\) excludes
the closed unit disk.
The slow block is similar to
\(\operatorname{diag}((1+\mu_1)^{-1},1+\mu_1)\) through
\(\begin{bmatrix}1&0\\-(1+c)\mu_2&1\end{bmatrix}\),
whose condition number is at most \((1+(1+c)\mu_2)^2\).
All slow-pole distances on the near-unit circles are at least
\(\mu_1/2\), since
\[
1-\mu_1/4-(1+\mu_1)^{-1}
=\frac{\mu_1(3-\mu_1)}{4(1+\mu_1)}\ge\frac{\mu_1}{2}.
\]
The fast poles stay a constant distance away.
The slow block has norm below \(2\), whereas the fast block has norm
\(2\). Thus \(\|\mathcal S_F^{(c)}\|=2\), the common input
normalization is \(\alpha_{\mathcal S_F}=3\), and the outer boundary
\(|z|=9=3\alpha_{\mathcal S_F}\) contributes at most
\(12/(9-2)\) to the weighted factor.
At \(z=1+\mu_1/4\), a \((1+\mu_1)\)-eigenvector gives a resolvent
norm \(4/(3\mu_1)\). Together these estimates prove
\[
\mathcal R_{\rm RR}
=\max_{\chi\in\{<,>\}}\sup_{z\in\Gamma_\chi}
(|z|+3)\|(z\mathsf I-\mathcal S_F^{(c)})^{-1}\|
=\Theta(\mu_1^{-1}).
\]

The signal norms are \(2/3\), and
\[
\|(\mathcal S_F^{(1)}-\mathcal S_F^{(0)})/3\|
=\frac{\mu_1(2+\mu_1)\mu_2}{3(1+\mu_1)}
<\frac23\mu_1\mu_2.
\]
The Julia bound \eqref{eq:dre-product-oracle-separation-app}
therefore applies with \(\mathcal H_{\rm DRE}/3\) replaced by
\(\mathcal S_F/3\), giving
\(\|U_{\mathcal S_F^{(1)}/3}-U_{\mathcal S_F^{(0)}/3}\|
<4\mu_1\mu_2/3\). The seed is again public.

Summing the recursion gives
\(P_j^{(c)}=(1+c)\mu_2((1+\mu_1)^{2j}-1)e_1e_1^*\).
Its input difference follows from the first identity in
\eqref{eq:dre-product-output-difference-app} by replacing
\(e^{2\mu_1s}\) with \((1+\mu_1)^{2j}\).
For the public integer
\[
k=\left\lceil\frac{\log(1/\mu_2)}
 {2\log(1+\mu_1)}\right\rceil,\qquad
1\le\mu_2(1+\mu_1)^{2k}<(1+\mu_1)^2,
\]
we have \(\|P_j^{(c)}\|\le\|P_k^{(c)}\|<2(1+\mu_1)^2<3\)
throughout \(0\le j\le k\), and
\(\|P_k^{(1)}-P_k^{(0)}\|\ge1-\mu_2\ge7/8\).
With \(\alpha_{P_k}=3\) and decoded error \(1/16\), the same
hybrid calculation \eqref{eq:dynamic-product-hybrid-app} gives
\(q_{\rm RR}\ge3/(16\mu_1\mu_2)\), proving the stated product.
\end{proof}

The algebraic lower bounds use the actual output normalizations
constructed in Proposition~\ref{prop:algebraic-product-normalization-app},
which are comparable to the unbounded solution norms.
The dynamic outputs are bounded at the stated times or steps.
For general inputs, these examples leave open the optimal dependence
on additional construction normalizations and re-encoding costs.

\subsection{A single-control DARE circuit reduction}
\label{app:dare-bqp}

\newtheorem*{darebqprestatement}{Theorem~\ref{thm:dare-bqp-main}}
\begin{darebqprestatement}
There exists a uniformly circuit-generated family of single-control
instances of Problem~\ref{prob:quantum-dare} for which constructing a
solution block-encoding with \(\alpha_X=2\) and
\(\varepsilon\le1/100\) is \(\mathsf{BQP}\)-hard.
On the construction family in Appendix~\ref{app:dare-bqp},
distinguishing \(x_0^*Xx_0\ge2/9\) from \(x_0^*Xx_0\le1/18\),
for a specified basis vector \(x_0\), is \(\mathsf{BQP}\)-complete.
\end{darebqprestatement}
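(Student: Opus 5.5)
The proof follows the clock construction sketched in the main text, carried out in four steps: amplitude preparation, the Riccati verification, circuit-level access, and the final reduction through Lemma~\ref{lem:selected-value-hardness-app}.

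\emph{Source circuit and amplitude.} Start from any BQP language. By amplification and realification, each instance yields a real source circuit \(V\) with acceptance probability \(p\ge2/3\) or \(p\le1/3\). Form the augmented circuit \(U_N\cdots U_1\): apply \(V\), copy the output bit into a fresh flag qubit with a CNOT, then apply \(V^*\). Because \(V\) is real, the amplitude of \(|0\rangle_{\rm work}|1\rangle_{\rm flag}\) equals \(p\), as in \eqref{eq:dare-bqp-amplitude-main}.

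Next, attach a clock of \(N+1\) levels and define \(A\), \(x_0\) and \(b\) by \eqref{eq:dare-bqp-clock-main}. Set \(B=b\), \(R_c=1\) and \(Q=G=bb^*\). Then \(A\) is a shifted block operator, so \(A^{N+1}=0\), and \(b^*A^Nx_0=p\). The access circuits are polynomial size: the clock increment is controlled by the gate index, and \(b\) is prepared by a basis-state circuit. This gives encodings of \(M\) and \(L\) in \eqref{eq:dare-pencil} with constant normalizations. Stabilizability and detectability hold trivially because \(A\) is nilpotent.

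\emph{Riccati verification.} This step is the main obstacle, because I must show the candidate \eqref{eq:dare-bqp-solution-main} is exactly the stabilizing solution. I would not rely on the Bellman heuristic and would instead verify the compact form \eqref{eq:problem-dare-compact} directly.

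Write \(X=Q+\tfrac12\sum_{j\ge1}(A^*)^jQA^j\). Each summand \((A^*)^jQA^j\) is supported on clock level \(N-j\), because \(QA^j\) maps level \(N-j\) to the terminal vector \(b\). Since \(b\) has no outgoing transition, \(AQ=0\). Since \(A\) maps level \(N-1\) onto level \(N\), \(XB=b\). It follows that \((\mathsf I+GX)^{-1}=\mathsf I-\tfrac12 bb^*\) restricted suitably; concretely, \(GX=bb^*\) on the terminal level and zero elsewhere, so \((\mathsf I+GX)^{-1}=\mathsf I-Q/2\).

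Then \(A^*X(\mathsf I-Q/2)A\) collapses. The term \(A^*QA/2\) comes from \(A^*Q(\mathsf I-Q/2)A\). The shifted higher terms come from \(A^*(A^*)^{j}QA^{j}A\), because each of these is orthogonal to the terminal level. Together they reproduce \(X-Q\). The closed loop \(F=(\mathsf I-Q/2)A\) is nilpotent, so \(\rho(F)=0<1\), and by Definition~\ref{def:problem-dare} uniqueness identifies \(X\) as the stabilizing solution.

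The summands live on orthogonal clock levels and each has norm at most one, so \(\|X\|=1\). This means the normalization \(\alpha_X=2\) is valid and the hypotheses of Problem~\ref{prob:quantum-dare} hold.

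\emph{Reduction and completeness.} Only the \(j=N\) term meets level zero, so \(x_0^*Xx_0=p^2/2\). The two promise cases therefore give values of at least \(2/9\) or at most \(1/18\), a gap of \(1/6\). Since \(\varepsilon\le1/100<1/24\), Lemma~\ref{lem:selected-value-hardness-app} applies: a controlled encoding with \(\alpha_X=2\) decides the promise with a constant number of Hadamard tests, which gives BQP-hardness.

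For membership on the family, the instance description contains \(V\). Running \(V\) directly estimates \(p\), and hence \(p^2/2\), with bounded error, so the selected-value problem lies in BQP. Together these give completeness on this construction family.
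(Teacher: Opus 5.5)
Your proposal is correct and follows essentially the same route as the paper's proof. The shared steps are the compute--copy--uncompute amplitude $p$, the nilpotent clock drift with $Q=G=bb^*$, direct verification of the compact DARE via $(\mathsf I+GX)^{-1}=\mathsf I-Q/2$ with a nilpotent closed loop, the selected value $x_0^*Xx_0=p^2/2$, and Lemma~\ref{lem:selected-value-hardness-app} with gap $1/6$.

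Two minor points need fixing. First, $Xb=b$ holds because $Ab=0$ annihilates every summand with $j\ge1$ while $Qb=b$; it does not follow from $A$ mapping level $N-1$ onto level $N$. Second, a unitary controlled clock increment alone does not encode $A$, because it would include the wraparound and padding transitions. The paper removes these by composing the increment with a reversible projector onto clock labels $0,\ldots,N-1$.
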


\begin{proof}
Use the common circuit convention in
Appendix~\ref{app:lower-bound-tools}.
Let \(V\) be the amplified real source circuit, starting on the work
basis state \(|0\rangle_{\rm work}\). Apply \(V\), copy its accepting
output bit to a fresh flag initialized to zero, and then undo \(V\).
The amplitude of \(|0\rangle_{\rm work}|1\rangle_{\rm flag}\) in the
result is
\[
{}_{\rm work}\langle0|V^*
\bigl(|1\rangle\langle1|_{\rm out}\otimes\mathsf I\bigr)
V|0\rangle_{\rm work},
\]
which is the source acceptance probability.
Write the gates of this augmented circuit as \(U_1,\ldots,U_N\),
where \(N\ge1\) is polynomial in the source size. Each gate acts on
a fixed number of registers and is real orthogonal. The symbols
\(V,U_j,N\) are local to this proof.

On clock levels \(0,\ldots,N\), set
\begin{equation}
\begin{aligned}
A&=\sum_{j=0}^{N-1}|j+1\rangle\langle j|\otimes U_{j+1},\\
x_0&=|0\rangle_{\rm clk}|0\rangle_{\rm work}|0\rangle_{\rm flag},
&
b&=|N\rangle_{\rm clk}|0\rangle_{\rm work}|1\rangle_{\rm flag},
\end{aligned}
\label{eq:dare-bqp-clock-app}
\end{equation}
and take \(B=b\), \(R_c=1\), and \(Q=G=bb^*\).
Unused binary-clock labels have zero drift and zero cost.
Different clock columns of \(A\) have orthogonal images, so
\(\|A\|=1\) and \(A^{N+1}=0\). Thus \(A\) is Schur stable:
the stabilizability and detectability assumptions of
Definition~\ref{def:problem-dare} hold. The single nonzero entry of
\(b\) gives exactly one control channel.

The candidate solution is
\begin{equation}
X=Q+\frac12\sum_{j=1}^{N}(A^*)^jQA^j.
\label{eq:dare-bqp-solution-app}
\end{equation}
It is positive semidefinite. Its terminal clock block is \(Q\), and
the other summands are supported on the distinct clock levels
\(N-j\); consequently \(QX=XQ=Q\).
Each \((A^*)^jQA^j\) is an orthogonal rank-one projector on its clock
level, because the corresponding gate product is orthogonal.
The terminal block therefore has norm one, every earlier block has
norm \(1/2\), and \(\|X\|=1\).

Since \(Q\) is a projector,
\[
(\mathsf I+GX)^{-1}=\mathsf I-\frac12Q,
\qquad
X(\mathsf I+GX)^{-1}=X-\frac12Q.
\]
Using \(A^{N+1}=0\), substitution into the DARE gives
\[
Q+A^*X(\mathsf I+GX)^{-1}A
=Q+\frac12A^*QA
+\frac12\sum_{j=2}^{N}(A^*)^jQA^j
=X.
\]
The closed loop is
\begin{equation}
F=(\mathsf I+GX)^{-1}A
=\left(\mathsf I-\frac12Q\right)A,
\qquad F^{N+1}=0.
\label{eq:dare-bqp-stability-app}
\end{equation}
Indeed, the left factor changes only amplitudes within the terminal
clock block and never reverses a clock step.
Thus \(X\) is stabilizing, and the uniqueness statement in
Definition~\ref{def:problem-dare} identifies it with the required
solution. The allowed singularity of \(A\) and of the pencil matrix
\(L\) is consistent with the finite stable-branch convention in
Section~\ref{subsec:dare-construction}.
Since \(FQ=QF^*=0\), the pencil triangularization in
Appendix~\ref{app:dare-construction} gives
\[
\|(zL-M)^{-1}\|=\Theta(N+1),\qquad |z|=1.
\]
The diagonal inverse blocks are finite contraction sums, the
off-diagonal block is \(Q/2\), and the transformations have bounded
norms; an uncontrolled history direction gives the lower bound.
A sufficiently large fixed multiple of \(N+1\) therefore supplies
the required nodewise \(\beta_j\).
With the constant input normalizations constructed below,
Proposition~\ref{prop:dare-circle-quadrature-app} supplies polynomially
many computable nodes and \(\mathcal R_{\rm DARE}=O(N+1)\).
Including the right factor \(L\), direct LCU gives
\(\alpha_{\Pi_<}=O(N+1)\).
Lemma~\ref{lem:dare-graph-recovery} supplies full-row-rank recovery;
\(\|X\|=1\) gives actual output scale \(O(N+1)\) and polynomial
resources for the DARE algorithm.

For the initial vector in \eqref{eq:dare-bqp-clock-app}, all terms in
\eqref{eq:dare-bqp-solution-app} except \(j=N\) vanish. Hence
\begin{equation}
x_0^*Xx_0
=\frac12|b^*A^Nx_0|^2
=\frac12\bigl(\Pr[\text{source circuit accepts}]\bigr)^2.
\label{eq:dare-bqp-selected-value-app}
\end{equation}
The source promises \(2/3\) and \(1/3\) give the two thresholds
\(2/9\) and \(1/18\).
The gate transformation, basis labels, and coefficient descriptions
are generated uniformly in classical polynomial time.
Each row or column of \(A\) uses one fixed-size gate, so its locators
and precision-adjustable entry evaluation also have polynomial cost.
To encode \(A\) with normalization one, encode the projector onto
clock labels \(0,\ldots,N-1\) by a reversible comparison, then apply
a unitary cyclic clock increment carrying \(U_{j+1}\) on \(j<N\)
and the identity elsewhere. The projector removes the unwanted
transitions. Together with normalization-one encodings of \(Q\),
the identity, and the fixed block selectors, the sum rule applied to
\eqref{eq:dare-pencil} gives the actual
\(\alpha_M=\alpha_L=3\). Adjoints and controls use the same finite
gate descriptions. This completes the polynomial-time reduction.

For contraction approximations with
\(\max_j\|\widetilde U_j-U_j\|\le\delta\), define
\(\widetilde A,\widetilde X\) by
\eqref{eq:dare-bqp-clock-app}--\eqref{eq:dare-bqp-solution-app}.
A product of \(j\) gates changes by at most \(j\delta\), so its
quadratic product changes by at most \(2j\delta\).
The disjoint clock supports and the factor \(1/2\) give
\[
\|\widetilde X-X\|\le N\delta.
\]
The same formula solves the perturbed DARE with a nilpotent closed
loop. Taking \(N\delta\) smaller than one quarter of the value gap
preserves the decision. Unitary gate approximations satisfy the
contraction condition, so the common finite-precision convention
suffices. The promised family is the exact circuit construction.

On this family, the supplied gate list allows execution of the
compute--copy--uncompute circuit and measurement of the known target
\(|0\rangle_{\rm work}|1\rangle_{\rm flag}\).
Its probability is \(2x_0^*Xx_0\); the constant gap gives membership
by a constant number of repetitions, hence
\(\mathsf{BQP}\)-completeness on this finite construction image.
Lemma~\ref{lem:selected-value-hardness-app}, with
\(\alpha_X=2\), \(\varepsilon=1/100\), and gap \(1/6\), then gives
the solution-encoding hardness.
\end{proof}

\subsection{Single-control DRE circuit reduction}
\label{app:dre-bqp}

\newtheorem*{drebqprestatement}{Theorem~\ref{thm:dre-bqp-main}}
\begin{drebqprestatement}
There exists a uniformly circuit-generated family of single-control,
zero-terminal LQR instances with polynomial horizon \(T\) whose
time-reversed DRE solution \(P(T)\) in
Problem~\ref{prob:quantum-dre} is \(\mathsf{BQP}\)-hard to
block-encode with \(\alpha_{P(T)}=4\) and \(\varepsilon\le1/100\).
On the construction family in Appendix~\ref{app:dre-bqp},
distinguishing \(x_0^*P(T)x_0>1/20\) from \(x_0^*P(T)x_0<1/200\),
for a specified basis vector \(x_0\), is \(\mathsf{BQP}\)-complete.
\end{drebqprestatement}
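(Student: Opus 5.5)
We mirror the clock construction of Theorem~\ref{thm:dare-bqp-main}, now in continuous time. We fix a horizon in which each clock level is traversed in unit time. As before, let \(V\) be the amplified real source circuit and \(U_1,\ldots,U_N\) the gates of its compute--copy--uncompute augmentation, so that \(b^*U_N\cdots U_1x_0=p\), the acceptance probability. On clock levels \(0,\ldots,N\) we consider the continuous-time drift
\[
A=\kappa_0\sum_{j=0}^{N-1}|j+1\rangle\langle j|\otimes U_{j+1}-\kappa_0\,\mathsf I.
\]
The drift \(e^{sA}\) then realizes a Poisson-like walk up the clock. Its amplitude at the terminal level \(N\), reached from \(x_0\) at time \(T\), equals \(e^{-\kappa_0T}(\kappa_0T)^N/N!\) times the terminal circuit amplitude \(b^*U_N\cdots U_1x_0=p\). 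To make this amplitude a constant multiple of \(p\) rather than exponentially small, we would replace the uniform walk by a directed chain with a terminal absorbing level. Concretely, we would use a nilpotent \(A_0\) (no diagonal damping) and choose \(T\) so that \(e^{TA_0}x_0\) has a constant-size component on a known terminal block. Normalizing the pieces to keep the value \(\Theta(p^2)\) is delicate, and we would rather replace the continuous walk by a staircase of commuting rotations, for which \(e^{TA}\) acts as a fixed unitary transport. Our first attempt is \(A=\frac{\pi}{2}\sum_j(|j+1\rangle\langle j|\otimes U_{j+1}-|j\rangle\langle j+1|\otimes U_{j+1}^*)\) restricted to a single block, so that \(e^{TA}\) performs an exact coherent transfer. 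Since this is skew-adjoint, the transfer is not exact for \(N>1\). Instead we use a \emph{time-dependent-free} trick: take \(A\) skew with spin-chain (Krawtchouk) couplings \(\sqrt{j(N+1-j)}\). These give perfect state transfer from level \(0\) to level \(N\) at time \(T=\pi/2\), with the gates folded in by conjugation. Then \(e^{TA}x_0=|N\rangle\otimes U_N\cdots U_1|0\rangle\) exactly, and \(\|A\|=O(N)\) after normalization by the coupling bound.

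Next comes the LQR value. We set \(B=b\), \(R_c=1\), \(P_T=0\), and choose \(Q=q\,bb^*\) with a constant \(q\). The zero-terminal finite-horizon value at the initial state is \(x_0^*P(T)x_0\), where \(P\) solves the reverse-time DRE through Proposition~\ref{prop:lqr-conventions-app}. We would bound this value from both sides. For the upper bound, we use the zero control and the fact that only the \(b\)-component of the free trajectory is costed, giving \(x_0^*P(T)x_0\le q\int_0^T|b^*e^{sA}x_0|^2\,\mathrm ds\). For the lower bound, we use the completion-of-squares identity \eqref{eq:lqr-square-continuous-app}. Any control \(u\) enters through \(b\) and evolves by the skew flow, which preserves norms, so the controlled trajectory's \(b\)-component differs from the free one by a term linear in \(u\). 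Minimizing a quadratic of the form \(\int(|u|^2+q|a(s)+(\mathcal K u)(s)|^2)\,\mathrm ds\), with \(\|\mathcal K\|\le\sqrt{T}\), yields a lower bound of \(\frac{q}{1+qT\|\mathcal K\|^2}\int|a|^2\). Both bounds are \(\Theta(\int_0^T|b^*e^{sA}x_0|^2\,\mathrm ds)\). By perfect transfer, this integral is proportional to \(p^2\) times a fixed transfer-profile integral \(c_N\). This is the main obstacle: \(c_N\) may scale with \(N\). We would rescale \(q\) (public, computable) to normalize \(qc_N\) so that \(x_0^*P(T)x_0\in[\frac{q c_N p^2}{C},\,q c_Np^2]\). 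With amplification \(p\ge 1-2^{-10}\) versus \(p\le2^{-10}\), the gap between \(>1/20\) and \(<1/200\) then follows for a suitable constant choice. Simultaneously we need \(\|P(T)\|\le\) a constant so that \(\alpha_{P(T)}=4\) is admissible; this follows from the same zero-control bound applied to all unit vectors, using norm preservation of the skew flow.

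Finally, we assemble the complexity claims as in the DARE case. The Krawtchouk-weighted clock hop and \(Q\), \(G=bb^*\) have polynomial-size block-encodings via reversible clock arithmetic. Lemma~\ref{lem:selected-value-hardness-app} with \(\alpha_{P(T)}=4\), \(\varepsilon=1/100\) and gap \(>1/20-1/200\) gives hardness, and gate-error robustness follows by telescoping as before. Membership comes from a direct quantum estimate. The value equals a time integral of \(|b^*e^{sA}x_0|^2\) up to the controlled correction, but deciding the promise only needs \(p\), which is obtained by running the source circuit. Alternatively, we sample \(s\) and simulate \(e^{sA}\) by Hamiltonian simulation of the sparse skew generator. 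We would also verify that the instance satisfies the DRE assumptions — \(\mathcal H_{\rm DRE}\) with no imaginary-axis spectrum, and \(\Pi_+R_0\) of full rank — by adding a small uniform damping \(-\epsilon\mathsf I\) to \(A\), which perturbs the value by \(O(\epsilon T)\).
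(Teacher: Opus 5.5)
\textbf{There is a genuine gap, in the LQR step.} Your normalization of \(q\) and your control comparison work against each other. With \(B=b\), \(Q=q\,bb^*\) and \(R_c=1\), the exact minimum of \(\int_0^T(|u|^2+q|a+\mathcal K u|^2)\,\mathrm ds\) is \(q\langle a,(\mathsf I+q\mathcal K\mathcal K^*)^{-1}a\rangle\ge q\|a\|^2/(1+q\|\mathcal K\|^2)\). Your estimate \(\|\mathcal K\|^2\le T\) therefore gives the comparison constant \(C=1+qT\), which is \(O(1)\) only when \(q=O(1)\).

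Perfect transfer gives \(b^*e^{sA}x_0=p\sin^N s\), hence \(c_N=\int_0^{\pi/2}\sin^{2N}s\,\mathrm ds=\Theta(N^{-1/2})\). Making \(qc_N\) constant then forces \(q=\Theta(\sqrt N)\), so \(C=\Theta(\sqrt N)\), and the interval \([qc_Np^2/C,\,qc_Np^2]\) no longer separates \(1/20\) from \(1/200\). The same rescaling breaks the normalization claim. The zero-control bound with norm preservation gives only \(\|P(T)\|\le qT=\Theta(\sqrt N)\), which does not permit \(\alpha_{P(T)}=4\).

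If instead you normalize \(A\) to \(O(1)\) norm with \(T=\Theta(N)\) and keep \(R_c=1\), the construction itself fails, not just the estimate. The return kernel \(k(\tau)=b^*e^{\tau A}b=\cos^N(\tau/N)\) has \(L^1\)-mass \(\Theta(\sqrt N)\) on the same time scale as the costed bump \(a\). So \(q\|\mathcal K\|^2=\Theta(\sqrt N)\) at the relevant frequencies. A control of energy \(O(p^2/\sqrt N)\) injected at the costed state \(b\) then cancels most of the costed amplitude, and the optimal value is only \(O(p^2/\sqrt N)\). The DARE design you copied, with control on the costed terminal state, works there because that state is visited in a single step. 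In continuous time the residence profile at that state must be controlled explicitly.

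\textbf{What is missing, and how the paper differs.} Your Krawtchouk clock is the same one the paper uses. What you need is a quantitative statement that the terminal state is occupied only briefly. In your units \(k(\tau)=\cos^N\tau\), so \(\int_{-\pi/2}^{\pi/2}|k(\tau)|\,\mathrm d\tau=\Theta(N^{-1/2})=\Theta(c_N)\). This single estimate fixes both problems:
\begin{itemize}
\item Young's inequality gives \(\|\mathcal K\|\le\int_0^T|k|\), hence \(q\|\mathcal K\|^2=O(N^{-1/2})\).
\item Schur's test on the observability Gramian, whose kernel is \(k(s-s')\) because \(A\) is skew, gives \(\|P(T)\|\le q\int_{-\pi/2}^{\pi/2}|k|\approx2\sqrt2\,qc_N\).
\end{itemize}
With \(qc_N=1\) and \(N\) padded large, you get both the gap and \(\|P(T)\|<4\). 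Raising \(R_c\) together with \(q\), e.g.\ \(R_c\ge qT^2\), would repair the comparison with your crude estimate, but not the norm bound.

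The paper avoids this calibration by placing control and cost differently. It uses normalized time \(T=\pi N\) with damping \(-\mathsf I/(100N)\), control at \(x_0\) with \(R_c=N\), and cost \(Q=N^{-1}\mathsf I\otimes\mathsf I\otimes|1\rangle\langle1|_{\rm flag}\) on every clock level. The accepting bit is copied by the last source gate without uncomputation, and \(4\ell\) identity gates are appended. This buys the following:
\begin{itemize}
\item The free cost is linear in \(p\) and equals \(\Theta(p)\) by a binomial Markov bound, with no \(N\)-dependent profile constant.
\item Young's inequality with \(\|Q^{1/2}\|=N^{-1/2}\) and kernel mass at most \(T\) gives \(\inf_uJ_T(u)\ge J_T(0)/(1+\pi^2)\) directly.
\item \(\|P(s)\|\le s/N\le\pi\) is immediate.
\item The damping supplies the imaginary-axis gap and, through the dual CARE, an explicit lower bound on \(\sigma_{\min}(\Pi_+R_0)\). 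Your damping remark asserts this but does not quantify it.
\end{itemize}
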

\begin{proof}
Amplification and realification give a polynomial-size real orthogonal
circuit for any \(\mathsf{BQP}\) language, with acceptance probability
\(p\ge15/16\) on yes instances and \(p\le1/1024\) on no instances
\cite{BernsteinVazirani1997QuantumComplexity,RudolphGrover2002Rebit}.
Introduce a fresh flag initialized to zero and copy the accepting bit
to it only in the final gate. Denote the resulting gates by
\(U_1,\ldots,U_\ell\), with \(\ell\ge1\), and append \(4\ell\)
identity gates. Set \(N=5\ell\) and interpret
\(U_j\cdots U_1\) as \(\mathsf I\) when \(j=0\). If
\(\lvert\psi_0\rangle\) is the initial work-and-flag basis state, then
\begin{equation}
\langle\psi_0\rvert (U_j\cdots U_1)^*
(\mathsf I_{\rm work}\otimes\lvert1\rangle\langle1\rvert_{\rm flag})
(U_j\cdots U_1)
\lvert\psi_0\rangle
=\begin{cases}0,&j<\ell,\\p,&j\ge\ell.\end{cases}
\label{eq:dre-bqp-flag-app}
\end{equation}
Realification adds one rebit, and each gate still acts on a fixed
number of rebits.

On clock states \(\lvert j\rangle\), \(0\le j\le N\), define
\begin{equation}
\begin{aligned}
A={}&-\frac{\mathsf I}{100N}
+\frac1{2N}\sum_{j=0}^{N-1}\sqrt{(j+1)(N-j)}
\bigl(\lvert j+1\rangle\langle j\rvert\otimes U_{j+1}
      -\lvert j\rangle\langle j+1\rvert\otimes U_{j+1}^*\bigr),\\
x_0={}&\lvert0\rangle\lvert\psi_0\rangle,\qquad
B=x_0,\qquad R_c=N,\qquad
Q=\frac1N\mathsf I_{\rm clk}\otimes\mathsf I_{\rm work}
\otimes\lvert1\rangle\langle1\rvert_{\rm flag},\qquad
T=\pi N.
\end{aligned}
\label{eq:dre-bqp-data-app}
\end{equation}
The matrix \(A+\mathsf I/(100N)\) is real and skew-symmetric.
The damping term acts on the entire encoded space, including unused
binary-clock labels. Changing basis by the block-diagonal orthogonal matrix
\(\sum_{j=0}^N\lvert j\rangle\langle j\rvert\otimes(U_j\cdots U_1)\)
transforms \(NA+\mathsf I/100\) on the valid clock levels into the clock matrix
\[
\frac12\sum_{j=0}^{N-1}\sqrt{(j+1)(N-j)}
\bigl(\lvert j+1\rangle\langle j\rvert
      -\lvert j\rangle\langle j+1\rvert\bigr)
\]
tensored with the work identity. Identify \(\lvert j\rangle\)
with the normalized Hamming-weight-\(j\) vector in the symmetric
subspace of \(N\) rebits. The clock matrix is the restriction of
\[
\frac12\sum_{a=1}^N
\bigl(\lvert1\rangle\langle0\rvert
      -\lvert0\rangle\langle1\rvert\bigr)_a.
\]
Its summands commute and each has eigenvalues \(\pm\mathrm i/2\).
The tensor powers of either one-rebit eigenvector are symmetric and
have eigenvalues \(\pm\mathrm iN/2\). Thus the restriction has norm
\(N/2\), proving \(\|A+\mathsf I/(100N)\|=1/2\) and
\(\|A\|<1\). Exponentiating the commuting one-rebit rotations,
including the scalar damping, gives
\begin{equation}
e^{NtA}x_0
=e^{-t/100}\sum_{j=0}^N\sqrt{\binom Nj}
\cos^{N-j}(t/2)\sin^j(t/2)
\lvert j\rangle(U_j\cdots U_1)\lvert\psi_0\rangle,
\qquad 0\le t\le\pi.
\label{eq:dre-bqp-transfer-app}
\end{equation}
In particular, \(e^{TA}x_0=e^{-\pi/100}\lvert N\rangle
(U_N\cdots U_1)\lvert\psi_0\rangle\).

For the dynamics \(\dot z=Az+Bu\), \(z(0)=x_0\), write
\(J_T(u)=\int_0^T(z^*Qz+Nu^2)\,\mathrm dt\).
The zero-control cost follows from
\eqref{eq:dre-bqp-flag-app}--\eqref{eq:dre-bqp-transfer-app}:
\begin{equation}
J_T(0)=p\int_0^\pi e^{-t/50}\sum_{j=\ell}^N
\binom Nj\cos^{2(N-j)}(t/2)\sin^{2j}(t/2)\,\mathrm dt.
\label{eq:dre-bqp-free-cost-app}
\end{equation}
For each \(t\), the summands give the probabilities of a binomial
variable with \(N\) trials and success probability \(\sin^2(t/2)\).
If \(2\pi/3\le t\le\pi\), the expected number of failures is at
most \(N/4\). Since \(\ell=N/5\), Markov's inequality bounds the
probability of fewer than \(\ell\) successes by
\((N/4)/(4N/5)=5/16\). The sum in
\eqref{eq:dre-bqp-free-cost-app} is therefore at least \(11/16\)
on this interval and at most one everywhere. Consequently,
\begin{equation}
e^{-\pi/50}\frac{11\pi}{48}p\le J_T(0)\le\pi p.
\label{eq:dre-bqp-free-gap-app}
\end{equation}

The single control can reduce this cost only by a constant factor.
Since \(\|Q^{1/2}\|=N^{-1/2}\), \(\|x_0\|=1\), and
\(\|e^{tA}\|=e^{-t/(100N)}\le1\), Young's convolution inequality gives
\begin{equation}
\left\|Q^{1/2}\int_0^t e^{(t-r)A}x_0u(r)\,\mathrm dr
\right\|_{L^2([0,T])}
\le\frac{T}{\sqrt N}\|u\|_{L^2([0,T])}
=\pi\sqrt N\|u\|_{L^2([0,T])}.
\label{eq:dre-bqp-control-scaling-app}
\end{equation}
Here the convolution kernel has \(L^1\) norm at most \(T\).
Variation of constants~\cite[Chap.~IX, Sec.~1.5]{Kato1995Perturbation} and Cauchy--Schwarz then imply
\[
\begin{aligned}
\sqrt{J_T(0)}
&\le\|Q^{1/2}z\|_{L^2([0,T])}
       +\pi\sqrt N\|u\|_{L^2([0,T])}\\
&\le\sqrt{1+\pi^2}\sqrt{J_T(u)}.
\end{aligned}
\]
The admissible choice \(u=0\) supplies the opposite comparison, so
\begin{equation}
\frac{J_T(0)}{1+\pi^2}
\le\inf_u J_T(u)\le J_T(0).
\label{eq:dre-bqp-one-control-app}
\end{equation}

To identify this value with the requested DRE solution, take the
physical data in \eqref{eq:dre-bqp-data-app} and set
\(G=x_0x_0^*/N\). The reverse-time equation is
\begin{equation}
P'(s)=Q+A^*P(s)+P(s)A-P(s)GP(s),\qquad P(0)=0,
\label{eq:dre-bqp-reversed-dre-app}
\end{equation}
which is Definition~\ref{def:problem-dre} with input
\((-A,-G,-Q,P_0=0)\). The polynomial right-hand side gives a unique
local symmetric solution. For any \(s\) in its existence interval,
Proposition~\ref{prop:lqr-conventions-app}, applied to \(P(s-t)\),
identifies \(P(s)\) as the value matrix for horizon \(s\).
Nonnegative cost and zero control give
\begin{equation}
0\preceq P(s)\preceq
\int_0^s e^{tA^*}Qe^{tA}\,\mathrm dt
\preceq\frac sN\mathsf I.
\label{eq:dre-bqp-solution-bound-app}
\end{equation}
This bound excludes finite-time blow-up on \([0,T]\), so local ODE
continuation proves existence throughout that interval and
\(\|P(s)\|\le\pi\). With
\(P_{\rm LQR}(t)=P(T-t)\), the terminal cost is zero and
\(x_0^*P(T)x_0=\inf_uJ_T(u)\). On yes instances,
\eqref{eq:dre-bqp-free-gap-app} and
\eqref{eq:dre-bqp-one-control-app} give
\[
x_0^*P(T)x_0
\ge e^{-\pi/50}\frac{15}{16}\frac{11\pi}{48(1+\pi^2)}
>\frac1{20}.
\]
On no instances,
\[
x_0^*P(T)x_0\le\frac\pi{1024}
<\frac1{200}.
\]

For the reversed lift
\(\mathcal H_{\rm DRE}=[-A,G;Q,A^*]\), take a unit vector
\(w=[x;y]\) and put
\(r=(\mathrm i\omega\mathsf I-\mathcal H_{\rm DRE})w\).
The block equations and \(0\preceq Q,G\preceq\mathsf I/N\) give
\[
x^*Qx+y^*Gy\le\|r\|,\qquad
\frac1{100N}\le\|r\|+\sqrt{\|r\|/N}.
\]
Thus \(\Delta_{\rm ax}=\Omega(N^{-1})\). Since \(A\) is Hurwitz,
the dual CARE in \eqref{eq:factor-care-dual-app} has a stabilizing
solution \(0\preceq Y\preceq50\mathsf I\), by the zero-control bound.
Its graph \(\operatorname{ran}[-Y;\mathsf I]\) is \(\ker\Pi_+\), so
Lemma~\ref{lem:factor-graph-distance-app}, with
\(R_0=[\mathsf I;0]\), gives
\begin{equation}
\sigma_{\min}(\Pi_+R_0)
\ge(1+\|Y\|^2)^{-1/2}\ge1/\sqrt{2501}.
\label{eq:dre-bqp-initialization-app}
\end{equation}

Choose the paired rectangles of
Proposition~\ref{prop:dre-weighted-quadrature-app} at a fixed
constant encoding scale \(\alpha_H\), and let \(d(z)\) be the
distance to the spectrum of
\(\operatorname{diag}(-A,A^*)/\alpha_H\).
This normal matrix has the known clock spectrum, including the padding
modes. Singular-value perturbation away from that spectrum and the
axis bound near it give, at every quadrature node,
\[
\|(z\mathsf I-\widehat{\mathcal H}_{\rm DRE})^{-1}\|
=\Theta\!\left(\frac1{d(z)+1/(\alpha_HN)}\right).
\]
Thus the required node estimates are computable in polynomial time.
Propositions~\ref{prop:dre-weighted-quadrature-app}
and~\ref{prop:dre-general-resources-app} apply with \(M=\pi\),
\(\mathcal R_{\rm DRE}=O(N)\), raw weighted-block scales \(O(N)\),
and actual output scale \(O(N^2)\), giving polynomial resources.

The gate list gives finite uniform matrix-access circuits: each row
or column of \(A\) meets at most two neighboring clock blocks through
local gates and has one diagonal damping entry; \(Q\) is diagonal,
and \(B,G\) are one-sparse.
The weights \(\sqrt{(j+1)(N-j)}/(2N)\), gate entries, and horizon
\(T=\pi N\) are computable to any requested precision in polynomial
time. With a binary clock, invalid labels have drift
\(-\mathsf I/(100N)\), zero cost, and no coupling to \(B\). Sparse-matrix block encodings
then supply the coefficient and lifted-input circuits at constant
normalization with precision-adjustable controlled adjoints
\cite{GilyenSuLowWiebe2019QSVT}, as required in
Appendix~\ref{app:lower-bound-tools}.

An independent \(\mathsf{BQP}\) procedure on this amplified image
simulates the constant-sparse Hermitian matrix
\(\mathrm i(A+\mathsf I/(100N))\) for time \(T=\pi N\) and
measures the flag. Removing the known scalar damping from
\eqref{eq:dre-bqp-transfer-app} gives acceptance probability \(p\),
so the two amplified ranges distinguish exactly the selected-value cases. Sparse Hamiltonian simulation has polynomial
cost at fixed error \cite{BerryChildsKothari2015HamiltonianSimulation}.
Paired skew-symmetric entry rounding of \(A+\mathsf I/(100N)\)
changes this unitary evolution by at most \(T\) times the matrix
error, by Duhamel's formula~\cite[Chap.~IX, Sec.~2.1]{Kato1995Perturbation}; horizon error is multiplied by \(1/2\). Constant sparsity therefore makes
\(O(\log N)\) additional precision bits sufficient. This establishes
membership on the promised finite construction image.

Finally, \(\|P(T)\|\le\pi<4\) permits the common normalization
\(\alpha_{P(T)}=4\). Lemma~\ref{lem:selected-value-hardness-app},
with gap \(9/200\) and \(\varepsilon\le1/100\), gives the stated
block-encoding hardness.
\end{proof}

The spectral, initialization, and resource estimates proved above show
that our algorithms first construct solution block-encodings for these
two families with polynomial resources and polynomial normalization
scales. To reach the fixed scales in the hardness statements, standard
uniform singular-value amplification
\cite[Theorem~30]{GilyenSuLowWiebe2019QSVT} is then applied to these
already constructed encodings. The bounds \(\|X\|/2=1/2\) and
\(\|P(T)\|/4\le\pi/4<1\) provide its fixed margins, allowing
normalization \(2\) for \(X\) and \(4\) for \(P(T)\).
Allocating decoded error \(\varepsilon/2\) to construction and
\(\varepsilon/2\) to this final rescaling gives the prescribed
accuracy with polynomial additional cost.


\end{document}